\numberwithin{equation}{section}
\numberwithin{figure}{section}
\theoremstyle{plain}
\newtheorem*{conjecture*}{\protect\conjecturename}
\theoremstyle{plain}
\newtheorem{thm}{\protect\theoremname}[section]
\theoremstyle{remark}
\newtheorem{rem}[thm]{\protect\remarkname}
\theoremstyle{definition}
\newtheorem{defn}[thm]{\protect\definitionname}
\theoremstyle{plain}
\newtheorem{lem}[thm]{\protect\lemmaname}
\theoremstyle{plain}
\newtheorem{conjecture}[thm]{\protect\conjecturename}
\theoremstyle{plain}
\newtheorem{prop}[thm]{\protect\propositionname}
\theoremstyle{plain}
\newtheorem{cor}[thm]{\protect\corollaryname}
\providecommand{\conjecturename}{Conjecture}
\providecommand{\corollaryname}{Corollary}
\providecommand{\definitionname}{Definition}
\providecommand{\lemmaname}{Lemma}
\providecommand{\propositionname}{Proposition}
\providecommand{\remarkname}{Remark}
\providecommand{\theoremname}{Theorem}
\providecommand{\conjecturename}{Conjecture}
\providecommand{\corollaryname}{Corollary}
\providecommand{\definitionname}{Definition}
\providecommand{\lemmaname}{Lemma}
\providecommand{\propositionname}{Proposition}
\providecommand{\remarkname}{Remark}
\providecommand{\theoremname}{Theorem}
\begin{document}
\global\long\def\av{\vec{\alpha}}%
\global\long\def\ztrip{\left(\boldsymbol{z}_{1},z_{e},\boldsymbol{z}_{2}\right)}%
\global\long\def\ztwo{\boldsymbol{z}_{2}}%
\global\long\def\zi{\boldsymbol{z}_{i}}%
\global\long\def\zone{\boldsymbol{z}_{1}}%
\global\long\def\ze{z_{e}}%
\global\long\def\zv{\vec{z}}%
\global\long\def\norm#1{\left\Vert #1\right\Vert }%
\global\long\def\I{\mathcal{I}}%

\global\long\def\A{\mathcal{A}}%
\global\long\def\tr{\mathrm{trace}}%

\global\long\def\G{\mathcal{G}}%
\global\long\def\adj{adj}%

\global\long\def\M{\mathcal{M}}%
\global\long\def\LocInd#1{\text{\ensuremath{\mathcal{M}_{\left(#1\right)}}}}%
\global\long\def\LocHess#1{\text{\ensuremath{\mathrm{Hess}}}_{\left(#1\right)}}%

\global\long\def\opH{\mathcal{H}}%
\global\long\def\L{\mathcal{L}}%

\global\long\def\D{\mathcal{D}}%

\global\long\def\U{\mathcal{U}}%

\global\long\def\I{\mathcal{I}}%

\global\long\def\P{\mathrm{P}}%

\global\long\def\Rev{\mathcal{R}_{v,e}}%

\global\long\def\d{\partial}%

\global\long\def\dg{\partial\Gamma}%

\global\long\def\do{\partial\Omega}%

\global\long\def\ndv{\Omega_{n}^{(v)}}%

\global\long\def\E{\mathcal{E}}%

\global\long\def\V{\mathcal{V}}%

\global\long\def\Vint{\mathcal{V}\setminus\partial\Gamma}%

\global\long\def\Vin{V_{\textrm{in}}}%

\global\long\def\la{\lambda}%

\global\long\def\hess{\mathrm{Hess}}%

\global\long\def\H{H}%

\global\long\def\Z{\mathbb{Z}}%

\global\long\def\R{\mathbb{R}}%

\global\long\def\C{\mathbb{C}}%

\global\long\def\N{\mathbb{N}}%

\global\long\def\Q{\mathbb{Q}}%

\global\long\def\lap{\Delta}%

\global\long\def\na{\nabla}%
\global\long\def\floor#1{\left\lfloor #1\right\rfloor }%

\global\long\def\opcl#1{\left(#1\right]}%

\global\long\def\clop#1{\left[#1\right)}%

\global\long\def\bs#1{\boldsymbol{#1}}%

\global\long\def\deg#1{\mathrm{deg}(#1)}%

\global\long\def\T{\mathbb{T}}%

\global\long\def\TE{\mathbb{T^{\left|\E\right|}}}%

\global\long\def\BGm{\mu_{\vec{l}}}%

\global\long\def\msing{\Sigma^{\mathrm{sing}}}%

\global\long\def\mreg{\Sigma^{\mathrm{reg}}}%

\global\long\def\mgen{\Sigma^{\mathrm{gen}}}%

\global\long\def\lv{\vec{l}}%

\global\long\def\ts{t_{S}}%

\global\long\def\Lv{\vec{L}}%

\global\long\def\tv{\vec{\theta}}%

\global\long\def\kv{\vec{\kappa}}%

\global\long\def\xv{\vec{x}}%

\global\long\def\Tv{\vec{\kappa}}%

\global\long\def\dL{d_{\vec{L}}}%

\global\long\def\sgn{\mathrm{sgn}}%

\global\long\def\bs#1{\boldsymbol{#1}}%

\global\long\def\undercom#1#2{\underset{_{#2}}{\underbrace{#1}}}%
\global\long\def\set#1#2{\left\{  #1\,\,:\,\,#2\right\}  }%
\pagestyle{empty}
\pagenumbering{gobble}
\begin{center}
\textbf{\Huge{}Quantum graphs - }\\
\textbf{\Huge{} Generic eigenfunctions and their nodal count and Neumann count statistics}{\Huge\par}
\par\end{center}

\begin{center}
{\LARGE{}\vspace{9.1cm}
}{\LARGE\par}
\par\end{center}

\begin{center}
{\LARGE{}Lior Alon}\cleardoublepage\textbf{\Huge{}Quantum graphs - }\\
\textbf{\Huge{} Generic eigenfunctions and their nodal count and Neumann count statistics}{\Huge\par}
\par\end{center}

\begin{center}
{\LARGE{}\vspace{2.5cm}
}{\LARGE\par}
\par\end{center}

\begin{center}
{\LARGE{}Research Thesis\vspace{1cm}
}{\LARGE\par}
\par\end{center}

\begin{center}
{\LARGE{}In Partial Fulfillment of The}{\LARGE\par}
\par\end{center}

\begin{center}
{\LARGE{}Requirements for the Degree of}{\LARGE\par}
\par\end{center}

\begin{center}
{\LARGE{}Doctor of Philosophy\vspace{2.5cm}
}{\LARGE\par}
\par\end{center}

\begin{center}
{\LARGE{}Lior Alon\vspace{2.5cm}
}{\LARGE\par}
\par\end{center}

\begin{center}
{\LARGE{}Submitted to the Senate of the }{\LARGE\par}
\par\end{center}

\begin{center}
{\LARGE{}Technion - Israel institute of Technology\vspace{1cm}
}{\LARGE\par}
\par\end{center}

\begin{center}
{\LARGE{}Av 5780, Haifa, August 2020}{\LARGE\par}
\par\end{center}

\cleardoublepage{}

\textcolor{white}{\Large{}
\[
\;
\]
}

\begin{center}
{\large{}\vspace{2cm}
To my mother }{\large\par}
\par\end{center}

\begin{center}
{\Huge{} Tami Alon \large{Z"L}}\\
\par\end{center}

\cleardoublepage{}

\textcolor{white}{\Large{}
\[
\;
\]
}

\begin{center}
{\large{}\vspace{2cm}
The Research Thesis Was Done Under The Supervision of }{\large\par}
\par\end{center}

\begin{center}
{\large{}Associate Prof. Ram Band in The Faculty of Mathematics.}\\
\par\end{center}
{\LARGE{}\vspace{5cm}
}{\LARGE\par}
{\normalsize{}Some results in this thesis have been published as articles
by the author together with collaborators:} 
\begin{enumerate}
\item L. Alon, R. Band, and G. Berkolaiko, \textit{Nodal statistics on quantum graphs}, Communications in Mathematical Physics, (2018).
\item L. Alon, R. Band, M. Bersudsky, and S. Egger, \textit{Neumann domains on graphs and manifolds}, Analysis and geometry on graphs and manifolds, 461 (2020), p.203. 
\end{enumerate}

\begin{center}
{\LARGE{}\vspace{3cm}
}{\LARGE\par}
\par\end{center}

\begin{center}
{\small{}The Generous Financial Help of the Technion, the Irwin and Joan Jacobs Fellowship, and the Ruth and Prof. Arigo Finzi Fellowship is Gratefully Acknowledged.}\\
{\large{}\vspace{3cm}
}{\large\par}
\par\end{center}
\newpage{}

\textcolor{white}{\Large{}
\[
\;
\]
}

\begin{center}
{\large{}
Acknowledgements }{\large\par}
\par\end{center}

\normalsize{}\vspace{2cm} This dissertation is the last milestone of my Ph.D. Journey. Throughout this Journey I have received a great deal of support and assistance for which I am thankful.\\ 

I would first like to thank my Ph.D. advisor, Rami Band, for his guidance and support, for believing in me when I had doubts, and for his ability to teach me so much and in the same time acknowledge and appreciate the ideas I bring with me. Thank you Rami for your insightful and uncompromising feedback along every step of the way, you've helped me grow and pushed my work to higher levels.\\

I would also like to thank Gregory Berkolaiko, my non-formal co-advisor, colleague and friend. Thank you for your advice and thank you for always being honest and straightforward. \\

I would like to thank Yehuda Pinchover and Uzy Smilansky for reading and commenting on my thesis. I would also like to thank Uzy for many insightful conversations, for teaching me physics, mathematics and history all together, and most importantly, for setting the ground on which my work stands.\\

I would also like to thank the mathematics faculty of the Technion for being my second home in the last eight years. I would like to thank the administrative staff, and Anat in particular, for all their caring and support. I would also like to thank the faculty members, for their willingness to help, give advice, teach and discuss mathematics beyond any formal course or office hours. In particular, Amos Nevo, Uri Shapira, Dani Neftin, Orr Shalit, Ron Rosenthal, Tali Pinsky and Nir Lazarovitch.\\  

Finally, I would like to thank my family, my father and sisters, who supported me during the hard times. I would like to thank my son, Itamar, whose arrival (two and half years ago) had given me new hopes and new purpose in life.\\
Last but not least, I want to thanks my wife, Adi, for her infinite support. Thank you for believing in me, and for not allowing me to stop believing in myself. Non of this would have been possible without you by my side.\\ 

This thesis is dedicated to the memory of my mother, Tami Alon, who left us eight years ago and did not get the opportunity to see me pursuing my dream.

\cleardoublepage{}

\begin{doublespace}
\tableofcontents{}
\end{doublespace}

\cleardoublepage{}
\pagestyle{empty}
\listoffigures

\cleardoublepage{}

\pagestyle{plain} 
\pagenumbering{arabic}
\setcounter{page}{1}

\section*{Abstract}
In this thesis, we study Laplacian eigenfunctions on metric graphs,
also known as quantum graphs. We restrict the discussion to standard
quantum graphs. These are finite connected metric graphs with functions that satisfy Neumann vertex conditions. 

The first goal of this thesis is the study of the nodal count problem.
That is the number of points on which the $n$th eigenfunction vanishes.
We provide a probabilistic setting using which we are able to define the nodal count\textquoteright s
statistics. We show that the nodal count\textquoteright s
statistics admits a topological symmetry by which the first Betti number of the graph can be obtained. This result generalizes a result by which the nodal count is 0,1,2,3... if and only if the graph is a tree. We revise a conjecture that predicts a
universal Gaussian behavior of the nodal count\textquoteright s statistics for large graphs,
and prove it for a certain family of graphs which we call \textquoteleft trees
of cycles\textquoteright.

The second goal is to formulate and study a new closely related counting problem which we call the Neumann count,
in which one counts the number of local extrema of the
$n$th eigenfunction. This counting problem is motivated by the Neumann partitions of planar domains, a novel concept in spectral geometry. We provide uniform bounds on the Neumann count and investigate the Neumann count\textquoteright s
statistics using our probabilistic setting. We show that
the Neumann count\textquoteright s statistics admits a symmetry by which the number of leafs of the graph can be obtained. In particular, we show that the Neumann count provides a complementary geometrical information to that obtained from the nodal count. We show that for a certain family of tree graphs the Neumann count\textquoteright s statistics can be calculated explicitly and it approaches a Gaussian distribution for large enough graphs, similarly to the nodal count conjecture.

The third goal is a genericity result, which justifies the generality
of the Neumann count discussion. To this day it was known that generically, eigenfunctions do not vanish on vertices. We generalize this result to derivatives at vertices as well. That is, generically, the derivatives of an eigenfunction on interior vertices do not vanish.

\newpage{}
\section*{List of Symbols}
\begin{tabular}{ll}
    $\Gamma$ & A discrete graph\\
    $\E,\V$ & The sets of edges and vertices of $\Gamma$ \\
    $E, V$ & The number of edges and the number of vertices of $\Gamma$ \\
    $\beta$ & The first Betti number of $\Gamma$ \\
    $\vec{\E}$ & The set of directed edges of $\Gamma$ \\
    $\partial\Gamma, \V_{in}$ & The boundary vertices (leafs) of $\Gamma$ and the set of interior vertices in $\Gamma$\\
    $\E_{v}$ & The set of edges connected to a vertex $v$\\
    $\Gamma_{\lv{}}$ & A standard quantum graph with edge lengths $\lv{}$\\
    $L$ & The total length of $\Gamma_{\lv{}}$\\
    $\partial_{e}f(v)$ & The outgoing derivative of $f$ at $v$ in the direction of $e\in\E_{v}$\\
    $f_{n}, k_{n}$ & The $n$th eigenfunction of $\Gamma_{\lv{}}$ and its (square root) eigenvalue\\
    $\G, \L$ & The index sets of generic eigenfunctions, and of loop-eigenfunctions\\
    $d(A)$ & The natural density of an index set $A$\\
    $\phi(n)$ & The nodal count of the $n$th eigenfunction\\
    $\sigma(n)$ & The nodal surplus of the $n$th eigenfunction, $\phi(n)-n$ \\    
    $\mu(n)$ & The Neumann count of the $n$th eigenfunction\\
    $\omega(n)$ & The Neumann surplus of the $n$th eigenfunction, $\mu(n)-n$ \\
    $\Omega^{v}$ & The Neumann domain containing $v$\\
    $N(\Omega)$ & The spectral position of the Neumann domain $\Omega$\\
    $\rho(\Omega)$ & The wavelength capacity of the Neumann domain $\Omega$\\
    $Eig(\Gamma_{\lv{}},k^2)$ & The $k^2$ eigenspace of $\Gamma_{\lv{}}$\\
    $\T^{\E}$ & The characteristic torus of $\Gamma$, $(\R/2\pi\Z)^{\E}$\\
    $\{*\}$ & The quotient map from $\R^{\E}$ to $\T^{\E}$, $\{\xv\}:=\xv\  mod\  2\pi$\\
    $\Gamma_{\kv{}}$ & The standard quantum graph associated to $\kv\in\T^{\E}$\\
    $U_{\kv{}}$ & The unitary evolution matrix associated to $\kv\in\T^{\E}$\\
    $F(\kv)$ & The secular function of $\Gamma$\\
    $\Sigma,\Sigma^{reg}$ & The secular manifold of $\Gamma$ and its regular part\\
    $f_{\kv{}}$ & The canonical eigenfunction associated to $\kv\in\Sigma^{reg}$\\
    $\mu_{\lv{}}$ & The Barra-Gaspard measure associated to $\lv$\\

\end{tabular}

\newpage{}

\section{Introduction}

The following thesis lies in the mathematical field of \emph{spectral
geometry, }but can be regarded also as a work in the field of \emph{quantum
chaos.} In the following section we provide the needed context for
our main results. We first review the field of quantum chaos, as the
motivation of our research, after which we briefly present the aspects
of spectral geometry relevant to the subjects of our work: \emph{nodal
count, Neumann count }and \emph{genericity.} We then present each
subject, first describing the known results on manifolds for comparison
and motivation, then known results for quantum graphs, following which
we present and discuss our new result. But first, let us introduce
quantum graphs.

\subsection*{Quantum graphs}

\emph{A Quantum Graph} is a model for a quantum particle on a network.
Mathematically, a \emph{quantum graph} is a metric graph, a 1-d simplicial
complex, equipped with a differential operator (usually a Schr\"{o}dinger
operator). This model was introduced in the 30's by Pauling \cite{Pau_jcp36}
to describe free electrons of organic molecules, and was further developed
in the 50's by Ruedenberg and Scherr \cite{RueSch_jcp53} that considered
quantum graphs as an idealization of a network of wires of very small
cross-section. For modern analysis of the zero cross-section limit
see \cite{Rub_incol06} and \cite{ExnerPost2008a}. The list of successful
applications of quantum graphs in the study of complex phenomena include
superconductivity in granular and artificial materials \cite{Ale_prb85},
Anderson localization \cite{And_pmb85}, electromagnetic waveguide
networks \cite{ExnPos_jmp07,MR2738109} and nanotechnology \cite{ExnSeb_incol89}
to name but a few. The name `quantum graph' was first coined in the
late 90's by Smilansky and Kottos \cite{KotSmi_prl97} in their work
on quantum chaos. Following their work and subsequent works, such
as \cite{BarGas_jsp00,BarGas_pre02,BerBogKea_jpa01,BerKea_jpa99},
quantum graphs gained popularity as models for quantum chaos. A thorough
introduction to quantum graphs and their applications can be found
in the following (partial list of) reviews on the subject \cite{BerKuc_graphs,GnuSmi_ap06,Kuc_incol08}. 

\subsection*{Quantum chaos}

Quantum chaos is a field of research in physics that studies the relation
between quantum mechanics and classical (Hamiltonian) mechanics, through
the scope of chaos. For a reader not familiar with these notions,
here is a brief description: 

\subsubsection*{Quantum versus classical (Hamiltonian) mechanics in a nutshell:}

Classical Hamiltonian mechanics, the modern description of Newtonian
mechanics, describes the dynamics of one or many particles on a domain $\Omega$,
for simplicity assume $\Omega$ is a domain in $\R^{n}$. The dynamics of a particle is described by its position $q\in\Omega$ and momentum $p\in\R^{n}$.
The \emph{phase space,} $\Omega\times\R^{n}$, is the space of all
pairs $\left(q,p\right)$, and the physical setting is encoded in
the\emph{ Hamiltonian, }$H\left(q,p\right)=\norm p^{2}+V\left(q\right)$
which describes the energy of a particle at $\left(q,p\right)$. Hamilton's
equations $\frac{d}{dt}\left(q,p\right)=\left(\frac{\partial H}{\partial_{p}},-\frac{\partial H}{\partial_{q}}\right)$
describe the particle's dynamics in phase space\footnote{This is a simplified description. In general $\Omega$ can be a manifold,
possibly with boundary, the phase space is the tangent bundle over
$\Omega$, and Hamilton's equations can be described using Poisson
brackets. The Hamiltonian function, in general, can be time dependent
and may include other terms.}. In the simple case of $V\equiv0$, the Hamiltonian is $H=\norm p^{2}$
and so $p$ is piece-wise constant by Hamilton's equations,
with discontinuities at the boundary. 

In quantum mechanics, the phase space $\Omega\times\R^{n}$ is replaced
by the Hilbert space $L^{2}\left(\Omega\right)$, and the pair $\left(q,p\right)$
is replaced by a wave-function $f\in L^{2}\left(\Omega\right)$. The
position and momentum of the particle are no longer deterministic
and are given on average by $\left\langle \boldsymbol{q}\right\rangle =\left\langle f,\boldsymbol{q}f\right\rangle $
and $\left\langle \boldsymbol{p}\right\rangle =\left\langle f,\boldsymbol{p}f\right\rangle $.
Where $\left\langle *,*\right\rangle $ denotes the $L^{2}$ inner
product, $\boldsymbol{q}$ is a multiplicative operator and $\boldsymbol{p}=i\hbar\nabla$
is the derivative operator scaled by a constant $\hbar$. The classical
Hamiltonian $H\left(p,q\right)$ is upgraded to a self-adjoint differential
operator $\boldsymbol{H}=H\left(\boldsymbol{q},\boldsymbol{p}\right)$.
In particular the term $\norm p^{2}$ is upgraded to the (positive)
rescaled Laplacian $\hbar^{2}\Delta$. In general, the operators $\boldsymbol{q}$,
$\boldsymbol{p}$ and $\boldsymbol{H}$ are unbounded and one should
specify the domain $D_{\boldsymbol{H}}\subset L^{2}\left(\Omega\right)$
on which $\boldsymbol{H}$ is (weakly) defined and is self-adjoint.
The dynamics of the wave-function, namely the time evolution of a
wave function $f_{t}$ at time $t\in\R$, is according to \emph{Schr\"{o}dinger's
equation} $i\hbar\frac{d}{dt}f_{t}=\boldsymbol{H}f_{t}$. A standard
separation of variables usually reduce the problem to the ``\emph{stationary
Schr\"{o}dinger's equation''} $\boldsymbol{H}f=\lambda f$.

\subsubsection*{Classical chaos:}

A standard classification of classical systems distinguishes between
\emph{chaotic }and \emph{integrable }systems. Systems where certain
symmetries and constants of motion can reduce the number of degrees
of freedom are called integrable. From a dynamical point of view,
a system is integrable if it has a maximal number of integrals of motion such that the phase space can be foliated by the level sets of $H$ and the integrals of motion. Chaotic systems are, in a sense, as far from integrable
as possible. These are systems with dense trajectories in phase space,
which are extremely sensitive to perturbations. That is, the distance
between two close trajectories grows exponentially with time. Chaotic
systems are very hard to investigate, thus in the study of chaos even
the simplest systems that can exhibit chaotic behavior are of interest.
A simple study case of chaotic behavior is that of billiard domains:
A free particle ( $H=\norm p^{2}$) on a compact planar domain $\Omega$,
that bounces (symmetrically) when hitting the boundary. The classification
of a billiard as integrable or chaotic is dictated by the shape of
$\Omega$. Rectangles and ellipses are integrable, but ``simple''
chaotic billiards can be constructed, for example Sinai's billiard,
a square with a round hole in the middle.

\subsubsection*{Quantum chaos:}

The starting point of \emph{quantum chaos} is the quantum (miss) behavior
of systems that are classically chaotic. Einstein's Theory of Special Relativity, in the limit of $c\rightarrow\infty$, provides the same predictions as classical Newtonian mechanics. One should expect the same from a Quantum Mechanics Theory. In the beginning of the formalizing process of Quantum Mechanics, Bohr introduced the correspondence principle, stating that at a certain scale, the Quantum Mechanics predictions should agree with the classical mechanics predictions. After Schr\"{o}dinger's probabilistic formalism of Quantum Mechanics was introduced, the correspondence principle was reformulated to state that in the $\hbar\rightarrow0$ limit the quantum expected values of position and momentum should behave according to the classical mechanics predictions. However, unlike Special Relativity, it appears that the quantum dynamics predictions may not converge to the classical predictions, in the $\hbar\rightarrow0$ limit, if the system is classically chaotic. For example, consider the phase-space trajectories $\left(q(t),p(t)\right)$ which governs the  chaotic behaviour. The quantum
expected values of position and momentum, $\left\langle \boldsymbol{q}\right\rangle $
and $\left\langle \boldsymbol{p}\right\rangle $, have widths of uncertainty $\delta\boldsymbol{q}$ and $\delta\boldsymbol{p}$ which
 obey the \emph{uncertainty
principle} $\delta\boldsymbol{q}\delta\boldsymbol{p}\ge\hbar$. Due to the uncertainty constrain, different trajectories of quantum expected values cannot be distinguished if the spacing between them is of order much smaller than $\hbar$ and so chaos in its classical sense loses its meaning. It is believed
that this fundamental problem of correspondence can shed light on
the very nature of Quantum Mechanics. The $\hbar\rightarrow0$ limit of a quantum system is called semi-classical limit. 

One definition for quantum chaos was presented in the Bakerian lecture
1987 by M.V. Berry \cite{berry1987bakerian}:

\textbf{Definition}: \emph{``Quantum chaology} is the study of semiclassical,
but nonclassical, behaviour characteristic of systems whose classical
motion exhibits chaos.'' 

These ``behavior characteristics'' that we wish to study are properties
of high eigenvalues and their eigenfunctions (to capture the semiclassical
regime where $\hbar\ll1$) that can distinguish chaotic from integrable.
It is believed that such properties should be \textbf{universal},
namely, insensitive to the details of the specific system. The most
famous example of such a property is \emph{spectral statistics}, the
statistical behavior of fluctuations of eigenvalues $\lambda_{n}$
around their asymptotic growth predicted by Weyl's law. For example,
in planar domains the asymptotic growth is linear $\lambda_{n}\sim\frac{4\pi}{\left|\Omega\right|}n$.
The spectral statistics behave differently for integrable systems
and chaotic systems. It was proven in 1977 by Berry and Tabor \cite{BerTab_prsl77}
that the spectrum of an integrable system has level spacing statistics
corresponding to a Poisson process. The chaotic case, by nature, is
much harder to analyze. The famous \emph{BGS conjecture} by Bohigas,
Giannoni and Schmidt \cite{BohGiaSch_prl84} (which followed a nuclear
physics folklore\footnote{The folklore was that the spectrum of complicated quantum systems,
like electrons of a very large atoms, can be well predicted by the
spectrum of a suitable random matrix.}), states that the spectral statistics of chaotic systems can be predicted
by the spectral statistics of a corresponding random matrix ensemble.\emph{
}This conjecture was a significant milestone in the field of \emph{Random
matrix theory} (RMT). The BGS conjecture was affirmed numerically
in many chaotic models, together with related works that provided
analytical supporting evidences. See \cite{BGSreview} for a 2016
overview. It is now widely accepted that RMT spectral statistics is
an indication for quantum chaos. 

\subsubsection*{Quantum chaos on quantum graphs }

Kottos and Smilansky \cite{KotSmi_prl97} provided numerical evidence
for RMT spectral statistics in quantum graphs with edge lengths linearly
independent over $\Q$ (we call them \emph{rationally independent}).
This was the first evidence for ``chaotic fingerprints'' in quantum
graphs. They also provided an exact trace formula for quantum graphs, similar in nature to Gutzwiller's trace formula. Gutzwiller's trace formula \cite{Gutzwiller1971}
(following the works of Weyl, Selberg, Krein and Schwinger) is the
main tool relating the spectrum of a quantum system to periodic orbits
in the classical phase space. It is an approximation of the spectral
density of a quantum system in the semmiclassical limit by means of
the Hamiltonian action on periodic orbits. Unlike Gutzwiller's trace
formula which has an error term that vanishes in the $\hbar\rightarrow0$
limit, the quantum graph's trace formula is exact without error terms.
It is now a common belief that a single (finite) quantum graph is
not enough to properly model quantum chaos, but in the limit of large
graphs, it is a good paradigm for quantum chaos. Barra and Gaspard
\cite{BarGas_jsp00} provided an implicit analytic formula for the
level spacing distribution of quantum graphs with rationally independent
edge lengths. Their work shows that the statistics of a single graph
has a small but not neglectable deviation from the RMT statistics.
However, they noticed, numerically, that this deviation was independent
of the choice of edge lengths (under the rationality condition) and
that the deviation decreases as the graph grows. Further works on
quantum chaos on quantum graphs are \cite{BerBogKea_jpa01,BerKea_jpa99,BerKeaSmi_cmp07,GnuAlt_pre05,GnuSmi_ap06,Kea_incol08}
for example. 

Another behavior characteristic of chaos in quantum graphs, suggested
by \cite{GnuSmiWeb_wrm04}, is the \emph{nodal statistics}. This is
the subject of our work \cite{AloBanBer_cmp18}. To elaborate, let
us first introduce the \emph{nodal count problem} from the scope of
\emph{spectral geometry.} 

\subsection*{Spectral geometry }

Spectral geometry aims to study the relations between spectral properties
of differential operators (usually the Laplacian) and the geometric
structure of the space on which they act (usually a Riemannian manifold).
The term ``spectral properties'' is not confined to properties of
the spectrum alone, but also to the ``landscape'' of eigenfunctions.
A popular theme of spectral geometry is inverse problems. That is,
what geometrical information can be recovered from spectral properties.
Such a question was famously popularized by Mark Kac, asking `Can
one hear the shape of a drum?' in \cite{Kac66}. While Milnor provided
a counter example for isospectral sixteen dimensional manifolds in
\cite{Milnor64}, the question for planar domains held open for three
decades until 1992 when a counter example of isospectral planar domains
was found by Gordon, Webb and Wolpert in \cite{GordonWebbWolpert92}
based on ideas from Sunada's theory \cite{Sun_am85}. Classifying
planar domains that have unique spectrum is still an active topic,
with recent works to this day such as \cite{HezariZelditch19}. A
2014 survey is found in \cite{zelditch14survey}. It was only natural
to look for other spectral properties, those related to the ``landscape''
of eigenfunctions, to resolve isospectrality. Smilansky, Gnutzmann
and Sondergaard conjectured in \cite{GnuSmiSon_jpa05} that the \emph{nodal
count }(which will be presented next) would resolve isospectrality
for planar domains. In a following work of Gnutzmann and Smilansky
with Karageorge \cite{GnuKarSmi_prl06}, a nodal count ``trace formula''
is provided for certain families of planar domains, using which one
can reconstruct these drums. It was affirmed that the nodal count
can solve isospectrality in certain settings, as seen in \cite{bruning2007comment,bruning2007nodal}
for example, but counter examples were given in \cite{BruFaj_cmp12},
and the general validity of this conjecture is still open.

\subsubsection*{Isospectrality on quantum graphs}

The isospectral problem is a good example for a quantum graphs problem
arising from manifolds and planar domains. The question `can one hear
the shape of a graph?' was asked by Gutkin and Smilansky in \cite{GutSmi_jpa01}.
They showed that any simple graph with rationally independent edge
lengths has a unique spectrum. The meaning of such result is that
generically ``one can hear the shape of a graph''. They also provided
an algorithm to reconstruct such graphs from their spectrum, and gave
a counter example of isospectral quantum graphs that do not have rationally
independent edge lengths. This work led to construct more isospectral
graphs \cite{BanParBen_jpa09,BanShaSmi_jpa06,ParBan_jga10} together
with a generalization of Sunada's method to graphs. The conjecture
raised in \cite{GnuSmiSon_jpa05} and the work in \cite{GnuKarSmi_prl06}
on the resolution of isospectrality by the nodal count led to similar
works on quantum graphs. It was affirmed under certain settings \cite{BanOreSmi_incoll08,BanShaSmi_jpa06,BanSmi_epj07,Ore_jpa07}
but counter examples were given in \cite{OreBan_jpa12,JuuJoy_jphys18},
and the general validity of this conjecture is still open. In this
thesis we prove that the first Betti number of a graph (a topological
characterization) can be obtained from its nodal count sequence (a
result from \cite{AloBanBer_cmp18}). In particular, this result implies
that graphs of different Betti number can be distinguished by their
nodal count. 

\subsection*{Nodal count}

Given an eigenfunction of a manifold or a planar domain $\Omega$,
the \emph{nodal partition} is a partition of $\Omega$ according to
the nodal lines (the zero set) of the eigenfunction.\emph{ }The \emph{nodal
domains }are the connected components of this partition, and are the
largest connected subdomains on which the eigenfunction has a fixed
sign\emph{. The nodal count, }is the number of nodal domains. Given
a sequence of eigenfunctions $\left\{ f_{n}\right\} _{n\in\N}$ that
span $L^{2}\left(\Omega\right)$, arranged according to their eigenvalues,
we obtain a nodal count sequence. We denote the nodal count of $f_{n}$
by $\phi\left(n\right)$. The works of Albert \cite{Alb_psp71,Albert_thesis72}
and Uhlenbeck \cite{Uhl_ajm76} assures that generically, nodal lines
(zero sets) of eigenfunctions are of co-dimension one and therefore
partition $\Omega$. Uhlenbeck also showed that eigenvalues are generically
simple. Therefore the nodal count sequence, generically, is well defined
and independent of the choice of basis. The motivation for nodal count
goes back to physical experiments from the 17th century, done by DaVinci
\cite{daVinci}, Galileo \cite{Galileo} and Hooke \cite{Hooke},
later to be further developed by Chladni \cite{Chladni} in the 18th
century (probably using his skills both as a physicist and a musician).
In what is now known as ``Chladni figures'' the vibration patterns
of sound waves are visualized by spreading sand on a brass plate which
is then brought to different resonances using a violin bow. The sand
accumulates into the non-vibrating parts of the plate, forming the
figure of the nodal lines. 

In dimension one, Sturm's oscillation theorem \cite{Stu_jmpa36} states
that $f_{n}$ will have $n-1$ nodal points (zeros). The first generalization
of nodal count to planar domains and manifolds was done by Courant
\cite{Cou_ngwg23} in 1923. The famous \emph{Courant bound }is $\phi\left(n\right)\le n$.
The problem of whether there are eigenfunctions for which $\phi\left(n\right)=n$
was addressed by Pleijel who showed that $\phi\left(n\right)=n$ can
occur only finitely many times, by proving that $\limsup_{n\rightarrow\infty}\frac{\phi\left(n\right)}{n}\le c<1$
in \cite{Ple_cpam56}. This asymptotic bound is known as \emph{Pleijel's
bound,} where $c\approx0.691...$ is given explicitly in terms of the first zero of the zeroth Bessel function. Bourgain and Steinerberger \cite{Bourgain_imrn15,Steinerberger_ahp14}
showed that $c$ is not optimal (improving the bound by order of $10^{-9}$).
More Pleijel-like bounds can be found in \cite{Polterovich09,lena2019pleijel,charron2018pleijel}.

Both Courant's bound and Pleijel's bound, together with many other
results on nodal count, are based on the following observation. If
$f$ is an eigenfunction on $\Omega$ with eigenvalue $\lambda$ and
its nodal domain are denoted by $\left\{ \Omega_{j}\right\} _{j=1}^{N}$,
then the restriction $f|_{\Omega_{j}}$ to a nodal domain $\Omega_{j}$
is the first eigenfunction of the Dirichlet problem on $\Omega_{j}$.
In particular if $\lambda_{1}\left(\Omega_{j}\right)$ denotes the
first Dirichlet eigenvalue of $\Omega_{j}$, then $\lambda_{1}\left(\Omega_{j}\right)=\lambda$
for all $j$. This is a special property of the nodal partition of
an eigenfunction. A variational characterization of nodal partitions
was given in \cite{AncHelfHof_doc04,BHH09,Hel_sem07}. They considered
all partitions of $\Omega$ into $N$ subdomains $\left\{ \Omega_{j}\right\} _{j=1}^{N}$
and considered $\lambda=\max_{j}\lambda_{1}\left(\Omega_{j}\right)$
as a functional over these partitions. It appeared that the nodal
partitions were critical points of $\lambda$, and minimum in the
case of $\phi\left(n\right)=n$. This result led to a characterization
of $\phi\left(n\right)-n$, called \emph{nodal deficiency,} as a Morse
index of the functional $\lambda$ under certain variations \cite{BerKucSmi_gafa12}.

The number of nodal domains, is not the only generalization of Sturm's
oscillations to higher dimensions. Another generalization of the ``number
of zeros'' for a $d$ dimensional manifold is $\mathcal{H}^{d-1}\left(f^{-1}\left(0\right)\right)$,
the $d-1$ dimensional Hausdorff measure of the nodal set of an eigenfunction
$f$. S.T. Yau famously conjectured that $c\sqrt{\lambda}\le\mathcal{H}^{d-1}\left(f^{-1}\left(0\right)\right)\le C\sqrt{\lambda}$
for any eigenfunction $f$ of eigenvalue $\lambda$ with some system
dependent constants $c,C$. Yau's conjecture was affirmed for real
analytic manifolds by \cite{DonFef_ancet90} and the upper bound was
later upgraded to smooth manifolds by Lagunov \cite{logunov18} (see
\cite{logunovMalinnikova19} for a recent review by Logunov and Malinnikova). 

\emph{Nodal statistics - }According to Courant's bound, the normalized
nodal count is bounded by $\frac{\phi\left(n\right)}{n}\le1$ and
is asymptotically bounded by Pleijel's bound. It was shown numerically
by Blum, Gnutzmann and Smilansky in \cite{BGS02}, that the statistics
of\emph{ $\frac{\phi\left(n\right)}{n}$ }can distinguish between
chaotic and integrable planar domains and obeys a universal behavior.
Moreover, for integrable planar domains they proved that the $\frac{\phi\left(n\right)}{n}$
statistics is well defined and calculated its universal characteristics.
However, for chaotic domains the numerics predict a concentration
of measure at a single value. It appears numerically to have a universal
Gaussian concentration with variance of order $\frac{1}{n}$. The
problem of well-posedness of the statistics in the chaotic setting,
not to mention proving the universal behavior, is still open. One
may call it the \emph{nodal BGS conjecture} as it also deals with
a spectral property of chaotic systems, like the BGS conjecture, and
it agrees with the initials of the authors \cite{BGS02}. There were
several related works on the nodal count of a random eigenfunction
that are believed to describe the nodal statistics. Bogomolny and
Schmidt developed a percolation model for the nodal count of random
eigenfunctions for which the conjectured chaotic nodal count behavior
is obtained \cite{BS02}. The credibility
of the percolation model as a prediction for nodal count is discussed
in \cite{BeliaevKereta13}. Another important work on the nodal count
of a random eigenfunction was done by Sodin and Nazarov in \cite{NS07}
for a random eigenfunction on a sphere based on Berry's random wave
model. In a recent work of Sodin and Nazarov \cite{nazarov2020fluctuations}
yet to be published, they improved their result using methods that
resemble Bogomolny and Schmidt's model. The work in \cite{NS07} opened
the door for many works in the area, such as the statistics of the
total length of the nodal lines, $\mathcal{H}^{d-1}\left(f^{-1}\left(0\right)\right)$
for $d=2$ \cite{Wigman_jmp09,Wigman_cmp12,SarWig_cm16,BouRud_inv11}.
In \cite{krishnapurKurlbergWigma13,CamMarWig_jga16} it was shown
that the nodal length statistics for random eigenfunctions on the
torus do not satisfy a universal behavior. 

\subsection*{Nodal count on quantum graphs }

Nodal count on quantum graphs first appeared in the work of Al-Obeid
\cite{AlO_viniti92}, treating only tree graphs. A decade later, independently,
Gnutzmann, Smilansky and Weber raised the question of ``nodal counting
on quantum graphs'' (for all graphs) in \cite{GnuSmiWeb_wrm04}.
The context of their work was the nodal BGS conjecture, after quantum
graphs were established as good paradigms for quantum chaos \cite{GutSmi_jpa01}.
The conjecture that the nodal count can resolve isospectrality \cite{GnuSmiSon_jpa05},
led to a sequence of works on nodal count for quantum graphs after
quantum graphs' isospectrality was introduced \cite{GutSmi_jpa01,BelLub_nhm09}.
This conjecture was affirmed for quantum graphs in certain settings
\cite{BanSmi_epj07,Ore_jpa07,BanOreSmi_incoll08,BanShaSmi_jpa06}
but counter examples were given in \cite{OreBan_jpa12}, and the general
validity of the conjecture is still open. A particular study case
in this research are tree graphs. It was shown in \cite{AlO_viniti92,PokPryObe_mz96,PokPry_rms04},
and independently in \cite{Schapotschnikow06}, that Strum's result
holds for trees, and hence all trees have the same nodal count. In
addition, Band proved in \cite{Ban_ptrsa14} that Sturm's result does
not hold for any other graph and therefore the nodal count distinguishes
between trees and the rest of the graphs. 

The nodal count for quantum graphs can be defined either as the number
of nodal domains (as in the manifolds settings) or as the number of
nodal points (as in Sturm's theorem). These two counts differ by a
constant for all but finitely many eigenfunctions and hence share
the same statistics. Our convention is the number of nodal points,
$\phi\left(n\right):=\left|f_{n}^{-1}\left(0\right)\right|$ (where
the numbering of eigenfunctions starts from $n=0$ for the constant
eigenfunction). The nodal count $\phi\left(n\right)$ is well defined
under the assumption that $f_{n}$ does not vanish on vertices and
its eigenvalue $\lambda_{n}$ is simple. This assumption was shown
in \cite{BerLiu_jmaa17} to hold generically. A Courant like upper
bound on $\phi\left(n\right)$ was proven in \cite{GnuSmiWeb_wrm04}
and a lower bound for trees was found in \cite{Schapotschnikow06,PokPryObe_mz96,PokPry_rms04,AlO_viniti92}
and later generalized to every graph by Berkolaiko in \cite{Ber_cmp08}.
Altogether, we get the following bounds: 
\begin{equation}
n\le\phi\left(n\right)\le n+\beta,\label{eq: nodal bounds}
\end{equation}
where $\beta$ is the first Betti number of the graph. The non-negative
deviation of $\phi\left(n\right)$ from its linear growth is called
the \emph{nodal surplus,} $\sigma\left(n\right):=\phi\left(n\right)-n$,
and it fully characterize the nodal count. A variational characterization
of $\phi\left(n\right)-n$ for quantum graphs was shown in \cite{BanBerRazSmi_cmp12}
in analog to the planar domains result \cite{BerKucSmi_gafa12}. A
similar work for discrete graphs \cite{BerRazSmi_jpa12} led to the
\emph{nodal-magnetic} work of Berkolaiko in \cite{Ber_apde13}. He
showed, for discrete graphs, that the analog of $\phi\left(n\right)-n$
is equal to the Morse index of the $n$th eigenvalue with respect
to magnetic perturbations. A complementary work was done by Colin
de Verdière in \cite{Col_apde13}. This relation, called \emph{the
nodal magnetic relation, }was later upgraded to quantum graphs by
Berkolaiko and Weyand \cite{BerWey_ptrsa14}. We will discuss it in
Section \ref{sec: Magnetic}. The nodal magnetic relation was a key
ingredient in \cite{Ban_ptrsa14} and plays an important role in our
works in \cite{AloBanBer_cmp18} and \cite{AloBanBer_conj}. It also
provides a different physical motivation for nodal statistics from
a ``solid state physics'' point of view, as seen in \cite{BanBer_prl13}.

The behavior of the nodal surplus already appears in \cite{GnuSmiWeb_wrm04},
one of the first nodal count works on quantum graphs, where the number
of nodal domains, $\nu_{n}$, was investigated. For large enough $n$,
$\nu_{n}$ and $\phi\left(n\right)$ are related by $\nu_{n}=\phi\left(n\right)-\beta+1$.
Gnutzmann, Smilansky and Weber showed in \cite{GnuSmiWeb_wrm04} that
$\nu_{n}-n$ is bounded in some fixed interval and considered the
distribution of $\nu_{n}-n$ in this interval, which is the nodal
surplus distribution (up to a constant). They raised the following
\emph{quantum graphs' nodal statistics conjecture:}
\begin{conjecture*}
\cite{GnuSmiWeb_wrm04}~``For well connected graphs, with incommensurate
bond lengths, the distribution of the number of nodal domains in the
interval mentioned above approaches a Gaussian distribution in the
limit when the number of vertices is large''.
\end{conjecture*}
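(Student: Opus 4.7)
The plan is to use the nodal--magnetic relation of Berkolaiko--Weyand to recast the nodal surplus as the Morse index of a spectral function on a $\beta$-dimensional magnetic torus, and then establish a central-limit-type result for that Morse index via the Barra--Gaspard equidistribution. First I would fix a precise formalisation of ``well connected'': a sequence of connected graphs $\Gamma_{m}$ with $\beta(\Gamma_{m})\to\infty$, uniformly bounded vertex degrees, and a spectral gap on the underlying combinatorial graph strong enough to rule out the trees-of-cycles regime. The nodal--magnetic relation then gives $\sigma(n)=\mathrm{M}\bigl(\mathrm{Hess}_{\vec{\alpha}=0}\lambda_{n}(\vec{\alpha})\bigr)$, where $\vec{\alpha}\in(\R/2\pi\Z)^{\beta}$ are magnetic fluxes around an independent cycle basis and $\mathrm{M}$ denotes the number of negative eigenvalues. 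By the equidistribution of $\vec{\kappa}_{n}:=\{\vec{l}k_{n}\}$ in $\Sigma^{\mathrm{reg}}$ under $\mu_{\vec{l}}$ (which holds precisely under the incommensurability assumption in the conjecture), the asymptotic distribution of $\sigma(n)$ equals the distribution of $\vec{\kappa}\mapsto\mathrm{M}(H(\vec{\kappa}))$ on $(\Sigma^{\mathrm{reg}},\mu_{\vec{l}})$, where $H(\vec{\kappa})$ is the magnetic Hessian evaluated on the canonical eigenfunction $f_{\vec{\kappa}}$.

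The central analytic step is to prove a CLT for $\mathrm{M}(H(\vec{\kappa}))$ along the chosen sequence of graphs. I would write each entry $H_{ij}(\vec{\kappa})$ via a Hadamard-type formula as an explicit bilinear functional of the trigonometric data of $f_{\vec{\kappa}}$ restricted to cycles $c_{i}$ and $c_{j}$. On $(\Sigma^{\mathrm{reg}},\mu_{\vec{l}})$ the edge phases $\kappa_{e}$ are asymptotically independent from $\vec{l}$-uniformity, so the diagonal signs $\sgn(H_{ii})$ should behave as approximately symmetric $\pm 1$ Bernoullis. Off-diagonal entries $H_{ij}$ with $i\ne j$ must be controlled in $L^{2}(\mu_{\vec{l}})$ by cancellation in oscillatory phase sums over shared edges. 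Combining these estimates with Weyl's interlacing inequality (which bounds the change in $\mathrm{M}$ under a rank-one perturbation) would reduce the Morse-index CLT to a Lindeberg-type CLT for a sum of $\beta$ weakly correlated $\pm 1$ Bernoullis, yielding the Gaussian limit with mean $\beta/2$ and variance of order $\beta$ that matches the numerical observations in \cite{GnuSmiWeb_wrm04,AloBanBer_cmp18}.

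The hard part will be quantifying the decorrelation of the off-diagonal magnetic couplings in the genuinely well-connected regime. In the sparse trees-of-cycles case, distinct cycles share no edges, so $H_{ij}\equiv 0$ for $i\ne j$ and the Bernoulli sum is exactly independent; in the well-connected case, cycles overlap heavily and the nonzero $H_{ij}$ cannot be removed by locality alone, but only through the pseudo-randomness of the phases $\vec{\kappa}$ under $\mu_{\vec{l}}$. A natural tool is a periodic-orbit expansion of moments of $\mathrm{M}(H)$ via the exact trace formula for quantum graphs, combined with expander-type bounds on $\Gamma_{m}$ to suppress the contribution of non-diagonal periodic-orbit pairs as $\beta\to\infty$. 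This is precisely the mechanism conjectured to drive BGS-type spectral universality on graphs, so I expect the main obstacle to be exactly the step that ties the nodal statistics conjecture to still-open universality problems in quantum chaos on graphs.
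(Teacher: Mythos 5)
The statement you are addressing is a conjecture, and neither the paper nor your proposal proves it in general; what you have written is a research programme, not a proof. Your first two reduction steps do coincide with the paper's actual machinery: the nodal--magnetic theorem of Berkolaiko--Weyand recasts $\sigma(n)$ as the Morse index $\M(\hess_{\av}k_n)$, and the Barra--Gaspard equidistribution of $\{k_n\lv\}$ on $\Sigma^{reg}$ turns the statistics of $\sigma$ into the $\mu_{\lv}$-distribution of $\kv\mapsto\M\bigl(-\hess_{\av}F(\kv)/p(\kv)\bigr)$ on $\Sigma_{\G}$. This is exactly Sections \ref{sec: proof-of-existence} and \ref{sec: Magnetic} of the paper, and it is what yields well-posedness and the symmetry $P(\sigma=j)=P(\sigma=\beta-j)$ (hence mean $\beta/2$) via the measure-preserving inversion $\I$. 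Note, however, that the symmetry of the distribution comes from $\I$ acting on the whole Hessian, not from the diagonal entries individually behaving as symmetric Bernoullis.

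The genuine gap is your central analytic step. Reducing $\M(H)$ to a sum of $\beta$ weakly correlated $\pm1$ variables via control of the off-diagonal entries $H_{ij}$ in $L^2(\mu_{\lv})$ plus Weyl interlacing does not work as stated: the Morse index of a symmetric matrix is not stable under perturbation from its diagonal part unless the off-diagonal block is small relative to the \emph{gaps} of the diagonal spectrum, and on the sets where some $H_{ii}$ changes sign (which carry positive measure and are exactly where the counting fluctuates) no $L^2$ bound on $H_{ij}$ controls the index. You acknowledge that quantifying the decorrelation is equivalent to open BGS-type universality problems, which means the proposal cannot close. The paper's actual theorem (Theorem \ref{thm:Second}) succeeds precisely by restricting to trees of cycles, where the edge-separation decomposition makes $\hess_{\av}F$ exactly block diagonal with $1\times1$ blocks (Proposition \ref{prop: block decomposition}), so $\sigma=\sum_j\boldsymbol{\iota}_j$ with each $\boldsymbol{\iota}_j$ Bernoulli; exact independence and uniformity on $\{0,1\}^{\beta}$ are then obtained not by decorrelation estimates but by the measure-preserving cut-flip maps $\Rev$ acting transitively on sign configurations (Lemmas \ref{lem: tree lemma} and \ref{lem: R_e}). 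That argument is complete but covers only the regime your proposal explicitly sets aside as degenerate; for overlapping cycles the problem remains open, and your sketch does not supply the missing mechanism.
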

Where by `incommensurate bond lengths' they mean edge lengths which
are linearly independent over the rationals. We will use the name
\emph{rationally independent }edge lengths\emph{.} The term `well
connected graphs' is not classified in their paper and so an important
observation should be made. The limit of large graphs should be taken
as $\beta\rightarrow\infty$ since (\ref{eq: nodal bounds}) implies
that $\sigma\left(n\right)\in\left\{ 0,1,...\beta\right\} $. In particular,
as showed in \cite{Schapotschnikow06}, tree graphs (defined by $\beta=0$)
have $\sigma\left(n\right)\equiv0$ which clearly does not obey a
Gaussian limit.

\subsection*{New results on the nodal statistics for quantum graphs}

The main nodal statistics results of this thesis, which appear in
\cite{AloBanBer_cmp18}, set the mathematical well-posedness of quantum
graphs' nodal statistics and prove the Gaussian limit conjecture for
a certain family of graphs. The well-posedness will be shown in Section
\ref{sec: proof-of-existence}, Theorem \ref{thm:First}, where we
prove that nodal statistics is well defined when the edge lengths
are rationally independent. That is, we prove that the following limit
exists for every $j\in\left\{ 0,1,...\beta\right\} $, 
\[
p_{j}=\lim_{N\rightarrow\infty}\frac{\set{n\le N}{\sigma\left(n\right)=j}}{N}.
\]
We also prove that there is a common symmetry $p_{j}=p_{\beta-j}$
for all graphs, and so the excepted value is $\mathbb{E\left(\sigma\right)=}\frac{\beta}{2}$.
This result can be considered as a generalization of Band's result
for trees, showing that the nodal count distinguishes between graphs
of different Betti number $\beta$. 

In Section \ref{sec: Binomial}, Theorem \ref{thm:Second}, we present
a family of graphs which we call \emph{trees of cycles} for which
the nodal statistics can be explicitly calculated and shown to have
binomial distribution $\sigma\sim Bin\left(\frac{1}{2},\beta\right)$.
The Gaussian limit at $\beta\rightarrow\infty$ follows, together
with the variance estimate $Var\left(\sigma\right)=\frac{\beta}{4}$.
Our modification to the conjecture of Gnutzmann, Smilansky and Weber
is thus: 
\begin{conjecture*}
\cite{AloBanBer_conj} The nodal surplus distribution for a quantum
graph, with rationally independent edge lengths and first Betti number
$\beta$, approaches a Gaussian distribution in the limit of $\beta\rightarrow\infty$
as follows:
\[
\frac{\sigma-\frac{\beta}{2}}{\sqrt{\mathrm{Var}\left(\sigma\right)}}\xrightarrow[\beta\rightarrow\infty]{\mathcal{D}}N\left(0,1\right).
\]
Where the convergence above is in distribution and the variance is
of order $\mathrm{Var}\left(\sigma\right)=\mathcal{O}\left(\beta\right)$.
\end{conjecture*}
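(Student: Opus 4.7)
The plan is to recast the nodal surplus as the Morse index of a random symmetric matrix of dimension $\beta$, and then to prove a central limit theorem for the number of negative eigenvalues of this matrix as $\beta\to\infty$. The first step is the ergodic reduction via the Barra--Gaspard measure $\mu_{\lv}$ on the secular manifold $\Sigma^{\mathrm{reg}}$: by Theorem \ref{thm:First}, for rationally independent $\lv$ it suffices to study the distribution of a single function $\sigma:\Sigma^{\mathrm{reg}}\to\{0,1,\dots,\beta\}$ under $\mu_{\lv}$, and this function is independent of the specific $\lv$. The universal symmetry $p_{j}=p_{\beta-j}$ and the hard constraint $\sigma\in\{0,\dots,\beta\}$ already pin the mean to $\beta/2$ and yield the trivial bound $\mathrm{Var}(\sigma)\le\beta^{2}/4$; the conjectural scaling $\mathcal{O}(\beta)$ must be extracted from the geometry.

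The second step is to invoke the nodal--magnetic relation of Berkolaiko--Weyand in order to write $\sigma(\kv)$ as the Morse index at zero flux of the simple eigenvalue $k^{2}(\vec{\alpha})$ regarded as a function of $\beta$ magnetic fluxes $\vec{\alpha}\in\T^{\beta}$ around a fixed fundamental cycle basis. Consequently $\sigma(\kv)$ equals the number of negative eigenvalues of the $\beta\times\beta$ real symmetric random matrix $H(\kv):=\hess_{\vec{\alpha}=0}k^{2}(\vec{\alpha})$, whose entries are explicit functionals of the scattering data of the graph at $\kv$. Choosing the cycle basis with respect to a spanning tree that minimises edge-overlap between cycles, one expects $H(\kv)$ to be approximately block-sparse with respect to the cycle-overlap graph: in the extreme case of the trees-of-cycles family of Theorem \ref{thm:Second} the basis can be arranged so that $H(\kv)$ is diagonal, whose $\beta$ diagonal entries have independent Bernoulli signs under $\mu_{\lv}$, and this reproduces the exact binomial law. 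The general case should be treated as a controlled perturbation of this diagonal picture.

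The third step, which I expect to be the principal obstacle, is the CLT itself. The target variance $\mathrm{Var}(\sigma)=\mathcal{O}(\beta)$ signals that the negative-eigenvalue count should fluctuate like a sum of weakly dependent Bernoullis, rather than like a log-correlated RMT counting statistic whose variance would grow only as $\log\beta$. A natural route is: (i) prove a matching lower bound $\mathrm{Var}(\sigma)\ge c\beta$ by a local deformation of the graph towards a trees-of-cycles subgraph, using the explicit Bernoulli structure from Theorem \ref{thm:Second} as input on the deformed cycles; (ii) decompose $\sigma=\sum_{c=1}^{\beta}X_{c}$ along the cycle basis, where $X_{c}$ is the contribution of the $c$th eigenvalue of $H(\kv)$; and (iii) apply a Lindeberg- or Stein-type CLT after establishing quantitative decorrelation $|\mathrm{Cov}(X_{c},X_{c'})|\le\eta(\mathrm{dist}(c,c'))$ summable in $c'$, with distance measured in the cycle-overlap graph. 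The hard input is genuinely spectral: one must transfer the combinatorial decoupling of cycles into decoupling of the signs of eigenvalues of $H(\kv)$, which a priori mixes all cycles globally via the scattering matrix. Proving this quantitative mixing uniformly across graphs of growing $\beta$ is exactly where the trees-of-cycles argument behind Theorem \ref{thm:Second} ceases to apply, and where new probabilistic or spectral input seems to be required.
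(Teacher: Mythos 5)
The statement you were asked to address is a \emph{conjecture}, and the thesis does not prove it: what the thesis actually proves (Theorem \ref{thm:Second}) is the special case of trees of cycles, and it merely cites a work in progress for further families and numerical evidence. So there is no complete proof in the paper for you to match; what you have written, correctly framed as a ``proposal,'' is a program rather than a proof, and you are honest about that at the end. Within that framing your first two steps faithfully reproduce the paper's machinery: the reduction to a function on $\Sigma^{\mathrm{reg}}$ equidistributed under the Barra--Gaspard measure (Theorems \ref{thm:First} and \ref{thm: BG equidistribution}), and the identification of $\boldsymbol{\sigma}(\kv)$ with the Morse index of the magnetic Hessian (Theorem \ref{thm: nodal magnetic}, Proposition \ref{prop: nodal magnetic secular}), including the observation that $\hess_{\vec{\alpha}}k$ and $\hess_{\vec{\alpha}}k^{2}$ have the same Morse index. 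Your description of the trees-of-cycles case (diagonal Hessian, i.i.d.\ Bernoulli$(1/2)$ signs, binomial law) also matches the paper's Corollary \ref{cor: block diagonal} together with Lemmas \ref{lem: R_e} and \ref{lem: tree lemma}.

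Where you should be more careful is in the mechanism you propose for going beyond trees of cycles. The block structure of $\hess_{\vec{\alpha}}F$ in Proposition \ref{prop: block decomposition} is not a feature of a cleverly chosen fundamental cycle basis: it is a consequence of \emph{bridges} in the graph, because the bridge factorizes the secular function and hence the Hessian decouples along the edge-separation decomposition. For a $2$-edge-connected graph (no bridges) the paper has no block decomposition at all, and ``choosing a spanning tree minimizing cycle overlap'' gives no reduction, since the off-diagonal blocks do not vanish and there is no $\Rev$-type measure-preserving cut-flip without a bridge to cut. The independence in Theorem \ref{thm:Second} is not a perturbative property that survives small off-diagonal entries; it is an exact symmetry of the Barra--Gaspard measure that is available precisely because the cycles are connected only through bridges. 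So your step (iii) is the real gap, exactly as you say, but it is worth being explicit that the obstacle is structural, not merely quantitative: the decoupling mechanism used in the known cases has no analogue for bridgeless graphs, and a new idea (not a refinement of the cut-flip argument) is needed to obtain even the variance lower bound $\mathrm{Var}(\sigma)\ge c\beta$ in that regime. Apart from this, your assessment of the state of the problem agrees with the paper's.
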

In a work in progress \cite{AloBanBer_conj} we prove this conjecture
for several other families of graphs, different than the ones in \cite{AloBanBer_cmp18},
together with a vast numerical evidence.

\subsection*{Neumann count }

It was first noticed by Stern in her\footnote{Antonie Stern (1892-1967) was a Ph.D. student of Courant at G\"{o}ttingen. As a woman,
she could not get a position and was not able to proceed with mathematical
research. In 1939 she escaped Nazi Germany and made Aliyah \cite{SternsInfo}.} Ph.D. thesis in 1925, that there can be arbitrarily large eigenvalues
with nodal domains as small as $\phi\left(n\right)=2$ \cite{Stern_Bemerkungen25}.
This counter intuitive fact is unavoidable. As shown by Uhlenbeck
in \cite{Uhl_ajm76}, a crossing of nodal lines is unstable and can
be omitted by ``as small as we want'' perturbations. Nodal partitions
are determined by such crossings and are therefore usually unstable. 

A novel idea of a more stable partition, which reflects the topography
of an eigenfunction, was first suggested by Zelditch in a paragraph
in \cite{Zel_sdg13}, and (independently) was studied by McDonald
and Fulling in \cite{McDFul_ptrs13}. The partition, now called \emph{Neumann
partition,} is the Morse-Smale complex (see \cite{EdeHarZon_dcg03})
of a planar domain or a 2d manifold according to a given eigenfunction
$f$. A description of such partition, following the definitions and
notations of \cite{BanFaj_ahp16}, is as follows. Let $M$ be a 2d
manifold (for simplicity assume no boundary) and let $f$ be an eigenfunction
of $M$. Consider the gradient $\nabla f$ as a vector field on $M$
and consider gradient flow lines $\varphi:\R\rightarrow M$ such that
$\frac{d}{dt}\varphi\left(t\right)=\nabla f\left(\varphi\left(t\right)\right)$.
It is not hard to deduce that each gradient flow line start and ends
at critical points of $f$ and that these flow lines cover $M$. The
naive picture one should have in mind is that each point $x\in M$
which is not a critical point of $f$ lies on a unique gradient flow
line that starts from a local minimum $q$ and ends at a local maximum
$p$. This is not necessarily the case, in general, and so the discussion
is restricted to eigenfunctions which are Morse, which is a generic
property \cite{Uhl_ajm76}. An eigenfunction is said to be Morse if
its set of critical points is discrete and each critical point is
non-degenerate (i.e Hessian of full rank).

Given a Morse eigenfunction $f$, every pair of a local minimum $q$
and a local maximum $p$ define a \emph{Neumann domain} $\Omega_{p,q}$
as the union (possibly empty) of gradient flow lines of $f$ going
from $p$ to $q$. On the boundary of a Neumann domain are the \emph{Neumann
lines,} gradient flow lines that go through a saddle point. The \emph{Neumann
partition }is the partition of $M$ into Neumann domains. Such a partition
can be shown to be stable under small perturbations of $f$ or of
the metric on $M$. Heuristically, the Neumann partition is changed
only if critical points meet\textbackslash appear\textbackslash disappear,
which generically does not happen under small enough perturbations. 

A main feature of the Neumann partition and the origin of its name
is that the restriction $f|_{\Omega}$ to a Neumann domain $\Omega$
is an eigenfunction of $\Omega$ with Neumann boundary conditions
\cite{BanEggTay_arXiv17}. It is analogous to the known fact that
the restriction to a nodal domain is a Dirichlet eigenfunction of
that domain. However, unlike the restriction to a nodal domain, where
the eigenfunction is known to be the first Dirichlet eigenfunction
of that domain, for a Neumann domain this is not the case. The \emph{spectral
position} $N\left(\Omega\right)$ of a Neumann domain $\Omega$ of
$f$ with eigenvalue $\lambda$ is the position of $\lambda$ in the
spectrum of $\Omega$. Namely, the number of eigenvalues of $\Omega$
(with Neumann boundary conditions) smaller than $\lambda$. It was
previously believed that like in the case of nodal domains, $N\left(\Omega\right)$
should be one, or at least very low, but the works of \cite{BanFaj_ahp16,BanEggTay_arXiv17}
show , counter intuitively, that $N\left(\Omega\right)$ can be as
high as we wish, even for simple cases like the flat torus. In analogy
to the nodal count problem, the \emph{Neumann count, }$\mu\left(n\right)$
is defined as the number of Neumann domains of $f_{n}$, the $n^{th}$
eigenfunction. It was shown in \cite{BanFaj_ahp16} that $\mu\left(n\right)\ge\frac{1}{2}\phi\left(n\right)$
but it is still unknown whether the Neumann count holds more geometric
information on the manifold than the nodal count. In particular, Neumann
statistics properties and resolution of isospectrality are still unknown
in general. For more information see the review paper \cite{AloBanBerEgg_Neumann}.

\subsection*{New results on Neumann count and statistics for quantum graphs}

The novel study of Neumann partitions led naturally to the question
of a Neumann partition on a quantum graph. We raised this question
in \cite{AloBanBerEgg_Neumann} and compared between properties of
Neumann partitions on quantum graphs and manifolds. The wealth of
questions on Neumann partitions, Neumann domains and Neumann count
on quantum graphs was further studied in \cite{AloBan19}. In analogy
to nodal points, the \emph{Neumann points} of an eigenfunction $f$
on a quantum graph are the interior critical points (which are either
local minima or maxima). The \emph{Neumann count} $\mu\left(n\right)$
for a quantum graph is the number of Neumann points of $f_{n}$, the
$n^{th}$ eigenfunction, and it is convenient to discuss the deviation
$\omega\left(n\right)=\mu\left(n\right)-n$, called the \emph{Neumann
surplus} in analogy to the nodal surplus (although it can be negative).
The Neumann surplus was bounded uniformly in \cite{AloBanBerEgg_Neumann,AloBan19},
in analogy to the nodal surplus bounds (\ref{eq: nodal bounds}). 

The main results of this thesis in the context of Neumann count and
statistics were obtained in \cite{AloBan19}. In Section \ref{sec: proof-of-existence},
Theorem \ref{thm:First}, we prove, alongside the nodal statistics,
that Neumann statistics (that is the statistics of the Neumann surplus)
is well defined, by existence of the limits, 
\[
p_{j}=\lim_{N\rightarrow\infty}\frac{\set{n\le N}{\omega\left(n\right)=j}}{N},
\]
for all possible values of the Neumann surplus. We also prove a symmetry,
similar to that of the nodal statistics, which provides the expected
value $\mathbb{E}\left(\omega\right)=\frac{\beta-\left|\partial\Gamma\right|}{2}$,
where $\left|\partial\Gamma\right|$ is the number of vertices of
degree one in the graph. As a consequence, we can recover both $\beta$
and $\left|\partial\Gamma\right|$ from $\mathbb{E}\left(\omega\right)$
and $\mathbb{E}\left(\sigma\right)$. This is a major improvement
to the inverse problem of the nodal count, as the number of (discrete)
graphs with fixed $\beta$ and $\left|\partial\Gamma\right|$ is finite.
This also proves that the nodal count and the Neumann count cannot
be obtained one from the other. The question of how correlated are
the nodal statistics and the Neumann statistics is discussed in \cite{AloBan19}
and is still open. In Section \ref{sec: Binomial}, Theorem \ref{thm:Second},
we prove, alongside the binomial nodal statistics of a certain family
of graphs, that tree graphs whose interior vertices (those of degree
larger than one) are of degree 3 have a shifted binomial Neumann statistics:
\[
\omega+\left|\V_{in}\right|+1\sim Bin\left(\left|\V_{in}\right|,\frac{1}{2}\right).
\]
Where $\left|\V_{in}\right|$ is the number of internal vertices.
A Gaussian limit at $\left|\V_{in}\right|\rightarrow\infty$ appears
here, and the universality of this limit for other families of graphs
is currently investigated. 

\subsection*{Generic properties of eigenfunctions on quantum graphs}

As already stated, Uhlenbeck's seminal work ``generic properties
of eigenfunctions'' \cite{Uhl_ajm76} was needed in order to discuss
and define the nodal and Neumann counts on manifolds. In fact this
work is crucial for almost every spectral property, as in the words
of Uhlenbeck, it ``set up machinery to consider the eigenfunctions
of curves of operators...suggest an approach to the problem of characterizing
the $n$th eigenfunction of a family of operators''. As metric graphs
are not manifolds, the results of Uhlenbeck do not apply and new machinery
is needed. The first genericity result for quantum graphs was obtained
by Friedlander \cite{Fri_ijm05}, who showed that for any graph structure
not homeomorphic to a cycle, there is a residual set of edge lengths
for which every eigenvalue of the quantum graph is simple. A decade
later, Berkolaiko and Liu had found \cite{BerLiu_jmaa17} that for
graphs without loops (an edge connecting a vertex to itself) and a
generic choice of edge lengths (in the sense of \cite{Fri_ijm05})
none of the eigenfunctions vanish on a vertex. This property is needed
to define nodal count, and is also crucial for the nodal magnetic
connection as seen in \cite{BerWey_ptrsa14}. 

In the case where a graph has a loop, for any choice of edge lengths,
there will be infinitely many eigenfunctions supported on that loop.
Nevertheless, it is proven in \cite{BerLiu_jmaa17}, that for a generic
choice of edge lengths, every eigenfunction not supported on a loop,
does not vanish on any vertex. In \cite{AloBanBer_cmp18} we show
that the implicit generic choice of edge lengths can be replaced by
the explicit restriction to rationally independent edge lengths, at
the cost of a density zero sequence. Namely, for almost every eigenfunction
(a density one sequence), the eigenvalue is simple and either the
eigenfunction is supported on a loop or it does not vanish on any
vertex. 

\subsection*{New genericity results for quantum graphs}

The main genericity result in this thesis, a work from \cite{Alon}
yet to be published, is that generically the derivatives of an eigenfunction
do not vanish on any interior vertex (that is any vertex which is
not of degree one). Here, by generically we mean either in the sense
of every eigenfunction for a residual set of edge lengths, or in the
sense of a density one sequence of eigenfunction for any choice of
rationally independent edge lengths. We also prove that the two choices
of genericity are equivalent in this case. This additional property,
is needed in order to define the Neumann count and statistics.

\subsection{The structure of the thesis }

This thesis incorporates the nodal statistics works of \cite{AloBanBer_cmp18}
together with the Neumann count and statistics works of \cite{AloBan19,AloBanBerEgg_Neumann}.
The structure of the thesis and the partition into ``Neumann'' results
of \cite{AloBan19,AloBanBerEgg_Neumann} versus ``nodal'' results
of \cite{AloBanBer_cmp18} is illustrated in the diagram in Figure
\ref{fig: thesis structure}. Section \ref{sec: Neumann-count-bounds}
is a short section in which we present a uniform bound on the Neumann
surplus. Section \ref{sec: The-secular-manifold} is the core of this
thesis, in it we present the \emph{secular manifold} and provide the
needed machinery for statistical investigation of eigenfunctions.
A new result presented in this section is a generalization of a result
from \cite{BerLiu_jmaa17} on the number of connected components of
the secular manifold. Section \ref{sec: genericity} is devoted to
the generalization of the genericity result in \cite{BerLiu_jmaa17}.
The extended generic properties from this work are needed in order
to define the Neumann count. In Section \ref{sec: proof-of-existence}
we prove existence and symmetry of both the nodal surplus distribution
(a main result of \cite{AloBanBer_cmp18}) and the Neumann surplus
distribution (a main result of \cite{AloBan19}). In Section \ref{sec: Binomial},
we present families of graphs for which we can prove that the nodal\textbackslash Neumann
surplus distributions are binomials and converge to a Gaussian limit
as conjectured. These results were obtained in \cite{AloBanBer_cmp18}
for the ``nodal'' case and in \cite{AloBan19} for the Neumann ``case''.
Both results make use of statistical behavior of ``local'' properties,
which are described in Sections \ref{sec: Properties-of-a Neumann domain}
and \ref{sec: Magnetic}. In Section \ref{sec: Properties-of-a Neumann domain}
we present local properties of Neumann domains, providing both bounds
and statistical analysis as done in \cite{AloBan19,AloBanBerEgg_Neumann}.
In Section \ref{sec: Magnetic} we present a brief introduction to
the nodal magnetic connection that was proved in \cite{BerWey_ptrsa14}.
Using the nodal magnetic theorem we prove, as in \cite{AloBanBer_cmp18},
that the nodal surplus is given by a sum of local magnetic stability
indices and analyze their statistics.

\begin{figure}
\includegraphics[width=0.75\paperwidth]{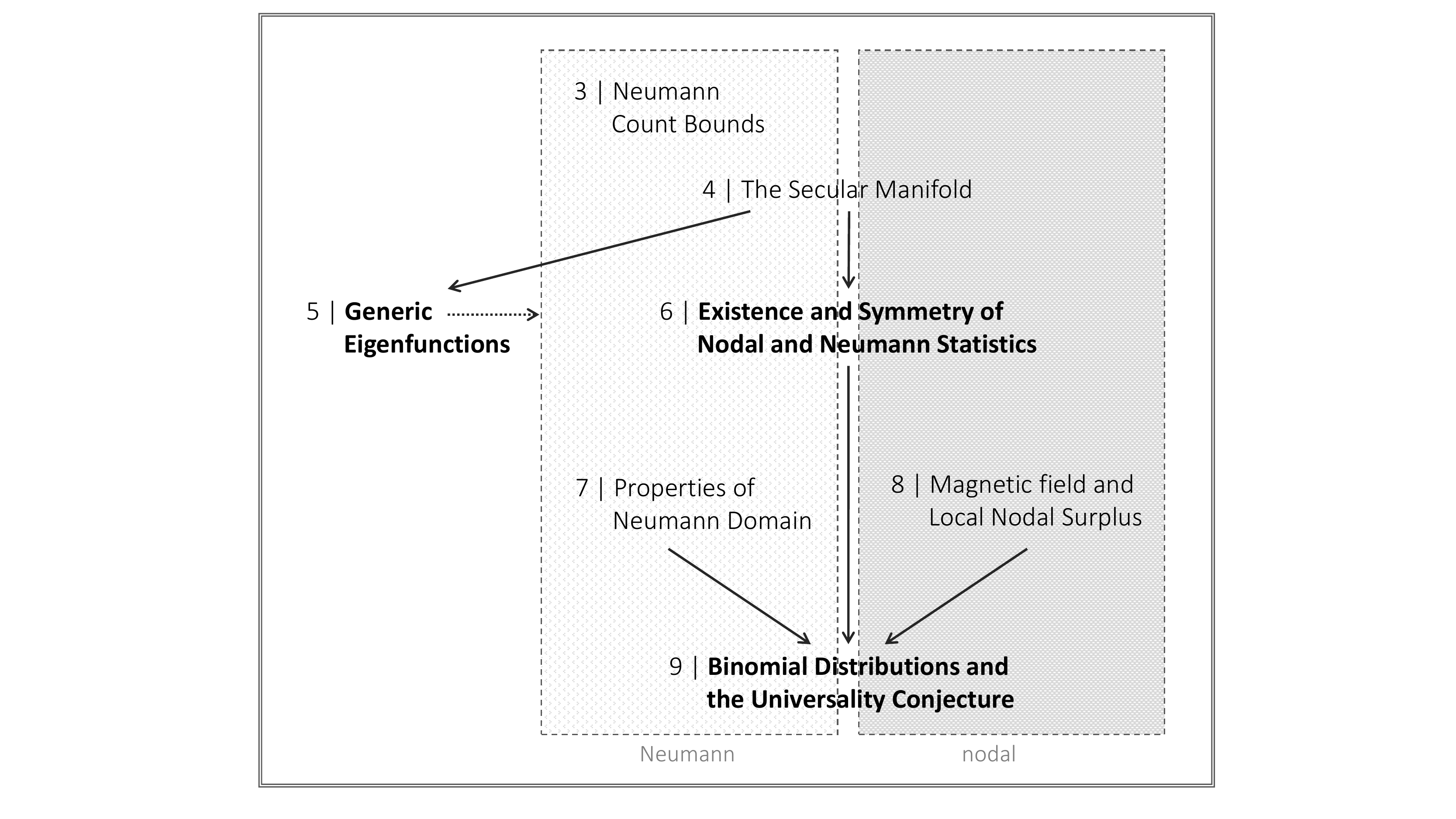}\caption[Thesis structure]{\label{fig: thesis structure}A diagram of the thesis structure. The
arrows indicate dependence. Sections 5,6 and 9 in bold as these hold
the main results (although new results appear in every chapter from
4 to 9). The 'Neumann' (or 'nodal') block indicates results that appear
in \cite{AloBan19} (or \cite{AloBanBer_cmp18}). }
\end{figure}

\newpage{}
\section{\label{sec: Preliminaries} Preliminaries}

\subsection{Basic graph definitions and notations\label{subsec:metric_graphs_introduction}}

Throughout this manuscript, the graphs we consider are finite and
connected. We denote by $\V$ the set of the graph vertices and by
$\E$ the set of its edges. We denote their cardinality by $V:=\left|\V\right|$
and $E:=\left|\E\right|$. Our discussion is not restricted to simple
graphs. Namely, two vertices may be connected by more than one edge
and it is also possible for an edge to connect a vertex to itself.
An edge connecting a vertex to itself is called a \emph{loop}. Given
a vertex $v\in\V$ we denote the multi-set of edges connected to $v$
by $\E_{v}$. We note that every loop connected to $v$ will appear
twice in $\E_{v}$. The \emph{degree} of a vertex is denoted by $\deg v:=\left|\E_{v}\right|$.
\begin{rem}
\textbf{Throughout this manuscript we assume no $\deg v=2$ vertices
and that $E>1$.} 

We will show in Remark \ref{rem: deg 2} why adding\textbackslash removing
vertices of $\deg v=2$ does not affect the quantum graphs we discuss.
The restriction to $E>1$, is to exclude the loop graph which is the
only exception in most of the following theorems (as it has a continuous
symmetry which gives a completely non-simple spectrum). By considering
$E>1$ we also exclude the interval, which is fully analyzed in the
famous works of Sturm and Liouville. 
\end{rem}

\begin{defn}
We call a vertex of degree one, a \emph{boundary }vertex, and define
the boundary of the graph as $\dg:=\left\{ v\in\V\thinspace|\,\deg v=1\right\} .$
The rest of the vertices are called \emph{interior vertices} and we
denote the set of interior vertices by $\V_{in}:=\Vint$. 
\end{defn}

\begin{defn}
We define a \emph{tail} as an edge connected to a boundary vertex,
and we define a \emph{bridge }as an edge whose removal disconnects
the graph. 
\end{defn}

\begin{rem}
In particular a tail is a bridge.
\end{rem}

\begin{defn}
The first Betti number of a finite connected graph is given by 
\begin{equation}
\beta:=E-V+1.\label{eq:Betti_number}
\end{equation}
A graph with $\beta=0$ is called a \emph{tree graph.} Throughout
this manuscript we will always use $\beta$ to denote the first Betti
number of a graph.
\end{defn}

\begin{rem}
The first Betti number should be thought of as the number of ``independent
cycles'' on the graph. Here is a brief explanation. The general definition
of the first Betti number for a topological space $X$ is the dimension
of its first Homology group $H_{1}\left(X,\Z\right)$. For a graph
$\Gamma$, with some choice of orientation for each edge, every closed
path $\gamma$ induce a formal sum $\gamma\mapsto\sum_{e\in\E}\gamma_{e}\cdot e$
where each $\gamma_{e}\in\Z$ is the number of times (with sign that
indicates direction) in which $\gamma$ passes through $e$. The space
of all such (formal sums of) closed paths is $H_{1}\left(\Gamma,\Z\right)$.
It can be shown to be a vector space. Therefore, $\beta:=\dim H_{1}\left(\Gamma,X\right)$
is the number of linearly independent elements in $H_{1}\left(\Gamma,\Z\right)$
that span $H_{1}\left(\Gamma,X\right)$. See chapter 4 in \cite{sunada2012topological}
for more details on homology groups on graphs.
\end{rem}

\begin{defn}
A \emph{metric graph} is a graph $\Gamma$ with edge lengths $\lv\in\R_{+}^{\E}$
such that every edge $e\in\E$ is given an edge length $l_{e}$. We
denote such a graph by $\Gamma_{\lv}$. We denote the total length
of $\Gamma_{\lv}$ by $L:=\sum_{j=1}^{E}l_{j}$.
\end{defn}

A common assumption in this paper is that the set of edge lengths
form a linear independent set over $\Q$.
\begin{defn}
\label{def: rationally indepdnet}A vector $\lv$ is called\emph{
rationally independent} if its entries are linearly independent over
$\Q$. That is, the only rational $\vec{q}\in\Q^{\E}$ that satisfies
$\sum_{e}q_{e}l_{e}=0$ is $\vec{q}=0$. 
\end{defn}

\begin{rem}
\label{rem: rationally indepdnet}We will later use the fact that the set
of rationally independent edge lengths is \emph{residual }in $\R_{+}^{\E}$\emph{.
}Where a residual set is a countable intersection of sets with dense
interior (equivalently, it is the complement of a countable union
of nowhere-dense sets). To show that the set of rationally independent
edge lengths is residual, notice that the set $\set{\lv\in\R_{+}^{\E}}{\lv\cdot\vec{q}\ne0}$
is open and dense in $\R_{+}^{\E}$ for any given $\vec{q}\in\Q^{\E}$.
The set we are after, $\cap_{\vec{q}\in\Q^{\E}}\set{\lv\in\R_{+}^{\E}}{\lv\cdot\vec{q}\ne0}$,
is therefore residual.
\end{rem}

\subsection{Standard quantum graphs \label{subsec:Spectral-theory-of-Quantum-Graphs}}

It is convenient to describe a function $f$ on a metric graph $\Gamma$
in terms of its restrictions to edges. If $v\in\V$ and $e\in\E_{v}$
is of length $l_{e}$, then we can define an arc-length coordinate
$x_{e}\in\left[0,l_{e}\right]$ such that $x_{e}=0$ at $v$. If $e$
is not a loop, then $x_{e}$ is the distance from $v$ along the edge.
The restriction of $f$ to $e$, given by $f|_{e}\left(x_{e}\right)$
is a function $f|_{e}:\left[0,l_{e}\right]\rightarrow\C$. The choice
of coordinates dictates a direction for each edge. We denote the edge
$e$ with opposite direction by $\hat{e}$ with arc-length coordinate
$x_{\hat{e}}=l_{e}-x_{e}$. We define the set of directed edges by
$\vec{\E}$ such that $\left|\vec{\E}\right|=2E$. The choice of orientation
does not effect functions, namely $f|_{e}\left(x_{e}\right)=f|_{\hat{e}}\left(l_{e}-x_{\hat{e}}\right)$.
But does effect (odd order) derivatives, $\frac{d}{dx_{e}}f|_{e}\left(x_{e}\right)=-\frac{d}{dx_{\hat{e}}}f|_{\hat{e}}\left(l_{e}-x_{\hat{e}}\right)$.
The $L^{2}$ space and $H^{2}$ (also known as $W^{2,2}$) Sobolev
space of $\Gamma_{\lv}$ are defined according to the restrictions
of the functions to edge: 
\begin{equation}
L^{2}\left(\Gamma_{\lv}\right):=\oplus_{e\in\E}L^{2}\left(\left[0,l_{e}\right]\right),\quad H^{2}\left(\Gamma_{\lv}\right):=\oplus_{e\in\E}H^{2}\left(\left[0,l_{e}\right]\right).
\end{equation}
The Laplace operator $\Delta:H^{2}\left(\Gamma_{\lv}\right)\rightarrow L^{2}\left(\Gamma_{\lv}\right)$
is defined edgewise by 
\[
\Delta\ :\ f|_{e}\mapsto-\frac{d^{2}}{dx_{e}^{2}}f|_{e}.
\]

If $f|_{e}$ is a solution to $\frac{d^{2}}{dx_{e}^{2}}f|_{e}=-k^{2}f|_{e}$
for some $k>0$, then it is determined by the initial values $f|_{e}\left(0\right)$
and $\frac{df|_{e}}{dx_{e}}\left(0\right)$. Such initial values are
assigned to every pair of vertex $v\in\V$ and edge $e\in\E_{v}$
by considering the arc-length parameterization with $x_{e}=0$ at
$v$. In \cite{BerLatSuk19}, \emph{$\tr\left(f\right)$} is defined
as the collection of these values. We will use this terminology:
\begin{defn}
Given a function $f\in H^{2}\left(\Gamma_{\lv}\right)$ and a pair
$\left(v,e\right)$ such that $v\in\V$ and $e\in\E_{v}$, we define
$\tr\left(f\right)$ at $\left(v,e\right)$ as a pair $f|_{e}\left(v\right),\partial_{e}f\left(v\right)$
of the value and outgoing derivative of $f|_{e}$ at $v$, which are
given by 
\begin{align*}
f|_{e}\left(v\right): & =f|_{e}\left(0\right)\\
\partial_{e}f\left(v\right): & =\frac{df|_{e}}{dx_{e}}\left(0\right).
\end{align*}
If $f$ is continuous at $v$, namely $f|_{e}\left(v\right)=f|_{e'}\left(v\right)\,\,\forall e,e'\in\E_{v}$,
we will write $f\left(v\right)$ instead of $f|_{e}\left(v\right)$.
\end{defn}

\begin{rem}
If $e\in\E_{v}$ is connecting $v$ to $u$, then $f|_{e}\left(l_{e}\right)=f|_{e}\left(u\right)$
and $-\partial_{e}f\left(u\right)=f|_{e}\left(0\right)$. If $e$
is a loop, we denote the two outgoing derivatives by $\partial_{e}f\left(v\right)$
and $\partial_{\hat{e}}f\left(v\right)$.
\end{rem}

The Laplacian is not self-adjoint on $H^{2}\left(\Gamma_{\lv}\right)$.
Using the $L^{2}$ inner product $\left\langle *,*\right\rangle $
and integration by parts, one can show that: 
\[
\left\langle \Delta f,g\right\rangle -\left\langle f,\Delta g\right\rangle =-\sum_{e\in\E}\left(\frac{df|_{e}}{dx_{e}}\overline{g}|_{e}-\frac{d\overline{g}|_{e}}{dx_{e}}f|_{e}\right)|_{0}^{l_{e}}=\sum_{v\in\V}\sum_{e\in\E_{v}}\partial_{e}f\left(v\right)\overline{g}|_{e}\left(v\right)-\partial_{e}\overline{g}\left(v\right)f|_{e}\left(v\right).
\]
Therefore the Laplacian is self-adjoint on domains of functions in
$H^{2}\left(\Gamma_{\lv}\right)$ for which the RHS of the above vanish.
A description of all vertex conditions for which the Laplacian is
self-adjoint can be found for example in \cite{BerKuc_graphs}, and
in \cite{BerLatSuk19} there is a description of the RHS above as
a simplectic form on $\tr\left(f\right)$, and the possible ``good''
domains as Lagrangian manifolds. Throughout this paper we only consider
the domain of functions that satisfy\emph{ Neumann vertex conditions}
for which the RHS above vanish and the Laplacian is self-adjoint. 
\begin{defn}
A function $f\in H^{2}\left(\Gamma_{\lv}\right)$ is said to satisfy
\emph{Neumann vertex conditions} if it satisfies the following condition
at every vertex $v\in\V$. The \emph{Neumann} (also known as Kirchhoff
or standard) condition of $f$ on $v$ is: 
\begin{enumerate}
\item $f$ is continuous at $v$, namely $f|_{e}\left(v\right)=f|_{e'}\left(v\right)\,\,\forall e,e'\in\E_{v}$.
\item The sum of outgoing derivatives vanish, namely $\sum_{e\in\E_{v}}\partial_{e}f\left(v\right)=0.$
\end{enumerate}
\end{defn}

\begin{rem}
First notice that indeed if $f$ and $g$ satisfy Neumann vertex
conditions, then $\sum_{v\in\V}\sum_{e\in\E_{v}}\partial_{e}f\left(v\right)\overline{g}|_{e}\left(v\right)-\partial_{e}\overline{g}\left(v\right)f|_{e}\left(v\right)=0$
and thus the Laplacian is self-adjoint. One may also observe that
if $\deg v=1$, namely it is a boundary vertex, then the Neumann condition
is simply $\partial_{e}f\left(v\right)=0$ which is the Neumann boundary
condition on a segment in one dimension. 
\end{rem}

\begin{rem}
\label{rem: deg 2} If $f\in H^{2}\left(\Gamma_{\lv}\right)$ and
$x\in\Gamma\setminus\V$ is an interior point, then both $f$ and
$f'$ are continuous at $x$. If we consider $x$ as a vertex of degree
two, then $f$ satisfies Neumann vertex condition at $x$. The inverse
argument is also true, that is if $v$ is of degree two and $f\in H^{2}\left(\Gamma_{\lv}\right)$
satisfies Neumann vertex condition at $v$, then we can consider $v$
as an interior point, and $f$ will remain in $H^{2}$. It follows
that the eigenfunctions and eigenvalues will not change by adding/removing
vertices of degree two with Neumann vertex conditions. 
\end{rem}

\begin{defn}
We define a \emph{standard quantum graph }as a finite connected metric
graph $\Gamma_{\lv}$ (assuming $E>1$ and $\deg v\ne2\,\,\forall v\in\V$)
equipped with the Laplace operator restricted to the domain of Neumann
vertex conditions. We abbreviate it to a \emph{standard graph }and
denote it by $\Gamma_{\lv}$ as well. The \emph{spectrum/eigenvalues/eigenfunctions
}of $\Gamma_{\lv}$ are referred to the spectrum/eigenvalues/eigenfunctions
of the Laplacian on the domain of Neumann vertex conditions.
\end{defn}

The spectrum of a standard graph $\Gamma_{\lv}$ is real, non-negative
and discrete. The eigenvalues are indexed according to their magnitude,
including multiplicity: 
\begin{align}
0 & =\lambda_{0}<\lambda_{1}\le\lambda_{2}\nearrow\infty,
\end{align}
and the corresponding eigenfunctions are indexed accordingly $\left\{ f_{n}\right\} _{n=0}^{\infty}$.
Where the lowest eigenvalue is always $\lambda_{0}=0$, it is \emph{simple}
(has multiplicity one) and corresponds to the constant eigenfunction
$f_{0}\equiv c$ \cite{BerKuc_graphs}. As the Laplacian has real
coefficients (as a differential operator) and the Neumann vertex conditions
are real, then every eigenfunction is real up to a global constant
\cite{BerKuc_graphs} and we may choose them to be real. The choice
of $\left\{ f_{n}\right\} _{n=0}^{\infty}$ is not unique if there
are non-simple eigenvalues, but unless stated otherwise every result
in this manuscript is independent of that choice.

A common convention that we will use is to denote the eigenvalues
by $\lambda_{n}=k_{n}^{2}$ for $k_{n}\ge0$ and its is common abuse
of notations to refer to $\left\{ k_{n}\right\} _{n=0}^{\infty}$
as the eigenvalues of $\Gamma_{\lv}$ as well.

As discussed above, it is convenient to describe a non-constant eigenfunction
$f$ of eigenvalue $k^{2}>0$ by its restriction $f|_{e}$. Every
restriction satisfies $f|_{e}''=-k^{2}f|_{e}$ and the space of functions
satisfying this ODE has two standard bases $\left\{ \cos\left(kx_{e}\right),\sin\left(kx_{e}\right)\right\} $
and $\left\{ e^{ikx_{e}},e^{-ikx_{e}}\right\} $ so $f|_{e}$ can
be described by a pair of parameters. For later use we introduce the
following such pairs.
\begin{defn}
\label{def: edge restriction notations}~Let $f$ be a real eigenfunction
of eigenvalue $k^{2}>0$. Let $v\in\V$, $e\in\E_{v}$ and consider
the arc-length parameterization $x_{e}\in\left[0,l_{e}\right]$ with
$x_{e}=0$ at $v$. 
\begin{enumerate}
\item \label{enu: complex amplitudes}We define the \emph{complex-amplitudes}
pair $a_{e},a_{\hat{e}}\in\C$ such that 
\[
f|_{e}\left(x_{e}\right)=a_{e}e^{-ikl_{e}}e^{ikx}+a_{\hat{e}}e^{-ikx}.
\]
The relation between $\tr\left(f\right)$ at $\left(v,e\right)$ and
the complex amplitudes can be expressed as 
\begin{align}
f\left(v\right)=a_{e}e^{-ikl_{e}}+a_{\hat{e}},\,\,\,\,\,\,\,\,\,\,\,\,\,\, & \frac{\partial_{e}f\left(v\right)}{k}=i\left(a_{e}e^{-ikl_{e}}-a_{\hat{e}}\right)\label{eq: v values using a}\\
a_{\hat{e}}=\frac{1}{2}\left(f\left(v\right)+i\frac{\partial_{e}f\left(v\right)}{k}\right),\,\,\,\,\,\,\,\,\,\,\,\,\,\, & a_{e}=\frac{1}{2}e^{ikl_{e}}\left(f\left(v\right)-i\frac{\partial_{e}f\left(v\right)}{k}\right).\label{eq: a using v values}
\end{align}
Notice that if $\deg v=1$ ($e$ is a tail), then $a_{\hat{e}}=a_{e}e^{-ikl_{e}}.$
\\
We define the \emph{amplitudes vector }of $f$, $\boldsymbol{a}\in\C^{\vec{\E}}$,
as the tuple of complex-amplitudes pairs for all edges. 
\item \label{enu: Acos(x+phi)}We define the \emph{amplitude-phase} pair
$A_{e}\in\R\,,\varphi_{e}\in\clop{0,\pi}$ such that 
\[
f|_{e}\left(x_{e}\right)=A_{e}\cos\left(kx_{e}-\varphi_{e}\right),
\]
with $f\left(v\right)=A_{e}\cos\left(\varphi_{e}\right)$ and $\frac{\partial_{e}f\left(v\right)}{k}=A_{e}\sin\left(\varphi_{e}\right)$.
If $\deg v=1$ ($e$ is a tail), then $\varphi_{e}=0$. 
\item \label{enu: Acos+Bsin}If $e$ is a loop, the \emph{mid-edge} pair
$\boldsymbol{A}_{e},\boldsymbol{B}_{e}\in\R$ is sometimes more convenient.
Consider a different arc-length parameterization $x_{e}\in\left[-\frac{l_{e}}{2},\frac{l_{e}}{2}\right]$
such that both $x_{e}=\pm\frac{l_{e}}{2}$ correspond to $v$. Then
the pair $\boldsymbol{A}_{e},\boldsymbol{B}_{e}$ is such that
\[
f|_{e}\left(x_{e}\right)=\boldsymbol{A}_{e}\cos\left(kx_{e}\right)+\boldsymbol{B}_{e}\sin\left(kx_{e}\right).
\]
\end{enumerate}
\end{defn}

\begin{rem}
Throughout this manuscript, \textbf{unless stated otherwise, we will
use the complex-amplitudes notation}. The other notations will be
useful for various proofs and so we bring them here.
\end{rem}

\begin{lem}
\label{lem: |ae|=00003D|ae'|} Let $f$ be a real eigenfunction, then
each of its complex-amplitudes pair $a_{e},a_{\hat{e}}$ satisfy $\left|a_{e}\right|=\left|a_{\hat{e}}\right|$.
The relation between the amplitude-phase pair $A_{e},\varphi_{e}$
to the complex-amplitudes pair $a_{e},a_{\hat{e}}$ is given by 
\begin{equation}
A_{e}e^{i\varphi}=2a_{\hat{e}},\,\,A_{e}e^{-i\varphi}=2a_{e}e^{-ikl_{e}}.\label{eq: amp-phase to cpx amp}
\end{equation}
In particular, 
\begin{align}
2\left(\left|a_{e}\right|^{2}+\left|a_{\hat{e}}\right|^{2}\right) & =f\left(v\right)^{2}+\frac{\partial_{e}f\left(v\right)^{2}}{k^{2}}=A_{e}^{2},\,and\label{eq: amp square}\\
if\,\,\left|a_{\hat{e}}\right|\ne0,\,\, & e^{2i\varphi}=\frac{a_{\hat{e}}}{a_{e}}e^{ikl_{e}}=\frac{f\left(v\right)+i\frac{\partial_{e}f\left(v\right)}{k}}{f\left(v\right)-i\frac{\partial_{e}f\left(v\right)}{k}}.\label{eq: amp to phase}
\end{align}
\end{lem}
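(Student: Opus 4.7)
The plan is a direct computation. First I would use the reality of $f$ to obtain the symmetry $|a_e| = |a_{\hat e}|$, and then match the two parameterizations by reading off the vertex trace at $x_e = 0$ from both.

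Substituting the complex-amplitude expansion $f|_e(x_e) = a_e e^{-ikl_e} e^{ikx_e} + a_{\hat e} e^{-ikx_e}$ into the reality condition $f|_e = \overline{f|_e}$ on $[0, l_e]$, and using the linear independence of $\{e^{ikx_e}, e^{-ikx_e}\}$ (valid since $k > 0$), matching coefficients yields $a_{\hat e} = \overline{a_e}\, e^{ikl_e}$. Taking absolute values gives $|a_e| = |a_{\hat e}|$ immediately.

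For \eqref{eq: amp-phase to cpx amp}, I would evaluate the amplitude-phase form $f|_e(x_e) = A_e \cos(kx_e - \varphi_e)$ at $x_e = 0$, obtaining $f(v) = A_e \cos\varphi_e$ and $\partial_e f(v)/k = A_e \sin\varphi_e$; hence $f(v) + i\partial_e f(v)/k = A_e e^{i\varphi_e}$ and $f(v) - i\partial_e f(v)/k = A_e e^{-i\varphi_e}$. Comparing these with the trace-to-amplitude formulas \eqref{eq: a using v values} then yields $A_e e^{i\varphi_e} = 2a_{\hat e}$ and $A_e e^{-i\varphi_e} = 2 a_e e^{-ikl_e}$.

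The remaining identities follow routinely. For \eqref{eq: amp square}, taking modulus in $A_e e^{i\varphi_e} = 2a_{\hat e}$ combined with $|a_e| = |a_{\hat e}|$ gives $A_e^2 = 4|a_{\hat e}|^2 = 2(|a_e|^2 + |a_{\hat e}|^2)$, while the middle expression arises from $|f(v) + i\partial_e f(v)/k|^2 = A_e^2$. For \eqref{eq: amp to phase}, dividing the two conversion formulas yields $(a_{\hat e}/a_e)\, e^{ikl_e} = e^{2i\varphi_e}$, and substituting back the vertex-trace expressions gives the final quotient. I do not anticipate any real obstacle here: the lemma is a bookkeeping statement relating two standard parameterizations of the edgewise ODE solution space. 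The only mild subtlety is the hypothesis $|a_{\hat e}| \neq 0$ in \eqref{eq: amp to phase}, which (equivalently $A_e \neq 0$, i.e., $f|_e \not\equiv 0$) ensures that both $a_{\hat e}/a_e$ and $e^{2i\varphi_e}$ are well-defined.
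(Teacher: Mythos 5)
Your proof is correct and follows essentially the same computation as the paper's: evaluate the two edge parameterizations at the vertex, use the trace-to-amplitude formulas \eqref{eq: a using v values}, and compare. The one cosmetic difference is that you establish $|a_e|=|a_{\hat e}|$ directly from $f=\overline{f}$ (in fact obtaining the stronger identity $a_{\hat e}=\overline{a_e}\,e^{ikl_e}$) before deriving \eqref{eq: amp-phase to cpx amp}, whereas the paper derives \eqref{eq: amp-phase to cpx amp} first and then reads off $|a_e|=|a_{\hat e}|$ from the reality of $A_e$ and $\varphi_e$; both orderings are equally valid and the rest of the argument is identical.
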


\begin{proof}
The equality $\left|a_{e}\right|=\left|a_{\hat{e}}\right|$ follows
from (\ref{eq: amp-phase to cpx amp}) which follows from (\ref{eq: a using v values})
together with $f\left(v\right)=A_{e}\cos\left(\varphi_{e}\right)$
and $\frac{\partial_{e}f\left(v\right)}{k}=A_{e}\sin\left(\varphi_{e}\right)$.
It is now immediate that 
\[
A_{e}^{2}=f\left(v\right)^{2}+\frac{\partial_{e}f\left(v\right)^{2}}{k^{2}}=\left|f\left(v\right)+i\frac{\partial_{e}f\left(v\right)}{k}\right|^{2}=4\left|a_{\hat{e}}\right|^{2}=2\left(\left|a_{e}\right|^{2}+\left|a_{\hat{e}}\right|^{2}\right),
\]
 and that if the above is non zero, then
\[
e^{i2\varphi}=\frac{A_{e}e^{i\varphi}}{A_{e}e^{-i\varphi}}=\frac{a_{\hat{e}}}{a_{e}}e^{ikl_{e}}=\frac{f\left(v\right)+i\frac{\partial_{e}f\left(v\right)}{k}}{f\left(v\right)-i\frac{\partial_{e}f\left(v\right)}{k}}.
\]
\end{proof}
The notion of \emph{nodal count} that will be later defined is discussed
for eigenfunctions that do not vanish on vertices. Similarly the
notion of \emph{Neumann count} that will be defined later requires
that the derivatives of the eigenfunctions on interior vertices do
not vanish. We therefore define the following:\emph{ }
\begin{defn}
\label{def: Properties I,II,III-1} Let $\Gamma_{\lv}$ be a standard
graph and let $f$ be an eigenfunction. We say that $f$ satisfies,
\begin{enumerate}
\item \label{enu:generic non vanishing on vetrtex-1}\emph{Property I }-
if $\forall v\in\V,\,\,f\left(v\right)\ne0.$
\item \label{enu: generic non vanishing on derivatives-1}\emph{Property
II }-if\emph{ }$\forall v\in\V_{in},\,\forall e\in\E_{v},\,\,\,\partial_{e}f\left(v\right)\ne0.$
\end{enumerate}
\end{defn}

In Section \ref{sec: genericity} we will show that generically (under
certain restrictions) eigenfunctions satisfy both properties above.
We therefore define the notion of \emph{generic eigenfunctions:}
\begin{defn}
\label{def:generic_eigenfunction} We call an eigenfunction $f$ \emph{generic
}if it has a simple eigenvalue and it satisfies both properties $I$
and $II$. Given a standard graph $\Gamma_{\lv}$ with eigenfunctions
$\left\{ f_{n}\right\} _{n=0}^{\infty}$, we define the integer subset:
\[
\G:=\set{n\in\N}{f_{n}\,\,\text{is generic}}.
\]
\end{defn}

\begin{rem}
\label{rem: differ on degernate e.s}The choice of a basis of eigenfunctions
$\left\{ f_{n}\right\} _{n=0}^{\infty}$ may change, but only on eigenspaces
of non-simple eigenvalues. As generic eigenfunctions have simple eigenvalues
then $\G$ is independent of such choice of a basis of eigenfunctions.
\end{rem}

The following lemma is immediate from the latter definition and Lemma
\ref{lem: |ae|=00003D|ae'|}: 
\begin{lem}
\label{lem: phase for generic}Let $f$ be a generic eigenfunction,
let $v\in\V,\,e\in\E_{v}$ and let $a_{e},a_{\hat{e}},A_{e},\varphi_{e}$
be the complex-amplitudes and amplitude-phase pairs. Then $a_{e},a_{\hat{e}},A_{e}\ne0$,
and if $v$ is not a boundary vertex, then $e^{i2\varphi}=\frac{a_{\hat{e}}}{a_{e}}e^{ikl_{e}}\notin\R$. 
\end{lem}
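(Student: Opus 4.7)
The plan is a direct book-keeping argument: everything reduces to applying the identities already recorded in Lemma \ref{lem: |ae|=00003D|ae'|} to the two defining non-vanishing conditions (Properties I and II) of a generic eigenfunction. There is no serious obstacle here; the lemma is essentially a translation between the analytic non-vanishing conditions and the corresponding non-degeneracy of the complex-amplitude / amplitude-phase parametrization.

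First I would dispose of $A_e \ne 0$. By Lemma \ref{lem: |ae|=00003D|ae'|} one has $A_e^2 = f(v)^2 + \partial_e f(v)^2/k^2$. Property I gives $f(v) \ne 0$, so $A_e^2 \ge f(v)^2 > 0$ and therefore $A_e \ne 0$. Next, from the same lemma, $2(|a_e|^2 + |a_{\hat e}|^2) = A_e^2 > 0$ together with $|a_e| = |a_{\hat e}|$ forces $|a_e| = |a_{\hat e}| > 0$, so $a_e, a_{\hat e} \ne 0$.

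For the last claim, suppose $v \in \V_{in}$, so that Property II also gives $\partial_e f(v) \ne 0$. Since $|a_{\hat e}| \ne 0$ we may apply the final formula of Lemma \ref{lem: |ae|=00003D|ae'|},
\[
e^{i2\varphi} \;=\; \frac{a_{\hat e}}{a_e}e^{ik l_e} \;=\; \frac{f(v) + i\,\partial_e f(v)/k}{f(v) - i\,\partial_e f(v)/k}.
\]
Set $z := f(v) + i\,\partial_e f(v)/k \in \C$. The right-hand side equals $z/\bar z = z^2/|z|^2$, which is real if and only if $z^2 \in \R$, i.e.\ if and only if $z \in \R \cup i\R$. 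But $z$ has nonzero real part $f(v)$ and nonzero imaginary part $\partial_e f(v)/k$, so $z \notin \R \cup i\R$, and thus $e^{i2\varphi} \notin \R$. This completes the proof.
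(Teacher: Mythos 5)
Your proof is correct and follows exactly the route the paper intends: the paper declares the lemma ``immediate from the latter definition and Lemma \ref{lem: |ae|=00003D|ae'|}'' and gives no further details, and your argument is precisely the verification of that claim — using $A_e^2 = f(v)^2 + \partial_e f(v)^2/k^2$ for the non-vanishing of $A_e$, $a_e$, $a_{\hat e}$ from Property I, and the $z/\bar z$ identity plus Property II to rule out a real value of $e^{i2\varphi}$.
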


\subsection{Nodal and Neumann count}

A partition of a metric graph $\Gamma_{\lv}$ at a set of interior
points $\left\{ x_{j}\right\} _{j=1}^{n}\in\Gamma_{\lv}\setminus\V$
is the procedure of cutting the graph at these points, replacing each
point $x_{j}$ with two distinct vertices of degree one $x_{j}^{+},x_{j}^{-}$.
The resulting partitioned graph may not be connected and we will refer
to its connected components as components of the partition\emph{.} 
\begin{defn}
\label{def: Neumann_points_Nodal_points_etc} Let $f\in H^{2}\left(\Gamma_{\lv}\right)$.
An interior point $x\in\Gamma_{\lv}\setminus\V$ is called a \emph{nodal
point} of $f$ if $f\left(x\right)=0$ and is called a \emph{Neumann
point }of $f$ if $f'\left(x\right)=0$. If an eigenfunction $f$
is generic, then it has a finite number of nodal and Neumann points. 

The partition of $\Gamma_{\lv}$ according to the nodal points is
the \emph{nodal partition }and its connected components are the \emph{nodal
domains}. We define the \emph{nodal count }of $f$ as the number of
nodal points, 
\begin{equation}
\phi(f):=\left|\left\{ x\in\Gamma_{\lv}\setminus\V\,|\,f\left(x\right)=0\right\} \right|.
\end{equation}
We abuse notation and define the nodal count sequence of a standard
graph $\Gamma_{\lv}$, $\phi:\G\rightarrow\N$, by $\phi\left(n\right):=\phi\left(f_{n}\right)$
for any generic eigenfunction $f_{n}$. 

Similarly, the \emph{Neumann} \emph{partition} is the partition of
$\Gamma_{\lv}$ according to the Neumann points and its connected
components are the \emph{Neumann domains} (see Figure \ref{fig:Neumann domain example}).
We define the \emph{Neumann count }of $f$ as the number of Neumann
points, 
\begin{equation}
\mu(f):=\left|\left\{ x\in\Gamma_{\lv}\setminus\V\,|\,f'\left(x\right)=0\right\} \right|,
\end{equation}
and define the Neumann count sequence of a standard graph $\Gamma_{\lv}$,
~$\mu:\G\rightarrow\N$, by $\mu\left(n\right):=\mu\left(f_{n}\right)$
for any generic eigenfunction $f_{n}$. 
\end{defn}

\begin{rem}
The nodal count is usually defined for eigenfunctions of simple eigenvalue
that satisfy property $I$ (see Definition \ref{def: Properties I,II,III-1}).
This is to avoid the ambiguity of how to count a nodal point on a
vertex, and to make sure that it is independent of choice of basis
of eigenfunctions. By the same reasoning the Neumann count should
be defined for eigenfunctions of simple eigenvalues that satisfy property
$II$ (see Definition \ref{def: Properties I,II,III-1}). We restrict
the discussion to generic eigenfunctions in order to have both nodal
and Neumann count sequences defined on the same set of eigenfunctions,
and as we prove in Section \ref{sec: genericity}, generically, the
set of eigenfunctions excluded by this restriction is neglectable. 
\end{rem}

\begin{figure}[h]
\centering{}\includegraphics[width=0.4\textwidth]{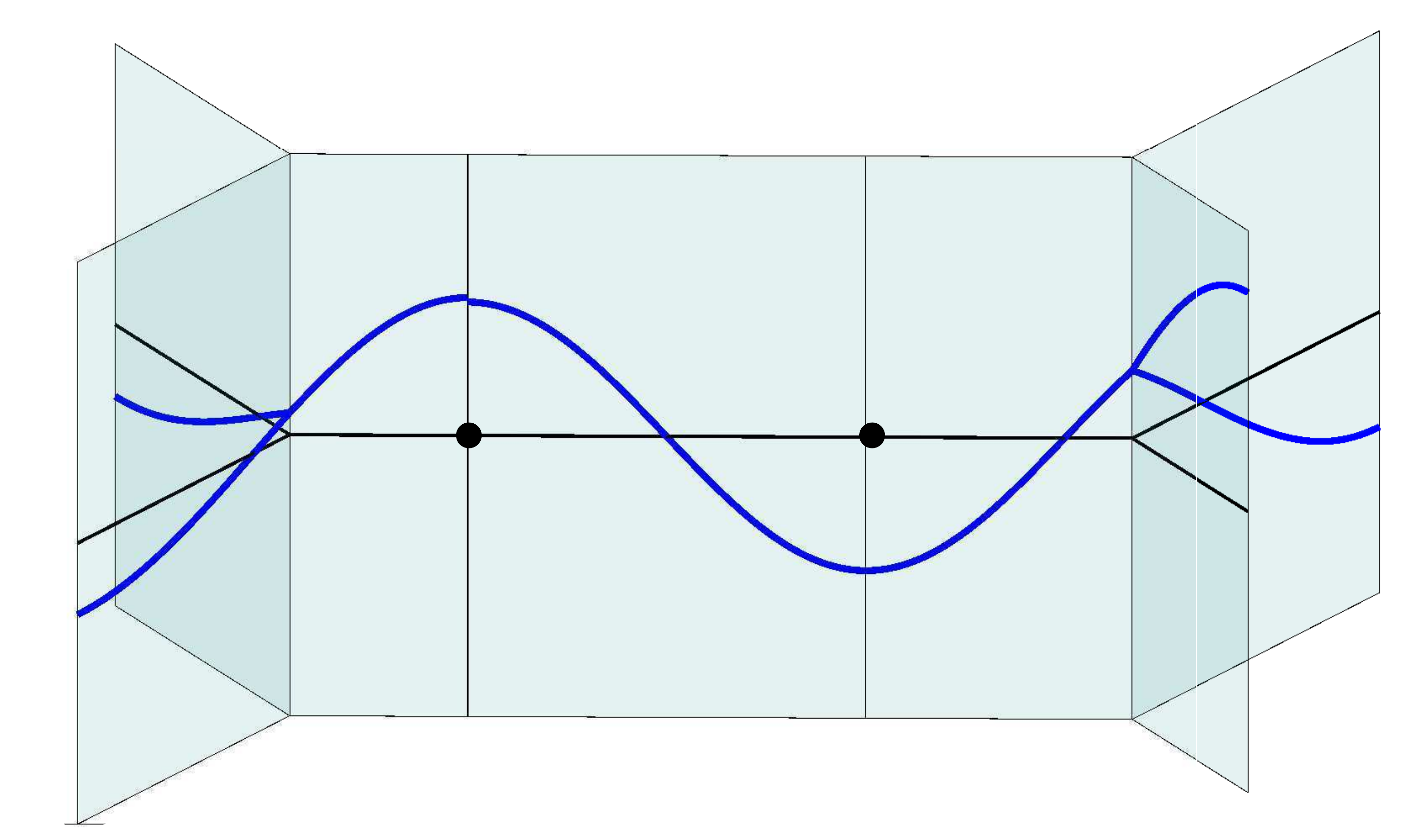}~~~\includegraphics[width=0.4\textwidth]{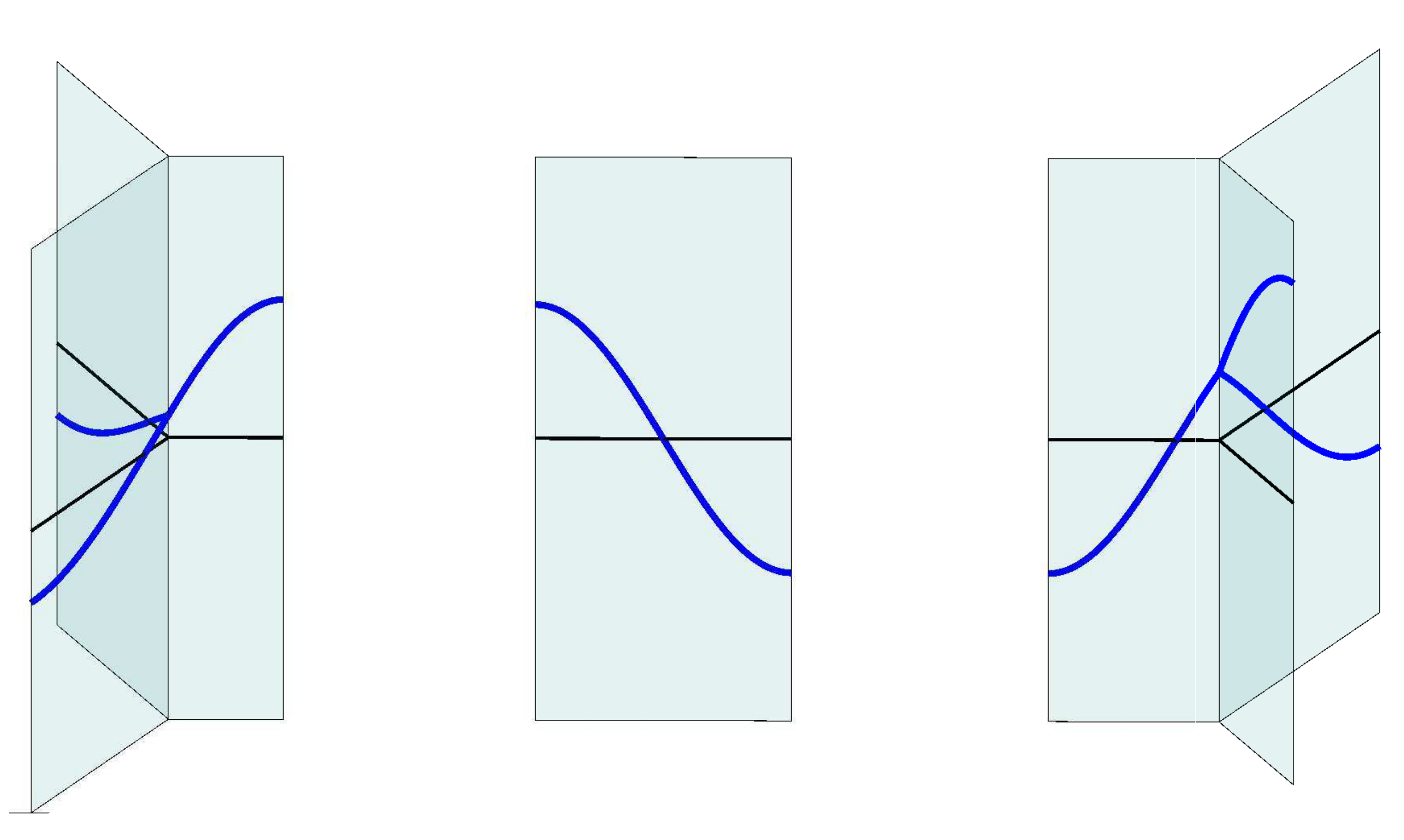}
\caption[Neumann partition]{On the left, a graph of an eigenfunction $f$ over a tree graph $\Gamma_{\protect\lv}$,
$\protect\set{\left(x,y,f\left(x,y\right)\right)}{\left(x,y\right)\in\Gamma_{\protect\lv}}$.
$f$ has two marked Neumann points. On the right, the Neumann partition
of $\Gamma_{\protect\lv}$ according to $f$.}
\label{fig:Neumann domain example}
\end{figure}

\begin{rem}
A nodal\textbackslash Neumann domain is a connected metric graph
on its own. We will consider Neumann domains as standard quantum graphs.
\end{rem}

\newpage{}
\section{\label{sec: Neumann-count-bounds} Neumann count bounds}

In this short section we prove uniform bounds on the Neumann count,
similarly to the bounds obtained for the nodal count. The nodal count
bounds are given by:
\begin{thm}
\cite{GnuSmiWeb_wrm04,Ber_cmp08,BanBerSmi_ahp12} \label{thm: nodal count bounds}Let
$\Gamma_{\lv}$ be a standard graph and assume that $f_{n}$ is generic,
then its nodal count is bounded by 
\begin{equation}
0\le\phi\left(f_{n}\right)-n\le\beta,\label{eq: nodal surplus bound}
\end{equation}
where $\beta$ is the first Betti number of $\Gamma$ (\ref{eq:Betti_number})\footnote{To avoid confusion when comparing to those works, recall that we start
indexing the eigenvalues from $n=0$.}.
\end{thm}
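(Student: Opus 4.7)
The plan is to split the proof into the two bounds, since the upper and lower halves rely on rather different ingredients.

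For the upper bound $\phi(f_n) \le n+\beta$, I would first prove a Courant-type inequality for the number of nodal domains $\nu(f_n)$, namely $\nu(f_n) \le n+1$ (using the paper's convention that indexing begins at $n=0$). The standard argument transfers cleanly: if $\nu(f_n) \ge n+2$, then a non-trivial linear combination of the restrictions of $f_n$ to its nodal domains, extended by zero, lies in $H^1(\Gamma_{\lv})$, is orthogonal to $f_0,\dots,f_{n-1}$, and has Rayleigh quotient $\lambda_n$; genericity (no vanishing at vertices, simple eigenvalue) is exactly what makes the construction legitimate, and the resulting eigenfunction would vanish identically on a full nodal domain, contradicting unique continuation on that subgraph. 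Next I would convert $\nu(f_n) \le n+1$ into a bound on $\phi(f_n)$ by a topological count. Cutting $\Gamma$ at the $\phi(f_n)$ nodal points produces a metric graph with $E+\phi(f_n)$ edges, $V+2\phi(f_n)$ vertices, and $\nu(f_n)$ connected components, whose first Betti number is
\[
\beta' = (E + \phi(f_n)) - (V + 2\phi(f_n)) + \nu(f_n) = \beta - 1 + \nu(f_n) - \phi(f_n) \ge 0.
\]
Rearranging gives $\phi(f_n) \le \nu(f_n) + \beta - 1 \le n + \beta$, as required.

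The lower bound $\phi(f_n) \ge n$ is the substantive direction and where I expect the main obstacle to sit: unlike on an interval, there is no monotone Prüfer-type oscillation theory on a quantum graph, and the presence of cycles allows an eigenfunction to carry strictly fewer zeros than its spectral index would suggest in one dimension. The cleanest plan is to defer to the nodal--magnetic characterization developed later in the thesis (Section~\ref{sec: Magnetic}), which identifies $\phi(f_n) - n$ with the Morse index of $\lambda_n$ viewed as a function of the $\beta$ magnetic flux parameters, at the zero-flux point; being a Morse index, it is automatically a non-negative integer. An alternative route I would keep in reserve is a direct deformation argument: continuously shrink $\beta$ independent cycles of $\Gamma$ to zero length, arriving at a tree where a Sturm-type oscillation theorem gives $\phi(f_n) = n$ exactly, and then track the behaviour of $\phi(f_n)$ along the deformation, using genericity to avoid eigenvalue collisions and controlling the jumps at each shrinking event edge-by-edge.

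Combining the two inequalities yields $0 \le \phi(f_n) - n \le \beta$ as claimed. The upper bound reduces, modulo the essentially classical Courant argument, to an Euler-style computation on the cut graph, whereas the lower bound is the non-trivial ingredient and the natural place where the magnetic machinery to be introduced later enters the story.
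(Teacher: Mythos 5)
The paper states this theorem as a cited result from \cite{GnuSmiWeb_wrm04,Ber_cmp08,BanBerSmi_ahp12} and supplies no proof of its own, so there is no internal argument to match your proposal against; the relevant comparison is with the cited external proofs and with the machinery the thesis develops later.

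Your proof is correct. The upper bound is exactly the argument from \cite{GnuSmiWeb_wrm04}: a Courant-type bound $\nu(f_n)\le n+1$ (in the paper's $0$-based indexing, obtained by the usual Rayleigh-quotient variational argument applied to sign-definite restrictions of $f_n$, with genericity supplying simplicity and non-vanishing on vertices so the restrictions are legitimate $H^1$ test functions and the dimension count closes), followed by the Euler-characteristic identity on the cut graph. Your computation of the cut graph's Betti number $\beta' = \beta - 1 + \nu(f_n) - \phi(f_n)\ge 0$ is right, and rearranging yields $\phi(f_n)\le n+\beta$. For the lower bound you defer to the nodal--magnetic theorem of Section~\ref{sec: Magnetic}, which identifies $\sigma(f_n)=\phi(f_n)-n$ with the Morse index of $k(\av)$ at $\av=0$; this is valid and non-circular, since the proof of that theorem in \cite{BerWey_ptrsa14} rests on perturbation theory and does not presuppose the nodal bounds. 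Two remarks on the choice of route. First, the magnetic theorem gives \emph{both} bounds at once: it asserts $\det(\hess_{\av}k)\ne 0$, so the Morse index of the non-degenerate $\beta\times\beta$ Hessian automatically lies in $\{0,1,\dots,\beta\}$, which is the full statement. You could therefore drop the Courant argument entirely and obtain a cleaner one-line proof, at the cost of leaning more heavily on Section~\ref{sec: Magnetic}. Second, the original lower bound in \cite{Ber_cmp08}, which is what the paper actually cites, predates the magnetic characterization and uses a different cut-and-interlace argument generalizing the tree case of \cite{Schapotschnikow06,PokPryObe_mz96}; your reserve deformation plan gestures at a similar idea but is too vague as written, in particular ``controlling the jumps at each shrinking event'' is precisely where the known difficulty lives (eigenvalue degeneracies and nodal points colliding with vertices during the deformation), and you would need to invoke genericity in the edge lengths carefully at each step. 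Since you present that route only as a backup, it does not weaken the proposal.
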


\begin{defn}
\label{def: Surpluses} Let $\Gamma_{\lv}$ be a standard graph and
assume that $f_{n}$ is generic. The \emph{nodal surplus }of $f_{n}$
is define by\emph{ 
\begin{equation}
\sigma\left(f_{n}\right):=\phi\left(f_{n}\right)-n,\label{eq: nodal surplus}
\end{equation}
}and the nodal surplus sequence\footnote{We describe the sequence as a function on its index set to emphasize it is not defined for every $n$.} $\sigma:\G\rightarrow\left\{ 0,1,...\beta\right\} $
is given by $\sigma\left(n\right):=\sigma\left(f_{n}\right)$. We
define the \emph{Neumann surplus} of $f_{n}$ in the same way, denoting\emph{
\begin{equation}
\omega\left(f_{n}\right):=\mu\left(f_{n}\right)-n,\label{eq: Neumann surplus}
\end{equation}
}with the Neumann surplus sequence $\omega:\G\rightarrow\Z$ given
by $\omega\left(n\right):=\omega\left(f_{n}\right)$. 
\end{defn}

\begin{rem}
We call it Neumann surplus as an analog of the nodal surplus, however
one should be aware that we do not suggest it is a non-negative
quantity (unlike the nodal surplus). We do allow a 'negative surplus'
. For example, the Neumann surplus is always negative for trees, as
follows from the next theorem.
\end{rem}

\begin{thm}
\cite{AloBan19}\label{thm:Neumann_surplus_main-1} Let $\Gamma_{\lv}$
be a standard graph whose first Betti number is $\beta$ and whose
boundary size is $\left|\d\Gamma\right|$. Then the Neumann surplus
sequence is uniformly bounded by 
\begin{equation}
\forall n\in\G\,\,\,\,\,\,\,\,1-\beta-\left|\d\Gamma\right|\leq\omega(n)\leq2\beta-1.\label{eq:Neumann_Surplus_bounds-1}
\end{equation}
\end{thm}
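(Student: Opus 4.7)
The plan is to cut $\Gamma_{\lv}$ at the Neumann points of $f_n$ and combine an Euler-characteristic identity for the resulting partition with the nodal bounds of Theorem~\ref{thm: nodal count bounds} applied on each Neumann domain. Write $\Omega_{1},\ldots,\Omega_{D}$ for the Neumann domains. Each $\Omega_j$ is itself a standard graph: its interior vertices are interior vertices of $\Gamma_{\lv}$, while its degree-one vertices are either original boundary vertices of $\Gamma_{\lv}$ (where $f_n$ satisfies the Neumann condition) or newly created cut vertices (where $f_n'=0$ by construction). Hence $f_n|_{\Omega_j}$ satisfies Neumann vertex conditions throughout $\Omega_j$. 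Let $N(\Omega_j)$ denote its spectral position; since $f_n|_{\Omega_j}$ is a non-constant eigenfunction of eigenvalue $k_n^2>0$, one has $N(\Omega_j)\geq 1$.

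A direct Euler count on the cut graph (which has $E+\mu(n)$ edges, $V+2\mu(n)$ vertices, and $D$ components) gives the key identity
\begin{equation*}
\mu(n)=\beta-1+D-\sum_{j=1}^{D}\beta(\Omega_j). \qquad (\ast)
\end{equation*}
Moreover nodal points are disjoint from Neumann points (a common zero of $f_n$ and $f_n'$ on an edge would force $f_n\equiv 0$ there), so the cutting preserves the nodal set and $\phi(f_n)=\sum_{j}\phi(f_n|_{\Omega_j})$.

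For the upper bound, apply the lower nodal inequality $N(\Omega_j)\leq\phi(f_n|_{\Omega_j})$ from Theorem~\ref{thm: nodal count bounds} on each $\Omega_j$. Summing and using $N(\Omega_j)\geq 1$ together with the global nodal bound gives
\begin{equation*}
D\leq\sum_j N(\Omega_j)\leq\sum_j\phi(f_n|_{\Omega_j})=\phi(f_n)\leq n+\beta.
\end{equation*}
Inserting this into $(\ast)$ and discarding the non-negative term $\sum_j\beta(\Omega_j)$ yields $\mu(n)\leq\beta-1+(n+\beta)=n+2\beta-1$, i.e.\ $\omega(n)\leq 2\beta-1$.

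For the lower bound, use instead the upper nodal inequality $\phi(f_n|_{\Omega_j})\leq N(\Omega_j)+\beta(\Omega_j)$ on each $\Omega_j$; summing and applying $\phi(f_n)\geq n$ yields $\sum_j N(\Omega_j)\geq n-\sum_j\beta(\Omega_j)$. To turn this into a lower bound on $\mu(n)$ via $(\ast)$ one needs an \emph{upper} bound on $\sum_j N(\Omega_j)$, and this is the main obstacle. The crucial structural input is that $f_n|_{\Omega_j}$ is monotonic on every edge of $\Omega_j$ (all interior critical points have been removed by the cutting), which forces $k_n l_e<\pi$ on every such edge and bounds the wavelength capacity $\rho(\Omega_j)$. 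The work consists in converting this ``low wavelength capacity'' into a combinatorial bound of the form $N(\Omega_j)\leq\beta(\Omega_j)+c(\Omega_j)$, where $c(\Omega_j)$ is controlled by the boundary vertices of $\Omega_j$; tracking how the $|\partial\Gamma|$ leaves of $\Gamma_{\lv}$ distribute among the Neumann domains and substituting into $(\ast)$ is then designed to produce $\mu(n)\geq n+1-\beta-|\partial\Gamma|$. The detailed spectral analysis of individual Neumann domains needed for this step is carried out in Section~\ref{sec: Properties-of-a Neumann domain}.
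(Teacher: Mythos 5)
Your Euler identity $(\ast)$ is correct, and the upper bound $\omega(n)\le 2\beta-1$ does follow from your chain of inequalities, \emph{provided} the nodal bound $N(\Omega_j)\le\phi(f_n|_{\Omega_j})$ can be applied on each piece. That step is not automatic: Theorem~\ref{thm: nodal count bounds} requires the eigenfunction to be generic, and although $f_n|_{\Omega_j}$ inherits properties I and II from $f_n$ (as in Lemma~\ref{lem: local properties of Neumann domain}), genericity additionally requires that $k_n^2$ be a \emph{simple} eigenvalue of $\Omega_j$, which you do not justify and which need not hold when $\Omega_j$ is not a tree. So even the upper half of the argument has a real hole.

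The lower bound is where the proposal fails outright, and you acknowledge this yourself. After deriving $\sum_j N(\Omega_j)\ge n-\sum_j\beta(\Omega_j)$ you still need to relate $D-\sum_j\beta(\Omega_j)$ to $n$ from below, which requires an \emph{upper} bound on the spectral positions $N(\Omega_j)$; the only such bound the paper provides is $N(\Omega^{(v)})\le\deg v-1$ in Proposition~\ref{prop:local_observables_bounds}, and that is established only for \emph{star} Neumann domains, which is guaranteed only once $k>\pi/L_{\min}$ (Lemma~\ref{lem: local properties of Neumann domain}). Theorem~\ref{thm:Neumann_surplus_main-1} however is claimed for all $n\in\G$, including the finitely many small ones whose Neumann domains may be more complicated. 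The gesture toward ``wavelength capacity'' in Section~\ref{sec: Properties-of-a Neumann domain} does not close this gap; that analysis is downstream of exactly the same star-domain restriction.

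The paper's own proof avoids both difficulties by taking a different route entirely. It does not decompose into Neumann domains at all. Instead, Lemma~\ref{lem: nodal minus Neumann count} computes the difference $\phi(f)-\mu(f)$ edge by edge, using the interlacing of nodal and Neumann points of $A_e\cos(kx_e-\varphi_e)$ to reduce the count to the signs $\mathrm{sign}\bigl(f(v)\partial_e f(v)\bigr)$ at the two endpoints of each edge. Summing and regrouping by vertex gives
\[
\phi(f)-\mu(f)=\frac{|\partial\Gamma|}{2}-\frac{1}{2}\sum_{v\in\V_{in}}\sum_{e\in\E_v}\mathrm{sign}\bigl(f(v)\partial_e f(v)\bigr).
\]
The Neumann vertex condition $\sum_{e\in\E_v}\partial_e f(v)=0$ then forces at least one positive and one negative summand at every interior vertex, so $\bigl|\sum_{e\in\E_v}\mathrm{sign}(\cdot)\bigr|\le\deg v-2$; summing this over $\V_{in}$ gives $1-\beta\le\sigma(n)-\omega(n)\le\beta-1+|\partial\Gamma|$, and substituting the nodal surplus bounds $0\le\sigma(n)\le\beta$ yields the claim. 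That argument is elementary, holds for every $n\in\G$ without any high-eigenvalue restriction, and does not invoke nodal bounds on auxiliary graphs or spectral positions of Neumann domains. You should pursue this direct sign-counting route rather than the Euler/decomposition route.
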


This theorem follows from the next lemma that will be useful for other
proofs as well.
\begin{lem}
\label{lem: nodal minus Neumann count} Let $f$ be a real generic
eigenfunction of $\Gamma_{\lv}$ with eigenvalue $k>0$, and let $\phi\left(f\right)$
and $\mu\left(f\right)$ be its nodal and Neumann counts. Then the
difference $\phi\left(f\right)-\mu\left(f\right)$ is given by: 

\begin{align}
\phi\left(f\right)-\mu\left(f\right) & =\frac{\left|\partial\Gamma\right|}{2}-\frac{1}{2}\sum_{v\in\V_{in}}\sum_{e\in\E_{v}}\mathrm{sign}\left(f\left(v\right)\partial_{e}f\left(v\right)\right).\label{eq:Diff-nodal-Neumann_by_vertices}
\end{align}
\end{lem}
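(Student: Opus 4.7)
My plan is to establish the identity one edge at a time. On each edge $e$ I would use the amplitude-phase representation $f|_e(x_e)=A_e\cos\theta(x_e)$ with $\theta(x_e):=kx_e-\varphi_e$ from Definition \ref{def: edge restriction notations}(\ref{enu: Acos(x+phi)}). The interior nodal points of $f$ on $e$ are the interior zeros of $\cos\theta$, the interior Neumann points are the interior zeros of $\sin\theta$, and these two sets strictly interlace, so their counts $n_e,m_e$ satisfy $|n_e-m_e|\le 1$ and depend only on the ``phase type'' at the two endpoints. A direct count with the auxiliary indicator $N(\theta):=\lfloor\theta/\pi+\tfrac12\rfloor-\lfloor\theta/\pi\rfloor\in\{0,1\}$ (which equals $1$ iff $\theta\bmod\pi\in(\tfrac\pi2,\pi)$) yields, for generic phases,
\[
n_e-m_e \;=\; N(\theta_1)-N(\theta_0) \;=\; \tfrac{1}{2}\bigl[\mathrm{sign}\sin(2\theta_0)-\mathrm{sign}\sin(2\theta_1)\bigr],
\]
where $\theta_0,\theta_1$ are the values of $\theta$ at the two ends $v,u$ of $e$ and the last equality uses $N(\theta)=(1-\mathrm{sign}\sin 2\theta)/2$.

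Next I would translate the endpoint phase signs into the vertex signs $s_{v,e}:=\mathrm{sign}\bigl(f(v)\partial_ef(v)\bigr)$. From $\sin(2\theta)=-2\,f|_e f'|_e/(kA_e^2)$, together with $\partial_ef(v)=f'|_e(0)$ and $\partial_ef(u)=-f'|_e(l_e)$, one reads off $\mathrm{sign}\sin(2\theta_0)=-s_{v,e}$ and $\mathrm{sign}\sin(2\theta_1)=+s_{u,e}$, so
\[
n_e-m_e \;=\; -\tfrac{1}{2}\bigl(s_{v,e}+s_{u,e}\bigr)
\]
whenever both endpoints are interior (including a loop at $v$, whose two half-edge ends play the roles of $v$ and $u$, giving $-\tfrac12(s_{v,e}+s_{v,\hat e})$). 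For a tail with $u\in\dg$, the Neumann boundary condition forces $\partial_ef(u)=0$ and hence $\sin(2\theta_1)=0$; but taking the one-sided limit along the edge shows $\mathrm{sign}\sin(2\theta_1^-)=-1$, since $\theta_1\in\pi\Z$ is a Neumann value ``absorbed'' at the boundary and the last zero of $\sin 2\theta$ crossed before $u$ was necessarily a nodal one. Each tail therefore contributes $n_e-m_e=\tfrac{1}{2}(1-s_{v,e})$ with $v$ its interior endpoint.

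Summing these local contributions over all edges of $\Gamma$, the constants $\tfrac12$ coming from the tails add up to $|\dg|/2$ (one tail per boundary vertex), while every $s_{w,e}$ with $w\in\V_{in}$ and $e\in\E_w$ is picked up exactly once, with loops at $w$ counted twice, consistent with the multiset convention on $\E_w$. This produces precisely the right-hand side of \eqref{eq:Diff-nodal-Neumann_by_vertices}. The main obstacle is the edge-local identity for $n_e-m_e$: once one sees that the strict interlacing of zeros of $\cos\theta$ and $\sin\theta$ reduces $n_e-m_e$ to a phase-type indicator at the endpoints, and the tail case is handled by a one-sided limit, the translation to signs of $f\cdot\partial_e f$ and the edge-to-vertex bookkeeping are routine.
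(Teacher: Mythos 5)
Your proof is correct and takes a genuinely different route from the paper's. Both proofs work edge by edge in the amplitude--phase parametrization and reduce everything to the per-edge difference $\phi(f|_e)-\mu(f|_e)$, but you compute that difference by an explicit floor-function count: writing $\theta(x)=kx-\varphi_e$, you observe that $\phi(f|_e)=\lfloor\theta_1/\pi-\tfrac12\rfloor-\lfloor\theta_0/\pi-\tfrac12\rfloor$ and $\mu(f|_e)=\lfloor\theta_1/\pi\rfloor-\lfloor\theta_0/\pi\rfloor$ for generic endpoint phases, so the difference telescopes to $N(\theta_1)-N(\theta_0)$ with $N(\theta)=(1-\mathrm{sign}\sin 2\theta)/2$, and then $\sin 2\theta=-2\,f|_e\,f'|_e/(kA_e^2)$ translates directly into $\mathrm{sign}(f(v)\partial_ef(v))$. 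The paper instead uses the interlacing of the nodal and Neumann points to show that the difference equals half the sum of the ``types'' of the first and last critical point, and then determines each type by a monotonicity argument on $g(x)=\tfrac{f(v)}{k}f|_e(x)$ (decreasing iff $g'(0)\le 0$). One virtue of your telescoping computation is that it treats the no-critical-point case automatically (the paper handles $N=0$ as a separate case), and the tail boundary case is absorbed by a one-sided limit $\theta_1\to\theta_1^-$ rather than a separate sign discussion. The paper's version is arguably more geometric and avoids floor arithmetic. Both reach the same per-edge formula (the paper's (\ref{eq:Diff-nodal-Neumann-on-edge})), and the edge-to-vertex resummation at the end is identical. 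One small stylistic caveat: you assert the two floor-count identities as ``a direct count'' without writing them out, and the justification that the one-sided limit gives the right count for tails (because $\theta_1\in\pi\Z$ is an excluded boundary value so it is not counted either by the exact interval or by the perturbed one) is only sketched; both are routine but would need a sentence each in a full write-up.
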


\begin{proof}
Let $e$ be an edge of length $l_{e}$ and vertices $v,u$ (not necessarily
distinct). Let $\phi\left(f|_{e}\right)$ be the number of nodal point
on $e$ and similarly $\mu\left(f|_{e}\right)$ for Neumann points.
Consider the amplitude-phase pair $A_{e},\varphi_{e}$ (see Definition
\ref{def: edge restriction notations}) such that $f|_{e}\left(x_{e}\right)=A_{e}\cos\left(kx_{e}-\varphi\right)$
and $f|_{e}'\left(x_{e}\right)=-kA_{e}\sin\left(kx_{e}-\varphi\right)$.
As $f$ is generic, then $A_{e}\ne0$ and so the nodal points and Neumann
points on $e$ interlace and their union is the finite set $\set{x_{e}\in\left(0,l_{e}\right)}{kx_{e}-\varphi\in\frac{\pi}{2}\Z}$.
Let $N=\phi\left(f|_{e}\right)+\mu\left(f|_{e}\right)$ and number
the points 
\[
\left\{ x_{j}\right\} _{j=1}^{N}:=\set{x_{e}\in\left(0,l_{e}\right)}{kx_{e}-\varphi\in\frac{\pi}{2}\Z},
\]
in increasing order according to the distance from $v$. Let $\delta\left(x_{j}\right)=\begin{cases}
1 & x_{j}\,\,is\,\,nodal\\
-1 & x_{j}\,\,is\,\,Neumann
\end{cases}$ so that $\phi\left(f|_{e}\right)-\mu\left(f|_{e}\right)=\sum_{j=1}^{N}\delta\left(x_{j}\right)$.
If $N=0$, then $\phi\left(f|_{e}\right)-\mu\left(f|_{e}\right)=0$.
If $N=1$, then 
\[
\phi\left(f|_{e}\right)-\mu\left(f|_{e}\right)=\delta\left(x_{1}\right)=\frac{\delta\left(x_{1}\right)+\delta\left(x_{N}\right)}{2}.
\]
If $N>1$, then 
\[
\phi\left(f|_{e}\right)-\mu\left(f|_{e}\right)=\frac{\delta\left(x_{1}\right)+\delta\left(x_{N}\right)}{2}+\sum_{j=1}^{N-1}\frac{\delta\left(x_{j}\right)+\delta\left(x_{j+1}\right)}{2}=\frac{\delta\left(x_{1}\right)+\delta\left(x_{N}\right)}{2}.
\]
Where the last equality follows from the interlacing. We deduce that:
\begin{align}
\phi\left(f|_{e}\right)-\mu\left(f|_{e}\right)=\begin{cases}
\frac{\delta\left(x_{1}\right)+\delta\left(x_{N}\right)}{2} & N>0\\
0 & N=0
\end{cases} & .\label{eq: nodal minum nodal according to endpoints}
\end{align}
Assume that $N>0$. Recall that $f\left(v\right)\ne0$ as $f$ is
generic, and let g$\left(x_{e}\right):=\frac{f\left(v\right)}{k}f|_{e}\left(x_{e}\right)$
so that it is positive and monotone in the interval $x_{e}\in\left(0,x_{1}\right)$.
If $x_{1}$ is a nodal point, then $g$ must be decreasing in the interval
and if $x_{2}$ is Neumann, then $g$ must be increasing in the interval.
Notice that $g$ is monotone and $g''\left(0\right)=\frac{f\left(v\right)}{k}f|_{e}''\left(0\right)=-k\left(f\left(v\right)\right)^{2}<0$
so $g$ is decreasing unless $g'\left(0\right)>0$. Since $g'\left(0\right)=f\left(v\right)\partial_{e}f\left(v\right)$,
we get 
\begin{equation}
\delta\left(x_{1}\right)=\begin{cases}
1 & f\left(v\right)\partial_{e}f\left(v\right)\le0\\
-1 & f\left(v\right)\partial_{e}f\left(v\right)>0
\end{cases}.
\end{equation}
The same argument relates $\delta\left(x_{N}\right)$ to $f\left(u\right)\partial_{e}f\left(u\right)$
such that 
\begin{equation}
\frac{\delta\left(x_{1}\right)+\delta\left(x_{N}\right)}{2}=\begin{cases}
\frac{1}{2} & f\left(v\right)\partial_{e}f\left(v\right)\le0\\
-\frac{1}{2} & f\left(v\right)\partial_{e}f\left(v\right)>0
\end{cases}+\begin{cases}
\frac{1}{2} & f\left(u\right)\partial_{e}f\left(u\right)\le0\\
-\frac{1}{2} & f\left(u\right)\partial_{e}f\left(u\right)>0
\end{cases}.\label{eq: sum deltas}
\end{equation}
If $N=0$, then $g$ is monotone on $\left(0,l_{e}\right)$, $g'\left(0\right)=f\left(v\right)\partial_{e}f\left(v\right)$
and $-g'\left(l_{e}\right)=f\left(u\right)\partial_{e}f\left(u\right)$.
At least one vertex of $e$ must be interior vertex, with out loss
of generality assume that $\deg u\ne1$. Then $f$ being generic implies
that $g'\left(l_{e}\right)\ne0$. Since $g''\left(0\right)<0$ and
$g'$ does not vanish on $\opcl{0,l_{e}}$, then either $g'\left(0\right)=0$
and $g'\left(l_{e}\right)<0$ or $g'\left(0\right)$ is of the same
sign as $g'\left(l_{e}\right)$. Namely, the RHS of (\ref{eq: sum deltas})
vanish. The genericity assumption gives $g'\left(0\right)=0\iff\deg v=1$,
and so the latter argument together with (\ref{eq: nodal minum nodal according to endpoints})
and (\ref{eq: sum deltas}) would give, both for $N\ne0$ and $N=0$,
that 
\begin{equation}
\phi\left(f|_{e}\right)-\mu\left(f|_{e}\right)=\begin{cases}
-\frac{\mathrm{sign}\left(f\left(v\right)\partial_{e}f\left(v\right)\right)+\mathrm{sign}\left(f\left(u\right)\partial_{e}f\left(u\right)\right)}{2} & \deg v,\deg u\ne1\\
\frac{1-\mathrm{sign}\left(f\left(u\right)\partial_{e}f\left(u\right)\right)}{2} & \deg v=1
\end{cases}.\label{eq:Diff-nodal-Neumann-on-edge}
\end{equation}
Summing up over all edges, and rearranging the sum to vertices and
adjacent edges, gives
\begin{align*}
\phi\left(f\right)-\mu\left(f\right) & =\frac{1}{2}\sum_{v\in\V}\sum_{e\in\E_{v}}\begin{cases}
1 & \deg v=1\\
-\mathrm{sign}\left(f\left(v\right)\partial_{e}f\left(v\right)\right) & \deg v\ne1
\end{cases}\\
= & \frac{\left|\partial\Gamma\right|}{2}-\frac{1}{2}\sum_{v\in\V_{in}}\sum_{e\in\E_{v}}\mathrm{sign}\left(f\left(v\right)\partial_{e}f\left(v\right)\right).
\end{align*}
\end{proof}
The proof of Theorem \ref{thm:Neumann_surplus_main-1} is almost immediate:
\begin{proof}
Let $f_{n}$ be a real generic eigenfunction. By definition $\phi(n)-\mu(n)=\sigma(n)-\omega(n)$
and according to Lemma \ref{lem: nodal minus Neumann count}, 
\begin{align*}
\left|\sigma(n)-\omega(n)-\frac{\left|\partial\Gamma\right|}{2}\right| & \le\left|\frac{1}{2}\sum_{v\in\V_{in}}\sum_{e\in\E_{v}}\mathrm{sign}\left(f_{n}\left(v\right)\partial_{e}f_{n}\left(v\right)\right)\right|\\
\le & \frac{1}{2}\sum_{v\in\V_{in}}\left|\sum_{e\in\E_{v}}\mathrm{sign}\left(f_{n}\left(v\right)\partial_{e}f_{n}\left(v\right)\right)\right|.
\end{align*}
Given an interior vertex $v\in\V_{in}$, every $f_{n}\left(v\right)\partial_{e}f_{n}\left(v\right)$
is real and non zero since $f$ is real and generic. The Neumann condition
on $v$ implies that $\sum_{e\in\E_{v}}f_{n}\left(v\right)\partial_{e}f_{n}\left(v\right)=0$,
so at least one summand is positive and one is negative and so 
\[
\left|\sum_{e\in\E_{v}}\mathrm{sign}\left(f_{n}\left(v\right)\partial_{e}f_{n}\left(v\right)\right)\right|\le\deg v-2,
\]
which means that 
\begin{align*}
\left|\sigma(n)-\omega(n)-\frac{\left|\partial\Gamma\right|}{2}\right| & \le\frac{1}{2}\sum_{v\in\V_{in}}\left(\deg v-2\right)\\
= & \frac{1}{2}\sum_{v\in\V}\left(\deg v-2\right)-\frac{1}{2}\sum_{v\in\partial\Gamma}\left(-1\right)\\
= & E-V+\frac{1}{2}\left|\partial\Gamma\right|=\beta-1+\frac{1}{2}\left|\partial\Gamma\right|.
\end{align*}
It follows that 
\begin{equation}
1-\beta\leq\sigma(n)-\omega(n)\leq\beta-1+\left|\partial\Gamma\right|.\label{eq:bounds_on_nodal_Neumann_diff}
\end{equation}
Substituting the nodal surplus bounds, $0\le\sigma(n)\le\beta$, gives
\[
1-\beta-\left|\d\Gamma\right|\leq\omega(n)=\sigma(n)-\left(\sigma(n)-\omega(n)\right)\leq\beta+\beta-1=2\beta-1.
\]
\end{proof}
In \cite{AloBan19} we provide examples of graphs for which the sequence
$\sigma\left(n\right)-\omega\left(n\right)$ achieves both its upper
an lower bounds of (\ref{eq:bounds_on_nodal_Neumann_diff}). Unlike
the difference $\sigma\left(n\right)-\omega\left(n\right)$, we conjecture
in \cite{AloBan19} that the Neumann surplus bounds in the case of
$\beta\ge2$ can be improved:
\begin{conjecture}
\cite{AloBan19} \label{conj: Neumann surplus bounds}The Neumann
surplus sequence of a standard graph $\Gamma_{\lv}$ with first Betti
number $\beta\ge2$ is bounded by 
\[
-1-\left|\partial\Gamma\right|\le\omega\left(n\right)\le\beta+1.
\]
\end{conjecture}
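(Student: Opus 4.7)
The plan is to derive the conjecture from the exact identity
\[
\omega(n)\;=\;\sigma(n)\,-\,\tfrac{1}{2}|\partial\Gamma|\,+\,\tfrac{1}{2}\sum_{v\in\V_{in}}s(v),\qquad s(v)\,:=\,\sum_{e\in\E_{v}}\mathrm{sign}\bigl(f_{n}(v)\,\partial_{e}f_{n}(v)\bigr),
\]
which is immediate from Lemma \ref{lem: nodal minus Neumann count}. The bound of Theorem \ref{thm:Neumann_surplus_main-1} is obtained by combining the \emph{independent} estimates $0\le\sigma(n)\le\beta$ and $|s(v)|\le\deg v-2$, the latter being forced by the Neumann condition $\sum_{e}\partial_{e}f_{n}(v)=0$ together with $f_{n}(v)\neq 0$. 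Since the conjectured improvement is exactly $\beta-2$ on each side and vanishes at $\beta=2$ (where the two sets of bounds already coincide), the task reduces to establishing a \emph{coupled} inequality of the form
\[
\sigma(n)\,+\,\tfrac{1}{2}\sum_{v\in\V_{in}}s(v)\;\le\;\beta+1+\tfrac{1}{2}|\partial\Gamma|
\]
together with its mirror image for the lower bound, rather than bounding the two summands separately.

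First I would characterize the local saturation of $|s(v)|=\deg v-2$: it occurs precisely when exactly one of the products $f_{n}(v)\,\partial_{e}f_{n}(v)$ has the minority sign, which, via the amplitude-phase pair of Definition \ref{def: edge restriction notations} item \ref{enu: Acos(x+phi)}, translates into a rigid constraint on the phases $\{\varphi_{e}\}_{e\in\E_{v}}$ at $v$. Next, I would invoke the nodal--magnetic relation of Berkolaiko and Weyand \cite{BerWey_ptrsa14}, to be recalled in Section \ref{sec: Magnetic}, which identifies $\sigma(n)$ with the Morse index of the eigenvalue $k_{n}^{2}$ under magnetic perturbations on any fixed cycle basis of $\Gamma$. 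The strategy is to show that each cycle on which the magnetic Hessian contributes a negative eigenvalue forces a compensating cancellation in $\sum_{v}s(v)$, so that each unit of $\sigma(n)$ costs at least one unit of $\tfrac{1}{2}\sum_{v}s(v)$ above the baseline afforded by the $|\partial\Gamma|/2$ term.

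As an intermediate tool I would use the parity constraint that the restriction of $f_{n}$ to any closed path in $\Gamma$ has an even number of nodal points (since it is continuous and returns to the same value), and similarly for Neumann points in the compatible orientation. Combined with the fixed cycle basis underlying the magnetic perturbation, this should feed into the coupled inequality above. Sanity checks would be performed on the binomial families of Section \ref{sec: Binomial}, on the $\beta=2$ regime where the conjecture already coincides with Theorem \ref{thm:Neumann_surplus_main-1}, and on the explicit $\beta=3$ examples collected in \cite{AloBan19}.

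The main obstacle will be to extract exactly the quantitative trade-off required. The cycle-parity constraint combined with the Neumann vertex condition only reproduces Theorem \ref{thm:Neumann_surplus_main-1}, so the proof cannot be purely combinatorial; it must exploit spectral information about how crossing a reference eigenvalue under a magnetic deformation forces a reorganization of the sign pattern at interior vertices. I anticipate that the cleanest route is to produce a magnetic--Morse characterization of $\omega(n)$ itself, analogous to that of $\sigma(n)$ in \cite{Ber_apde13,BerWey_ptrsa14}, which would deliver both bounds simultaneously without going through Lemma \ref{lem: nodal minus Neumann count}; failing that, a local eigenvalue-interlacing argument on an auxiliary graph obtained by inserting surgery points at the interior vertices should recover at least the upper bound.
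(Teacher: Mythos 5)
This statement is not a theorem in the paper; it is explicitly labeled a \emph{conjecture}, attributed to \cite{AloBan19}, and the surrounding text says ``we conjecture in \cite{AloBan19} that the Neumann surplus bounds in the case of $\beta\ge2$ can be improved.'' The paper contains no proof of it, only the weaker Theorem \ref{thm:Neumann_surplus_main-1}, obtained by bounding $\sigma(n)$ and $\sum_{v}s(v)$ independently; the conjecture asserts exactly the improvement by $\beta-2$ on each side that an independent treatment cannot reach. So there is nothing in the paper to compare your argument against, and you should flag that the target is an open question, not a result.

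Beyond that, your proposal is not a proof but a research plan, and you yourself say so repeatedly (``the task reduces to,'' ``I would invoke,'' ``should feed into,'' ``I anticipate that the cleanest route is''). The starting identity from Lemma \ref{lem: nodal minus Neumann count} is the right anchor, and the observation that the gap vanishes at $\beta=2$ is a useful sanity check, but the central step --- that ``each unit of $\sigma(n)$ costs at least one unit of $\tfrac{1}{2}\sum_{v}s(v)$'' --- is precisely the missing idea, asserted rather than argued. The nodal--magnetic theorem identifies $\sigma(n)$ with a Morse index of $k_{n}(\vec\alpha)$ at $\vec\alpha=0$, which is local in flux space; your claim requires a global link between that Morse index and the interior-vertex sign pattern $\{s(v)\}$, and no mechanism for producing such a link is offered. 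Likewise, the ``parity constraint'' that $f_n$ has an even number of nodal (resp.\ Neumann) points on a closed path is simply not true in general: a closed path through a vertex of degree $\ge 3$ need not bound a region where signs alternate, and you give no precise statement, so this tool cannot carry weight. In short, the gap here is not a single fixable step but the entire coupling mechanism between $\sigma(n)$ and $\sum_v s(v)$; until that is made precise, the proposal reproduces only Theorem \ref{thm:Neumann_surplus_main-1}.
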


\begin{rem}
\label{rem: bounda correlation} It is shown in \cite{AloBan19} that
both $\sigma\left(n\right)$ and $\sigma\left(n\right)-\omega\left(n\right)$
can achieve their upper and lower bounds. Therefore, if the conjecture
is true, then the nodal surplus sequence and the Neumann surplus sequence
are not independent, in terms of the statistics developed in Section
\ref{sec: proof-of-existence}. 
\end{rem}

\newpage{}
\section{\label{sec: The-secular-manifold} The secular manifold}

\subsection{Introduction to the secular manifold}

The secular manifold, that we will define and discuss in the following
section, was first introduced in the work of Barra and Gaspard in
\cite{BarGas_jsp00} in which they use ergodic arguments to calculate
the level spacing statistics, by means of averages on the secular
manifold. The secular manifold appeared to be useful even beyond spectral
statistics. For example, it was used by Colin de Verdière in \cite{CdV_ahp15}
to prove that there is no quantum unique ergodicity in quantum graph
and to describe the possible semiclassical limiting measures. It is
also used by Band in \cite{Ban_ptrsa14}, proving an inverse problem
of showing that the nodal count $\phi\left(n\right)=n$ implies the
graph is a tree. In a recent work of Kurasov and Sarnak \cite{Sarnakurasov},
they analyze the secular manifold from an algebraic point of view.
In this work, they classified the spectrum of certain quantum graphs
as crystalline measures that contain only finite arithmetic progression
with a uniform bound on the lengths of these progressions.

\subsection{Abstract definition of the secular manifold}

Given $k>0$ we denote the $k^{2}$ eigenspace of $\Gamma_{\lv}$
by $Eig\left(\Gamma_{\lv},k^{2}\right)$. That is $Eig\left(\Gamma_{\lv},k^{2}\right)=\ker\left(\Delta-k^{2}\right)$
restricted to the domain of Neumann vertex conditions. A scaling of
the graph by a factor of $t>0$, $\Gamma_{\lv}\mapsto\Gamma_{t\lv}$,
induces a bijection $f\left(x\right)\mapsto f\left(\frac{x}{t}\right)$
between $H^{2}\left(\Gamma_{\lv}\right)$ and $H^{2}\left(\Gamma_{t\lv}\right)$.
It is not hard to deduce that it preserves the values of $f$ on vertices
and scale the derivatives by a factor of $\frac{1}{t}$. In particular,
it preserve the Neumann vertex conditions. If $f\in Eig\left(\Gamma_{\lv},k^{2}\right)$
for $k>0$, then $f\left(\frac{x}{k}\right)$ on $\Gamma_{k\lv}$ satisfy
Neumann vertex conditions and $-\frac{d^{2}}{d^{2}x_{e}}f|_{e}\left(\frac{x_{e}}{k}\right)=\frac{1}{k^{2}}k^{2}f|_{e}\left(\frac{x_{e}}{k}\right)=f|_{e}\left(\frac{x_{e}}{k}\right)$
on every edge and so $f\left(\frac{x}{k}\right)\in Eig\left(\Gamma_{k\lv},1\right)$.
A simple check shows that all amplitudes in Definition \ref{def: edge restriction notations}
are preserved under such scaling, and so for any $k>0$ and $\lv\in\left(\R_{+}\right)^{\E}$,
$Eig\left(\Gamma_{\lv},k^{2}\right)$ and $Eig\left(\Gamma_{k\lv},1\right)$
are isomorphic by a map which preserves all amplitudes.

Consider $Eig\left(\Gamma_{\lv},1\right)$ and let $\lv$ range over
$\left(\R_{+}\right)^{\E}$. The restrictions of any $f\in Eig\left(\Gamma_{\lv},1\right)$
to edges are $2\pi$ periodic (see Definition \ref{def: edge restriction notations}),
and so $f$ can be extended uniquely to $\Gamma_{l+2\pi\boldsymbol{n}}$
for any $\boldsymbol{n}\in\N^{\E}$. This extension is a linear bijection
between $Eig\left(\Gamma_{\lv},1\right)$ and $Eig\left(\Gamma_{\lv+2\pi\boldsymbol{n}},1\right)$
which preserves all amplitudes. In fact this is true for all $\boldsymbol{n}\in\Z^{\E}$
such that $\lv+2\pi\boldsymbol{n}\in\R_{+}^{\E}$. We can conclude
that for any $\lv\in\left(\R_{+}\right)^{E},k>0$ and $\boldsymbol{n}\in\Z^{\E}$
such that $k\lv+2\pi\boldsymbol{n}\in\R_{+}^{\E}$: 
\begin{equation}
Eig\left(\Gamma_{\lv},k^{2}\right)\undercom{\cong}{Scaling\,by\,k}Eig\left(\Gamma_{k\lv},1\right)\undercom{\cong}{extending\,by\,\boldsymbol{n}}Eig\left(\Gamma_{k\lv+2\pi\boldsymbol{n}},1\right),\label{eq: equivalence of eigenspaces}
\end{equation}
with an isomorphism that preserves all amplitudes from Definition
\ref{def: edge restriction notations}. 
\begin{rem}
One may ask what happens for $k=0$, as we cannot scale the graph
by $0$, but we would expect an isomorphism between $Eig\left(\Gamma_{\lv},0\right)$
and $Eig\left(\Gamma_{2\pi\boldsymbol{n}},1\right)$ for any $\boldsymbol{n}\in\N^{\E}$.
It appears that this isomorphism holds only for tree graphs. In the
following subsection we will show that $\dim Eig\left(\Gamma_{2\pi\boldsymbol{n}},1\right)=\beta+1$,
where $\beta$ is the first Betti number of $\Gamma_{\lv}$. As we
already mentioned that $0$ is always a simple eigenvalue (since $\Gamma$
is connected), then $Eig\left(\Gamma_{\lv},0\right)\not\cong Eig\left(\Gamma_{2\pi\boldsymbol{n}},1\right)$
if $\beta>0$. In the $\beta=0$ case, namely a tree, the mapping
of $f\equiv C$ to $\tilde{f}\in Eig\left(\Gamma_{2\pi\boldsymbol{n}},1\right)$
given by $\tilde{f}|_{e}\left(x_{e}\right):=C\cos\left(x_{e}\right)$,
is a bijection between $Eig\left(\Gamma_{\lv},0\right)$ and $Eig\left(\Gamma_{2\pi\boldsymbol{n}},1\right)$
that preserve the trace of the functions. 
\end{rem}

It is clear from (\ref{eq: equivalence of eigenspaces}) that given
any $\lv$ and any $k^{2}>0$ eigenvalue of $\Gamma_{\lv }$, if we denote $\tilde{l}\in\opcl{0,2\pi}^{\E}$ such that $\tilde{l}=k\lv\,\,\,mod\,2\pi$
then, $Eig\left(\Gamma_{\lv},k^{2}\right)\cong Eig\left(\Gamma_{\tilde{l}},1\right)$
by (\ref{eq: equivalence of eigenspaces}). The \emph{secular manifold},
to be defined next, is the set of all such $\tilde{l}$ for every
possible pair of $\lv$ and $k^{2}>0$. 
\begin{defn}
\label{def: charecteristic torus}We denote the flat torus $\T^{n}:=\left(\R/2\pi\Z\right)^{n}$
for every $n\in\N$, and define the \emph{characteristic torus} of
a graph by $\T^{\E}$. We use the notation $\left\{ *\right\} :\R^{\E}\rightarrow\T^{\E}$
to denote $\left\{ \vec{x}\right\} =\vec{x}\,\,\,mod\,2\pi$. 
\end{defn}

\begin{defn}
\label{def: secular manifold}Given a discrete graph $\Gamma$, for
every point $\kv\in\T^{\E}$ we associate a standard graph denoted
by $\Gamma_{\kv}$, whose edge lengths are $\lv\in\opcl{0,2\pi}^{\E}$
such that $\left\{ \lv\right\} =\kv$. The \emph{secular manifold
}of $\Gamma$ is defined as follows: 
\begin{equation}
\Sigma:=\set{\kv\in\T^{E}}{\dim Eig\left(\Gamma_{\kv},1\right)\ge1}.
\end{equation}
We partition $\Sigma$ into ``singular'' and ``regular'' parts
which are define as follows:
\begin{align}
\Sigma^{reg} & :=\set{\kv\in\Sigma}{\dim Eig\left(\Gamma_{\kv},1\right)=1},\label{eq: sigman reg def}\\
\Sigma^{sing} & :=\set{\kv\in\Sigma}{\dim Eig\left(\Gamma_{\kv},1\right)>1}.\label{eq: Sigma sing def}
\end{align}
\end{defn}

It is now clear from (\ref{eq: equivalence of eigenspaces}) that: 
\begin{lem}
\label{lem: Sigma reg and simple eigenvalues}Let $\Gamma_{\lv}$
be a standard graph, let $k^{2}>0$ and let $\kv=\left\{ k\lv\right\} $.
Then $k^{2}$ is an eigenvalue of $\Gamma_{\lv}$ if and only if $\kv\in\Sigma$
and it is a simple eigenvalue if and only if $\kv\in\Sigma^{reg}$.
\end{lem}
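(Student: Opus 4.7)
The plan is to read this lemma off as an immediate consequence of the two-step isomorphism chain in \eqref{eq: equivalence of eigenspaces} combined with the definitions of $\Sigma$ and $\Sigma^{reg}$ in \eqref{eq: sigman reg def}--\eqref{eq: Sigma sing def}. The lemma is really just repackaging that discussion in spectral terms, so the work is essentially verifying that the chain applies with the correct choice of representative in the torus.

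First I would fix $\lv\in\R_+^{\E}$ and $k>0$ and apply the scaling isomorphism $f(x)\mapsto f(x/k)$ established in the text, which identifies $Eig(\Gamma_{\lv},k^2)$ with $Eig(\Gamma_{k\lv},1)$ while preserving Neumann vertex conditions (since scaling does not affect continuity and rescales all outgoing derivatives by the common factor $1/k$, leaving the vanishing Kirchhoff sum intact). Next, pick the unique representative $\tilde{\lv}\in\opcl{0,2\pi}^{\E}$ of $\kv=\{k\lv\}$, so $\tilde{\lv}=k\lv-2\pi\boldsymbol{n}$ componentwise for some $\boldsymbol{n}\in\Z^{\E}$, with $\tilde{\lv}\in\R_+^{\E}$ automatic because $k l_e>0$ allows the standard mod-$2\pi$ reduction into $\opcl{0,2\pi}$ on each edge. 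Applying the $2\pi$-periodic extension/restriction bijection between $Eig(\Gamma_{k\lv},1)$ and $Eig(\Gamma_{\tilde\lv},1)=Eig(\Gamma_{\kv},1)$ then yields
\[
\dim Eig(\Gamma_{\lv},k^{2})=\dim Eig(\Gamma_{\kv},1).
\]

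From this single dimension equality the two biconditionals fall out. Namely, $k^{2}$ is an eigenvalue of $\Gamma_{\lv}$ iff $\dim Eig(\Gamma_{\lv},k^{2})\ge1$ iff $\dim Eig(\Gamma_{\kv},1)\ge1$ iff $\kv\in\Sigma$ by the defining condition of $\Sigma$. Likewise $k^{2}$ is a simple eigenvalue iff $\dim Eig(\Gamma_{\lv},k^{2})=1$ iff $\dim Eig(\Gamma_{\kv},1)=1$ iff $\kv\in\Sigma^{reg}$ by \eqref{eq: sigman reg def}.

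The only minor subtlety, and what I expect to check carefully rather than what I would call a ``hard part'', is that the representative $\tilde\lv$ of $\kv$ can always be chosen in the open positive cone $\R_+^{\E}$ so that $\Gamma_{\kv}=\Gamma_{\tilde\lv}$ is a legitimate standard graph, and that the integer shift $\boldsymbol{n}$ used in the extension step indeed satisfies $k\lv+2\pi(-\boldsymbol{n})\in\R_+^{\E}$ componentwise as required by \eqref{eq: equivalence of eigenspaces}. Both hold coordinatewise because $k l_e>0$ ensures that the unique $\tilde l_e\in\opcl{0,2\pi}$ with $\tilde l_e\equiv kl_e\pmod{2\pi}$ is strictly positive. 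Since both isomorphisms in the chain are vector space isomorphisms (in fact they preserve the full amplitude data from Definition \ref{def: edge restriction notations}), the dimensions agree on the nose and the lemma is proved.
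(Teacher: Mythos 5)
Your proof is correct and is exactly the argument the paper has in mind: the paper offers no explicit proof, stating only ``It is now clear from \eqref{eq: equivalence of eigenspaces}'', and your write-up supplies precisely the dimension-preserving isomorphism chain (scaling by $k$, then $2\pi$-periodic extension to the representative in $\opcl{0,2\pi}^{\E}$) that justifies that remark. Your care about the representative of $\kv$ lying in the strictly positive cone matches Definition \ref{def: secular manifold}, where $\Gamma_{\kv}$ is explicitly built from edge lengths in $\opcl{0,2\pi}^{\E}$, so there is no gap.
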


The terms ``regular'' and ``singular'' in Definition \ref{def: secular manifold}
refer to the structure of the secular manifold as presented in the
following proposition.
\begin{defn}
\cite{RealAnalyticFunction}\label{def: real analytic variety} A
\emph{real analytic manifold} is a smooth manifold whose transition
maps are real analytic. Given a real analytic function $f:\R^{n}\rightarrow\R^{m}$,
its zero set $Z_{f}$ is called a \emph{real analytic variety. }A
point $x\in Z_{f}$ is called \emph{regular} if it has a neighborhood
of $Z_{f}$ which is a manifold, and is called \emph{singular} if
it is not regular. The regular part of $Z_{f}$ which we denote by
$Z_{f}^{reg}$ is the union of regular points. The dimension of $Z_{f}$
is the maximal dimension of neighborhoods of regular points. 
\end{defn}

\begin{prop}
\label{prop: Secular manifold structure} Given a graph $\Gamma$
with $E$ edges, its secular manifold $\Sigma$ is a real analytic
variety of dimension $E-1$. The set $\Sigma^{reg}$ that was defined
in (\ref{eq: sigman reg def}) is the set of regular points of $\Sigma$,
and it is a real analytic manifold of dimension $E-1$. The set $\Sigma^{sing}$
that was defined in (\ref{eq: Sigma sing def}) is the set of singular
points of $\Sigma$, and it is a real analytic variety of dimension
strictly smaller than $E-1$ (positive co-dimension in $\Sigma$).
\end{prop}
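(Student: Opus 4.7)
The plan is to realize $\Sigma$ as the zero set of the secular function $F(\kv) := \det(I - U_\kv)$, where $U_\kv$ is the $2E \times 2E$ unitary bond-evolution matrix listed in the paper's notations. Its entries are products of constants coming from the Neumann scattering matrix and phases $e^{i\kappa_e}$, so $F$ is a real analytic function on $\T^\E$. The standard identification $Eig(\Gamma_\kv,1) \cong \ker(I - U_\kv)$ then gives $\Sigma = F^{-1}(0)$, so $\Sigma$ is a real analytic variety in the sense of Definition \ref{def: real analytic variety}.

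To analyze the regular part, I would fix $\kv_0 \in \Sigma^{reg}$, so that $1$ is a simple eigenvalue of $U_{\kv_0}$. Analytic perturbation theory for the real analytic unitary family $\kv \mapsto U_\kv$ supplies a real analytic branch $\lambda(\kv) = e^{i\theta(\kv)}$ with $\lambda(\kv_0) = 1$ and a normalized eigenvector $\psi(\kv)$, together with a local factorization $F(\kv) = (1 - \lambda(\kv))\, G(\kv)$ with $G(\kv_0) \ne 0$. A Hellmann--Feynman computation, using that $U_\kv$ depends on $\kappa_e$ only through the diagonal phase acting on the two directed copies of $e$, yields $\partial_{\kappa_e}\theta(\kv_0) = |\psi_e(\kv_0)|^2 + |\psi_{\hat e}(\kv_0)|^2 \ge 0$, and summing over $e$ gives $\sum_e \partial_{\kappa_e}\theta(\kv_0) = \|\psi(\kv_0)\|^2 = 1$. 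In particular $\nabla F(\kv_0) \ne 0$, so the implicit function theorem produces a neighborhood of $\kv_0$ in $\Sigma$ which is a real analytic manifold of dimension $E-1$. Thus $\Sigma^{reg}$ is open in $\Sigma$, is a smooth $(E-1)$-manifold, and every one of its points is a regular point of the variety $\Sigma$.

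For the singular part, if $\kv \in \Sigma^{sing}$ then $\dim \ker(I - U_\kv) \ge 2$, so $1$ is a repeated eigenvalue of the unitary $U_\kv$; consequently $F$ vanishes to order at least $2$ along every line through $\kv$, forcing $\nabla F(\kv) = 0$. Hence $\Sigma^{sing}$ is contained in the critical locus $\{F = 0\} \cap \{\nabla F = 0\}$, a proper real analytic subvariety of $\Sigma$ --- proper because the open set $\Sigma^{reg}$ is non-empty, which follows from the existence of simple eigenvalues for generic edge lengths (e.g.\ Friedlander's theorem). By general properties of real analytic varieties the singular locus of $\Sigma$ has dimension strictly less than $\dim \Sigma = E-1$, which settles all three claims of the proposition.

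The main obstacle I anticipate is the Hellmann--Feynman computation for $\nabla \theta(\kv_0)$. Although short, it is where the quantum-graph structure really enters --- specifically the unitarity of $U_\kv$ and the fact that each coordinate $\kappa_e$ couples only to the two directed copies of $e$ --- and it is precisely what guarantees $\dim \Sigma = E-1$ rather than something smaller. Everything after this step amounts to repackaging standard facts about zero sets of real analytic functions and standard perturbation theory for unitary families.
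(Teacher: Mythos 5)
Your argument for $\Sigma=\{F=0\}$ being a real analytic variety and for $\Sigma^{reg}$ being an $(E-1)$-dimensional real analytic manifold is sound and parallel to the paper's: the Hellmann--Feynman computation of $\partial_{\kappa_e}\theta = |\psi_e|^2 + |\psi_{\hat e}|^2$ with $\sum_e \partial_{\kappa_e}\theta = 1 > 0$ is exactly the content of the paper's Lemma \ref{lem: F p and mk}, namely $\nabla F(\kv) = p(\kv)\, m_{\kv}$ with $m_\kv$ the weights vector. The route via analytic perturbation theory rather than the adjugate matrix is fine.

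The gap is in the final paragraph, in the step from ``$\Sigma^{sing}$ is contained in the critical locus $\{F=0, \nabla F = 0\}$'' to ``$\Sigma^{sing}$ has dimension $< E-1$ by general properties of real analytic varieties.'' This does not follow. The stratification theorem for real analytic varieties guarantees that the \emph{singular locus} of $\Sigma$ (the set of points near which $\Sigma$ fails to be a top-dimensional manifold) has lower dimension, but you have not shown that $\Sigma^{sing}$ is contained in the singular locus. A point with $F=0$ and $\nabla F = 0$ need not be a singular point of the variety $\{F=0\}$: e.g. $F(x,y) = (x^2+y^2-1)^2$ has $\nabla F \equiv 0$ on the unit circle, yet the zero set is a smooth manifold with no singular points at all. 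If the secular function $F$ contained a repeated factor on some sheet, that sheet would lie entirely in $\Sigma^{sing}$ (by $\dim\ker \ge 2$) while still being an $(E-1)$-dimensional manifold, i.e.\ made of regular points. Nor does ``proper subvariety'' save you: Lemma \ref{lem: real analytic lemma} gives positive codimension only for zero sets on a \emph{connected manifold}, and $\Sigma$ is a variety, typically disconnected (the paper shows $\Sigma^{reg}$ has at least two connected components whenever there is a bridge). The paper itself flags precisely this missing step in a remark following the proposition --- they cannot prove it via general principles either, short of an unresolved irreducibility conjecture.

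The paper instead argues by contradiction using a transversality-plus-genericity argument: if $\Sigma^{sing}$ contained an $(E-1)$-dimensional real analytic neighborhood $O$, one could find a rationally independent $\lv \in \hat n^{\perp}(\kv_0) \cap G$ (where $G$ is Friedlander's residual set of edge lengths with simple spectrum). The linear flow $k \mapsto \{k\lv\}$ is dense and transversal to $O$, so it would hit $\Sigma^{sing}$ infinitely often, producing infinitely many non-simple eigenvalues of $\Gamma_\lv$ --- contradicting $\lv \in G$. This is the essential extra input your proposal is missing; the transversality argument, not soft variety theory, is what forces $\dim\Sigma^{sing} < E-1$.
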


A proof for this proposition will be given in Subsection \ref{proof of structure},
but we will state here two usefull lemmas regarding real analytic
varieties that will be used in the proof of Proposition \ref{prop: Secular manifold structure}
as well as in other proofs.
\begin{lem}
\cite{RealAnalyticFunction} \label{lem: stratification} ~If $Z_{f}$
is a real analytic variety of dimension $n$, then it has the following
stratification $Z_{f}=\cup_{j=0}^{n}S_{j}$ where every $S_{j}$,
which is called a \emph{strata}, is a $j$-dimensional real analytic
manifold (possibly empty). 
\end{lem}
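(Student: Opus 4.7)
The plan is to prove the stratification by strong induction on the dimension $n$. For the base case $n=0$, a zero-dimensional real analytic variety consists of a discrete collection of points, each of which is a zero-dimensional real analytic manifold; setting $S_{0}=Z_{f}$ (with all higher strata empty) gives a trivial stratification that satisfies the claim.

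For the inductive step, I would assume the result for all real analytic varieties of dimension strictly less than $n$. Given $Z_{f}$ of dimension $n$, I define $S_{n}$ to be the subset of regular points (in the sense of Definition \ref{def: real analytic variety}) whose manifold neighborhood has dimension exactly $n$. By definition of regularity this set is open in $Z_{f}$, and pinning the local dimension to $n$ leaves an open subset on which $Z_{f}$ is a real analytic manifold of dimension $n$. The inductive hypothesis would then be applied to the residual set $Z_{f}\setminus S_{n}$ to produce lower-dimensional strata $S_{0},S_{1},\ldots,S_{n-1}$; combined with $S_{n}$, these yield the desired stratification of $Z_{f}$.

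The crux of the argument, and the main obstacle, is establishing that $Z_{f}\setminus S_{n}$ is itself a real analytic variety of dimension strictly less than $n$. Locally near any point of $Z_{f}\setminus S_{n}$ I would pick an analytic map $g$ cutting out $Z_{f}$; then $Z_{f}\setminus S_{n}$ coincides locally with the common zero set of $g$ together with the $n\times n$ minors of the Jacobian $Dg$ (which detect the failure of the constant-rank condition needed for $Z_{f}$ to look like an $n$-dimensional submanifold). Since these minors are themselves real analytic, this exhibits $Z_{f}\setminus S_{n}$ locally as a real analytic variety; the global variety structure follows from the intrinsic nature of the description. The strict dimension drop is then automatic, since no point of $Z_{f}\setminus S_{n}$ admits an $n$-dimensional manifold neighborhood in $Z_{f}$, so the maximal dimension realized inside $Z_{f}\setminus S_{n}$ is at most $n-1$.

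The technical content that makes this local-to-global argument rigorous, in particular reconciling different local defining maps and ensuring the global singular set inherits a coherent analytic variety structure, is the classical Lojasiewicz structure theorem for real analytic sets. This is precisely what the citation \cite{RealAnalyticFunction} provides, and invoking it closes the induction. I would expect the bulk of the write-up to be bookkeeping around the local-to-global compatibility rather than any computation specific to the secular manifold setting; this is why the lemma is stated abstractly and cited rather than proved in place.
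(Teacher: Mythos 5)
The paper does not actually prove this lemma; it is quoted directly from the reference \cite{RealAnalyticFunction} (Krantz--Parks), so there is no ``paper's proof'' to compare yours against. That said, your sketch has a genuine gap, and it sits exactly where you suspected: showing that $Z_{f}\setminus S_{n}$ is itself a real analytic variety of dimension strictly less than $n$.

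The problem is the claim that $Z_{f}\setminus S_{n}$ is cut out locally by $g$ together with the $n\times n$ minors of $Dg$. For a \emph{particular} choice of defining function $g$, the locus where the minors of $Dg$ vanish generally \emph{overshoots} $Z_{f}\setminus S_{n}$: it can contain manifold points. Take $g(x,y)=(y-x^{2})^{2}$, whose zero set is the smooth parabola $y=x^{2}$; the Jacobian of $g$ vanishes identically on the whole zero set, so the ``minors locus'' is all of $Z_{g}$, even though $Z_{g}$ is a $1$-manifold everywhere and $S_{1}=Z_{g}$. Thus your minors construction does not identify the complement of $S_{n}$, and the claimed strict dimension drop fails. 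Conversely, if you instead take $S_{n}$ to be the set of genuine manifold points (which is the definition you want), it is no longer clear that $Z_{f}\setminus S_{n}$ is the zero set of any analytic map at all: real analytic sets need not be coherent, so the germ of analytic functions vanishing on $Z_{f}$ need not be locally finitely generated in a way that singles out the singular locus. Fixing this is not ``bookkeeping'' handled by citing Lojasiewicz at the end --- the structure theorem \emph{is} the proof that the singular set is semianalytic of lower dimension, and it is exactly the local step, not the local-to-global patching, that requires the full machinery. Your induction scheme is the standard shape of the argument, but the heart of the matter is hidden in the step you flagged as the crux.
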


\begin{lem}
\label{lem: real analytic lemma}Let $M$ be a connected real analytic
manifold and let $h$ be a real analytic function on $M$ which is
not the zero function. Then the zero set
\begin{align}
Z_{h}:= & \set{\xv\in M}{h\left(\xv\right)=0},
\end{align}
is a real analytic variety of positive co-dimension in $M$. 
\end{lem}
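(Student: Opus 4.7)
The plan is to prove this in two main steps: first, check that $Z_h$ carries the local structure of a real analytic variety as required by Definition \ref{def: real analytic variety}; second, show that the codimension is strictly positive via the identity theorem for real analytic functions. For the first step, since $M$ is a real analytic manifold of dimension $n$, I cover it by analytic charts $\varphi_\alpha : U_\alpha \to V_\alpha \subset \R^n$. In each chart the function $h_\alpha := h \circ \varphi_\alpha^{-1}$ is a real analytic function on the open set $V_\alpha$, and $\varphi_\alpha(Z_h \cap U_\alpha)$ is exactly the zero set of $h_\alpha$. The structure of $Z_h \cap U_\alpha$ as a real analytic variety in the sense of Definition \ref{def: real analytic variety} is thus inherited chart-by-chart, with compatibility across overlaps automatic from the real analyticity of the transition maps.

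For the second step the main tool is the identity principle for real analytic functions, which I would port to $M$ by the standard clopen argument. Define
\[
B := \set{x \in M}{D^\alpha h(x) = 0 \text{ for every multi-index } \alpha},
\]
the derivatives being computed in any analytic chart around $x$ (their simultaneous vanishing is chart-independent by the chain rule). The set $B$ is closed as the intersection of the closed sets $\{D^\alpha h = 0\}$. It is also open: if every derivative of $h$ vanishes at $x$, then in a chart around $x$ the Taylor series of $h$ at $x$ is identically zero, so by real analyticity $h$ vanishes on a whole neighborhood of $x$, on which therefore all derivatives vanish as well. Since $M$ is connected, $B$ is either empty or all of $M$; the hypothesis that $h$ is not the zero function rules out $B = M$, so $B = \emptyset$.

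Finally I would use $B = \emptyset$ to rule out interior points of $Z_h$ in $M$: if $Z_h$ contained a nonempty open set $U$, then $h|_U \equiv 0$ would give $U \subseteq B$, a contradiction. Hence $Z_h$ has empty interior, so no regular point of $Z_h$ admits an open neighborhood of dimension $n$, and therefore $\dim Z_h \leq n - 1$, which is exactly positive codimension in $M$. The only subtle point is the promotion of the classical identity theorem on open subsets of $\R^n$ to the manifold setting; this is precisely what the clopen argument for $B$ accomplishes, and is the step I expect to require the most care, though it is entirely standard.
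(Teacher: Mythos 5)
Your proposal is correct and follows the same general strategy as the paper (reduce to charts, invoke the fact that a non-trivial real analytic function has a thin zero set), but it differs in where the work is done. The paper cites Proposition~3 of \cite{Mit15} for the local statement ``the zero set of a nonzero real analytic function on an open subset of $\R^d$ has positive codimension,'' then sums over the countably many charts. Your version instead proves, from scratch via the clopen set $B$, the identity theorem on a connected analytic manifold, and then gets positive codimension in a single stroke by observing that $Z_h$ has empty interior. What you gain is self-containment and, importantly, you close a step the paper leaves implicit: the paper applies \cite{Mit15} to $h_n = h\circ\varphi_n^{-1}$ on each chart, which silently requires that $h_n\not\equiv 0$ on every chart, and that fact is precisely the identity theorem that your set $B$ establishes. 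What you implicitly use, without naming it, is invariance of domain at the very end: a regular point of $Z_h$ with an $n$-dimensional manifold neighborhood inside $Z_h$ would yield a nonempty open subset of $M$ contained in $Z_h$, contradicting the empty interior. Making that one sentence explicit would make the passage from ``empty interior'' to ``$\dim Z_h \le n-1$'' airtight under the paper's Definition~\ref{def: real analytic variety} of dimension.
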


\begin{proof}
Clearly, if $M$ is a zero set of a real analytic function $f$, then
$Z_{h}$ is the zero set of the real analytic function $\left(f,h\right)$
and is therefore an analytic variety. Let $\left\{ O_{n},\varphi_{n}\right\} _{n\in\N}$
be the real analytic atlas of $M$, with $\left\{ U_{n}\right\} _{n\in\N}$
open subsets of $\R^{d}$ (where $d=\dim\left(M\right)$), each $U_{n}$
diffeomorphic to $O_{n}$ by $U_{n}=\varphi_{n}\left(O_{n}\right)$.
Let $h_{n}:=h\circ\varphi_{n}^{-1}$ so that by definition $h_{n}:U_{n}\rightarrow\R$
is real analytic. By proposition 3 in \cite{Mit15} the zero set of
$h_{n}$ is of positive co-dimension in $U_{n}$ and therefore its
preimage by $\varphi_{n}$, $Z_{h}\cap O_{n}$ has positive co-dimension
in $M$. Therefore $Z_{h}$ is a countable union of sets of positive
co-dimension and as such has positive co-dimension.
\end{proof}

\subsection{Canonical eigenfunctions}

In the context of quantum mechanics and spectral theory it is standard
to consider eigenfunctions which are $L^{2}$ normalized. However,
in our context a different normalization would be more convenient. 
\begin{defn}
\label{def: normalization} Given an eigenfunction $f$ with amplitudes
vector $\boldsymbol{a}\in\C^{\vec{\E}}$ as defined in Definition
\ref{def: edge restriction notations}, we say that $f$ is normalized
if $\norm{\boldsymbol{a}}=1$ (Euclidean norm).
\end{defn}

For a regular point $\kv\in\Sigma^{reg}$, where $\dim Eig\left(\Gamma_{\kv},1\right)=1$,
we define a canonical choice (up to sign) of eigenfunction $f_{\kv}\in Eig\left(\Gamma_{\kv},1\right)$. 
\begin{defn}
\label{def: canonical e.f} Let $\kv\in\Sigma^{reg}$, we define its
\emph{canonical eigenfunction} as the unique (up to a sign) normalized
real eigenfunction $f_{\kv}\in Eig\left(\Gamma_{\kv},1\right)$.
\end{defn}

\begin{lem}
\label{lem: vertex values canonical ef and other ef }~Let $\Gamma_{\lv}$
be a standard graph, let $k^{2}>0$ be a simple eigenvalue of $\Gamma_{\lv}$
with a normalized real eigenfunction $f$ and let $\kv=\left\{ k\lv\right\} \in\Sigma^{reg}$
with canonical eigenfunction $f_{\kv}$. Then $f$ and $f_{\kv}$
share the same amplitudes vector and so their traces are related as
follows:
\begin{equation}
\forall v\in\V,\,\,\forall e\in\E_{v}\,\,\,\,\,\,f_{\kv}\left(v\right)=f\left(v\right)\,\,\,\,\,and\,\,\,\,\partial_{e}f_{\kv}\left(v\right)=\frac{\partial_{e}f\left(v\right)}{k}.\label{eq: canonical eigenfunction and other function}
\end{equation}
Where the equalities are up to a global sign. 
\end{lem}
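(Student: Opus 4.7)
The plan is to trace $f$ through the chain of isomorphisms in equation (\ref{eq: equivalence of eigenspaces}) and identify the result with $f_{\kv}$ using the uniqueness of the canonical eigenfunction together with Lemma \ref{lem: Sigma reg and simple eigenvalues}. Everything needed has already been recorded in the text; the proof is essentially bookkeeping.

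First, I would apply the scaling isomorphism $Eig(\Gamma_{\lv},k^{2})\cong Eig(\Gamma_{k\lv},1)$. Explicitly, set $\tilde f|_{e}(y_{e}):=f|_{e}(y_{e}/k)$ on $\Gamma_{k\lv}$. As noted in the paragraph preceding (\ref{eq: equivalence of eigenspaces}), this map sends vertex values $f|_{e}(v)$ to themselves and scales outgoing derivatives by $1/k$, and it preserves all amplitudes from Definition \ref{def: edge restriction notations}. In particular, the amplitudes vector satisfies $\tilde{\boldsymbol{a}}=\boldsymbol{a}$, so $\tilde f$ is real, $\norm{\tilde{\boldsymbol{a}}}=1$, and $\tilde f\in Eig(\Gamma_{k\lv},1)$.

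Next, I would choose the unique $\boldsymbol{n}\in\Z^{\E}$ with $k\lv-2\pi\boldsymbol{n}\in(0,2\pi]^{\E}$, so that the resulting vector equals $\lv'$ with $\{\lv'\}=\kv$. Using $2\pi$-periodicity of each edge restriction in its arc-length coordinate, extend or truncate $\tilde f$ to an eigenfunction $\hat f\in Eig(\Gamma_{\kv},1)$. Again by the discussion preceding (\ref{eq: equivalence of eigenspaces}), this step preserves all amplitudes in Definition \ref{def: edge restriction notations}; in particular it preserves the trace of the function at every vertex (since the vertex corresponds to $x_{e}=0$, a point fixed by the $2\pi$-translation, and the formulas (\ref{eq: v values using a}) depend only on the amplitudes and on the edge lengths modulo $2\pi$). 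Hence $\hat f$ is real, normalized, and lies in $Eig(\Gamma_{\kv},1)$, with $\hat f(v)=f(v)$ and $\partial_{e}\hat f(v)=\partial_{e}f(v)/k$.

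Finally, since $k^{2}$ is a simple eigenvalue of $\Gamma_{\lv}$, Lemma \ref{lem: Sigma reg and simple eigenvalues} gives $\kv\in\Sigma^{reg}$, i.e.\ $\dim Eig(\Gamma_{\kv},1)=1$. The uniqueness clause in Definition \ref{def: canonical e.f} then forces $\hat f=\pm f_{\kv}$, which yields the stated equalities up to a global sign. The only mild subtlety — and the one I would flag as the place to be careful — is verifying that the $2\pi$-periodic adjustment on each edge truly leaves both the amplitudes vector and the vertex trace unchanged; this reduces to observing that the formulas (\ref{eq: v values using a})–(\ref{eq: a using v values}) are $2\pi$-periodic in $kl_{e}$ once $\boldsymbol{a}$ is fixed, so no other work is required.
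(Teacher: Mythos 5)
Your proof is correct and follows essentially the same route as the paper's: scale by $k$ to pass from $Eig(\Gamma_{\lv},k^{2})$ to $Eig(\Gamma_{k\lv},1)$, then use the $2\pi$-periodic extension/truncation to land in $Eig(\Gamma_{\kv},1)$, checking at each step that the amplitudes vector (and hence the trace, with derivatives rescaled by $1/k$) is preserved, and finally invoke $\dim Eig(\Gamma_{\kv},1)=1$ to identify the result with $\pm f_{\kv}$. You are slightly more explicit than the paper about selecting the shift $\boldsymbol{n}$ and about why periodicity leaves the trace unchanged, but the underlying argument is the same.
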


\begin{proof}
The bijection between $Eig\left(\Gamma_{\lv},k^{2}\right)$ and $Eig\left(\Gamma_{\kv},1\right)$
is given as a composition of two maps, scaling by $k$ and extensions
of the edges by integer multiples of $2\pi$. Let $\tilde{f}\in Eig\left(\Gamma_{k\lv},1\right)$
given by the scaling bijection, namely $\tilde{f}|_{e}\left(x_{e}\right)=f|_{e}\left(\frac{x_{e}}{k}\right)$
$\forall e\in\E,\forall x_{e}\in\left[0,kl_{e}\right]$. It is not
hard to deduce that $f\left(v\right)=\tilde{f}\left(v\right)$ and
$\partial_{e}f\left(v\right)=k\partial_{e}\tilde{f}\left(v\right)$
for every $v\in\V,e\in\E_{v}$. Therefore, according to the relations
between the trace and amplitudes in Lemma \ref{def: edge restriction notations}
(\ref{enu: complex amplitudes}), $\tilde{f}$ and $f$ share the
same amplitudes and therefore $\tilde{f}$ is also real with normalized
amplitudes vector. It is not hard to deduce that the second bijection,
the extension by integer multiples of $2\pi$, does not change the
trace and the amplitudes vector, and therefore $\tilde{f}$ is mapped
to a real eigenfunction in $Eig\left(\Gamma_{\kv},1\right)$ with
normalized amplitudes vector, namely $\pm f_{\kv}$. This proves the
lemma. 
\end{proof}

\subsection{\label{subsec:Wave-scattering}Wave scattering and explicit construction
of the secular manifold.}

The scattering approach was first applied to quantum graphs by Von
Below \cite{Bel_laa85} and later on by Kottos and Smilansky in \cite{KotSmi_ap99,KotSmi_prl97}.
In this subsection we review this procedure and use it to construct
the secular manifold. For more information on the scattering approach
we refer the reader to \cite{BerKuc_graphs,GnuSmi_ap06}.

Let $f$ be a real eigenfunction of a standard graph $\Gamma_{\lv}$
with eigenvalue $k^{2}>0$, and denote its amplitudes vector by $\boldsymbol{a}\in\C^{\vec{\E}}$
(see Definition \ref{def: edge restriction notations} (\ref{enu: complex amplitudes})).
The vertex conditions on a vertex $v\in\V$ provides $\deg v$ linear
equations on $\tr\left(f\right)$:
\begin{align*}
f|_{e}\left(v\right) & =f|_{e'}\left(v\right)\,\,\,\forall e,e'\in\E_{v},\\
\sum_{e\in\E_{v}}\partial_{e}f\left(v\right) & =0,
\end{align*}
which can be written, using Definition \ref{def: edge restriction notations},
(\ref{enu: complex amplitudes}), as $\deg v$ linear equations on
$a$ with coefficients that are linear in $\left\{ e^{ikl_{e}}\right\} _{e\in\E}$.
Over all, it is a system of $\sum_{v\in\V}\deg v=2E=\left|\vec{\E}\right|$
linear equations on $a$ that can be rearranged as such (see \cite{GnuSmi_ap06,BerKuc_graphs}
for detailed explanation): 
\begin{equation}
\left(1-U_{k\lv}\right)a=0,\label{eq:sec_condition}
\end{equation}
where $U_{k\lv}$ is unitary matrix from $\C^{\vec{\E}}$ to itself,
called the \emph{unitary evolution matrix, }whose\emph{ }entries are
linear functions of $\left\{ e^{ikl_{e}}\right\} _{e\in\E}$. We define
$U_{\kv}$ accordingly as function of $\kv\in\T^{\E}$ such that $U_{k\lv}=U_{\kv}$
for $\kv=\left\{ k\lv\right\} $. Notice that the mapping $a\mapsto f$
given by the restrictions $f|_{e}\left(x_{e}\right)=a_{e}e^{-ikl_{e}}e^{ikx}+a_{\hat{e}}e^{-ikx}$
is a linear bijection (according to Definition \ref{def: edge restriction notations}
(\ref{enu: complex amplitudes}) ) between $\C^{\vec{\E}}$ and functions
that satisfy $f''=-k^{2}f$ edgewise. As the vertex conditions of
$f$ in terms of $\boldsymbol{a}$ are given in (\ref{eq:sec_condition}),
then 
\begin{align}
f\in Eig\left(\Gamma_{\kv},1\right) & \iff\boldsymbol{a}\in\ker\left(1-U_{k\lv}\right).
\end{align}
We may deduce the following lemma:
\begin{lem}
\label{lem: secular bijection}The mapping $\boldsymbol{a}\mapsto f$
given by Definition \ref{def: edge restriction notations} (\ref{enu: complex amplitudes})
is a linear bijection between $\ker\left(1-U_{k\lv}\right)$ and $Eig\left(\Gamma_{\lv},k^{2}\right)$
for $k>0$. In particular, $Eig\left(\Gamma_{\kv},1\right)\cong\ker\left(1-U_{\kv}\right)$
for any $\kv\in\T^{\E}$ and $\Sigma$ is the zero set of $\det\left(1-U_{\kv}\right)$.
This allows to express (\ref{eq: sigman reg def}) and (\ref{eq: Sigma sing def})
as 
\begin{align}
\Sigma^{reg} & :=\set{\kv\in\Sigma}{\dim\ker\left(1-U_{\kv}\right)=1},\label{eq: sigman reg def-1}\\
\Sigma^{sing} & :=\set{\kv\in\Sigma}{\dim\ker\left(1-U_{\kv}\right)>1}.\label{eq: Sigma sing def-1}
\end{align}
\end{lem}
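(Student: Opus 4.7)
The proof will essentially package the observations already made in the paragraphs preceding equation (\ref{eq:sec_condition}). The plan has four parts: (i) the amplitudes-to-function map is a bijection onto the edgewise solutions of the ODE; (ii) Neumann vertex conditions on $f$ are equivalent to $\boldsymbol{a}\in\ker(1-U_{k\lv})$; (iii) the case of general $\kv\in\T^{\E}$ follows from the scaling/extension isomorphism (\ref{eq: equivalence of eigenspaces}); and (iv) the characterization of $\Sigma$, $\Sigma^{reg}$ and $\Sigma^{sing}$ by the determinant and kernel dimension of $1-U_{\kv}$.

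For part (i), fix $k>0$ and $\lv\in\R_{+}^{\E}$. The prescription $f|_{e}(x_{e})=a_{e}e^{-ikl_{e}}e^{ikx_{e}}+a_{\hat{e}}e^{-ikx_{e}}$ defines a linear map $\boldsymbol{a}\mapsto f$ from $\C^{\vec{\E}}$ into the space of functions on $\Gamma_{\lv}$ whose edge restrictions solve $f|_{e}''=-k^{2}f|_{e}$. For each edge $e$ the solution space of this ODE is two-dimensional and $\{e^{ikx_{e}},e^{-ikx_{e}}\}$ is a basis for it; hence the pair $(a_{e},a_{\hat{e}})$ parameterizes $f|_{e}$ bijectively (the factor $e^{-ikl_{e}}$ is just an invertible scalar). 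Taking the direct sum over edges yields a linear bijection between $\C^{\vec{\E}}$ and the space of edgewise $-k^{2}$-eigenfunctions.

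For part (ii), I invoke what was laid out just before (\ref{eq:sec_condition}): each vertex $v$ contributes $\deg v$ linear continuity and current-conservation equations on $\tr(f)$, and via the explicit formulas (\ref{eq: v values using a})--(\ref{eq: a using v values}) these become $\deg v$ linear equations on $\boldsymbol{a}$ whose coefficients depend analytically on $\{e^{ikl_{e}}\}_{e\in\E}$. Summing $\sum_{v}\deg v=2E=|\vec{\E}|$ gives a square system, and (\ref{eq:sec_condition}) states that it is equivalent to $(1-U_{k\lv})\boldsymbol{a}=0$ with $U_{k\lv}$ unitary. Restricting the bijection of (i) to the subspace $\ker(1-U_{k\lv})$ therefore gives a linear bijection onto $Eig(\Gamma_{\lv},k^{2})$, proving the first statement.

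For part (iii), given any $\kv\in\T^{\E}$, set $k=1$ and choose $\lv\in\opcl{0,2\pi}^{\E}$ with $\{\lv\}=\kv$; then by definition $U_{\kv}=U_{1\cdot\lv}$ and $\Gamma_{\kv}=\Gamma_{\lv}$, so part (i)--(ii) give $Eig(\Gamma_{\kv},1)\cong\ker(1-U_{\kv})$. For part (iv), since $U_{\kv}$ is unitary, $\dim\ker(1-U_{\kv})\geq 1$ is equivalent to $\det(1-U_{\kv})=0$; because the entries of $U_{\kv}$ are trigonometric polynomials, $\kv\mapsto\det(1-U_{\kv})$ is real analytic and $\Sigma$ is exactly its zero set. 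Finally, the defining conditions $\dim Eig(\Gamma_{\kv},1)=1$ and $>1$ translate via the isomorphism $Eig(\Gamma_{\kv},1)\cong\ker(1-U_{\kv})$ to $\dim\ker(1-U_{\kv})=1$ and $>1$, yielding the asserted reformulations (\ref{eq: sigman reg def-1}) and (\ref{eq: Sigma sing def-1}). No real obstacle arises here; the only point worth verifying carefully is that the accounting of $2E$ vertex equations really matches the $2E$ components of $\boldsymbol{a}$, which is guaranteed by the identity $\sum_{v\in\V}\deg v=2E$ together with the standard derivation of $U_{k\lv}$ recalled in the text.
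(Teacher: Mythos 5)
Your proof is correct and follows exactly the reasoning the paper lays out in the paragraph immediately preceding the lemma, which is stated without a separate proof as a direct consequence of that discussion. The packaging into steps (i)--(iv) matches the paper's implicit argument: the amplitudes-to-function bijection, the translation of Neumann conditions into $\ker(1-U_{k\lv})$, the specialization to $\Gamma_{\kv}$ with $k=1$, and the determinant/kernel-dimension reformulation of $\Sigma$, $\Sigma^{reg}$, $\Sigma^{sing}$.
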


The structure of $U_{\kv}$ and its $\kv$ dependence are given by
a product of two unitary matrices 
\begin{equation}
U_{\kv}=e^{i\kv}S.\label{eq: U=00003Dexp J S}
\end{equation}
This is a decomposition of $U$ to a $\kv$ dependent matrix $e^{i\kv}$
and a fixed\footnote{This is true for Neumann vertex conditions. For other vertex conditions, this matrix may be $k$ dependent, in which the secular manifolds approach will need some modifications.} real orthogonal matrix $S$ which is called the \emph{bond-scattering
matrix}. The matrix $e^{i\kv}$ is a unitary diagonal matrix with
diagonal entries 
\begin{equation}
\forall e\in\E\,\,\left(e^{i\kv}\right)_{e,e}=\left(e^{i\kv}\right)_{\hat{e},\hat{e}}=e^{i\kappa_{e}}.
\end{equation}
Given two directed edges $e$ and $e'$ that are connected by a vertex
$v$, we write $e\xrightarrow[v]{}e'$ if $e$ is directed into $v$
and $e'$ is directed out of $v$. The matrix $S$ satisfies $\left(S\right)_{e,e'}\ne0$
if and only if $e'\xrightarrow[v]{}e$. In such case, if $e\xrightarrow[v]{}e'$,
then \emph{ }
\begin{equation}
\left(S\right)_{e,e'}=\begin{cases}
\frac{2}{\deg v}-1 & if\,e'=\hat{e}\\
\frac{2}{\deg v} & otherwise
\end{cases},\label{eq: S explicietly}
\end{equation}
where $\hat{e}$ denotes the opposite direction of $e$. If we introduce the reflection $J$, a permutation matrix defined
by $J\left(e\right)=\hat{e}$ for every directed edge, then it is
not hard to deduce that $JS$ is block diagonal with blocks that we
denote by $\left(JS\right)_{v}$ corresponding to edges that are directed
into the vertex $v$. Every off-diagonal entry of the block $\left(JS\right)_{v}$
is equal to $\frac{2}{\deg v}$ and every diagonal entry is equal
to $\frac{2}{\deg v}-1$. Notices that $JS=2P-1$ where $P$ is the
$\deg v\times\deg v$ matrix all of whose entries are $\frac{1}{\deg v}$.
It is a simple check to see that $P$ is a rank one orthogonal projection
and therefore $\det\left(JS\right)_{v}=\det\left(2P-1\right)_{v}=\left(-1\right)^{\deg v-1}$.
It is easy to show that in the basis of $\left(e_{1},\hat{e}_{1},e_{2},\hat{e}_{2}...\right)$,
$J$ is block diagonal with blocks $\begin{pmatrix}0 & 1\\
1 & 0
\end{pmatrix}$ and so $\det\left(J\right)=\left(-1\right)^{E}$. Therefore, $\det\left(S\right)=\left(-1\right)^{\sum_{v}\deg v-V+E}=\left(-1\right)^{\beta-1}$,
using $\sum_{v}\deg v=2E$ and $\beta=E-V+1$. This gives 
\begin{equation}
\det\left(U_{\kv}\right)=\left(-1\right)^{\beta-1}\det\left(e^{i\kv}\right)=\left(-1\right)^{\beta-1}e^{i\sum_{e\in\E}2\kappa_{e}}.\label{eq: det U}
\end{equation}

\begin{rem}
This construction can be done to arbitrary vertex conditions (by changing
$S$ accordingly) and not only Neumann vertex conditions. But if the
vertex conditions are not preserved under the scaling $f\left(x\right)\mapsto f\left(\frac{x}{k}\right)$
then the matrix $S$ depends on the eigenvalue $k^{2}$. In such case
$\ker\left(1-U_{k\lv}\right)\cong Eig\left(\Gamma_{k\lv},1\right)$
but these are not isomorphic to $Eig\left(\Gamma_{\lv},k^{2}\right)$
in general, and the analysis on the secular manifold can not be proceeded.
A possible solution for general vertex conditions is to consider an
asymptotic $k$ dependent scattering matrix \cite{BerKuc_graphs,GnuSmi_ap06,rueckriemen2012trace}. 
\end{rem}

As stated in Lemma \ref{lem: secular bijection}, $\Sigma$ is the
zero set of $\det\left(1-U_{\kv}\right)$. As $\det\left(1-U_{\kv}\right)$
is a complex function we define a real function that vanish whenever
$\det\left(1-U_{\kv}\right)$ does.
\begin{defn}
\label{def: secular function} We define the \emph{secular function
on $\T^{\E}$ }by
\begin{align}
F\left(\kv\right):= & \det\left(U_{\kv}\right)^{-\frac{1}{2}}\det\left(1-U_{\kv}\right),
\end{align}
where we consider the square-root branch $\det\left(U_{\kv}\right)^{-\frac{1}{2}}=i^{\left(\beta-1\right)}e^{-i\sum_{e\in\E}\kappa_{e}}$.
\end{defn}

Our ``implicit'' definition of the secular manifold and its partition
to regular and singular in Definition \ref{def: secular manifold}
is different than its ``explicit'' definition as the zero set of
the secular function, as was defined in prior works such as \cite{BarGas_jsp00,BerWin_tams10,CdV_ahp15}
for example. The following lemma justify the equivalence of these
definitions:
\begin{lem}
\label{lem:The-secular-functions properties}The secular function
$F$ is a real (multi-variable) trigonometric polynomial. The secular
manifold $\Sigma$ is the zero set of $F$, and its singular part
is the set on which $\nabla F=0$. That is, 
\begin{align}
\Sigma & =\set{\kv\in\T^{\E}}{F\left(\kv\right)=0}\label{eq: secular manifold in F}\\
\Sigma^{reg} & =\set{\kv\in\T^{\E}}{F\left(\kv\right)=0\,\,\text{and}\,\,\nabla F\left(\kv\right)\ne0}\label{eq: regular part in F}\\
\Sigma^{sing} & =\set{\kv\in\T^{\E}}{F\left(\kv\right)=0\,\,\text{and}\,\,\nabla F\left(\kv\right)=0}.\label{eq: singular part in F}
\end{align}
\end{lem}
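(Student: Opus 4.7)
I would establish the three pieces---that $F$ is a real trigonometric polynomial, that $\Sigma=\{F=0\}$, and that the regular/singular split of $\Sigma$ is detected by $\nabla F$---in that order.

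First, the structural claim. Using $U_\kv=e^{i\kv}S$ with $S$ a fixed real orthogonal matrix and $e^{i\kv}$ the diagonal unitary whose $(e,e)$ and $(\hat{e},\hat{e})$ entries are both $e^{i\kappa_e}$, the permutation expansion of $\det(1-U_\kv)$ is a polynomial in $\{e^{i\kappa_e}\}_{e\in\E}$ in which each variable appears with degree at most $2$. Multiplying by $\det(U_\kv)^{-1/2}=i^{\beta-1}e^{-i\sum_e\kappa_e}$ shifts every exponent down by $1$, leaving a polynomial in $\{e^{\pm i\kappa_e}\}$, i.e.\ a trigonometric polynomial on $\T^\E$. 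For reality I would use that $\overline{U_\kv}=(U_\kv^T)^{-1}$ (unitarity), whence $\overline{\det(1-U_\kv)}=\det(1-(U_\kv^T)^{-1})=\det(U_\kv)^{-1}\det(1-U_\kv)$; together with $\overline{\det(U_\kv)^{-1/2}}=\det(U_\kv)^{1/2}$ (a direct check against the prescribed branch) this gives $\overline{F}=F$. The set equality $\Sigma=\{F=0\}$ is then immediate from Lemma~\ref{lem: secular bijection}: since $\det(U_\kv)^{-1/2}$ has unit modulus, $F(\kv)=0$ iff $\det(1-U_\kv)=0$ iff $\dim\ker(1-U_\kv)\ge 1$ iff $\kv\in\Sigma$.

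The heart of the proof is the gradient characterization. Because $U_\kv$ is unitary hence diagonalizable, $m(\kv):=\dim\ker(1-U_\kv)$ equals the algebraic multiplicity of $1$ as an eigenvalue of $U_\kv$. I would work locally at $\kv_0\in\Sigma$ along analytic rays $t\mapsto\kv_0+t\xi$; Rellich's theorem for analytic unitary one-parameter families produces analytic eigenvalue branches $e^{i\theta_j(t)}$ for $j=1,\ldots,2E$. Using $1-e^{i\theta}=-2ie^{i\theta/2}\sin(\theta/2)$ and absorbing the unimodular prefactors into $\det(U_\kv)^{1/2}$, one arrives at
\[
F(\kv_0+t\xi)=C\prod_{j=1}^{2E}\sin\!\left(\theta_j(t)/2\right)
\]
with $C$ a nonzero real constant. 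If $m(\kv_0)\ge 2$, at least two factors vanish at $t=0$, so $F(\kv_0+t\xi)=O(t^2)$ in every direction $\xi$, forcing $\nabla F(\kv_0)=0$.

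For the converse, which I expect to be the main technical step, assume $m(\kv_0)=1$ with unit eigenvector $v$. Exactly one factor, say $\sin(\theta_1(t)/2)$, vanishes at $t=0$, and the remaining product $\prod_{j\ge 2}\sin(\theta_j(0)/2)$ is nonzero; hence $\partial_\xi F(\kv_0)=\tfrac{C}{2}\,\theta_1'(0)\prod_{j\ge 2}\sin(\theta_j(0)/2)$. To exhibit a direction $\xi$ with $\theta_1'(0)\ne 0$ I would invoke first-order perturbation theory: from $\partial_{\kappa_e}U_\kv=iE_e U_\kv$ with $E_e$ the diagonal projector onto the directed-edge coordinates $e,\hat{e}$, a short computation using the standard formula for a simple eigenvalue of a normal operator yields $\partial_{\kappa_e}\theta_1=\langle v,E_e v\rangle=|v_e|^2+|v_{\hat{e}}|^2\ge 0$. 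Summing over $e\in\E$ gives $\|v\|^2=1$, so some coordinate satisfies $|v_e|^2+|v_{\hat{e}}|^2>0$; choosing $\xi$ in that direction yields $\partial_\xi F(\kv_0)\ne 0$. The only real delicacy is the appeal to analytic perturbation at a potential crossing point, which is standard for analytic unitary families (Kato) but should be cited explicitly.
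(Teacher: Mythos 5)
Your proof is correct, but it follows a genuinely different route from the paper's. The paper proves Lemma~\ref{lem:The-secular-functions properties} jointly with Lemma~\ref{lem: F p and mk}, and the engine is the adjugate matrix: it shows via Lemma~\ref{lem: adjugate} that $\mathrm{adj}(1-U_{\kv})=0$ exactly when $\dim\ker(1-U_{\kv})\ge 2$, and then applies Jacobi's identity $\nabla\det(1-U_{\kv})=\mathrm{trace}\left(\mathrm{adj}(1-U_{\kv})\nabla U_{\kv}\right)$ to conclude $\nabla F=0$ on $\Sigma^{sing}$, while on $\Sigma^{reg}$ it extracts the explicit formula $\nabla F(\kv)=p(\kv)m_{\kv}$ with both factors nonvanishing. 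Your argument instead writes $F$ locally along a ray as $C\prod_j\sin(\theta_j(t)/2)$, appeals to analytic (Rellich/Kato) perturbation of the unitary eigenvalue branches to get $F=O(t^2)$ when the kernel is $\ge 2$-dimensional, and uses first-order perturbation theory to produce a direction with $\theta_1'(0)>0$ when the kernel is one-dimensional. Both routes work; your derivative computation $\partial_{\kappa_e}\theta_1=|v_e|^2+|v_{\hat e}|^2$ recovers the same weights $m_{\kv}$ that the paper obtains algebraically. The trade-off is that your approach needs the (standard but nontrivial) fact that eigenvalues of an analytic family of normal matrices admit analytic branches even at crossings, whereas the paper's adjugate/Jacobi route is elementary linear algebra throughout and, more importantly, delivers the closed-form identity $\nabla F=p\,m_{\kv}$ of Lemma~\ref{lem: F p and mk} as a byproduct, which is then reused repeatedly elsewhere in the paper. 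Your reality argument via $\overline{U_{\kv}}=(U_{\kv}^{T})^{-1}$ is also a valid alternative to the paper's direct computation $F=\pm(-2i)^{2E}\prod_j\sin(\theta_j/2)$.
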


\begin{rem}
In \cite{CdV_ahp15} the secular manifold is called the `determinant
manifold'. Both the names follows from the description of $\Sigma$
as the zero set of $F$ and $\det\left(1-U_{\kv}\right)$.

We will prove Lemma \ref{lem:The-secular-functions properties}, together
with the next lemma that relates the gradient of $F$ at a regular
point $\kv$ to the canonical eigenfunction $f_{\kv}$ at that point.
\end{rem}

\begin{defn}
\label{def: mk weights}Let $\kv\in\Sigma^{reg}$ and let $f_{\kv}$
be the canonical eigenfunction with amplitudes vector $\boldsymbol{a}$.
We define its weights vector $m_{\kv}\in\left[0,1\right]^{\E}$ by
\begin{equation}
\left(m_{\kv}\right)_{e}=\left|a_{e}\right|^{2}+\left|a_{\hat{e}}\right|^{2}.\label{eq: define m_e}
\end{equation}
Equivalently, if $v\in\V$ is connected to $e$, then $\left(m_{\kv}\right)_{e}=\frac{f_{\kv}\left(v\right)^{2}+\partial_{e}f_{\kv}\left(v\right)^{2}}{2}$.
\end{defn}

\begin{rem}
The above weights play a special role in the work of Colin de Verdière
\cite{CdV_ahp15} on quantum ergodicity fro quantum graphs. 
\end{rem}

\begin{defn}
\label{def: p}We define an auxiliary function
\begin{align}
p\left(\kv\right):= & -i\det\left(U_{\kv}\right)^{-\frac{1}{2}}\mathrm{trace}\left(\mathrm{adj}\left(1-U_{\kv}\right)\right),
\end{align}
where $\det\left(U_{\kv}\right)^{-\frac{1}{2}}=i^{\left(\beta-1\right)}e^{-i\sum_{e\in\E}\kappa_{e}}$
and $\mathrm{adj}\left(1-U_{\kv}\right)$ is the adjugate matrix of
$\left(1-U_{\kv}\right)$. 
\end{defn}

\begin{lem}
\label{lem: F p and mk}The auxiliary function $p$ is a trigonometric
polynomial, and it is real when restricted to $\Sigma$. The gradient
of the secular function $\nabla F$ is proportional to the weights
vector $m_{\kv}$ with a factor $p$:
\begin{equation}
\forall\kv\in\Sigma^{reg}\,\,\nabla F\left(\kv\right)=p\left(\kv\right)m_{\kv}.\label{eq: grad F}
\end{equation}
In particular, all non-vanishing entries of $\nabla F$ share the
same sign. Moreover, the regular and singular parts of $\Sigma$ are
characterized by $p$,
\begin{align}
\Sigma^{reg} & =\set{\kv\in\T^{\E}}{F\left(\kv\right)=0\,\,and\,\,p\left(\kv\right)\ne0},\label{eq: Sigma reg with p}\\
\Sigma^{sing} & =\set{\kv\in\T^{\E}}{F\left(\kv\right)=0\,\,and\,\,p\left(\kv\right)=0},\label{eq: singular part using p}
\end{align}
and the normal to $\Sigma^{reg}$ at $\kv$ is given by 
\begin{equation}
\hat{n}\left(\kv\right)=\frac{m_{\kv}}{\norm{m_{\kv}}}.\label{eq: normal}
\end{equation}
\end{lem}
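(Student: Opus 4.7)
The plan is to compute $\nabla F$ directly using Jacobi's formula for the derivative of a determinant, exploit the structure $U_{\kv} = e^{i\kv}S$ to factor the derivative nicely, and then identify the resulting trace with the weights $m_{\kv}$ via the rank-one structure of $\mathrm{adj}(1-U_{\kv})$ on $\Sigma^{reg}$.

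First, I would verify that $p$ is a trigonometric polynomial: the entries of $U_{\kv}$ are trigonometric polynomials of degree one, so each cofactor (and hence the trace of $\mathrm{adj}(1-U_{\kv})$) is a trigonometric polynomial; multiplication by $\det(U_{\kv})^{-1/2} = i^{\beta-1}e^{-i\sum\kappa_e}$ keeps it in that class. Next, I would apply Jacobi's formula together with $\partial_{\kappa_e} U_{\kv} = i P_e U_{\kv}$, where $P_e$ is the diagonal projection onto the two basis vectors indexed by $e,\hat{e}$, to obtain
\begin{equation*}
\frac{\partial \det(1-U_{\kv})}{\partial \kappa_e} = -i\,\mathrm{trace}\bigl(P_e U_{\kv}\,\mathrm{adj}(1-U_{\kv})\bigr).
\end{equation*}
Since $\partial_{\kappa_e}\det(U_{\kv})^{-1/2} = -i\det(U_{\kv})^{-1/2}$, the contribution from differentiating the prefactor is $-i\det(U_{\kv})^{-1/2}\det(1-U_{\kv})$, which vanishes on $\Sigma$. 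This leaves the clean identity $\partial_{\kappa_e} F|_{\Sigma} = -i\det(U_{\kv})^{-1/2}\,\mathrm{trace}\bigl(P_e U_{\kv}\,\mathrm{adj}(1-U_{\kv})\bigr)$.

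The heart of the proof is the rank-one identification on $\Sigma^{reg}$. There $\dim\ker(1-U_{\kv})=1$, so the adjugate has rank one; using that $U_{\kv}$ is unitary (so $\ker(1-U_{\kv})=\ker(1-U_{\kv}^*)$) I would write $\mathrm{adj}(1-U_{\kv}) = \mu\,\boldsymbol{a}\boldsymbol{a}^*$, where $\boldsymbol{a}$ is the unit amplitudes vector of the canonical eigenfunction $f_{\kv}$ (Lemma~\ref{lem: secular bijection} gives $U_{\kv}\boldsymbol{a}=\boldsymbol{a}$) and $\mu = \mathrm{trace}(\mathrm{adj}(1-U_{\kv}))$. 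Then $\mathrm{trace}(P_e U_{\kv}\boldsymbol{a}\boldsymbol{a}^*) = \boldsymbol{a}^* P_e \boldsymbol{a} = |a_e|^2+|a_{\hat e}|^2 = (m_{\kv})_e$, yielding $\nabla F(\kv)=p(\kv)\,m_{\kv}$ as claimed. For points in $\Sigma^{sing}$ one has $\dim\ker(1-U_{\kv})\ge 2$, hence $\mathrm{rank}(1-U_{\kv})\le 2E-2$, which forces every $(2E-1)$-minor to vanish and thus $\mathrm{adj}(1-U_{\kv})=0$. This gives $p(\kv)=0$ on $\Sigma^{sing}$ and proves the characterizations \eqref{eq: Sigma reg with p} and \eqref{eq: singular part using p}.

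For the reality of $p$ on $\Sigma$, I would invoke that $F$ is real on $\T^{\E}$ (part of Lemma~\ref{lem:The-secular-functions properties}), so $\nabla F$ is real; combined with $\nabla F=p\,m_{\kv}$ and the fact that $m_{\kv}$ is real and nonzero on $\Sigma^{reg}$ (since $\|\boldsymbol{a}\|=1$ implies $\sum_e(m_{\kv})_e=1$), this forces $p(\kv)\in\R$ on $\Sigma^{reg}$, and $p=0$ on $\Sigma^{sing}$ extends this to all of $\Sigma$. The sign statement follows because the entries of $m_{\kv}$ are all non-negative. Finally, for the normal, since $\Sigma^{reg}$ is the smooth locus of $\{F=0\}$ and $\nabla F$ does not vanish there (as $p\ne 0$ and $m_{\kv}\ne 0$), the unit normal is $\nabla F/\|\nabla F\| = (p/|p|)\,m_{\kv}/\|m_{\kv}\|$, which up to the orientation choice (absorbed into the sign convention for the canonical eigenfunction) gives $\hat n(\kv)=m_{\kv}/\|m_{\kv}\|$.

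The main technical obstacle will be establishing the rank-one factorization $\mathrm{adj}(1-U_{\kv})=\mu\boldsymbol{a}\boldsymbol{a}^*$ with the correct normalization of $\boldsymbol{a}$ (matching the canonical eigenfunction's amplitudes vector, so that $(m_{\kv})_e$ literally appears) and then tracking the phase factor $\det(U_{\kv})^{-1/2}$ through the computation to conclude that the scalar coefficient is precisely $p(\kv)$ as defined. The algebraic observation that $\mathrm{adj}$ vanishes identically on $\Sigma^{sing}$ — which simultaneously forces $p=0$ there and gives the characterizations of $\Sigma^{reg}$ and $\Sigma^{sing}$ — is the key structural fact making the whole lemma fit together.
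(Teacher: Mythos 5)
Your proof is correct and takes essentially the same route as the paper: Jacobi's formula with $\partial_{\kappa_e}U_{\kv}=iP_eU_{\kv}$, the rank-one adjugate factorization $\mathrm{adj}(1-U_{\kv})=\mathrm{trace}(\mathrm{adj}(1-U_{\kv}))\,\boldsymbol{a}\boldsymbol{a}^*$ on $\Sigma^{reg}$ (and vanishing of the adjugate on $\Sigma^{sing}$) to identify $\nabla F=p\,m_{\kv}$ and characterize $\Sigma^{reg}/\Sigma^{sing}$ via $p$. The one minor difference is your reality argument for $p$: you deduce $p\in\R$ on $\Sigma^{reg}$ from $\nabla F=p\,m_{\kv}$ with $\nabla F$ real and $m_{\kv}$ real, positive, and nonzero, whereas the paper computes $p$ directly from the eigenvalue factorization $\pm i(-2i)^{2E-1}\Pi_{j\ge2}\sin(\theta_j/2)$; both are valid and the change is cosmetic.
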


In order to prove Lemmas \ref{lem:The-secular-functions properties}
and \ref{lem: F p and mk} we need to discuss some properties of the
adjugate matrix:
\begin{lem}
\label{lem: adjugate}Let $U$ be a unitary $n$-dimensional matrix
with eigenvalues $\left\{ e^{i\theta_{j}}\right\} _{j=1}^{n}$ and
(orthonormal) eigenvectors $\left\{ \boldsymbol{a}_{j}\right\} _{j=1}^{n}$,
then
\begin{enumerate}
\item The adjugate matrix $\mathrm{adj}\left(1-U\right)$ satisfies
\begin{equation}
\mathrm{adj}\left(1-U\right)=\sum_{j=1}^{n}\Pi_{i\ne j}\left(1-e^{i\theta_{i}}\right)\boldsymbol{a}_{j}\boldsymbol{a}_{j}^{*}.\label{eq: adjugate in general -inverted}
\end{equation}
In particular $\mathrm{adj}\left(1-U\right)=0$ if and only if $\dim\ker\left(1-U\right)\ge2$.
\\
\item Let $\dim\ker\left(1-U\right)=1$ and let $\boldsymbol{a}\in\ker\left(1-U\right)$
be a normalized vector. Let us number the eigenvalues such that $e^{i\theta_{1}}=1$
and $e^{i\theta_{j}}\ne1$ for $j\ge2$. Then 
\begin{align}
\mathrm{adj}\left(1-U\right) & =\Pi_{j=2}^{n}\left(1-e^{i\theta_{j}}\right)\boldsymbol{a}\boldsymbol{a}^{*},\label{eq: adjugate in general}\\
\mathrm{trace}\left(\mathrm{adj}\left(1-U\right)\right) & =\Pi_{j=2}^{n}\left(1-e^{i\theta_{j}}\right)\ne0.\label{eq: trace of adjugate}
\end{align}
\item Let $U_{t}$ be a $t\in\R$ dependent family of unitary matrices with
a self-adjoint matrix $A$ such that $\frac{d}{dt}U_{t}=iAU_{t}$.
Let $t_{0}$ such that $\dim\ker\left(1-U_{t_{0}}\right)=1$ and let
$\boldsymbol{a}\in\ker\left(1-U_{t_{0}}\right)$ be a normalized vector
. Then 
\begin{align}
\frac{d}{dt}\det\left(1-U_{t}\right)= & -i\cdot\mathrm{trace}\left(\mathrm{adj}\left(1-U_{t_{0}}\right)\right)\left\langle \boldsymbol{a},A\boldsymbol{a}\right\rangle .\label{eq: d(det(1-U)) in general}
\end{align}
\end{enumerate}
\end{lem}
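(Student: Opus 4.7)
The plan is to prove part~1 via the spectral decomposition and the identity $\mathrm{adj}(M)M = \det(M)I$, then deduce part~2 as a corollary, and finally obtain part~3 from Jacobi's formula for the derivative of a determinant.

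For part~1, I would first treat the invertible case. Using the orthonormal eigenbasis, $1-U = \sum_j(1-e^{i\theta_j})\boldsymbol{a}_j\boldsymbol{a}_j^*$; when none of the $e^{i\theta_j}$ equals $1$, this is invertible with $(1-U)^{-1}=\sum_j(1-e^{i\theta_j})^{-1}\boldsymbol{a}_j\boldsymbol{a}_j^*$, and $\det(1-U)=\Pi_j(1-e^{i\theta_j})$. Multiplying gives the claimed formula $\mathrm{adj}(1-U)=\sum_j\Pi_{i\ne j}(1-e^{i\theta_i})\boldsymbol{a}_j\boldsymbol{a}_j^*$. To extend to the general case I would test both sides against each $\boldsymbol{a}_k$: for $k$ with $e^{i\theta_k}\ne 1$, the identity $\mathrm{adj}(1-U)(1-U)\boldsymbol{a}_k=\det(1-U)\boldsymbol{a}_k$ forces $\mathrm{adj}(1-U)\boldsymbol{a}_k=\Pi_{i\ne k}(1-e^{i\theta_i})\boldsymbol{a}_k$, matching the formula. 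For $k$ with $e^{i\theta_k}=1$, I use the standard rank bound on the adjugate: if $\dim\ker(1-U)\ge 2$ then $\mathrm{rank}(1-U)\le n-2$, whence $\mathrm{adj}(1-U)=0$, which agrees with the RHS since every product $\Pi_{i\ne j}(1-e^{i\theta_i})$ then contains at least one zero factor. If $\dim\ker(1-U)=1$ with $e^{i\theta_1}=1$, then $\mathrm{adj}(1-U)$ has rank at most $1$ and its image lies in $\ker(1-U)=\mathrm{span}(\boldsymbol{a}_1)$; combined with the action on $\boldsymbol{a}_k$ for $k\ge 2$ (which gives $0$, since the product $\Pi_{i\ne k}(1-e^{i\theta_i})$ contains the factor $1-e^{i\theta_1}=0$), this pins $\mathrm{adj}(1-U)$ down up to a scalar on $\boldsymbol{a}_1\boldsymbol{a}_1^*$, and the scalar $\Pi_{j\ge 2}(1-e^{i\theta_j})$ can be recovered either by a continuity/perturbation argument ($U\mapsto e^{-i\epsilon}U$ moves all eigenvalues off $1$ and the adjugate is polynomial in matrix entries), or by comparing traces.

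Part~2 is then immediate by specializing the formula in part~1: only the $j=1$ term survives, which gives $\mathrm{adj}(1-U)=\Pi_{j\ge 2}(1-e^{i\theta_j})\boldsymbol{a}\boldsymbol{a}^*$, and taking the trace (using $\|\boldsymbol{a}\|=1$) yields $\mathrm{trace}(\mathrm{adj}(1-U))=\Pi_{j\ge 2}(1-e^{i\theta_j})$, which is nonzero because no other eigenvalue equals $1$.

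For part~3 I would invoke Jacobi's formula $\frac{d}{dt}\det(M_t)=\mathrm{trace}(\mathrm{adj}(M_t)\,\dot M_t)$ with $M_t=1-U_t$ and $\dot M_t=-iAU_t$, obtaining
\begin{equation*}
\tfrac{d}{dt}\det(1-U_t)\big|_{t_0}=-i\,\mathrm{trace}\bigl(\mathrm{adj}(1-U_{t_0})\,AU_{t_0}\bigr).
\end{equation*}
Plugging in the rank-one expression from part~2, $\mathrm{adj}(1-U_{t_0})=\mathrm{trace}(\mathrm{adj}(1-U_{t_0}))\,\boldsymbol{a}\boldsymbol{a}^*$, and using $U_{t_0}\boldsymbol{a}=\boldsymbol{a}$ (because $\boldsymbol{a}\in\ker(1-U_{t_0})$) together with $\mathrm{trace}(\boldsymbol{a}\boldsymbol{a}^*A)=\langle\boldsymbol{a},A\boldsymbol{a}\rangle$ gives the claimed identity. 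The only mildly subtle step in the whole argument is handling the degenerate case $\dim\ker(1-U)=1$ in part~1, where the formula cannot be read off directly from $(1-U)^{-1}$; I would resolve this either by the rank/trace characterization sketched above or, more cleanly, by a continuity argument applied to the perturbed family $e^{-i\epsilon}U$ and letting $\epsilon\to 0$, since both sides of the claimed identity are polynomial in the entries of $U$.
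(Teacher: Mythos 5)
Your proposal is correct and follows essentially the same route as the paper: derive the adjugate formula from the spectral decomposition in the invertible case, extend by the continuity-of-adjugate argument (which you rightly single out as the clean way to close the degenerate case, and which is exactly what the paper does), specialize to get part~2, and apply Jacobi's formula with $U_{t_0}\boldsymbol{a}=\boldsymbol{a}$ and the cyclic trace identity for part~3. Your alternative sketch via rank bounds and eigenvector action is also valid for the vanishing statement, though the ``comparing traces'' suggestion for recovering the scalar is not self-contained; the continuity argument is what actually carries the proof.
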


\begin{proof}
If $1-U$ is invertible, then its adjugate matrix satisfies,
\begin{align}
\mathrm{adj}\left(1-U\right)=\det\left(1-U\right)\left(1-U\right)^{-1}.
\end{align}
Since $\det\left(1-U\right)=\Pi_{j=1}^{n}\left(1-e^{i\theta_{j}}\right)$
and 
\[
\left(1-U\right)^{-1}=\sum_{j=1}^{n}\frac{1}{1-e^{i\theta_{j}}}\boldsymbol{a}_{j}\boldsymbol{a}_{j}^{*},
\]
then
\begin{equation}
\mathrm{adj}\left(1-U\right)=\sum_{j=1}^{n}\frac{1}{1-e^{i\theta_{j}}}\Pi_{i=1}^{n}\left(1-e^{i\theta_{i}}\right)\boldsymbol{a}_{j}\boldsymbol{a}_{j}^{*}=\sum_{j=1}^{n}\Pi_{i\ne j}\left(1-e^{i\theta_{i}}\right)\boldsymbol{a}_{j}\boldsymbol{a}_{j}^{*}.\label{eq: adjugate general}
\end{equation}
By definition every entry of $\mathrm{adj}\left(1-U\right)$ is a
minor of $1-U$ up to a sign, so it is continuous in the entries of
$1-U$. As the eigenvalues and eigenvectors are also continuous and
invertible matrices are dense within all matrices, then (\ref{eq: adjugate general})
may be extended to every matrix with such spectral decomposition,
thus proving (\ref{eq: adjugate in general -inverted}). Observe that
since $\left\{ \boldsymbol{a}_{j}\right\} _{j=1}^{n}$ are orthogonal
then a matrix $\sum c_{j}\boldsymbol{a}_{j}\boldsymbol{a}_{j}^{*}$
is equal to zero if and only if every $c_{j}=0$. For the above adjugate
matrix, all $\Pi_{i\ne j}\left(1-e^{i\theta_{i}}\right)$ coefficients
vanish if and only if there at at least two $e^{i\theta_{j}}$'s that
are equal to 1. Namely, 
\[
\mathrm{adj}\left(1-U\right)=0\iff\dim\ker\left(1-U\right)\ge2.
\]
Clearly, if $e^{i\theta_{1}}=1$ and $e^{i\theta_{j}}\ne1$ for $j\ne1$,
then the only non-vanishing coefficient is $\Pi_{j=2}^{n}\left(1-e^{i\theta_{j}}\right)$
and so (\ref{eq: adjugate in general}) follows. As $\boldsymbol{a}$
in (\ref{eq: adjugate in general}) is normalized, then $\mathrm{trace}\left(\boldsymbol{a}\boldsymbol{a}^{*}\right)=1$
and therefore $\mathrm{trace}\left(\mathrm{adj}\left(1-U\right)\right)=\Pi_{j=2}^{n}\left(1-e^{i\theta_{j}}\right)\ne0$
proving (\ref{eq: trace of adjugate}). To prove (\ref{eq: d(det(1-U)) in general})
we use Jacobi's identity for the derivative of a matrix and the given
relation $\frac{d}{dt}U_{t}=iAU_{t}$:
\begin{align*}
\frac{d}{dt}\left(\det\left(1-U_{t}\right)\right) & =\mathrm{trace}\left(\mathrm{adj}\left(1-U_{t}\right)\frac{d}{dt}\left(1-U_{t}\right)\right)\\
= & -i\cdot\mathrm{trace}\left(\mathrm{adj}\left(1-U_{t}\right)AU_{t}\right).
\end{align*}
At $t=t_{0}$, using (\ref{eq: adjugate in general}) and (\ref{eq: trace of adjugate}),
we get 
\begin{align*}
\frac{d}{dt}\left(\det\left(1-U_{t_{0}}\right)\right) & =-i\cdot\mathrm{trace}\left(\mathrm{adj}\left(1-U_{t_{0}}\right)\right)\cdot\mathrm{trace}\left(\boldsymbol{a}\boldsymbol{a}^{*}AU_{t_{0}}\right)\\
= & -i\cdot\mathrm{trace}\left(\mathrm{adj}\left(1-U_{t_{0}}\right)\right)\cdot\mathrm{trac}e\left(\boldsymbol{a}\boldsymbol{a}^{*}A\right)\\
= & -i\cdot\mathrm{trace}\left(\mathrm{adj}\left(1-U_{t_{0}}\right)\right)\left\langle \boldsymbol{a},A\boldsymbol{a}\right\rangle .
\end{align*}
Where in the second step we used $\mathrm{trace}\left(\boldsymbol{a}\boldsymbol{a}^{*}AU_{t_{0}}\right)=\mathrm{trace}\left(U_{t_{0}}\boldsymbol{a}\boldsymbol{a}^{*}A\right)$
and $U_{t_{0}}\boldsymbol{a}=\boldsymbol{a}$ (by the definition of
$\boldsymbol{a}$).
\end{proof}
We may now prove lemmas \ref{lem:The-secular-functions properties}
and \ref{lem: F p and mk}. Although presented separately, it will
be convenient to prove both lemmas together:
\begin{proof}
First notice that both $\det\left(1-U_{\kv}\right)$ and $\mathrm{trace}\left(\mathrm{adj}\left(1-U_{\kv}\right)\right)$
are trigonometric polynomials as they are polynomial in the entries
of $U_{\kv}$ which are linear in $\left\{ e^{i\kappa_{e}}\right\} _{e\in\E}$.
Recall that we consider $\det\left(U_{\kv}\right)^{-\frac{1}{2}}=i^{\left(\beta-1\right)}e^{-i\sum_{e\in\E}\kappa_{e}}$
so it is also a trigonometric polynomial, and so both $F$ and $p$
are trigonometric polynomials. Let $\left\{ e^{i\theta_{j}}\right\} _{j=1}^{2E}$
be the $\kv$ dependent eigenvalues of $U_{\kv}$ with $\theta_{j}\in\R/2\pi\Z$
for every $j$. Then 
\[
F\left(\kv\right)=\det\left(U_{\kv}\right)^{-\frac{1}{2}}\det\left(1-U_{\kv}\right)=\pm\Pi_{n=1}^{2E}e^{-i\frac{\theta_{j}}{2}}\left(1-e^{i\theta_{j}}\right)=\pm\left(-2i\right)^{2E}\Pi_{n=1}^{2E}\sin\left(\frac{\theta_{j}}{2}\right),
\]
where the $\pm$ ambiguity is due to possible square-root branch choices.
The above RHS is real, which proves that $F$ is real and hence a real trigonometric polynomial (and in particular
real analytic). Since $\left|p\left(\kv\right)\right|=\left|\mathrm{trace}\left(\mathrm{adj}\left(1-U_{\kv}\right)\right)\right|$
we use Lemma \ref{lem: adjugate} to conclude that $p\left(\kv\right)\ne0$
if $\dim\left(\ker\left(1-U_{\kv}\right)\right)=1$ and that $p\left(\kv\right)=0$
if $\dim\left(\ker\left(1-U_{\kv}\right)\right)\ge2$ as in such case
$\mathrm{adj}\left(1-U_{\kv}\right)=0$. By Lemma \ref{lem: secular bijection}
we may conclude that $p$ is a trigonometric polynomial that vanish
on $\Sigma^{sing}$ and is non-zero on $\Sigma^{reg}$. To show that
$p$ is real on $\Sigma$, it enough to show it for $\Sigma^{reg}$.
If $\kv\in\Sigma^{reg}$, and with out loss of generality $e^{i\theta_{1}}=1$
then according to Lemma \ref{lem: adjugate}, $\mathrm{trace}\left(\mathrm{adj}\left(1-U_{\kv}\right)\right)=\Pi_{n=2}^{2E}\left(1-e^{i\theta_{j}}\right)\ne0$,
and therefore 
\begin{align*}
p\left(\kv\right) & =-i\det\left(U_{\kv}\right)^{-\frac{1}{2}}\mathrm{trace}\left(\mathrm{adj}\left(1-U_{\kv}\right)\right)\\
= & \pm i\Pi_{n=2}^{2E}e^{-i\frac{\theta_{j}}{2}}\left(1-e^{i\theta_{j}}\right)\\
= & \pm i\left(-2i\right)^{2E-1}\Pi_{n=2}^{2E}\sin\left(\frac{\theta_{j}}{2}\right)\in\R\setminus\left\{ 0\right\} .
\end{align*}
It follows that $p$ is real on $\Sigma$. Notice that we cannot deduce
from the above that $p$ is a real trigonometric polynomial, but it
is clear that $p$ is a real analytic function on $\Sigma$. Both
(\ref{eq: Sigma reg with p}) and (\ref{eq: singular part using p})
follows from  

Let $\kv\in\Sigma$ so that $\det\left(U_{\kv}\right)^{\frac{1}{2}}\ne1$
and $\det\left(1-U_{\kv}\right)=0$, then
\begin{equation}
\nabla F\left(\kv\right)=\det\left(U_{\kv}\right)^{\frac{1}{2}}\nabla\det\left(1-U_{\kv}\right).
\end{equation}
If $\kv\in\Sigma^{sing}$, namely $\dim\ker\left(1-U_{\kv}\right)\ge2$
then $adj\left(1-U_{\kv}\right)=0$ and by Jacobi identity, 
\begin{equation}
\nabla\det\left(1-U_{\kv}\right)=\mathrm{trace}\left(\mathrm{adj}\left(1-U_{\kv}\right)\nabla U_{\kv}\right)=0,
\end{equation}
so $\nabla F\left(\kv\right)=0$. If $\kv\in\Sigma^{reg}$, let $\boldsymbol{a}$
be the amplitudes vector of $f_{\kv}$ so that it is normalized and
in $a\in\ker\left(1-U_{\kv}\right)$. Notice that 
\[
\forall e\in\E\,\,\frac{\partial}{\partial\kappa_{e}}U_{\kv}=\left(\frac{\partial}{\partial\kappa_{e}}e^{i\kv}\right)S=iA_{e}U_{\kv}
\]
where $A_{e}$ is diagonal with $\left(A\right)_{e',e'}=\begin{cases}
1 & e'=e\,or\,e'=\hat{e}\\
0 & \mathrm{otherwise}
\end{cases}$. In particular, $\left\langle \boldsymbol{a},A_{e}\boldsymbol{a}\right\rangle =\left|a_{e}\right|^{2}+\left|a_{\hat{e}}\right|^{2}=\left(m_{\kv}\right)_{e}$
(see (\ref{eq: define m_e})). We can apply Lemma \ref{lem: adjugate}
to get that 
\begin{align*}
\forall e\in\E\,\,\,\frac{\partial}{\partial\kappa_{e}}\det\left(1-U_{\kv}\right) & =-i\cdot\mathrm{trace}\left(\mathrm{adj}\left(1-U_{\kv}\right)\right)\left\langle \boldsymbol{a},A_{e}\boldsymbol{a}\right\rangle \\
= & -i\cdot\mathrm{trace}\left(\mathrm{adj}\left(1-U_{\kv}\right)\right)\left(m_{\kv}\right)_{e}.
\end{align*}
We can deduce that 
\[
\nabla F\left(\kv\right)=-i\cdot\det\left(U_{\kv}\right)^{-\frac{1}{2}}\mathrm{trace}\left(\mathrm{adj}\left(1-U_{\kv}\right)\right)m_{\kv}=p\left(\kv\right)m_{\kv}.
\]
This proves (\ref{eq: grad F}) and as both $p\left(\kv\right)$ and
$m_{\kv}$ does not vanish on $\Sigma^{reg}$, then so does $\nabla F$.
We thus showed that $\nabla F$, like $p$, vanish in $\Sigma$ only
on $\Sigma^{sing}$ which finish the proof of Lemma \ref{lem:The-secular-functions properties}.
To finish the proof of Lemma \ref{lem: F p and mk} it is only left
to notice that the normal to $\Sigma$ which is the zero set of $F$
is proportional to $\nabla F$ and thus to $m_{\kv}$, which proves
(\ref{eq: normal}).
\end{proof}
We may now prove Proposition \ref{prop: Secular manifold structure}:
\begin{proof}
\label{proof of structure} Lemma \ref{lem:The-secular-functions properties}
characterize $\Sigma$ as the zero set of the real analytic function
$F$ and $\Sigma^{sing}$ as the zero set of the real analytic function
$\norm{\nabla F}^{2}+F^{2}$, so both are real analytic varieties.
By our assumption on the graph it is not homeomorphic to a single
loop, and so according to \cite{Fri_ijm05} there is a choice of $\lv$
with a simple eigenvalue $k>0$, and therefore $\kv=\left\{ k\lv\right\} \in\Sigma^{reg}$,
and so $\Sigma^{reg}\ne\emptyset$. For such $\kv\in\Sigma^{reg}$,
according to Lemma \ref{lem:The-secular-functions properties}, $\nabla F\left(\kv\right)\ne0$
and so $\nabla F$ does not vanish on a small neighborhood of $\kv$,
$O_{\kappa}\subset\Sigma$, which is therefore a manifold of dimension
$E-1$. Therefore, $\Sigma^{reg}$ which is the union of such points
is an $E-1$ dimensional manifold, and so $\Sigma$ is $E-1$ dimensional.
Since $\Sigma^{reg}$ is also an open subset of (the real analytic
variety) $\Sigma$, then it is a real analytic manifold. The singular
part $\Sigma^{sing}\subset\Sigma$ is therefore of dimension smaller or equal to $E-1$. Assume by contradiction that $\dim\Sigma^{sing}=E-1$,
and let $\kv_{0}\in\Sigma^{sing}$ such that it has an $E-1$ dimensional
(real analytic) neighborhood $O\subset\Sigma^{sing}$ around it. Then
the normal $\hat{n}\left(\kv\right)$ is well defined and smooth on
$O$. Let $\hat{n}\left(\kv_{0}\right)$ be the normal at $\kv_{0}$
and denote $\hat{n}^{\perp}:=\set{\lv\in\left(\R_{+}\right)^{\E}}{\lv\cdot\hat{n}\left(\kv_{0}\right)\ne0}$.
According to Friedlander's genericity result in \cite{Fri_ijm05},
there exists a residual set $G\subset\left(\R_{+}\right)^{\E}$ such
that for every $\lv\in G$, every eigenvalue of $\Gamma_{\lv}$ is
simple. The set of rationally independent $\lv$'s in $\left(\R_{+}\right)^{\E}$
is residual according to Remark \ref{rem: rationally indepdnet} and
$\hat{n}^{\perp}$ is residual as it is open with complement of positive
co-dimension. Therefore the set of rationally independent $\lv$'s
in $\hat{n}^{\perp}\cap G$ is residual. Let $\lv$ be such, then
$\lv\cdot\hat{n}\left(\kv_{0}\right)\ne0$ and so there exists a neighborhood
$\tilde{O}\subset O$ on which $\hat{n}\cdot\lv\ne0$. The linear
flow $k\mapsto\left\{ k\lv\right\} $ is dense in $\T^{\E}$ because
$\lv$ is rationally independent, and is transversal to $\tilde{O}$
since $\hat{n}\cdot\lv\ne0$ on every point in $\tilde{O}$. Since
$\tilde{O}$ is $E-1$ dimensional, and the flow is dense and transversal
to $\tilde{O}$, then there are infinitely many intersections $\left\{ k_{m}\lv\right\} \in\tilde{O}\subset\Sigma^{sing}$.
However, by the definition of $\Sigma^{sing}$, if $\left\{ k_{m}\lv\right\} \in\Sigma^{sing}$
then $k_{m}$ is a non-simple eigenvalue of $\Gamma_{\lv}$, in contradiction
to the choice of $\lv\in G$. Therefore, $\Sigma^{sing}$ is of dimension
strictly smaller than $E-1$.
\end{proof}
\begin{rem}
Colin de Verdière proves Friedlander's result in section 7 of \cite{CdV_ahp15}
by proving that $\Sigma^{reg}$ is always non-empty. His proof requires
an argument (which does not appear in \cite{CdV_ahp15}) that states
that either the set where both $F$ and $\nabla F$ vanish is $\Sigma$
or it is of positive co-dimension in $\Sigma$. We haven't found such
an argument, but if the conjecture in \cite{CdV_ahp15} regarding
the irreducibility of $F$ holds, then the needed argument follows. 
\end{rem}

\begin{defn}
\label{def: Inversion}We define the inversion $\mathcal{I}$ on $\T^{\E}$
(also for any $\T^{n}$) by $\I\left(\kv\right):=\left\{ -\kv\right\} $. 
\end{defn}

\begin{lem}
\label{lem: inversion on F} The inversion $\mathcal{\I}$ is an isometry
of the secular manifold $\Sigma$ that preserves both $\Sigma$ and
its partition to $\Sigma^{reg},\Sigma^{sing}$. The secular function
$F$, together with $p$ and $m_{\kv}$, transform under the inversion
as follows:
\begin{align}
F\circ\I & =\left(-1\right)^{\beta-1}F,\label{eq: Inversion on F}
\end{align}
and for any $\kv\in\Sigma^{reg}$, 
\begin{align}
p\left(\I\left(\kv\right)\right) & =\left(-1\right)^{\beta}p\left(\kv\right),\label{eq: inversion on p}\\
m_{\I\left(\kv\right)} & =m_{\kv}.\label{eq: inversion on m}
\end{align}
\end{lem}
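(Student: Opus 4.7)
The plan rests on a single structural observation: the bond-scattering matrix $S$ is real by (\ref{eq: S explicietly}), so the decomposition $U_{\kv} = e^{i\kv} S$ gives
\[
U_{\I(\kv)} \;=\; U_{-\kv} \;=\; e^{-i\kv} S \;=\; \overline{U_{\kv}}.
\]
From this single identity, all four claims will fall out by complex conjugation, combined with the eigenvalue formulas of Lemma \ref{lem: adjugate} and the branch identity $\det(U_{\kv}) = (-1)^{\beta-1} e^{2i\sum_{e}\kappa_{e}}$ from (\ref{eq: det U}).

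For $F\circ \I = (-1)^{\beta-1} F$, I would start from
\[
F(-\kv) \;=\; i^{\beta-1}\, e^{i\sum_{e}\kappa_{e}}\, \det\bigl(1 - U_{-\kv}\bigr) \;=\; i^{\beta-1}\, e^{i\sum_{e}\kappa_{e}}\, \overline{\det(1-U_{\kv})},
\]
and then invoke the general identity $\overline{\det(1-U)} = \det(U)^{-1}\det(1-U)$ for unitary $U$, which is immediate from $\overline{1-e^{i\theta_{j}}} = e^{-i\theta_{j}}(1-e^{i\theta_{j}})$ applied to the spectral decomposition. Substituting (\ref{eq: det U}) collapses the phase factors and gives $(-1)^{\beta-1}F(\kv)$. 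For $p\circ\I = (-1)^{\beta}\, p$ on $\Sigma^{reg}$, I would use (\ref{eq: trace of adjugate}): with the convention $e^{i\theta_{1}}=1$, one has $\tr(\adj(1-U_{\kv})) = \prod_{j\ge 2}(1-e^{i\theta_{j}})$. Since the eigenvalues of $U_{-\kv}$ are $\{e^{-i\theta_{j}}\}$, the same formula at $-\kv$ together with the elementary identity $1-e^{-i\theta} = -e^{-i\theta}(1-e^{i\theta})$ applied once per factor yields $\tr(\adj(1-U_{-\kv})) = -\det(U_{\kv})^{-1}\,\tr(\adj(1-U_{\kv}))$. Combining with the branch of $\det(U_{-\kv})^{-1/2}$ and using (\ref{eq: det U}) once more produces the factor $(-1)^{\beta}$.

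For $m_{\I(\kv)} = m_{\kv}$, complex conjugation gives a norm-preserving bijection $\ker(1-U_{\kv})\to\ker(1-U_{-\kv})$. If $\boldsymbol{a}$ is the amplitudes vector of $f_{\kv}$, then Lemma \ref{lem: |ae|=00003D|ae'|} together with the realness of $f_{\kv}$ implies that on each edge $e$ of length $l_{e}\in(0,2\pi]$ one has $a_{\hat e} = \overline{a_{e}}\,e^{il_{e}}$. The corresponding length in $\Gamma_{\I(\kv)}$ is $\tilde l_{e} = 2\pi - l_{e}$ with $e^{-i\tilde l_{e}} = e^{il_{e}}$, and a short calculation shows that $\overline{\boldsymbol{a}}$ automatically satisfies the realness relation on $\Gamma_{\I(\kv)}$; being normalized, it must be the amplitudes vector of $\pm f_{\I(\kv)}$. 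Definition \ref{def: mk weights} then gives $(m_{\I(\kv)})_{e} = |\overline{a_{e}}|^{2} + |\overline{a_{\hat e}}|^{2} = (m_{\kv})_{e}$. Finally, $\I$ is manifestly an isometry of the flat torus $\T^{\E}$; since $F\circ\I = \pm F$ globally and $p\circ\I = \pm p$ on $\Sigma^{reg}$, the characterizations (\ref{eq: secular manifold in F}), (\ref{eq: Sigma reg with p}), (\ref{eq: singular part using p}) make the $\I$-invariance of $\Sigma$, $\Sigma^{reg}$, and $\Sigma^{sing}$ immediate. The main obstacle throughout is the careful bookkeeping of signs coming from the square-root branch $\det(U_{\kv})^{-1/2} = i^{\beta-1}e^{-i\sum_{e}\kappa_{e}}$, the parity factor $(-1)^{\beta-1}$ in (\ref{eq: det U}), and the $-1$'s produced when pulling $e^{-i\theta_{j}}$ out of each factor $1-e^{-i\theta_{j}}$; once these are tabulated the rest is routine.
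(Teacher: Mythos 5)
Your proposal starts from the same core observation as the paper — $U_{\I(\kv)}=\overline{U_{\kv}}$ because $S$ is real — but the route you take from there for $F$ and $p$ differs. The paper conjugates the \emph{entire} expression at once, using $i^{\beta-1}=(-1)^{\beta-1}\overline{i^{\beta-1}}$ to pull a $(-1)^{\beta-1}$ out of the prefactor, which yields $F(\I(\kv))=(-1)^{\beta-1}\overline{F(\kv)}$ and then finishes by invoking the previously established fact that $F$ (resp.\ $p$) is real on $\T^{\E}$ (resp.\ on $\Sigma$). You instead keep everything holomorphic: you eliminate the conjugate of the determinant via the spectral identity $\overline{\det(1-U)}=\det(U)^{-1}\det(1-U)$, substitute the explicit formula $\det(U_{\kv})=(-1)^{\beta-1}e^{2i\sum_e\kappa_e}$ from (\ref{eq: det U}), and cancel phases directly. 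This buys you independence from the realness of $F$ and $p$, at the cost of more careful phase bookkeeping. Both approaches are valid and give the same answer.

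One small slip in the bookkeeping: the per-factor identity you cite in the $F$ step, $\overline{1-e^{i\theta_j}}=e^{-i\theta_j}(1-e^{i\theta_j})$, is off by a sign — correctly it is $\overline{1-e^{i\theta_j}}=1-e^{-i\theta_j}=-e^{-i\theta_j}(1-e^{i\theta_j})$ (you in fact write it with the correct sign later, in the $p$ step). The error is harmless here because you apply it to a product of $2E$ factors and $(-1)^{2E}=1$, so the stated identity $\overline{\det(1-U)}=\det(U)^{-1}\det(1-U)$ is still correct; but the intermediate sign does matter in the $\mathrm{trace}(\mathrm{adj})$ computation, where there are only $2E-1$ factors and the resulting $(-1)^{2E-1}=-1$ is precisely where your minus sign comes from. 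It is worth cleaning this up so both occurrences of the identity carry the same sign. The argument for $m_{\I(\kv)}=m_{\kv}$ and the isometry/invariance conclusions match the paper's reasoning.
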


\begin{proof}
To prove Lemma \ref{lem: inversion on F}, notice that $e^{i\mathcal{I}\left(\kv\right)}=\overline{e^{i\kv}}$
and $S$ is real so $U_{\mathcal{I}\left(\kv\right)}=\overline{U_{\kv}}$
and in particular 
\begin{align*}
F\left(\mathcal{I}\left(\kv\right)\right) & =i^{\left(\beta-1\right)}e^{i\sum_{e\in\E}\kappa_{e}}\det\left(1-\overline{U_{\kv}}\right)\\
= & \left(-1\right)^{\left(\beta-1\right)}\overline{i^{\left(\beta-1\right)}e^{-i\sum_{e\in\E}\kappa_{e}}\det\left(1-U_{\kv}\right)}=\left(-1\right)^{\left(\beta-1\right)}\overline{F\left(\kv\right)}.
\end{align*}
As $F\left(\kv\right)$ is real it proves (\ref{eq: Inversion on F}),
and we can deduce that $\Sigma$ is invariant under $\I$. Since $\I$
is an isometry of $\T^{\E}$ in which $\Sigma$ is embedded, then it
is also an isometry of $\Sigma$. In the same manner, 
\begin{align*}
p\left(\mathcal{I}\left(\kv\right)\right) & =-i^{\beta}e^{i\sum_{e\in\E}\kappa_{e}}\mathrm{trace}\left(\mathrm{adj}\left(1-\overline{U_{\kv}}\right)\right)\\
= & \left(-1\right)^{\beta}\left(\overline{-i^{\beta}e^{-i\sum_{e\in\E}\kappa_{e}}\mathrm{trace}\left(\mathrm{adj}\left(1-U_{\kv}\right)\right)}\right)=\left(-1\right)^{\beta}\overline{p\left(\kv\right)},
\end{align*}
and since $p|_{\Sigma}$ is real it proves (\ref{eq: inversion on p}).
We can therefore deduce that both $\Sigma^{reg}$ and $\Sigma^{sing}$
are invariant to $\I$. Let $\kv\in\Sigma^{reg}$ and let $a\in\ker\left(1-U_{\kv}\right)$,
then $\overline{a}\in\ker\overline{\left(1-U_{\kv}\right)}=\ker\left(1-U_{\mathcal{I}\left(\kv\right)}\right)$.
It follows that $m_{\mathcal{I}\left(\kv\right)}=m_{\kv}$ as needed. 
\end{proof}
\begin{lem}
\label{lem: vertex values of canonical function}Let $S$ and $J$
as defined in Subsection \ref{subsec:Wave-scattering}. Then for any
$v,u\in\V$ and $e\in\E_{v},\,e'\in\E_{u}$, all $f_{\kv}\left(v\right)f_{\kv}\left(u\right)$,
$\partial_{e}f_{\kv}\left(v\right)f_{\kv}\left(u\right)$ and $\partial_{e}f_{\kv}\left(v\right)\partial_{e'}f_{\kv}\left(u\right)$
are real analytic functions of $\kv\in\Sigma^{reg}$ that are given
explicitly as the following matrix elements: 
\begin{align}
f_{\kv}\left(v\right)f_{\kv}\left(u\right) & =\left(\frac{1}{\mathrm{trace}\left(\mathrm{adj}\left(1-U_{\kv}\right)\right)}\left(S+J\right)adj\left(1-U_{\kv}\right)\left(S+J\right)^{T}\right)_{e,e'}.\label{eq: vertex values matrix}\\
\partial_{e}f_{\kv}\left(v\right)f_{\kv}\left(u\right) & =\left(\frac{-i}{\mathrm{trace}\left(\mathrm{adj}\left(1-U_{\kv}\right)\right)}\left(S-J\right)adj\left(1-U_{\kv}\right)\left(S+J\right)^{T}\right)_{e,e'}.\label{eq: vertex and derivatives matrix}\\
\partial_{e}f_{\kv}\left(v\right)\partial_{e'}f_{\kv}\left(u\right) & =\left(\frac{-1}{\mathrm{trace}\left(\mathrm{adj}\left(1-U_{\kv}\right)\right)}\left(S-J\right)adj\left(1-U_{\kv}\right)\left(S-J\right)^{T}\right)_{e,e'}.\label{eq: derivatives of f by matrix}
\end{align}
Moreover, if $\boldsymbol{a}$ is the amplitudes vector of $f_{\kv}$
and $f_{\I\left(\kv\right)}$ is the canonical eigenfunction at the
point $\I\left(\kv\right)$, then the amplitudes vector $f_{\I\left(\kv\right)}$
is $\pm\overline{\boldsymbol{a}}$, and their traces satisfy
\begin{align}
f_{\I\left(\kv\right)}\left(v\right)f_{\mathcal{\I}\left(\kv\right)}\left(u\right) & =f_{\kv}\left(v\right)f_{\kv}\left(u\right),\label{eq: inversion vertex values}\\
\partial_{e}f_{\I\left(\kv\right)}\left(v\right)f_{\I\left(\kv\right)}\left(u\right) & =-\partial_{e}f_{\kv}\left(v\right)f_{\kv}\left(u\right).\label{eq: inversion derivatives}\\
\partial_{e}f_{\I\left(\kv\right)}\left(v\right)\partial_{e'}f_{\I\left(\kv\right)}\left(u\right) & =\partial_{e}f_{\kv}\left(v\right)\partial_{e'}f_{\kv}\left(u\right).\label{eq: inversion der der}
\end{align}
\end{lem}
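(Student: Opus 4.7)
The plan is to reduce all three matrix formulas to two structural identities. For $\kv \in \Sigma^{\mathrm{reg}}$, let $\boldsymbol{a}$ be the amplitudes vector of the canonical eigenfunction $f_{\kv}$; I will show that
\begin{equation*}
f_{\kv}(v) = ((S+J)\boldsymbol{a})_e, \qquad \partial_e f_{\kv}(v) = i((S-J)\boldsymbol{a})_e
\end{equation*}
for every $v\in\V$ and $e\in\E_v$. Both follow directly from Definition \ref{def: edge restriction notations}(\ref{enu: complex amplitudes}) once one notes that $(J\boldsymbol{a})_e = a_{\hat{e}}$ and $(S\boldsymbol{a})_e = e^{-i\kappa_e}a_e$; the latter is obtained by comparing the eigenvalue equation $U_{\kv}\boldsymbol{a} = \boldsymbol{a}$ with the factorization $U_{\kv} = e^{i\kv}S$.

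Next I will combine these identities with Lemma \ref{lem: adjugate}, which on $\Sigma^{\mathrm{reg}}$ gives the outer-product representation $\mathrm{adj}(1-U_{\kv})/\mathrm{trace}(\mathrm{adj}(1-U_{\kv})) = \boldsymbol{a}\boldsymbol{a}^*$. For any real matrices $M$ and $N$ this reads
\begin{equation*}
\frac{(M\,\mathrm{adj}(1-U_{\kv})\,N^T)_{e,e'}}{\mathrm{trace}(\mathrm{adj}(1-U_{\kv}))} = (M\boldsymbol{a})_e \,\overline{(N\boldsymbol{a})_{e'}},
\end{equation*}
and the target quantities $f_{\kv}(v)f_{\kv}(u)$, $\partial_ef_{\kv}(v)f_{\kv}(u)$ and $\partial_ef_{\kv}(v)\partial_{e'}f_{\kv}(u)$ are of this shape with $M,N\in\{S+J,\,S-J\}$, except that the second factor is not complex-conjugated. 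Since $f_{\kv}$ and $\partial_ef_{\kv}$ are real, the vector $(S+J)\boldsymbol{a}$ is real while $(S-J)\boldsymbol{a}$ is purely imaginary; I will use this to pass the conjugate onto the second factor, at the cost of a sign whenever it sits on $(S-J)\boldsymbol{a}$. Tracking these signs together with the prefactor $i$ coming from the $\partial_e$ identity yields (\ref{eq: vertex values matrix})--(\ref{eq: derivatives of f by matrix}); real analyticity in $\kv\in\Sigma^{\mathrm{reg}}$ is then automatic, since each right-hand side is a ratio of trigonometric polynomials whose denominator does not vanish on $\Sigma^{\mathrm{reg}}$.

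For the inversion identities, I will use that $S$ and $J$ are real and $e^{i\I(\kv)} = \overline{e^{i\kv}}$, so that $U_{\I(\kv)} = \overline{U_{\kv}}$; therefore $\overline{\boldsymbol{a}}$ is a normalized element of $\ker(1-U_{\I(\kv)})$, which on $\Sigma^{\mathrm{reg}}$ forces the amplitudes vector at $\I(\kv)$ to equal $\pm\overline{\boldsymbol{a}}$. Applying $(S\pm J)$ and again using reality of these matrices together with the real, respectively imaginary, nature of $(S+J)\boldsymbol{a}$ and $(S-J)\boldsymbol{a}$, I obtain $f_{\I(\kv)}(v) = \pm f_{\kv}(v)$ and $\partial_ef_{\I(\kv)}(v) = \mp\partial_ef_{\kv}(v)$, from which the three product identities (\ref{eq: inversion vertex values})--(\ref{eq: inversion der der}) follow upon multiplying pairs of traces.

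The main obstacle is not conceptual but purely notational: the delicate step is to keep straight, in each of the three matrix formulas, which factor should be conjugated to produce the real product on the left, how the resulting sign from $(S-J)\overline{\boldsymbol{a}} = -(S-J)\boldsymbol{a}$ interacts with the prefactor $i$ from $\partial_e$, and to verify that the bookkeeping reproduces the precise signs stated in (\ref{eq: vertex values matrix})--(\ref{eq: derivatives of f by matrix}) and in (\ref{eq: inversion vertex values})--(\ref{eq: inversion der der}).
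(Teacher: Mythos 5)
Your approach is essentially the same as the paper's: derive the identities $f_{\kv}(v) = ((S+J)\boldsymbol{a})_e$ and $-i\partial_e f_{\kv}(v) = ((S-J)\boldsymbol{a})_e$ from the relation $U_{\kv}\boldsymbol{a}=\boldsymbol{a}$ and the amplitude-to-trace formulas, substitute the adjugate outer-product $\boldsymbol{a}\boldsymbol{a}^* = \mathrm{adj}(1-U_{\kv})/\mathrm{trace}(\mathrm{adj}(1-U_{\kv}))$ from Lemma \ref{lem: adjugate}, and use reality of the trace of $f_{\kv}$ to absorb the conjugation on the second factor; for the inversion identities, conjugate the eigenvector relation $U_{\I(\kv)}=\overline{U_{\kv}}$ to obtain the amplitudes $\pm\overline{\boldsymbol{a}}$ at $\I(\kv)$ and push through the sign changes. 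One small point worth making explicit: to conclude \emph{real} analyticity rather than mere complex analyticity of the rational matrix elements, you should note (as the paper does) that each quantity is manifestly real on $\Sigma^{\mathrm{reg}}$ because $f_{\kv}$ and its trace are real; a ratio of trigonometric polynomials with non-vanishing denominator is complex analytic, and reality then upgrades this to real analyticity on the manifold $\Sigma^{\mathrm{reg}}$.
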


\begin{proof}
Let $\kv\in\Sigma^{reg}$ and let $\boldsymbol{a}$ be the amplitudes
vector of $f_{\kv}$. According to Definition \ref{def: edge restriction notations},
$f\left(v\right)=a_{e}e^{-i\kappa_{e}}+a_{\hat{e}}$ and $-i\partial_{e}f\left(v\right)=a_{e}e^{-i\kappa_{e}}-a_{\hat{e}}$.
Notice that $a_{\hat{e}}=\left(J\boldsymbol{a}\right)_{e}$. Using
$\boldsymbol{a}=U_{\kv}\boldsymbol{a}=e^{i\hat{\kappa}}S\boldsymbol{a}$
we get $\left(Sa\right)_{e}=\left(e^{-i\hat{\kappa}}a\right)_{e}=e^{-i\kappa_{e}}a_{e}$.
This can be written as 
\begin{align*}
\left(\left(S+J\right)\boldsymbol{a}\right)_{e} & =f\left(v\right),\,\,\,and\\
\left(\left(S-J\right)\boldsymbol{a}\right)_{e} & =-i\partial_{e}f\left(v\right),
\end{align*}
which means that 
\begin{align}
f_{\kv}\left(v\right)f_{\kv}\left(u\right) & =\left(\left(S+J\right)\boldsymbol{a}\boldsymbol{a}^{*}\left(S+J\right)^{T}\right)_{e,e'},\\
\partial_{e}f_{\kv}\left(v\right)f_{\kv}\left(u\right) & =-i\left(\left(S-J\right)\boldsymbol{a}\boldsymbol{a}^{*}\left(S+J\right)^{T}\right)_{e,e'},\,\,\,and\,\\
\partial_{e}f_{\kv}\left(v\right)\partial_{e'}f_{\kv}\left(u\right) & =-\left(\left(S-J\right)\boldsymbol{a}\boldsymbol{a}^{*}\left(S-J\right)^{T}\right)_{e,e'}.
\end{align}
where $e$ and $e'$ emits out of $v$ and $u$. According to Lemma
\ref{lem: adjugate}, $\mathrm{trace}\left(\mathrm{adj}\left(1-U_{\kv}\right)\right)\ne0$
and $\boldsymbol{a}\boldsymbol{a}^{*}=\frac{1}{\mathrm{trace}\left(\mathrm{adj}\left(1-U_{\kv}\right)\right)}\mathrm{adj}\left(1-U_{\kv}\right)$
which concludes the proof of (\ref{eq: vertex values matrix}) and
(\ref{eq: vertex and derivatives matrix}). Since both $\mathrm{trace}\left(\mathrm{adj}\left(1-U_{\kv}\right)\right)$
and the entries of $\mathrm{adj}\left(1-U_{\kv}\right)$ are polynomials
in $\left\{ e^{i\kappa_{e}}\right\} _{e\in\E}$, then the entries of
$\frac{1}{\mathrm{trace}\left(\mathrm{adj}\left(1-U_{\kv}\right)\right)}\mathrm{adj}\left(1-U_{\kv}\right)$
are rational functions in $\left\{ e^{i\kappa_{e}}\right\} _{e\in\E}$
(with no poles on $\Sigma^{reg}$ as $\mathrm{trace}\left(\mathrm{adj}\left(1-U_{\kv}\right)\right)\ne0$)
and hence so does the matrix elements in (\ref{eq: vertex values matrix}),(\ref{eq: vertex and derivatives matrix})
and (\ref{eq: derivatives of f by matrix}). Since $f_{\kv}$ is real
then both $f_{\kv}\left(v\right)f_{\kv}\left(u\right),\partial_{e}f_{\kv}\left(v\right)f_{\kv}\left(u\right)$
and $\partial_{e}f_{\kv}\left(v\right)\partial_{e'}f_{\kv}\left(u\right)$
are real on $\Sigma^{reg}$ and so we may conclude that they are real
analytic functions on $\Sigma^{reg}$.

To finish the proof, consider the point $\I\left(\kv\right)\in\Sigma^{reg}$.
Since $\boldsymbol{a}\in\ker\left(1-U_{\kv}\right)$, then $\boldsymbol{\overline{a}}\in\ker\left(1-\overline{U_{\kv}}\right)=\ker\left(1-U_{\I\left(\kv\right)}\right)$
and therefore $\boldsymbol{\overline{a}}$ is the amplitudes vector
of an eigenfunction $\tilde{f}\in Eig\left(\Gamma_{\I\left(\kv\right)},1\right)$.
The traces of $f_{\kv}$ and $\tilde{f}$ are related as follows.
For every vertex $v$ and edge $e\in\E_{v}$:
\begin{align*}
\tilde{f}\left(v\right)=\overline{a_{e}}e^{-i\I\left(\kappa_{e}\right)}+\overline{a_{\hat{e}}}=\overline{a_{e}e^{-i\kappa_{e}}+a_{\hat{e}}} & =\overline{f_{\kv}\left(v\right)}=f_{\kv}\left(v\right),\,\,\,\text{and}\\
\partial_{e}\tilde{f}\left(v\right)=i\left(\overline{a_{e}}e^{-i\I\left(\kappa_{e}\right)}-\overline{a_{\hat{e}}}\right)=i\left(\overline{a_{e}e^{-i\kappa_{e}}-a_{\hat{e}}}\right) & =\overline{-\partial_{e}f_{\kv}\left(v\right)}=-\partial_{e}f_{\kv}\left(v\right).
\end{align*}
Therefore the trace of $\tilde{f}$ are real so $\tilde{f}$ is real
with normalized amplitudes and therefore $\tilde{f}=\pm f_{\I\left(\kv\right)}$.
Hence the traces of $f_{\I\left(\kv\right)}$ and $f_{\kv}$ are related
as above (up to a global sign). Both (\ref{eq: inversion vertex values}),
(\ref{eq: inversion derivatives}) and \ref{eq: inversion der der}
follows.
\end{proof}

\subsection{The equidistribution of $\left\{ k_{n}\protect\lv\right\} $ on $\Sigma$
and the Barra-Gaspard measure.}

As already discussed, the first work on the secular manifold, by Barra
and Gaspard in \cite{BarGas_jsp00}, used ergodic arguments to calculate
the level spacing statistics. The ergodic argument used was later
formalized both in \cite{BerWin_tams10,CdV_ahp15}. In the following
subsection we present this mechanism, which we will use in the following
sections. This mechanism is based on the equidistribution of the points
$\left\{ k_{n}\lv\right\} $ on $\Sigma$ according to the Barra-Gaspard
measure, and a measure preserving inversion of $\Sigma$. 

To do so we will need to define the notion of \emph{equidistribution}
(see \cite{einsiedler2013ergodic} section 4.4.2). 
\begin{defn}
Let $X$ be a compact metric space and let $m$ be a Borel measure
on $X$. A sequence $\left\{ x_{n}\right\} _{n\in\N}$ of points in
$X$ is \emph{equidistributed} according to $m$ if for any continuous
function $f$, 
\begin{equation}
\lim_{N\rightarrow\infty}\frac{\sum_{n=1}^{N}f\left(x_{n}\right)}{N}=\int_{X}fdm.
\end{equation}
Equivalently, $\left\{ x_{n}\right\} _{n\in\N}$ is equidistributed
if the atomic measures $\frac{\sum_{n=1}^{N}\delta_{x_{n}}}{N}$ converges
to $m$ as $N\rightarrow\infty$ in the $weak^{*}-topology$.
\end{defn}

Given a subset $A\subset X$, the atomic measures $\frac{\sum_{n=1}^{N}\delta_{x_{n}}}{N}$
evaluated on $A$ gives:
\begin{align}
\left(\frac{\sum_{n=1}^{N}\delta_{x_{n}}}{N}\right)A=\frac{\left|\set{n\le N}{x_{n}\in A}\right|}{N}.
\end{align}
Next, we define the notion of \emph{natural density. }
\begin{defn}
\label{def: natural density} Given a subset $\A\subset\N$ we denote
$\A\left(N\right):=\A\cap\left\{ 1,2,...N\right\} $ for any $N\in\N$.
We say that $\A$ has \emph{density} and denote it by $d\left(\A\right)$,
if the following limit exists:
\[
d\left(\A\right)=\lim_{N\rightarrow\infty}\frac{\left|\A\left(N\right)\right|}{N}.
\]
\end{defn}

Our motivation for the above definitions is that many statistical
properties that we are after regards limits of the form $\lim_{N\rightarrow\infty}\frac{\left|\set{n\le N}{\left\{ k_{n}\lv\right\} \in A}\right|}{N}$
for some given $A$. A statement, \emph{the limit $\lim_{N\rightarrow\infty}\frac{\left|\set{n\le N}{\left\{ k_{n}\lv\right\} \in A}\right|}{N}$
exists}, is equivalent to the statement, \emph{the index set $\set{n\in\N}{\left\{ k_{n}\lv\right\} \in A}$
has density}. We will mainly use the density's terminology. 

The equidistribution will become useful for limits as above, by the
next lemma that will provide both a sufficient condition for an integers
set to have density, and its density in terms an equidistributed sequence.
\begin{defn}
\label{def: Jordan}If $X$ is a topological space with a Borel measure
$m$, we call a Borel subset $A\subset X$ \emph{Jordan }if its topological
boundary (closure minus interior) has measure zero, $m\left(\partial A\right)=0$.
\end{defn}

Using a standard approximation argument one can prove the following
lemma:
\begin{lem}
\label{lem: equidistribution Jordan}Let $X$ be a compact metric
space, let $m$ be a Borel regular measure and let $\left\{ x_{n}\right\} _{n\in N}\subset X$
be equidistributed with respect to $m$. If a Borel set $A\subset X$
is Jordan with respect to $m$, then, the index set $\set{n\in\N}{x_{n}\in A}$
has density:
\[
d\left(\set{n\in\N}{x_{n}\in A}\right)=m\left(A\right).
\]
\end{lem}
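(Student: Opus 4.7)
The plan is to sandwich the indicator function $\mathbf{1}_A$ between continuous functions approximating it from above and below, and then apply the equidistribution hypothesis, which by definition controls averages only of continuous test functions. The Jordan hypothesis $m(\partial A)=0$ is exactly what makes such a sandwich arbitrarily tight.

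First I would unpack the Jordan condition. Writing $\overline{A}$ for the closure and $A^{\circ}$ for the interior, one has $\partial A = \overline{A}\setminus A^{\circ}$, so $m(\partial A)=0$ gives $m(\overline{A})=m(A^{\circ})=m(A)$. Next, fix $\varepsilon>0$. Using regularity of $m$, I would pick an open set $U\supset\overline{A}$ with $m(U)<m(A)+\varepsilon$ and a closed set $F\subset A^{\circ}$ with $m(F)>m(A)-\varepsilon$. Since $X$ is a compact metric space (hence normal), Urysohn's lemma furnishes continuous functions $g,h:X\to[0,1]$ with $g\equiv 1$ on $F$, $g\equiv 0$ outside $A^{\circ}$, and $h\equiv 1$ on $\overline{A}$, $h\equiv 0$ outside $U$. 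By construction $g\le\mathbf{1}_A\le h$, and integrating yields
\[
m(A)-\varepsilon \;<\; \int_X g\,dm \;\le\; \int_X h\,dm \;<\; m(A)+\varepsilon.
\]

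Now I would apply the pointwise sandwich to the sequence $\{x_n\}$. For every $N$,
\[
\frac{1}{N}\sum_{n=1}^{N} g(x_n) \;\le\; \frac{\left|\set{n\le N}{x_n\in A}\right|}{N} \;\le\; \frac{1}{N}\sum_{n=1}^{N} h(x_n).
\]
The equidistribution hypothesis applied to the continuous functions $g$ and $h$ shows that the outer two quantities converge to $\int g\,dm$ and $\int h\,dm$ respectively. Combining with the previous display,
\[
m(A)-\varepsilon \;\le\; \liminf_{N\to\infty}\frac{\left|\set{n\le N}{x_n\in A}\right|}{N} \;\le\; \limsup_{N\to\infty}\frac{\left|\set{n\le N}{x_n\in A}\right|}{N} \;\le\; m(A)+\varepsilon.
\]
Since $\varepsilon>0$ was arbitrary, the $\liminf$ and $\limsup$ coincide and both equal $m(A)$, which is exactly the claim $d\bigl(\set{n\in\N}{x_n\in A}\bigr)=m(A)$.

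The only step requiring any care is the construction of $g$ and $h$: it uses both the outer/inner regularity of $m$ (to get the open $U$ and closed $F$ with the right measure) and normality of $X$ together with Urysohn's lemma to realize the indicator bounds by continuous functions. Both are standard in compact metric spaces, so I do not anticipate a genuine obstacle. The role of the Jordan assumption is precisely to force $m(U)-m(F)$ to be made arbitrarily small; without it the sandwich would leave an irreducible gap of size $m(\partial A)>0$ and the density could fail to exist.
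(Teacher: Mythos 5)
Your proof is correct and takes essentially the same route as the paper's: both sandwich $\mathbf{1}_A$ between Urysohn functions supported in a shrinking outer open set and a growing inner closed (equivalently compact) set, use the Jordan condition $m(\overline{A})=m(A^\circ)=m(A)$ to make the sandwich tight, apply equidistribution to the continuous bounding functions, and pass to the limit in $N$ and then $\varepsilon$. The only cosmetic difference is that the paper phrases the inner approximant as a compact set $K_\varepsilon$ where you use a closed set $F$ — the same thing in a compact metric space.
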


For completeness we will present a proof for this lemma in Appendix
\ref{sec: Equidistribution}.

The compact metric space that we will consider is the secular manifold
$\Sigma$. The measures we consider on $\Sigma$, are called the \emph{Barra-Gaspard
measures.} 
\begin{defn}
\label{def: BG measure}Given a standard graph $\Gamma_{\lv}$, the
\emph{Barra-Gaspard measure} $\mu_{\lv}$ (BG-measure) is an $\lv$
dependent Radon probability measure on $\Sigma$. It is defined on
$\Sigma^{reg}$ in terms of the euclidean surface element $ds$ and
the normal vector $\hat{n}$ as follows:
\begin{equation}
d\mu_{\lv}=\frac{\pi}{L}\cdot\frac{1}{\left(2\pi\right)^{E}}\left|\hat{n}\cdot\lv\right|ds.\label{eq: BG measure}
\end{equation}
As the singular part $\Sigma^{sing}$ is a closed subset of positive
co-dimension in $\Sigma$ (Proposition \ref{prop: Secular manifold structure})
we extend $\mu_{\lv}$ to $\Sigma$ by setting $\mu_{\lv}\left(\Sigma^{sing}\right)=0$.
\end{defn}

\begin{rem}
\label{rem: BG measuer positivity}For any $\lv\in\R_{+}^{\E}$, the
BG-measure $\mu_{\lv}$ is a Radon measure with strictly positive
density on $\Sigma^{reg}$. That is, for any open set $O\subset\Sigma$,
$\mu_{\lv}\left(O\right)>0$. This follows from $\left|\hat{n}\cdot\lv\right|$
being strictly positive which can be seen using (\ref{eq: normal}),
\[
\forall\kv\in\Sigma^{reg}\,\,\left|\hat{n}\left(\kv\right)\cdot\lv\right|=\frac{1}{\norm{m_{\kv}}}\sum_{e}\left(m_{\kv}\right)_{e}l_{e}>0.
\]
In particular, all BG measures $\mu_{\lv}$ for all $\lv$ agree on
measure zero sets, and therefore if a set $A\subset\Sigma$ is Jordan
with respect to $\mu_{\lv}$ for some $\lv$, then it is Jordan with respect
to $\mu_{\lv}$ for any $\lv$. 
\end{rem}

The following theorem was proven by Barra and Gaspard in \cite{BarGas_jsp00}
and a more detailed proof appeared both in \cite{BerWin_tams10} and
\cite{CdV_ahp15}. We present this theorem using the equidistribution
terminology rather than the original statement. 
\begin{thm}
\label{thm: BG equidistribution}Let $\Gamma_{\lv}$ be a standard
graph with $\lv$ rationally independent and (square-root) eigenvalues
$\left\{ k_{n}\right\} _{n=0}^{\infty}$. Then the sequence $\left\{ \left\{ k_{n}\lv\right\} \right\} _{n=0}^{\infty}$
is dense in $\Sigma$ and is equidistributed according to $\mu_{\lv}$.
In particular, if $A\subset\Sigma$ is Jordan with respect to $\mu_{\lv}$
then the index set $\set{n\in\N}{\left\{ k_{n}\lv\right\} \in A}$
has density:
\begin{equation}
d\left(\set{n\in\N}{\left\{ k_{n}\lv\right\} \in A}\right)=\mu_{\lv}\left(A\right).
\end{equation}
\end{thm}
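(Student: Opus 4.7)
The starting point is to reinterpret $\left\{\{k_n\lv\}\right\}_{n\ge 0}$ as the sequence of hits of the linear orbit $t\mapsto\{t\lv\}$ on $\T^{\E}$ with the secular manifold $\Sigma$: by Lemma~\ref{lem: Sigma reg and simple eigenvalues}, $k^{2}>0$ is an eigenvalue of $\Gamma_{\lv}$ if and only if $\{k\lv\}\in\Sigma$, so the positive times at which $\{t\lv\}\in\Sigma$ are exactly $\{k_n\}_{n\ge 1}$, with multiplicities encoded by $\dim\ker(1-U_{\{k\lv\}})$. Rational independence of $\lv$ implies, by Weyl's equidistribution theorem, that the linear flow $\phi_{t}(\kv):=\{\kv+t\lv\}$ is minimal and uniquely ergodic on $\T^{\E}$ with respect to normalized Lebesgue measure. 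Density of the hit sequence in $\Sigma$ follows by combining density of the orbit in $\T^{\E}$ with a transversality argument: by Lemma~\ref{lem: F p and mk} the gradient $\nabla F=p\,m_{\kv}$ is nonzero on $\Sigma^{reg}$ and the entries of $m_{\kv}$ are strictly positive, so the crossings are transverse in an open dense subset of $\Sigma^{reg}$; density on $\Sigma^{sing}$ is then immediate since $\Sigma^{sing}$ has positive codimension in $\Sigma$ (Proposition~\ref{prop: Secular manifold structure}) and therefore lies in $\overline{\Sigma^{reg}}$.

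The quantitative equidistribution rests on a coarea / Poincar\'e-section computation. For a continuous $f\colon\Sigma\to\R$, extend it to a tubular neighborhood of $\Sigma^{reg}$ so as to be constant along the normal direction, and for small $\epsilon>0$ apply Weyl's theorem to the time average of $\tilde f\cdot\chi_{\{|F|<\epsilon\}}$ along $\phi_t(0)$. The time integral divided by $K$ converges, as $K\to\infty$, to $(2\pi)^{-E}\int_{\{|F|<\epsilon\}}\tilde f\,d\lambda$, and then as $\epsilon\to 0$, by the coarea formula, to $(2\pi)^{-E}\cdot 2\int_{\Sigma}\tilde f/|\nabla F|\,ds$. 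On the other hand, near a transverse crossing of the orbit with $\Sigma$ at time $t=k_n$, the orbit spends time $\approx 2\epsilon/|\lv\cdot\nabla F(\{k_n\lv\})|$ inside $\{|F|<\epsilon\}$, so the same time integral equals $\approx\sum_{k_n\le K} 2\epsilon\,f(\{k_n\lv\})/|\lv\cdot\nabla F(\{k_n\lv\})|$. Equating the two expressions and using $\nabla F=p\,m_{\kv}$, so that $|\lv\cdot\nabla F|=|p|\,\norm{m_{\kv}}\,|\lv\cdot\hat n|$ and $|\nabla F|=|p|\,\norm{m_{\kv}}$ (both from Lemma~\ref{lem: F p and mk}), the factors $|p|\,\norm{m_{\kv}}$ cancel, leaving a limiting surface measure on $\Sigma$ whose density is exactly $|\lv\cdot\hat n|\,ds$ up to a constant, matching the Barra--Gaspard form~(\ref{eq: BG measure}).

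The overall normalization is fixed by Weyl's law for standard graphs, $\#\{n:k_n\le K\}\sim(L/\pi)K$, which converts the time normalization $1/K$ into the counting normalization $1/N$ with a multiplicative factor $L/\pi$ and produces precisely the prefactor $\tfrac{\pi}{L}\cdot\tfrac{1}{(2\pi)^{E}}$ of Definition~\ref{def: BG measure}; consistency is confirmed by the fact that this makes $\mu_{\lv}$ a probability measure. The density statement for Jordan sets follows from Lemma~\ref{lem: equidistribution Jordan} since $\Sigma$ is a compact metric space and $\mu_{\lv}$ is Radon. The main technical obstacle is to justify the interchange of the $K\to\infty$ and $\epsilon\to 0$ limits uniformly near $\Sigma^{sing}$: although $\mu_{\lv}(\Sigma^{sing})=0$, the weight $1/|\nabla F|$ in the coarea formula blows up there, and one must separately control the measure of the thickening $\{|F|<\epsilon\}$ near $\Sigma^{sing}$ and the time the orbit spends there. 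This is handled via the real-analytic stratification of $\Sigma^{sing}$ from Lemma~\ref{lem: stratification} together with an integrability estimate showing that the contribution of an $\eta$-neighborhood of $\Sigma^{sing}$ is $o(1)$ as $\eta\to 0$, uniformly in $\epsilon$.
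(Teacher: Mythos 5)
You should note at the outset that the paper does not prove Theorem~\ref{thm: BG equidistribution}: the text immediately preceding the statement credits the result to Barra--Gaspard \cite{BarGas_jsp00} with detailed proofs in \cite{BerWin_tams10} and \cite{CdV_ahp15}, and the paper's own contribution is only to recast it in the equidistribution/Jordan-set language needed for Lemma~\ref{lem: equidistribution Jordan} and Theorem~\ref{thm: inversion BG}. So there is no in-paper argument to compare against; what your sketch reconstructs is the standard coarea/Poincar\'e-section proof from those references, and at that level of generality your mechanism is the right one: hits of the linear flow on $\Sigma$, unique ergodicity of the flow from rational independence, dwell-time $\approx 2\epsilon/|\lv\cdot\nabla F|$ per transverse crossing of $\{|F|<\epsilon\}$, the cancellation $|\lv\cdot\nabla F|/|\nabla F|=|\lv\cdot\hat n|$ from Lemma~\ref{lem: F p and mk} producing the Barra--Gaspard density, and Weyl's law converting the $1/K$ normalization into $1/N$.

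Two things should be tightened. The first is expository: as written, ``as $\epsilon\to 0$, by the coarea formula, to $(2\pi)^{-E}\cdot 2\int_{\Sigma}\tilde f/|\nabla F|\,ds$'' is false, since $\int_{\{|F|<\epsilon\}}\tilde f\,d\lambda\to 0$; you mean that after dividing both sides by $2\epsilon$ the limits agree, and it is cleaner to fix the normalization from the start. Better still, take the test function in Weyl's theorem to be (a continuous extension of) $f\cdot|\lv\cdot\hat n|$ rather than $f$ itself: then the coarea integrand becomes $f\cdot|\lv\cdot\hat n|$, which is bounded on $\Sigma$, the divergent factor $1/|\nabla F|$ never appears, and the dwell-time sum directly yields $\frac{1}{K}\sum_{k_n\le K}f(\{k_n\lv\})$.

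The second point is the one that actually carries the weight of the theorem and where your sketch stops short. The local dwell-time estimate is valid only at transverse crossings, i.e.\ in $\Sigma^{reg}$ (where $\hat n\cdot\lv>0$ by Remark~\ref{rem: BG measuer positivity}); hits of $\Sigma^{sing}$, which are exactly the multiple eigenvalues, have $\nabla F=0$ and no dwell-time control. To justify interchanging $K\to\infty$ and $\epsilon\to 0$ you must control two things: (i) the Lebesgue measure of $\{|F|<\epsilon\}$ in an $\eta$-neighborhood of $\Sigma^{sing}$, which is covered by unique ergodicity plus $\operatorname{codim}\Sigma^{sing}\ge 2$ (Proposition~\ref{prop: Secular manifold structure}); and (ii) the \emph{count} of eigenvalues $k_n\le K$ whose trajectory point $\{k_n\lv\}$ lands in that neighborhood. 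Point (ii) does not reduce to (i), precisely because the dwell time per crossing is unbounded there, and it is this counting bound --- typically supplied in the references by a {\L}ojasiewicz-type lower bound for $|\nabla F|$ outside a shrinking neighborhood of $\Sigma^{sing}$, combined with the stratification of Lemma~\ref{lem: stratification} --- that your ``integrability estimate'' must actually deliver. As it stands you have named the obstacle correctly but your sketch treats it as a routine remainder estimate, when it is the technical core of the theorem.
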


\begin{rem}
\label{rem:-Sigma reg is-Jordan}$\Sigma^{reg}$ is Jordan in $\Sigma$
with measure $\mu_{\lv}$ for any choice of $\lv$. This is because
$\Sigma^{reg}$ is open and dense in $\Sigma$, by Proposition \ref{prop: Secular manifold structure},
and so $\partial\Sigma^{reg}=\Sigma^{sing}$ which has $\mu_{\lv}\left(\Sigma^{sing}\right)=0$.
It follows that each of the connected components of $\Sigma^{reg}$
is Jordan as well.
\end{rem}

\begin{thm}
\label{thm: inversion BG}The inversion $\mathcal{I}$ (see Definition
\ref{def: Inversion}) is $\mu_{\lv}$ measure preserving. In particular,
if $\lv$ is rationally independent, then for any Jordan set $A\subset\Sigma$,
both\\ $\set{n\in\N}{\left\{ k_{n}\lv\right\} \in A}$ and $\set{n\in\N}{\left\{ k_{n}\lv\right\} \in\mathcal{I}\left(A\right)}$
have equal densities given by:
\[
d\left(\set{n\in\N}{\left\{ k_{n}\lv\right\} \in A}\right)=d\left(\set{n\in\N}{\left\{ k_{n}\lv\right\} \in\mathcal{I}\left(A\right)}\right)=\mu_{\lv}\left(A\right).
\]
\end{thm}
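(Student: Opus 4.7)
The plan is to prove the two assertions in order. The first is that the inversion $\mathcal{I}$ preserves the measure $\mu_{\lv}$, namely $\mu_{\lv}(\mathcal{I}(A))=\mu_{\lv}(A)$ for every Borel $A\subset\Sigma$. The second, the density statement, will then follow immediately from Theorem \ref{thm: BG equidistribution} applied to both $A$ and $\mathcal{I}(A)$.

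For the measure-invariance, I would reduce to the regular part. By Lemma \ref{lem: inversion on F} the inversion preserves the partition $\Sigma=\Sigma^{reg}\sqcup\Sigma^{sing}$, and by definition $\mu_{\lv}(\Sigma^{sing})=0$, so it suffices to verify invariance of the density $\frac{\pi}{L(2\pi)^{E}}|\hat{n}\cdot\lv|ds$ on $\Sigma^{reg}$. I would split this into two ingredients. First, $\mathcal{I}$ is the restriction to $\Sigma$ of an isometry of the flat torus $\T^{\E}$ (its differential is $-\mathrm{Id}$ on each tangent space), so it preserves the Euclidean surface element $ds$ on $\Sigma$. Second, by Lemma \ref{lem: F p and mk} the unit normal is $\hat{n}(\kv)=m_{\kv}/\norm{m_{\kv}}$, and by Lemma \ref{lem: inversion on F} we have $m_{\I(\kv)}=m_{\kv}$ as vectors in $\R^{\E}$. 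Therefore $\hat{n}(\I(\kv))=\hat{n}(\kv)$ (up to the usual global sign ambiguity, which is harmless after taking absolute value), and consequently $|\hat{n}(\I(\kv))\cdot\lv|=|\hat{n}(\kv)\cdot\lv|$. Combining these two facts, the pullback $\mathcal{I}^{*}\mu_{\lv}$ has the same density as $\mu_{\lv}$ on $\Sigma^{reg}$, proving invariance.

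For the second assertion, I would invoke Theorem \ref{thm: BG equidistribution} twice. Note first that since $\mathcal{I}$ is a homeomorphism of $\Sigma$, the topological boundary of $\mathcal{I}(A)$ is $\mathcal{I}(\partial A)$, and by the just-proved invariance $\mu_{\lv}(\I(\partial A))=\mu_{\lv}(\partial A)=0$, so $\mathcal{I}(A)$ is Jordan whenever $A$ is. Applying Theorem \ref{thm: BG equidistribution} to $A$ gives $d(\set{n\in\N}{\{k_{n}\lv\}\in A})=\mu_{\lv}(A)$, and applying it to $\mathcal{I}(A)$ gives $d(\set{n\in\N}{\{k_{n}\lv\}\in\I(A)})=\mu_{\lv}(\I(A))=\mu_{\lv}(A)$, which is the claimed equality.

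There is no serious obstacle here; the whole argument is bookkeeping on top of previously established facts. The one point that does require a bit of care is the identification of the unit normals at $\kv$ and $\mathcal{I}(\kv)$. Since the differential of $\mathcal{I}$ is $-\mathrm{Id}$, one might naively expect a sign flip of the normal under the map; what saves us is that $\hat{n}$ is given intrinsically by the weights vector $m_{\kv}$ via Lemma \ref{lem: F p and mk}, and $m_{\kv}$ is genuinely $\mathcal{I}$-invariant by Lemma \ref{lem: inversion on F}, so the quantity $|\hat{n}\cdot\lv|$ on which the Barra--Gaspard density actually depends is unambiguously invariant.
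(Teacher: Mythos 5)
Your proof is correct and follows essentially the same route as the paper: isometry of $\T^{\E}$ preserves $ds$, invariance of $m_{\kv}$ under $\mathcal{I}$ (via Lemma \ref{lem: inversion on F} and Lemma \ref{lem: F p and mk}) gives invariance of $|\hat{n}\cdot\lv|$, and then Jordan-ness transfers and Theorem \ref{thm: BG equidistribution} is applied twice. The extra sentence you include about the differential being $-\mathrm{Id}$ and why the normal's sign ambiguity is harmless is a nice clarification but does not change the argument.
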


\begin{proof}
As seen in Lemma \ref{lem: inversion on F}, the inversion $\I$ is
an isometry of $\Sigma$ (and $\Sigma^{reg}$), and hence it preserves
the Euclidean surface element $ds$. Using both (\ref{eq: inversion on m})
and (\ref{eq: normal}) it is clear that the normal vector to $\Sigma^{reg}$
is also preserved under $\I$. It follows from (\ref{eq: BG measure})
that $d\mu_{\lv}$ is preserved and therefore $\mathcal{I}$ is $\mu_{\lv}$
preserving. Since $\I$ is a measure preserving homeomorphism, then
a set $A$ is Jordan if and only if $\I\left(A\right)$ is Jordan.
In such case, if $A$ and hence $\I\left(A\right)$ are Jordan, then
by Theorem \ref{thm: BG equidistribution} we get that 
\begin{align*}
d\left(\set{n\in\N}{\left\{ k_{n}\lv\right\} \in A}\right) & =\mu_{\lv}\left(A\right),\\
d\left(\set{n\in\N}{\left\{ k_{n}\lv\right\} \in\I\left(A\right)}\right) & =\mu_{\lv}\left(\I\left(A\right)\right).
\end{align*}
As $\I$ is measure preserving, then $\mu_{\lv}\left(\I\left(A\right)\right)=\mu_{\lv}\left(A\right)$
and we are done. 
\end{proof}

\subsection{Bridges and the secular manifold.}

In this section we discuss the structure of the secular manifold in
the case a graph has a bridge. The results of this subsection will
be used later on in Sections \ref{sec: genericity}, \ref{sec: Magnetic}
and \ref{sec: Binomial}. Recall that an edge is called a bridge if
its removal disconnects the graph. In particular a tail is a bridge
under which removal its boundary vertex is disconnected from the rest
of the graph. 
\begin{defn}
Given a graph $\Gamma$, and a bridge $e$, we define the \emph{bridge
decomposition} of $\Gamma$ according to $e$ as a decomposition of
$\Gamma$ into $\left\{ e\right\} $ and the two connected components
of $\Gamma\setminus$$\left\{ e\right\} $ which we denote by $\Gamma_{1}$
and $\Gamma_{2}$. The corresponding (disjoint) sets of edges and
vertices are $\E_{1},\E_{2}$ and $\V_{1},\V_{2}$ such that $\V=\V_{1}\sqcup\V_{2}$
and $\E=\E_{1}\sqcup\left\{ e\right\} \sqcup\E_{2}$. This induces
a decomposition of the characteristic torus to $\T^{\E}=\T^{\E_{1}}\times\T\times\T^{\E_{2}}$
and we its coordinates respectively as $\kv=\left(\kv_{1},\kappa_{e},\kv_{2}\right)$.
\end{defn}

A detailed technical description of the decomposition of the secular
function according to such a decomposition of the graph (and more
complicated decompositions) can be found in Section 4 of \cite{AloBanBer_cmp18}.
We will only consider bridge decompositions and devote Appendix \ref{sec: appendix bridge}
to the technical details. The results are as follows, 
\begin{prop}
\label{prop: simple bridge det decomposition}Let $\Gamma$ be a graph
with a bridge $e$ and a bridge decomposition\\ $\Gamma\setminus\left\{ e\right\} =\Gamma_{1}\sqcup\Gamma_{2}$.
Then the secular function $F$ can be decomposed as follows:
\begin{equation}
F\left(\kv\right)=F\left(\kv_{1},\kappa_{e},\kv_{2}\right)=g_{1}\left(\kv_{1}\right)g_{2}\left(\kv_{2}\right)e^{-i\kappa_{e}}\left(1-e^{i2\kappa_{e}}e^{i\Theta_{1}\left(\kv_{1}\right)}e^{i\Theta_{2}\left(\kv_{2}\right)}\right).\label{eq: Bridge decomposition F}
\end{equation}
Where $g_{i}:\T^{\E_{i}}\rightarrow\C$ is a trigonometric polynomial
and $\Theta_{i}\left(\kv_{i}\right):\T^{\E_{i}}\rightarrow\R/2\pi\Z$
is real analytic on the set where $g_{i}\left(\kv_{i}\right)\ne0$
for both $i\in\left\{ 1,2\right\} $. 
\end{prop}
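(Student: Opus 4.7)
The plan is to exploit the block structure of the unitary evolution matrix $U_\kv = e^{i\kv} S$ induced by the bridge decomposition. The key observation is that the scattering matrix $S$, which couples directed edges only through shared vertices, factors along the decomposition: the only coupling between $\vec{\E}_1$ and $\vec{\E}_2$ happens through the two directed edges $e, \hat{e}$ associated with the bridge. After ordering the directed edges as $(\vec{\E}_1, \hat{e}, e, \vec{\E}_2)$, the matrix $S$ decomposes into two blocks $S^{(1)}$ (acting on $\vec{\E}_1 \cup \{\hat{e}\} \cup \{e\}$) and $S^{(2)}$ (acting on $\{\hat{e}\} \cup \{e\} \cup \vec{\E}_2$) that overlap only on the two-dimensional bridge subspace.

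First, I would describe the scattering inside each side $\Gamma_i$ as seen from the bridge endpoint $v_i$ via a reflection coefficient. Treating $\Gamma_i$ as an auxiliary graph with a single open channel at $v_i$ (a half-edge playing the role of one half of the bridge), one constructs a unitary reflection coefficient $R_i(\kv_i) \in U(1)$ that maps the amplitude entering $\Gamma_i$ from the bridge at $v_i$ to the amplitude returning back into the bridge, after all internal multiple scattering within $\Gamma_i$ has been summed. Standard resolvent identities (or a Schur complement applied to the block $S^{(i)}$) express $R_i(\kv_i) = \bar h_i(\kv_i)/h_i(\kv_i)$ as a ratio of trigonometric polynomials in $e^{i\kappa_{e'}}$, $e' \in \E_i$; the unimodularity is automatic from unitarity of $S^{(i)}$. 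Writing $R_i = e^{i\Theta_i(\kv_i)}$, the branch $\Theta_i = -i \log(\bar h_i/h_i)$ is real analytic exactly on the set $\{h_i \neq 0\}$.

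Next, I would compute $\det(1 - U_\kv)$ by a Schur-complement block LDU decomposition with respect to the $2$-dimensional bridge subspace spanned by the directions $e, \hat{e}$. After eliminating the internal amplitudes of $\Gamma_1$ and $\Gamma_2$, one is left with a $2 \times 2$ self-consistency condition for the bridge amplitudes $(a_e, a_{\hat{e}})$ of the form
\begin{equation*}
\det\begin{pmatrix} 1 & -R_1(\kv_1) \\ -e^{2i\kappa_e} R_2(\kv_2) & 1 \end{pmatrix} = 1 - e^{2i\kappa_e} e^{i\Theta_1(\kv_1)} e^{i\Theta_2(\kv_2)},
\end{equation*}
while the elimination contributes the prefactors $h_1(\kv_1) h_2(\kv_2)$ (these are precisely the determinants of the internal blocks, tracking what happens inside each $\Gamma_i$). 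Absorbing the $\det(U_\kv)^{-1/2} = i^{\beta-1} e^{-i\sum_{e'\in\E}\kappa_{e'}}$ normalization from Definition 3.10 distributes a factor $e^{-i\kappa_{e'}}$ to each edge, producing the $e^{-i\kappa_e}$ for the bridge and being absorbed into $g_i(\kv_i) := C_i h_i(\kv_i) \prod_{e'\in\E_i} e^{-i\kappa_{e'}}$ for each side. This yields exactly (3.17).

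The main obstacle will be the careful bookkeeping in Step 3: tracking how the various phase factors (the per-edge $e^{-i\kappa_{e'}}$ from $\det(U_\kv)^{-1/2}$ and the propagation phase $e^{i\kappa_e}$ on the bridge) combine to give precisely the prefactor $e^{-i\kappa_e}$ and the single round-trip phase $e^{2i\kappa_e}$ inside the parentheses, with all other $\kappa_e$ dependence absorbed into $g_i$ and $\Theta_i$. A cleaner alternative, if the direct block computation becomes unwieldy, is to appeal to the general decomposition machinery in Section~4 of \cite{AloBanBer_cmp18}: a bridge is the simplest instance of the graph-decomposition handled there, and specializing their identity to a single bridge edge yields (3.17) immediately, with the identifications of $g_i$ and $\Theta_i$ as above. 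The analyticity statement for $\Theta_i$ on $\{g_i \neq 0\}$ then reduces to the elementary fact that $\arg$ of a non-vanishing real analytic function is real analytic up to $2\pi$.
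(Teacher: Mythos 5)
Your proposal is essentially the paper's own proof: the paper defers the determinant computation to a Schur-complement block elimination (Appendix~\ref{sec: appendix bridge}, Lemma~\ref{lem: Technical  bridge decomp}) with respect to the two-dimensional bridge subspace, identifies $e^{i\Theta_i}$ with a unimodular reflection coefficient $\mathcal{S}_i(\boldsymbol{z}_i)=r_i-t_i'\,D_i(\boldsymbol{z}_i)^{+}\boldsymbol{z}_i t_i$ from $\Gamma_i$, and then redistributes the factor $\det(U_{\vec{\kappa}})^{-1/2}=i^{\beta-1}e^{-i\sum\kappa_{e'}}$ edge-by-edge into $g_1,g_2$ and the lone $e^{-i\kappa_e}$. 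One refinement the paper makes explicit which your sketch glosses over: the Schur elimination is carried out with the Moore--Penrose pseudoinverse of $D_i$ together with the observation (Lemma~\ref{lem: D-1}) that $t_i'\perp\ker D_i$, so the block-triangularization and hence the determinant identity hold for \emph{all} $\vec{\kappa}$, not just where $D_i$ is invertible; your version would only obtain the identity on the dense set $\{\det D_i\ne 0\}$ and then need a polynomial-continuation remark. Also, your claimed form $R_i=\bar h_i/h_i$ is stronger than needed and not established in the paper — what is actually used (Lemma~\ref{lem: scattering phase}) is only that $\mathcal{S}_i$ is unimodular and rational in $e^{i\kappa_{e'}}$ with poles exactly on $\{g_i=0\}$, which already gives the real-analyticity of $\Theta_i$ on $\{g_i\ne 0\}$.
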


\begin{proof}
Using the terminology of Appendix \ref{sec: appendix bridge}, $U_{\kv}=U_{\ztrip}$
for\\ $\ztrip=\left(e^{i\hat{\kappa}_{1}},e^{i\kappa_{e}},e^{i\hat{\kappa}_{2}}\right)$
being the blocks of $e^{i\hat{\kappa}}$. We define $\tilde{g}_{i}\left(\kv_{i}\right):=\det D_{i}\left(\zi\right)$
and $e^{i\Theta_{i}\left(\kv_{i}\right)}=\mathcal{S}_{i}\left(\zi\right)$
as in Definition \ref{def: D,S}. By Lemma \ref{lem: Technical  bridge decomp},\[
\det\left(1-U_{\kv}\right)=\tilde{g}_{1}\left(\kv_{1}\right)\tilde{g}_{2}\left(\kv_{2}\right)\left(1-e^{i2\kappa_{e}}e^{i\Theta_{1}\left(\kv_{1}\right)}e^{i\Theta_{2}\left(\kv_{2}\right)}\right),\]
and therefore: 
\[
F\left(\kv\right)=i^{\left(\beta-1\right)}e^{-i\sum_{e\in\E}\kappa_{e}}\tilde{g}_{1}\left(\kv_{1}\right)\tilde{g}_{2}\left(\kv_{2}\right)\left(1-e^{i2\kappa_{e}}e^{i\Theta_{1}\left(\kv_{1}\right)}e^{i\Theta_{2}\left(\kv_{2}\right)}\right).
\]
This proves (\ref{eq: Bridge decomposition F}) by setting
\begin{align}
g_{1}\left(\kv_{1}\right):= & i^{\left(\beta-1\right)}e^{-i\sum_{e\in\E_{1}}\kappa_{e}}\tilde{g}_{1}\left(\kv_{1}\right),\ and\\
g_{2}\left(\kv_{2}\right):= & e^{-i\sum_{e\in\E_{2}}\kappa_{e}}\tilde{g}_{2}\left(\kv_{2}\right).
\end{align}
Clearly by definition of $\tilde{g}_{i}\left(\kv_{i}\right)$ is polynomial
in the entries of $e^{i\hat{\kappa}_{i}}$ and is therefore a trigonometric
polynomial, and so does $g_{i}\left(\kv_{i}\right)$. According to
Lemma \ref{lem: scattering phase}, $e^{i\Theta_{i}\left(\kv_{i}\right)}=\mathcal{S}_{i}\left(e^{i\hat{\kappa}_{i}}\right)$
is uni-modular for any $\kv_{i}$, and is analytic in the entries
of $e^{i\hat{\kappa}_{i}}$ in the region where $g_{i}\left(\kv_{i}\right)\ne0$.
This proves that $\Theta_{i}\left(\kv_{i}\right):\T^{\E_{i}}\rightarrow\R/2\pi\Z$
is well defined everywhere and is real analytic on the set where $g_{i}\left(\kv_{i}\right)\ne0$.
\end{proof}
The meaning of $e^{i\Theta_{1}\left(\kv_{1}\right)},e^{i\Theta_{2}\left(\kv_{2}\right)}$
in terms of eigenfunctions is as follows:
\begin{lem}
\label{lem: Rev phase} Let $f\in Eig\left(\Gamma_{\kv},1\right)$
and assume that $f|_{e}\not\equiv0$. Consider the direction of $e$
from $\Gamma_{1}$ to $\Gamma_{2}$. If $\boldsymbol{a}$ is the amplitudes
vector of some $f\in Eig\left(\Gamma_{\kv},1\right)$, and $f|_{e}\not\equiv0$,
then:
\begin{align}
e^{i\left(\kappa_{e}+\Theta_{1}\left(\kv_{1}\right)\right)}a_{\hat{e}} & =a_{e},\,\,\,and\label{eq:  phase1}\\
e^{i\left(\kappa_{e}+\Theta_{2}\left(\kv_{2}\right)\right)}a_{e} & =a_{\hat{e}}.\label{eq: phaze2}
\end{align}
Let $\varphi_{e}$ be the phase of the amplitude-phase pair in Definition
\ref{def: edge restriction notations}.\\ Then $e^{2i\varphi_{e}}=e^{-i\Theta_{1}\left(\kv_{1}\right)}$, and in the same way, $e^{2i\varphi_{\hat{e}}}=e^{-i\Theta_{2}\left(\kv_{2}\right)}$,
for the phase $\varphi_{\hat{e}}$ of the other direction. 
\end{lem}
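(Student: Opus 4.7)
The strategy is to push the bridge decomposition from the proof of Proposition~\ref{prop: simple bridge det decomposition} one step further: rather than merely extracting the scalar secular identity $\det(1-U_{\kv})=0$, I will read the vector-valued scattering identities \eqref{eq:  phase1}--\eqref{eq: phaze2} directly off the full secular equation $U_{\kv}\boldsymbol{a}=\boldsymbol{a}$, and then translate them to the phase identities for $\varphi_e,\varphi_{\hat{e}}$ via Lemma~\ref{lem: |ae|=00003D|ae'|}.

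First I would decompose $U_{\kv}\boldsymbol{a}=\boldsymbol{a}$ according to the splitting $\vec{\E}=\vec{\E}_1\sqcup\{e,\hat{e}\}\sqcup\vec{\E}_2$. Because $JS$ is block diagonal with one block per vertex and because the bridge's endpoints $v\in\V_1$ and $u\in\V_2$ lie in disjoint vertex sets, the only coupling between the $\Gamma_1$-amplitudes $\boldsymbol{a}_1$ and the $\Gamma_2$-amplitudes $\boldsymbol{a}_2$ is through the bridge amplitudes $a_e,a_{\hat{e}}$. At $v$ the incoming directions are $\hat{e}$ together with the $\vec{\E}_1$-directed edges pointing into $v$, and the scattering block $(JS)_v$ produces, in particular, the outgoing amplitude on $e$ at $v$ (which equals $a_e e^{-i\kappa_e}$ by Definition~\ref{def: edge restriction notations}); the situation at $u$ is symmetric. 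The rows of $(1-U_{\kv})\boldsymbol{a}=0$ indexed by $\vec{\E}_1$ thus form an inhomogeneous linear system $D_1(\boldsymbol{z}_1)\boldsymbol{a}_1 = c\cdot a_{\hat{e}}$, with $D_1$ the matrix from Definition~\ref{def: D,S}.

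On the open locus $\{g_1(\kv_1)\neq 0\}=\{\det D_1\neq 0\}$ I would invert $D_1$, express $\boldsymbol{a}_1$ as a linear function of $a_{\hat{e}}$, and substitute the result into the row of the secular equation at $v$. This produces an identity of the form $a_e e^{-i\kappa_e}=\mathcal{S}_1(\boldsymbol{z}_1)\,a_{\hat{e}}=e^{i\Theta_1(\kv_1)}\,a_{\hat{e}}$, which is \eqref{eq:  phase1} after multiplication by $e^{i\kappa_e}$: the coefficient is, by construction of $\mathcal{S}_1$ in Definition~\ref{def: D,S}, the reflection coefficient of $\Gamma_1$ as seen from the bridge. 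The symmetric argument on $\Gamma_2$ yields \eqref{eq: phaze2}. The hypothesis $f|_e\not\equiv 0$ together with the unimodularity of the $\mathcal{S}_i$ guarantees $a_e,a_{\hat{e}}\neq 0$, so the quotients appearing in the next step are legitimate.

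Finally, specialising to the real eigenfunction for which $\varphi_e$ is defined, I would substitute \eqref{eq:  phase1} into \eqref{eq: amp to phase} of Lemma~\ref{lem: |ae|=00003D|ae'|} (with $k=1$ and $kl_e=\kappa_e$) to obtain $e^{2i\varphi_e}=(a_{\hat{e}}/a_e)\,e^{i\kappa_e}=e^{-i(\kappa_e+\Theta_1(\kv_1))}e^{i\kappa_e}=e^{-i\Theta_1(\kv_1)}$, and symmetrically $e^{2i\varphi_{\hat{e}}}=e^{-i\Theta_2(\kv_2)}$. The main obstacle is the identification, in the central step, of the scattering coefficient that emerges from solving the $\Gamma_1$ subsystem with the specific rational expression $\mathcal{S}_1$ of Definition~\ref{def: D,S}: this is essentially a bookkeeping verification, but one must track carefully the incoming-versus-outgoing conventions at $v$ and the placement of the bridge propagation factor $e^{-i\kappa_e}$, since any sign or phase convention slip would propagate through the derivation and corrupt both \eqref{eq:  phase1}--\eqref{eq: phaze2} and the final phase identities.
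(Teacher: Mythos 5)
Your argument follows the same route as the paper's: read the secular equation $(1-U_{\kv})\boldsymbol{a}=0$ in the block basis $\vec{\E}_1\sqcup\{e,\hat{e}\}\sqcup\vec{\E}_2$, eliminate the $\Gamma_1$- and $\Gamma_2$-amplitudes, and land on a $2\times 2$ relation between $a_e$ and $a_{\hat{e}}$ whose off-diagonal entries are $z_e\mathcal{S}_i(\zi)=e^{i(\kappa_e+\Theta_i)}$. The paper does this in one line by citing Lemma~\ref{lem: Technical  bridge decomp}, which is precisely the row-reduction you describe, already packaged as the identity $\ker(1-U_{\kv})=\ker M$; you re-derive that reduction by hand. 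So the mathematical content is the same, and your final passage from \eqref{eq:  phase1} to $e^{2i\varphi_e}=e^{-i\Theta_1}$ via \eqref{eq: amp to phase} (at $k=1$, $kl_e=\kappa_e$) agrees with the paper's use of Lemma~\ref{lem: phase for generic}.

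There is one real gap. You invert $D_1(\zone)$ and therefore explicitly restrict to the open locus $\{g_1(\kv_1)\ne 0\}$, but the lemma does not assume $\kv\in\Sigma^{reg}$ or $g_1g_2\ne 0$ --- only that some $f\in Eig(\Gamma_{\kv},1)$ has $f|_e\not\equiv 0$. If $\kv\in\Sigma^{sing}$ there may coexist an eigenfunction supported entirely inside $\Gamma_1$ (forcing $\ker D_1\ne 0$, i.e.\ $g_1=0$ by Lemma~\ref{lem: D-1}) and another eigenfunction $f$ with $f|_e\not\equiv 0$; the lemma must still hold for the latter, but your invertibility step breaks down there. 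The fix is exactly what Lemma~\ref{lem: Technical  bridge decomp} and Definition~\ref{def: D,S} already encode: replace $D_1^{-1}$ by the Moore--Penrose inverse $D_1^{+}$. Since (by Lemma~\ref{lem: D-1}) $t_1'\perp\ker D_1$ and $-\zone t_1 a_{\hat{e}}$ lies in the range of $D_1$ (it is $D_1\boldsymbol{a}_1$ for the actual amplitude vector), the quantity $t_1'\boldsymbol{a}_1$ is independent of which solution of $D_1\boldsymbol{a}_1=-\zone t_1 a_{\hat{e}}$ one picks, and equals $-t_1'D_1^{+}\zone t_1 a_{\hat{e}}$; the rest of your substitution then goes through verbatim. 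You should either add this Moore--Penrose remark or simply cite Lemma~\ref{lem: Technical  bridge decomp} as the paper does, otherwise the proof covers only the case $\kv\in\Sigma\setminus Z_g$ (which, by Lemma~\ref{lem: Sigma minus Zg}, is the same as $\kv\in\Sigma^{reg}$ with $f_{\kv}|_e\not\equiv 0$), a strictly stronger hypothesis than the lemma states.
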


\begin{proof}
Using Lemma \ref{lem: Technical  bridge decomp}, and since the amplitudes
vector of $f$ is in $\ker\left(1-U_{\kv}\right)$, then it is also
in the kernel of the $M$ matrix from the lemma and therefore\\ $\begin{pmatrix}a_{e}\\
a_{\hat{e}}
\end{pmatrix}\in\ker\begin{pmatrix}-1 & z_{e}\mathcal{S}_{1}\left(\zone\right)\\
z_{e}\mathcal{S}_{2}\left(\ztwo\right) & -1
\end{pmatrix}$. Using $z_{e}\mathcal{S}_{1}\left(\zone\right)=e^{i\kappa_{e}}e^{i\Theta_{i}\left(\kv_{i}\right)}$
we get both (\ref{eq:  phase1}) and (\ref{eq: phaze2}). Since $f|_{e}\not\equiv0$
then $a_{e}\ne0$ and therefore (\ref{eq:  phase1} implies $e^{-i\Theta_{1}\left(\kv_{1}\right)}=\frac{a_{\hat{e}}}{a_{e}}e^{i\kappa_{e}}$.
Using Lemma \ref{lem: phase for generic}, $e^{2i\varphi_{e}}=\frac{a_{\hat{e}}}{a_{e}}e^{i\kappa_{e}}$
which gives the needed result. In the same way, using (\ref{eq: phaze2}),
$e^{2i\varphi_{\hat{e}}}=e^{-i\Theta_{2}\left(\kv_{2}\right)}$.
\end{proof}
Let us denote the zero set of $g_{1}\left(\kv_{1}\right)g_{2}\left(\kv_{2}\right)$
by $Z_{g}$: 
\begin{equation}
Z_{g}:=\set{\kv=\left(\kv_{1},\kappa_{e},\kv_{2}\right)\in\T^{\E}}{g_{1}\left(\kv_{1}\right)g_{2}\left(\kv_{2}\right)=0}.\label{eq: Zg}
\end{equation}

\begin{lem}
\label{lem: Sigma minus Zg} The complement of $Z_{g}$ lies in $\Sigma^{reg}$
and can be described in two equivalent ways, using either the secular
function $F$ or canonical eigenfunctions $f_{\kv}$: 
\begin{align}
\Sigma\setminus Z_{g} & =\set{\kv\in\Sigma}{\frac{\partial F}{\partial\kappa_{e}}\left(\kv\right)\ne0},\,\,\,and\label{eq: Sigma minus Zg using df}\\
\Sigma\setminus Z_{g} & =\set{\kv\in\Sigma^{reg}}{f_{\kv}|_{e}\not\equiv0}.\label{eq: Sigma minus Zg as fke not zero}
\end{align}
\end{lem}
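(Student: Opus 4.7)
The plan is to compute $\partial F/\partial\kappa_{e}$ directly from the bridge decomposition in Proposition \ref{prop: simple bridge det decomposition} and then identify this derivative with $(m_{\kv})_{e}$ via Lemma \ref{lem: F p and mk}, which already ties the gradient of $F$ to the amplitudes of the canonical eigenfunction.

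First, I would set $W(\kv):=e^{i2\kappa_{e}}e^{i\Theta_{1}(\kv_{1})}e^{i\Theta_{2}(\kv_{2})}$, which is unimodular wherever $g_{1}g_{2}\ne0$ by Proposition \ref{prop: simple bridge det decomposition}. Differentiating
\[
F(\kv)=g_{1}(\kv_{1})g_{2}(\kv_{2})e^{-i\kappa_{e}}\bigl(1-W(\kv)\bigr)
\]
with respect to $\kappa_{e}$ (the only variable in which $g_{1},g_{2},\Theta_{1},\Theta_{2}$ are constant) gives
\[
\frac{\partial F}{\partial\kappa_{e}}(\kv)=-i\,g_{1}(\kv_{1})g_{2}(\kv_{2})\,e^{-i\kappa_{e}}\bigl(1+W(\kv)\bigr).
\]
Now on $\Sigma$ the equation $F=0$ forces either $g_{1}(\kv_{1})g_{2}(\kv_{2})=0$ (i.e.\ $\kv\in Z_{g}$) or $W(\kv)=1$. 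In the first case $\partial F/\partial\kappa_{e}=0$. In the second case $W=1$ makes $1+W=2$, and then $\partial F/\partial\kappa_{e}=-2i\,g_{1}g_{2}e^{-i\kappa_{e}}\ne0$. This establishes (\ref{eq: Sigma minus Zg using df}), and since any point at which $\nabla F\ne0$ is regular by Lemma \ref{lem:The-secular-functions properties}, it also shows $\Sigma\setminus Z_{g}\subset\Sigma^{reg}$.

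For (\ref{eq: Sigma minus Zg as fke not zero}) I would invoke Lemma \ref{lem: F p and mk}: on $\Sigma^{reg}$, $\nabla F(\kv)=p(\kv)m_{\kv}$ with $p(\kv)\ne0$. Taking the $e$-component gives $\partial F/\partial\kappa_{e}=p(\kv)(m_{\kv})_{e}$, so $\partial F/\partial\kappa_{e}\ne0$ is equivalent to $(m_{\kv})_{e}\ne0$. By Definition \ref{def: mk weights}, $(m_{\kv})_{e}=|a_{e}|^{2}+|a_{\hat{e}}|^{2}$, where $a_{e},a_{\hat{e}}$ are the complex amplitudes of $f_{\kv}$ on $e$; this vanishes iff $f_{\kv}|_{e}\equiv0$. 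Combining with the first equality and the inclusion $\Sigma\setminus Z_{g}\subset\Sigma^{reg}$ yields (\ref{eq: Sigma minus Zg as fke not zero}).

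There is essentially no hard step here: once the compact form of $F$ from Proposition \ref{prop: simple bridge det decomposition} is differentiated, everything follows by the $W=1$ dichotomy on $\Sigma$ and by quoting $\nabla F=p\,m_{\kv}$. The only point requiring a small care is the identification $1-W=0\Rightarrow1+W\ne0$, which is immediate from unimodularity. The rest is bookkeeping between three equivalent characterizations of the same subset of $\Sigma$.
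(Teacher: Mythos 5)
Your proof is correct and follows essentially the same path as the paper: differentiate the bridge factorization of $F$ in $\kappa_{e}$, use the dichotomy on $\Sigma$ that either $g_{1}g_{2}=0$ or $e^{i2\kappa_{e}}e^{i\Theta_{1}}e^{i\Theta_{2}}=1$ (so that $1+W=2$), and then quote $\nabla F=p\,m_{\kv}$ from Lemma \ref{lem: F p and mk} to translate $\partial F/\partial\kappa_{e}\ne0$ into $(m_{\kv})_{e}\ne0$, i.e.\ $f_{\kv}|_{e}\not\equiv0$. No gap; the only cosmetic remark is that $W$ is unimodular everywhere (not just where $g_{1}g_{2}\ne0$), which only strengthens the argument.
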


\begin{proof}
Taking derivative\footnote{Although $\Theta_{i}\left(\kv_{i}\right)$ may not be differentiable
on $Z_{g}$, we have no problem with derivatives with respect to $\kappa_{e}$.} of (\ref{eq: Bridge decomposition F}) we get 
\begin{equation}
\frac{\partial F}{\partial\kappa_{e}}\left(\kv_{1},\kappa_{e},\kv_{2}\right)=-ig_{1}\left(\kv_{1}\right)g_{2}\left(\kv_{2}\right)e^{-i\kappa_{e}}\left(1+e^{i2\kappa_{e}}e^{i\Theta_{1}\left(\kv_{1}\right)}e^{i\Theta_{2}\left(\kv_{2}\right)}\right).\label{eq: simple bridge dfdke}
\end{equation}
If $\kv\in Z_{g}$, then clearly $\frac{\partial F}{\partial\kappa_{e}}\left(\kv\right)=0$.
If $\kv\in\Sigma\setminus Z_{g}$ , namely $F\left(\kv\right)=0$
and $g_{1}\left(\kv_{1}\right)g_{2}\left(\kv_{2}\right)\ne0$, so
$e^{i2\kappa_{e}}e^{i\Theta_{1}\left(\kv_{1}\right)}e^{i\Theta_{2}\left(\kv_{2}\right)}=1$.
In such case 
\[
\frac{\partial F}{\partial\kappa_{e}}\left(\kv_{1},\kappa_{e},\kv_{2}\right)=-ig_{1}\left(\kv_{1}\right)g_{2}\left(\kv_{2}\right)e^{-i\kappa_{e}}2\ne0.
\]
This proves (\ref{eq: Sigma minus Zg using df}) and in particular
that $\nabla F\left(\kv\right)\ne0$ for $\kv\in\Sigma\setminus Z_{g}$
and therefore $\Sigma\setminus Z_{g}\subset\Sigma^{reg}$. Let $\kv\in\Sigma^{reg}$
and consider $f_{\kv}$ and its amplitudes vector $\boldsymbol{a}\in\ker\left(1-U_{\kv}\right)$.
Clearly $f_{\kv}|_{e}\equiv0\iff\left|a_{e}\right|^{2}+\left|a_{\hat{e}}\right|^{2}=0$
and by Lemma \ref{lem: F p and mk}, $\left|a_{e}\right|^{2}+\left|a_{\hat{e}}\right|^{2}=0\iff\frac{\partial F}{\partial\kappa_{e}}\left(\kv\right)=0$. 
\end{proof}
\begin{lem}
\label{lem: simple bridge inversion gi thetai}The maps $g_{i}$ and
$\Theta_{i}$ transform under the torus inversion $\mathcal{I}$ by
\begin{align}
\left|g_{i}\circ\mathcal{I}\right| & =\left|g_{i}\right|,\,\,and\label{eq: inversion gi}\\
e^{i\Theta_{i}}\circ\mathcal{I} & =e^{-i\Theta_{i}}.\label{eq: inversion Theta_i}
\end{align}
\end{lem}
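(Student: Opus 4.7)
The plan is to exploit the explicit construction of $g_i$ and $e^{i\Theta_i}$ from Appendix \ref{sec: appendix bridge}, which expresses them as algebraic functions of the entries of the diagonal matrix $e^{i\hat{\kappa}_i}$. Recall from the proof of Proposition \ref{prop: simple bridge det decomposition} that $g_i(\kv_i)$ differs from $\tilde g_i(\kv_i) := \det D_i(\zi)$ only by the unimodular factor $e^{-i\sum_{e\in\E_i}\kappa_e}$ (and an additional constant $i^{\beta-1}$ in the case of $g_1$), and that $e^{i\Theta_i(\kv_i)} = \mathcal{S}_i(\zi)$. The crucial observation, which I would state and use as the first step, is that $D_i$ and $\mathcal{S}_i$ are built from the entries of $e^{i\hat{\kappa}_i}$ by operations (determinants and cofactors) involving the restricted bond-scattering matrix $S_i$; since $S_i$ is \emph{real}, these expressions have real coefficients.

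With this in hand, the computation is essentially a single observation. Because $\I(\kv_i) = \{-\kv_i\}$ sends $e^{i\hat\kappa_i}$ to its complex conjugate $\overline{e^{i\hat\kappa_i}}$, and any rational expression with real coefficients commutes with complex conjugation of its variables, we immediately obtain $\tilde g_i \circ \I = \overline{\tilde g_i}$ and $\mathcal{S}_i \circ \I = \overline{\mathcal{S}_i}$. To conclude (\ref{eq: inversion Theta_i}), I would then invoke the unimodularity of $\mathcal{S}_i$ from Lemma \ref{lem: scattering phase} to write $\overline{\mathcal{S}_i} = \mathcal{S}_i^{-1}$, which yields $e^{i\Theta_i}\circ\I = e^{-i\Theta_i}$. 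For (\ref{eq: inversion gi}), the extra prefactor is unimodular, so $|g_i| = |\tilde g_i|$; applying the same identity to $\I(\kv_i)$ gives $|g_i \circ \I| = |\tilde g_i \circ \I| = |\overline{\tilde g_i}| = |\tilde g_i| = |g_i|$.

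An alternative route I considered, which avoids opening the appendix, is to start from the identity $F \circ \I = (-1)^{\beta-1} F$ of Lemma \ref{lem: inversion on F}, substitute the bridge decomposition (\ref{eq: Bridge decomposition F}) on both sides, and try to match factors. The main obstacle with that approach would be uniqueness of the factorization: the factors $g_1, g_2, e^{i\Theta_1}, e^{i\Theta_2}$ are only pinned down on the complement of the zero set $Z_g$, and one would have to argue separation of the $\kv_1$- and $\kv_2$-dependencies from a single scalar identity before extending by real-analytic continuity. Going through the appendix-level algebraic description of $D_i$ and $\mathcal{S}_i$ bypasses this issue entirely. The only piece of bookkeeping to keep track of in that preferred route is the $i^{\beta-1}$ constant distinguishing $g_1$ from $g_2$, which is purely unimodular and so affects neither of the two claims of the lemma.
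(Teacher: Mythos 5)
Your proof is correct and follows the same route as the paper's: both exploit that $\mathcal{I}$ acts by complex conjugation on $\zi$, that $D_i(\zi)=\zi S_i - 1$ has real coefficients (so $\det D_i$ commutes with conjugation), and that $\mathcal{S}_i$ is unimodular and satisfies $\mathcal{S}_i(\overline{\zi})=\overline{\mathcal{S}_i(\zi)}$ (Lemma \ref{lem: scattering phase}). The only difference is cosmetic: you re-derive the conjugation property of $\mathcal{S}_i$ from the real-coefficients observation and spell out the unimodular prefactor bookkeeping, whereas the paper cites Lemma \ref{lem: scattering phase} directly and treats the prefactor implicitly.
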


\begin{proof}
Substituting $\ztrip=\left(e^{i\hat{\kappa}_{1}},e^{i\kappa_{e}},e^{i\hat{\kappa}_{2}}\right)$,
the inversion $e^{i\hat{\kappa}}\mapsto\I\left(e^{i\hat{\kappa}}\right)=\overline{e^{i\hat{\kappa}}}$
gives $\ztrip\mapsto\overline{\ztrip}$. As $D_{i}\left(\zi\right)$
is linear in $\zi$, then $\det\left(D_{i}\left(\overline{\zi}\right)\right)=\overline{\det\left(D_{i}\left(\zi\right)\right)}$
and therefore $\left|g_{i}\circ\mathcal{I}\right|=\left|g_{i}\right|$.
In the same way, and using Lemma \ref{lem: scattering phase}, $e^{i\Theta_{i}\left(\I\left(\kv_{i}\right)\right)}=\mathcal{S}_{i}\left(\overline{\zi}\right)=\overline{\mathcal{S}_{i}\left(\zi\right)}=e^{-i\Theta_{i}\left(\kv_{i}\right)}$.
\end{proof}
As we already showed that the bridge decomposition of the graph provides
a factorization of the secular function, one should expect that this
factorization will induce symmetries of the secular manifold. This
is indeed the case, and we will exploit these symmetries both in Section
\ref{sec: genericity} and in Section \ref{sec: Magnetic}. The two
bridge related symmetries of the secular manifold are as followed: 
\begin{defn}
\label{def: Tau_=00007Be=00007D}Let $\Gamma$ be a graph with a bridge
$e$, and bridge decomposition\\ $\Gamma\setminus e=\Gamma_{1}\sqcup\Gamma_{2}$
with notations as discussed above. We define the \emph{bridge extension,}
$\tau_{e}:\T^{\E}\rightarrow\T^{\E}$, by 
\begin{equation}
\tau_{e}\left(\kv_{1},\kappa_{e},\kv_{2}\right)=\left\{ \left(\kv_{1},\kappa_{e}+\pi,\kv_{2}\right)\right\} .\label{eq: Tau e}
\end{equation}
Let $v$ be the vertex connecting $\Gamma_{1}$ to the bridge $e$.
We define the \emph{torus cut-flip, }$\Rev:\T^{\E}\rightarrow\T^{\E}$,
by 
\begin{equation}
\Rev\left(\kv_{1},\kappa_{e},\kv_{2}\right)=\left\{ \left(\kv_{1},\kappa_{e}+\Theta_{2}\left(\kv_{2}\right),-\kv_{2}\right)\right\} ,\label{eq: Rev}
\end{equation}
where $\Theta_{2}$ is defined in Proposition \ref{prop: simple bridge det decomposition}.
\end{defn}

\begin{rem}
We call $\tau_{e}$ `bridge extension' as the two graphs $\Gamma_{\kv}$
and $\Gamma_{\tau_{e}\left(\kv\right)}$ are related by an extension
of the bridge by $\pi$. We call $\Rev$ `torus cut-flip' as the bridge
is a cut of the graph into $\Gamma_{1},\Gamma_{2}$, and $\Rev$ acts
as $\I$-inversion on $\Gamma_{2}$ without changing $\Gamma_{1}$.
That is, $\Gamma_{\Rev\left(\kv\right)}$ agree with $\Gamma_{\kv}$
on the restriction $\Gamma_{1}$ and agree with $\Gamma_{\I\left(\kv\right)}$
on the restriction to $\Gamma_{2}$. We specify that it is a torus
function not to be confused with the cut-flip function on discrete
graphs that we define in Section \ref{sec: Binomial}. 
\end{rem}

\begin{lem}
\label{lem: Re and tau e are measure preseving} All $\Sigma,\Sigma^{reg}$
and $Z_{g}$ are invariant to both $\tau_{e}$ and $\Rev$. Furthermore,
$\tau_{e}$ and $\Rev$ are $\mu_{\lv}$ preserving for any $\lv$
and satisfy $\tau_{e}^{2}=\Rev^{2}=identity$. Given $\kv\in\Sigma^{reg}$,
the traces of $f_{\kv},f_{\tau_{e}\left(\kv\right)}$ and $f_{\Rev\left(\kv\right)}$
are related as follows. For every $u\in\V$ and $e\in\E_{u}$,
\begin{align}
f_{\tau_{e}\left(\kv\right)}\left(u\right)=\begin{cases}
f_{\kv}\left(u\right) & u\in\V_{1}\\
-f_{\kv}\left(u\right) & u\in\V_{2}
\end{cases},\,\,\, & \partial_{e'}f_{\tau_{e}\left(\kv\right)}\left(u\right)=\begin{cases}
\partial_{e'}f_{\kv}\left(u\right) & u\in\V_{1}\\
-\partial_{e'}f_{\kv}\left(u\right) & u\in\V_{2}
\end{cases},\label{eq:f for tau_e}
\end{align}
and
\begin{align}
f_{\Rev\left(\kv\right)}\left(u\right)=\begin{cases}
f_{\kv}\left(u\right) & u\in\V_{1}\\
f_{\kv}\left(u\right) & u\in\V_{2}
\end{cases},\,\,\, & \partial_{e'}f_{\Rev\left(\kv\right)}\left(u\right)=\begin{cases}
\partial_{e'}f_{\kv}\left(u\right) & u\in\V_{1}\\
-\partial_{e'}f_{\kv}\left(u\right) & u\in\V_{2}
\end{cases}.\label{eq: f for R_e}
\end{align}
\end{lem}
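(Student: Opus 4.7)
The plan is to work with the bridge decomposition $F=g_{1}g_{2}e^{-i\kappa_{e}}(1-e^{2i\kappa_{e}}e^{i(\Theta_{1}+\Theta_{2})})$ from Proposition~\ref{prop: simple bridge det decomposition} together with the inversion identities $|g_{i}\circ\I|=|g_{i}|$ and $\Theta_{i}\circ\I=-\Theta_{i}$ of Lemma~\ref{lem: simple bridge inversion gi thetai}, handling $\tau_{e}$ and $\Rev$ separately.

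For $\tau_{e}$, the shift $\kappa_{e}\mapsto\kappa_{e}+\pi$ negates $e^{-i\kappa_{e}}$ while preserving $e^{2i\kappa_{e}}$, so $F\circ\tau_{e}=-F$. Since $d\tau_{e}$ is the identity, $\nabla F\circ\tau_{e}=-\nabla F$, so both $\Sigma$ and $\Sigma^{reg}$ are invariant; $Z_{g}$ is trivially invariant because it does not depend on $\kappa_e$, and $\tau_{e}^{2}=\mathrm{id}$ since the shift is by $2\pi$. For~\eqref{eq:f for tau_e} I construct $\tilde f\in Eig(\Gamma_{\tau_{e}(\kv)},1)$ by taking $\tilde f=f_{\kv}$ on $\Gamma_{1}$, $\tilde f=-f_{\kv}$ on $\Gamma_{2}$, and bridge amplitudes $\tilde a_{e}=-a_{e},\tilde a_{\hat e}=a_{\hat e}$; the extra factor $e^{-i\pi}=-1$ from the lengthened bridge makes the trace match the asserted sign pattern at both endpoints, while the remaining vertex conditions on $\Gamma_{1}\cup\Gamma_{2}$ are preserved by homogeneity. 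The function $\tilde f$ is real and normalized, so $\tilde f=\pm f_{\tau_{e}(\kv)}$. Measure preservation is immediate: $\tau_{e}$ is an ambient isometry, so preserves $ds$, and $\hat n\circ\tau_{e}=-\hat n$ preserves $|\hat n\cdot\lv|$.

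For $\Rev$, substituting into the bridge decomposition and applying the $\I$-identities gives, after cancellation, $F\circ\Rev=\tfrac{g_{2}\circ\I}{g_{2}}e^{-i\Theta_{2}}F$, a factor of modulus one off $Z_{g}$. Hence $\Sigma\setminus Z_{g}$ is $\Rev$-invariant, and $Z_{g}$ itself is invariant because $\Rev$ acts on $\kv_{2}$ by $\I$ and $|g_{2}\circ\I|=|g_{2}|$. Differentiating the identity $F\circ\Rev=cF$ on $\Sigma$ gives $(d\Rev)^{T}\nabla F(\Rev(\kv))=c(\kv)\nabla F(\kv)$, and since $d\Rev$ is invertible and $|c|=1$, $\Sigma^{reg}$ is preserved. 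The involution $\Rev^{2}=\mathrm{id}$ follows directly from $\Theta_{2}\circ\I=-\Theta_{2}$. For~\eqref{eq: f for R_e} I construct $\tilde f\in Eig(\Gamma_{\Rev(\kv)},1)$ by keeping the amplitudes of $f_{\kv}$ on $\Gamma_{1}$ unchanged and conjugating those on $\Gamma_{2}$ (which implements the $\I$ action on $\Gamma_{2}$, preserving values and negating derivatives at $\V_{2}$ by Lemma~\ref{lem: vertex values of canonical function}); Lemma~\ref{lem: Rev phase} then identifies the new bridge length $\kappa_{e}+\Theta_{2}(\kv_{2})$ as exactly the one for which the boundary matching conditions at both bridge endpoints are consistent. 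The resulting $\tilde f$ is real and normalized, giving $\tilde f=\pm f_{\Rev(\kv)}$.

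The most delicate step is $\Rev$-measure preservation, since $\Rev$ is not an ambient isometry. My plan is to work in the local parametrization of $\Sigma\setminus Z_{g}$ in which the secular equation is locally solved as $\kappa_{e}=h(\kv_{1},\kv_{2}):=-\tfrac{1}{2}(\Theta_{1}+\Theta_{2})$. Using the identity $h(\kv_{1},-\kv_{2})=h(\kv_{1},\kv_{2})+\Theta_{2}(\kv_{2})$, the map $\Rev$ becomes simply $(\kv_{1},\kv_{2})\mapsto(\kv_{1},-\kv_{2})$ in these coordinates. The BG measure pulls back to a positive multiple of $|\lv\cdot(\nabla_{\kv_{1}}h,-1,\nabla_{\kv_{2}}h)|\,d\kv_{1}\,d\kv_{2}$, so the remaining obstacle is showing this density is invariant under $\kv_{2}\mapsto-\kv_{2}$. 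Differentiating $\Theta_{2}\circ\I=-\Theta_{2}$ yields that $\nabla_{\kv_{2}}\Theta_{2}$ is $\I$-invariant, and $\nabla_{\kv_{1}}h$ does not depend on $\kv_{2}$, so both the density and the volume element (whose Jacobian has absolute value one) are preserved, completing the proof.
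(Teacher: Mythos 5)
Your proof is correct and reaches the same conclusions, but you take a genuinely different route for two of the claims. For invariance of $\Sigma$, $\Sigma^{reg}$ and $Z_g$ you argue directly on the secular function, computing $F\circ\tau_e=-F$ (hence $\nabla F\circ\tau_e=-\nabla F$) and $F\circ\Rev = cF$ with $|c|=1$ off $Z_g$ (hence $(d\Rev)^T(\nabla F\circ\Rev)=c\nabla F$ on $\Sigma$). The paper instead proves these invariances by constructing explicit linear bijections $\tau_e.f$ and $\Rev.f$ between $Eig(\Gamma_{\kv},1)$ and $Eig(\Gamma_{\tau_e(\kv)},1)$ (resp.\ $Eig(\Gamma_{\Rev(\kv)},1)$), reading off $\dim\ker(1-U)$ preservation. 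Your analytic route is more economical for these particular claims, but the paper's eigenfunction constructions are needed anyway to prove the trace formulas \eqref{eq:f for tau_e}--\eqref{eq: f for R_e}, so the paper gets the invariances essentially for free. The more substantial difference is in the $\Rev$-measure-preservation step: you pass to the local parametrization $\kappa_e = h(\kv_1,\kv_2)=-\tfrac12(\Theta_1+\Theta_2)$ on $\Sigma\setminus Z_g$, show that $\Rev$ becomes $(\kv_1,\kv_2)\mapsto(\kv_1,-\kv_2)$ in these coordinates, and verify that the pulled-back density $|\lv\cdot(\nabla_{\kv_1}h,-1,\nabla_{\kv_2}h)|\,d\kv_1\,d\kv_2$ is invariant under $\kv_2\mapsto-\kv_2$ using $\Theta_2\circ\I=-\Theta_2$. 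The paper instead first shows $\Rev$ preserves the weight vector $m_{\kv}$ (hence the normal direction $\hat n$ and the factor $|\hat n\cdot\lv|$) via the amplitude construction, and then observes that the ambient Jacobian of $\Rev$ is triangular with $\pm1$ on the diagonal to conclude $ds$ is preserved. Both work, and both implicitly restrict to the set where $\Theta_2$ is smooth (i.e.\ off $Z_g$); your version is more transparent about exactly what differentiation of $\Theta_2\circ\I=-\Theta_2$ buys you. One small omission: in your $\Rev$ eigenfunction construction you don't say what happens to the bridge amplitudes $(a_e,a_{\hat e})$ themselves; the paper handles this via the amplitude-phase pair at $\hat e$, and you should spell out that the new bridge restriction with phase $+\varphi_{\hat e}$ (rather than $-\varphi_{\hat e}$) is the one that matches both endpoints for the altered length $\kappa_e+\Theta_2(\kv_2)$.
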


\begin{rem}
Both (\ref{eq:f for tau_e}) and (\ref{eq: f for R_e}) hold up to
a global sign due to the sign ambiguity of canonical eigenfunctions.
\end{rem}

\begin{proof}
We begin with the bridge extension $\tau_{e}$. If $\kv\in\Sigma$
and $f\in Eig\left(\Gamma_{\kv},1\right)$, then we define the function
$\tau_{e}.f$ on $\Gamma_{\tau_{e}\left(\kv\right)}$ as follows.
On the edges of $\Gamma_{1},\Gamma_{2}$ where $\Gamma_{\kv}$ and
$\Gamma_{\tau_{e}\left(\kv\right)}$ share the same edge lengths,
we define: 
\begin{equation}
\tau_{e}.f|_{e'}=\begin{cases}
f|_{e'} & e'\in\E_{1}\\
-f|_{e'} & e'\in\E_{2}
\end{cases}.
\end{equation}
On the bridge $e$ of $\Gamma_{\tau_{e}\left(\kv\right)}$ which is
an extension\textbackslash reduction of the bridge of $\Gamma_{\kv}$
by $\pi$, we define $\tau_{e}.f|_{e}$ as an extension\textbackslash reduction
of $f|_{e}$, which is $2\pi$ periodic (see Definition \ref{def: edge restriction notations})
by half a period in the $\Gamma_{2}$ direction. In other words, if
$\boldsymbol{a}=\left(a_{1},a_{e},a_{\hat{e}},a_{2}\right)$ is the
amplitudes vector of $f$, then we define $\tau_{e}.f$ by the amplitudes
vector $\tau_{e}.a:=\left(a_{1},-a_{e},a_{\hat{e}},-a_{2}\right)$.
We may conclude from either of the latter descriptions of $\tau_{e}.f$
(using Definition \ref{def: edge restriction notations}) that 
\begin{equation}
\tau_{e}.f\left(u\right)=\begin{cases}
f\left(u\right) & u\in\V_{1}\\
-f\left(u\right) & u\in\V_{2}
\end{cases},\,\,\,\partial_{e'}\tau_{e}.f\left(u\right)=\begin{cases}
\partial_{e'}f\left(u\right) & u\in\V_{1}\\
-\partial_{e'}f\left(u\right) & u\in\V_{2}
\end{cases}.\label{eq: tau e vartex values}
\end{equation}
Therefore, $\tau_{e}.f$ satisfies Neumann vertex conditions on all
vertices and so\\ $\tau_{e}.f\in Eig\left(\Gamma_{\tau_{e}\left(\kv\right)},1\right)$.
Clearly this is an invertible linear map between $Eig\left(\Gamma_{\kv},1\right)$
and $Eig\left(\Gamma_{\tau_{e}\left(\kv\right)},1\right)$ so $\tau_{e}$
preserve both $\Sigma$ and $\Sigma^{reg}$. It is also clear that
$\tau_{e}.f|_{e}\equiv0\iff f|_{e}\equiv0$ so $\tau_{e}$ preserve
$Z_{g}$ according to Lemma \ref{lem: Sigma minus Zg}. Consider $f_{\kv}$
for some $\kv\in\Sigma^{reg}$ has amplitudes vector $\boldsymbol{a}$
then $\tau_{e}.f_{\kv}\in Eig\left(\Gamma_{\tau_{e}\left(\kv\right)},1\right)$
with amplitudes vector $\tau_{e}.\boldsymbol{a}$. It is obvious that
$\norm{\tau_{e}.a}=\norm a=1$ and that $\tau_{e}.f_{\kv}$ has real
trace, so $\tau_{e}.f_{\kv}=\pm f_{\tau_{e}\left(\kv\right)}$ and
so (\ref{eq: tau e vartex values}) implies (\ref{eq:f for tau_e}).
It is left to show that $\tau_{e}$ is BG-measure preserving. Clearly
by definition $\tau_{e}^{2}=\mathrm{identity}$ and $\tau_{e}.\boldsymbol{a}$
preserve the weights $\left|a_{e}\right|^{2}+\left|a_{\hat{e}}\right|^{2}=\left(m_{\kv}\right)_{e}$
and therefore according to (\ref{eq: normal}), $\tau_{e}$ preserve
the normal to $\Sigma^{reg}$. As $\tau_{e}$ is a translation, then
it is an isometry of $\T^{\E}$ and hence preserve the surface element
$ds$. It follows from Definition \ref{def: BG measure} and (\ref{eq: normal})
that $\tau_{e}$ preserve $\mu_{\lv}$ for any $\lv$.

We may now prove the same for $\Rev$. Given $f\in Eig\left(\Gamma_{\kv},1\right)$
with amplitudes vector $\boldsymbol{a}\in\left(1-U_{\kv}\right)$
for some $\kv\in\Sigma$, then as already discussed $\boldsymbol{\overline{a}}\in\left(1-U_{\I\left(\kv\right)}\right)$
and let us define $\I.f\in Eig\left(\Gamma_{\I\left(\kv\right)},1\right)$
by the amplitudes vector $\boldsymbol{\overline{a}}$. By Definition
\ref{def: edge restriction notations}, 
\[
\forall u\in\V,\,\forall e'\in\E_{u}\,\,\I.f\left(u\right)=f\left(u\right),\,\,\,\partial_{e'}\I.f\left(u\right)=-\partial_{e'}f\left(u\right).
\]
We define the function $\Rev.f$ on $\Gamma_{\Rev\left(\kv\right)}$
as follows. On the restriction to $\Gamma_{1}$ where $\Gamma_{\Rev\left(\kv\right)}$
and $\Gamma_{\kv}$ agree on the edge lengths, we define 
\begin{equation}
\Rev.f|_{e'}=f|_{e'}\,\,\forall e'\in\E_{1}.
\end{equation}
On the restriction to $\Gamma_{2}$ where $\Gamma_{\Rev\left(\kv\right)}$
and $\Gamma_{\I\left(\kv\right)}$ agree on the edge lengths, we define
\begin{equation}
\Rev.f|_{e'}=\I.f|_{e'}\,\,\forall e'\in\E_{2}.
\end{equation}
The bridge $e$ of $\Gamma_{\tau_{e}\left(\kv\right)}$ is an extension\textbackslash reduction
of the bridge of $\Gamma_{\kv}$ from $l_{e}$ (for which $\left\{ l_{e}\right\} =\kappa_{e}$)
to $\tilde{l_{e}}$ such that $\left\{ \tilde{l_{e}}\right\} =\left\{ \kappa_{e}+\Theta_{2}\left(\kv_{2}\right)\right\} $.
Let $u$ be the edge connecting $e$ to $\Gamma_{2}$, and $\hat{e}$
in the direction emitting from $u$. Let $A_{\hat{e}},\varphi_{\hat{e}}$
be the amplitude-phase pair of $f$ for $\hat{e}$ (see Definition
\ref{def: edge restriction notations}) so that $f|_{e}\left(x_{\hat{e}}\right)=A_{\hat{e}}\cos\left(x_{\hat{e}}-\varphi_{\hat{e}}\right)$
and define $\Rev.f|_{e}\left(x_{\hat{e}}\right)=A_{\hat{e}}\cos\left(x_{\hat{e}}+\varphi_{\hat{e}}\right)$.
Using Lemma \ref{lem: Rev phase}, $e^{i\left(\tilde{l_{e}}+\varphi_{\hat{e}}\right)}=e^{i\left(\kappa_{e}+\Theta_{2}\left(\kv_{2}\right)+\varphi_{\hat{e}}\right)}=e^{i\left(\kappa_{e}-\varphi_{\hat{e}}\right)}=e^{i\left(l_{e}-\varphi_{\hat{e}}\right)}$.
Therefore the traces of $f|_{e}$ and $\Rev.f|_{e}$ are given by
\begin{align}
f|_{e}\left(u\right)=A_{\hat{e}}\cos\left(-\varphi_{\hat{e}}\right) & =A_{\hat{e}}\cos\left(\varphi_{\hat{e}}\right)=\Rev.f|_{e}\left(u\right)\\
\partial_{e}f|_{e}\left(u\right)=A_{\hat{e}}\sin\left(\varphi_{\hat{e}}\right) & =-A_{\hat{e}}\sin\left(\varphi_{\hat{e}}\right)=-\partial_{e}\Rev.f|_{e}\left(u\right)\\
f|_{e}\left(v\right)=A_{\hat{e}}\cos\left(l_{e}-\varphi_{\hat{e}}\right) & =A_{\hat{e}}\cos\left(\tilde{l_{e}}+\varphi_{\hat{e}}\right)=\Rev.f|_{e}\left(v\right)\\
\partial_{e}f|_{e}\left(u\right)=A_{\hat{e}}\sin\left(l_{e}-\varphi_{\hat{e}}\right) & =A_{\hat{e}}\sin\left(\tilde{l_{e}}+\varphi_{\hat{e}}\right)=\partial_{e}\Rev.f|_{e}\left(v\right)
\end{align}
We may conclude that the trace of $\Rev.f$ agree with $f$ on $\V_{1}$
and agree with $\I.f$ on $\V_{2}$, so 
\[
\Rev.f\left(u\right)=\begin{cases}
f\left(u\right) & u\in\V_{1}\\
f\left(u\right) & u\in\V_{2}
\end{cases},\,\,\,\partial_{e'}\Rev.f\left(u\right)=\begin{cases}
\partial_{e'}f\left(u\right) & u\in\V_{1}\\
-\partial_{e'}f\left(u\right) & u\in\V_{2}
\end{cases}.
\]
Therefore,$\Rev.f$ satisfies Neumann vertex conditions on all vertices
and so\[\Rev.f\in Eig\left(\Gamma_{\Rev\left(\kv\right)},1\right).\]
Clearly this is a linear map between $Eig\left(\Gamma_{\kv},1\right)$
and $Eig\left(\Gamma_{\Rev\left(\kv\right)},1\right)$, to see that
it is invertible we use Lemma \ref{lem: simple bridge inversion gi thetai}
to conclude that \[ \Rev^{2}\left(\kv\right)=\left\{ \kv_{1},\kappa_{e}+\Theta_{2}\left(\kv_{2}\right)+\Theta_{2}\left(-\kv_{2}\right),\kv_{2}\right\} =\kv,\]
so $\Rev^{2}=\mathrm{identity}$ and we can deduce that $\Rev.\left(\Rev.f\right)=f$.
Hence $Eig\left(\Gamma_{\kv},1\right)\cong Eig\left(\Gamma_{\Rev\left(\kv\right)},1\right)$
and so $\Rev$ preserve both $\Sigma$ and $\Sigma^{reg}$. It is
also clear that $\Rev.f|_{e}\equiv0\iff f|_{e}\equiv0$ so $\Rev$
preserve $Z_{g}$ according to Lemma \ref{lem: Sigma minus Zg}. Exactly
as we did for $\tau_{e}$, $\Rev.f_{\kv}=\pm f_{\Rev\left(\kv\right)}$
for any $\kv\in\Sigma^{reg}$, which proves (\ref{eq: f for R_e}),
and $\Rev$ preserve the weights $\left(m_{\kv}\right)_{e}$ and the
normal to $\Sigma^{reg}$. It is left to prove that $\Rev$ preserve
the surface element $ds$ to finish the proof. Since $\Rev\left(\kv\right)=\left\{ \kv_{1},\kappa_{e}+\Theta_{2}\left(\kv_{2}\right),-\kv_{2}\right\} $
then its derivative (the Jacobian matrix) is triangular with $\pm1$
on the diagonal, it follows that $ds$ is preserved, and therefore
$\mu_{\lv}$ is preserved for any $\lv$.
\end{proof}

\subsection{Loops and the secular manifold}

Recall that a loop is an edge connecting a vertex to itself. As discussed
in the introduction, if $\Gamma_{\lv}$ is a standard graph with a
loop $e$, then there are infinitely many eigenfunction supported
on $e$. The nodal count and Neumann count cannot be defined on such
eigenfunctions. Moreover, as we will prove in Section \ref{sec: genericity},
our definition of generic eigenfunction (Definition \ref{def:generic_eigenfunction})
is only generic among the eigenfunctions that are not supported on
loops. The purpose of this subsection is to provide the machinery
needed in order to exclude loop eigenfunctions from discussions. We
partition $\Sigma^{reg}$ into two parts: 
\begin{equation}
\Sigma_{\L}:=\set{\kv\in\Sigma^{reg}}{f_{\kv}\,\,\text{is supported on a loop}},\label{eq: definition Sigma L}
\end{equation}
and its complement 
\begin{equation}
\Sigma_{\L}^{c}:=\Sigma^{reg}\setminus\Sigma_{\L}.\label{eq: definition Sigma L comp}
\end{equation}
We will show in this subsection that the secular manifold's machinery
can work on each part separately. To do so we first construct $\Sigma_{\L}$
explicitly.
\begin{defn}
\label{def: Ze}Given a graph $\Gamma$, we denote its set of loops
by $\E_{loops}$. For every loop $e\in\E_{loops}$, we define the
sub-torus
\[
Z_{e}:=\set{\kv\in\T^{\E}}{e^{i\kappa_{e}}=1},
\]
and we define the \emph{loop factor}, $\tilde{Z}_{e}$, as its intersection
with $\Sigma^{reg}$:
\[
\tilde{Z_{e}}:=Z_{e}\cap\Sigma^{reg}.
\]
\end{defn}

\begin{lem}
\label{lem: Ze} Let $\Gamma_{\lv}$ be a standard graph and let $e$
be a loop of length $l_{e}$. Then, 
\begin{enumerate}
\item \label{enu: Ze 1}There exists an eigenfunction $f\in Eig\left(\Gamma_{\lv},k^{2}\right)$
if and only if $k\in\frac{2\pi}{l_{e}}\N$. 
\item \label{enu: Ze2}There exists an eigenfunction $f\in Eig\left(\Gamma_{\lv},k^{2}\right)$
if and only if $\left\{ k\lv\right\} \in Z_{e}$.
\item \label{enu: Ze 3 - explicit fk}If $\kv\in\tilde{Z}_{e}$, then $f_{\kv}$
is supported on $e$, it is given (up to a sign) by 
\[
f_{\kv}|_{e}\left(x_{e}\right)=\frac{1}{\sqrt{2}}\sin\left(x_{e}\right),
\]
and its amplitudes vector satisfies $a_{e}=-a_{\hat{e}}=\frac{1}{\sqrt{2}}$
and zeros on all other entries.
\end{enumerate}
\end{lem}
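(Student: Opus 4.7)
The plan is to analyze directly which functions supported on the loop $e$ can be Neumann eigenfunctions, and then to use simplicity of the eigenvalue on $\Sigma^{reg}$ to pin down $f_{\kv}$ explicitly.

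For part 1, suppose $f\in Eig(\Gamma_{\lv},k^{2})$ is supported on $e$. Our standing assumptions $E>1$ and $\deg v\neq 2$ force the existence of at least one edge $e'\in\E_{v}$ different from the two directions of the loop on which $f\equiv 0$; continuity at the unique vertex $v$ of $e$ then gives $f|_{e}(0)=f|_{e}(l_{e})=0$. Writing $f|_{e}(x_{e})=B\cos(kx_{e})+C\sin(kx_{e})$, the first boundary condition yields $B=0$, and the second gives $\sin(kl_{e})=0$, i.e.\ $kl_{e}\in\pi\Z$. The Neumann derivative condition at $v$ then reads
\[
\partial_{e}f(v)+\partial_{\hat{e}}f(v)=Ck\bigl(1-\cos(kl_{e})\bigr)=0,
\]
forcing $\cos(kl_{e})=1$ and hence $kl_{e}\in 2\pi\N$. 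Conversely, if $k=2\pi n/l_{e}$ for some $n\in\N$, then $f|_{e}(x_{e})=\sin(kx_{e})$ (and $\equiv 0$ elsewhere) satisfies every Neumann vertex condition. Part 2 is then an immediate reformulation, since $\{k\lv\}\in Z_{e}$ iff $e^{ikl_{e}}=1$ iff $kl_{e}\in 2\pi\Z$, which for $k>0$ is exactly $k\in\tfrac{2\pi}{l_{e}}\N$.

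For part 3, fix $\kv\in\tilde{Z}_{e}=Z_{e}\cap\Sigma^{reg}$ and choose $\lv\in(0,2\pi]^{\E}$ with $\{\lv\}=\kv$, so that $\Gamma_{\kv}=\Gamma_{\lv}$. Then $\kappa_{e}=0$ forces $l_{e}=2\pi$, so part 1 with $k=1$ produces an eigenfunction of $\Gamma_{\kv}$ with eigenvalue $1$ that is supported on $e$ and proportional to $\sin(x_{e})$ there. By definition of $\Sigma^{reg}$, $\dim Eig(\Gamma_{\kv},1)=1$, so this loop function is, up to a nonzero scalar, the canonical eigenfunction $f_{\kv}$. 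The explicit form and the amplitude values then follow by imposing $\norm{\boldsymbol{a}}=1$ and using the complex-amplitude formulae $a_{\hat{e}}=\tfrac{1}{2}(f(v)+i\partial_{e}f(v)/k)$ and $a_{e}=\tfrac{1}{2}e^{ikl_{e}}(f(v)-i\partial_{e}f(v)/k)$ evaluated at $v$: these give $a_{\hat{e}}=-a_{e}$ (using $e^{ikl_{e}}=1$) and zero on all other directed edges, and the magnitude is then fixed by normalization.

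The only substantive point in the whole argument is the observation in part 1 that the Neumann derivative condition at $v$ halves the naive zero-set condition $kl_{e}\in\pi\Z$ down to $kl_{e}\in 2\pi\Z$; this is what distinguishes the loop spectrum of a standard graph from the Dirichlet spectrum of an interval, and reflects the equivalence of a loop pinned to zero at its base vertex with a circle under periodic boundary conditions. Everything else is routine bookkeeping with the complex-amplitude conventions.
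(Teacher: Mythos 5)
Your proof is correct and follows the same structure as the paper: use continuity at the base vertex (which forces $f|_e(0)=f|_e(l_e)=0$, since $\deg v\ge 3$ gives an adjacent edge on which $f\equiv 0$), then use the Neumann derivative balance at $v$ to halve the resonance condition from $kl_e\in\pi\Z$ down to $kl_e\in 2\pi\Z$, and for part 3 invoke $\dim Eig(\Gamma_{\kv},1)=1$ at $\kv\in\Sigma^{reg}$ to identify the normalized loop function with $\pm f_{\kv}$. The only departure is cosmetic: you work part 1 in the $\cos/\sin$ edge basis, whereas the paper carries out the same three vertex conditions directly in the complex-amplitude variables $a_e,a_{\hat e}$; the content is identical.
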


\begin{proof}
Clearly $k\in\frac{2\pi}{l_{e}}\N$ is equivalent to $e^{ikl_{e}}=1$
and so (\ref{enu: Ze 1}) and (\ref{enu: Ze2}) are equivalent statements.
To prove (\ref{enu: Ze2}) first assume that there is a real normalized
eigenfunction $f\in Eig\left(\Gamma_{\lv},k^{2}\right)$, supported
on $e$, with amplitudes vector \textbf{$\boldsymbol{a}$}. Since
the constant eigenfunction is not supported on $e$, then $k>0$. Since
$f|_{e'}\equiv0$ for every edge $e'\ne e$, then \textbf{$\boldsymbol{a}$}
is supported on $e,\hat{e}$. Since $\boldsymbol{a}$ is normalized
then $\left|a_{e}\right|^{2}+\left|a_{\hat{e}}\right|^{2}=1$ and
by Lemma \ref{lem: |ae|=00003D|ae'|} $\left|a_{e}\right|=\left|a_{\hat{e}}\right|=\frac{1}{\sqrt{2}}$.
The Neumann condition on the vertex of the loop implies that $f|_{e}\left(0\right)=f|_{e}\left(l_{e}\right)=0$
and $f|_{e}'\left(0\right)=f|_{e}'\left(l_{e}\right)$. Using the
relation between $\boldsymbol{a}$ and the trace of $f$ as seen in
Definition \ref{def: edge restriction notations}, we get three conditions:
\begin{align*}
e^{-ikl_{e}}a_{e}+a_{\hat{e}} & =0,\\
a_{e}+e^{-ikl_{e}}a_{\hat{e}} & =0,\\
ik\left(e^{-ikl_{e}}a_{e}-a_{\hat{e}}\right) & =ik\left(a_{e}-e^{-ikl_{e}}a_{\hat{e}}\right).
\end{align*}
The first two and $\left|a_{e}\right|=\left|a_{\hat{e}}\right|=\frac{1}{\sqrt{2}}$
implies that $e^{ikl_{e}}=-\frac{a_{e}}{a_{\hat{e}}}=-\frac{a_{\hat{e}}}{a_{e}}$
and so $e^{ikl_{e}}=\pm1$. Dividing the third condition by $ika_{e}$,
as $k>0$, insure that $e^{ikl_{e}}=1$. Hence, $\left\{ k\lv\right\} \in Z_{e}$
and $a_{e}=-a_{\hat{e}}=\frac{1}{\sqrt{2}}$ which means that $f|_{e}\left(x_{e}\right)=\frac{1}{\sqrt{2}}\sin\left(kx_{e}\right)$.
On the other hand, if we assume that $k>0$ is such that $e^{ikl_{e}}=1$
then clearly the function $f$ constructed above is an eigenfunction
of $\Gamma_{\lv}$, with eigenvalue $k^{2}$, that is supported on
$e$. This proves (\ref{enu: Ze 1}) and (\ref{enu: Ze2}), and if
we consider a point $\kv\in\tilde{Z}_{e}$, namely $e^{i\kappa_{e}}=1$
and $\kv\in\Sigma^{reg}$, then the above construction provides a
real normalized eigenfunction $f\in Eig\left(\Gamma_{\kv},1\right)$
that is supported on $e$ with $a_{e}=-a_{\hat{e}}=\frac{1}{\sqrt{2}}$
and $f|_{e}\left(x_{e}\right)=\frac{1}{\sqrt{2}}\sin\left(x_{e}\right)$.
Since $\kv\in\Sigma^{reg}$, this function is (up to a sign) $f_{\kv}$. 
\end{proof}
As an immediate corollary:
\begin{cor}
\label{cor: Sigma L} If $\Gamma$ is a graph with set of loops $\E_{loops}$,
then $\Sigma_{\L}$, as defined in (\ref{eq: definition Sigma L}),
is the disjoint union of $\tilde{Z}_{e}$'s: 
\[
\Sigma_{\L}=\sqcup_{e\in\E_{loops}}\tilde{Z}_{e}.
\]
\end{cor}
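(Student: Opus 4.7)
The plan is to prove the corollary by establishing two inclusions and then verifying disjointness of the union, all as direct consequences of Lemma \ref{lem: Ze} together with the characterization of $\Sigma^{reg}$ as the set where $\dim Eig(\Gamma_{\kv},1)=1$.

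For the inclusion $\sqcup_{e\in\E_{loops}}\tilde{Z}_{e}\subseteq\Sigma_{\L}$, I will simply quote Lemma \ref{lem: Ze} part (\ref{enu: Ze 3 - explicit fk}): whenever $\kv\in\tilde{Z}_{e}$, the canonical eigenfunction $f_{\kv}$ is supported on $e$, hence $\kv\in\Sigma_{\L}$ by the definition (\ref{eq: definition Sigma L}).

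For the reverse inclusion $\Sigma_{\L}\subseteq\cup_{e\in\E_{loops}}\tilde{Z}_{e}$, I take $\kv\in\Sigma_{\L}$ and choose an edge-length representative $\lv\in\opcl{0,2\pi}^{\E}$ with $\{\lv\}=\kv$, so that $f_{\kv}\in Eig(\Gamma_{\lv},1)$ is supported on some loop $e$. Applying Lemma \ref{lem: Ze} part (\ref{enu: Ze2}) with $k=1$ gives $\kv=\{\lv\}\in Z_{e}$; combined with $\kv\in\Sigma^{reg}$ this yields $\kv\in\tilde{Z}_{e}$. (I note here that although parts (\ref{enu: Ze 1})–(\ref{enu: Ze2}) of the lemma are phrased generically, the proof really characterizes when an eigenfunction \emph{supported on the loop $e$} exists, which is exactly what I need.)

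For the disjointness, suppose toward contradiction that $\kv\in\tilde{Z}_{e}\cap\tilde{Z}_{e'}$ for distinct loops $e\ne e'$. Then Lemma \ref{lem: Ze} part (\ref{enu: Ze 3 - explicit fk}) produces two eigenfunctions in $Eig(\Gamma_{\kv},1)$, one given edgewise by $\frac{1}{\sqrt 2}\sin(x_{e})$ on $e$ and zero elsewhere, the other analogously supported on $e'$. These are linearly independent since they have disjoint edge supports, so $\dim Eig(\Gamma_{\kv},1)\ge 2$, contradicting $\kv\in\Sigma^{reg}$. This is the only step that is not purely bookkeeping, but it is immediate from the explicit form of $f_{\kv}$ in the lemma; I do not expect any genuine obstacle beyond being careful to cite the correct part of Lemma \ref{lem: Ze} in each inclusion.
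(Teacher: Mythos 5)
Your proof is correct and takes the approach the paper clearly intends: the paper offers no written argument, simply labeling the corollary ``immediate'' after Lemma~\ref{lem: Ze}, and your three steps (two inclusions plus disjointness) are exactly the natural unpacking of that lemma. The only minor remark is that the disjointness step can be done even more directly: if $\kv\in\tilde{Z}_{e}\cap\tilde{Z}_{e'}$ with $e\ne e'$, then $\kv\in\Sigma^{reg}$, so the canonical eigenfunction $f_{\kv}$ is unique up to sign, and Lemma~\ref{lem: Ze}(\ref{enu: Ze 3 - explicit fk}) applied to both $e$ and $e'$ forces $f_{\kv}$ to vanish identically outside $e$ and also outside $e'$, hence to be the zero function, a contradiction --- no appeal to multiplicity is needed. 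You are also right to flag the sloppy phrasing of parts (\ref{enu: Ze 1})--(\ref{enu: Ze2}) of Lemma~\ref{lem: Ze}, which omit the qualifier ``supported on $e$'' that its proof actually establishes and that your argument genuinely relies on.
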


See Figures \ref{fig: secman with loops 3flower}, \ref{fig: secman with loops 2-1 stower},
\ref{fig: secman with loops 12 stower} and \ref{fig: secman with loops dumbbell}
in Appendix \ref{sec: Appendix examples} for several examples of
$\Sigma$ and $\Sigma_{\L}$. In all of these figures $\Sigma_{\L}$
is colored in blue. 

Let us now construct an algebraic characterization of $\Sigma_{\L}$
and its complement $\Sigma_{\L}^{c}$. As seen in Lemma \ref{lem: Ze}
and its proof, a real normalized eigenfunction is supported on a loop
$e$ if and only if its amplitudes vector $\boldsymbol{a}\in\C^{\vec{\E}}$
is $a_{e}=-a_{\hat{e}}=\frac{1}{\sqrt{2}}$ and zero elsewhere. Let
us denote this normalized anti-symmetric vector by $\hat{e}_{-}$
and let $\hat{e}_{+}$ be the orthogonal symmetric vector. That is,
\begin{equation}
\left(\hat{e}_{\pm}\right)_{e'}=\begin{cases}
\frac{1}{\sqrt{2}} & e'=e\\
\pm\frac{1}{\sqrt{2}} & e'=\hat{e}\\
0 & else
\end{cases}.
\end{equation}
It follows from the construction of the real scattering matrix $S$
in (\ref{eq: S explicietly}) and the matrix $e^{i\hat{\kappa}}$
that they are invariant to the swapping $e\leftrightarrow\hat{e}$
if $e$ is a loop. Therefore so does $U_{\kv}=e^{i\hat{\kappa}}$.
Consider the basis of $\C^{\vec{\E}}$ in the following order: The
antisymmetric vectors $\hat{e}_{-}$ for any $e\in\E_{loops}$, then
the symmetric vectors $\hat{e}_{+}$ for any $e\in\E_{loops}$ and
then the rest of the directed edges. We denote the anti-symmetric
part of the basis by $\E_{as}$ and the rest (including the symmetric
part) by $\E_{0}$. With this order and edge grouping the bond-scattering
matrix $S$ and the unitary evolution matrix $U_{\kv}$ have the following
block structure:
\begin{align}
S & =\begin{pmatrix}\boldsymbol{1} & 0\\
0 & S_{0}
\end{pmatrix},\,\,and\\
U_{\kv} & =\begin{pmatrix}e^{i\hat{\kappa}_{loops}} & 0\\
0 & e^{i\hat{\kappa}_{0}}S_{0}
\end{pmatrix}.\label{eq: U loop decomposition}
\end{align}
Where, denoting $E_{loops}=\left|\E_{loops}\right|$ and $E_{0}=2E-E_{loops}$,
$\boldsymbol{1}$ is the identity matrix of dimension $E_{loops}$,
$e^{i\hat{\kappa}_{loops}}$ is diagonal of dimension $E_{loops}$
with entries $\left\{ e^{i\kappa_{e}}\right\} _{e\in\E_{loops}}$
(each appears once), $S_{0}$ is a real orthogonal matrix of dimension
$E_{0}$ and $e^{i\hat{\kappa}_{0}}$ is diagonal of dimension $E_{0}$.
The diagonal entries of $e^{i\hat{\kappa}_{0}}$ are $\left\{ e^{i\kappa_{e}}\right\} _{e\in\E_{loops}}$
that appear once and $\left\{ e^{i\kappa_{e}}\right\} _{e\in\E\setminus\E_{loops}}$
that appear twice. 

The purpose of this subsection, the exclusion of loops-eigenfunctions
from the discussion, is done by restricting $U_{\kv}$ to $\C^{\E_{0}}$: 
\begin{defn}
\label{def: Zo} Denote the unitary matrix $U_{0}\left(\kv\right):=e^{i\hat{\kappa}_{0}}S_{0}$.
We define the \emph{main factor of $\Sigma$ }as 

\begin{align}
Z_{0} & :=\set{\kv\in\T^{\E}}{\det\left(1-U_{0}\left(\kv\right)\right)=0},\label{eq: Z0-1}
\end{align}
and similarly to $\Sigma^{reg}$, we define the \emph{regular part}
of $Z_{0}$ by 
\begin{equation}
Z_{0}^{reg}:=\set{\kv\in\T^{\E}}{\dim\ker\left(1-U_{0}\left(\kv\right)\right)=1}.\label{eq: Z0reg by dimker}
\end{equation}
\end{defn}

\begin{rem}
\label{rem: Z0 reg as determinant} The same argument in the proof
of Lemma \ref{lem:The-secular-functions properties} will give:
\begin{equation}
Z_{0}^{reg}=\set{\kv\in Z_{0}}{\nabla\det\left(1-U_{0}\left(\kv\right)\right)\ne0}.\label{eq: Zoreg as det}
\end{equation}
\end{rem}

Examples of $Z_{0}$ and can be shown in Figures \ref{fig: secman with loops 3flower},
\ref{fig: secman with loops 2-1 stower}, \ref{fig: secman with loops 12 stower}
and \ref{fig: secman with loops dumbbell} in Appendix \ref{sec: Appendix examples}
where the right picture in each figure is $Z_{0}$. In Figures \ref{fig: secman with loops 3flower}
and \ref{fig: secman with loops dumbbell} $Z_{0}=Z_{0}^{reg}$. In
Figures \ref{fig: secman with loops 2-1 stower} and \ref{fig: secman with loops 12 stower}
one can spot singular points of $Z_{0}$ (at height $\pm\pi$) where
the layers of $Z_{0}$ meet. Therefore, in these cases, $Z_{0}\ne Z^{reg}$.
\begin{lem}
\label{lem: loop decomposition} Let $\Gamma$ be a graph with set
of loops $\E_{loops}$. Then, 
\begin{enumerate}
\item \label{enu: loops decomposition of F}The secular function, $F\left(\kv\right)=\det\left(U_{\kv}\right)^{-\frac{1}{2}}\det\left(1-U_{\kv}\right)$,
is factorized as follows: 
\begin{align}
F\left(\kv\right) & =\det\left(U_{\kv}\right)^{-\frac{1}{2}}\Pi_{e\in\E_{loops}}\left(1-e^{i\kappa_{e}}\right)\det\left(1-U_{0}\left(\kv\right)\right),\label{eq: det of lops}
\end{align}
and the secular manifold is decomposed to $\Sigma=Z_{0}\cup_{e\in\E_{loops}}Z_{e}$. 
\item \label{enu: loop secular manifold - union of factors}The complement
of $\Sigma_{\L}$ in $\Sigma^{reg}$ is 
\begin{equation}
\Sigma_{\L}^{c}=Z_{0}^{reg}\cap\Sigma^{reg}.\label{eq: Sigma L comp}
\end{equation}
Each of the $\tilde{Z}_{e}$'s and $\Sigma_{\L}^{c}$ are unions of
connected components of $\Sigma^{reg}$, and $\Sigma^{reg}$ is their
disjoint union:
\begin{equation}
\Sigma^{reg}=\Sigma_{\L}^{c}\sqcup_{e\in\E_{loops}}\tilde{Z_{e}}.\label{eq: decomp of Sigma reg}
\end{equation}
\item \label{enu: loops factor charecterization}Each loop factor $\tilde{Z_{e}}$
is characterized by
\begin{align}
\tilde{Z}_{e} & =\set{\kv\in\T^{\E}}{e^{i\kappa_{e}}=1\,\,and\,\,\frac{F\left(\kv\right)}{1-e^{i\kappa_{e}}}\ne0},\label{eq: Ze reg}
\end{align}
and $\Sigma_{\L}$ can be characterized as 
\begin{equation}
\Sigma_{\L}=\set{\kv\in\T^{\E}}{\Pi_{e\in\E_{loops}}\left(1-e^{i\kappa_{e}}\right)=0\,\,\,and\,\,\,\det\left(1-U_{0}\left(\kv\right)\right)\ne0}.\label{eq: Sigma L through F}
\end{equation}
\item \label{enu: loops factors measure} Given any choice of $\lv\in\left(\R_{+}\right)^{\E}$,
the Barra-Gaspard measure of the above sets is given by:
\begin{align*}
\forall e\in\E_{loops}\,\,\,\,\,\mu_{\lv}\left(\tilde{Z_{e}}\right) & =\frac{l_{e}}{2L},\,\,and\\
\mu_{\lv}\left(\Sigma_{\L}^{c}\right) & =1-\frac{\sum_{e\in\E_{loops}}l_{e}}{2L}\ge\frac{1}{2}.
\end{align*}
\end{enumerate}
\end{lem}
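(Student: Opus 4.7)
The entire lemma is driven by the block decomposition of $U_{\kv}$ in (\ref{eq: U loop decomposition}): in the chosen ordered basis, $U_{\kv} = e^{i\hat{\kappa}_{loops}}\oplus U_0(\kv)$, so kernels and determinants split correspondingly. This immediately yields
\[
\det(1-U_{\kv}) = \prod_{e\in\E_{loops}}(1-e^{i\kappa_e})\cdot\det(1-U_0(\kv));
\]
multiplying by $\det(U_{\kv})^{-1/2}$ gives (\ref{eq: det of lops}) and hence the zero-set decomposition $\Sigma = Z_0\cup\bigcup_{e\in\E_{loops}}Z_e$, proving item (\ref{enu: loops decomposition of F}).

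For item (\ref{enu: loop secular manifold - union of factors}), pick $\kv\in\Sigma^{reg}$, so $\dim\ker(1-U_{\kv})=1$. By the block structure, the kernel sits in exactly one of $\ker(1-e^{i\hat{\kappa}_{loops}})$ or $\ker(1-U_0(\kv))$. In the first case the kernel vector is some $\hat e_-$ and by Lemma \ref{lem: Ze}(\ref{enu: Ze 3 - explicit fk}) the canonical eigenfunction is the loop sine, so $\kv\in\tilde Z_e$; moreover $\det(1-U_0(\kv))\neq 0$ and $e^{i\kappa_{e'}}\neq 1$ for every other loop $e'$. In the second case $f_{\kv}$'s amplitudes vanish on the antisymmetric subspace, so $f_{\kv}$ is not loop-supported and $\kv\in\Sigma_{\L}^c\cap Z_0^{reg}$ by (\ref{eq: Z0reg by dimker}). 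This yields (\ref{eq: Sigma L comp}) and the set-theoretic partition (\ref{eq: decomp of Sigma reg}); disjointness of the $\tilde Z_e$'s is automatic since a normalized loop eigenfunction is supported on a single loop. To upgrade the partition to unions of connected components, I would show each $\tilde Z_e$ is closed in $\Sigma^{reg}$ (because $Z_e$ is closed in $\T^{\E}$) and open in $\Sigma^{reg}$ (because near $\kv\in\tilde Z_e$, the factorization (\ref{eq: det of lops}) combined with $\det(1-U_0(\kv))\neq 0$ and $e^{i\kappa_{e'}}\neq 1$ for $e'\neq e$ locally cuts out $\Sigma$ by the single smooth equation $e^{i\kappa_e}=1$).

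Item (\ref{enu: loops factor charecterization}) is then a direct consequence of (\ref{eq: det of lops}) and the preceding dimension count. The quotient $F(\kv)/(1-e^{i\kappa_e})$ equals $\det(U_{\kv})^{-1/2}\det(1-U_0(\kv))\prod_{e'\in\E_{loops}\setminus\{e\}}(1-e^{i\kappa_{e'}})$, and its non-vanishing at a point with $e^{i\kappa_e}=1$ is exactly the condition characterizing $\tilde Z_e$ in the previous paragraph, which proves (\ref{eq: Ze reg}). Formula (\ref{eq: Sigma L through F}) follows by taking the union over loops: a point with $\prod_{e}(1-e^{i\kappa_e})=0$ and $\det(1-U_0(\kv))\neq 0$ sits in at least one $Z_e$; if more than one loop factor vanished simultaneously the kernel dimension count would give $\kv\in\Sigma^{sing}$, so the subset of $\Sigma$ with two or more vanishing loop factors has positive codimension in $\Sigma$, and one concludes that each point of the RHS of (\ref{eq: Sigma L through F}) lying in $\Sigma^{reg}$ has exactly one vanishing loop factor and hence belongs to the corresponding $\tilde Z_e\subset\Sigma_{\L}$.

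For item (\ref{enu: loops factors measure}), Lemma \ref{lem: Ze}(\ref{enu: Ze 3 - explicit fk}) gives amplitudes $a_e=-a_{\hat e}=1/\sqrt 2$ and zero elsewhere on $\tilde Z_e$, so the weights of Definition \ref{def: mk weights} satisfy $(m_{\kv})_{e'}=\delta_{e,e'}$. By (\ref{eq: normal}) the unit normal to $\Sigma^{reg}$ along $\tilde Z_e$ is the coordinate vector $\hat\kappa_e$, giving $|\hat n\cdot\lv|=l_e$. The Euclidean surface element $ds$ on $\tilde Z_e$ coincides almost everywhere with flat Haar measure on the subtorus $Z_e\cong\T^{E-1}$ of total volume $(2\pi)^{E-1}$ (the discrepancy is the lower-dimensional set $Z_e\cap(\bigcup_{e'\neq e}Z_{e'}\cup Z_0)$). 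Substituting into (\ref{eq: BG measure}) gives
\[
\mu_{\lv}(\tilde Z_e) = \tfrac{\pi}{L}(2\pi)^{-E}\cdot l_e\cdot (2\pi)^{E-1} = \tfrac{l_e}{2L},
\]
and the identity for $\mu_{\lv}(\Sigma_{\L}^c)$ follows from $\mu_{\lv}(\Sigma^{sing})=0$ and (\ref{eq: decomp of Sigma reg}), with the bound $\geq 1/2$ coming from $\sum_{e\in\E_{loops}}l_e\leq L$. The trickiest bookkeeping is in item (\ref{enu: loops factor charecterization}), where one must rule out the strata of $\T^{\E}$ where several loop factors vanish together with $\det(1-U_0(\kv))\neq 0$; the kernel-dimension argument is what handles this.
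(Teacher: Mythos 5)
Your proposal is correct and follows essentially the same blueprint as the paper: factorize $\det(1-U_{\kv})$ via the block decomposition of $U_{\kv}$, use the additivity of kernel dimensions to separate the loop factors from $Z_0^{reg}$ and to get the disjoint-union and open/closed statements, and compute $\mu_{\lv}(\tilde Z_e)$ from the fact that the amplitudes on $\tilde Z_e$ are concentrated on $e$ so the normal is the coordinate direction $\hat\kappa_e$. The one point where you genuinely diverge is item (\ref{enu: loops factor charecterization}): the paper proves (\ref{eq: Ze reg}) by differentiating the factorization $F=(1-e^{i\kappa_e})\cdot F/(1-e^{i\kappa_e})$ and observing that on $Z_e$ the gradient reduces to $\frac{F}{1-e^{i\kappa_e}}\nabla(1-e^{i\kappa_e})$, so $\nabla F\ne 0$ iff the quotient is nonzero, while you obtain the same characterization directly from the kernel-dimension count. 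Both routes work and are of comparable length; yours has the small advantage of making the distinction between ``exactly one loop factor vanishes'' and ``at least one vanishes'' explicit, which is the delicate point in (\ref{eq: Sigma L through F}) (as stated, the right-hand side of (\ref{eq: Sigma L through F}) also contains points of $\Sigma^{sing}$ where two or more loop factors vanish simultaneously and $\det(1-U_0)\neq0$; both your proof and the paper's really only establish equality after intersecting with $\Sigma^{reg}$, a discrepancy of positive codimension that is immaterial for the measure-theoretic uses of the lemma).
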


\begin{proof}
The decomposition $\det\left(1-U_{\kv}\right)=\Pi_{e\in\E_{loops}}\left(1-e^{i\kappa_{e}}\right)\det\left(1-U_{0}\left(\kv\right)\right)$
is immediate from (\ref{eq: U loop decomposition}). The factorization
of $F$ and the decomposition of $\Sigma$ follows. This proves (\ref{enu: loops decomposition of F}).
By definition $\tilde{Z}_{e}=\set{\kv\in\Sigma^{reg}}{\left(1-e^{i\kappa_{e}}\right)=0}$
and so the $\Sigma$ decomposition induce a $\Sigma^{reg}$ decomposition:
\begin{equation}
\Sigma^{reg}=\left(Z_{0}\cap\Sigma^{reg}\right)\cup_{e\in\E_{loops}}\tilde{Z_{e}}.\label{eq: predecomp of Sigma reg}
\end{equation}
To show that this is a disjoint union, we use the decomposition of
$U_{\kv}$ in (\ref{eq: U loop decomposition}) to get 
\begin{align}
\dim\ker\left(1-U_{\kv}\right) & =\dim\ker\left(1-e^{i\hat{\kappa}_{loops}}\right)+\dim\ker\left(1-U_{0}\left(\kv\right)\right)\label{eq: loops kernel decomposition}\\
= & \left|\set{e\in\E_{loops}}{\kv\in Z_{e}}\right|+\dim\ker\left(1-U_{0}\left(\kv\right)\right).\nonumber 
\end{align}
It follows that if $\kv\in\Sigma^{reg}$, namely $\dim\ker\left(1-U_{\kv}\right)=1$,
then either $\kv\notin Z_{0}$ and $\left|\set{e\in\E_{loops}}{\kv\in Z_{e}}\right|=1$
or $\kv\in Z_{0}^{reg}$ and $\left|\set{e\in\E_{loops}}{\kv\in Z_{e}}\right|=0$.
Therefore, $\left(Z_{0}\cap\Sigma^{reg}\right)=Z_{0}^{reg}\cap\Sigma^{reg}$
and (\ref{eq: predecomp of Sigma reg}) can be upgraded to a disjoint
union. As Corollary \ref{cor: Sigma L} states that $\Sigma_{\L}=\sqcup_{e\in\E_{loops}}\tilde{Z}_{e}$
then $\Sigma_{\L}^{c}=Z_{0}^{reg}\cap\Sigma^{reg}$. It is left to
show that each $\tilde{Z}_{e}$ is a union of connected components
of $\Sigma^{reg}$ in order to conclude that so does $\Sigma_{\L}^{c}$
and thus prove (\ref{eq: decomp of Sigma reg}). To do so we need
to prove that $\tilde{Z}_{e}$ is both open and closed in $\Sigma^{reg}$.
It is closed as it is the zero set of $1-e^{i\kappa_{e}}$ on $\Sigma^{reg}$.
To shoe that it is open, consider the sub-torus $Z_{e}$ which is
closed in $\T^{\E}$ and has $\dim\left(Z_{e}\right)=E-1$. Its intersection
with the closed variety $\Sigma^{sing}$ is closed in $Z_{e}$, and
so $\tilde{Z_{e}}=Z_{e}\setminus\Sigma^{sing}$ is open in $Z_{e}$.
So each point in $\tilde{Z_{e}}$ has a small enough $E-1$ dimensional
neighborhood in $Z_{e}$ (and hence in $\Sigma$) which does not intersect
$\Sigma^{sing}$. Therefore, $\tilde{Z_{e}}=Z_{e}\cap\Sigma^{reg}$
is open in $\Sigma^{reg}$. So $\tilde{Z}_{e}$ is both closed and
open in $\Sigma^{reg}$, which concludes the proof of (\ref{enu: loop secular manifold - union of factors}).
Now consider $\tilde{Z_{e}}$ and a factorization $F\left(\kv\right)=\left(1-e^{i\kappa_{e}}\right)\frac{F}{\left(1-e^{i\kappa_{e}}\right)}$
where $\frac{F}{\left(1-e^{i\kappa_{e}}\right)}$ is a trigonometric
polynomial according to (\ref{eq: det of lops}). The gradient of
$F$ is given by: 
\[
\nabla F=\left(1-e^{i\kappa_{e}}\right)\nabla\frac{F}{\left(1-e^{i\kappa_{e}}\right)}+\frac{F}{\left(1-e^{i\kappa_{e}}\right)}\nabla\left(1-e^{i\kappa_{e}}\right).
\]
If $\kv\in Z_{e}$, then $\nabla F\left(\kv\right)=\frac{F\left(\kv\right)}{\left(1-e^{i\kappa_{e}}\right)}\nabla\left(1-e^{i\kappa_{e}}\right)$
and so for $\kv\in Z_{e}$, $\nabla F\left(\kv\right)=0$ if only
if $\frac{F\left(\kv\right)}{\left(1-e^{i\kappa_{e}}\right)}=0$.
This proves (\ref{eq: Ze reg}). It now follows that $\kv\in\Sigma_{\L}=\sqcup_{e\in\E_{loops}}$
if and only if $\Pi_{e\in\E_{loops}}\left(1-e^{i\kappa_{e}}\right)=0$
and $\frac{F\left(\kv\right)}{\Pi_{e\in\E_{loops}}\left(1-e^{i\kappa_{e}}\right)}\ne0$.
To prove (\ref{eq: Sigma L through F}) we simply recall that 
\[
\frac{F\left(\kv\right)}{\Pi_{e\in\E_{loops}}\left(1-e^{i\kappa_{e}}\right)}=\det\left(U_{\kv}\right)^{-\frac{1}{2}}\det\left(1-U_{0}\left(\kv\right)\right),
\]
 by (\ref{eq: det of lops}), and that $\det\left(U_{\kv}\right)^{-\frac{1}{2}}\ne0$.
We are left with proving (\ref{enu: loops factors measure}). Since
$Z_{e}\setminus\tilde{Z}_{e}\subset\Sigma^{sing}$, then it is of positive
co-dimension and therefore $\int_{\tilde{Z}_{e}}ds=\int_{Z_{e}}ds=\left(2\pi\right)^{E-1}$.
As the normal at a point in $\tilde{Z_{e}}$ is in the direction of
$e$, then 
\[
\mu_{\lv}\left(\tilde{Z_{e}}\right)=\frac{\pi}{L}\frac{1}{\left(2\pi\right)^{E}}\int_{\tilde{Z_{e}}}\left|\hat{n}\cdot\lv\right|ds=\frac{\pi}{L}\frac{l_{e}}{\left(2\pi\right)^{E}}\int_{\tilde{Z_{e}}}ds=\frac{l_{e}}{2L}.
\]
We may deduce from the disjoint decomposition of $\Sigma^{reg}$ that
\[
\mu_{\lv}\left(Z_{0}^{reg}\cap\Sigma^{reg}\right)=\mu_{\lv}\left(\Sigma^{reg}\right)-\sum_{e\in\E_{loops}}\mu_{\lv}\left(\tilde{Z}_{e}\right)=1-\frac{\sum_{e\in\E_{loops}}l_{e}}{2L}\ge\frac{1}{2},
\]
thus proving \ref{enu: loops factors measure}. 
\end{proof}
We may now deduce that $Z_{0}$ and $Z_{0}^{reg}$ share the same
real analytic structure as $\Sigma$ and $\Sigma^{reg}$.
\begin{cor}
Like the secular manifold, the main factor $Z_{0}$ is a real analytic
variety of dimension $E-1$, its singular part is of positive co-dimension,
and its regular part, $Z_{0}^{reg}$, is a real analytic manifold of
the same dimension. 
\end{cor}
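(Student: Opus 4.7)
The plan is to apply to the unitary family $U_{0}(\kv) = e^{i\hat{\kappa}_{0}} S_{0}$ the exact same machinery developed for $U_{\kv}$ in Lemma \ref{lem:The-secular-functions properties}, Lemma \ref{lem: F p and mk}, and Proposition \ref{prop: Secular manifold structure}. The key structural observation is that $U_{0}(\kv)$, like $U_{\kv}$, is the product of a $\kv$-dependent diagonal unitary and a fixed real orthogonal matrix, with entries that are trigonometric polynomials in $\kv$. I would first define
\[
F_{0}(\kv) := \det(U_{0}(\kv))^{-1/2}\, \det(1 - U_{0}(\kv)),
\]
with the square-root branch chosen exactly as in Definition \ref{def: secular function}. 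As in the proof of Lemma \ref{lem:The-secular-functions properties}, $F_{0}$ then equals, up to a fixed sign, a constant multiple of $\prod_{j}\sin(\theta_{j}/2)$, where $\{e^{i\theta_{j}}\}$ are the eigenvalues of $U_{0}(\kv)$. Hence $F_{0}$ is a real trigonometric polynomial whose zero set is precisely $Z_{0}$, giving $Z_{0}$ the structure of a real analytic variety.

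Next, I would replay the adjugate calculation of Lemma \ref{lem: F p and mk} with $U_{0}$ in place of $U_{\kv}$: Lemma \ref{lem: adjugate} together with Jacobi's identity shows, for every $\kv \in Z_{0}$, that $\nabla F_{0}(\kv)$ is proportional to $\mathrm{trace}(\mathrm{adj}(1 - U_{0}(\kv)))$ times a non-negative weight vector which is non-zero whenever $\ker(1 - U_{0}(\kv)) \neq \{0\}$. By Lemma \ref{lem: adjugate} the trace factor vanishes precisely when $\dim\ker(1-U_{0}(\kv)) \geq 2$, so $Z_{0}^{reg} = \{\kv \in Z_{0} : \nabla F_{0}(\kv) \neq 0\}$, an open subset of $Z_{0}$. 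The implicit function theorem then exhibits $Z_{0}$ locally as an $(E-1)$-dimensional real analytic manifold at every point of $Z_{0}^{reg}$, while $Z_{0}^{sing} := Z_{0}\setminus Z_{0}^{reg}$, being the zero set of $\|\nabla F_{0}\|^{2} + F_{0}^{2}$, is itself a real analytic variety.

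It remains to verify non-emptiness of $Z_{0}^{reg}$ and positive co-dimension of $Z_{0}^{sing}$, both of which come essentially for free. Non-emptiness follows from the measure computation already carried out: Lemma \ref{lem: loop decomposition}(\ref{enu: loops factors measure}) gives $\mu_{\lv}(\Sigma_{\L}^{c}) \geq \tfrac{1}{2} > 0$, and the identification $\Sigma_{\L}^{c} = Z_{0}^{reg} \cap \Sigma^{reg}$ from (\ref{eq: Sigma L comp}) forces $Z_{0}^{reg} \neq \emptyset$, so $\dim Z_{0} = \dim Z_{0}^{reg} = E-1$. For the singular part, the block structure (\ref{eq: U loop decomposition}) yields the kernel identity (\ref{eq: loops kernel decomposition}), which implies $\dim\ker(1 - U_{\kv}) \geq \dim\ker(1 - U_{0}(\kv))$ pointwise; hence $Z_{0}^{sing} \subset \Sigma^{sing}$. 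Since Proposition \ref{prop: Secular manifold structure} already gives $\dim \Sigma^{sing} < E-1$, the same strict inequality transfers to $Z_{0}^{sing}$. The proof is thus a mechanical transposition of the secular-manifold arguments; the only genuine bookkeeping concerns the square-root branch of $\det(U_{0})^{-1/2}$, handled exactly as for $F$.
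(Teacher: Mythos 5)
Your plan is sound and mirrors the paper's proof, with one genuinely nice shortcut and one genuine flaw in the opening step.

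The shortcut: you obtain the positive co-dimension of $Z_{0}^{sing}$ from the inclusion $Z_{0}^{sing}\subset\Sigma^{sing}$ via the kernel identity (\ref{eq: loops kernel decomposition}) and then cite Proposition \ref{prop: Secular manifold structure}. This is cleaner than the paper, which simply says to re-run the Friedlander-based contradiction argument for $Z_{0}$ verbatim; your inclusion avoids that entirely.

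The flaw: the proposed $F_{0}(\kv)=\det(U_{0}(\kv))^{-1/2}\det(1-U_{0}(\kv))$ is not well-defined, and the issue is precisely not ``handled exactly as for $F$''. In the block decomposition (\ref{eq: U loop decomposition}), each loop $e$ contributes exactly \emph{one} entry $e^{i\kappa_{e}}$ to $e^{i\hat{\kappa}_{0}}$, so
\[
\det\left(U_{0}\left(\kv\right)\right)=\det\left(S_{0}\right)\,\Pi_{e\in\E_{loops}}e^{i\kappa_{e}}\,\Pi_{e\notin\E_{loops}}e^{2i\kappa_{e}},
\]
which has odd degree one in each loop variable. Halving the exponent yields a term $\frac{1}{2}\sum_{e\in\E_{loops}}\kappa_{e}$ that is not $2\pi$-periodic, so there is no single-valued branch of $\det(U_{0}(\kv))^{-1/2}$ on $\T^{\E}$. (If instead you re-use the expression $i^{\beta-1}e^{-i\sum_{e}\kappa_{e}}$ from Definition \ref{def: secular function} literally, you get $F_{0}=F/\Pi_{e\in\E_{loops}}(1-e^{i\kappa_{e}})$, which is well-defined on $\Sigma^{reg}$ minus the loop factors but not real.) The fix is what the paper implicitly does via Remark \ref{rem: Z0 reg as determinant}: work with the complex trigonometric polynomial $\det(1-U_{0}(\kv))$ directly, presenting $Z_{0}$ as the common zero set of its real and imaginary parts (equivalently of $\left|\det(1-U_{0}(\kv))\right|^{2}$), and $Z_{0}^{sing}$ as the zero set of $\norm{\nabla\det(1-U_{0})}^{2}+\left|\det(1-U_{0})\right|^{2}$. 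With that replacement, the remainder of your argument — the adjugate computation giving $Z_{0}^{reg}=\set{\kv\in Z_{0}}{\nabla\det(1-U_{0}(\kv))\ne0}$, non-emptiness from $\mu_{\lv}(\Sigma_{\L}^{c})\ge\frac{1}{2}$ and the identification $\Sigma_{\L}^{c}=Z_{0}^{reg}\cap\Sigma^{reg}$, and the $Z_{0}^{sing}\subset\Sigma^{sing}$ inclusion — goes through as you describe.
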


\begin{proof}
If we can show that $Z_{0}^{reg}\ne\emptyset$, then the proof is exactly
as the proof of Proposition \ref{prop: Secular manifold structure},
using Definition \ref{def: Zo} and Remark \ref{rem: Z0 reg as determinant}.
To show that $Z_{0}^{reg}\ne\emptyset$ we simply notice that $Z_{0}^{reg}\cap\Sigma^{reg}$
has a positive measure according to (\ref{enu: loops factors measure})
of the lemma above.
\end{proof}
It was shown in Lemma \ref{lem: vertex values of canonical function}
that quadratic combinations of the trace of canonical eigenfunctions
are real analytic on $\Sigma^{reg}$. This is a useful property of
$\Sigma^{reg}$ and one should expect $Z_{0}^{reg}$ to inherit this
property by continuation from $\Sigma_{\L}^{c}=Z_{0}^{reg}\cap\Sigma^{reg}$
to $Z_{0}^{reg}$. 
\begin{lem}
\label{lem: vertex values for loops} Let $\Gamma$ be a graph with
loops. Then for any $v,u\in\V$ and $e\in\E_{v},\,e'\in\E_{u}$, all
$f_{\kv}\left(v\right)f_{\kv}\left(u\right)$, $\partial_{e}f_{\kv}\left(v\right)f_{\kv}\left(u\right)$
and $\partial_{e}f_{\kv}\left(v\right)\partial_{e'}f_{\kv}\left(u\right)$
can be extended from $\Sigma_{\L}^{c}$ to real analytic functions
on $Z_{0}^{reg}$.
\end{lem}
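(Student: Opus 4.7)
The plan is to use the block structure of $U_{\kv}$ from equation (\ref{eq: U loop decomposition}), which splits $\C^{\vec{\E}} = \C^{\E_{as}} \oplus \C^{\E_0}$ and diagonalizes the loop-antisymmetric part as $e^{i\hat{\kappa}_{loops}}$. First I would show that on $\Sigma_\L^c$, the amplitudes vector of $f_{\kv}$ lives entirely in the $\E_0$ block. Indeed, by Lemma \ref{lem: loop decomposition}(\ref{enu: loop secular manifold - union of factors}), $\Sigma_\L^c \subset Z_0^{reg}$ and no $\kv \in \Sigma_\L^c$ lies in any $Z_e$, so $1-e^{i\kappa_e}\ne 0$ for every loop $e$. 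Writing $\boldsymbol{a}=(\boldsymbol{a}_{as},\boldsymbol{a}_0)$ in the block decomposition, the kernel equation $(1-U_{\kv})\boldsymbol{a}=0$ forces $\boldsymbol{a}_{as}=0$, so $\boldsymbol{a}_0 \in \ker(1-U_0(\kv))$ is real and normalized.

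Next I would build a global real analytic object on $Z_0^{reg}$ that agrees with $\boldsymbol{a}_0\boldsymbol{a}_0^*$ on $\Sigma_\L^c$. Since $\dim \ker(1-U_0(\kv))=1$ on $Z_0^{reg}$, Lemma \ref{lem: adjugate} applied to the unitary $U_0(\kv)$ gives
\begin{equation}
\boldsymbol{a}_0(\kv)\boldsymbol{a}_0(\kv)^{*} \;=\; \frac{\mathrm{adj}(1-U_0(\kv))}{\mathrm{trace}\left(\mathrm{adj}(1-U_0(\kv))\right)},
\end{equation}
where the right-hand side is manifestly independent of the sign ambiguity in $\boldsymbol{a}_0$. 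The numerator is a trigonometric polynomial in $\kv$, and the denominator equals $\Pi_{j=2}^{E_0}(1-e^{i\theta_j})$ which is nonzero everywhere on $Z_0^{reg}$ (this is the content of (\ref{eq: trace of adjugate})). Hence $\boldsymbol{a}_0\boldsymbol{a}_0^{*}$ extends as a real analytic matrix-valued function on all of $Z_0^{reg}$.

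Finally I would plug this into the formulas of Lemma \ref{lem: vertex values of canonical function}. Let $P_0:\C^{\vec{\E}}\to\C^{\E_0}$ be the coordinate projection onto the $\E_0$-block, so $P_0^{*}$ is the inclusion. On $\Sigma_\L^c$, $\boldsymbol{a}=P_0^{*}\boldsymbol{a}_0$ and therefore $\boldsymbol{a}\boldsymbol{a}^{*}=P_0^{*}\boldsymbol{a}_0\boldsymbol{a}_0^{*}P_0$. Substituting into (\ref{eq: vertex values matrix}), (\ref{eq: vertex and derivatives matrix}) and (\ref{eq: derivatives of f by matrix}) (which, read carefully, write the left-hand sides as matrix elements of $(S\pm J)\boldsymbol{a}\boldsymbol{a}^{*}(S\pm J)^T$), we obtain
\begin{align}
f_{\kv}(v)f_{\kv}(u) &= \bigl((S+J)P_0^{*}\,\boldsymbol{a}_0\boldsymbol{a}_0^{*}\,P_0(S+J)^T\bigr)_{e,e'}, \\
\partial_e f_{\kv}(v)f_{\kv}(u) &= -i\bigl((S-J)P_0^{*}\,\boldsymbol{a}_0\boldsymbol{a}_0^{*}\,P_0(S+J)^T\bigr)_{e,e'}, \\
\partial_e f_{\kv}(v)\partial_{e'} f_{\kv}(u) &= -\bigl((S-J)P_0^{*}\,\boldsymbol{a}_0\boldsymbol{a}_0^{*}\,P_0(S-J)^T\bigr)_{e,e'},
\end{align}
and each right-hand side is a fixed linear combination (with constant real matrices $S,J,P_0$) of entries of the real analytic map $\kv \mapsto \boldsymbol{a}_0(\kv)\boldsymbol{a}_0(\kv)^{*}$ on $Z_0^{reg}$. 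This is the desired extension; the only delicate point is that the trace in the denominator is never zero on $Z_0^{reg}$, which is precisely the characterization of $Z_0^{reg}$ used to define it.
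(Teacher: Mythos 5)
Your proof follows essentially the same route as the paper's: identify that on $\Sigma_{\L}^{c}$ the amplitudes vector of $f_{\kv}$ is supported on the $\E_{0}$ block (using the kernel decomposition (\ref{eq: loops kernel decomposition})), express $\boldsymbol{a}_{0}\boldsymbol{a}_{0}^{*}$ via the adjugate formula of Lemma \ref{lem: adjugate} applied to $U_{0}(\kv)$, observe that this rational expression has no poles on $Z_{0}^{reg}$, and substitute into the matrix formulas of Lemma \ref{lem: vertex values of canonical function}. The paper phrases the block decomposition with a real orthogonal change-of-basis matrix $O$ so that $O\boldsymbol{a}=\bigl(\begin{smallmatrix}0\\\boldsymbol{a}_{0}\end{smallmatrix}\bigr)$, whereas you use a partial isometry $P_{0}$; these are the same thing, though ``coordinate projection'' is slightly misleading since the $\E_{0}$ block includes the symmetric combinations $\hat{e}_{+}$, which are not standard basis vectors.

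The one genuine gap is at the very end: having shown the three quantities extend to analytic functions of $\kv$ on $Z_{0}^{reg}$ (rational in $\{e^{i\kappa_{e}}\}_{e}$ with no poles), you still need to show they are \emph{real}-valued there. The expressions you display carry complex coefficients ($-i$, $-1$) and the matrix $\mathrm{adj}(1-U_{0}(\kv))/\mathrm{trace}(\mathrm{adj}(1-U_{0}(\kv)))$ is Hermitian but not real, so real-valuedness is not manifest. The paper closes this by noting that these functions are real on $\Sigma_{\L}^{c}=Z_{0}^{reg}\cap\Sigma^{reg}$, which is open and dense in $Z_{0}^{reg}$, and then invoking continuity to conclude they are real on all of $Z_{0}^{reg}$ (and hence real analytic). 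You should append this continuity argument.
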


\begin{proof}
Let $\kv\in Z_{0}^{reg}$ and let $\boldsymbol{a}_{0}\in\ker\left(1-U_{0}\left(\kv\right)\right)$
be a normalized vector. Using Lemma \ref{lem: adjugate} and $\dim\ker\left(1-U_{\kv}\right)=1$
we get that $\mathrm{trace}\left(\mathrm{adj}\left(1-U_{0}\left(\kv\right)\right)\right)\ne0$
and that,

\[
\boldsymbol{a}_{0}\boldsymbol{a}_{0}^{*}=\frac{\mathrm{adj}\left(1-U_{0}\left(\kv\right)\right)}{\mathrm{trace}\left(\mathrm{adj}\left(1-U_{0}\left(\kv\right)\right)\right)}.
\]
Let $O$ be the real orthogonal matrix transforming the standard basis
of $\C^{\vec{\E}}$ into the basis of $\C^{\E_{as}}\times\C^{\E_{0}}$
on which $U_{\kv}$ has the block structure $e^{i\hat{\kappa}_{loops}}\oplus U_{0}\left(\kv\right)$
(see (\ref{eq: U loop decomposition})). Let $\kv\in\Sigma^{reg}\cap Z_{0}^{reg}$
with $\boldsymbol{a}$ being the amplitudes vector of $f_{\kv}$ in
the standard basis and $O\boldsymbol{a}=\begin{pmatrix}\boldsymbol{a}_{loops}\\
\boldsymbol{a}_{0}
\end{pmatrix}$ in the basis of $\C^{\E_{as}}\times\C^{\E_{0}}$. Since $\kv\in\Sigma^{reg}\cap Z_{0}^{reg}$
then 
\[
\dim\ker\left(1-U_{\kv}\right)=\dim\ker\left(1-U_{0}\left(\kv\right)\right)=1,
\]
which means that $\dim\ker\left(1-e^{i\hat{\kappa}_{loops}}\right)=0$
according to (\ref{eq: loops kernel decomposition}). Therefore, $O\boldsymbol{a}=\begin{pmatrix}0\\
\boldsymbol{a}_{0}
\end{pmatrix}$ with $\boldsymbol{a}_{0}\in\ker\left(1-U_{0}\left(\kv\right)\right)$.
As $\boldsymbol{a}$ is normalized and $O$ is orthogonal, then $\boldsymbol{a}_{0}$
is normalized and therefore, as shown in Lemma \ref{lem: vertex values of canonical function},
\begin{align}
f_{\kv}\left(v\right)\cdot f_{\kv}\left(u\right)= & \left(\left(S+J\right)\boldsymbol{a}\boldsymbol{a}^{*}\left(S+J\right)\right)_{e,e'}\\
= & \left(\left(S+J\right)O^{T}\begin{pmatrix}0 & 0\\
0 & \boldsymbol{a}_{0}\boldsymbol{a}_{0}^{*}
\end{pmatrix}O\left(S+J\right)\right)_{e,e'}\\
= & \left(\left(S+J\right)O^{T}\begin{pmatrix}0 & 0\\
0 & \frac{\mathrm{adj}\left(1-U_{0}\left(\kv\right)\right)}{\mathrm{trace}\left(\mathrm{adj}\left(1-U_{0}\left(\kv\right)\right)\right)}
\end{pmatrix}O\left(S+J\right)\right)_{e,e'}.\label{eq: matrix element on Z0reg}
\end{align}
Where $e$ and $e'$ are directed edges going out of $v$ and $u$
correspondingly. Exactly as in the proof of Lemma \ref{lem: vertex values of canonical function},
the matrix $\frac{\mathrm{adj}\left(1-U_{0}\left(\kv\right)\right)}{\mathrm{trace}\left(\mathrm{adj}\left(1-U_{0}\left(\kv\right)\right)\right)}$
is a rational function in $\left\{ e^{i\kappa_{e}}\right\} _{e\in\E}$
with no poles on $Z_{0}^{reg}$ so the matrix element in (\ref{eq: matrix element on Z0reg})
can be extended to a rational function in $\left\{ e^{i\kappa_{e}}\right\} _{e\in\E}$
on $Z_{0}^{reg}$ (with no poles). Since $f_{\kv}\left(v\right)\cdot f_{\kv}\left(u\right)$
is defined and real on $Z_{0}^{reg}\cap\Sigma^{reg}$ which is open
and dense in $Z_{0}^{reg}$, then by continuity the matrix element
in (\ref{eq: matrix element on Z0reg}) is real and is therefore a
real analytic function on $Z_{0}^{reg}$.
\end{proof}

\subsection{Connectednes of the secular manifold }

It was conjectured by Colin de Verdière in \cite{CdV_ahp15} that
the regular part $\Sigma^{reg}$ is reducible\footnote{By reducible, we mean that the multivariate polynomial $p:\C^{\E}\rightarrow\C$
such that $\det\left(1-U_{\kv}\right)=p\left(e^{i\kappa_{1}},e^{i\kappa_{2}}...\right)$,
is completely reducible. } if and only if there is an isometry of $\Gamma_{\lv}$ that is $\lv$
independent. It is not hard to show that this happens only if a graph
has loops (in such case the symmetry is a reflection of the loop)
or if the graph is \emph{mandarin} (also known as \emph{pumpkin}),
a graph with two vertices such that every edge connects the two (in
such case the symmetry is a reflection of all edges). A proof, yet
to be published, for Colin de Verdière's conjecture was given by Kurasov
and Sarnak in \cite{Sarnakurasov}, where they also prove that if
the graph has loops, then $\Sigma$ is reducible but $Z_{0}$ is irreducible.
A related question, following this conjecture, was asked by Berkolaiko
and Liu in \cite{BerLiu_jmaa17} regarding the number of connected
components of $\Sigma^{reg}$. They prove the following theorem:
\begin{thm}
\cite{BerLiu_jmaa17} If $\Gamma$ is a graph with a tail (namely
$\left|\partial\Gamma\right|>0$) and no loops, then $\Sigma^{reg}$
has two connected components. 
\end{thm}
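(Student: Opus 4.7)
The plan is to exploit the bridge decomposition of the secular function at the tail (Proposition \ref{prop: simple bridge det decomposition}) together with the characterization of $\Sigma^{reg}$ via the auxiliary function $p$. Let $e$ be a tail with boundary vertex $v$. Applying the bridge decomposition with $\Gamma_{1}=\{v\}$, so that $\E_{1}=\emptyset$, $\Theta_{1}\equiv 0$ and $g_{1}$ reduces to a nonzero constant $C$, we get
\[
F(\kv) \;=\; C\,g_{2}(\kv_{2})\,e^{-i\kappa_{e}}\bigl(1-e^{i(2\kappa_{e}+\Theta_{2}(\kv_{2}))}\bigr).
\]

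First I would show that $\Sigma^{reg}$ has at least two components. Using \eqref{eq: simple bridge dfdke}, on the open dense set $\Sigma\setminus Z_{g}$ (contained in $\Sigma^{reg}$ by Lemma \ref{lem: Sigma minus Zg}) one has $e^{i(2\kappa_{e}+\Theta_{2})}=1$, and hence $\partial F/\partial\kappa_{e}=-2iCg_{2}(\kv_{2})\,e^{-i\kappa_{e}}$. By Lemma \ref{lem: F p and mk} this equals $p(\kv)(m_{\kv})_{e}$ where $(m_{\kv})_{e}>0$ on $\Sigma\setminus Z_{g}$, so $\sgn p(\kv)=\sgn\bigl(-2iCg_{2}(\kv_{2})e^{-i\kappa_{e}}\bigr)$. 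The two sheets solving $2\kappa_{e}+\Theta_{2}\equiv 0\pmod{2\pi}$ correspond to $\kappa_{e}=-\Theta_{2}/2$ and $\kappa_{e}=-\Theta_{2}/2+\pi$, which differ by $\pi$, flipping the sign of $e^{-i\kappa_{e}}$. Hence $p$ takes opposite signs on the two sheets. Since $p$ is continuous and nonvanishing on $\Sigma^{reg}$, the partition $\Sigma^{reg}=\Sigma^{reg}_{+}\sqcup\Sigma^{reg}_{-}$ by $\sgn p$ is a partition into two nonempty disjoint open sets, each a union of connected components.

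For ``at most two components'' I would prove that each $\Sigma^{reg}_{\pm}$ is path-connected. The bridge extension $\tau_{e}$ from Definition \ref{def: Tau_=00007Be=00007D} is a homeomorphism of $\Sigma^{reg}$ that shifts $\kappa_{e}$ by $\pi$ and therefore flips the sign of $p$ by the computation above, so it suffices to handle $\Sigma^{reg}_{+}$. Over the open set $\{\kv_{2}\in\T^{\E_{2}}:g_{2}(\kv_{2})\ne 0\}$, the projection $\kv\mapsto\kv_{2}$ is a real-analytic diffeomorphism from $\Sigma^{reg}_{+}\setminus Z_{g}$ onto its image, given by the single-sheet branch $\kappa_{e}=-\Theta_{2}(\kv_{2})/2$. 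The ``no loops'' hypothesis enters via Lemma \ref{lem: loop decomposition}: it rules out additional loop factors $\tilde Z_{e}$, so $\Sigma^{reg}$ is a single real-analytic manifold rather than a disjoint union with extra ``loop'' pieces on which the argument would break down.

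The main obstacle is the final topological step. The open set $\{g_{2}\ne 0\}\subset\T^{\E_{2}}$ need not be connected (it is the complement of a real-analytic hypersurface in a torus), nor is there an obvious obstruction preventing the two local branches $\kappa_e = -\Theta_2/2$ and $\kappa_e = -\Theta_2/2 + \pi$ from being exchanged by monodromy around loops in $\T^{\E_{2}}$. The resolution must come from the codimension-one locus $\Sigma^{reg}\cap Z_{g}$, where $f_{\kv}|_{e}\equiv 0$ and the canonical eigenfunction restricts to $\Gamma_{2}$ as an eigenfunction satisfying both Dirichlet and Neumann conditions at the interior endpoint of the tail. I expect these ``extra'' points in $\Sigma^{reg}_{+}$, combined with the global sign information carried by $p$, to provide exactly the gluing needed to connect any disconnected pieces of $\{g_{2}\ne 0\}$ and to pin down the branch of $\Theta_{2}/2$ consistently within each sign region. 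Establishing this gluing is the crux of the proof and is where the simultaneous hypotheses ``has a tail'' and ``has no loops'' are both essential.
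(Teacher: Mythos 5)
Your ``at least two components'' argument is essentially the paper's (it uses the same bridge decomposition, the same sign-of-$p$ dichotomy, and the same $\kappa_e\mapsto\kappa_e+\pi$ flip), and it is correct. The gap is exactly where you flag it: the ``at most two components'' step. You speculate that the gluing should come from points of $\Sigma^{reg}\cap Z_g$ or from some compatibility of the branch of $\Theta_2/2$, but this is not the right mechanism and you do not carry it out.

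The paper resolves the step quite differently, with two ideas you are missing. First, it shows that $Z_g\cap\Sigma^{reg}$ has positive co-dimension in $\Sigma^{reg}$, using the ``no loops'' hypothesis through the Berkolaiko--Liu genericity theorem (Theorem \ref{thm: Genericity of Grisha}): for a rationally independent $\lv$ in the genericity-residual set, the sequence $\{k_n\lv\}$ is dense in $\Sigma$ (Theorem \ref{thm: BG equidistribution}); if $Z_g\cap\Sigma^{reg}$ contained a whole connected component $M$ of $\Sigma^{reg}$, some $\{k_n\lv\}$ would lie in $M$, forcing $f_{\{k_n\lv\}}|_e\equiv0$ (by Lemma \ref{lem: Sigma minus Zg}) and hence $f_n$ to vanish on vertices, contradicting genericity. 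Since $\Sigma^{reg}$ is a real-analytic manifold and $Z_g\cap\Sigma^{reg}$ is a closed analytic subset, Lemma \ref{lem: real analytic lemma} then gives positive co-dimension. Second, from $\dim(Z_g)\le E-2$ it follows that $X^c:=\{(\kv_1,\kv_2):g_1(\kv_1)g_2(\kv_2)=0\}$ has $\dim(X^c)\le E-3$, i.e.\ co-dimension $\ge 2$ in $\T^{\E_1}\times\T^{\E_2}$, hence $X$ is connected. Once $X$ is connected, the set $\Sigma\setminus Z_g = \{\kv : e^{i2\kappa_e}=e^{-i(\Theta_1+\Theta_2)}\}$ is a two-to-one cover of the graph of a continuous function on $X$, so it has at most two connected components; since $Z_g\cap\Sigma^{reg}$ is closed of positive co-dimension, $\Sigma\setminus Z_g$ is dense in $\Sigma^{reg}$, and $\Sigma^{reg}$ inherits the bound of at most two components. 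Your worry about monodromy exchanging the two branches is not settled by any local analysis; the paper sidesteps it entirely: the two-to-one structure over connected $X$ already bounds the number of components by two, and the sign-of-$p$ argument (which you did supply) independently shows there are at least two, so the cover is trivial a posteriori. In short, you need the co-dimension-two connectedness trick and the positive-co-dimension statement about $Z_g$ (which is where genericity and ``no loops'' are used) -- without these, the crucial step remains unproved.
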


Berkoliako and Liu also suggested that the same proof should hold
for a graph with a bridge and no loops. In the following section we
will prove this result for bridges (including tails) and no loops,
providing also an isometry between the two components of $\Sigma^{reg}$. 

Similarly to the irreducibility result of \cite{Sarnakurasov} which
shows that for a graph with loops $Z_{0}$ is irreducible and not
$\Sigma$, we will show that in the case where the graph has loops,
$Z_{0}^{reg}$ is connected while $\Sigma^{reg}$ has at least $\left|\E_{loops}\right|+1$
connected components. As an example of $Z_{0}^{reg}$ see Figure \ref{fig: Zo reg}.
More examples are found in Appendix \ref{sec: Appendix examples}.

\begin{figure}
\includegraphics[width=0.35\paperwidth]{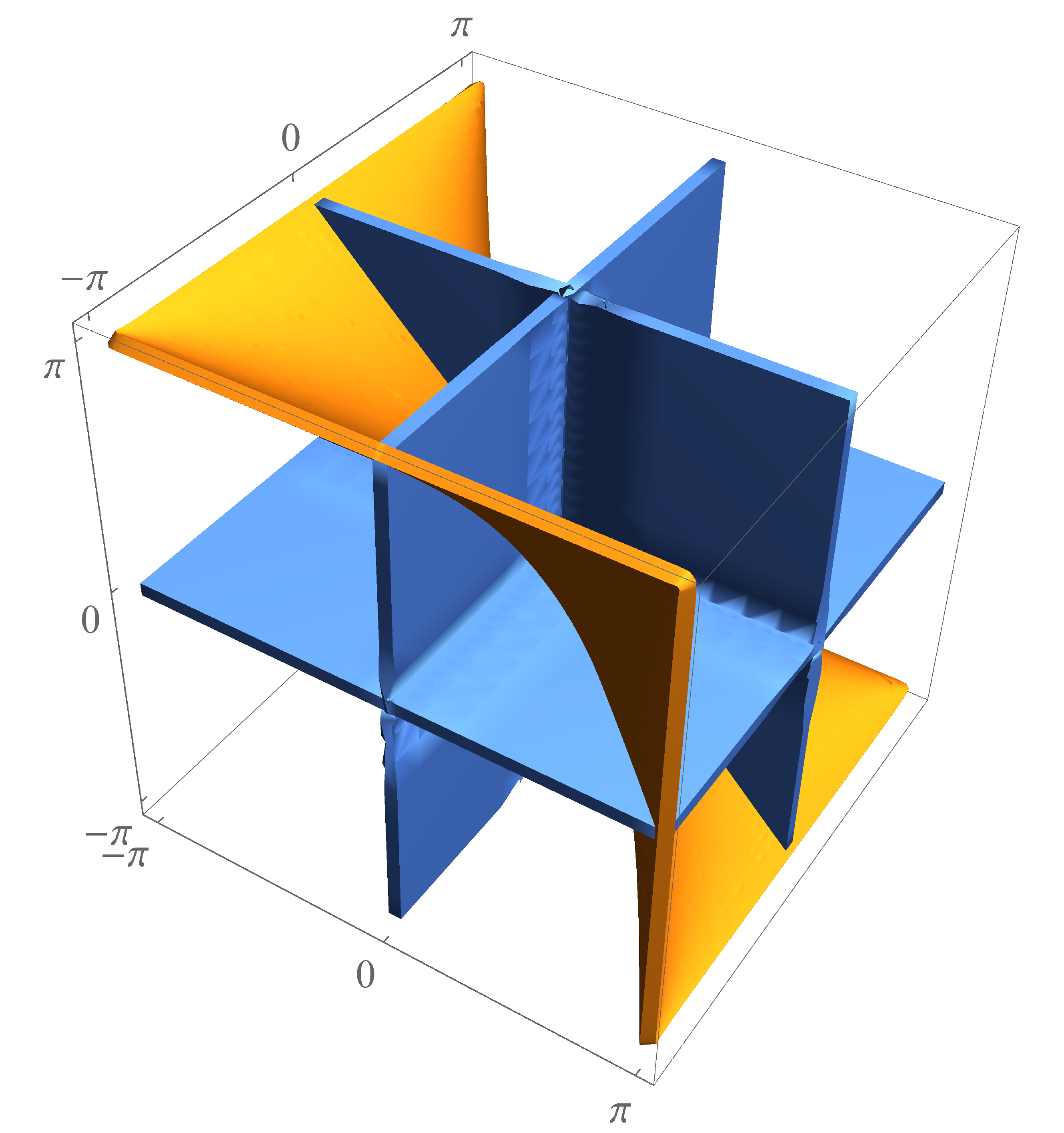}~~~~~\includegraphics[width=0.35\paperwidth]{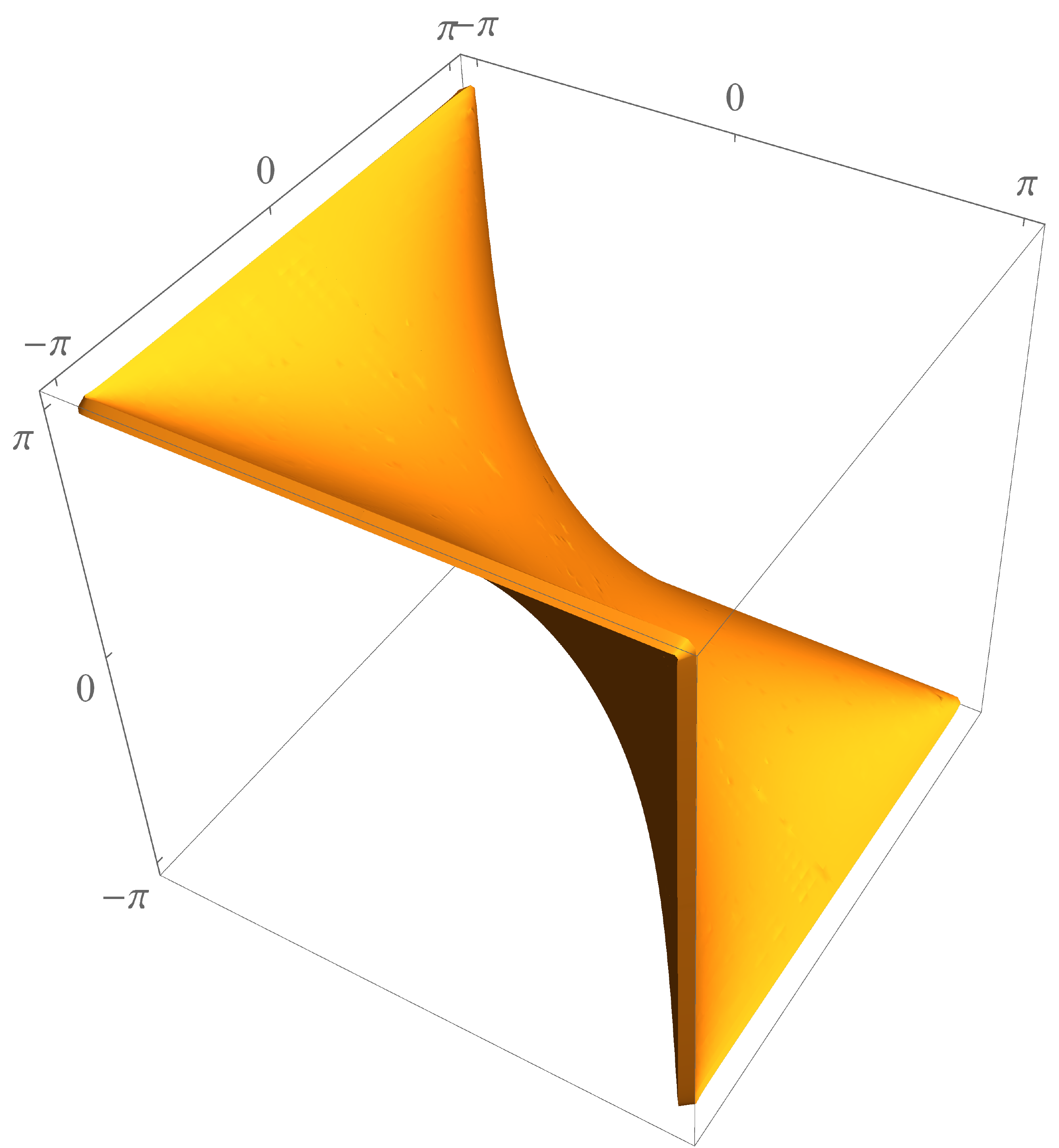}

\caption[ Secular manifold of 3-flower]{\label{fig: Zo reg} On the left, $\Sigma$ for $\Gamma$ a 3-flower
(one vertex and three loops), with loop factors $Z_{e_{1}},Z_{e_{2}}$
and $Z_{e_{3}}$ in blue and main factor $Z_{0}$ in orange. On the
right, only $Z_{0}^{reg}$. In both plots, $\protect\kv$ range in
$\left(-\pi,\pi\right)^{3}$ so that the loop factors are visible.}
\end{figure}

\begin{thm}
\label{thm: Connected components no loops}If $\Gamma$ is a graph
with a bridge $e$ and no loops, then $\Sigma^{reg}$ has two connected
components and the bridge extension $\tau_{e}$ (Definition \ref{def: Tau_=00007Be=00007D})
is an isometry between the two components. 
\end{thm}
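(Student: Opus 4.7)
The plan is to adapt the Berkolaiko--Liu argument for tails by exploiting the bridge decomposition from Proposition \ref{prop: simple bridge det decomposition} together with the involution $\tau_e$. I would construct a real-analytic, $\tau_e$-antisymmetric function on $\Sigma^{reg}$ whose sign distinguishes the two components, then use the projection map associated to the bridge decomposition to control the total number of components. Since $\Gamma$ has no loops, Lemma \ref{lem: loop decomposition} gives $\Sigma^{reg} = \Sigma_\L^c$, so no additional loop-factor components appear and everything reduces to analyzing the structure coming from the bridge.

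Concretely, with $v_1\in\V_1$ and $v_2\in\V_2$ the endpoints of $e$, I set $h(\kv) := f_\kv(v_1)f_\kv(v_2) - \partial_e f_\kv(v_1)\,\partial_{\hat e}f_\kv(v_2)$, which is real analytic on $\Sigma^{reg}$ by Lemma \ref{lem: vertex values of canonical function}. On the dense open subset $\Sigma\setminus Z_g$ the amplitude-phase form $f_\kv|_e(x)=A\cos(x-\varphi)$ together with the identity $\cos\varphi\cos(\kappa_e-\varphi)-\sin\varphi\sin(\kappa_e-\varphi)=\cos\kappa_e$ gives the clean formula $h(\kv)=A^2\cos\kappa_e = 2(m_\kv)_e\cos\kappa_e$. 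Then Lemma \ref{lem: Re and tau e are measure preseving} yields the crucial antisymmetry $h\circ\tau_e=-h$ (equivalently, $\tau_e$ preserves $m_\kv$ while sending $\cos\kappa_e\mapsto-\cos\kappa_e$). Since $\tau_e$ has no fixed point in $\T^\E$ and both signs of $h$ are realized on $\Sigma^{reg}$, this produces the lower bound $|\pi_0(\Sigma^{reg})|\ge 2$.

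For the upper bound, I would use the projection $\pi\colon \Sigma\setminus Z_g \to B := (\T^{\E_1}\times\T^{\E_2})\setminus\{g_1g_2 = 0\}$ defined by dropping the $\kappa_e$ coordinate. For each $(\kv_1,\kv_2)\in B$, the equation $e^{i2\kappa_e}=e^{-i(\Theta_1(\kv_1)+\Theta_2(\kv_2))}$ has exactly two solutions in $\T$, differing by $\pi$; hence $\pi$ is a two-sheeted covering map with $\tau_e$ as its deck transformation. The base $B$ is connected (complement of a codimension-$\ge 1$ real-analytic subvariety in a torus), so $\Sigma\setminus Z_g$ has at most two connected components. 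Since $Z_g\cap\Sigma^{reg}$ has positive codimension in $\Sigma^{reg}$ (again because $g_1,g_2$ are nonzero trigonometric polynomials and $\Sigma^{reg}$ is a smooth manifold), adjoining it back to $\Sigma\setminus Z_g$ does not increase the component count, giving $|\pi_0(\Sigma^{reg})|\le 2$.

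The main obstacle is promoting the lower bound from ``$h$ takes both signs'' to ``$\{h>0\}$ and $\{h<0\}$ lie in different components of $\Sigma^{reg}$''; equivalently, showing the covering $\pi$ is trivial rather than a connected degree-two cover. The absence of loops is essential here, and enters through the structure of $\Theta_1,\Theta_2$ on the subgraphs $\Gamma_1,\Gamma_2$. The cleanest formulation is that the locally constant function $\mathrm{sign}(h)$ on $\Sigma^{reg}\setminus\{h=0\}$ extends to a $\tau_e$-anti-equivariant continuous map $\Sigma^{reg}\to\{\pm 1\}$ once $\{h=0\}\cap\Sigma^{reg}$ is shown to have codimension $\ge 1$ and to be invariant under $\tau_e$; this uses the explicit form $h=2(m_\kv)_e\cos\kappa_e$ on $\Sigma\setminus Z_g$ together with the real-analyticity of $h$ on $\Sigma^{reg}$ to rule out that $\{h=0\}$ connects the two sign regions. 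Once both bounds are established, $\Sigma^{reg}$ has exactly two components and $\tau_e$, being a translation of $\T^\E$, restricts to an isometry between them.
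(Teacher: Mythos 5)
The upper-bound half of your argument is sound in outline and matches the paper's: show that away from $Z_g$ the secular manifold is a two-sheeted cover (via $e^{i2\kappa_e}=e^{-i(\Theta_1+\Theta_2)}$) over the base $B=(\T^{\E_1}\times\T^{\E_2})\setminus\{g_1g_2=0\}$, then argue $B$ is connected and adjoin $Z_g\cap\Sigma^{reg}$ back. However, you assert $B$ is connected because it is the complement of a codimension-$\ge 1$ real-analytic subvariety; that is not enough, since a codimension-one subvariety in a torus can disconnect it. You need codimension $\ge 2$, and establishing that $\{g_1g_2=0\}$ is codimension $\ge 2$ is the genuinely hard step in the paper's proof: it is equivalent to $\dim Z_g\le E-2$, which is proved by invoking the Berkolaiko--Liu genericity theorem \cite{BerLiu_jmaa17} to find a rationally independent $\lv$ for which no eigenfunction vanishes on the bridge, combined with the density of $\{k_n\lv\}$ in $\Sigma$. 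Your write-up skips this entirely, though it is the only place where ``no loops'' is actually used.

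The more serious issue is the lower bound. You correctly compute $h=2(m_\kv)_e\cos\kappa_e$ and observe $h\circ\tau_e=-h$, but you also correctly flag the obstacle: $h$ changes sign, which on a connected space is perfectly possible. Your proposed fix --- show $\{h=0\}\cap\Sigma^{reg}$ has codimension $\ge 1$, is $\tau_e$-invariant, and ``rule out that it connects the two sign regions'' --- will not go through, because $\{h=0\}\cap\Sigma^{reg}$ is contained in $\{\cos\kappa_e=0\}$, which is genuinely of codimension one inside $\Sigma^{reg}$ and cannot be removed without possibly changing the number of components. Sign of a function with a nonempty zero set simply does not detect $\pi_0$. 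The paper avoids this by using the auxiliary function $p(\kv)$ from Definition \ref{def: p}, which by Lemma \ref{lem: F p and mk} is real, continuous and \emph{nowhere vanishing} on $\Sigma^{reg}$. One deduces $p\circ\tau_e=-p$ from $F\circ\tau_e=-F$ (immediate from the bridge factorization) together with $\nabla F\circ\tau_e=-\nabla F$ and $\nabla F=p\, m_\kv$, where $m_\kv$ is $\tau_e$-invariant. A real, continuous, nowhere-vanishing function whose sign is flipped by $\tau_e$ forces $\kv$ and $\tau_e(\kv)$ into distinct components, which is exactly the clean version of what you were reaching for. Replace $h$ by $p$ and the lower bound closes.
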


\begin{proof}
Recall that if a graph has a bridge $e$ we defined its bridge decomposition\\
$\Gamma\setminus\left\{ e\right\} =\Gamma_{1}\sqcup\Gamma_{2}$ with
sets of edges $\E_{1}$ and $\E_{2}$ correspondingly and torus coordinates
$\kv=\left(\kv_{1},\kappa_{e},\kv_{2}\right)$ such that $\kv_{j}\in\T^{\E_{j}}$.
We have shown in Proposition \ref{prop: simple bridge det decomposition}
that the secular function can be factorized into $F\left(\kv\right)=g_{1}\left(\kv_{1}\right)g_{2}\left(\kv_{2}\right)\left(1-e^{i\left(2\kappa_{e}+\Theta_{1}\left(\kv_{1}\right)+\Theta_{2}\left(\kv_{2}\right)\right)}\right)$.
We have defined $Z_{g}:=\set{\left(\kv_{1},\kappa_{e},\kv_{2}\right)\in\T^{\E}}{g_{1}\left(\kv_{1}\right)g_{2}\left(\kv_{2}\right)=0}$,
and we will now prove that $Z_{g}$ is of positive co-dimension in
$\Sigma$. In fact, since $\Sigma^{sing}$ is of positive co-dimension
in $\Sigma$, then we only need to prove that $Z_{g}\cap\Sigma^{reg}$
is of positive co-dimension in $\Sigma^{reg}$. As each $\left|g_{i}\right|^{2}$
is a real trigonometric polynomial, by Proposition \ref{prop: simple bridge det decomposition},
and $Z_{g}\cap\Sigma^{reg}$ is the zero set of the real trigonometric
polynomial $\left|g_{1}\right|^{2}\left|g_{2}\right|^{2}$, then by
Lemma \ref{lem: real analytic lemma} either $Z_{g}\cap\Sigma^{reg}$
contains a connected component of $\Sigma^{reg}$ or it is of positive
co-dimension in $\Sigma^{reg}$. Assume by contradiction that $Z_{g}\cap\Sigma^{reg}$
contains a connected component of $\Sigma^{reg}$, say $M$. According
to \cite{BerLiu_jmaa17} there is a residual set $G\subset\left(\R_{+}\right)^{\E}$,
such that for every $\lv\in G$, every eigenvalue of $\Gamma_{\lv}$
is simple with eigenfunction that does not vanish on vertices (here
we use the fact that $\Gamma$ has no loops). Since $G$ is and the
set of rationally independent edge lengths is residual, then so does
their intersection. Let $\lv\in G$ be rationally independent, and
let $\left\{ f_{n}\right\} _{n=0}^{\infty}$ and $\left\{ k_{n}\right\} _{n=0}^{\infty}$
be the eigenfunctions and (square-root) eigenvalues of $\Gamma_{\lv}$.
So the sequence $\left\{ k_{n}\lv\right\} $ is dense in $\Sigma$
by Theorem \ref{thm: BG equidistribution}, and in particular there exists $n$ such that $\left\{ k_{n}\lv\right\} \in M\subset Z_{g}\cap\Sigma^{reg}$.
The characterization of $Z_{g}\cap\Sigma^{reg}$ in Lemma \ref{lem: Sigma minus Zg}
implies that the canonical eigenfunction $f_{\left\{ k_{n}\lv\right\} }$
vanish on the bridge $e$ and in particular on its vertices, and by
Lemma \ref{lem: vertex values canonical ef and other ef }, so does
$f_{n}$. But this is a contradiction to \cite{BerLiu_jmaa17} as
$\lv\in G$. We conclude that $\dim\left(Z_{g}\right)\le E-2$. 

Let us now define partition the sub-torus $\T^{\E_{1}}\times\T^{\E_{2}}$
into two parts:
\begin{align}
X & :=\set{\left(\kv_{1},\kv_{2}\right)\in\T^{\E_{1}}\times\T^{\E_{2}}}{g_{1}\left(\kv_{1}\right)g_{2}\left(\kv_{2}\right)\ne0},\,\,and\label{eq: definition of X}\\
X^{c} & :=\set{\left(\kv_{1},\kv_{2}\right)\in\T^{\E_{1}}\times\T^{\E_{2}}}{g_{1}\left(\kv_{1}\right)g_{2}\left(\kv_{2}\right)=0}.
\end{align}
So that $Z_{g}=\set{\left(\kv_{1},\kappa_{e},\kv_{2}\right)\in\T^{\E}}{\left(\kv_{1},\kv_{2}\right)\in X^{c}}$
and therefore $\dim\left(X^{c}\right)=\dim\left(Z_{g}\right)-1\le E-3$.
It follows that $X^{c}$ has co-dimension of at least two in $\T^{\E_{1}}\times\T^{\E_{2}}$
and therefore cannot bisect this space, so $X$ is connected. According
to Proposition \ref{prop: simple bridge det decomposition}, $e^{-i\left(\Theta_{1}\left(\kv_{1}\right)+\Theta_{2}\left(\kv_{2}\right)\right)}$
is smooth on $X$ and one can deduce from the decomposition of $F$
that, 
\[
\Sigma\setminus Z_{g}=\set{\left(\kv_{1},\kappa_{e},\kv_{2}\right)\in\T^{\E}}{\left(\kv_{1},\kv_{2}\right)\in X\,\,\text{and}\,\,e^{i2\kappa_{e}}=e^{-i\left(\Theta_{1}\left(\kv_{1}\right)+\Theta_{2}\left(\kv_{2}\right)\right)}}.
\]
Define the continuous function $y:X\rightarrow\R/2\pi\Z$ by $e^{iy}=e^{-i\left(\Theta_{1}\left(\kv_{1}\right)+\Theta_{2}\left(\kv_{2}\right)\right)}$.
Since $X$ is connected, then so does the graph of the continuous function
$y$ (embedded into $\T^{\E}$), 
\[
\mathrm{Graph}\left(y\right):=\set{\left(\kv_{1},y,\kv_{2}\right)\in\T^{\E}}{\left(\kv_{1},\kv_{2}\right)\in X\,\,\text{and}\,\,\,e^{iy}=e^{-i\left(\Theta_{1}\left(\kv_{1}\right)+\Theta_{2}\left(\kv_{2}\right)\right)}}.
\]
Notice that $y\left(\kappa_{e}\right)$ given by the equation $e^{i2\kappa_{e}}=e^{iy}$
is two to one, therefore $\Sigma\setminus Z_{g}=\set{\left(\kv_{1},\kappa_{e},\kv_{2}\right)}{e^{i2\kappa_{e}}=e^{-i\left(\Theta_{1}\left(\kv_{1}\right)+\Theta_{2}\left(\kv_{2}\right)\right)}}$
has at most two connected components. As $Z_{g}\cap\Sigma^{reg}$
is closed and of positive co-dimension, then $\Sigma\setminus Z_{g}$
is an open dense subset set of $\Sigma^{reg}$, and therefore $\Sigma^{reg}$
has at most two connected components. 

Recall that the bridge extension $\tau_{e}$ is an isometry of $\Sigma^{reg}$
as seen in the proof of Lemma \ref{lem: Re and tau e are measure preseving}.
We will now show that for any point $\kv\in\Sigma^{reg}$, the points
$\kv$ and $\tau_{e}\left(\kv\right)$ are not in the same connected
component. This will prove that $\Sigma^{reg}$ has at least and therefore
exactly two connected components, and that $\tau_{e}$ is an isometry
between the two. 

It is immediate from the decomposition of $F$ in Proposition \ref{prop: simple bridge det decomposition}
that $F\circ\tau_{e}=-F$. Since $\tau_{e}$ is a translation, then
it commutes with derivatives and therefore $-\nabla F=\nabla\left(F\circ\tau_{e}\right)=\nabla F\circ\tau_{e}$,
namely $\nabla F\left(\tau_{e}\left(\kv\right)\right)=-\nabla F\left(\kv\right)$
for all $\kv\in\T^{\E}$. Let $\kv\in\Sigma^{reg}$, according to
Lemma \ref{lem: F p and mk}, the non-vanishing components of $\nabla F\left(\kv\right)$
have the same sign as the sign of $p\left(\kv\right)$, and same for
$\tau_{e}\left(\kv\right)\in\Sigma^{reg}$. Therefore, $p\left(\tau_{e}\left(\kv\right)\right)=-p\left(\kv\right)$,
and since $p$ is real, non-vanishing and continuous on $\Sigma^{reg}$
then $\kv$ and $\tau_{e}\left(\kv\right)$ belongs to different connected
components. 
\end{proof}
\begin{thm}
\label{thm: connected components with loops}Let $\Gamma$ be a graph
with $\left|\E_{loops}\right|>0$ loops, then $\Sigma^{reg}$ has
at least $\left|\E_{loops}\right|+1$ connected components. However,
$Z_{0}^{reg}$ is connected.
\end{thm}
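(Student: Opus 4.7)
\emph{Part 1} is an immediate corollary of Lemma \ref{lem: loop decomposition}. That lemma provides the disjoint decomposition $\Sigma^{reg} = \Sigma_{\L}^c \sqcup \bigsqcup_{e \in \E_{loops}} \tilde Z_e$, in which each of $\Sigma_{\L}^c$ and each loop factor $\tilde Z_e$ is a union of connected components of $\Sigma^{reg}$. Part (\ref{enu: loops factors measure}) of the same lemma gives $\mu_{\lv}(\tilde Z_e) = l_e/(2L) > 0$ and $\mu_{\lv}(\Sigma_{\L}^c) \geq 1/2 > 0$, so each of the $|\E_{loops}|+1$ pieces is non-empty and therefore contributes at least one connected component. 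This gives the lower bound directly.

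\emph{Part 2} --- connectedness of $Z_0^{reg}$ --- is the main substance of the theorem, and my plan is parallel to the proof of Theorem \ref{thm: Connected components no loops} but aimed at showing that none of the splitting mechanisms available there is active in $Z_0^{reg}$. First, $Z_0^{reg}$ is a real analytic manifold of dimension $E-1$: this follows from the same Friedlander-type argument used in Proposition \ref{prop: Secular manifold structure}, applied to the trigonometric polynomial $\det(1-U_0(\kv))$ and using part (\ref{enu: loops factors measure}) of Lemma \ref{lem: loop decomposition} to see that $Z_0^{reg}$ is non-empty. Next, $\Sigma_{\L}^c = Z_0^{reg} \cap \Sigma^{reg}$ is open and dense in $Z_0^{reg}$, since its complement inside $Z_0^{reg}$ is contained in the finite union of proper sub-tori $\bigcup_{e \in \E_{loops}} Z_e$, which has positive codimension in $Z_0^{reg}$. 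Consequently it suffices to exhibit paths in $Z_0^{reg}$ between any two points of $\Sigma_{\L}^c$; such paths may cross the codimension-one strata $Z_0^{reg} \cap Z_e$.

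The heart of the argument is to rule out any sign-based or factor-based splitting of $Z_0^{reg}$. My plan is to attempt a bridge-type factorization of $\det(1 - U_0(\kv))$ analogous to Proposition \ref{prop: simple bridge det decomposition}, and to investigate whether any torus translation preserves $Z_0^{reg}$ while reversing a continuous sign function on it. In the no-loop setting of Theorem \ref{thm: Connected components no loops}, the bridge extension satisfies $F \circ \tau_e = -F$ and hence $p \circ \tau_e = -p$, which forced $\Sigma^{reg}$ into two components distinguished by $\operatorname{sign}(p)$. In the present setting, for a loop $e$ the factor $e^{i\kappa_e}$ enters the diagonal of $U_0(\kv)$ only \emph{once} (unlike a non-loop edge, for which it enters twice), so the translation $\kappa_e \mapsto \kappa_e + \pi$ does not produce a matching sign flip of $\det(1 - U_0)$. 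I would formalize this asymmetry using Jacobi's identity to compute $\nabla \det(1 - U_0(\kv))$ on $Z_0^{reg}$, and deduce that the continuous sign of the corresponding `$p_0$' on $Z_0^{reg}$ admits no reversing symmetry.

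The main obstacle I expect is converting the absence of an algebraic sign-splitting into an actual path-connection between potentially distinct components of $\Sigma_{\L}^c$. The plan is to pick a loop $e \in \E_{loops}$ and use the codimension-one submanifold $Z_0^{reg} \cap Z_e$ as a genuine `bridge' through which distinct components of $\Sigma_{\L}^c$ can be joined. At a generic crossing point, $Z_0$ and $Z_e$ meet transversally (this needs to be verified by examining the derivative $\frac{\partial}{\partial \kappa_e}\det(1-U_0)$ at such points), so a local crossing within $Z_0^{reg}$ moves between the two sides of $Z_e$ while remaining in $Z_0^{reg}$. Combined with density of $\Sigma_{\L}^c$ in $Z_0^{reg}$ and equidistribution of $\{k_n \lv\}$ from Theorem \ref{thm: BG equidistribution} to ensure that each putative component of $\Sigma_{\L}^c$ is approached by points near the crossing locus, this should complete the connectedness proof.
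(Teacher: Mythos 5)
Your Part 1 is exactly the paper's argument, so that is fine. On Part 2 you have correctly identified the key algebraic asymmetry --- that $e^{i\kappa_e}$ for a loop $e$ enters the diagonal of $U_0(\kv)$ only once, so $\det(1-U_0(\kv))$ is \emph{linear} in $e^{i\kappa_e}$ rather than quadratic --- but you then try to exploit it via a negative strategy (``rule out sign-based splittings'' and ``bridge distinct components through crossings with $Z_0^{reg}\cap Z_e$''), and this is where the gap is. The absence of a $\tau_e$-type symmetry reversing a sign function does not by itself prove connectedness: there could a priori be some other splitting mechanism. And the transversal-crossing plan has a circularity problem: at a crossing point of $Z_0^{reg}\cap Z_e$ you can move locally from one side of $Z_e$ to the other inside $Z_0^{reg}$, but you have no control over \emph{which} components of $\Sigma_{\L}^c$ lie on the two sides, so you cannot conclude all components are joined. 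You also have not established that $Z_0^{reg}\cap Z_e$ has positive codimension inside $Z_0^{reg}$; this requires an argument.

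The paper's proof turns your observation into a positive, complete argument: it writes $\det(1-U_0(\kv)) = h_2(\kv_{\overline e})\,e^{i\kappa_e} + h_1(\kv_{\overline e})$, deduces $|h_1|=|h_2|$ everywhere from the reality of $F$, and lets $Z_h$ be the set where $h_1=h_2=0$. One then shows $\dim(Z_h)\le E-2$ by the same Friedlander/Berkolaiko--Liu residual-lengths plus equidistribution argument used in the bridge case, so its projection $X^c\subset\T^{\E\setminus e}$ has codimension at least two and $X$ is connected. On $X$ the equation $\det(1-U_0)=0$ is solved uniquely by $e^{i\kappa_e}=-h_1/h_2$, a single-valued continuous function, so $Z_0\setminus Z_h$ is literally the graph of a continuous function over the connected set $X$ --- hence connected --- and it is dense in $Z_0^{reg}$. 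The point you should take away is that the linearity in $e^{i\kappa_e}$ is used to show the fiber over each base point is a \emph{single} point (rather than the two points $e^{i2\kappa_e}=\text{(phase)}$ gives in the bridge case), which is what makes $Z_0^{reg}$ one component instead of two. Your instinct to mimic Theorem \ref{thm: Connected components no loops} is right; you need to carry the ``graph over a connected base'' mechanism all the way through rather than replacing the final step with a crossing/bridging heuristic.
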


The lower bound on the number of connected components is straightforward from Lemma \ref{lem: loop decomposition}. The proof of $Z_{0}^{reg}$
being connected is very similar to that of Theorem \ref{thm: Connected components no loops}:
\begin{proof}
As discussed in Lemma \ref{lem: loop decomposition}, $\Sigma^{reg}=\Sigma_{\L}^{c}\sqcup_{e\in\E_{loops}}\tilde{Z}_{e}$
where each of these sets is a union of connected components of $\Sigma^{reg}$
and has positive measure (so it is not empty). It follows that there
must be at least $\left|\E_{loops}\right|+1$ connected components
of $\Sigma^{reg}$. 

Let $e$ be a distinguished loop and consider the notation $\kv=\left(\kappa_{e},\kv_{\overline{e}}\right)$
where $\kv_{\overline{e}}$ denotes the rest of the entries of $\kv$
which are not $\kappa_{e}$. Consider the decomposition of $U_{\kv}$
according to (\ref{eq: U loop decomposition}), so that $1-U_{\kv}=\left(1-e^{i\hat{\kappa}_{loops}}\right)\oplus\left(1-U_{0}\left(\kv\right)\right)$
where $U_{0}\left(\kv\right)=e^{i\hat{\kappa}_{0}}S_{0}$. As both
$e^{i\hat{\kappa}_{loops}}$ and $e^{i\hat{\kappa}_{0}}$ are unitary
diagonal with $e^{i\kappa_{e}}$ appearing ones in each of them, then
both $\det\left(1-U_{0}\left(\kv\right)\right)$ and $\det\left(1-e^{i\hat{\kappa}_{loops}}\right)$
are polynomials of degree one in $e^{i\kappa_{e}}$. We may define
$h_{0}\left(\kv_{\overline{e}}\right)=\Pi_{e'\in\E_{loops};\,e'\ne e}\left(1-e^{i\kappa_{e'}}\right)$
such that $\det\left(1-e^{i\hat{\kappa}_{loops}}\right)=\left(1-e^{i\kappa_{e}}\right)h_{0}\left(\kv_{\overline{e}}\right)$.
We may also define $h_{1}$ and $h_{2}$, by 
\begin{equation}
\det\left(1-U_{0}\left(\kv\right)\right)=h_{2}\left(\kv_{\overline{e}}\right)e^{i\kappa_{e}}+h_{1}\left(\kv_{\overline{e}}\right),\label{eq: h1 and h2}
\end{equation}
such that both $h_{1}$ and $h_{2}$ are polynomials in $\left\{ e^{i\kappa_{e'}}\right\} _{e'\ne e}$
with real coefficients (since $S_{0}$ is real). Therefore, 
\begin{align}
F\left(\kappa_{e},\kv_{\overline{e}}\right)= & \det\left(U_{\kv}\right)^{\frac{1}{2}}\det\left(1-U_{\kv}\right)\label{eq: h0}\\
= & \left(i^{\left(\beta-1\right)}e^{-i\sum_{e'\ne e}\kappa_{e'}}\right)e^{-i\kappa_{e}}\det\left(1-U_{\kv}\right)\\
= & \left(i^{\left(\beta-1\right)}e^{-i\sum_{e'\ne e}\kappa_{e'}}\right)e^{-i\kappa_{e}}\det\left(1-e^{i\hat{\kappa}_{loops}}\right)\det\left(1-U_{0}\left(\kv\right)\right)\\
= & \left(i^{\left(\beta-1\right)}e^{-i\sum_{e'\ne e}\kappa_{e'}}\right)e^{-i\kappa_{e}}\left(1-e^{i\kappa_{e}}\right)h_{0}\left(\kv_{\overline{e}}\right)\left(h_{2}\left(\kv_{\overline{e}}\right)e^{i\kappa_{e}}+h_{1}\left(\kv_{\overline{e}}\right)\right)\\
= & \left(i^{\left(\beta-1\right)}e^{-i\sum_{e'\ne e}\kappa_{e'}}\right)\left(1-e^{i\kappa_{e}}\right)h_{0}\left(\kv_{\overline{e}}\right)\left(h_{2}\left(\kv_{\overline{e}}\right)+e^{-i\kappa_{e}}h_{1}\left(\kv_{\overline{e}}\right)\right).
\end{align}
 Denoting the uni-modular prefactor $c\left(\kv_{\overline{e}}\right)=\left(i^{\left(\beta-1\right)}e^{-i\sum_{e'\ne e}\kappa_{e'}}\right)$
and suppressing the $\kv_{\overline{e}}$ dependence in $c,h_{0},h_{1}$
and $h_{3}$ for simplicity, this gives:
\begin{align}
F\left(\kappa_{e},\kv_{\overline{e}}\right) & =ch_{0}\left(1-e^{i\kappa_{e}}\right)\left(h_{2}+e^{-i\kappa_{e}}h_{1}\right),\,and\label{eq: F in loop comp}\\
\frac{\partial}{\partial\kappa_{e}}F\left(\kappa_{e},\kv_{\overline{e}}\right) & =ch_{0}\left(h_{2}\left(1-ie^{i\kappa_{e}}\right)-h_{1}\left(1+ie^{-i\kappa_{e}}\right)\right).\label{eq: dfdke in loop components}
\end{align}
One may check that the function $\left(1-e^{ix}\right)\left(A+Be^{-ix}\right)$
is real for all $x\in\R$ if and only if $A=-\overline{B}$. As $F$
is real, $\left|c\right|=1$ and $h_{0}\left(\kv_{\overline{e}}\right)\ne0$
for any $\kv_{\overline{e}}\in\left(0,2\pi\right)^{\E\setminus e}$
then $\left|h_{2}\right|=\left|h_{1}\right|$ for any $\kv_{\overline{e}}\in\left(0,2\pi\right)^{\E\setminus e}$.
This equality extends to $\T^{\E\setminus e}$ by continuity. Let
\begin{align}
X^{c} & :=\set{\kv_{\overline{e}}\in\T^{\E\setminus e}}{h_{2}\left(\kv_{\overline{e}}\right)=h_{1}\left(\kv_{\overline{e}}\right)=0}\\
X & :=\T^{\E\setminus e}\setminus X=\set{\kv_{\overline{e}}\in\T^{\E\setminus e}}{\left|h_{2}\left(\kv_{\overline{e}}\right)\right|=\left|h_{1}\left(\kv_{\overline{e}}\right)\right|\ne0},\,\,\text{and}\\
Z_{h} & :=\set{\left(\kappa_{e},\kv_{\overline{e}}\right)\in\T^{\E}}{\kv_{\overline{e}}\in X^{c}}.
\end{align}
Observe that $\left|\frac{\partial}{\partial\kappa_{e}}\det\left(1-U_{0}\left(\kv\right)\right)\right|=\left|h_{2}\left(\kv_{\overline{e}}\right)\right|$
and therefore, by the characterization in (\ref{eq: Zoreg as det}),
\begin{equation}
Z_{0}\setminus Z_{h}\subset Z_{0}^{reg}.
\end{equation}
If $\kv\in Z_{h}\cap\Sigma^{reg}$, namely $h_{1}=h_{2}=0$, then $\frac{\partial}{\partial\kappa_{e}}F=0$
according to (\ref{eq: dfdke in loop components}). Then by Lemma
\ref{lem: F p and mk}, the amplitudes vector of $f_{\kv}$ satisfies
$\left|a_{e}\right|^{2}+\left|a_{\hat{e}}\right|^{2}=0$, and hence
$f_{\kv}|_{e}\equiv0$. Therefore, if $\kv\in\Sigma^{reg}$ and $f_{\kv}$
does not vanish on vertices, then $\kv\notin Z_{h}$ and is also not
in any loop factor $Z_{e'}$. We may conclude, using the decomposition
in Lemma \ref{lem: loop decomposition}, that having $f_{\kv}$ does
not vanish on vertices implies $\kv\in\Sigma^{reg}\cap Z_{0}\setminus Z_{h}$. 

According to \cite{BerLiu_jmaa17}, if a graph has loops, then there
is a residual sets of edge lengths $G\subset\left(\R_{+}\right)^{\E}$
such that for every $\lv\in G$, every eigenvalue of $\Gamma_{\lv}$
is simple and every eigenfunction is either supported on a single
loop or does not vanish on vertices. As in the proof of Theorem \ref{thm: Connected components no loops},
we may choose $\lv\in G$ which is rationally independent, so that
the sequence $\left\{ k_{n}\lv\right\} _{n=0}^{\infty}$ is dense
in $\Sigma^{reg}$. If $f_{n}$ is supported on a loop $e'$, then
$\left\{ k_{n}\lv\right\} \in Z_{e'}\cap\Sigma^{reg}=Z_{e'}^{reg}$,
otherwise, $f_{n}$ does not vanish on any vertex, and so does $f_{\left\{ k_{n}\lv\right\} }$
(by Lemma \ref{lem: vertex values canonical ef and other ef }), so
$\left\{ k_{n}\lv\right\} \in\Sigma^{reg}\cap Z_{0}\setminus Z_{h}$.
Since $Z_{h}$ is the zero set of a real trigonometric polynomial
$\left|h_{1}\right|^{2}+\left|h_{2}\right|^{2}=0$ and $Z_{0}^{reg}$
is a real analytic manifold of dimension $E-1$, then the same argument
as in the proof of Theorem \ref{thm: Connected components no loops}
would show that $\dim\left(Z_{h}\cap\Sigma^{reg}\right)\le E-2$ and
therefore $\dim\left(Z_{h}\right)\le E-2$. It follows that $\dim\left(X^{c}\right)=\dim\left(Z_{h}\right)-1\le E-3$
which co-dimension of at least two in $\T^{\E\setminus e}$ and therefore
$X$ is connected. Define the function $y:X\rightarrow\R/2\pi\Z$
given by $e^{iy}=-\frac{h_{1}\left(\kv_{\overline{e}}\right)}{h_{2}\left(\kv_{\overline{e}}\right)}$
and notice that it is is continuous, so it has a connected graph (embedded
into $\T^{\E}$):
\begin{align*}
Graph\left(y\right) & :=\set{\left(y,\kv_{\overline{e}}\right)\in\T^{\E}}{\kv_{\overline{e}}\in X\,\,\,\text{and}\,\,\,e^{iy}=-\frac{h_{1}\left(\kv_{\overline{e}}\right)}{h_{2}\left(\kv_{\overline{e}}\right)}}\\
= & \set{\left(y,\kv_{\overline{e}}\right)\in\T^{\E}}{\kv_{\overline{e}}\in X\,\,\,\text{and}\,\,\,h_{2}\left(\kv_{\overline{e}}\right)e^{iy}+h_{1}\left(\kv_{\overline{e}}\right)=0}\\
= & \set{\left(\kappa_{e},\kv_{\overline{e}}\right)\in\T^{\E}}{\kv_{\overline{e}}\in X\,\,\,\text{and}\,\,\,\det\left(1-U_{0}\left(\kv\right)\right)=0},\\
= & Z_{0}\setminus Z_{h}.
\end{align*}
Therefore $Z_{0}\setminus Z_{h}$ is connected. But it is dense in
$Z_{0}^{reg}$ as $Z_{h}$ is of positive co-dimension so therefor
$Z_{0}^{reg}$ is connected.
\end{proof}

\newpage{}
\section{\label{sec: genericity}generic eigenfunctions}

Recall that we define (see Definition \ref{def:generic_eigenfunction})
a generic eigenfunction as an eigenfunction of a simple eigenvalue
that satisfies both properties $I$ and $II$. Where an eigenfunction
is said to satisfy property $I$ if it does not vanish on vertices,
and property $II$ if non of its outgoing derivatives vanish on any
interior vertex (see Definition \ref{def: Properties I,II,III-1}). 

The main goal of this chapter, as discussed in the introduction, is
Theorem \ref{thm: density-of-generic-and-loop-eigenfunctions} which
proves, for two related notions of genericity, that the eigenfunctions
we call generic are indeed generic (among the eigenfunctions which
are not supported on loops). 

The first genericity result for quantum graphs is duo to Friedlander,
who proved in \cite{Fri_ijm05} that given a graph $\Gamma$ there
is a residual set of edge length $G\subset\R_{+}^{\E}$ such that
for any $\lv\in G$, the spectrum of $\Gamma_{\lv}$ is simple (i.e
every eigenvalue is simple). This result was generalized by Berkolaiko
and Liu as follows:
\begin{thm}
\label{thm: Genericity of Grisha}\cite[Theorem 3.6]{BerLiu_jmaa17}
Given a graph $\Gamma$ there is a residual set of edge lengths $G\subset\R_{+}^{\E}$
such that for any $\lv\in G$, the spectrum of $\Gamma_{\lv}$ is
simple. Moreover, every eigenfunction is either supported on a single
loop (if such exists) or satisfies Property $I$.
\end{thm}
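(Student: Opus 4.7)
The plan is to translate the statement into an avoidance condition on the secular manifold and then apply a Baire category argument. By Lemma \ref{lem: Sigma reg and simple eigenvalues}, the positive eigenvalues $k^{2}$ of $\Gamma_{\lv}$ correspond to $k>0$ with $\kv := \{k\lv\} \in \Sigma$, and simplicity of $k^{2}$ is equivalent to $\kv \in \Sigma^{reg}$. By Lemma \ref{lem: vertex values canonical ef and other ef }, an eigenfunction of $\Gamma_{\lv}$ with simple eigenvalue $k^{2}$ satisfies Property~I if and only if $f_{\kv}(v) \ne 0$ for every $v \in \V$, while being supported on a single loop is equivalent to $\kv \in \Sigma_{\L}$. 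Accordingly, setting
\begin{equation*}
V_{v} := \set{\kv \in \Sigma_{\L}^{c}}{f_{\kv}(v) = 0}, \qquad B := \Sigma^{sing} \cup \bigcup_{v \in \V} V_{v},
\end{equation*}
the conclusion of the theorem holds for $\lv$ if and only if the trajectory $\set{\{k\lv\}}{k>0}$ avoids $B$.

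The first step is to show that $B$ is contained in a real-analytic subvariety of $\T^{\E}$ of positive codimension. For $\Sigma^{sing}$ this is Proposition \ref{prop: Secular manifold structure}. For each $V_{v}$, Lemma \ref{lem: vertex values of canonical function} (and its extension to $Z_{0}^{reg}$ via Lemma \ref{lem: vertex values for loops}) exhibits $h_{v}(\kv) := f_{\kv}(v)^{2}$ as a real-analytic function on the relevant open manifold, so by Lemma \ref{lem: real analytic lemma} it suffices to show $h_{v}$ does not vanish identically on any connected component. In the loop case this reduces to a single condition, since $Z_{0}^{reg}$ is connected by Theorem \ref{thm: connected components with loops}. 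In the loop-free case $\Sigma^{reg}$ has two components interchanged by the bridge extension $\tau_{e}$, and $h_{v}\circ\tau_{e} = h_{v}$ by Lemma \ref{lem: Re and tau e are measure preseving}, so again one component suffices.

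The technical heart of the argument, and the step I expect to be the main obstacle, is exhibiting for each vertex $v$ at least one $\kv_{0} \in \Sigma_{\L}^{c}$ at which $f_{\kv_{0}}(v) \ne 0$, thereby ruling out $h_{v} \equiv 0$. I would proceed by contradiction: if $h_{v} \equiv 0$ on the connected component, then for every rationally independent $\lv_{0}$ the equidistribution of $\{k_{n}\lv_{0}\}$ in $\Sigma$ (Theorem \ref{thm: BG equidistribution}), combined with $\mu_{\lv_{0}}(\Sigma_{\L}^{c}) \ge \tfrac{1}{2}$, forces a positive-density subsequence of eigenfunctions of $\Gamma_{\lv_{0}}$ to vanish at $v$. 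Using spectral completeness of the eigenfunctions in $L^{2}(\Gamma_{\lv_{0}})$, together with the fact that the constant eigenfunction does not vanish at $v$, this yields a contradiction with the ability to approximate $H^{2}$ functions supported near $v$. An alternative, more constructive route is to choose a specific $\lv_{0}$ and an edge $e$ incident to $v$ whose length is tuned so that an explicit solution to $f'' = -f$ satisfying the Neumann conditions has $f(v) \ne 0$.

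Once $B$ is shown to be contained in a real-analytic subset of dimension at most $E-2$ inside $\T^{\E}$, residuality of $G$ follows from a standard flow argument. Lifting $B$ to the $2\pi\Z^{\E}$-periodic subset $\tilde{B} := \set{\xv \in \R^{\E}}{\{\xv\} \in B}$, the set of $\lv \in \R_{+}^{\E}$ whose ray $\set{k\lv}{k>0}$ meets $\tilde{B}$ is the image of the map $(k,\xv) \mapsto \xv/k$ on $\R_{+} \times \tilde{B}$, hence a countable union of real-analytic sets of dimension at most $E - 1$ in $\R^{\E}$. Each such set is nowhere dense in $\R_{+}^{\E}$, so the complement $G$ is residual by the Baire category theorem, completing the proof.
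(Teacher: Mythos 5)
The thesis does not prove this theorem: it is cited verbatim from Berkolaiko and Liu \cite[Theorem 3.6]{BerLiu_jmaa17}, and the secular-manifold machinery developed afterwards treats it as a black-box input. So there is no ``paper's own proof'' to compare with. More importantly, the route you propose is circular. Both connectedness results you rely on --- Theorem \ref{thm: Connected components no loops} (two components of $\Sigma^{reg}$ when there are no loops but a bridge) and Theorem \ref{thm: connected components with loops} (connectedness of $Z_0^{reg}$ when there are loops) --- are proved in this thesis \emph{by invoking the Berkolaiko--Liu theorem}. Each of their proofs opens by choosing $\lv$ in the residual set $G$ supplied by \cite{BerLiu_jmaa17}, uses the resulting non-vanishing of eigenfunctions at vertices to rule out $Z_g$ (respectively $Z_h$) having full dimension, and only then deduces connectedness via the graph-of-a-continuous-function argument. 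Your reduction ``show $h_v\not\equiv0$ on a single component and use connectedness'' therefore assumes exactly what you are trying to establish.

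There are secondary gaps as well. First, Theorem \ref{thm: Connected components no loops} requires a bridge, so for a loop-free $2$-edge-connected graph your argument has no connectedness result to invoke at all. Second, your ``technical heart'' --- ruling out $h_v\equiv0$ by equidistribution plus spectral completeness --- is not sound as sketched: point evaluation at $v$ is not $L^2$-continuous, and eigenfunction expansions of $H^2$ functions converge only in $L^2$, so even a density-one family of eigenfunctions vanishing at $v$ does not by itself contradict completeness (the density-zero remainder together with the constant eigenfunction could in principle still span). A non-circular construction of a $\kv_0$ with $f_{\kv_0}(v)\neq 0$ would have to proceed by an explicit build, for instance the spanning-tree-to-stower contraction used in Lemma \ref{lem: example for def(v) not zero} of the thesis, and the connectedness input would likewise need to be replaced by something that does not rest on \cite{BerLiu_jmaa17}.
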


In \cite[Proposition A.1]{AloBanBer_cmp18} it was shown that the
non-explicit residual set $G$ can be replaced by the explicit residual
set of rationally independent edge lengths, if we restrict the discussion
to `almost every eigenvalue and eigenfunction' (in the sense of a
sub-sequence of density one). In order to state it, let us define the
following index sets:
\begin{defn}
\label{def: index sets} Given a standard graph $\Gamma_{\lv}$ with
(square-root) eigenvalues and eigenfunctions $\left\{ k_{n}\right\} _{n=0}^{\infty}$
and $\left\{ f_{n}\right\} _{n=0}^{\infty}$, we define the following
index sets:
\begin{equation}
\mathcal{S}:=\left\{ n\in\N~:~k_{n}\:\:\text{is simple}\right\} ,\label{eq: singular}
\end{equation}
\begin{align}
\mathcal{P}_{I} & :=\set{n\in\mathcal{S}}{f_{n}\,\,\text{satisfies property I}},\label{eq: Property I,II,III index sets}\\
\mathcal{P}_{II} & :=\set{n\in\mathcal{S}}{f_{n}\,\,\text{satisfies property II}},\,\,\text{and}\\
\L & :=\set{n\in\mathcal{S}}{f_{n}\,\,\text{is supported on a loop}}.\label{eq: loop eigenfunctions}
\end{align}
Where $\mathcal{L}=\emptyset$ if $\Gamma_{\lv}$ has no loops. The
index set of generic eigenfunction $\G$ (see Definition \ref{def:generic_eigenfunction})
can be written as 
\begin{equation}
\G=\mathcal{P}_{I}\cap\mathcal{P}_{II}.\label{eq: def G as PI cap PII}
\end{equation}
\end{defn}

\begin{rem}
As discussed in Remark \ref{rem: differ on degernate e.s}, these
sets are $\lv$ dependent, but are independent of the choice of eigenfunctions
$\left\{ f_{n}\right\} _{n=0}^{\infty}$. This is because different
choices of $L^{2}$ basis of eigenfunctions may differ only on eigenspaces
of non-simple eigenvalues, and so subsets of $\mathcal{S}$ are independent
of this choice. It is not hard to deduce that $\mathcal{L}$ and $\mathcal{P}_{I}\cup\mathcal{P}_{II}$
are disjoint and in particularly, $\L$ and $\G$ are disjoint.
\end{rem}

A Venn diagram of these index sets is presented Figure \ref{fig: Venn diagram integers}
for visualization.
\begin{figure}
\includegraphics[width=0.4\paperwidth]{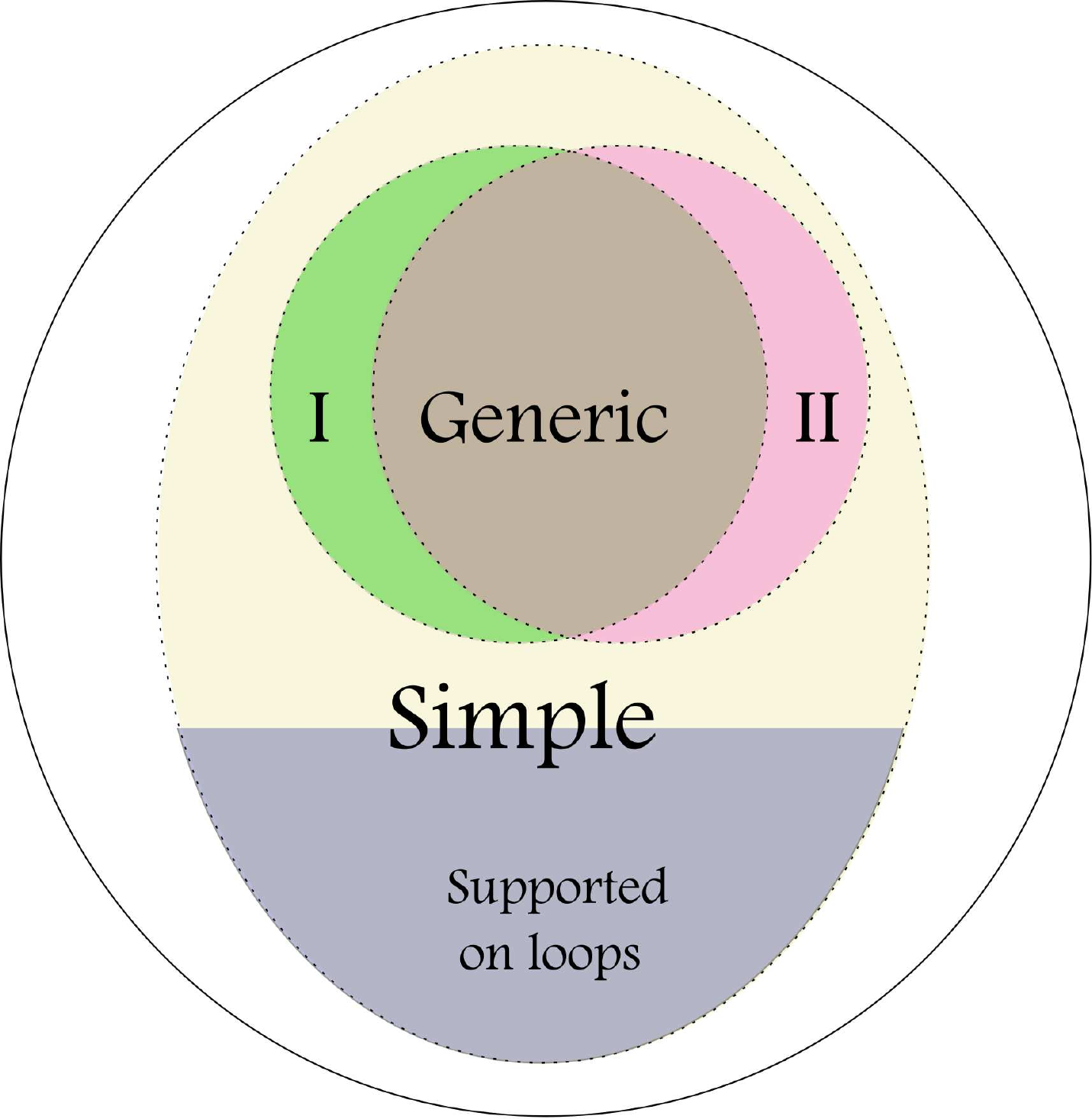}\caption[Diagram of index sets]{\label{fig: Venn diagram integers}A Venn diagram of the index sets.
Where $I$ and $II$ stands for $\mathcal{P}_{I}$ and $\mathcal{P}_{II}$.
`Simple' and `Generic' stands for $\mathcal{S}$ and $\protect\G$,
and `Supported on loops' stands for $\protect\L$. The outer white
region stands for the indices of the non-simple eigenvalues, namely
$\protect\N\setminus\mathcal{S}$. }
\end{figure}

Theorem \ref{thm: Genericity of Grisha} can be written, using this
terminology, as the existence of a residual set $G$ such that for
any $\lv\in G$, $\N=\mathcal{S}=\mathcal{P}_{I}\sqcup\mathcal{L}$.
Using the notations as above, the modification of \cite[Proposition A.1]{AloBanBer_cmp18}
can be written as follows. 
\begin{thm}
\cite[Proposition A.1]{AloBanBer_cmp18}\label{thm: weak genericity}
If $\lv$ is rationally independent with total length $L$, then $\mathcal{S}$,
$\mathcal{L}$ and $\mathcal{P}_{I}$ have densities given by 
\begin{align*}
d\left(\mathcal{S}\right) & =1,\\
d\left(\L\right) & =\frac{1}{2}\frac{\sum_{e\in\E_{loops}}l_{e}}{L},\,\,\text{and}\\
d\left(\mathcal{P}_{I}\right) & =1-d\left(\mathcal{L}\right)=1-\frac{1}{2}\frac{\sum_{e\in\E_{loops}}l_{e}}{\sum_{e\in\E}l_{e}}\ge\frac{1}{2}.
\end{align*}
\end{thm}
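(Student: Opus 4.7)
The plan is to pull each of the three densities back to a Jordan subset of the secular manifold $\Sigma$ and apply the Barra--Gaspard equidistribution result (Theorem \ref{thm: BG equidistribution}). The fundamental dictionary is Lemma \ref{lem: Sigma reg and simple eigenvalues}: $n\in\mathcal{S}$ iff $\{k_n\lv\}\in\Sigma^{reg}$; combined with Lemma \ref{lem: vertex values canonical ef and other ef }, any property of $f_n$ expressible in its trace translates to a property of the canonical eigenfunction $f_{\{k_n\lv\}}$.

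For $d(\mathcal{S})$: by Remark \ref{rem:-Sigma reg is-Jordan}, $\Sigma^{reg}$ is Jordan with $\mu_{\lv}(\Sigma^{reg})=1$, so equidistribution (via Lemma \ref{lem: equidistribution Jordan}) gives $d(\mathcal{S})=1$ immediately. For $d(\mathcal{L})$: Lemma \ref{lem: Ze} identifies $f_n$ supported on the loop $e$ with $\{k_n\lv\}\in\tilde Z_e$, hence $\mathcal{L}$ corresponds to $\Sigma_{\L}=\sqcup_{e\in\E_{loops}}\tilde Z_e$ (Corollary \ref{cor: Sigma L}). Each $\tilde Z_e$ is a union of connected components of $\Sigma^{reg}$ by Lemma \ref{lem: loop decomposition}(\ref{enu: loop secular manifold - union of factors}), so it is Jordan, and item (\ref{enu: loops factors measure}) of the same lemma gives $\mu_{\lv}(\tilde Z_e)=l_e/(2L)$. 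Summing delivers $d(\mathcal{L})=\tfrac{1}{2L}\sum_{e\in\E_{loops}}l_e$.

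The real work is $d(\mathcal{P}_I)$. Since loop-supported eigenfunctions vanish at the vertex of their loop, $\mathcal{P}_I\cap\mathcal{L}=\emptyset$, so $\mathcal{P}_I$ corresponds to
\[
A:=\set{\kv\in\Sigma_{\L}^{c}}{f_{\kv}(v)\ne0\,\,\forall v\in\V}.
\]
By Lemma \ref{lem: vertex values of canonical function}, each $f_{\kv}(v)^{2}$ is real analytic on $\Sigma^{reg}$, so the complement $B:=\Sigma_{\L}^{c}\setminus A$ is a finite union of zero sets of real analytic functions restricted to $\Sigma_{\L}^{c}$. The main obstacle is to rule out the possibility that some $f_{\kv}(v)^{2}$ vanishes identically on a whole connected component of $\Sigma_{\L}^{c}$ — if that can be excluded, then Lemma \ref{lem: real analytic lemma} forces $B$ to be of positive codimension on each component, hence $\mu_{\lv}(B)=0$ for every $\lv$.

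The exclusion is obtained by importing the Berkolaiko--Liu genericity result (Theorem \ref{thm: Genericity of Grisha}): its residual set $G$ meets the residual set of rationally independent edge lengths (Remark \ref{rem: rationally indepdnet}), so we may fix $\lv'\in G$ which is rationally independent. By Theorem \ref{thm: BG equidistribution} the orbit $\{k_n\lv'\}$ is dense in $\Sigma$. If $f_{\kv}(v)^{2}\equiv0$ on some component $M\subset\Sigma_{\L}^{c}$, density would produce an $n$ with $\{k_n\lv'\}\in M$, giving an eigenfunction $f_n$ that is simple, not supported on a loop, yet vanishes at $v$, contradicting $\lv'\in G$. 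Therefore $\mu_{\lv}(B)=0$, and $A$ is an open set in $\Sigma^{reg}$ whose topological boundary lies in $B\cup\partial\Sigma_{\L}\cup\Sigma^{sing}$, each of measure zero, so $A$ is Jordan with $\mu_{\lv}(A)=\mu_{\lv}(\Sigma_{\L}^{c})=1-d(\mathcal{L})$. Equidistribution then yields $d(\mathcal{P}_I)=1-d(\mathcal{L})$, and the bound $d(\mathcal{P}_I)\ge\tfrac12$ follows because every loop $e$ contributes $l_e\le L$ while appearing only once in the sum.
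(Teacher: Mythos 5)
Your proof is correct and uses the same machinery as the paper: pulling $\mathcal{S},\L,\mathcal{P}_I$ back to $\Sigma^{reg},\Sigma_{\L},\Sigma_I$ via Lemmas \ref{lem: Sigma reg and simple eigenvalues} and \ref{lem: vertex values canonical ef and other ef }, verifying Jordan-ness, and invoking Barra--Gaspard equidistribution. The one place where your packaging differs from the thesis is the crucial identity $\mu_{\lv}(\Sigma_I)=\mu_{\lv}(\Sigma_{\L}^c)$: you rule out directly that any $Z_v$ absorbs an entire connected component by picking $\lv'\in G$ rationally independent and hitting a hypothetical such component with the dense orbit $\{k_n\lv'\}$, which is exactly the contradiction argument the thesis uses elsewhere (e.g.\ in the proof of Theorem \ref{thm: Connected components no loops}); in contrast, the thesis's own route to this measure identity (inside Lemma \ref{lem: fat Zve}) establishes $d(\mathcal{P}_I)=1-d(\L)$ trivially for $\lv\in G$, equates with $\mu_{\lv}(\Sigma_I)$ for rationally independent such $\lv$, and then extends to all $\lv$ by continuity of $\mu_{\lv}$ in $\lv$. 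Your direct version is arguably cleaner since it localizes the argument to a single component and avoids the continuity-of-measure step. Two small cosmetic issues: when bounding $\partial A$ you write $\partial\Sigma_{\L}$ where you mean $\partial\Sigma_{\L}^c$ (both lie in $\Sigma^{sing}$, so no harm done), and your closing justification of $d(\mathcal{P}_I)\ge\tfrac12$ is phrased a bit loosely --- the bound is simply $\sum_{e\in\E_{loops}}l_e\le\sum_{e\in\E}l_e=L$ because $\E_{loops}\subset\E$.
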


We will now prove the following theorem that generalizes both \cite[Proposition A.1]{AloBanBer_cmp18}
and \cite[Theorem 3.6]{BerLiu_jmaa17} from $\mathcal{P}_{I}$ to
$\G$: 
\begin{thm}
\cite{Alon}\label{thm: density-of-generic-and-loop-eigenfunctions}
Given a graph $\Gamma$,
\begin{enumerate}
\item \label{enu: gen always}There is a residual set of edge lengths $G\subset\R_{+}^{\E}$
such that for any $\lv\in G$, 
\begin{equation}
\N=\mathcal{S}=\mathcal{G}\sqcup\mathcal{L}.\label{eq: residual N=00003DG scup L}
\end{equation}
That is, the spectrum of $\Gamma_{\lv}$ is simple, and every eigenfunction
is either supported on a loop or generic. 
\item \label{enu: gen a.s}If $\Gamma_{\lv}$ is a standard graph with $\lv$
rationally independent, then $\G$ has density, and it is given by
\begin{equation}
d\left(\G\right)=1-d\left(\mathcal{L}\right)=1-\frac{1}{2}\frac{\sum_{e\in\E_{loops}}l_{e}}{\sum_{e\in\E}l_{e}}.\label{eq: d(g)}
\end{equation}
Namely almost every eigenfunction is either supported on a loop or
generic.
\end{enumerate}
\end{thm}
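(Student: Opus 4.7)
The plan is to isolate Property II as the new ingredient (Property I and the simplicity of the spectrum are already covered by Theorems \ref{thm: Genericity of Grisha} and \ref{thm: weak genericity}) and reduce it to a co-dimension statement on the secular manifold. For each interior vertex $v\in\V_{in}$ and each edge $e\in\E_v$, define
\[
B_{v,e} := \set{\kv\in Z_0^{reg}}{\partial_{e}f_{\kv}(v) = 0}.
\]
By Lemma \ref{lem: vertex values for loops} the quantity $\left(\partial_{e}f_{\kv}(v)\right)^{2}$ extends to a real analytic function on $Z_0^{reg}$, so once we show this function is not identically zero, Lemma \ref{lem: real analytic lemma} together with the connectedness of $Z_0^{reg}$ established in Theorems \ref{thm: Connected components no loops} and \ref{thm: connected components with loops} will force $B_{v,e}$ to have positive co-dimension in $Z_0^{reg}$. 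Exactly parallel reasoning handles $B_{v}^{I}:=\set{\kv\in Z_0^{reg}}{f_{\kv}(v)=0}$; in fact the argument of \cite{BerLiu_jmaa17} already supplies this for Property I and we mirror it for Property II.

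The main obstacle is ruling out that $\left(\partial_{e}f_{\kv}(v)\right)^{2}$ vanishes identically on the connected manifold $Z_0^{reg}$. I expect to handle this by exhibiting, for each interior $(v,e)$, an explicit $\kv_{\ast}\in Z_0^{reg}$ at which $\partial_{e}f_{\kv_{\ast}}(v)\neq 0$, constructed on a minimal subgraph containing $e$ (a short cycle through $e$ when $e$ is not a bridge, or using the bridge decomposition of Proposition \ref{prop: simple bridge det decomposition} together with Lemma \ref{lem: Rev phase} when $e$ is a bridge). Alternatively, and more robustly, one can argue by contradiction: if $\partial_{e}f_{\kv}(v)\equiv 0$ on $Z_0^{reg}$, then for any rationally independent $\lv$ in the Berkolaiko-Liu residual set, density of $\{k_n\lv\}$ in $\Sigma_\L^{c}$ (Theorem \ref{thm: BG equidistribution}) forces $\partial_{e}f_{n}(v)=0$ for every non-loop eigenfunction $f_n$; combined with the Neumann condition at $v$ and the non-vanishing $f_n(v)\neq 0$ from Theorem \ref{thm: Genericity of Grisha}, this over-determines the matching conditions along the entire spectrum, which contradicts spectral genericity at the other endpoint of $e$.

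Granted the co-dimension claim, part (\ref{enu: gen a.s}) follows quickly. The regular points whose canonical eigenfunction is generic form
\[
\Sigma^{\G} := \Sigma_\L^{c} \setminus \Bigl(\bigcup_{v\in\V} B_{v}^{I} \cup \bigcup_{v\in\V_{in},\,e\in\E_v} B_{v,e}\Bigr),
\]
which is open and dense in $\Sigma_\L^{c}$, hence Jordan. By Lemma \ref{lem: loop decomposition} its Barra-Gaspard measure equals $\mu_{\lv}(\Sigma_\L^{c}) = 1 - \tfrac{1}{2L}\sum_{e\in\E_{loops}} l_e$. Applying Theorem \ref{thm: BG equidistribution} (noting that $n\in\G$ iff $\{k_n\lv\}\in\Sigma^{\G}$, via Lemma \ref{lem: vertex values canonical ef and other ef }) yields $d(\G) = 1 - d(\L)$ as required.

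For part (\ref{enu: gen always}), I would construct $G\subset\R_{+}^{\E}$ as the intersection of the Berkolaiko-Liu residual set with, for every pair $(v,e)$ with $v\in\V_{in}$ and $e\in\E_v$, the set of $\lv$ whose linear flow $k\mapsto\{k\lv\}$ is disjoint from $B_{v,e}$. Residuality of each such set follows from the positive co-dimension of $B_{v,e}$ in $Z_0^{reg}$ via the transversality argument already employed in the proof of Proposition \ref{prop: Secular manifold structure}: the complement is a countable union of nowhere-dense sets because a generic $\lv$ is not parallel to any of the countably many translates of $B_{v,e}$. Since $\V$ and $\E$ are finite, the intersection $G$ remains residual, and for any $\lv\in G$ the spectrum is simple and every eigenfunction is either supported on a single loop (in $\L$) or generic (in $\G$), giving $\N = \mathcal{S} = \G \sqcup \L$.
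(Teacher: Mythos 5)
Your overall strategy (isolate Property II, reduce to a positive co-dimension statement for the zero set of a real analytic function, then deduce Jordanness and residuality) is the same as the paper's, but there is a genuine gap in the step you treat as already settled, namely the connectedness input.

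You cite Theorems \ref{thm: Connected components no loops} and \ref{thm: connected components with loops} as establishing that $Z_0^{reg}$ is connected. That reading is wrong. Theorem \ref{thm: connected components with loops} gives connectedness of $Z_0^{reg}$ \emph{only} when the graph has at least one loop. Theorem \ref{thm: Connected components no loops} says that for a loop-free graph with a bridge, $\Sigma^{reg}$ (which equals $Z_0^{reg}$ in the loop-free case) has \emph{two} connected components, related by the bridge extension $\tau_e$; and for a loop-free graph with no bridges the paper proves nothing about the number of components. Because Lemma \ref{lem: real analytic lemma} gives the ``either identically zero or positive co-dimension'' dichotomy only component-by-component, the possibility that $\partial_e f_{\kv}(v)$ vanishes identically on one component of $\Sigma^{reg}$ while not vanishing on another is exactly the scenario you must exclude, and connectedness does not do it for you outside the loop case. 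For the bridge-no-loops case the paper handles this by showing that all the $Z_*$'s are $\tau_e$-invariant (Lemma \ref{lem: Re and tau e are measure preseving}), so containment of one component forces containment of both (Lemma \ref{lem: verte values subsets}, case 2); you never invoke $\tau_e$. For the no-loops-no-bridges case the paper uses a vertex-splitting reduction (Lemma \ref{lem: Zve constant}, parts \ref{enu: not deg 3} and \ref{enu: Zvu const - no loops no bridges}): split $v$ into a boundary vertex and an interior vertex to create a graph $\tilde{\Gamma}$ with a tail, and translate $\partial_e f_{\kv}(v)=0$ into $|f_{\kv}(v_1)|=|f_{\kv}(v_2)|$ on $\tilde{\Sigma}^{reg}$, which is then killed by the $\tau$-symmetry of the tail. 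This reduction is an essential new idea and your proposal contains nothing that substitutes for it; the suggested contradiction via ``over-determined matching conditions at the other endpoint'' is not a concrete argument and cannot be verified as stated.

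Two smaller points. First, ``open and dense hence Jordan'' is not a valid implication in general (think of the complement of a fat Cantor set); what you actually need is that the complement of $\Sigma_{\G}$ in $\Sigma_{\L}^{c}$ has Barra-Gaspard measure zero and contains its own boundary, which the paper extracts from the positive co-dimension conclusion of Lemma \ref{lem: verte values subsets}. Second, the witness that $\partial_e f_{\kv}(v)$ is not identically zero needs to be produced uniformly for every interior $(v,e)$ and, in the no-bridge case, also to certify $|f_{\kv}(v)| \ne |f_{\kv}(u)|$ for a boundary vertex $u$; the paper achieves this explicitly by contracting a spanning tree of $\Gamma\setminus\{e\}$ down to a stower and computing the canonical eigenfunction there (Lemmas \ref{lem: stower}, \ref{lem: contracting lemma-1}, \ref{lem: example for def(v) not zero}). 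Your ``short cycle through $e$'' sketch does not address the boundary-vertex comparison at all and is not developed enough to be checked.
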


\begin{rem}
\label{rem: Morse functions}In the context of Neumann domains on
manifolds, one needs to restrict the discussion to eigenfunctions
which are Morse functions as discussed in \cite{BanFaj_ahp16}. These
are eigenfunctions with full rank Hessian at every critical point.
On a standard graph, an eigenfunction $f$ is not Morse if there is
an edge $e$ with an interior point $x_{0}\in e$ such that $f'\left(x_{0}\right)=f''\left(x_{0}\right)=0$.
If $f$ is not the constant eigenfunction, then by Definition \ref{def: edge restriction notations},
$f|_{e}\equiv0$. Therefore, on a standard graph, Morse eigenfunctions
are non-constant eigenfunctions such that $f|_{e}\not\equiv0$ for
any edge $e$. We conclude that every generic eigenfunction is Morse
and every Morse eigenfunction is not supported on a loop. Therefore,
by Theorem \ref{thm: density-of-generic-and-loop-eigenfunctions},
almost every eigenfunction is either supported on a loop or both Morse
and generic.
\end{rem}

\subsection{Outline of the proof.}

The proof of Theorem \ref{thm: density-of-generic-and-loop-eigenfunctions}
has the following sequence of deductions:
\begin{enumerate}
\item We show that Theorem \ref{thm: density-of-generic-and-loop-eigenfunctions}
follows from: 
\begin{equation}
\forall\lv\in\left(\R_{+}\right)^{\E}\,\,\mu_{\lv}\left(\Sigma_{\G}\sqcup\Sigma_{\mathcal{L}}\right)=1.\label{eq: measure statement}
\end{equation}
\item We show that if (\ref{eq: measure statement}) fails for some graph
$\Gamma$, then there is an open set $O\subset\Sigma^{reg}$ on which
\begin{equation}
\exists v\in\V_{in},e\in\E_{v}\,\,\,s.t\,\,\forall\kv\in O,\,\,\partial_{e}f_{\kv}\left(v\right)=0\,\,\text{and }\,\,f_{\kv}\left(v\right)\ne0.\label{eq: O result}
\end{equation}
\item We show that if there exists such an open set $O$ and a pair $v,e$
as above, then 
\begin{enumerate}
\item In the case that $\Gamma$ has loops and\textbackslash or bridges,
every canonical eigenfunction would satisfy $\partial_{e}f_{\kv}\left(v\right)=0$.
\item In the case that $\Gamma$ has no loops and no bridges, we can construct
a different graph $\tilde{\Gamma}$ with interior vertex $u_{1}$
and boundary vertex $u_{2}$ such that every canonical eigenfunction
of $\tilde{\Gamma}$ would satisfy $\left|f_{\kv}\left(u_{1}\right)\right|=\left|f_{\kv}\left(u_{2}\right)\right|$.
\end{enumerate}
\item We contradict (a) by proving that for every graph $\Gamma$ and any
choice of $v\in\V_{in}$ and $e\in\E_{v}$ there exists $\kv\in\Sigma^{reg}$
for which $\partial_{e}f_{\kv}\left(v\right)\ne0$. We contradict
(b) by proving that if $\Gamma$ is a graph with boundary, then for
any boundary vertex $u_{1}$ and interior vertex $u_{2}$ there exists
$\kv\in\Sigma^{reg}$ for which $\left|f_{\kv}\left(u_{1}\right)\right|\ne\left|f_{\kv}\left(u_{1}\right)\right|$.
This proves that no such $O$ exists, which proves that (\ref{eq: measure statement})
is not false, which proves Theorem \ref{thm: density-of-generic-and-loop-eigenfunctions}.
\end{enumerate}
In order to prove Theorem \ref{thm: density-of-generic-and-loop-eigenfunctions},
we first need to relate all properties discussed above to subsets
of the secular manifolds, and discuss the properties of these subsets:

\subsection{Special subsets of the secular manifold }
\begin{defn}
\label{def: Secular subsets}Let $\Gamma$ be a graph and consider
the regular part of the secular manifold $\Sigma^{reg}$. We define
the following subsets of $\Sigma^{reg}$:
\begin{equation}
\Sigma_{I}:=\set{\kv\in\Sigma^{reg}}{f_{\kv}\,\,\text{satisfy property I}},
\end{equation}
and similarly $\Sigma_{II}$ with property II. We define the generic
part by:
\begin{equation}
\Sigma_{\G}:=\set{\kv\in\Sigma^{reg}}{f_{\kv}\,\,\text{is generic}},\label{eq: Sigma g}
\end{equation}
so that $\Sigma_{\G}=\Sigma_{I}\cap\Sigma_{II}$. 
\end{defn}

\begin{rem}
\label{rem: Sigma L is a union of cc} Recall that by Lemma \ref{lem: loop decomposition},
if a graph has loops, then
\[
\Sigma^{reg}=\Sigma_{\L}\sqcup\Sigma_{\L}^{c},
\]
where 
\begin{align*}
\Sigma_{\L} & :=\set{\kv\in\Sigma^{reg}}{f_{\kv}\,\,\text{is supported on a loop}},\,\,\text{and}\\
\Sigma_{\L}^{c} & =Z_{0}^{reg}\cap\Sigma^{reg}.
\end{align*}
See Definition \ref{def: Zo} for $Z_{0}^{reg}$. By their definitions,
both $\Sigma_{I}$ and $\Sigma_{II}$ are subsets of $\Sigma_{\L}^{c}$
and so does their intersection $\Sigma_{\G}$.

According to Lemma \ref{lem: loop decomposition} both $\Sigma_{\L}$
and $\Sigma_{\L}^{c}$ are unions of connected components of $\Sigma^{reg}$.
As such, they are both closed and open in $\Sigma^{reg}$ (since its
a manifold) and are Jordan according to Remark \ref{rem:-Sigma reg is-Jordan}.
\end{rem}

\begin{rem}
If a graph has no loops we say that $\Sigma_{\L}=\emptyset$ and $\Sigma_{\L}^{c}=\Sigma^{reg}$.
In fact, the graph has no loops if and only if $\Sigma_{\L}=\emptyset$,
as Lemma \ref{lem: loop decomposition} says that for a graph with
at least one loop $\Sigma_{\L}$ has positive measure. 
\end{rem}

The index sets Venn diagram can be used for these sets as well, as
presented in Figure \ref{fig: Venn diagram integers-1}. 
\begin{figure}
\includegraphics[width=0.4\paperwidth]{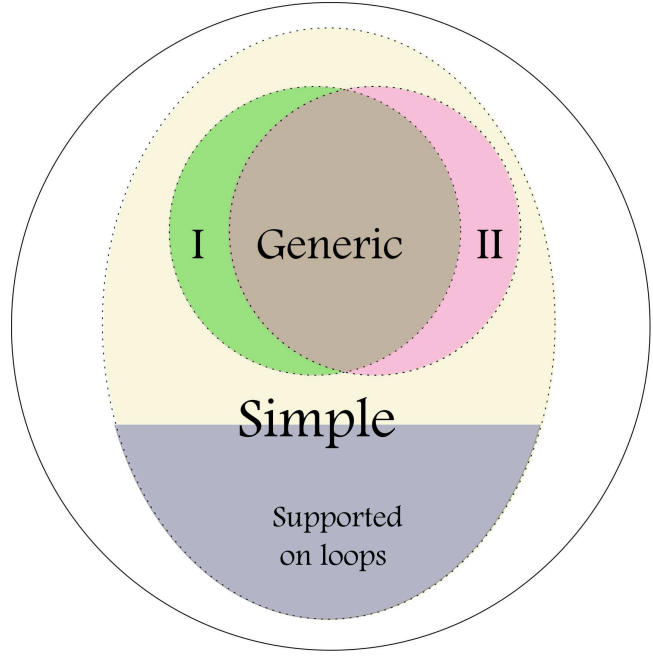}\caption[Diagram of secular manifold subsets]{\label{fig: Venn diagram integers-1}A Venn diagram of the partition
of $\Sigma$ to subsets. Where $I$ and $II$ stands for $\Sigma_{I}$
and $\Sigma_{II}$. `Simple' stands for $\Sigma^{reg}$ and `Generic'
stands for $\Sigma_{\protect\G}$. `Supported on loops' stands for
$\Sigma_{\protect\L}$. The outer white region stands for $\Sigma^{sing}$.}
\end{figure}

The relation to the index sets in Definition \ref{def: index sets}
is given by:
\begin{lem}
\label{lem: index sets and secular sets}Given a standard graph $\Gamma_{\lv}$
with (square-root) eigenvalues and eigenfunctions $\left\{ k_{n}\right\} _{n=0}^{\infty}$
and $\left\{ f_{n}\right\} _{n=0}^{\infty}$, then 
\begin{align}
\mathcal{S} & =\set{n\in\N}{\left\{ k_{n}\lv\right\} \in\Sigma^{reg}}.\\
\mathcal{P}_{I} & :=\set{n\in\N}{\left\{ k_{n}\lv\right\} \in\Sigma_{I}}.\\
\mathcal{P}_{II} & :=\set{n\in\N}{\left\{ k_{n}\lv\right\} \in\Sigma_{II}}.\\
\mathcal{G} & :=\set{n\in\N}{\left\{ k_{n}\lv\right\} \in\Sigma_{\mathcal{G}}}.\\
\L & :=\set{n\in\N}{\left\{ k_{n}\lv\right\} \in\Sigma_{\mathcal{L}}}.
\end{align}
\end{lem}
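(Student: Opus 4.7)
The plan is to derive all five set equalities as direct bookkeeping consequences of two prior results: Lemma~\ref{lem: Sigma reg and simple eigenvalues}, which matches positive simple eigenvalues of $\Gamma_{\lv}$ with points of $\Sigma^{reg}$ via the map $k\mapsto\{k\lv\}$, and Lemma~\ref{lem: vertex values canonical ef and other ef }, which identifies the amplitudes vector (and hence the trace up to a global sign and a factor of $k_n$ on derivatives) of a normalized real eigenfunction $f_n$ with that of the canonical eigenfunction $f_{\{k_n\lv\}}$ whenever $k_n^2>0$ is simple. Given these two inputs, the lemma is essentially a translation between the spectral picture on $\Gamma_{\lv}$ and the torus picture on $\Sigma$.

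First I would establish $\mathcal{S} = \{n : \{k_n\lv\}\in\Sigma^{reg}\}$. For each $n\geq 1$ we have $k_n>0$, so Lemma~\ref{lem: Sigma reg and simple eigenvalues} gives at once that $k_n^2$ is a simple eigenvalue of $\Gamma_{\lv}$ iff $\{k_n\lv\}\in\Sigma^{reg}$. The $n=0$ case (constant eigenfunction with $k_0=0$) is a simple eigenvalue by convention and should be handled separately by direct inspection; note that it plays no role in the density statements used later in Theorem~\ref{thm: density-of-generic-and-loop-eigenfunctions}.

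Next, fix $n\in\mathcal{S}$ with $n\geq 1$ and set $\kv_n:=\{k_n\lv\}\in\Sigma^{reg}$. By Lemma~\ref{lem: vertex values canonical ef and other ef }, the normalized real eigenfunction $f_n$ and the canonical eigenfunction $f_{\kv_n}$ share the same amplitudes vector, and their traces satisfy (up to a single global sign) $f_{\kv_n}(v)=f_n(v)$ and $\partial_e f_{\kv_n}(v)=\partial_e f_n(v)/k_n$ for every $v\in\V$ and $e\in\E_v$. Since a global sign and the positive factor $1/k_n$ preserve the property of being nonzero, the vanishing patterns of the vertex values and of the outgoing derivatives at interior vertices transfer one-to-one between $f_n$ and $f_{\kv_n}$. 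This immediately gives the identities for $\mathcal{P}_I$ and $\mathcal{P}_{II}$, and combining them (together with $\Sigma_{\G}=\Sigma_I\cap\Sigma_{II}$ and $\G=\mathcal{P}_I\cap\mathcal{P}_{II}$) yields the identity for $\G$.

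For $\mathcal{L}$ I would invoke the explicit description from Lemma~\ref{lem: Ze}: an eigenfunction is supported on a loop $e$ iff its amplitudes vector has the anti-symmetric pattern $a_e=-a_{\hat e}=1/\sqrt{2}$ with zeros elsewhere. This is a property of the amplitudes vector alone, and hence it transfers directly between $f_n$ and $f_{\kv_n}$, giving $n\in\mathcal{L}\iff\kv_n\in\Sigma_{\mathcal{L}}$. I do not expect any genuine obstacle: the lemma is a packaging of already-proved facts, and the only care required is the sign/scale ambiguity in the identification $f_n\leftrightarrow f_{\kv_n}$ and the isolated $n=0$ bookkeeping, both of which are routine.
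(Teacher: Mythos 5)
Your proposal is correct and takes essentially the same route as the paper: Lemma~\ref{lem: Sigma reg and simple eigenvalues} handles $\mathcal{S}$, and Lemma~\ref{lem: vertex values canonical ef and other ef } transfers the vanishing patterns of vertex values and outgoing derivatives between $f_n$ and $f_{\kv_n}$, giving $\mathcal{P}_I$, $\mathcal{P}_{II}$, $\G$, and (via the amplitudes-vector characterization of loop eigenfunctions) $\L$. The paper dismisses $\mathcal{S}$ and $\L$ as ``straightforward from their definitions,'' so your explicit appeal to Lemma~\ref{lem: Ze} and your flagging of the $n=0$ index (where $\{k_0\lv\}=\{0\}$ lands in $\Sigma^{sing}$ whenever $\beta\geq 1$, a genuine but inconsequential edge case the paper glosses over) is slightly more careful, but it is not a different argument.
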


\begin{proof}
The characterizations of $\mathcal{S}$ in terms of $\Sigma^{reg}$
and $\L$ in terms of $\Sigma_{\L}$ are straightforward from their
definitions. According to Lemma \ref{lem: vertex values canonical ef and other ef },
an eigenfunction $f_{n}$ of a simple eigenvalue $k_{n}$ satisfies
property I or II if and $f_{\kv}$ satisfies property I or II, for
$\kv=\left\{ k_{n}\lv\right\} $. The relations between $\mathcal{P}_{I},\,\mathcal{P}_{II}$
and $\G$ to $\Sigma_{I},\,\Sigma_{II}$ and $\Sigma_{\G}$ follows.
\end{proof}
\begin{defn}
\label{def: vertex values subsets}We define the following zero sets
on $\Sigma_{\L}^{c}$: 
\begin{align}
\forall v\in\V\,\,\,\,\,\,Z_{v}:= & \set{\kv\in\Sigma_{\L}^{c}}{f_{\kv}\left(v\right)=0},\\
\forall v\in\V_{in},\,\forall e\in\E_{v}\,\,\,\,\,Z_{v,e}:= & \set{\kv\in\Sigma_{\L}^{c}}{\partial_{e}f_{\kv}\left(v\right)=0},\,\,\,\text{and}\\
\forall v,u\in\V\,\,\,\,\,Z_{v,u}:= & \set{\kv\in\Sigma_{\L}^{c}}{f_{\kv}\left(v\right)^{2}-f_{\kv}\left(u\right)^{2}=0}.
\end{align}
We will refer to these subsets as the \emph{$Z_{*}$}'s.
\end{defn}

\begin{rem}
\label{rem: Sigma g complement}Observe that $\Sigma_{\G}$ is the
complement (in $\Sigma_{\L}^{reg}$) of the union: 
\begin{equation}
\Sigma_{\L}\cup_{v\in\V_{in},e\in\E_{v}}Z_{v,e}\cup_{u\in\V}Z_{u}.
\end{equation}
\end{rem}

Since $\Sigma_{\L}^{c}$ is a real analytic manifold (as a union of
connected components of $\Sigma^{reg}$) and each of the \emph{$Z_{*}$}'s
subsets is the zero set of either $f_{\kv}\left(v\right)^{2},\,\,\partial_{e}f_{\kv}\left(v\right)^{2}$
or $f_{\kv}\left(v\right)^{2}-f_{\kv}\left(u\right)^{2}$ which are
real analytic according to (\ref{lem: vertex values canonical ef and other ef }),
then we can apply Lemma \ref{lem: real analytic lemma} for zero sets
of real analytic functions and conclude that:
\begin{lem}
\label{lem: verte values subsets}Let $A\subset\Sigma_{\L}^{c}$ be
a finite union of $Z_{*}$'s as in Definition \ref{def: vertex values subsets},
and denote its complement by $A^{c}=\Sigma_{\L}^{c}\setminus A$.
Then $A^{c}$ is an open Jordan subset of $\Sigma$. Moreover, if
we assume that $A$ has zero co-dimension in $\Sigma_{\L}^{c}$, then 
\begin{enumerate}
\item If $\Gamma$ has at least one loop, then $A=\Sigma_{\L}^{c}$.
\item If $\Gamma$ has no loops but has at least one bridge (including tail)
then $A=\Sigma^{reg}=\Sigma_{\L}^{c}$.
\item If $\Gamma$ has no loops and no bridges, then $A$ contains every
connected component of $\Sigma^{reg}$ on which it has zero co-dimension. 
\end{enumerate}
\end{lem}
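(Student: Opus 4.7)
The strategy is to exploit that each subset $Z_*$ in Definition \ref{def: vertex values subsets} is the zero set on the real analytic manifold $\Sigma_{\L}^c$ of a real analytic function, specifically $f_{\kv}(v)^2$, $(\partial_e f_{\kv}(v))^2$, or $f_{\kv}(v)^2 - f_{\kv}(u)^2$, all of which are real analytic on $\Sigma^{reg}$ by Lemma \ref{lem: vertex values of canonical function}. I will then run Lemma \ref{lem: real analytic lemma} component-by-component on $\Sigma_{\L}^c$. To establish that $A^c$ is open in $\Sigma$, I combine two facts: $A$ is closed in $\Sigma_{\L}^c$ as a finite union of closed zero sets, and $\Sigma_{\L}^c$ is a union of connected components of $\Sigma^{reg}$ (Remark \ref{rem: Sigma L is a union of cc}), which is open in $\Sigma$. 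For the Jordan property, I enumerate the connected components $\{M_j\}$ of $\Sigma_{\L}^c$ and note that by Lemma \ref{lem: real analytic lemma}, on each $M_j$ and for each $Z_*$, either $Z_* \cap M_j = M_j$ or $Z_* \cap M_j$ has positive codimension in $M_j$. Consequently $A \cap M_j$ is either $M_j$ itself (in which case $M_j$ is open in $\Sigma^{reg}$, disjoint from $A^c$, and contributes nothing to $\partial A^c$) or a finite union of positive-codimension analytic varieties (hence $\mu_{\lv}$-null). Together with $\mu_{\lv}(\Sigma^{sing}) = 0$ and the observation that $\Sigma_{\L}$ is also a union of components of $\Sigma^{reg}$ (so $\partial A^c$ has no accumulation in $\Sigma_{\L}$), this gives $\mu_{\lv}(\partial A^c) = 0$.

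Next I address the three statements under the zero-codimension assumption. Zero codimension forces $A \cap M_j$ to have full dimension $E-1$ on some component $M_j$; since a finite union of positive-codimension real analytic varieties still has positive codimension, there must exist an index $i$ with $Z_i \cap M_j = M_j$, i.e., $h_i \equiv 0$ on $M_j$, where $h_i$ is the analytic function cutting out $Z_i$. For Case 1 (graphs with a loop), I invoke Lemma \ref{lem: vertex values for loops} to extend $h_i$ analytically to the connected manifold $Z_0^{reg}$ (connectedness by Theorem \ref{thm: connected components with loops}). Since $\Sigma_{\L}^c = Z_0^{reg} \cap \Sigma^{reg}$ is open and dense in $Z_0^{reg}$ (its complement there lies in $\bigcup_e Z_e$, of positive codimension in $Z_0^{reg}$), vanishing of $h_i$ on the nonempty open subset $M_j$ forces $h_i \equiv 0$ on all of $Z_0^{reg}$ by Lemma \ref{lem: real analytic lemma}, so $Z_i \supset \Sigma_{\L}^c$ and $A = \Sigma_{\L}^c$. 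For Case 2 (no loops, bridge $e$), Theorem \ref{thm: Connected components no loops} gives $\Sigma^{reg} = \Sigma_{\L}^c$ as a disjoint union of exactly two components $M_1 \sqcup M_2$ with $M_2 = \tau_e(M_1)$, and (\ref{eq:f for tau_e}) gives $f_{\tau_e(\kv)}(v)^2 = f_{\kv}(v)^2$ and $(\partial_e f_{\tau_e(\kv)}(v))^2 = (\partial_e f_{\kv}(v))^2$ (signs square away), so $h_i \circ \tau_e = h_i$ for all three forms of $Z_*$. Hence $h_i \equiv 0$ on $\tau_e(M_j)$ as well, giving $Z_i = \Sigma^{reg}$ and $A = \Sigma^{reg}$. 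Case 3 (no loops, no bridges) is a direct restatement of the identity-principle dichotomy: on every component where $A$ has zero codimension, $A$ equals that component, but without a bridge isometry or a larger ambient analytic manifold we cannot propagate further.

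The main obstacle I expect is keeping track of the correct ambient manifold on which the real analytic identity principle is applied in each case. In Case 1 this is $Z_0^{reg}$ rather than $\Sigma^{reg}$, and the crucial input is Lemma \ref{lem: vertex values for loops}, which provides the analytic continuation of the trace-squared functions across the loop factors $\tilde{Z}_e$. In Case 2 the $\tau_e$-invariance depends precisely on the $Z_*$'s being cut out by squared (hence sign-insensitive) trace quantities, which matches the global sign ambiguity in $f_{\kv}$ from Definition \ref{def: canonical e.f}; verifying this symmetry for all three types of $Z_*$ is the subtle computation that must be carried out carefully.
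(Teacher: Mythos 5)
Your proposal is correct and follows essentially the same route as the paper: openness via $A$ closed in the open set $\Sigma_{\L}^c$, the Jordan property via component-wise application of the real-analytic identity principle (Lemma \ref{lem: real analytic lemma}) with positive codimension implying $\mu_{\lv}$-nullity, Case 1 via analytic continuation to the connected manifold $Z_0^{reg}$ (Lemma \ref{lem: vertex values for loops} and Theorem \ref{thm: connected components with loops}), and Case 2 via the $\tau_e$-invariance of the squared trace quantities propagating the vanishing across the two components. The only cosmetic difference is that you locate a single $h_i$ vanishing on $M_j$ by a pigeonhole argument, whereas the paper takes the product $h = \prod_j h_j$ and applies the identity principle once to it; the two are interchangeable.
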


\begin{proof}
Let $A=\cup_{j=1}^{N}Z_{h_{j}}$ where each $h_{j}$ is either $f_{\kv}\left(v\right)^{2},\,\,\partial_{e}f_{\kv}\left(v\right)^{2}$
or $f_{\kv}\left(v\right)^{2}-f_{\kv}\left(u\right)^{2}$ for some
$v,u\in\V$ and $e\in\E_{v}$ and $Z_{h_{j}}$ is the zero set of
$h_{j}$ in $\Sigma_{\L}^{c}$. Let $h\left(\kv\right)=\Pi_{j=1}^{N}h_{j}\left(\kv\right)$
so that its zero set $Z_{h}$ is equal to $A$. According to Lemma
(\ref{lem: vertex values canonical ef and other ef }), each $h_{j}$
is real analytic and therefore $h$ is real analytic. If $\Gamma$
has loops, then according to Lemma \ref{lem: vertex values for loops}
each $h_{j}$ can extended to a real analytic function $\tilde{h}_{j}$
on $Z_{0}^{reg}$. Therefore $\tilde{h}$, the product of theses $\tilde{h}_{j}$'s,
is a real analytic extension of $h$ from $\Sigma_{\L}^{c}$ to $Z_{0}^{reg}$.
Since $Z_{0}^{reg}$ is connected by Theorem \ref{thm: connected components with loops},
then we may use Lemma \ref{lem: real analytic lemma} and the assumption
that $Z_{h}$ has zero co-dimension in $\Sigma_{\L}^{c}$ to conclude
that $\tilde{h}$ vanish on all of $Z_{0}^{reg}$ and so $Z_{h}=\Sigma_{\L}^{c}$. 

Now assume that $\Gamma$ has no loops, namely $\Sigma^{reg}=\Sigma_{\L}^{c}$,
and let $M$ be a connected component of $\Sigma^{reg}$ such that
$Z_{h}\cap M$ has zero co-dimension in $M$. Such $M$ exists by the assumption
that $Z_{h}$ has zero co-dimension in $\Sigma_{\L}^{c}$. Since $M$
is a connected real analytic manifold and $h$ is a real analytic
function on $M$ whose zero set is $Z_{h}\cap M$, then by Lemma \ref{lem: real analytic lemma}
$M\subset Z_{h}$. 

If $\Gamma$ has a bridge $e'$ and $\tau_{e'}$ is the bridge extension
(see Definition \ref{def: Tau_=00007Be=00007D}), then according to
Theorem \ref{thm: Connected components no loops}, $\Sigma^{reg}=M\sqcup\tau_{e'}\left(M\right)$.
By Lemma \ref{lem: Re and tau e are measure preseving}, the functions
$f_{\kv}\left(v\right)^{2},\,\,\partial_{e}f_{\kv}\left(v\right)^{2}$
and $f_{\kv}\left(v\right)^{2}-f_{\kv}\left(u\right)^{2}$ are $\tau_{e'}$
invariant, and therefore so does every $h_{j}$. It follows that $h$
is $\tau_{e'}$ invariant and so $\tau_{e'}\left(Z_{h}\right)=Z_{h}$.
As we showed that $M\subset Z_{h}$, then so does $\tau_{e'}\left(M\right)\subset Z_{h}$
which means that $\Sigma^{reg}=Z_{h}$.

It is left to prove that $A^{c}$ is an open Jordan subset of $\Sigma$
for both cases where $A$ for both cases where $A$ has zero or positive
co-dimension in $\Sigma_{\L}^{c}$. Since $A$ is the zero set of
$h$, then it is closed and therefore $A^{c}=\Sigma_{\L}^{c}\setminus A$
is open and its boundary is given by $\partial A^{c}\subset\partial A\cup\partial\Sigma_{\L}^{c}$.
Since $\Sigma_{\L}^{c}$ is a union of connected components of $\Sigma^{reg}$
then $\partial\Sigma_{\L}^{c}\subset\Sigma^{sing}$ and is therefore
of measure zero for any $\mu_{\lv}$. It is thus left to prove that
$\partial A$ is of measure zero. If $A$ has positive co-dimension
in $\Sigma_{\L}^{c}$, then $\partial A$ also has positive co-dimension
(since $A$ was closed in $\Sigma^{reg}$) and has measure zero for
any $\mu_{\lv}$. Assume that $A$ has zero co-dimension, and let
$M$ be a connected component of $\Sigma_{\L}^{c}$. If $A\cap M$
has positive co-dimension in $M$, then $\partial A\cap M$ has measure
zero by the same argument as above. In the case that $A\cap M$ has
zero co-dimension in $M$, then we have showed that $M\subset A$ in
which case $\partial A\cap M=\emptyset$. Since $\Sigma_{\L}^{c}$
has at most a countable number of connected components\footnote{This is a property of real analytic manifolds in general, but in our
case one can simply consider a rationally independent $\lv$ so that
$\left\{ \left\{ k_{n}\lv\right\} \right\} _{n\in\N}$ is dense in
$\Sigma_{\L}^{c}$ by Theorem \ref{thm: BG equidistribution}, and
so each connected component contains at least one point $\left\{ k_{n}\lv\right\} $. }, each intersecting $\partial A$ in a measure zero set, then $\partial A\cap\Sigma_{\L}^{c}$
is of measure zero. As $\partial A\subset\Sigma_{\L}^{c}\cup\Sigma^{sing}$
then $\partial A$ is of measure zero and we are done.
\end{proof}
A similar result holds for the complements of the $\Sigma^{reg}$ subsets
 in Definition \ref{def: Secular subsets}:
\begin{cor}
\label{cor: key lemma in equivalence of genericity} The subsets $\Sigma_{I},\,\Sigma_{II}$
and $\,\Sigma_{\G}$, are open and Jordan in $\Sigma$.
\end{cor}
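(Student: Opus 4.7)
\medskip

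The plan is to reduce the corollary directly to Lemma \ref{lem: verte values subsets}, which already delivers openness and the Jordan property for complements in $\Sigma_{\L}^{c}$ of finite unions of the $Z_{*}$'s. The only preliminary work is to recognize $\Sigma_{I}$, $\Sigma_{II}$, and $\Sigma_{\G}$ as sets of this exact form.

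First I would verify that $\Sigma_{I},\Sigma_{II}\subset\Sigma_{\L}^{c}$, so that restricting the Remark \ref{rem: Sigma g complement} description to $\Sigma_{\L}^{c}$ is lossless. Let $\kv\in\Sigma_{\L}$; by Corollary \ref{cor: Sigma L} there is a loop $e\in\E_{loops}$ with $\kv\in\tilde{Z}_{e}$, and Lemma \ref{lem: Ze}(\ref{enu: Ze 3 - explicit fk}) then gives $f_{\kv}|_{e}(x_{e})=\tfrac{1}{\sqrt{2}}\sin(x_{e})$ while $f_{\kv}$ vanishes identically on every other edge. Writing $v$ for the loop vertex, we get $f_{\kv}(v)=0$, so Property $I$ fails and $\kv\notin\Sigma_{I}$. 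Moreover, because the standing assumption excludes degree-two vertices, $\deg v\ge 3$, so $v\in\V_{in}$ and there exists a non-loop edge $e'\in\E_{v}$ on which $f_{\kv}|_{e'}\equiv 0$; in particular $\partial_{e'}f_{\kv}(v)=0$, so Property $II$ also fails and $\kv\notin\Sigma_{II}$.

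With the inclusions $\Sigma_{I},\Sigma_{II}\subset\Sigma_{\L}^{c}$ established, the definitions in Definition \ref{def: Secular subsets} and Definition \ref{def: vertex values subsets} unpack into
\begin{align}
\Sigma_{I} &= \Sigma_{\L}^{c}\setminus\bigcup_{v\in\V}Z_{v},\\
\Sigma_{II} &= \Sigma_{\L}^{c}\setminus\bigcup_{v\in\V_{in},\,e\in\E_{v}}Z_{v,e},\\
\Sigma_{\G} &= \Sigma_{I}\cap\Sigma_{II} = \Sigma_{\L}^{c}\setminus\Bigl(\bigcup_{v\in\V}Z_{v}\cup\bigcup_{v\in\V_{in},\,e\in\E_{v}}Z_{v,e}\Bigr).
\end{align}
In each line the excluded set is a finite union of $Z_{*}$'s, since $\V$ and $\E$ are finite.

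Applying Lemma \ref{lem: verte values subsets} to each of these three finite unions immediately yields that the corresponding complement $A^{c}=\Sigma_{\L}^{c}\setminus A$ is an open Jordan subset of $\Sigma$. This gives the conclusion for $\Sigma_{I}$, $\Sigma_{II}$, and $\Sigma_{\G}$ simultaneously. No serious obstacle is anticipated; the entire content of the corollary is packaged in Lemma \ref{lem: verte values subsets}, and the only nontrivial check is the disjointness $\Sigma_{\L}\cap(\Sigma_{I}\cup\Sigma_{II})=\emptyset$, which hinges crucially on the no-degree-two-vertex convention to force the loop vertex to carry an additional edge on which the derivative vanishes.
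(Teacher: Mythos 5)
Your proof is correct and takes essentially the same route as the paper: recognize the complements of $\Sigma_{I},\Sigma_{II},\Sigma_{\G}$ inside $\Sigma_{\L}^{c}$ as finite unions of $Z_{*}$'s and invoke Lemma \ref{lem: verte values subsets}. The only addition is that you spell out the inclusion $\Sigma_{I},\Sigma_{II}\subset\Sigma_{\L}^{c}$, which the paper states briefly in Remark \ref{rem: Sigma L is a union of cc}; your verification is sound (with the minor cosmetic note that the extra edge $e'\in\E_{v}$ need not be a non-loop edge---a second loop would equally give $\partial_{e'}f_{\kv}(v)=0$).
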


\begin{proof}
Denote the complements of the above sets in $\Sigma_{\L}^{c}$ by
$\Sigma_{I}^{c},\,\Sigma_{II}^{c}$ and $\Sigma_{\G}^{c}$. By their
definition,
\begin{align*}
\Sigma_{I}^{c} & =\cup_{v\in\V}Z_{v},\\
\Sigma_{II}^{c} & =\cup_{v\in\V_{in}}\cup_{e\in\E_{v}}Z_{v,e},\,\,\text{and}\\
\Sigma_{\G}^{c} & =\Sigma_{II}^{c}\cup\Sigma_{I}^{c}.
\end{align*}
The result now follows from Lemma \ref{lem: verte values subsets}.
\end{proof}
The following is a corollary of the above together with Lemma \ref{lem: index sets and secular sets}
and Theorem \ref{thm: BG equidistribution}:
\begin{cor}
\label{cor: irrational gives densities} If $\Gamma_{\lv}$ is a standard
graph with $\lv$ rationally independent, then the index sets $\mathcal{S},\,\L,\,\mathcal{P}_{I},\,\mathcal{P}_{II}$
and $\G$ have densities, and they are given by 
\begin{align}
d\left(\mathcal{S}\right) & =\mu_{\lv}\left(\Sigma^{reg}\right)=1,\\
d\left(\L\right) & =\mu_{\lv}\left(\Sigma_{\L}\right),\\
d\left(\mathcal{P}_{I}\right) & =\mu_{\lv}\left(\Sigma_{I}\right),\\
d\left(\mathcal{P}_{II}\right) & =\mu_{\lv}\left(\Sigma_{II}\right),\,\,\text{and}\\
d\left(\G\right) & =\mu_{\lv}\left(\Sigma_{\G}\right).
\end{align}
\end{cor}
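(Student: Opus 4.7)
The proof is essentially an application of the Barra--Gaspard equidistribution theorem to each of the five relevant subsets of $\Sigma$. The plan is to establish three ingredients and then combine them.

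First, I would invoke Lemma \ref{lem: index sets and secular sets}, which re-expresses each of the index sets $\mathcal{S}, \mathcal{L}, \mathcal{P}_I, \mathcal{P}_{II}, \mathcal{G}$ as the preimage $\{n \in \N : \{k_n \lv\} \in A\}$ for the corresponding subset $A \in \{\Sigma^{reg}, \Sigma_{\mathcal{L}}, \Sigma_I, \Sigma_{II}, \Sigma_{\mathcal{G}}\}$ of the secular manifold. This reduces the existence-of-densities question entirely to the geometry of $\Sigma$.

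Second, I would check that each of these five subsets is Jordan with respect to $\mu_{\lv}$. The set $\Sigma^{reg}$ is Jordan by Remark \ref{rem:-Sigma reg is-Jordan}, since its boundary in $\Sigma$ lies in $\Sigma^{sing}$, which has $\mu_{\lv}$-measure zero by the definition of the Barra--Gaspard measure. The set $\Sigma_{\mathcal{L}}$ (and hence its complement $\Sigma_{\mathcal{L}}^c$) is Jordan by Remark \ref{rem: Sigma L is a union of cc}, because it is a union of connected components of $\Sigma^{reg}$, so its topological boundary in $\Sigma$ is again contained in $\Sigma^{sing}$. Finally, $\Sigma_I$, $\Sigma_{II}$, and $\Sigma_{\mathcal{G}} = \Sigma_I \cap \Sigma_{II}$ are open and Jordan in $\Sigma$ by Corollary \ref{cor: key lemma in equivalence of genericity}.

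Third, since $\lv$ is rationally independent by hypothesis, Theorem \ref{thm: BG equidistribution} applies: the sequence $\{\{k_n \lv\}\}_{n=0}^\infty$ is equidistributed on $\Sigma$ with respect to $\mu_{\lv}$, and for every Jordan set $A \subset \Sigma$ the density $d(\{n \in \N : \{k_n \lv\} \in A\})$ exists and equals $\mu_{\lv}(A)$. Applying this to the five Jordan sets identified above yields the five identities. The normalization $\mu_{\lv}(\Sigma^{reg}) = 1$ follows immediately from $\mu_{\lv}$ being a probability measure (Definition \ref{def: BG measure}) together with $\mu_{\lv}(\Sigma^{sing}) = 0$.

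There is no serious obstacle here, since all the work has been done in the preceding lemmas. The only point that requires a brief verbal argument is the Jordan property for $\Sigma_I, \Sigma_{II}, \Sigma_{\mathcal{G}}$; the substantive content of that property is packaged in Lemma \ref{lem: verte values subsets}, which handles both the positive-codimension case (boundary lies in a lower-dimensional analytic variety) and the zero-codimension case (forcing the relevant component to lie entirely inside the vanishing set, so its topological boundary is empty within that component). Thus the proof reduces to citing Lemma \ref{lem: index sets and secular sets}, Remarks \ref{rem:-Sigma reg is-Jordan} and \ref{rem: Sigma L is a union of cc}, Corollary \ref{cor: key lemma in equivalence of genericity}, and Theorem \ref{thm: BG equidistribution} in sequence.
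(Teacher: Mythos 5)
Your proposal is correct and follows essentially the same approach as the paper: cite Lemma \ref{lem: index sets and secular sets} to identify each index set with the preimage of a secular-manifold subset, verify Jordan-ness via Remarks \ref{rem:-Sigma reg is-Jordan} and \ref{rem: Sigma L is a union of cc} and Corollary \ref{cor: key lemma in equivalence of genericity}, then apply the Barra--Gaspard equidistribution theorem (Theorem \ref{thm: BG equidistribution}).
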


\begin{proof}
In every equality above, the RHS is the BG-measure of a set $\Sigma_{*}$
that was shown to be Jordan in Corollary \ref{cor: key lemma in equivalence of genericity}
or Remark \ref{rem: Sigma L is a union of cc}, and the LHS is the
density of an index set $\N_{*}=\set{n\in\N}{\left\{ k_{n}\lv\right\} \in\Sigma_{*}}$
according to lemma \ref{lem: index sets and secular sets}. As $\lv$
is rationally independent, then according to Theorem \ref{thm: BG equidistribution},
$d\left(\N_{*}\right)=\mu_{\lv}\left(\Sigma_{*}\right)$, which concludes
the proof.
\end{proof}

\subsection{\label{subsec: proofs part}Proof of Theorem \ref{thm: density-of-generic-and-loop-eigenfunctions}.}

As discussed in the outline, the first deduction is the following
lemma:
\begin{lem}
\label{lem: proves Theorem}The statement 
\begin{equation}
\forall\lv\in\left(\R_{+}\right)^{\E}\,\,\,\mu_{\lv}\left(\Sigma_{\G}\sqcup\Sigma_{\mathcal{L}}\right)=1,\label{eq: measures equalitiy}
\end{equation}
implies Theorem \ref{thm: density-of-generic-and-loop-eigenfunctions}. 
\end{lem}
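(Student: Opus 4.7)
I would prove the two items of Theorem \ref{thm: density-of-generic-and-loop-eigenfunctions} separately from the hypothesis $(\star):\mu_{\lv}(\Sigma_{\G}\sqcup\Sigma_{\L})=1$. Part (\ref{enu: gen a.s}) is immediate: for rationally independent $\lv$, Corollary \ref{cor: irrational gives densities} identifies $d(\G)=\mu_{\lv}(\Sigma_{\G})$ and $d(\L)=\mu_{\lv}(\Sigma_{\L})$, while Definition \ref{def: Secular subsets} gives $\Sigma_{\G}\subset\Sigma_{\L}^{c}$, so the two sets are disjoint. Hypothesis $(\star)$ then forces $d(\G)+d(\L)=1$, and substituting $\mu_{\lv}(\Sigma_{\L})=\sum_{e\in\E_{loops}}\frac{l_{e}}{2L}$ from Lemma \ref{lem: loop decomposition}(\ref{enu: loops factors measure}) yields (\ref{eq: d(g)}).

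For part (\ref{enu: gen always}), the plan is to intersect two residual sets. Theorem \ref{thm: Genericity of Grisha} furnishes a residual $G_{1}\subset\R_{+}^{\E}$ on which every eigenvalue is simple and every non-loop eigenfunction satisfies property $I$. The second set $G_{2}$, encoding property $II$, is built from the following geometric input that $(\star)$ provides: each zero set $Z_{v,e}\subset\Sigma_{\L}^{c}$ with $v\in\V_{in}$, $e\in\E_{v}$, has positive codimension in the $(E-1)$-dimensional manifold $\Sigma_{\L}^{c}$. Indeed, if some $Z_{v,e}$ had zero codimension, then by Lemma \ref{lem: verte values subsets} it would contain an entire connected component of $\Sigma_{\L}^{c}$, whose $\mu_{\lv}$-measure is strictly positive by Remark \ref{rem: BG measuer positivity}, contradicting $(\star)$. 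Applying Lemma \ref{lem: stratification}, each $Z_{v,e}$ is a countable union of real analytic submanifolds of $\T^{\E}$ of dimension at most $E-2$.

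The main obstacle is the Baire-category step: showing that
\[
G_{2}:=\left\{ \lv\in\R_{+}^{\E}\;:\;\forall k>0,\;\{k\lv\}\notin\bigcup_{v\in\V_{in},\,e\in\E_{v}}Z_{v,e}\right\}
\]
is residual in $\R_{+}^{\E}$. The heuristic is that a generic linear orbit in $\T^{\E}$ avoids any closed subanalytic set of codimension at least two, and I would make this precise stratum by stratum. For each stratum $S\subset Z_{v,e}$ of dimension $d\leq E-2$ and each lattice vector $\mathbf{n}\in\Z^{\E}$, the set $X_{S,\mathbf{n}}:=\{(\lv,k)\in\R_{+}^{\E}\times\R_{+}\,:\,k\lv-2\pi\mathbf{n}\in S\}$ is real analytic of dimension at most $d+1\leq E-1$, so its projection to $\R_{+}^{\E}$ is a subanalytic set of dimension at most $E-1$ and hence nowhere dense. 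Countable unions over strata, pairs $(v,e)$, and $\mathbf{n}\in\Z^{\E}$ then show that $G_{2}^{c}$ is meager, so $G:=G_{1}\cap G_{2}$ is residual. Finally, for $\lv\in G$, Lemma \ref{lem: vertex values canonical ef and other ef } transfers the $\kv$-indexed vanishing conditions to $n$-indexed ones at $\kv=\{k_{n}\lv\}$: every non-loop eigenfunction satisfies both properties $I$ and $II$, so it is generic, and we obtain $\N=\mathcal{S}=\G\sqcup\L$.
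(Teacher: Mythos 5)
Your treatment of part (2) coincides with the paper's. For part (1) you take a genuinely different route, and it does work, but it is both heavier and less self-contained than what the paper does. The paper does not split the genericity into two residual sets: it observes that $\Sigma^{reg}\setminus(\Sigma_{\G}\sqcup\Sigma_{\L})$ is precisely $\Sigma_{\G}^{c}$, which is a \emph{single} finite union of $Z_{*}$'s covering the failure of \emph{both} properties I and II, and applies Lemma \ref{lem: verte values subsets} once to conclude it has positive co-dimension under $(\star)$. Then, adding $\Sigma^{sing}$, the full bad set $\Sigma\setminus(\Sigma_{\G}\sqcup\Sigma_{\L})$ is a closed set in $\T^{\E}$ of dimension at most $E-2$, whose lift to $\R^{\E}$ is cut by the rays $\lv\mapsto t\lv$; the cone over the restriction to $(\R_{+})^{\E}$ gains one dimension and is therefore nowhere dense, and its complement is the residual set $G$. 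This handles simplicity, property I, and property II simultaneously, with no appeal to Theorem \ref{thm: Genericity of Grisha}.

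By contrast, you import $G_{1}$ from Theorem \ref{thm: Genericity of Grisha} to take care of simplicity and property I, and then build a separate residual set $G_{2}$ for property II via a stratification of each $Z_{v,e}$ and a Baire-category argument over strata and lattice points $\mathbf{n}\in\Z^{\E}$. This works, but note two things you should tighten. First, your projection argument needs a brief localization: the projection of $X_{S,\mathbf{n}}$ to $\R_{+}^{\E}$ is not globally subanalytic because the $k$-fibers are a priori unbounded; you should restrict to compact subsets $K\subset\R_{+}^{\E}$, observe that $k$ is then bounded on $X_{S,\mathbf{n}}\cap(K\times\R_{+})$, and cover $\R_{+}^{\E}$ by countably many such $K$'s (a cleaner shortcut: image of a compact $(d+1)$-manifold under a Lipschitz map has Hausdorff dimension $\le d+1<E$, hence empty interior, hence is closed and nowhere dense). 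Second, since $\Sigma_{\L}^{c}$ is open in $\Sigma$ and not closed, your stratification claim is really a claim about $Z_{v,e}$ as an analytic subvariety of the manifold $\Sigma_{\L}^{c}$; this is fine, but it is worth stating that the strata produced are still submanifolds of $\T^{\E}$. In short, the approach is correct, but the paper's uniform treatment of $\Sigma_{\G}^{c}$ and the lift-and-cone argument buy a much shorter and more elementary proof that does not rely on the earlier genericity theorem.
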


\begin{proof}
By Lemma \ref{lem: loop decomposition}, $\mu_{\lv}\left(\tilde{Z}_{e}\right)=\frac{l_{e}}{2L}$
for every loop $e$ and every $\lv\in\left(\R_{+}\right)^{\E}$. Therefore,
\[
\mu_{\lv}\left(\Sigma_{\mathcal{L}}\right)=\sum_{e\in\E_{loops}}\mu_{\lv}\left(Z_{e}^{reg}\right)=\frac{1}{2}\frac{\sum_{e\in\E_{loops}}l_{e}}{L}.
\]
If $\lv$ is rationally independent, then by Corollary \ref{cor: irrational gives densities},
\begin{equation}
d\left(\G\right)=\mu_{\lv}\left(\Sigma_{\G}\right),\,\,\,d\left(\mathcal{L}\right)=\mu_{\lv}\left(\Sigma_{\mathcal{L}}\right),\,\,\text{and }\,\,d\left(\mathcal{S}\right)=\mu_{\lv}\left(\Sigma^{reg}\right).
\end{equation}
So Theorem \ref{thm: density-of-generic-and-loop-eigenfunctions}
(\ref{enu: gen a.s}) now follows from (\ref{eq: measures equalitiy}). 

In order to prove Theorem \ref{thm: density-of-generic-and-loop-eigenfunctions}
(\ref{enu: gen always}), assume that (\ref{eq: measures equalitiy})
holds. As seen in the proof of Corollary \ref{cor: key lemma in equivalence of genericity},
$\Sigma_{\G}^{c}$ the complement of $\Sigma_{\G}$ in $\Sigma_{\L}^{c}$
is a finite union of $Z_{*}$'s and so by Lemma \ref{lem: verte values subsets}
is is either of positive co-dimension in $\Sigma^{reg}$ or it contains
a connected component of $\Sigma^{reg}$. According to (\ref{eq: measures equalitiy}),
\[
\mu_{\lv}\left(\Sigma_{\G}^{c}\right)=\mu_{\lv}\left(\Sigma^{reg}\right)-\mu_{\lv}\left(\Sigma_{\G}\sqcup\Sigma_{\mathcal{L}}\right)=0,
\]
and as $\mu_{\lv}$ is strictly positive on open sets (see Remark
\ref{rem: BG measuer positivity}), then $\Sigma_{\G}^{c}$ does not
contain any connected component of $\Sigma^{reg}$ and is therefore
of positive co-dimension in $\Sigma^{reg}$. Namely, $\dim\left(\Sigma^{reg}\setminus\left(\Sigma_{\G}\sqcup\Sigma_{\L}\right)\right)\le E-2$.
By adding $\Sigma^{sing}$ which is also of positive co-dimension
in $\Sigma$, we get $\dim\left(\Sigma\setminus\left(\Sigma_{\G}\sqcup\Sigma_{\L}\right)\right)\le E-2$.
As $\Sigma$ is closed in $\T^{\E}$ and $\Sigma_{\G}\sqcup\Sigma_{\L}$
in open in $\Sigma$, then $\Sigma\setminus\left(\Sigma_{\G}\sqcup\Sigma_{\L}\right)$
is closed in $\T^{\E}$. Denote the set $\tilde{B}:=\set{\lv\in\R^{\E}}{\left\{ \lv\right\} \in\Sigma\setminus\left(\Sigma_{\G}\sqcup\Sigma_{\L}\right)}$,
which is the lift of $\Sigma\setminus\left(\Sigma_{\G}\sqcup\Sigma_{\L}\right)$
from $\R^{\E}/2\pi\Z^{\E}$ to $\R^{\E}$. Then $\tilde{B}$ is closed
with $\dim\left(\tilde{B}\right)\le E-2$. Consider the ``bad''
subset of edge lengths $B\subset\left(\R_{+}\right)^{\E}$ given by
\[
B:=\set{\lv\in\left(\R_{+}\right)^{\E}}{\exists t>0\,\,\text{s.t}\,\,\,t\lv\in\tilde{B}}.
\]
It is the cone of the restriction $\tilde{B}\cap\left(\R_{+}\right)^{\E}$,
which is closed in $\left(\R_{+}\right)^{\E}$, and is therefore closed
and of dimension $\dim\left(B\right)\le\dim\left(\tilde{B}\cap\left(\R_{+}\right)^{\E}\right)+1\le E-1$.
It is therefore closed and nowhere dense. Hence, its complement, the
``good'' set $G=\left(\R_{+}\right)^{\E}\setminus B$ is residual.
By definition, if $\lv\in G$, then for every $k>0$ such that $\left\{ k\lv\right\} \in\Sigma$
we get that $\left\{ k\lv\right\} \in\Sigma_{\G}\sqcup\Sigma_{\L}$
which according to Lemma \ref{lem: index sets and secular sets} proves
\ref{thm: density-of-generic-and-loop-eigenfunctions} (\ref{enu: gen always}).
\end{proof}
\begin{rem}
The argument above is a machinery showing that every property of a
standard graph, that is described by a subset $\Sigma_{*}\subset\Sigma$
whose complement has positive co-dimension, is generic in both senses
described above.
\end{rem}

The second deduction is the following lemma:
\begin{lem}
\label{lem: fat Zve} If $\Gamma$ is such that $\mu_{\lv}\left(\Sigma_{\G}\sqcup\Sigma_{\mathcal{L}}\right)<1$
for some $\lv\in\left(\R_{+}\right)^{\E}$, then there exists an interior
vertex $v\in\V_{in}$ and edge $e\in\E_{v}$ such that $Z_{v,e}\setminus Z_{v}$
contains an open set.
\end{lem}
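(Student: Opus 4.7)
The plan is to translate the measure hypothesis into a pigeonhole statement about the subsets $Z_{v}$ and $Z_{v,e}$ that together cover $\Sigma_{\L}^{c}\setminus\Sigma_{\G}$, to discard the $Z_{v}$ contribution using Berkolaiko--Liu's genericity theorem, and then to extract an open set from what remains inside some $Z_{v,e}$.

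First I will unpack the hypothesis. Because $\Sigma^{reg}=\Sigma_{\L}\sqcup\Sigma_{\L}^{c}$ with $\mu_{\lv}(\Sigma^{reg})=1$ and $\Sigma_{\G}\subset\Sigma_{\L}^{c}$, the inequality $\mu_{\lv}(\Sigma_{\G}\sqcup\Sigma_{\L})<1$ is equivalent to $\mu_{\lv}(\Sigma_{\L}^{c}\setminus\Sigma_{\G})>0$, and by Remark \ref{rem: Sigma g complement} this complement is the finite union $\bigcup_{v\in\V}Z_{v}\cup\bigcup_{v\in\V_{in},\,e\in\E_{v}}Z_{v,e}$.

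Next I will show that every $Z_{v}$ is a null set for every BG-measure, which clears away the property-$I$ part. Pick $\lv_{0}$ both rationally independent and in the residual set of Theorem \ref{thm: Genericity of Grisha}, so every non-loop eigenfunction of $\Gamma_{\lv_{0}}$ satisfies property $I$; by Theorem \ref{thm: BG equidistribution} the orbit $\{\{k_{n}\lv_{0}\}\}$ is dense in $\Sigma$. If some $Z_{v}$ contained a set open in $\Sigma_{\L}^{c}$ --- hence open in $\Sigma$, since $\Sigma_{\L}^{c}$ is a union of connected components of the open set $\Sigma^{reg}\subset\Sigma$ --- density would produce an orbit point in that open set and Lemma \ref{lem: vertex values canonical ef and other ef } would exhibit a non-loop eigenfunction of $\Gamma_{\lv_{0}}$ vanishing at $v$, contradicting Theorem \ref{thm: Genericity of Grisha}. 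So $Z_{v}$ contains no open subset of $\Sigma_{\L}^{c}$, and Lemma \ref{lem: verte values subsets} applied to the single zero-set $A=Z_{v}$ forces $Z_{v}$ to have positive co-dimension in $\Sigma_{\L}^{c}$. Positive-codimension subsets of the $(E-1)$-dimensional manifold $\Sigma^{reg}$ carry zero surface measure, hence $\mu_{\lv'}(Z_{v})=0$ for every $\lv'\in\R_{+}^{\E}$; summing over $v$ gives $\mu_{\lv}(\Sigma_{I}^{c})=0$.

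It remains to extract the open set. The hypothesis now forces $\mu_{\lv}(\Sigma_{II}^{c})>0$, and since this is a finite union, some $Z_{v,e}$ with $v\in\V_{in}$ and $e\in\E_{v}$ has positive measure. The same positive-measure-implies-zero-codimension observation together with Lemma \ref{lem: verte values subsets} applied to $A=Z_{v,e}$ yields a connected component $M$ of $\Sigma^{reg}$ contained in $Z_{v,e}$. The component $M$ is open in $\Sigma^{reg}$, and since $Z_{v}$ is closed in $\Sigma_{\L}^{c}$ as the zero set of the analytic function $f_{\kv}(v)^{2}$ of Lemma \ref{lem: vertex values canonical ef and other ef }, the set $M\setminus Z_{v}$ is again open in $\Sigma^{reg}$. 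It is non-empty because $\mu_{\lv}(M)>0$ by Remark \ref{rem: BG measuer positivity} while $\mu_{\lv}(Z_{v})=0$ by the previous paragraph, and it sits inside $Z_{v,e}\setminus Z_{v}$, which is exactly what the lemma asks for. The only non-bookkeeping input is the Berkolaiko--Liu theorem used to kill the $Z_{v}$ stratum; the remaining work is a clean application of the $\Sigma^{reg}$-machinery already assembled in Section \ref{sec: The-secular-manifold}.
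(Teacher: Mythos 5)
Your proof is correct, and the overall structure matches the paper's: translate the measure hypothesis into positivity of $\mu_{\lv}$ on the bad set $\Sigma_{\L}^{c}\setminus\Sigma_{\G}$, dispose of the property-$I$ part via Berkolaiko--Liu, locate a fat $Z_{v,e}$, and carve out an open set using Lemma \ref{lem: verte values subsets}. Where you diverge is in how you show the $Z_{v}$'s carry no measure: the paper proves $\mu_{\lv}(\Sigma_{I})=1-\mu_{\lv}(\Sigma_{\L})$ for $\lv$ in a residual set by combining Corollary \ref{cor: irrational gives densities} (the density--measure dictionary) with Theorem \ref{thm: Genericity of Grisha}, and then pushes this identity to \emph{all} $\lv$ by invoking continuity of $\lv\mapsto\mu_{\lv}$ in the formula \eqref{eq: BG measure}. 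You instead make the $\lv$-free statement ``$Z_{v}$ has positive co-dimension'' directly: a single rationally-independent $\lv_{0}$ from the residual set gives a dense orbit by Theorem \ref{thm: BG equidistribution}, an open set inside $Z_{v}$ would capture an orbit point, and Lemma \ref{lem: vertex values canonical ef and other ef } converts this into a non-loop eigenfunction of $\Gamma_{\lv_{0}}$ vanishing at $v$ --- the very thing Berkolaiko--Liu forbids. Your variant buys a pointwise argument in place of a continuity-in-$\lv$ step and never needs the intermediate identity $\mu_{\lv}(\Sigma_{I})=1-\mu_{\lv}(\Sigma_{\L})$; the paper's version has the minor upside that the identity itself is a clean, quotable fact. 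Both are valid; the remaining steps (finding $M\subset Z_{v,e}$ and exhibiting $M\setminus Z_{v}$ as the required open set) are the same in both arguments.
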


\begin{proof}
Let $G$ be the residual set in Theorem \ref{thm: Genericity of Grisha},
and let $G'$ be its (residual) intersection with the set of rationally
independent edge lengths. Let $\lv\in G'$, then as $\lv\in G$, $d\left(\mathcal{P}_{I}\right)=1-d\left(\L\right)$
according to Theorem \ref{thm: Genericity of Grisha}. Since $\lv$
is also rationally independent, Corollary \ref{cor: irrational gives densities}
gives, 
\begin{equation}
\mu_{\lv}\left(\Sigma_{I}\right)=d\left(\mathcal{P}_{I}\right)=1-d\left(\L\right)=1-\mu_{\lv}\left(\Sigma_{\L}\right).
\end{equation}
Since $G'$ is dense in $\R_{+}^{\E}$ and $\mu_{\lv}$ is continuous
in $\lv$ (immediate from Definition \ref{def: BG measure}), then
$\mu_{\lv}\left(\Sigma_{I}\right)=1-\mu_{\lv}\left(\Sigma_{\L}\right)$
for any $\lv$. 

As we assumed that $\mu_{\lv}\left(\Sigma_{\G}\right)+\mu_{\lv}\left(\Sigma_{\L}\right)<1=\mu_{\lv}\left(\Sigma_{I}\right)+\mu_{\lv}\left(\Sigma_{\L}\right)$
then $\mu_{\lv}\left(\Sigma_{\G}\right)<\mu_{\lv}\left(\Sigma_{I}\right)$
which means that $\mu_{\lv}\left(\Sigma_{I}\setminus\Sigma_{\G}\right)=\mu_{\lv}\left(\Sigma_{I}\setminus\Sigma_{II}\right)>0$.
The set $\Sigma_{I}\setminus\Sigma_{II}$ is a union of sets of the
form $Z_{v,e}\setminus Z_{v}$ and so if it has positive measure, then
there must be at least one $Z_{v,e}\setminus Z_{v}$ set which is
not of positive co-dimension. According to Lemma \ref{lem: verte values subsets},
we conclude that there is a connected component $M$ of $\Sigma^{reg}$
contained in $Z_{v,e}$ on which $Z_{v}\cap M$ is a closed set of
positive co dimension. Hence $M\cap Z_{v,e}\setminus Z_{v}$ is open
(and dense in $M$).
\end{proof}
The third deduction requires the notion of splitting a vertex. If
$v$ is an interior vertex of $\Gamma$ of $\deg v>3$ and $e\in\E_{v}$
is not a bridge, then we define $\tilde{\Gamma}$ as the graph obtained
by splitting $v$ into two vertices $v_{1}$ and $v_{2}$ such that
$v_{1}$ is only connected to $e$ and thus $v_{2}\in\partial\tilde{\Gamma}$,
and $v_{2}$ is connected to the remaining edges in $\E_{v}\setminus\left\{ e\right\} $.
See Figure \ref{fig: splitting} for example. In Appendix \ref{sec: App Gluing and contracting}
we discuss this process and the relations between of the secular manifolds
of $\Gamma$ and $\tilde{\Gamma}$, which we will use in the proof
of the next lemma. 

\begin{figure}
\includegraphics[width=0.4\paperwidth]{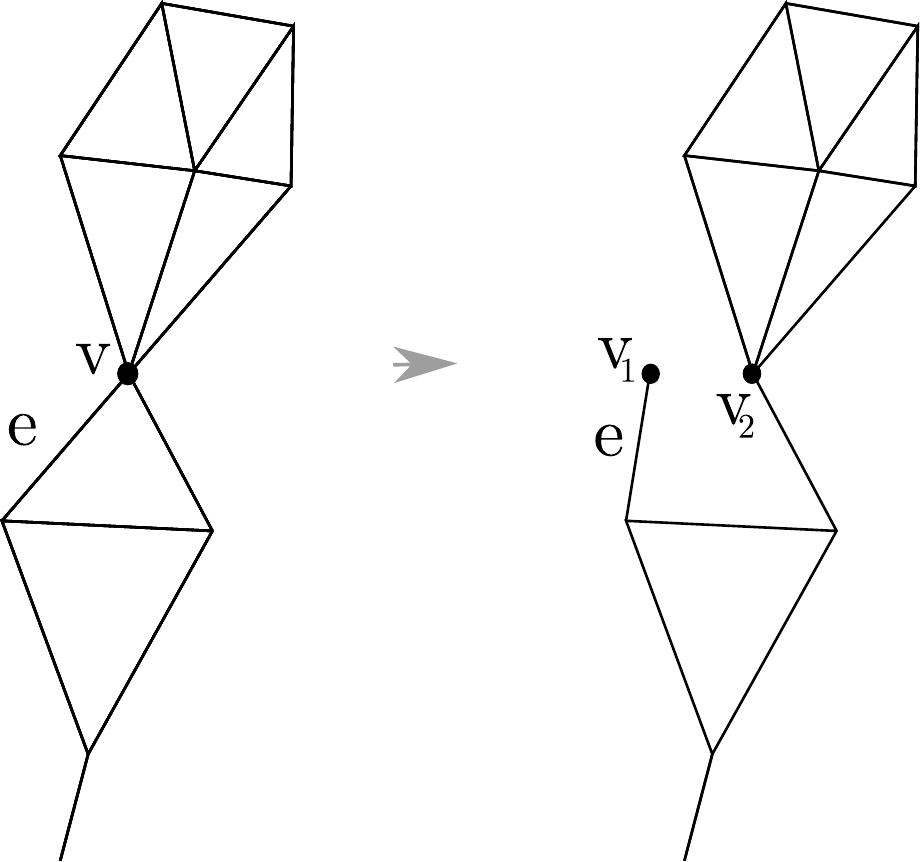}\caption[Vertex splitting example]{\label{fig: splitting} On the left, a graph $\Gamma$ with interior
vertex $v$ of $\protect\deg v=5$ and edge $e\in\protect\E_{v}$
which is not a bridge. On the right, $\tilde{\Gamma}$ obtained by
splitting $v$ into $v_{1}$ and $v_{2}$ such that $v_{1}$ is connected
to $e$ and has $\protect\deg{v_{1}}=1$ and $v_{2}$ is connected
to the remaining $\protect\E_{v}\setminus e$ and has $\protect\deg{v_{2}}=4$.}
\end{figure}

\begin{figure}
\includegraphics[width=0.4\paperwidth]{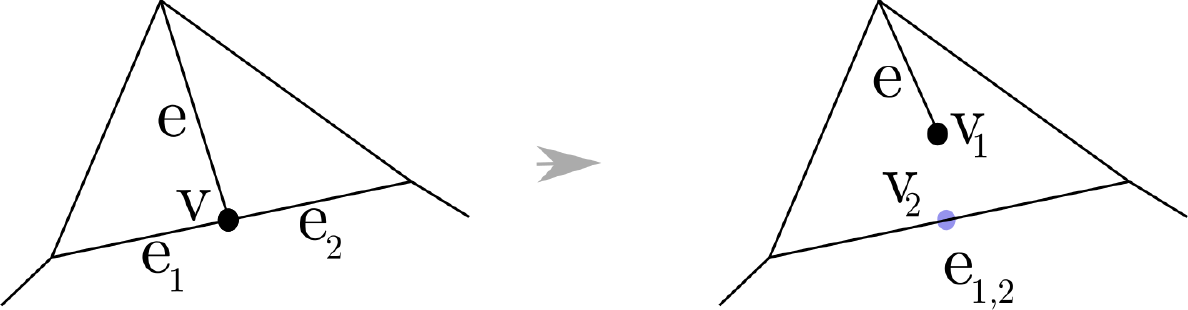}\caption[Vertex splitting example - deg(v)=3]{\label{fig: splitting-D3} On the left, a graph with interior vertex
$v$ of $\protect\deg v=3$ and $\protect\E_{v}=\left\{ e,e_{1},e_{2}\right\} $
such that $e$ is not a bridge. On the right, the graph obtained by
splitting $v$ into $v_{1}$ and $v_{2}$ such that $v_{1}$ is connected
to $e$ and has $\protect\deg{v_{1}}=1$ and $v_{2}$ is considered
as an interior vertex of the new edge $e_{1,2}$ which is the concatenation
of $e_{1}$ and $e_{2}$.}
\end{figure}
 
\begin{lem}
\label{lem: Zve constant}Let $\Gamma$ be a graph and assume there
exists an interior vertex $v\in\V_{in}$ with an edge $e\in\E_{v}$
such that $Z_{v,e}\setminus Z_{v}$ contains an open set in $\Sigma^{reg}$.
Then,
\begin{enumerate}
\item \label{enu: Zve const - loops and bridges}If $\Gamma$ has loops
or bridges, then for any $\kv\in\Sigma^{reg}$ either $f_{\kv}$ is
supported on a loop or $\partial_{e}f_{\kv}\left(v\right)=0$. That
is, $Z_{v,e}=\Sigma_{\L}^{c}$, and therefore $\Sigma_{II}=\emptyset$.
\item \label{enu: not deg 3}If $\Gamma$ has no loops and no bridges, then
$\deg v\ne3$.
\item \label{enu: Zvu const - no loops no bridges}If $\Gamma$ has no loops
and no bridges, and $\tilde{\Gamma}$ is the graph obtained by splitting
$v$ into $v_{1}$ and $v_{2}$ as discussed above, such that $v_{1}$
is a boundary vertex connected to $e$, then $\left|f_{\kv}\left(v_{1}\right)\right|=\left|f_{\kv}\left(v_{2}\right)\right|$
for any $\kv\in\tilde{\Sigma}^{reg}$. Where $\tilde{\Sigma}^{reg}$
is the regular part of the secular manifold of $\tilde{\Gamma}$.
\end{enumerate}
\end{lem}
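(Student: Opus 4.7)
My plan is to treat part (1) as an immediate corollary of Lemma \ref{lem: verte values subsets} and to derive both parts (2) and (3) from a single ``splitting correspondence'' at the vertex $v$. For part (1), the hypothesis ``$Z_{v,e}\setminus Z_v$ contains an open set of $\Sigma^{reg}$'' forces $Z_{v,e}$ to have zero co-dimension in $\Sigma_\L^c$, since $Z_v$ is a closed proper subvariety. Case 1 of Lemma \ref{lem: verte values subsets} (if $\Gamma$ has a loop) immediately gives $Z_{v,e}=\Sigma_\L^c$; case 2 (if $\Gamma$ has no loop but has a bridge) gives $Z_{v,e}=\Sigma^{reg}=\Sigma_\L^c$. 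Either way $\partial_e f_{\kv}(v)\equiv 0$ throughout $\Sigma_\L^c$, so $\Sigma_{II}=\emptyset$.

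For parts (2) and (3) we have no loops and no bridges, so by case 3 of Lemma \ref{lem: verte values subsets} the hypothesis gives a whole connected component $M$ of $\Sigma^{reg}$ with $M\subset Z_{v,e}$. The main tool I would establish is the bijection
\[
\{f\in Eig(\Gamma_{\kv},1)\ :\ \partial_e f(v)=0\}\ \cong\ \{\tilde f\in Eig(\tilde\Gamma_{\tilde\kv},1)\ :\ \tilde f(v_1)=\tilde f(v_2)\},
\]
where $\tilde\Gamma$ is obtained by splitting $v$ into $v_1$ (attached only to $e$, so a boundary vertex) and $v_2$ (attached to $\E_v\setminus\{e\}$). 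The bijection holds because the Neumann sum at $v$ decomposes (using Neumann at $v_2$) into the boundary condition $\partial_e\tilde f(v_1)=0$, and continuity at $v$ in $\Gamma$ is exactly $\tilde f(v_1)=\tilde f(v_2)$.

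For part (3) we have $\deg v\ge 4$, so $\deg v_2\ge 3$ and $\tilde\Gamma$ is a genuine standard graph with the tail $e$. Transferring $M$ through the bijection and intersecting with $\tilde\Sigma^{reg}$ (which removes only a set of positive co-dimension), I obtain $|\tilde f_{\kv}(v_1)|=|\tilde f_{\kv}(v_2)|$ on a subset of zero co-dimension in $\tilde\Sigma^{reg}$, i.e.\ on the zero set of the real analytic function $f_{\kv}(v_1)^2-f_{\kv}(v_2)^2$. Since $\tilde\Gamma$ has a bridge, case 2 of Lemma \ref{lem: verte values subsets} (applied to $\tilde\Gamma$) upgrades this equality to all of $\tilde\Sigma^{reg}$.

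For part (2), I assume for contradiction $\deg v=3$ so that $\E_v=\{e,e_1,e_2\}$. Splitting now makes $v_2$ a degree-$2$ vertex, which by Remark \ref{rem: deg 2} is absorbed into a merged edge $e_{1,2}$ of length $l_{e_1}+l_{e_2}$; the ``location'' of $v_2$ along $e_{1,2}$ is the parameter $\kappa_{e_1}$. The bijection then reads $\tilde f_{\tilde\kv}(v_1)=\tilde f_{\tilde\kv}|_{e_{1,2}}(\kappa_{e_1})$. Since the projection from $M\subset\T^{\E}$ to the reduced characteristic torus $\T^{\tilde\E}$ has one-dimensional fibres parametrized by $\kappa_{e_1}$, fixing $\tilde\kv$ and varying $\kappa_{e_1}$ along an open arc of a fibre forces $\tilde f_{\tilde\kv}|_{e_{1,2}}(y)$ to be constantly equal to $\tilde f_{\tilde\kv}(v_1)$. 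Since $\tilde f_{\tilde\kv}|_{e_{1,2}}(y)$ is of the form $B\cos(y-\psi)$, this gives $B=0$, i.e.\ $\tilde f_{\tilde\kv}|_{e_{1,2}}\equiv 0$, on a subset of zero co-dimension in $\tilde\Sigma^{reg}$. Another application of Lemma \ref{lem: verte values subsets} to the real analytic function $|a_{e_{1,2}}|^2+|a_{\hat e_{1,2}}|^2$ spreads this vanishing to all of $\tilde\Sigma_\L^c$, which contradicts Theorem \ref{thm: Genericity of Grisha} applied to $\tilde\Gamma$, since for a residual choice of edge lengths the canonical eigenfunctions of $\tilde\Gamma$ not supported on a loop satisfy Property I and in particular cannot vanish identically on $e_{1,2}$.

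The main obstacles I expect are, first, the careful set-up of the splitting bijection and verification that it respects the BG-measure and the dimension count (so that ``zero co-dimension'' transfers faithfully from $\Sigma^{reg}$ to $\tilde\Sigma^{reg}$); second, the technical nuisance that the splitting in part (2) may create loops in $\tilde\Gamma$ (when $e_1,e_2$ share both endpoints or more generally when $v_2$'s edges form a loop after merging), forcing me to invoke case 1 rather than case 2 of Lemma \ref{lem: verte values subsets} and to argue separately using the loop factors and Lemma \ref{lem: vertex values for loops}. The constant-magnitude-forces-zero step that drives part (2) is clean only because $\kappa_{e_1}$ varies freely along the fibre; verifying that this fibre is genuinely one-dimensional (and hence that the moving point $v_2$ really sweeps a nonempty arc of $e_{1,2}$) is the most delicate dimension check.
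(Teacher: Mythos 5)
Your proposal follows the paper's approach in all three parts: part (1) is the immediate consequence of Lemma \ref{lem: verte values subsets}; part (3) uses the gluing/splitting correspondence (Lemma \ref{lem: gluing vertices}) and the bridge case of Lemma \ref{lem: verte values subsets}; and part (2) aims to derive a contradiction by showing a fixed split-graph eigenfunction would be constant and nonzero on an open arc of $e_{1,2}$. Parts (1) and (3) are essentially correct and match the paper.

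The genuine gap is in part (2), and you flag it yourself without resolving it. The whole argument rests on the assertion that the fibre of $T$ through a point of $O$ meets $O$ in an open \emph{arc}. This is not an automatic consequence of the fibre being one-dimensional: $\Sigma^{reg}$ is a hypersurface of dimension $E-1$ inside $\T^\E$, and a one-dimensional curve will generically meet it transversally, i.e.\ in discrete points. What is needed, and what the paper supplies, is a tangency argument: at every $\kv\in O$ one has $\partial_e f_{\kv}(v)=0$ and $f_{\kv}(v)\ne 0$, and the Neumann condition at $v$ then gives $\partial_{e_1}f_{\kv}(v)=-\partial_{e_2}f_{\kv}(v)$, hence $(m_{\kv})_{e_1}=(m_{\kv})_{e_2}$. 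By Lemma \ref{lem: F p and mk} the normal $\hat n(\kv)$ is proportional to $m_{\kv}$, so the direction $\partial_{\kappa_1}-\partial_{\kappa_2}$ is orthogonal to $\hat n$ at every point of $O$, i.e.\ tangent to $\Sigma^{reg}$ throughout $O$. Since the fibre of $T$ is precisely the integral curve of this direction, it stays inside $O$ for a nonempty $t$-interval, producing the required arc. Without this verification, which is the heart of part (2), the argument does not close.

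A secondary remark: your part (2) also takes an unnecessary detour. Once $\tilde f_{\tilde\kv}|_{e_{1,2}}$ is shown to be constant \emph{and nonzero} (equal to $f_{\kv(t)}(v)\ne 0$) on an open arc, you already have an immediate contradiction with $\tilde f$ being a solution of $f''=-f$ on that edge; the subsequent spreading via Lemma \ref{lem: verte values subsets}, the appeal to Theorem \ref{thm: Genericity of Grisha}, and the ensuing concern about loops arising in $\tilde\Gamma$ are all superfluous.
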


\begin{proof}
We will prove each of the 3 cases separately. 

The proof of (\ref{enu: Zve const - loops and bridges}) is straightforward from Lemma \ref{lem: verte values subsets}. To see that,
observe that if $Z_{v,e}\setminus Z_{v}$ contains an open set in
$\Sigma^{reg}$, then $Z_{v,e}$ has zero co-dimension in $\Sigma^{reg}$.
It now follows from Lemma \ref{lem: verte values subsets}, and the
assumption that $\Gamma$ has loops or bridges, that $Z_{v,e}=\Sigma_{\L}^{c}$
as needed. 

Let us now prove (\ref{enu: not deg 3}):

Let $\Gamma$ be a graph with no loops or bridges, let $v$ be a vertex
of $\deg v=3$ and assume by contradiction that for some $e\in\E_{v}$,
$Z_{v,e}\setminus Z_{v}$ contains an open set $O$ in $\Sigma^{reg}$.
Denote $\E_{v}=\left\{ e,e_{1},e_{2}\right\} $. Let $\kv=\left(\kappa_{e},\kappa_{1},\kappa_{2},...\right)\in O$,
and notice that the Neumann condition at $v$ together with $\kv$
being in $Z_{v,e}\setminus Z_{v}$ implies that 
\[
f_{\kv}\left(v\right)^{2}+\partial_{e_{1}}f_{\kv}\left(v\right)^{2}=f_{\kv}\left(v\right)^{2}+\partial_{e_{2}}f_{\kv}\left(v\right)^{2},
\]
and so according to Lemma \ref{lem: |ae|=00003D|ae'|} and Lemma \ref{lem: F p and mk},
the normal to $\Sigma^{reg}$ at $\kv$, $\hat{n}\left(\kv\right)$,
has equal $e_{1}$ and $e_{2}$ components. In particular it follows
that for small enough $\epsilon$, the path \[\kv\left(t\right)=\left\{ \left(\kappa_{e},\kappa_{1}+t,\kappa_{2}-t,...\right)\right\}\  
\text{with}\  t\in\left(-\epsilon,\epsilon\right),\] is orthogonal to the normal and therefore contained in the open
set $O$. 

In order to prove (\ref{enu: not deg 3}), we need to consider a splitting
of $v$ as shown in Figure \ref{fig: splitting-D3} into $v_{1}$
and $v_{2}$ such that $v_{1}$ is connected to $e$ and has $\deg{v_{1}}=1$.
In this case, $v_{2}$ is of degree two and according to Remark \ref{rem: deg 2}
we may consider $v_{2}$ as an interior point of a new edge $e_{1,2}$
which is the concatenation of $e_{1}$ and $e_{2}$. Denote the new
graph by $\tilde{\Gamma}$ and consider the map $T\left(\kappa_{e},\kappa_{1},\kappa_{2},...\right):=\left\{ \left(\kappa_{e},\kappa_{1}+\kappa_{2},...\right)\right\} $
as discussed in Lemma \ref{lem: deg3 construction} in Appendix \ref{sec: App Gluing and contracting}.
Let $\kv=\left(\kappa_{e},\kappa_{1},\kappa_{2},...\right)\in O$
be the starting point of the path we constructed, and consider the
standard graphs $\Gamma_{\kv}$ and $\tilde{\Gamma}_{T\left(\kv\right)}$.
If $e_{1}$ and $e_{2}$ of $\Gamma_{\kv}$ have edge lengths $\kappa_{1}$
and $\kappa_{2}$, then $e_{1,2}$ of $\tilde{\Gamma}_{T\left(\kv\right)}$
is of length $\kappa_{1}+\kappa_{2}$ and if $x_{1,2}\in\left[0,\kappa_{1}+\kappa_{2}\right]$
is an arc-length coordinate on $e_{1,2}$ in the direction which goes
from $e_{1}$ to $e_{2}$, then the interior point $v_{2}$ is at $x_{12}=\kappa_{1}$.
Denote the secular manifold of $\tilde{\Gamma}$ by $\tilde{\Sigma}$.
Observe that according to Lemma \ref{lem: deg3 construction}, if
$\kv'\in\Sigma^{reg}$ with $\partial_{e}f_{\kv'}\left(v\right)=0$
then $T\left(\kv'\right)\in\tilde{\Sigma}$. Therefore, $T\left(Z_{v,e}\right)\subset\tilde{\Sigma}$.
It is not hard to deduce that $T\left(O\right)\subset\tilde{\Sigma}$
is of zero co-dimension in $\tilde{\Sigma}$, so $T\left(O\right)\cap\tilde{\Sigma}^{reg}$
is not empty and we may assume that $\kv$, the starting point of
the path, satisfies $T\left(\kv\right)\in\tilde{\Sigma}^{reg}$. Observe
that $T$ is constant on the path $\kv\left(t\right)$, namely $T\left(\kv\left(t\right)\right)\equiv T\left(\kv\right)\in\tilde{\Sigma}^{reg}$.
According to Lemma \ref{lem: deg3 construction}, it follows that
the $\Gamma_{\kv\left(t\right)}$ and $\tilde{\Gamma}_{T\left(\kv\right)}$
canonical eigenfunctions $f_{\kv\left(t\right)}$ and $\tilde{f}_{T\left(\kv\right)}$
satisfy:
\begin{equation}
\tilde{f}_{T\left(\kv\right)}|_{e_{1,2}}\left(\kappa_{1}+t\right)=\tilde{f}_{T\left(\kv\left(t\right)\right)}\left(v_{2}\right)=f_{\kv\left(t\right)}\left(v\right).
\end{equation}
In particular, $\tilde{f}_{T\left(\kv\right)}|_{e_{1,2}}$ is constant
on $x_{1,2}\in\left(\kappa_{1}-\epsilon,\kappa_{1}+\epsilon\right)$
and is non zero since $f_{\kv\left(t\right)}\left(v\right)\ne0$ by
$\kv\left(t\right)\in O\subset Z_{\tilde{v},e}\setminus Z_{\tilde{v}}$.
This is a contradiction for $\tilde{f}_{T\left(\kv\right)}$ being
an eigenfunction of eigenvalue $k^{2}=1$. As needed.

It is left to prove (\ref{enu: Zvu const - no loops no bridges}):

Assume that $\tilde{\Gamma}$ has no loops and no bridges. Then by
(\ref{enu: not deg 3}), and since $\tilde{v}$ is interior vertex,
$\deg{\tilde{v}}\ge4$. In such case, since non of the edges connected
to $\tilde{v}$ is a bridge or a loop, then $\tilde{\Gamma}$ can
be obtained by a graph $\Gamma$ with boundary vertex $v$ glued to
an interior vertex $u$ such that the new vertex is $\tilde{v}$.
Let $\kv\in Z_{\tilde{v},e}\setminus Z_{\tilde{v}}$, then $\partial_{e}f_{\kv}\left(\tilde{v}\right)=0$,
and by Lemma \ref{lem: gluing vertices} we get that $\kv\in\Sigma$
(the secular manifold of $\Gamma$). It follows that $Z_{\tilde{v},e}\setminus Z_{\tilde{v}}\subset\Sigma$,
and since they have the same dimension and $Z_{\tilde{v},e}\setminus Z_{\tilde{v}}$
contains an open set, then there is an open set $O\subset\Sigma^{reg}\cap Z_{\tilde{v},e}\setminus Z_{\tilde{v}}$.
In such case, for every $\kv\in O$, the $\Gamma$ canonical eigenfunction
$f_{\kv}$ satisfies $f_{\kv}\left(v\right)=f_{\kv}\left(u\right)$
(by Lemma \ref{lem: gluing vertices}) and so $Z_{v,u}$ (of $\Gamma$)
contains $O$ and so is not of positive co-dimension. But $v$ is
a boundary vertex of $\Gamma$, and so $\Gamma$ has a tail (which
is a bridge) and no loops (because $\tilde{\Gamma}$ has no loops).
Therefore, according to Lemma \ref{lem: verte values subsets}, $Z_{v,u}=\Sigma^{reg}$.
As needed. 
\end{proof}
The last step of the proof of Theorem \ref{thm: density-of-generic-and-loop-eigenfunctions},
as discussed in the sketch of the proof, is to provide the next `counter
example lemma' that would lead to a contradiction to the assumption
that there exists a graph for which $\mu_{\lv}\left(\Sigma_{\G}\sqcup\Sigma_{\L}\right)<1$.
We will state the lemma here and prove in the next subsection.
\begin{lem}
\label{lem: example for def(v) not zero} Let $\Gamma$ be a graph
with an interior vertex $v$ and edge $e\in\E_{v}$, then there exists
$\kv\in\Sigma_{\L}^{c}$ such that $\partial_{e}f_{\kv}\left(v\right)\ne0$.
Furthermore, if $\Gamma$ has a boundary vertex $u$, then we can
choose $\kv\in\Sigma_{\L}^{c}$ such that $\left|f_{\kv}\left(v\right)\right|\ne\left|f_{\kv}\left(u\right)\right|$.
\end{lem}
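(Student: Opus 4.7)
My plan is to prove both parts by exhibiting, in each case, an explicit standard graph $\Gamma_{\lv}$ together with a simple eigenvalue $k^{2}>0$ whose eigenfunction $f$ is not supported on a single loop and realizes the required non-vanishing. Given such $f$, setting $\kv:=\{k\lv\}$ places $\kv$ in $\Sigma^{reg}$ by Lemma \ref{lem: Sigma reg and simple eigenvalues}, and Lemma \ref{lem: vertex values canonical ef and other ef} identifies the trace of $f$ with that of the canonical eigenfunction $f_{\kv}$ (up to the scaling $\partial_{e}f\mapsto\partial_{e}f/k$ and a global sign), while the fact that $f$ is not loop-supported ensures $\kv\in\Sigma_{\L}^{c}$. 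This reduces the abstract existence question on the secular manifold to a concrete spectral construction for a single standard graph.

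For Part~1 when the edge $e$ lies in some cycle of $\Gamma$, I start from the degenerate edge lengths $\lv_{0}:=2\pi\boldsymbol{1}$ at the eigenvalue $k=1$. Using Definition \ref{def: edge restriction notations}(\ref{enu: Acos+Bsin}), any eigenfunction at this eigenvalue has the form $f|_{e'}(x_{e'})=a_{e'}\cos(x_{e'})+b_{e'}\sin(x_{e'})$ on each edge, with $f|_{e'}(0)=f|_{e'}(2\pi)=a_{e'}$ and outgoing derivatives $\pm b_{e'}$. Continuity forces the $a_{e'}$ to be a single constant $c$ across $\Gamma$, and the Neumann sum at each vertex $w$ reduces exactly to the discrete Kirchhoff law $\sum_{e'\in\E_{w}}\epsilon_{e',w}b_{e'}=0$; hence $\mathrm{Eig}(\Gamma_{\lv_{0}},1)\cong\R\oplus H_{1}(\Gamma;\R)$, of dimension $1+\beta$. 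Choosing $\boldsymbol{b}$ in the cycle space with $b_{e}\ne0$ (possible since $e$ is in a cycle) gives an eigenfunction $f_{\boldsymbol{b}}$ supported on an entire cycle (not a single loop) with $\partial_{e}f_{\boldsymbol{b}}(v)=\pm b_{e}\ne0$. A small analytic perturbation $\lv_{0}\to\lv_{0}+\varepsilon\boldsymbol{\delta}$ with generic $\boldsymbol{\delta}\in\R^{\E}$ splits the $(1+\beta)$-fold degenerate eigenvalue into simple ones by Kato's perturbation theory; the branch through $f_{\boldsymbol{b}}$ varies continuously, so $\partial_{e}f(v)\ne0$ persists.

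In the remaining bridge case of Part~1 (where Kirchhoff's law forces $b_{e}=0$ and the previous construction fails), I invoke the bridge decomposition of Proposition \ref{prop: simple bridge det decomposition}. Writing $\Gamma\setminus\{e\}=\Gamma_{1}\sqcup\Gamma_{2}$ with $v\in\Gamma_{1}$ and applying Lemma \ref{lem: Rev phase}, we have $e^{2i\varphi_{e}}=e^{-i\Theta_{1}(\kv_{1})}$ on $\Sigma\setminus Z_{g}$, so the vanishing condition $\partial_{e}f_{\kv}(v)=0$ is equivalent to $e^{i\Theta_{1}(\kv_{1})}=1$. Since $v$ is interior, $\Gamma_{1}$ contains $v$ and at least two further edges in $\E_{v}\setminus\{e\}$, so $\Gamma_{1}$ is a non-trivial subgraph; a direct inspection of the recursive construction of $e^{i\Theta_{1}}=\mathcal{S}_{1}(\boldsymbol{z}_{1})$ in Appendix \ref{sec: appendix bridge} (for example, by differentiating with respect to any $\kappa_{e'}$ for $e'\in\E_{1}$ adjacent to $v$) shows that $e^{i\Theta_{1}}$ is a non-constant real-analytic function on $\T^{\E_{1}}\setminus Z_{g_{1}}$. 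Hence there exists $\kv_{1}$ with $e^{i\Theta_{1}(\kv_{1})}\ne1$; solving the unimodular equation $e^{i(2\kappa_{e}+\Theta_{1}(\kv_{1})+\Theta_{2}(\kv_{2}))}=1$ for $\kappa_{e}$ with a generic $\kv_{2}\in\T^{\E_{2}}\setminus Z_{g_{2}}$ then produces a point $\kv\in\Sigma\setminus Z_{g}\subset\Sigma_{\L}^{c}$ with $\partial_{e}f_{\kv}(v)\ne0$.

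Part~2 follows by a parallel two-stage argument. When $u$ is a boundary vertex, its unique adjacent edge $e_{u}$ is a tail, and the Neumann boundary condition forces $\varphi_{e_{u}}=0$; hence $|f_{\kv}(u)|$ equals the wave amplitude on $e_{u}$, while $|f_{\kv}(v)|$ is a generically distinct real-analytic function of $\kv$. I construct the required $\kv$ either by starting from $\lv_{0}=2\pi\boldsymbol{1}$ (where every cycle-space eigenfunction $f_{\boldsymbol{b}}$ satisfies $f_{\boldsymbol{b}}(w)=0$ at every vertex $w$, so any small perturbation by the constant mode breaks the equality $|f(v)|=|f(u)|$) or by applying the bridge-phase argument above to the tail $e_{u}$ viewed as a bridge, which exhibits an explicit $\kv$ with unequal magnitudes. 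The main obstacle is the bridge case of Part~1 (and its analogue for Part~2): while it is intuitively clear that a non-trivial subgraph cannot act as a pure Neumann reflector, making the non-constancy of $e^{i\Theta_{1}}$ rigorous requires the explicit transfer-matrix structure developed in Appendix \ref{sec: appendix bridge}, and this is the most technically involved step of the argument.
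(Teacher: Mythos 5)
Your proposal follows the paper's strategy for the bridge case but takes a genuinely different route for the non-bridge case, and it is the latter where the gaps are.

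\textbf{Bridge case.} Your argument here is essentially the paper's: decompose $F$ via Proposition \ref{prop: simple bridge det decomposition}, relate $\partial_e f_{\kv}(v)=0$ to $e^{i\Theta_1(\kv_1)}=1$ via Lemma \ref{lem: Rev phase}, and show $e^{i\Theta_1}$ is non-constant. However you defer the non-constancy to ``a direct inspection of the recursive construction ... in Appendix \ref{sec: appendix bridge},'' which you acknowledge is the most involved step. The paper avoids this: it uses the Neumann condition at the interior vertex $v$ to find $e'\in\E_v\cap\E_1$ with $f_{\kv}|_{e'}\not\equiv 0$, then invokes Lemma \ref{lem: F p and mk} to get $\frac{\partial F}{\partial\kappa_{e'}}(\kv)\ne 0$, and differentiates the bridge factorization of $F$ to conclude directly $\frac{\partial\Theta_1}{\partial\kappa_{e'}}(\kv_1)\ne 0$. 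This is cleaner and should replace the appeal to the transfer-matrix structure.

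\textbf{Non-bridge case: a genuine gap.} The paper contracts a spanning tree of $\Gamma\setminus\{e\}$ (keeping tails) to reduce to a stower with $e$ as a loop and $v$ as the central vertex, then solves the stower secular equation explicitly (Lemma \ref{lem: stower}) and lifts via Lemma \ref{lem: contracting lemma-1}. You instead perturb from the maximally degenerate point $\lv_0=2\pi\boldsymbol{1}$, $k=1$. Two things go wrong.

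First, in degenerate Kato theory the post-perturbation eigenfunctions are the eigenvectors of the reduced perturbation operator on the $(\beta+1)$-dimensional eigenspace; you cannot ``choose the branch through $f_{\boldsymbol{b}}$.'' You would need to show that for generic $\boldsymbol{\delta}$ some eigenvector of the reduced operator has nonzero projection onto a cycle class containing $e$, and has nonzero constant-mode component (so it is not loop-supported). None of this is established.

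Second, and more concretely: when $e$ is a loop and the only cycle through $e$ is $e$ itself (e.g.\ a $(2,1)$-stower, one loop plus two tails, $\beta=1$), every cycle-space vector $\boldsymbol{b}$ with $b_e\ne 0$ gives $f_{\boldsymbol{b}}$ supported precisely on the loop $e$; this is loop-supported, so it lives at $\kv\in\tilde Z_e\subset\Sigma_{\L}$. After perturbation, the eigenspace splits into a branch remaining on $\tilde Z_e$ (loop mode: $\partial_e f\ne 0$ but $\kv\notin\Sigma_{\L}^c$) and a branch near the constant mode (where $\partial_e f(v)=0$ to leading order). Neither branch gives what the lemma requires, and your parenthetical ``(not a single loop)'' is simply false in this case. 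The paper's explicit stower construction handles precisely this situation: Lemma \ref{lem: stower} produces $\kv\in\Sigma_{\G}$ with $f_{\kv}(v_0)=1\ne 0$ (so not loop-supported) and $\partial_e f_{\kv}(v_0)=\tan(\kappa_e/2)\ne 0$.

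\textbf{Part 2.} Your treatment is too sketchy: you assert that ``any small perturbation by the constant mode breaks the equality $|f(v)|=|f(u)|$'' without control of the perturbed eigenfunction, which runs into the same degenerate-perturbation issue. The paper again gets Part 2 for free from the stower: there $|f_{\kv}(v_j)|=|\cos(\kappa_j)||f_{\kv}(v_0)|\ne|f_{\kv}(v_0)|$ since $\kappa_j\notin\frac{\pi}{2}\N$ on $\Sigma_\G$, and the contraction lemma transports these inequalities back to $\Gamma$ since boundary vertices and their values are preserved.

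In short, your bridge argument is salvageable with the paper's derivative trick, but the degenerate-point perturbation idea for the cycle/loop case needs to be replaced (or supplemented by the analysis of the reduced operator); the loop-only-cycle case is a counterexample to the current sketch.
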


We may now collect the four lemmas to prove Theorem \ref{thm: density-of-generic-and-loop-eigenfunctions}:
\begin{proof}
Lemma \ref{lem: example for def(v) not zero} provides counter examples
for both Lemma \ref{lem: Zve constant} (\ref{enu: Zve const - loops and bridges})
and Lemma \ref{lem: Zve constant} (\ref{enu: Zvu const - no loops no bridges}).
Therefore, the assumption of Lemma \ref{lem: Zve constant} is false.
Namely, there cannot be a graph $\Gamma$ with interior vertex $v$
and edge $e\in\E_{v}$ such that $Z_{v,e}\setminus Z_{v}$ contains
an open set. It now follows from Lemma \ref{lem: fat Zve} that for
any graph $\Gamma$ and any $\lv\in\left(\R_{+}\right)^{\E}$, $\mu_{\lv}\left(\Sigma_{\G}\sqcup\Sigma_{\mathcal{L}}\right)=1$.
As was shown in Lemma \ref{lem: proves Theorem}, this condition is
sufficient to prove Theorem \ref{thm: density-of-generic-and-loop-eigenfunctions}.
\end{proof}

\subsection{Stowers and a proof for the `counter example lemma' }

In order to prove this lemma we will use a method of contracting edges
to get a reduction of the problem to a small family of graphs for
which we can construct $f_{\kv}$ explicitly. 
\begin{defn}
\label{def: stower}We call a graph $\Gamma$ a \emph{stower}\footnote{The name `stower', as a wedge of \emph{star} and \emph{flower} graphs,
was coined in \cite{BanLev17}.} if it has only one interior vertex $v_{0}$, which we call the \emph{central
vertex}. In such case every edge is either a loop or a tail, so we
characterize stowers by the number of tails and loops. Given a stower
of $n$ tails and $m$ loops, we denote its vertices by $\left\{ v_{j}\right\} _{j=0}^{n}$
with corresponding tails $\left\{ e_{i}\right\} _{i=1}^{n}$ and the
loops are denoted by $\left\{ e_{i}\right\} _{i=n+1}^{n+m}$. We number
the torus coordinates such that $\kappa_{j}$ correspond to $e_{j}$.
\end{defn}

\begin{rem}
We allow the cases of either $n=0$ or $m=0$ as long as $\deg{v_{0}}=n+2m\ge3$. 
\end{rem}

\begin{lem}
\label{lem: stower}Let $\Gamma$ be a stower with $n$ tails and
$m$ loops. Then, $\kv\in\Sigma_{\G}$ if the following holds: 
\begin{align}
\forall j\le n\,\,\,\kappa_{j}\notin & \frac{\pi}{2}\N\\
\forall n<j\le m\,\,\,\kappa_{j}\notin & \pi\N\\
\sum_{j=1}^{n}\tan\left(\kappa_{j}\right)+2\sum_{j=n+1}^{m}\tan\left(\frac{\kappa_{j}}{2}\right) & =0.\label{eq: secular stower}
\end{align}
Moreover, given $\kv\in\Sigma_{\G}$, 
\begin{equation}
\forall j\le n\,\,\,\,f_{\kv}\left(v_{j}\right)\cos\left(\kappa_{j}\right)=f_{\kv}\left(v_{0}\right).\label{eq: stower boundary vertices}
\end{equation}
\end{lem}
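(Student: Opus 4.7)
The plan is to construct the eigenfunction explicitly edgewise, using the amplitude-phase parameterization of Definition \ref{def: edge restriction notations}(\ref{enu: Acos(x+phi)}), and then read off both the secular condition and genericity from the construction. For each tail $e_j$ ($j\le n$), orient $e_j$ with $x_{e_j}=0$ at $v_0$, so $f|_{e_j}(x)=A_j\cos(x-\varphi_j)$; the Neumann condition at the boundary vertex $v_j$ forces $\sin(\kappa_j-\varphi_j)=0$, hence (absorbing a sign into $A_j$) we may take $\varphi_j=\kappa_j$. For each loop $e_j$ ($n<j\le n+m$), continuity of $f$ at the single vertex $v_0$ gives $A_j\cos(\varphi_j)=A_j\cos(\kappa_j-\varphi_j)$; assuming $\kappa_j\notin 2\pi\Z$ (guaranteed by $\kappa_j\notin\pi\Z$), the only possibility (up to a sign absorbed into $A_j$) is $\varphi_j=\kappa_j/2$.

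Now set $C:=f(v_0)$. The parameterization gives $C=A_j\cos(\kappa_j)$ on tails and $C=A_j\cos(\kappa_j/2)$ on loops; the hypotheses $\kappa_j\notin\tfrac{\pi}{2}\N$ (tails) and $\kappa_j\notin\pi\N$ (loops) guarantee these cosines are non-zero, so $A_j$ is determined uniquely by $C$ on every edge. Computing the outgoing derivatives at $v_0$, a tail contributes $\partial_{e_j}f(v_0)=A_j\sin(\kappa_j)=C\tan(\kappa_j)$, and a loop contributes a total (summed over its two directions at $v_0$) of $2A_j\sin(\kappa_j/2)=2C\tan(\kappa_j/2)$. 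The Neumann condition at $v_0$ therefore reads
\begin{equation*}
C\Bigl(\sum_{j=1}^{n}\tan(\kappa_j)+2\sum_{j=n+1}^{n+m}\tan(\kappa_j/2)\Bigr)=0,
\end{equation*}
which, under assumption (\ref{eq: secular stower}), is satisfied for any $C\in\R$. Hence a non-trivial real eigenfunction $f_\kv$ in $Eig(\Gamma_\kv,1)$ exists, with $C\ne 0$.

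Next I would verify that $\kv\in\Sigma_\G$. Property I is immediate: $f(v_0)=C\ne 0$, and $f(v_j)=A_j=C/\cos(\kappa_j)\ne 0$ on every leaf. Property II at the unique interior vertex $v_0$ is $\partial_{e_j}f(v_0)=C\tan(\kappa_j)\ne 0$ on tails (since $\kappa_j\notin\pi\Z$, which follows from $\kappa_j\notin\tfrac{\pi}{2}\N$) and $\partial_{e_j}f(v_0)=C\tan(\kappa_j/2)\ne 0$ on loops (since $\kappa_j\notin 2\pi\Z$). For simplicity, suppose $\tilde f\in Eig(\Gamma_\kv,1)$; the same edgewise analysis, under our non-vanishing hypotheses on $\kappa_j$, forces $\tilde f(v_0)$ to determine $\tilde f$ uniquely: on each tail $\cos(\kappa_j)\ne 0$ rules out $\tilde f(v_0)=0$ with $A_j\ne 0$, and on each loop the exclusion of $\kappa_j\in\pi\Z$ rules out a non-zero loop-supported mode with $\tilde f(v_0)=0$. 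Hence $\dim Eig(\Gamma_\kv,1)=1$, so $\kv\in\Sigma^{reg}$. Finally, (\ref{eq: stower boundary vertices}) follows from the tail parameterization: $f(v_j)\cos(\kappa_j)=A_j\cos(\kappa_j)=f(v_0)$.

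The main (minor) obstacle is the bookkeeping in the simplicity argument, specifically excluding eigenfunctions with $f(v_0)=0$ that would live on a subset of the edges: the tail case is handled by $\cos(\kappa_j)\ne 0$, but for the loop case one must carefully use the continuity at $v_0$ together with $\kappa_j\notin\pi\Z$ to exclude antisymmetric loop-modes (which would correspond exactly to $\kv\in\Sigma_\L$, the case we are avoiding). Everything else is a direct edge-by-edge computation.
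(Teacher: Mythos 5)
Your proof is correct and takes essentially the same route as the paper: construct the eigenfunction edgewise, observe that the vertex conditions force the restriction on each tail and each loop to be determined by $f(v_0)$ alone (using the non-vanishing of the relevant cosines/sines), read off the secular condition from the Neumann balance at $v_0$, and deduce simplicity because $f(v_0)=0$ would annihilate every edge. The only cosmetic differences are your orientation convention on tails (you place $x=0$ at $v_0$ and carry a phase $\varphi_j=\kappa_j$, whereas the paper places $x_j=0$ at $v_j$ so that $\varphi_j=0$ automatically) and your use of the amplitude-phase pair on loops rather than the paper's mid-edge sine-cosine pair; these are interchangeable parameterizations and lead to the same constraint $\varphi_j=\kappa_j/2$, i.e.\ $B_j=0$. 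Your identification of the only possible obstruction to uniqueness — a loop-supported mode with $f(v_0)=0$ — is exactly right and is excluded, as you say, by $\kappa_j\notin\pi\N$ (which is also precisely the condition $\kv\notin\Sigma_{\L}$ on a stower). One small suggestion: in the uniqueness step, rather than gesturing at ``the same edgewise analysis,'' it is cleaner to state explicitly that if $\tilde f(v_0)=0$ then on each tail $\cos(\kappa_j)\ne 0$ forces $A_j=0$, and on each loop $\tilde f(v_0)=0$ together with $\kappa_j\notin\pi\N$ forces $A_j=0$ by the same continuity computation, so $\tilde f\equiv 0$.
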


\begin{proof}
Let $f\in Eig\left(\Gamma_{\kv},1\right)$, and assume that $\kv$
satisfies $\kappa_{j}\notin\frac{\pi}{2}\N$ for every $j\le n$ and
$\kappa_{j}\notin\pi\N$ for every $n<j\le m$. Using Definition \ref{def: edge restriction notations},
for every tail $e_{j}$ directed from $v_{j}$ to $v_{0}$ with coordinate
$x_{j}\in\left[0,\kappa_{j}\right]$ we have $f|_{e_{j}}\left(x_{j}\right)=A_{j}\cos\left(x_{j}\right)$.
Clearly $A_{j}=f\left(v_{j}\right)$ and therefore, 
\begin{align}
f\left(v_{0}\right) & =f\left(v_{j}\right)\cos\left(\kappa_{j}\right),\,\,\text{and}\\
\partial_{j}f\left(v_{0}\right)=-f|_{e_{j}}'\left(\kappa_{j}\right) & =f\left(v_{j}\right)\sin\left(\kappa_{j}\right).
\end{align}
This proves (\ref{eq: stower boundary vertices}), and since $\cos\left(\kappa_{j}\right)\ne0$
by the assumption, then either $f$ is supported on the loops or $f\left(v_{0}\right)\ne0$.
For every loop $e_{j}$ we can use the real-amplitudes pair in Definition
\ref{def: edge restriction notations}, so that $f|_{e_{j}}\left(x_{j}\right)=A_{j}\cos\left(x_{j}\right)+B_{j}\sin\left(x_{j}\right)$
for $x_{j}\in\left[-\frac{\kappa_{j}}{2},\frac{\kappa_{j}}{2}\right]$.
By continuity, 
\begin{align*}
f\left(v_{0}\right) & =A_{j}\cos\left(\frac{\kappa_{j}}{2}\right)-B_{j}\sin\left(\frac{\kappa_{j}}{2}\right)=A_{j}\cos\left(\frac{\kappa_{j}}{2}\right)+B_{j}\sin\left(\frac{\kappa_{j}}{2}\right),\,\,\text{and\,so}\\
B_{j}\sin\left(\frac{\kappa_{j}}{2}\right) & =0,\,\,\text{and}\\
f\left(v_{0}\right) & =A_{j}\cos\left(\frac{\kappa_{j}}{2}\right).
\end{align*}
Since $\sin\left(\frac{\kappa_{j}}{2}\right)\ne0$ by the assumption,
then $B_{j}=0$. We conclude that $f|_{e_{j}}\left(x_{j}\right)=A_{j}\cos\left(x_{j}\right)$
with 
\begin{align*}
f\left(v_{0}\right) & =A_{j}\cos\left(\frac{\kappa_{j}}{2}\right),\,\,\text{and}\\
\partial_{j+}\left(v_{0}\right) & =\partial_{j-}\left(v_{0}\right)=A_{j}\sin\left(\frac{\kappa_{j}}{2}\right).
\end{align*}
As we also assume $\cos\left(\frac{\kappa_{j}}{2}\right)\ne0$, then
$f\left(v_{0}\right)=0$ if and only if $A_{j}=0$ for all edges and
hence $f$ is the zero function. We may assume that $f\left(v_{0}\right)\ne0$,
and so with out loss of generality $f\left(v_{0}\right)=1$. In such
case, the continuity implies that 
\[
A_{j}=\begin{cases}
\frac{1}{\cos\left(\kappa_{j}\right)} & j\le n\\
\frac{1}{\cos\left(\frac{\kappa_{j}}{2}\right)} & n<j\le m
\end{cases},
\]
and the Neumann condition on $v_{0}$ gives 
\[
\sum_{j=1}^{n}\tan\left(\kappa_{j}\right)+2\sum_{j=n+1}^{m}\tan\left(\frac{\kappa_{j}}{2}\right)=0.
\]
As needed. By the construction, it follows that any other $\tilde{f}\in Eig\left(\Gamma_{\kv},1\right)$
would be proportional to $f$, and so $\kv\in\Sigma^{reg}$. As the
boundary vertex values are the $A_{j}'s$ and $f\left(v_{0}\right)=1$
then $\kv\in\Sigma_{I}$. The derivatives on $v_{0}$ (the only interior
vertex) are given by 
\[
\begin{cases}
\partial_{e_{j}}f\left(v_{0}\right)=\tan\left(\kappa_{j}\right) & j\le n\\
\partial_{e_{j}}f\left(v_{0}\right)=\tan\left(\frac{\kappa_{j}}{2}\right) & n<j\le m
\end{cases},
\]
and so by the assumption are all non zero and $\kv\in\Sigma_{II}$
and therefore in $\Sigma_{\G}$.
\end{proof}
We can now use the above to prove Lemma \ref{lem: example for def(v) not zero}:
\begin{proof}
Let $\Gamma$ be a graph with an interior vertex $v$ and edge $e\in\E_{v}$. 

First assume that $e$ is a bridge, with the bridge decomposition
$\Gamma\setminus\left\{ e\right\} =\Gamma_{1}\sqcup\Gamma_{2}$, and
 edge sets $\E_{j}$ corresponding to $\Gamma_{j}$. Recall that we use
the coordinates $\kv=\left(\kv_{1},\kappa_{e},\kv_{2}\right)$ with
$\kv_{j}\in\T^{\E_{j}}$. According to Proposition \ref{prop: simple bridge det decomposition}
the secular function is factorized to 
\begin{equation}
F\left(\kv\right)=g_{1}\left(\kv_{1}\right)g_{2}\left(\kv_{2}\right)\left(1-e^{i2\kappa_{e}}e^{i\Theta_{1}\left(\kv_{1}\right)}e^{i\Theta_{2}\left(\kv_{2}\right)}\right),\label{eq: F dec 1}
\end{equation}
and $\set{\kv\in\Sigma^{reg}}{f_{\kv}|_{e}\not\equiv0}$ is open in
$\Sigma^{reg}$ and is given by 
\begin{equation}
\set{\left(\kv_{1},\kappa_{e},\kv_{2}\right)\in\T^{\E}}{g_{1}\left(\kv_{1}\right)g_{2}\left(\kv_{2}\right)\ne0\,\,\text{and }\,\,e^{i2\kappa_{e}}e^{i\Theta_{1}\left(\kv_{1}\right)}e^{i\Theta_{2}\left(\kv_{2}\right)}=1}.\label{eq: bridge fk|e}
\end{equation}
Let $\kv=\left(\kv_{1},\kappa_{e},\kv_{2}\right)\in\Sigma^{reg}$
such that $f_{\kv}|_{e}\not\equiv0$, then the Neumann vertex condition
at $v$ implies that $f_{\kv}|_{e'}\not\equiv0$ for some $e'\in\E_{v}\setminus e=\E_{v}\cap\E_{1}$.
According to Lemma \ref{lem: F p and mk}, $f_{\kv}|_{e'}\not\equiv0$
implies that $\frac{\partial F}{\partial\kappa_{e'}}\left(\kv\right)\ne0$.
As (\ref{eq: bridge fk|e}) implies that $g_{1}\left(\kv_{1}\right)g_{2}\left(\kv_{2}\right)\ne0$,
taking derivative of (\ref{eq: F dec 1}) gives: 
\[
\frac{\partial F}{\partial\kappa_{e'}}\left(\kv\right)=-ig_{1}\left(\kv_{1}\right)g_{2}\left(\kv_{2}\right)e^{i2\kappa_{e}}e^{i\Theta_{1}\left(\kv_{1}\right)}e^{i\Theta_{2}\left(\kv_{2}\right)}\frac{\partial\Theta_{1}}{\partial\kappa_{e'}}\left(\kv_{1}\right)\ne0,
\]
which means that $\frac{\partial\Theta_{1}}{\partial\kappa_{e'}}\left(\kv_{1}\right)\ne0$
and so $e^{i\Theta_{1}}$ is not constant around $\kv_{1}$. We can
thus assume that $\kv_{1}$ is such that $e^{i\Theta_{1}\left(\kv_{1}\right)}\ne1$
otherwise we vary $\kv_{1}$ while keeping $g_{1}\left(\kv_{1}\right)\ne0$.
Let $\boldsymbol{a}$ be the amplitudes vector of $f_{\kv}$, so by
Definition \ref{def: edge restriction notations} 
\[
\partial_{e}f_{\kv}\left(v\right)=0\iff\frac{a_{e}}{a_{\hat{e}}}e^{-i\kappa_{e}}=1,
\]
and according to Lemma \ref{lem: Rev phase}, $\frac{a_{e}}{a_{\hat{e}}}e^{-i\kappa_{e}}=e^{i\Theta_{1}\left(\kv_{1}\right)}$.
As we assume that $e^{i\Theta_{1}\left(\kv_{1}\right)}\ne1$, then
$\partial_{e}f_{\kv}\left(v\right)\ne0$. We thus found a point $\kv\in\Sigma^{reg}$
for which both $f_{\kv}|_{e}\not\equiv0$ (and so $\kv\in\Sigma_{\L}^{c}$)
and $\partial_{e}f_{\kv}\left(v\right)\ne0$ as needed.

We may now assume that $e$ is not a bridge, and denote the (possibly
empty) set of tails by $\E_{\partial\Gamma}$. Let $T_{\Gamma}$ be
a choice of a spanning tree\footnote{Given a connected graph $\Gamma$, a \emph{spanning tree }is a connected
subgraph $T\subset\Gamma$ which contains all vertices of $\Gamma$
and is a tree. A spanning tree always exists but may not be unique. } of $\Gamma\setminus\left\{ e\right\} $, and let $\tilde{\Gamma}$
be the graph obtained by contracting the edges of $T_{\Gamma}\setminus\E_{\partial\Gamma}$.
See Figure \ref{fig: Spaning tree} for example. Notice that $\tilde{\Gamma}$
is a stower with $\E_{\partial\Gamma}$ as its tails and $\beta$
loops, where $\beta$ is the first Betti number of $\Gamma$. Notice
that $e$ is a loop of $\tilde{\Gamma}$ and $v$ is identified with
the central vertex of the stower $v_{0}$. If $u$ was a boundary
vertex of $\Gamma$, then it is also a boundary vertex of the stower. 

\begin{figure}
\includegraphics[width=0.7\paperwidth]{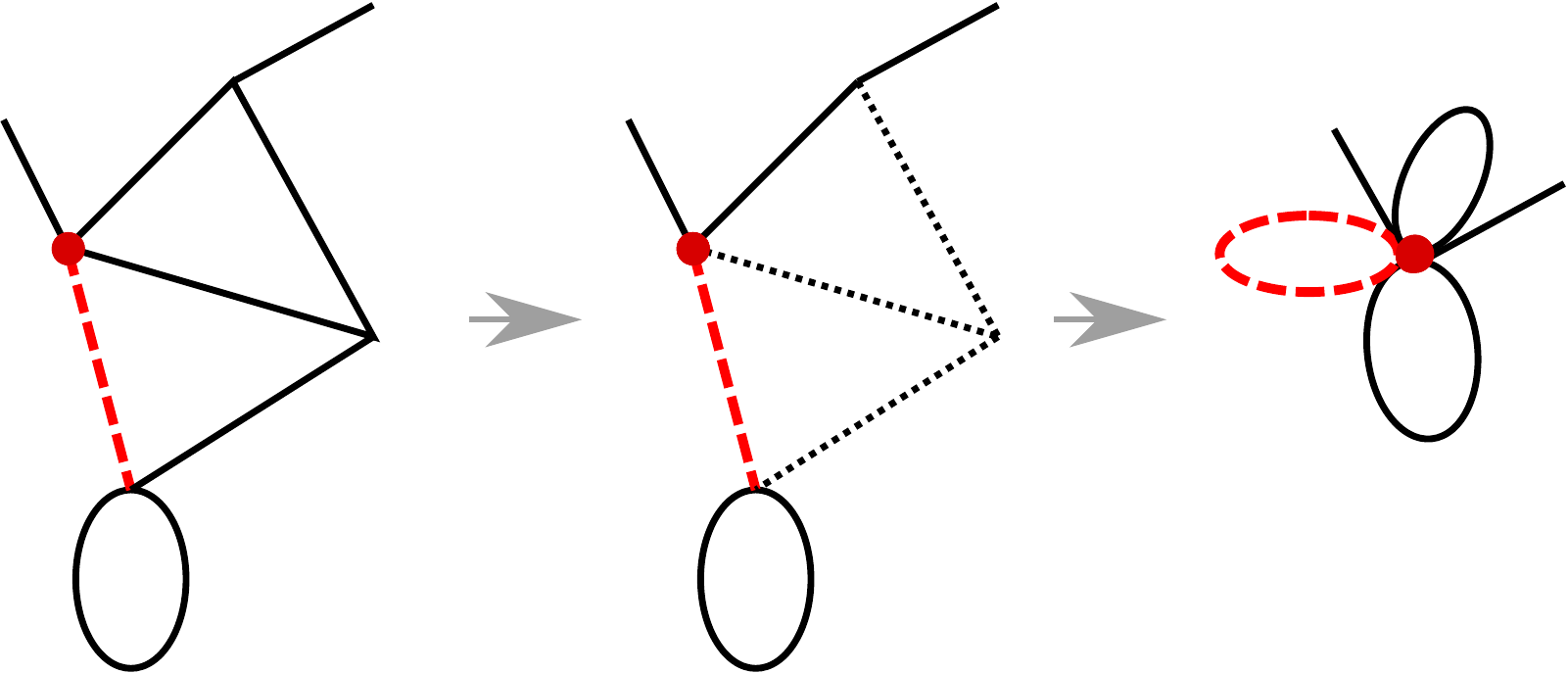}

\caption[Spanning tree contraction]{\label{fig: Spaning tree}On the left, a graph $\Gamma$ with a choice
of interior vertex $v$ and edge $e\in\protect\E_{v}$. $v$ is marked
in red and $e$ is dashed in red. On the middle, a choice of a spanning
tree $T$ of $\Gamma\setminus\left\{ e\right\} $. The edges of $T\setminus\protect\E_{\partial\Gamma}$
are dotted. On the right, $\tilde{\Gamma}$, the graph obtained by
contracting the dotted edges $T\setminus\protect\E_{\partial\Gamma}$.}
\end{figure}

Consider a real solution to $\sum_{j=1}^{\left|\partial\Gamma\right|}t_{j}+2\sum_{j=\left|\partial\Gamma\right|+1}^{\beta+\left|\partial\Gamma\right|}t_{j}=0$
such that all $t_{j}$'s are non zero. Let $\tilde{\kv}$ be such
that 
\begin{align*}
\kappa_{j} & =\begin{cases}
\tan^{-1}\left(t_{j}\right) & \forall j\le\left|\partial\Gamma\right|\\
2\tan^{-1}\left(t_{j}\right) & \forall\left|\partial\Gamma\right|<j\le\beta+\left|\partial\Gamma\right|
\end{cases},\,\,\text{with}\\
\tan^{-1}: & \R\setminus\left\{ 0\right\} \rightarrow\left(0,\frac{\pi}{2}\right)\cup\left(\frac{\pi}{2},\pi\right),
\end{align*}
so by Lemma \ref{lem: stower}, $\tilde{\kv}\in\tilde{\Sigma}_{\G}$.
Moreover, for any $u\in\partial\Gamma$ corresponding to the boundary
vertex $v_{j}$, 
\[
\left|f_{\tilde{\kv}}\left(v_{j}\right)\right|=\left|\cos\left(\kappa_{j}\right)f_{\tilde{\kv}}\left(v_{0}\right)\right|\ne\left|f_{\tilde{\kv}}\left(v_{0}\right)\right|.
\]
Let $\vec{\kappa}\in\T^{\E}$ given by the point $\tilde{\kv}$ above,
as follows: 
\[
\kv{}_{e'}=\begin{cases}
\tilde{\kappa}_{e'} & e'\in\tilde{\Gamma}\\
2\pi & e'\in T_{\Gamma}\setminus\E_{\partial\Gamma}
\end{cases},
\]
Lemma \ref{lem: contracting lemma-1} states that if $e''$ is not
a loop, $\Gamma_{\kv}$ is such that $\kv_{e''}=2\pi$, and $\Gamma_{\kv''}''$
is obtained by contracting $e''$ so that $\kv''$ and $\kv$ agree
on all other edges, then 
\[
\dim Eig\left(\Gamma_{\kv},1\right)=\dim Eig\left(\Gamma''_{\kv''},1\right),
\]
and if $\kv\in\Sigma^{reg}$, then the canonical eigenfunctions $f_{\kv}$
and $f_{\kv''}$ agree on all edges different than $e''$. Applying
Lemma \ref{lem: contracting lemma-1} finitely many times, for each
edge of $T_{\Gamma}\setminus\E_{\partial\Gamma}$, gives 
\[
\dim Eig\left(\Gamma_{\kv},1\right)=\dim Eig\left(\tilde{\Gamma}_{\tilde{\kv}},1\right).
\]
Since $\tilde{\kv}\in\tilde{\Sigma}_{\G}$, then the above implies
that $\kv\in\Sigma^{reg}$ and that for any $e'\in\tilde{\Gamma}$
(including $e$) the restriction of the canonical eigenfunction $f_{\kv}|_{e'}$
is equal to that of the $\tilde{\Gamma}$ canonical eigenfunction
$f_{\tilde{\kv}}|_{e'}$. In particular, since $\tilde{\kv}\in\tilde{\Sigma}_{\G}$,
then 
\[
\left|\partial_{e}f_{\kv}\left(v\right)\right|=\left|\partial_{e}f_{\tilde{\kv}}\left(v_{0}\right)\right|\ne0,
\]
and for every $u\in\partial\Gamma$, corresponding to some $v_{j}$
vertex of the stower, 
\[
\left|f_{\kv}\left(u\right)\right|=\left|f_{\tilde{\kv}}\left(v_{j}\right)\right|\ne\left|f_{\tilde{\kv}}\left(v_{0}\right)\right|=\left|f_{\tilde{\kv}}\left(v\right)\right|.
\]
\end{proof}

\newpage{}
\section{\label{sec: proof-of-existence} existence and symmetry of the nodal
and Neumann statistics}

In this section we construct the probabilistic setting needed in order
to discuss the statistics of the nodal surplus and Neumann surplus.
The results of this section appear in \cite{AloBanBer_cmp18,AloBan19}. 

Let $\Gamma_{\lv}$ be a standard graph and let $\G$ be the index
set of generic eigenfunctions, as in Definition \ref{def: index sets}.
Let $\sigma,\omega:\G\rightarrow\Z$ be the nodal surplus and Neumann
surplus sequences (see Definition \ref{def: Surpluses}). The discussion
on nodal and Neumann counts is restricted, by definition, to the subset
$\G$ and not to $\N$. The natural density (see Definition \ref{def: natural density})
is restricted accordingly:
\begin{defn}
\label{def: relative density} Let $\Gamma_{\lv}$ be a standard graph
and assume that $\G$ has positive density $d\left(\G\right)>0$.
If $A\subset\G$ has density, we define its relative density as, 
\[
d_{\G}\left(A\right):=\lim_{N\rightarrow\infty}\frac{\left|A\left(N\right)\right|}{\left|\G\left(N\right)\right|}=\frac{d\left(A\right)}{d\left(\G\right)}.
\]
Where $A\left(N\right)$ denotes $A\cap\left\{ 1,2,...N\right\} $
and same for $\G\left(N\right)$.
\end{defn}

\begin{rem}
\label{rem: dg not probabilty} The relative density $d_{\G}$ is
not a measure on $\G$ in general (that is with the power set as $\sigma$-algebra),
as it is not $\sigma$-additive. For example, $\forall j\in\G\,\,\,d_{\G}\left(\left\{ j\right\} \right)=0$
but $d_{\G}\left(\sqcup_{j\in\G}\left\{ j\right\} \right)=1$. Clearly,
the image of $d_{\G}$ is in $\left[0,1\right]$ with $d_{\G}\left(\emptyset\right)=0$
and $d_{\G}\left(\G\right)=1$. Therefore, given a $\sigma$-algebra
$\mathcal{F}$ on $\G$, $d_{\G}$ is a probability measure on $\left(\G,\mathcal{F}\right)$
if and only if every set in $\mathcal{F}$ has density and $d_{\G}$
is $\sigma$-additive on $\mathcal{F}$. 

We will now use the generic part of the secular manifold, $\Sigma_{\G}$
(see (\ref{eq: Sigma g})), to define a $\sigma$-algebra on $\G$
on which $d_{\G}$ will be a probability measure. 
\end{rem}

\begin{defn}
\label{def: sigma-algebra}Let $\Gamma_{\lv}$ be a standard graph
with (square-root) eigenvalues $\left\{ k_{n}\right\} _{n=0}^{\infty}$.
Let $\Sigma_{\G}$ as defined in (\ref{eq: Sigma g}). For any connected
component of $\Sigma_{\G}$, denoted by $\Sigma_{\G,i}$, we define
the index set $\G_{i}:=\set{n\in\G}{\left\{ k_{n}\lv\right\} \in\Sigma_{\G,i}}$.
We define $\mathcal{F}_{\G}$ to be the $\sigma$-algebra of $\G$
generated by all $\G_{i}$'s. 
\end{defn}

\begin{rem}
\label{rem: elements in the sigma algebra} As the $\G_{i}$'s are
disjoint, then they are the atoms of $\mathcal{F}_{\G}$, and any set
$A\in\mathcal{F}_{\G}$ is a (countable) disjoint union of atoms. 
\end{rem}

The main theorem of this section, combines Theorem 2.1 of \cite{AloBanBer_cmp18}
and Theorem 3.5 of \cite{AloBan19}: 
\begin{thm}
\label{thm:First}Let $\Gamma_{\lv}$ be a standard graph with $\lv$
rationally independent. Then,
\begin{enumerate}
\item \label{enu: triplet}The relative density, $d_{\G}$, is a probability
measure on $\left(\G,\mathcal{F}_{\G}\right)$ and we denote the probability
space by the triplet $\left(\G,\mathcal{F}_{\G},d_{\G}\right)$. Moreover,
every atom $\G_{i}$ has positive probability given by, 
\[
d_{\G}\left(\G_{i}\right)=\frac{\mu_{\lv}\left(\Sigma_{\G,i}\right)}{\mu_{\lv}\left(\Sigma_{\G}\right)}>0.
\]
\item \label{enu: sigma and omega random variables}Both $\sigma$ and $\omega$
are finite random variables on $\left(\G,\mathcal{F}_{\G},d_{\G}\right)$.
\\
In particular, for any possible value $j$, the following limits exist
and define the probabilities of the events $\sigma^{-1}\left(j\right):=\left\{ \sigma=j\right\} $
or $\omega^{-1}\left(j\right):=\left\{ \omega=j\right\} $:
\begin{align}
P\left(\sigma=j\right) & :=d_{\G}\left(\sigma^{-1}\left(j\right)\right)=\lim_{N\rightarrow\infty}\frac{\left|\set{n\in\G\left(N\right)}{\sigma_{n}=j}\right|}{\left|\G\left(N\right)\right|}\label{eq:Thm1}\\
P\left(\omega=j\right) & :=d_{\G}\left(\omega^{-1}\left(j\right)\right)=\lim_{N\rightarrow\infty}\frac{\left|\set{n\in\G\left(N\right)}{\omega_{n}=j}\right|}{\left|\G\left(N\right)\right|}.
\end{align}
\item \label{enu: symmetric simultaniuosly}The random variables $\sigma$
and $\omega$ are symmetric around $\frac{\beta}{2}$ and $\frac{\beta-\left|\partial\Gamma\right|}{2}$
simultaneously. That is, the joint probability of the event\[\left\{ \sigma=i\,\wedge\,\omega=j\right\} :=\sigma^{-1}\left(i\right)\cap\omega^{-1}\left(j\right),\]
which is given by 
\begin{equation}
P\left(\sigma=j\,\,\wedge\,\,\omega=i\right):=d_{\G}\left(\sigma^{-1}\left(j\right)\cap\omega^{-1}\left(i\right)\right),\label{eq: joint probability}
\end{equation}
satisfies the symmetry: 
\begin{align}
P\left(\sigma=j\,\,\wedge\,\,\omega=i\right) & =P\left(\sigma=\beta-j\,\,\wedge\,\,\omega=\beta-\left|\partial\Gamma\right|-i\right).\label{eq: joint symmetry}
\end{align}
\item \label{enu: levels sets are either ampty or of positive measure}Every
value of the pair $\left(\sigma,\omega\right)$ which is attained
once, appears infinitely often with positive density. Namely, if $\sigma\left(n\right)=i$
and $\omega\left(n\right)=j$ for some $n\in\G$, then $P\left(\sigma=j\,\,\wedge\,\,\omega=i\right)>0$.
\end{enumerate}
\end{thm}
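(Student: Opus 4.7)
The plan is to reduce all four items to a single structural claim: the nodal and Neumann surpluses descend to integer-valued functions $\tilde\sigma,\tilde\omega:\Sigma_\G\to\Z$ that are constant on every connected component $\Sigma_{\G,i}$. Once this is proved, items (\ref{enu: triplet}), (\ref{enu: sigma and omega random variables}) and (\ref{enu: levels sets are either ampty or of positive measure}) reduce to bookkeeping with Barra-Gaspard equidistribution (Theorem \ref{thm: BG equidistribution}), while item (\ref{enu: symmetric simultaniuosly}) follows from the $\mu_\lv$-preserving torus inversion $\I$ (Theorem \ref{thm: inversion BG}).

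First I would handle the difference $\phi(f_n)-\mu(f_n)$. By Lemma \ref{lem: nodal minus Neumann count} it equals $|\partial\Gamma|/2-\tfrac12\sum_{v\in\V_{in}}\sum_{e\in\E_v}\mathrm{sign}(f_n(v)\partial_e f_n(v))$, and Lemma \ref{lem: vertex values canonical ef and other ef } identifies each sign with $\mathrm{sign}(f_\kv(v)\partial_e f_\kv(v))$ at $\kv=\{k_n\lv\}$. The product $f_\kv(v)\partial_e f_\kv(v)$ is real-analytic on $\Sigma^{reg}$ by Lemma \ref{lem: vertex values of canonical function} and nowhere vanishing on $\Sigma_\G$ by construction, so its sign is locally constant; this produces a locally constant function $\widetilde{\phi-\mu}:\Sigma_\G\to\Z$ with $\phi(f_n)-\mu(f_n)=\widetilde{\phi-\mu}(\{k_n\lv\})$, and it then suffices to build $\tilde\sigma$ and set $\tilde\omega:=\tilde\sigma-\widetilde{\phi-\mu}$. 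For $\tilde\sigma$ I would use the amplitude-phase representation $f_n|_e(x_e)=A_e\cos(k_n x_e-\varphi_e)$, where $\varphi_e=\varphi_e(\kv)$ is real-analytic on $\Sigma_\G$: the edgewise nodal count is an integer-valued expression in $k_n l_e$ and $\varphi_e$ whose discontinuities occur only when $\varphi_e$ hits $0$ or $\pi/2$, precisely the loci $Z_v,Z_{v,e}$ removed in passing to $\Sigma_\G$. Pairing this sum with the exact eigenvalue-counting identity obtained from the argument of the secular function lets the unbounded-in-$k_n$ contributions in $\sum_e\phi_e(f_n)$ cancel those in $n$, leaving a bounded integer-valued function of $\kv$ alone.

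With $\tilde\sigma$ and $\tilde\omega$ in hand, Corollary \ref{cor: key lemma in equivalence of genericity} gives that $\Sigma_\G$ is open and Jordan in $\Sigma$; every component $\Sigma_{\G,i}$ is then open in $\Sigma$ with boundary contained in $\partial\Sigma_\G\cup\Sigma^{sing}$, hence also Jordan. Theorem \ref{thm: BG equidistribution} then yields $d(\G_i)=\mu_\lv(\Sigma_{\G,i})>0$ (strict positivity comes from Remark \ref{rem: BG measuer positivity}) and $d(\G)=\mu_\lv(\Sigma_\G)$, while $\sigma$-additivity of $d_\G$ is inherited from $\mu_\lv$, giving item (\ref{enu: triplet}). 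Since the atoms are exactly the joint level sets of $(\tilde\sigma,\tilde\omega)$, both are $\mathcal{F}_\G$-measurable and their joint distribution is supported on a finite set with probabilities $\mu_\lv(\tilde\sigma^{-1}(j)\cap\tilde\omega^{-1}(i))/\mu_\lv(\Sigma_\G)$, yielding items (\ref{enu: sigma and omega random variables}) and (\ref{enu: levels sets are either ampty or of positive measure}) (attained pairs correspond to nonempty open subsets of $\Sigma_\G$, which have strictly positive $\mu_\lv$-measure). For item (\ref{enu: symmetric simultaniuosly}) I would establish the transformation laws $\tilde\sigma\circ\I=\beta-\tilde\sigma$ and $\tilde\omega\circ\I=\beta-|\partial\Gamma|-\tilde\omega$: Lemma \ref{lem: vertex values of canonical function} gives $\varphi_e\mapsto-\varphi_e\pmod\pi$ under $\I$, and a direct edgewise count, using the uniform spacing $\pi/2$ between consecutive nodal and Neumann points in the normalized arc-length coordinate, swaps their roles in a controlled way whose aggregate saturates the uniform bounds of Theorem \ref{thm: nodal count bounds} and Theorem \ref{thm:Neumann_surplus_main-1}. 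Since $\I$ preserves $\mu_\lv$, the joint level sets have matching measures, which is exactly (\ref{eq: joint symmetry}).

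The principal obstacle is the construction of $\tilde\sigma$: producing the explicit formula expressing $\sigma(n)=\phi(f_n)-n$ as a bounded, integer-valued function of $\kv$ requires the precise matching between the Weyl-counted index $n$ and the sum of edgewise nodal counts so that the $k_n$-linear growth cancels. Everything else is a systematic application of the secular-manifold machinery developed in Section \ref{sec: The-secular-manifold}, the genericity results of Section \ref{sec: genericity}, and the measure-preserving inversion of Theorem \ref{thm: inversion BG}.
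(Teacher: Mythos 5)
Your proposal follows the same strategy as the paper's own proof: define $\boldsymbol{\sigma}$ and $\boldsymbol{\omega}$ as continuous (hence locally constant) integer-valued functions on $\Sigma_\G$ by cancelling the linearly-growing Weyl term between the edgewise nodal/Neumann counts and the spectral position $n(\kv)$ (the paper's Lemmas \ref{lem: n minus weil} and \ref{lem: counts minus weil}), and then feed the resulting constancy-on-components into the Barra--Gaspard equidistribution and the measure-preserving inversion $\I$. The paper constructs $\boldsymbol{\omega}$ directly by the same Weyl-cancellation device rather than via $\boldsymbol{\sigma}-\widetilde{\phi-\mu}$, but this is a cosmetic difference; the substance, including the transformation laws under $\I$ needed for the joint symmetry, matches.
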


Before proving this theorem we will first state a corollary of this
theorem. An inverse result, showing how $\beta$ and $\partial\Gamma$
can be obtained from the nodal and Neumann averages: 
\begin{cor}
\label{cor:Expected_values_of_nodal_and_Neumann_supluses} Let $\Gamma_{\lv}$
be a standard graph with $\lv$ rationally independent and first Betti
number $\beta$. Then, 
\begin{align*}
\mathbb{E}\left(\sigma\right)=\lim_{N\rightarrow\infty}\frac{1}{\left|\G(N)\right|}\sum_{n\in\G(N)}\sigma(n) & =\frac{\beta}{2},\,\,\text{and}\\
\mathbb{E}\left(\omega\right)=\lim_{N\rightarrow\infty}\frac{1}{\left|\G(N)\right|}\sum_{n\in\G(N)}\omega(n) & =\frac{\beta-\left|\partial\Gamma\right|}{2}.
\end{align*}
\end{cor}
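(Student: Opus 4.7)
The plan is to derive this corollary as an essentially immediate consequence of Theorem \ref{thm:First}, treating the Cesàro averages as the expectations of bounded integer-valued random variables on the probability space $(\G,\mathcal{F}_{\G},d_{\G})$ and then exploiting the symmetry (\ref{eq: joint symmetry}).

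First, I would invoke the uniform bounds established earlier: Theorem \ref{thm: nodal count bounds} gives $\sigma(n)\in\{0,1,\dots,\beta\}$, and Theorem \ref{thm:Neumann_surplus_main-1} gives $\omega(n)\in\{1-\beta-\left|\dg\right|,\dots,2\beta-1\}$. Hence $\sigma$ and $\omega$ take only finitely many integer values, and I can partition $\G\left(N\right)$ according to these level sets:
\[
\frac{1}{|\G(N)|}\sum_{n\in\G(N)}\sigma(n)=\sum_{j=0}^{\beta}j\cdot\frac{|\{n\in\G(N):\sigma(n)=j\}|}{|\G(N)|}.
\]
Since this is a finite sum, the limit as $N\to\infty$ commutes with the summation, and each ratio converges to $P\left(\sigma=j\right)$ by Theorem \ref{thm:First}(\ref{enu: sigma and omega random variables}). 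Thus the limit defining $\mathbb{E}(\sigma)$ exists and equals $\sum_{j=0}^{\beta}j\cdot P\left(\sigma=j\right)$; the same argument applied to $\omega$ gives $\mathbb{E}(\omega)=\sum_{i}i\cdot P\left(\omega=i\right)$ over its finite range.

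Next, I would extract the marginal symmetries from the joint symmetry (\ref{eq: joint symmetry}) by summing over the unused variable. Summing over $i$ yields
\[
P\left(\sigma=j\right)=\sum_{i}P\left(\sigma=j\wedge\omega=i\right)=\sum_{i}P\left(\sigma=\beta-j\wedge\omega=\beta-\left|\dg\right|-i\right)=P\left(\sigma=\beta-j\right),
\]
and summing over $j$ analogously gives $P\left(\omega=i\right)=P\left(\omega=\beta-\left|\dg\right|-i\right)$. Applying the standard symmetrization trick — writing $2\mathbb{E}(\sigma)=\sum_{j}j\,P(\sigma=j)+\sum_{j}(\beta-j)\,P(\sigma=j)=\beta$ using that the probabilities sum to $1$ — delivers $\mathbb{E}(\sigma)=\beta/2$, and the identical computation for $\omega$ with center $\beta-\left|\dg\right|$ gives $\mathbb{E}(\omega)=(\beta-\left|\dg\right|)/2$.

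There is no real obstacle here: once Theorem \ref{thm:First} is in hand, the corollary is a routine computation. The only point worth verifying carefully is that $P(\sigma=j)$ and $P(\omega=i)$ are genuine marginal probabilities summing to one — but this is automatic because $\sigma$ and $\omega$ are random variables on the probability space $(\G,\mathcal{F}_{\G},d_{\G})$ by Theorem \ref{thm:First}(\ref{enu: sigma and omega random variables}), and their finite ranges ensure that $\sum_{j}P(\sigma=j)=\sum_{i}P(\omega=i)=1$.
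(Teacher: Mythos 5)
Your proof is correct and supplies exactly the routine computation the paper leaves implicit: the paper states the corollary directly after Theorem \ref{thm:First} without proof, clearly intending the reader to extract the marginal symmetries from the joint symmetry (\ref{eq: joint symmetry}) and apply the standard symmetrization argument, which is what you do.
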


\begin{rem}
If Conjecture
\ref{conj: Neumann surplus bounds} is true, as discussed in Remark
\ref{rem: bounda correlation}, then $\sigma$ and $\omega$ are not
independent. Such lack of independence can be spotted in Figure \ref{fig: random graph},
where the $\left(\sigma,\omega\right)$ statistics of a random regular
graph is provided. The support of the joint distribution of $\left(\sigma,\omega\right)$
is not rectangular which implies that they are not independent.

\begin{figure}
\includegraphics[width=0.7\paperwidth]{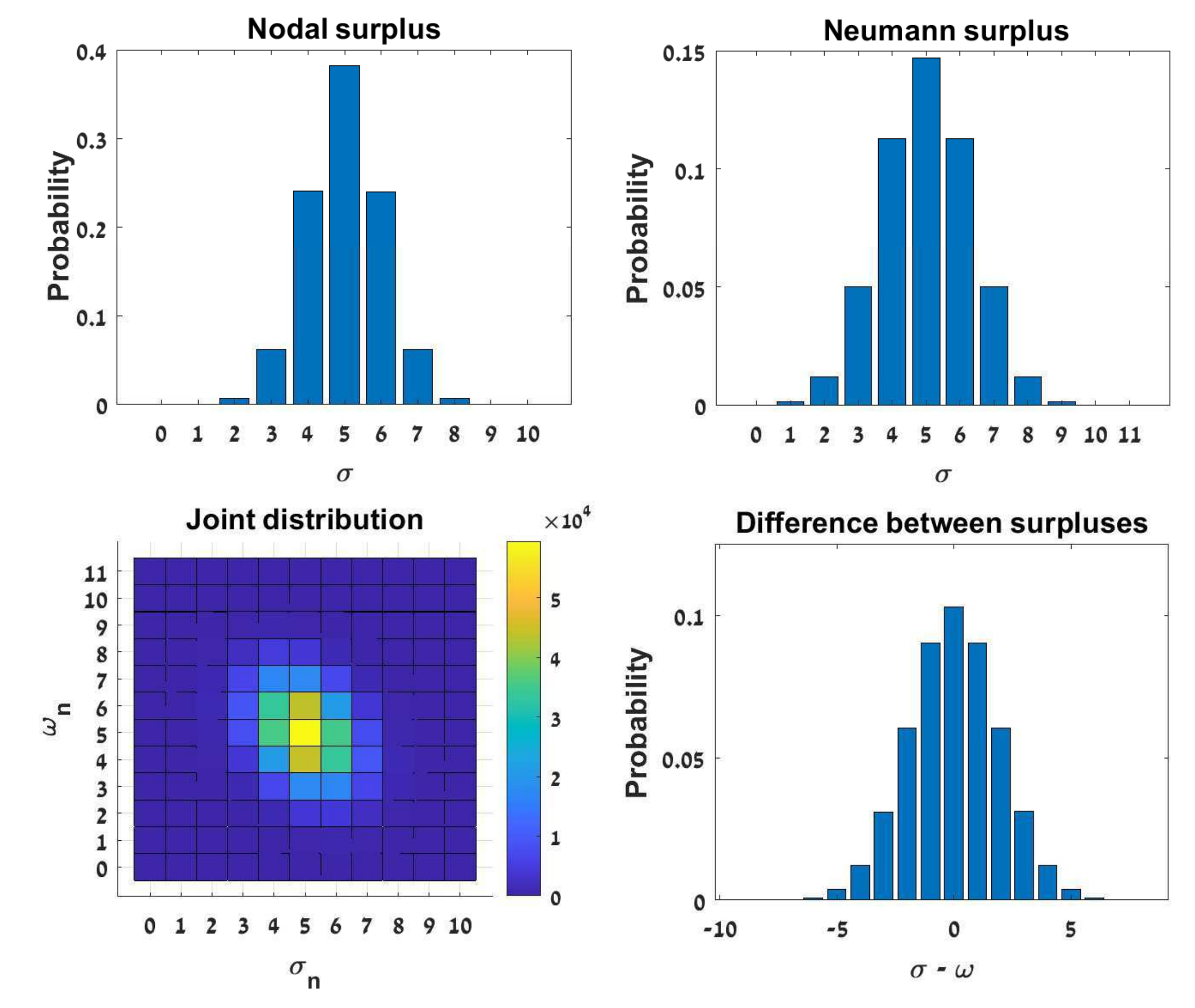}

\caption[Statistics for 6-complete graph]{\label{fig: random graph} Nodal and Neumann statistics for the complete
graph of 6 vertices using $10^{6}$ eigenfunctions. On the upper left,
the nodal surplus distribution. On the upper right, the Neumann surplus
distribution. On the bottom right, the distribution of the difference
$\sigma-\omega$. On the bottom left, a joint distribution of $\left(\sigma,\omega\right)$. }
\end{figure}
\end{rem}

The proof of Theorem 2.1 in \cite{AloBanBer_cmp18} relied on a result
called the \emph{nodal magnetic theorem }\cite{BerWey_ptrsa14} that
was discussed in the introduction and will be presented in Section
\ref{sec: Magnetic}. The result of Theorem 3.5 in \cite{AloBan19}
relied on Theorem 2.1 in \cite{AloBanBer_cmp18}, and therefore on
the nodal magnetic theorem. We will now present a proof of both, which
does not rely on the nodal magnetic theorem. 
\begin{defn}
\label{def:The-spectral-position} We define the \emph{spectral counting
function }of the standard graph $\Gamma_{\lv}$ as follows:
\[
N\left(\Gamma_{\lv},k\right):=\left|\set{0<\lambda\le k^{2}}{\lambda\,\,\text{is and eigenvalue of}\,\,\Gamma_{\lv}}\right|,
\]
where the counting includes multiplicity. If $k_{n}$ is a simple
eigenvalue of $\Gamma_{\lv}$, then its \emph{spectral position} is
$n$ and is equal to $N\left(\Gamma_{\lv},k_{n}\right)$. Given $\kv\in\Sigma^{reg}$,
we define the \emph{spectral position}\footnote{Not to be confused with $\hat{n}$, the normal to $\Sigma^{reg}$
at the point $\kv$}\emph{ }$n\left(\kv\right)$, as 

\begin{equation}
n\left(\kv\right):=N\left(\Gamma_{\kv},1\right).\label{eq: n definition-1}
\end{equation}
\end{defn}

See Figure \ref{fig: spectral position} for example of a secular
manifold colored according to the spectral position. One can see in
Figure \ref{fig: spectral position} that a line in $(0,2\pi)^\E$ connecting the origin to
a point on the $n=2$ layer will hit the $n=1$ layer once. In
the same way, such a line from the origin to a point with $n=3$ will first
cross the $n=1$ and $n=2$ layers.

\begin{figure}
\includegraphics[width=0.7\paperwidth]{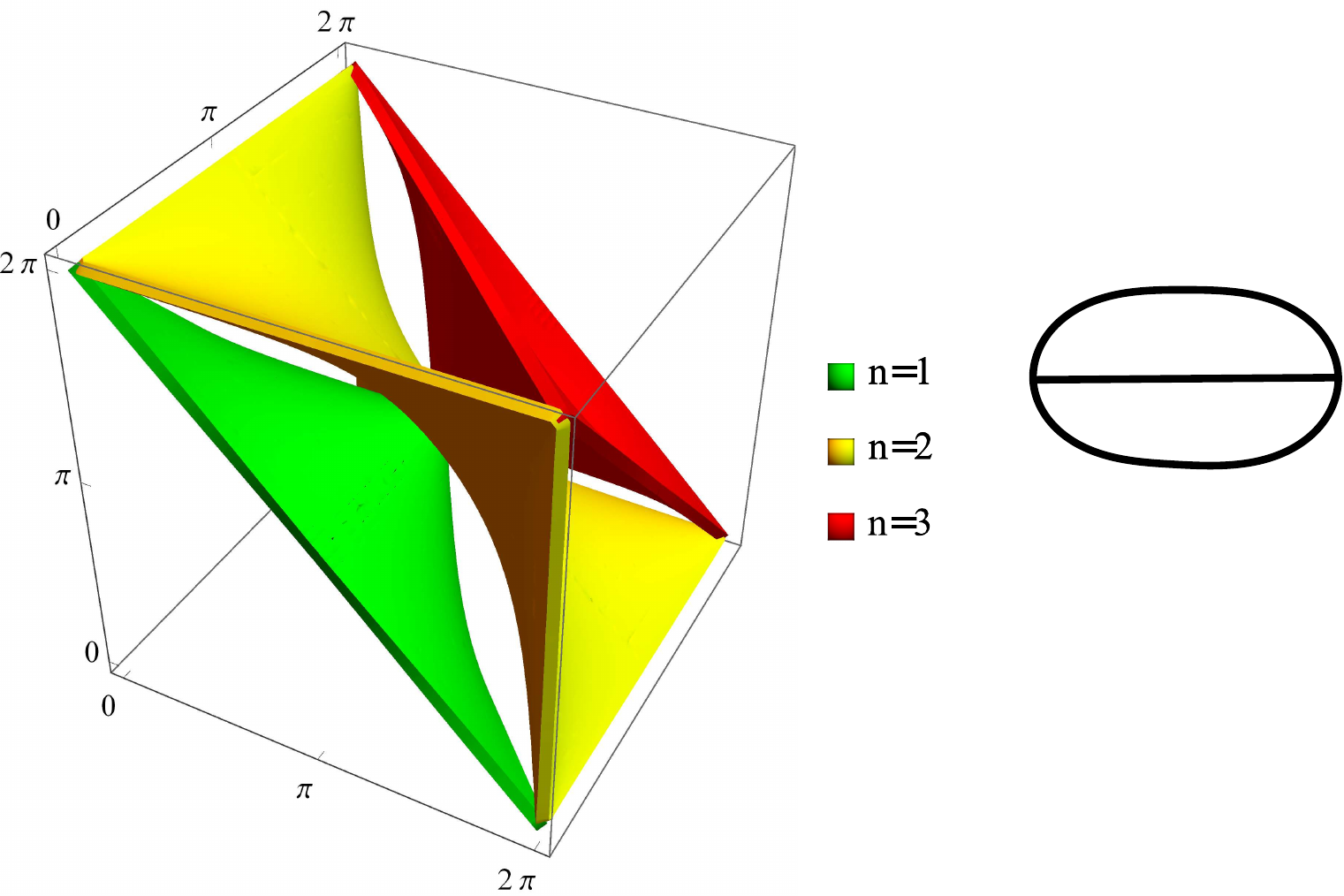}

\caption[Spectral position on the secular manifold]{\label{fig: spectral position} On the right, a `3-mandarin' graph
$\Gamma$. On the left, the secular manifold $\Sigma$, colored according
to the spectral position $n$. The first layer, $n=1$, in green.
The second layer, $n=2$, in yellow. The third layer, $n=3$, in red. }
\end{figure}

The torus $\T^{n}$ may be embedded into $\opcl{0,2\pi}^{n}$ or into
$\clop{0,2\pi}^{n}$ and the $l_{1}$ norm of a point in $\T^{n}$
will depend on the choice of embedding. To avoid obscurity, as we
will use both of these embeddings, we define the following. 
\begin{defn}
Define the two embeddings $r_{0}:\T\rightarrow\clop{0,2\pi}$ and
$r_{2\pi}:\T\rightarrow\opcl{0,2\pi}$ and extend them to $\T^{\E}$
by 
\[
r_{0}\left(\kv\right):=\sum_{e\in\E}r_{0}\left(\kappa_{e}\right),\,\,r_{2\pi}\left(\kv\right):=\sum_{e\in\E}r_{2\pi}\left(\kappa_{e}\right).
\]
\end{defn}

\begin{rem}
In particular, the total length of the standard graph $\Gamma_{\kv}$
is $r_{2\pi}\left(\kv\right)$.
\end{rem}

\subsection{\label{subsec: oscilatory part of the trace formula}The difference
$N\left(\Gamma_{\protect\lv},k\right)-\frac{L}{\pi}k$.}

Kottos and Smilansky derived the quantum graphs trace formula in \cite{KotSmi_prl97,KotSmi_ap99}
as the derivative of the trace formula of the spectral counting function
$k\mapsto N\left(\Gamma_{\lv},k\right)$. They showed that for a standard
graph $\Gamma_{\lv}$ of total length $L=\sum_{e\in\E}l_{e}$, $N\left(\Gamma_{\lv},k\right)$
is given in terms of the \emph{Weyl term $\frac{L}{\pi}k$ }and a
formal sum on the traces of the unitary evolution matrix $U_{\left\{ k\lv\right\} }$:\emph{
}
\begin{equation}
N\left(\Gamma_{\lv},k\right)=\frac{1}{2}+\frac{L}{\pi}k+\frac{1}{\pi}\Im\left[\sum_{n=1}^{\infty}\frac{1}{n}\mathrm{trace}\left(U_{\left\{ k\lv\right\} }^{n}\right)\right].
\end{equation}
The formal sum $\frac{1}{\pi}\Im\left[\sum_{n=1}^{\infty}\frac{1}{n}\mathrm{trace}\left(U_{\left\{ k\lv\right\} }^{n}\right)\right]$
should be evaluated at $k+i\epsilon$ with $\epsilon>0$, for which the sum converges, and then take the limit $\epsilon\rightarrow +0$. This sum is usually called the \emph{oscillatory part.} For convenience, we
will include the constant $\frac{1}{2}$ in our definition of the
oscillatory part: 
\begin{equation}
N_{osc}\left(\Gamma_{\lv},k\right):=N\left(\Gamma_{\lv},k\right)-\frac{L}{\pi}k.\label{eq: oscilatory part}
\end{equation}
Obviously, the summands of the formal sum depend only on $\kv=\left\{ k\lv\right\} $
and so $N_{osc}\left(\Gamma_{\lv},k_{n}\right)$ for simple $k_{n}$
should be a function on $\Sigma^{reg}$. We will present a proof that
$N_{osc}\left(\Gamma_{\lv},k_{n}\right)$ is a function of $\left\{ k_{n}\lv\right\} $
on $\Sigma^{reg}$, avoiding the formal sum, by following the method
of \cite{KotSmi_ap99} which is also presented in Section 5.1 in \cite{GnuSmi_ap06}. 
\begin{lem}
\label{lem: n minus weil}Let $\Gamma_{\lv}$ be a standard graph
of total length $L$ and let $k^{2}$ be a simple eigenvalue. Denote
$\kv=\left\{ k\lv\right\} $, then 
\begin{equation}
N_{osc}\left(\Gamma_{\lv},k\right)=n\left(\kv\right)-\frac{r_{2\pi}\left(\kv\right)}{\pi}.
\end{equation}
Moreover, the function $n\left(\kv\right)-\frac{r_{2\pi}\left(\kv\right)}{\pi}$
is continuous on $\Sigma^{reg}$ and is symmetric around $-\frac{\beta}{2}$
with respect to the inversion $\mathcal{I}\left(\kv\right)=\left\{ -\kv\right\} $.
Namely 
\[
n\left(\mathcal{I}\left(\kv\right)\right)-\frac{r_{2\pi}\left(\mathcal{I}\left(\kv\right)\right)}{\pi}=-\beta-\left(n\left(\kv\right)-\frac{r_{2\pi}\left(\kv\right)}{\pi}\right).
\]
\end{lem}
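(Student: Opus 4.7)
The strategy is to express both $N_{osc}(\Gamma_{\lv},k)$ and $n(\kv)-r_{2\pi}(\kv)/\pi$ as one and the same function of the eigenphases of $U_{\kv}$, following the approach outlined before the lemma statement and in \cite{GnuSmi_ap06}. Let $\{e^{i\theta_j(\kv)}\}_{j=1}^{2E}$ denote the spectrum of $U_{\kv}=e^{i\hat\kappa}S$ with $\theta_j\in[0,2\pi)$. The backbone observation is that for fixed $\lv$ the ray $t\mapsto t\lv$ in $\R_+^{\E}$ yields eigenphases $\tilde\theta_j(t)$ which lift continuously from $\T$ to $\R$ and are strictly increasing in $t$: indeed, applied to the ray, the derivative computation behind Lemma \ref{lem: adjugate} gives $\frac{d}{dt}\tilde\theta_j(t)=\sum_{e}l_e(|a_{j,e}|^2+|a_{j,\hat e}|^2)>0$.

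Next, I will count spectral values by counting crossings of $\tilde\theta_j(t)$ with $2\pi\Z$: a simple eigenvalue $k^2$ corresponds to exactly one $\tilde\theta_{j_0}(k)\in 2\pi\Z$, and monotonicity gives
\[
N(\Gamma_{\lv},k)=\sum_{j=1}^{2E}\left(\left\lfloor\tfrac{\tilde\theta_j(k)}{2\pi}\right\rfloor-\left\lfloor\tfrac{\tilde\theta_j(0^+)}{2\pi}\right\rfloor\right).
\]
From (\ref{eq: det U}) the cumulative trace satisfies $\sum_j \tilde\theta_j(k)-\sum_j\tilde\theta_j(0^+)=2kL$, so using $\lfloor x\rfloor=x-\{x\}$ yields
\[
N(\Gamma_{\lv},k)=\frac{kL}{\pi}+c_0-\sum_{j}\{\theta_j(\kv)/2\pi\},\qquad c_0:=\sum_j\{\theta_j(0)/2\pi\},
\]
with $c_0$ depending only on $S$. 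Thus $N_{osc}(\Gamma_{\lv},k)$ depends only on $\kv=\{k\lv\}$; call it $G(\kv)$. Applying the identical formula to $\Gamma_{\kv}$ at $k'=1$ (where the total length is $r_{2\pi}(\kv)$) gives $n(\kv)=r_{2\pi}(\kv)/\pi+G(\kv)$, which is the first claim.

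For continuity on $\Sigma^{reg}$, note that on $\Sigma^{reg}$ exactly one eigenphase $\theta_{j_0}(\kv)=0$ (by simplicity of the kernel of $1-U_{\kv}$), and no other $\theta_j$ can cross $0\bmod 2\pi$ without producing $\dim\ker(1-U_{\kv})\ge 2$, i.e., without leaving $\Sigma^{reg}$. Hence each $\{\theta_j(\kv)/2\pi\}$ is continuous on $\Sigma^{reg}$, and so is $G$. For the symmetry, $U_{\I(\kv)}=\overline{U_{\kv}}$ implies $\theta_j(\I(\kv))=-\theta_j(\kv)\bmod 2\pi$; the term $\theta_{j_0}=0$ is fixed and every other $\theta_j\in(0,2\pi)$ satisfies $\{\theta_j(\I(\kv))/2\pi\}=1-\{\theta_j(\kv)/2\pi\}$. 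Summing,
\[
\sum_j\{\theta_j(\I(\kv))/2\pi\}+\sum_j\{\theta_j(\kv)/2\pi\}=2E-1,\qquad\text{so}\qquad G(\kv)+G(\I(\kv))=2c_0-(2E-1).
\]
The symmetry around $-\beta/2$ thus reduces to the identity $2c_0-(2E-1)=-\beta$, equivalently $\dim\ker(S-1)=\beta+1$. By the isomorphism $Eig(\Gamma_{(2\pi,\dots,2\pi)},1)\cong\ker(1-U_{\mathbf 0})=\ker(1-S)$ of Lemma \ref{lem: secular bijection}, this follows by exhibiting $\beta+1$ independent eigenfunctions of $\Gamma_{(2\pi,\dots,2\pi)}$ at eigenvalue $1$: on each edge write $f|_e=A_e\cos(x)+B_e\sin(x)$; continuity forces the $A_e$ to be constant along the graph (a one-dimensional cosine mode), while the Neumann balance at each vertex constrains the $B_e$ (viewed per directed edge with $B_e=-B_{\hat e}$) to lie in the cycle space of the graph, which has dimension $\beta$.

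The main obstacle is the last step: extracting $c_0$ (equivalently $\dim\ker(S-1)$) from the bond-scattering matrix in a clean way. The inner workings of $S$ are encoded in the $(JS)_v$ block structure, and $\dim\ker(S-1)$ is not invariant under $S\mapsto JS$, so one cannot read it off from the eigenvalues of $JS$ directly. The cleanest route — and the one I will use — is to import the dimension count from the graph side via the Laplacian eigenspace on $\Gamma_{(2\pi,\dots,2\pi)}$, where the $\beta+1$ decomposition into a cosine mode and a cycle-space-indexed sine part is geometric and transparent.
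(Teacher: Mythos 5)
Your proof is correct and follows essentially the same route as the paper's: lift the eigenphases of $U_{t\lv}$ along the ray $t\mapsto t\lv$, use strict monotonicity $\frac{d}{dt}\tilde\theta_j=\sum_e l_e\left(|a_{j,e}|^2+|a_{j,\hat e}|^2\right)>0$ together with $\sum_j\frac{d}{dt}\tilde\theta_j=2L$, count crossings of $2\pi\Z$ to express $N_{osc}$ as a constant minus $\sum_j\{\theta_j(\kv)/2\pi\}$, and deduce continuity and the $\I$-symmetry on $\Sigma^{reg}$ from $U_{\I(\kv)}=\overline{U_{\kv}}$ together with the fact that on $\Sigma^{reg}$ exactly one eigenphase is pinned at $0$. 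Your floor-function bookkeeping is a cosmetic restatement of the paper's $r_{0}$ notation. The one genuine difference is in evaluating the constant: you reduce the symmetry identity to $\dim\ker(1-S)=\beta+1$ and then prove this directly by a dimension count on $Eig\left(\Gamma_{(2\pi,\dots,2\pi)},1\right)$ via Lemma \ref{lem: secular bijection} — continuity forces the cosine coefficients to be a single constant on a connected graph, while the Neumann balance constrains the antisymmetric sine coefficients to the cycle space, of dimension $\beta$. The paper instead cites \cite{KurNowCor_jpa06} for the equivalent formula $\dim\ker(1-S)=E-V+2$. Your direct argument is self-contained and confirms the remark stated right after equation (\ref{eq: equivalence of eigenspaces}) that $\dim Eig\left(\Gamma_{2\pi\boldsymbol{n}},1\right)=\beta+1$, which the paper asserts but delegates to the citation; the trade-off is a slightly longer proof against an external dependency.
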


\begin{rem}
We will use the inversion notation $\I$ also for the entries of $\kv$,
namely $\I\left(\kappa_{e}\right):=\left\{ -\kappa_{e}\right\} $,
and for the diagonal exponent matrix $e^{i\hat{\kappa}}$ such that
$e^{i\I\hat{\kappa}}=e^{-i\hat{\kappa}}$.
\end{rem}

\begin{proof}
Let $U_{\kv}=e^{i\hat{\kappa}}S$ be the unitary evolution matrix
as defined in (\ref{eq: U=00003Dexp J S}), and consider the one parameter
family $t\mapsto U_{\left\{ t\lv\right\} }$ for $t>0$. Let $\left\{ e^{i\theta_{j}}\right\} _{j=1}^{2E}$
be a continuous choice of eigenvalues of $U_{\left\{ t\lv\right\} }$
with orthonormal eigenvectors $\left\{ a_{j}\right\} _{j=1}^{2E}$.
All $\left\{ e^{i\theta_{j}}\right\} _{j=1}^{2E}$ and $\left\{ a_{j}\right\} _{j=1}^{2E}$
depend smoothly on $t$ as long as the $\left\{ e^{i\theta_{j}}\right\} _{j=1}^{2E}$
eigenvalues are simple, and are continuous in $t$ when there are non-simple
eigenvalues. For readability we do not write the $t$ dependence explicitly.
Denote the tuple of angles by $\vec{\theta}\in\T^{2E}$ with derivatives
$\frac{d}{dt}\vec{\theta}\in\R^{2E}$. It is not hard to deduce from
(\ref{eq: U=00003Dexp J S}) that $\frac{d}{dt}U_{t\lv}=i\hat{L}U_{t\lv}$
where $\hat{L}$ is diagonal with entries $\left(\hat{L}\right)_{e,e}=\left(\hat{L}\right)_{\hat{e},\hat{e}}=l_{e}$.
We calculate $\frac{d}{dt}\theta_{j}$ by taking derivative of the
relation $\left(e^{i\theta_{j}}-U_{t\lv}\right)a_{j}=0$, which gives
\[
\left(i\frac{d}{dt}\theta_{j}e^{i\theta_{j}}-i\hat{L}U_{t\lv}\right)a_{j}+\left(e^{i\theta_{j}}-U_{t\lv}\right)\frac{d}{dt}a_{j}=0.
\]
Taking the inner product of the above equation with $a_{j}$ (where
$\left\langle *,*\right\rangle $ denote the standard $\C^{\vec{\E}}$
inner product) gives $\left\langle a_{j},\left(e^{i\theta_{j}}-U_{t\lv}\right)\frac{d}{dt}a_{j}\right\rangle =0$
for the right summand and therefore 
\[
\left\langle a_{j},i\frac{d}{dt}\theta_{j}e^{i\theta_{j}}a_{j}\right\rangle -\left\langle a_{j},i\hat{L}U_{t\lv}a_{j}\right\rangle =ie^{i\theta}\left(\frac{d}{dt}\theta_{j}-\left\langle a_{j},\hat{L}a_{j}\right\rangle \right)=0.
\]
We conclude that whenever it is defined (i.e $e^{i\theta_{j}}$ is
simple), $\frac{d}{dt}\theta_{j}$ is given by: 
\[
\frac{d}{dt}\theta_{j}=\left\langle a_{j},\hat{L}a_{j}\right\rangle =\sum_{e\in\E}l_{e}\left(\left|\left(a_{j}\right)_{e}\right|^{2}+\left|\left(a_{j}\right)_{\hat{e}}\right|^{2}\right)>0.
\]
In particular, if we denote $L_{min}=\min_{e\in\E}\left\{ l_{e}\right\} $
and $L_{max}=\max_{e\in\E}\left\{ l_{e}\right\} $, then the normalization
\[
\norm{a_{j}}^{2}=\sum_{e\in\E}\left|\left(a_{j}\right)_{e}\right|^{2}+\left|\left(a_{j}\right)_{\hat{e}}\right|^{2}=1,
\]
implies that 
\[
L_{min}\le\sum_{e\in\E}l_{e}\left(\left|\left(a_{j}\right)_{e}\right|^{2}+\left|\left(a_{j}\right)_{\hat{e}}\right|^{2}\right)\le L_{max}.
\]
Therefore, $\frac{d}{dt}\theta_{j}$ is strictly positive, and is
bounded by: 
\[
L_{min}\le\frac{d}{dt}\theta_{j}\le L_{max}.
\]
Moreover, as $\left\{ a_{j}\right\} _{j=1}^{2E}$ is an orthonormal
basis, then 
\begin{equation}
\sum_{j=1}^{2E}\frac{d}{dt}\theta_{j}=\sum_{j=1}^{2E}\left\langle a_{j},\hat{L}a_{j}\right\rangle =\mathrm{trace}\left(\hat{L}\right)=2L.\label{eq: sum of dtheth=00003D2L}
\end{equation}
If we now consider a simple eigenvalue $k^{2}$, then $N\left(\Gamma_{\lv},k\right)$
can be calculated by: 
\begin{align}
N\left(\Gamma_{\lv},k\right) & =\sum_{j=1}^{2E}\left|\set{0<t\le k}{e^{i\theta_{j}}|_{t}=1}\right|.
\end{align}
Since $\frac{d}{dt}\theta_{j}>0$, then each summand $\left|\set{0<t\le k}{e^{i\theta_{j}}|_{t}=1}\right|$
is given by integrating $\frac{d}{dt}\theta_{j}$ as follows. 
\[
\left|\set{0<t\le k}{e^{i\theta_{j}}|_{t}=1}\right|=\frac{r_{0}\left(\theta_{j}|_{0}\right)}{2\pi}+\frac{1}{2\pi}\int_{0}^{k}\frac{d}{dt}\theta_{j}dt-\frac{r_{0}\left(\theta_{j}|_{k}\right)}{2\pi},
\]
and summing over $j$, using $\sum_{j=1}^{2E}\frac{1}{2\pi}\int_{0}^{k}\frac{d}{dt}\theta_{j}dt=\frac{1}{2\pi}\int_{0}^{k}2Ldt=\frac{L}{\pi}k$,
gives:
\begin{align}
N\left(\Gamma_{\lv},k\right) & =\frac{L}{\pi}k+\sum_{j=1}^{2E}\frac{r_{0}\left(\theta_{j}|_{0}\right)}{2\pi}-\frac{r_{0}\left(\theta_{j}|_{k}\right)}{2\pi},\,\,\text{and so }\\
N_{osc}\left(\Gamma_{\lv},k\right) & =\frac{r_{0}\left(\tv|_{0}\right)}{2\pi}-\frac{r_{0}\left(\tv|_{k}\right)}{2\pi}.
\end{align}
Observe that $U_{t\lv}$ at $t=0$ is simply $S$, the real orthogonal
scattering matrix in Definition \ref{eq: S explicietly}, and therefore
its non-real eigenvalues come in conjugated pairs, so 
\[
\frac{r_{0}\left(\tv|_{0}\right)}{2\pi}=\frac{1}{2}\left|\set j{e^{i\theta_{j}}|_{0}\ne1}\right|=E-\frac{1}{2}\dim\ker\left(1-S\right).
\]
In \cite{KurNowCor_jpa06} (a correction to \cite{KurNow_jpa05})
it was shown that $\dim\ker\left(1-S\right)=E-V+2$, hence $\frac{r_{0}\left(\tv|_{0}\right)}{2\pi}=\frac{E+V}{2}-1$.
We may conclude that, 
\begin{equation}
N_{osc}\left(\Gamma_{\lv},k\right)=\frac{E+V}{2}-1-\frac{r_{0}\left(\tv|_{k}\right)}{2\pi}.\label{eq: N counting function}
\end{equation}
As $U_{t\lv}$ for $t=k$ is equal to $U_{\kv}$, then $\frac{r_{0}\left(\tv|_{k}\right)}{2\pi}$
is a function of $\kv$, and therefore $N_{osc}\left(\Gamma_{\lv},k\right)$
is a function of $\kv$, and so $N_{osc}\left(\Gamma_{\lv},k\right)=N_{osc}\left(\Gamma_{\kv},1\right)$
for any simple eigenvalue $k>0$. As the graph $\Gamma_{\kv}$ is
of total length $r_{2\pi}\left(\kv\right)$, we may conclude that
\begin{equation}
N_{osc}\left(\Gamma_{\lv},k\right)=n\left(\kv\right)-\frac{r_{2\pi}\left(\kv\right)}{\pi}.
\end{equation}
In order to prove that $n\left(\kv\right)-\frac{r_{2\pi}\left(\kv\right)}{\pi}$
is continuous over $\Sigma^{reg}$, we can use (\ref{eq: N counting function}),
denoting the angles of the eigenvalues of $U_{\kv}$ by $\tv|_{\kv}$:
\begin{equation}
\forall\kv\in\Sigma^{reg}\,\,\,\,n\left(\kv\right)-\frac{r_{2\pi}\left(\kv\right)}{\pi}=\frac{E+V}{2}-1-\frac{r_{0}\left(\tv|_{\kv}\right)}{2\pi}.\label{eq: n}
\end{equation}
So we need to prove that $r_{0}\left(\tv|_{\kv}\right)$ is continuous
on $\Sigma^{reg}$. Since $U_{\kv}$ is continuous in $\kv$, then
so does its eigenvalues. Therefore $\tv|_{\kv}:\Sigma^{reg}\rightarrow\T^{2E}$
is continuous. Let $\kv\in\Sigma^{reg}$ and number the eigenvalues
such that $e^{i\theta_{1}}|_{\kv}=1$ and hence $e^{i\theta_{j}}|_{\kv}\ne1$
for all $j>1$ (as $\dim\ker\left(1-U_{\kv}\right)$). By the continuity
of the eigenvalues, there is a neighborhood of $\kv$, $O\subset\Sigma^{reg}$,
such that $\forall\kv'\in O$,~$e^{i\theta_{1}}|_{\kv'}=1$ and $e^{i\theta_{j}}|_{\kv'}\ne1$
for all $j>1$. In particular, $r_{0}\left(\tv|_{\kv'}\right)=0+\sum_{j>1}r_{0}\left(\theta_{j}|_{\kv'}\right)$
in $O$, and each $r_{0}\left(\theta_{j}|_{\kv'}\right)$ is continuous
since $e^{i\theta_{j}}|_{\kv'}\ne1$. It follows that $r_{0}\left(\tv|_{\kv'}\right)$
is continuous in $O$ and therefore on $\Sigma^{reg}$ as $\kv$ was
arbitrary. This proves that $n\left(\kv\right)-\frac{r_{2\pi}\left(\tv|_{\kv}\right)}{2\pi}$
is continuous on $\Sigma^{reg}$ as needed.

To prove the symmetry recall that 
\[
U_{\mathcal{I}\left(\kv\right)}=e^{i\mathcal{I}\left(\hat{\kappa}\right)}S=e^{-i\hat{\kappa}}S=\overline{U_{\kv}},
\]
and so for any $\kv\in\Sigma^{reg}$, 
\begin{equation}
\frac{r_{0}\left(\tv|_{\I\left(\kv\right)}\right)}{2\pi}=\frac{1}{2\pi}\sum_{j=1}^{2E}\begin{cases}
0 & e^{i\theta_{j}}=1\\
2\pi-r_{0}\left(\theta_{j}|_{\kv'}\right) & e^{i\theta_{j}}\ne1
\end{cases}=2E-1-\frac{r_{0}\left(\tv|_{\kv}\right)}{2\pi}.
\end{equation}
Using (\ref{eq: n}) and $\beta=E-V+1$, we get the needed result:
\begin{align*}
n\left(\mathcal{I}\left(\kv\right)\right)-\frac{r_{2\pi}\left(\mathcal{I}\left(\kv\right)\right)}{\pi} & =\frac{E+V}{2}-1-\left(2E-1-\frac{r_{0}\left(\tv|_{\kv}\right)}{2\pi}\right)\\
= & -E+V-1-\left(\frac{E+V}{2}-1-\frac{r_{0}\left(\tv|_{\kv}\right)}{2\pi}\right)\\
= & -\beta-\left(n\left(\kv\right)-\frac{r_{2\pi}\left(\kv\right)}{\pi}\right).
\end{align*}
\end{proof}

\subsection{The differences $\phi\left(f\right)-\frac{L}{\pi}k$ and $\mu\left(f\right)-\frac{L}{\pi}k$}

Given a standard graph $\Gamma_{\lv}$ with a generic eigenfunction
$f$ of eigenvalue $k^{2}$, the nodal surplus and Neumann surplus
of $f$ are given by $\phi\left(f\right)-N\left(\Gamma_{\lv},k\right)$
and $\mu\left(f\right)-N\left(\Gamma_{\lv},k\right)$ as in Definition
\ref{def: Surpluses}. Since $N\left(\Gamma_{\lv},k\right)-\frac{L}{\pi}k$
was shown to be a continuous function on $\Sigma^{reg}$, we will
be able to show that the nodal and Neumann surplus are continuous
functions on $\Sigma_{\G}$ by proving that the differences $\phi\left(f\right)-\frac{L}{\pi}k$
and $\mu\left(f\right)-\frac{L}{\pi}k$ are continuous functions on
$\Sigma_{\G}$. 
\begin{lem}
\label{lem: counts minus weil}Let $\Gamma_{\lv}$ be a standard graph
of total length $L$ and let $f$ be a generic eigenfunction of eigenvalue
$k^{2}$. Denote $\kv=\left\{ k\lv\right\} \in\Sigma_{\G}$ and let
$f_{\kv}$ be the associated canonical eigenfunction. Then, 
\begin{align}
\phi\left(f\right)-\frac{L}{\pi}k & =\phi\left(f_{\kv}\right)-\frac{r_{2\pi}\left(\kv\right)}{\pi}\label{eq: nodal minus Weil}\\
\mu\left(f\right)-\frac{L}{\pi}k & =\mu\left(f_{\kv}\right)-\frac{r_{2\pi}\left(\kv\right)}{\pi}.\label{eq: Neumann minus weil}
\end{align}
Moreover, both $\phi\left(f_{\kv}\right)-\frac{r_{2\pi}\left(\kv\right)}{\pi}$
and $\mu\left(f_{\kv}\right)-\frac{r_{2\pi}\left(\kv\right)}{\pi}$
are continuous on $\Sigma^{\G}$, and satisfy the following symmetry:
\begin{align}
\phi\left(f_{\mathcal{I}\left(\kv\right)}\right)-\frac{r_{2\pi}\left(\mathcal{I}\left(\kv\right)\right)}{\pi} & =-\left(\phi\left(f_{\kv}\right)-\frac{r_{2\pi}\left(\kv\right)}{\pi}\right)\label{eq: symmetry of nodal minus weil}\\
\mu\left(f_{\mathcal{I}\left(\kv\right)}\right)-\frac{r_{2\pi}\left(\mathcal{I}\left(\kv\right)\right)}{\pi} & =-\left(\mu\left(f_{\kv}\right)-\frac{r_{2\pi}\left(\kv\right)}{\pi}\right)-\left|\partial\Gamma\right|.\label{eq: symmetry of Neumann minus weil}
\end{align}
\end{lem}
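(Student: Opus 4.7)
The plan is to reduce all four statements to edge-by-edge counts of lattice points in explicit open intervals, using properties I and II to ensure that the endpoints of the counted intervals are never integers.

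For the identities \eqref{eq: nodal minus Weil} and \eqref{eq: Neumann minus weil}, Lemmas \ref{lem: |ae|=00003D|ae'|} and \ref{lem: vertex values canonical ef and other ef } imply that $f$ and $f_{\kv}$ share the same amplitude-phase pair $(A_e,\varphi_e)$ on each edge $e$, so $f|_e(x)=A_e\cos(kx-\varphi_e)$ on $[0,l_e]$ while $f_{\kv}|_e(x')=A_e\cos(x'-\varphi_e)$ on $[0,\kappa_e]$, with $kl_e-\kappa_e=2\pi m_e$ for some $m_e\in\Z_{\ge 0}$. The interior nodal points on each correspond to integers in an open interval of the form $((-\varphi_e-\pi/2)/\pi,\,(\star-\varphi_e-\pi/2)/\pi)$, with $\star=kl_e$ or $\star=\kappa_e$; the two intervals share a common left endpoint and their lengths differ by $2m_e$. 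Property I excludes integers at either right endpoint (which would place a zero at a vertex), so the difference in counts is exactly $2m_e$. Summing over edges and using $\sum_e 2m_e=(kL-r_{2\pi}(\kv))/\pi$ yields \eqref{eq: nodal minus Weil}. Replacing $\cos$-zeros by $\sin$-zeros (which are integer translates of $\varphi_e$) and using property II, with a minor adjustment at tail edges where $\varphi_e=0$ makes the left endpoint the integer $0$, gives \eqref{eq: Neumann minus weil}.

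For continuity on $\Sigma_{\G}$, fix $\kv_0\in\Sigma_{\G}$. Under a continuous local choice of sign the amplitudes vary continuously. If $\kappa_e\neq 0$ on the torus at $\kv_0$ for every $e$, then $r_{2\pi}$ is continuous near $\kv_0$, and properties I, II forbid any nodal or Neumann point from leaving the interior of an edge through a vertex as $\kv$ varies; hence $\phi(f_{\kv})$ and $\mu(f_{\kv})$ are locally constant and the two differences are continuous. The delicate case is $\kappa_e=0$ at $\kv_0$ for some $e$. If $e$ were a tail with boundary vertex $v$ and interior vertex $u$, then parameterizing from $v$ gives $\varphi_e=0$, and $\partial_ef_{\kv_0}(u)=A_e\sin(2\pi)=0$, contradicting property II; so $e$ must be a loop or inner edge, and genericity forces $\varphi_e\in(0,\pi)\setminus\{\pi/2\}$. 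A direct count shows $\cos(x-\varphi_e)$ and $\sin(x-\varphi_e)$ each have exactly two zeros in the open interval $(0,2\pi)$. Approaching $\kv_0$ with $\kappa_e\to 0^+$ gives edge contribution $\phi,\mu=0$ and $r_{2\pi}(\kappa_e)/\pi\to 0$; with $\kappa_e\to 0^-$ (equivalently $r_{2\pi}(\kappa_e)\to 2\pi$) both counts tend to $2$ and $r_{2\pi}/\pi\to 2$. The jumps cancel, yielding continuity at $\kv_0$.

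For the symmetries, Lemma \ref{lem: vertex values of canonical function} gives that $f_{\I(\kv)}$ has conjugate amplitudes, which via Lemma \ref{lem: |ae|=00003D|ae'|} translates to $f_{\I(\kv)}|_e(x)=A_e\cos(x+\varphi_e)$ on an edge of length $2\pi-\kappa_e$. The interior $\cos$-zeros on $(0,\kappa_e)$ for $f_{\kv}$ and on $(0,2\pi-\kappa_e)$ for $f_{\I(\kv)}$ correspond, after reflecting one of them, to integers in two complementary open sub-intervals of an open interval of total length $2$. Genericity prevents integers at the join point and at the outer endpoints, so the total count is exactly $2$, giving $\phi(f_{\kv}|_e)+\phi(f_{\I(\kv)}|_e)=2$. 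Summing over edges and using $r_{2\pi}(\kv)+r_{2\pi}(\I(\kv))=2\pi E$ (with any seam contributions of $4\pi$ in the $r$-sum offset by corresponding contributions of $4$ in the $\phi$-sum) yields \eqref{eq: symmetry of nodal minus weil}. The analogous count of $\sin$-zeros gives $\mu(f_{\kv}|_e)+\mu(f_{\I(\kv)}|_e)=2$ on non-tail edges; on each of the $|\partial\Gamma|$ tail edges, $\varphi_e=0$ places the join point at the integer $0$, removing one count and giving sum $1$, which is exactly what produces the $-|\partial\Gamma|$ correction in \eqref{eq: symmetry of Neumann minus weil}. The main technical obstacle is the careful bookkeeping at the seam $\{\kappa_e=0\}$ and the distinction between tail and non-tail edges: one must verify that the lattice-point intervals have non-integer endpoints exactly when $\kv\in\Sigma_{\G}$, and that the tail contribution produces the precise $-|\partial\Gamma|$ offset in the Neumann symmetry.
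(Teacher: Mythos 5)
Your proof is correct and follows essentially the same edge-by-edge approach as the paper: on each edge the nodal and Neumann counts are reduced to counting trigonometric zeros, genericity (Properties I and II, plus $\varphi_e\neq\pi/2$ at interior vertices) excludes the boundary cases, and continuity and the inversion symmetry are verified edge-by-edge. The paper encodes the same bookkeeping via $2\left\lfloor kl_e/(2\pi)\right\rfloor$ and the signs of $f_{\kv}(v)f_{\kv}(u)$ and $\partial_e f_{\kv}(v)\partial_e f_{\kv}(u)$, and establishes continuity by noting that a fixed sign of $f_{\kv}(v)f_{\kv}(u)$ on each connected component of $\Sigma_{\G}$ precludes $e^{i\kappa_e}$ from hitting the problematic value, which is a cosmetically different way of phrasing your explicit lattice-point and seam-cancellation argument.
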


\begin{proof}
Denote the nodal and Neumann counts of the restriction $f|_{e}$ for
an edge $e$ by $\phi\left(f|_{e}\right)$ and $\mu\left(f|_{e}\right)$.
Since $f|_{e}\not\equiv0$ as it is generic, the by Definition \ref{def: edge restriction notations},
it is periodic on $e$ with period $\frac{2\pi}{k}$ and in every
interval $\opcl{a,b}$ of length $\frac{\pi}{k}$ it gets exactly
one nodal and one Neumann point. Therefore, both $\phi\left(f|_{e}\right)$
and $\mu\left(f|_{e}\right)$ are given by twice the number of periods,
$2\floor{\frac{kl_{e}}{2\pi}}$, plus a correction of either $0,1$
or $2$. In particular if the reminder (in half periods) $\frac{kl_{e}}{\pi}-2\floor{\frac{kl_{e}}{2\pi}}\le1$
then the corrections are either $0$ or $1$, and if $\frac{kl_{e}}{\pi}-2\floor{\frac{kl_{e}}{2\pi}}\ge1$
then they are either $1$ or $2$. 

Let $v,u$ be the vertices of $e$. To determine $\phi\left(f|_{e}\right)-2\floor{\frac{kl_{e}}{2\pi}}$,
notice that $f\left(v\right)f\left(u\right)<0$ implies $\phi\left(f|_{e}\right)$
is odd ,in which case $\phi\left(f|_{e}\right)=2\floor{\frac{kl_{e}}{2\pi}}+1$.
Otherwise $f\left(v\right)f\left(u\right)>0$ (by genericity) and
therefore $\phi\left(f|_{e}\right)$ is even, so $\phi\left(f|_{e}\right)-2\floor{\frac{kl_{e}}{2\pi}}$
is either $0$ or $2$. In such case 
\[
\phi\left(f|_{e}\right)=\begin{cases}
2\floor{\frac{kl_{e}}{2\pi}} & kl_{e}-2\pi\floor{\frac{kl_{e}}{2\pi}}\in\clop{0,\pi}\\
2\floor{\frac{kl_{e}}{2\pi}}+2 & kl_{e}-2\pi\floor{\frac{kl_{e}}{2\pi}}\in\left(\pi,2\pi\right)
\end{cases}.
\]
We excluded $kl_{e}-2\pi\floor{\frac{kl_{e}}{2\pi}}=\pi$ as it implies
$f\left(v\right)f\left(u\right)<0$. Using $\kv=\left\{ k\lv\right\} $
we can write $kl_{e}-2\pi\floor{\frac{kl_{e}}{2\pi}}=r_{0}\left(\kappa_{e}\right)$,
and given the canonical eigenfunction $f_{\kv}$ and Lemma \ref{lem: vertex values canonical ef and other ef }
we get $f\left(v\right)f\left(u\right)=f_{\kv}\left(v\right)f_{\kv}\left(u\right)$.
Therefore, the difference $\phi\left(f|_{e}\right)-\frac{kl_{e}}{\pi}$
is a $\kv$ dependent function: 
\begin{align}
\phi\left(f|_{e}\right)-\frac{kl_{e}}{\pi} & =\begin{cases}
-\frac{r_{0}\left(\kappa_{e}\right)}{\pi} & f_{\kv}\left(v\right)f_{\kv}\left(u\right)>0\,\,and\,\,r_{0}\left(\kappa_{e}\right)\in\clop{0,\pi}\\
1-\frac{r_{0}\left(\kappa_{e}\right)}{\pi} & f_{\kv}\left(v\right)f_{\kv}\left(u\right)<0\\
2-\frac{r_{0}\left(\kappa_{e}\right)}{\pi} & f_{\kv}\left(v\right)f_{\kv}\left(u\right)>0\,\,and\,\,r_{0}\left(\kappa_{e}\right)\in\left(\pi,2\pi\right)
\end{cases}.\label{eq: nodal minus weil explicit-1}
\end{align}
In particular, 
\begin{equation}
\phi\left(f|_{e}\right)-\frac{kl_{e}}{\pi}=\phi\left(f_{\kv}|_{e}\right)-\frac{r_{2\pi}\left(\kappa_{e}\right)}{\pi}.
\end{equation}
Let $M$ be a connected component of $\Sigma_{\G}$. According to
Lemma \ref{lem: vertex values of canonical function}, $f_{\kv}\left(v\right)f_{\kv}\left(u\right)$
is a real, continuous and non vanishing function on $M$ so it has
a fixed sign. The vertex values, as seen from Definition \ref{def: edge restriction notations},
satisfy $f_{\kv}\left(v\right)=f_{\kv}\left(u\right)$ in case that
$e^{i\kappa_{e}}=1$ and $f_{\kv}\left(v\right)=-f_{\kv}\left(u\right)$
if $e^{i\kappa_{e}}=-1$. It follows that if $f_{\kv}\left(v\right)f_{\kv}\left(u\right)<0$
on $M$, then $e^{i\kappa_{e}}\ne1$ on $M$ and therefore $\frac{r_{0}\left(\kappa_{e}\right)}{\pi}$
is continuous on $M$ and so does $\phi\left(f_{\kv}|_{e}\right)-\frac{r_{2\pi}\left(\kappa_{e}\right)}{\pi}=1-\frac{r_{0}\left(\kappa_{e}\right)}{\pi}$.
If $f_{\kv}\left(v\right)f_{\kv}\left(u\right)>0$ on $M$, then $e^{i\kappa_{e}}\ne-1$
and as $-\frac{r_{0}\left(\kappa_{e}\right)}{\pi}$ and $2-\frac{r_{0}\left(\kappa_{e}\right)}{\pi}$
agree when $e^{i\kappa_{e}}=1$ so the function 
\[
\phi\left(f_{\kv}|_{e}\right)-\frac{r_{2\pi}\left(\kappa_{e}\right)}{\pi}=\begin{cases}
-\frac{r_{0}\left(\kappa_{e}\right)}{\pi} & r_{0}\left(\kappa_{e}\right)\in\clop{0,\pi}\\
2-\frac{r_{0}\left(\kappa_{e}\right)}{\pi} & r_{0}\left(\kappa_{e}\right)\in\left(\pi,2\pi\right)
\end{cases}
\]
 is continuous on $M$. To prove the symmetry, consider the point
$\I\left(\kv\right)$ and notice that 
\[
r_{0}\left(\I\left(\kappa_{e}\right)\right)=\begin{cases}
2\pi-r_{0}\left(\kappa_{e}\right) & r_{0}\left(\kappa_{e}\right)\ne0\\
0 & r_{0}\left(\kappa_{e}\right)=0
\end{cases}
\]
 and $f_{\I\left(\kv\right)}\left(v\right)f_{\I\left(\kv\right)}\left(u\right)=f_{\kv}\left(v\right)f_{\kv}\left(u\right)$
according to Lemma \ref{lem: vertex values of canonical function}.
Substituting the latter into (\ref{eq: nodal minus weil explicit-1})
gives, 
\begin{align*}
\phi\left(f_{\I\left(\kv\right)}|_{e}\right)-\frac{r_{2\pi}\left(\I\left(\kappa_{e}\right)\right)}{\pi} & =\begin{cases}
0 & r_{0}\left(\kappa_{e}\right)=0\\
\frac{r_{0}\left(\kappa_{e}\right)}{\pi}-2 & f_{\kv}\left(v\right)f_{\kv}\left(u\right)>0\,\,and\,\,r_{0}\left(\I\left(\kappa_{e}\right)\right)\in\left(0,\pi\right)\\
\frac{r_{0}\left(\kappa_{e}\right)}{\pi}-1 & f_{\kv}\left(v\right)f_{\kv}\left(u\right)<0\\
\frac{r_{0}\left(\kappa_{e}\right)}{\pi} & f_{\kv}\left(v\right)f_{\kv}\left(u\right)>0\,\,and\,\,r_{0}\left(\I\left(\kappa_{e}\right)\right)\in\left(\pi,2\pi\right)
\end{cases}\\
= & -\left(\phi\left(f_{\kv}|_{e}\right)-\frac{r_{2\pi}\left(\kappa_{e}\right)}{\pi}\right).
\end{align*}
We may now sum over all edges to conclude that 
\[
\phi\left(f\right)-\frac{L}{\pi}k=\phi\left(f_{\kv}\right)-\frac{r_{2\pi}\left(\kv\right)}{\pi}.
\]
It is continuous on $\Sigma_{\G}$ and satisfy 
\[
\phi\left(f_{\I\left(\kv\right)}\right)-\frac{r_{2\pi}\left(\I\left(\kv\right)\right)}{\pi}=-\left(\phi\left(f_{\kv}\right)-\frac{r_{2\pi}\left(\kv\right)}{\pi}\right).
\]
The analysis of $\mu\left(f\right)-\frac{L}{\pi}k$ is in the same
spirit. To determine $\mu\left(f|_{e}\right)-2\floor{\frac{kl_{e}}{2\pi}}$
we consider first the case where both $v$ and $u$ are internal vertices.
In such case, $\partial_{e}f\left(v\right)\partial_{e}f\left(u\right)\ne0$
as $f$ is generic, and so if $\partial_{e}f\left(v\right)\partial_{e}f\left(u\right)>0$
then $\mu\left(f|_{e}\right)$ is odd ,in which case \[\mu\left(f|_{e}\right)=2\floor{\frac{kl_{e}}{2\pi}}+1.\]
Otherwise, $\partial_{e}f\left(v\right)\partial_{e}f\left(u\right)<0$
and 
\[
\mu\left(f|_{e}\right)=\begin{cases}
2\floor{\frac{kl_{e}}{2\pi}} & r_{0}\left(\kappa_{e}\right)\in\clop{0,\pi}\\
2\floor{\frac{kl_{e}}{2\pi}}+2 & r_{0}\left(\kappa_{e}\right)\in\left(\pi,2\pi\right)
\end{cases}.
\]
The same argument as before would give that $\mu\left(f_{\kv}|_{e}\right)-\frac{r_{2\pi}\left(\kappa_{e}\right)}{\pi}$
is a continuous function on $\Sigma_{\G}$ that satisfy,
\begin{align}
\mu\left(f|_{e}\right)-\frac{kl_{e}}{\pi} & =\mu\left(f_{\kv}|_{e}\right)-\frac{r_{2\pi}\left(\kappa_{e}\right)}{\pi},\,\,\,and\\
\mu\left(f_{\I\left(\kv\right)}|_{e}\right)-\frac{r_{2\pi}\left(\I\left(\kappa_{e}\right)\right)}{\pi} & =-\left(\mu\left(f_{\kv}|_{e}\right)-\frac{r_{2\pi}\left(\kappa_{e}\right)}{\pi}\right).
\end{align}
It is left to consider the case where one of the vertices, say $v$,
is a boundary vertex. In such case, $f|_{e}\left(x_{e}\right)=A_{e}\cos\left(kx_{e}\right)$
for $A_{e}\ne0$, so $\mu\left(f|_{e}\right)=\floor{\frac{kl_{e}}{\pi}}$
and the genericity gives $e^{ikl_{e}}\ne\pm1,\pm i$. It follows that
\begin{align*}
\mu\left(f|_{e}\right)-\frac{kl_{e}}{\pi} & =\begin{cases}
-\frac{r_{0}\left(\kappa_{e}\right)}{\pi} & r_{0}\left(\kappa_{e}\right)\in\left(0,\pi\right)\\
1-\frac{r_{0}\left(\kappa_{e}\right)}{\pi} & r_{0}\left(\kappa_{e}\right)\in\left(\pi,2\pi\right)
\end{cases},
\end{align*}
which a function of $\kv$ and therefore: 
\[
\mu\left(f|_{e}\right)-\frac{kl_{e}}{\pi}=\mu\left(f_{\kv}|_{e}\right)-\frac{r_{2\pi}\left(\kappa_{e}\right)}{\pi}.
\]
Since $e^{ikl_{e}}=e^{i\kappa_{e}}\ne\pm1$ for any $\kv\in\Sigma_{\G}$
then 
\[
\mu\left(f_{\kv}|_{e}\right)-\frac{r_{2\pi}\left(\kappa_{e}\right)}{\pi}=\begin{cases}
-\frac{r_{0}\left(\kappa_{e}\right)}{\pi} & r_{0}\left(\kappa_{e}\right)\in\left(0,\pi\right)\\
1-\frac{r_{0}\left(\kappa_{e}\right)}{\pi} & r_{0}\left(\kappa_{e}\right)\in\left(\pi,2\pi\right)
\end{cases}
\]
 is a continuous function on $\Sigma_{\G}$. Moreover, $e^{i\kappa_{e}}\ne\pm1$
implies that \[r_{0}\left(\I\left(\kappa_{e}\right)\right)=2\pi-r_{0}\left(\kappa_{e}\right),\ \text{and so,}\]
\begin{align*}
\mu\left(f_{\I\left(\kv\right)}|_{e}\right)-\frac{r_{2\pi}\left(\I\left(\kappa_{e}\right)\right)}{\pi} & =\begin{cases}
\frac{r_{0}\left(\kappa_{e}\right)}{\pi}-2 & r_{0}\left(\I\left(\kappa_{e}\right)\right)\in\left(0,\pi\right)\\
\frac{r_{0}\left(\kappa_{e}\right)}{\pi}-1 & r_{0}\left(\I\left(\kappa_{e}\right)\right)\in\left(\pi,2\pi\right)
\end{cases}\\
= & -\left(\mu\left(f_{\kv}|_{e}\right)-\frac{r_{2\pi}\left(\kappa_{e}\right)}{\pi}\right)-1.
\end{align*}
We may now sum over all edges to conclude that 
\[
\mu\left(f\right)-\frac{L}{\pi}k=\mu\left(f_{\kv}\right)-\frac{r_{2\pi}\left(\kv\right)}{\pi},
\]
where $\mu\left(f_{\kv}\right)-\frac{r_{2\pi}\left(\kv\right)}{\pi}$
is continuous on $\Sigma_{\G}$ and satisfies 
\[
\mu\left(f_{\I\left(\kv\right)}\right)-\frac{r_{2\pi}\left(\I\left(\kv\right)\right)}{\pi}=-\left(\mu\left(f_{\kv}\right)-\frac{r_{2\pi}\left(\kv\right)}{\pi}\right)-\left|\partial\Gamma\right|.
\]
\end{proof}

\subsection{Proof of Theorem \ref{thm:First}.}
\begin{proof}
Let $\Gamma_{\lv}$ be a standard graph with $\lv$ rationally independent
and let $\mathcal{F_{\G}}$ as in Definition \ref{def: sigma-algebra}.
According to Corollary \ref{cor: key lemma in equivalence of genericity}, $\Sigma_{\G}$ is open and Jordan and
therefore each of its connected components $\Sigma_{\G,i}$ is also
open and Jordan (as their boundary are included in its boundary).
As $\mu_{\lv}$ is strictly positive (Remark \ref{rem: BG measuer positivity})
then $\mu_{\lv}\left(\Sigma_{\G,i}\right)>0$ for every component.
We may use Theorem \ref{thm: BG equidistribution} to conclude that
every atom $\G_{i}$ has density, 
\begin{equation}
d\left(\G_{i}\right)=\mu_{\lv}\left(\Sigma_{\G,i}\right),\,\,\text{and }\,\,d_{\G}\left(\G_{i}\right)=\frac{\mu_{\lv}\left(\Sigma_{\G,i}\right)}{\mu_{\lv}\left(\Sigma_{\G}\right)}.
\end{equation}
The same reason gives that for any union of atoms $\sqcup_{i\in I}\G_{i}$,
\[
d_{\G}\left(\sqcup_{i\in I}\G_{i}\right)=\frac{\mu_{\lv}\left(\sqcup_{i\in I}\Sigma_{\G,i}\right)}{\mu_{\lv}\left(\Sigma_{\G}\right)}=\sum_{i\in I}\frac{\mu_{\lv}\left(\Sigma_{\G,i}\right)}{\mu_{\lv}\left(\Sigma_{\G}\right)}=\sum_{i\in I}d_{\G}\left(\G_{i}\right).
\]
Remark \ref{rem: elements in the sigma algebra} insure that every
element in $\mathcal{F_{\G}}$ is a union of atoms, and therefore
has density. The above proves that $d_{\G}$ is $\sigma$-additive
on $\mathcal{F_{\G}}$ and is therefor a probability measure, proving
(\ref{enu: triplet}). 

Let $f_{m}$ be the $m$'s eigenfunction of $\Gamma_{\lv}$ with eigenvalue
$k_{m}^{2}$ and assume that $f_{m}$ is generic. Let $\kv=\left\{ k_{m}\lv\right\} $
and denote the nodal surplus and Neumann surplus of the canonical
eigenfunction $f_{\kv}$ by 
\begin{align}
\boldsymbol{\sigma}\left(\kv\right) & :=\phi\left(f_{\kv}\right)-n\left(\kv\right)\\
\boldsymbol{\omega}\left(\kv\right) & :=\mu\left(f_{\kv}\right)-n\left(\kv\right).
\end{align}
Using both Lemma \ref{lem: n minus weil} and Lemma \ref{lem: counts minus weil}
we get that the nodal and Neumann surplus of $f_{m}$ are given by:
\begin{align}
\sigma\left(m\right) & =\phi\left(f_{m}\right)-\frac{L}{\pi}k_{m}+\frac{L}{\pi}k_{m}-m=\boldsymbol{\sigma}\left(\kv\right),\,\,and\\
\omega\left(m\right) & =\mu\left(f_{m}\right)-\frac{L}{\pi}k_{m}+\frac{L}{\pi}k_{m}-m=\boldsymbol{\omega}\left(\kv\right).
\end{align}
So, 
\begin{align}
\sigma^{-1}\left(j\right) & =\set{n\in\G}{\left\{ k_{n}\lv\right\} \in\boldsymbol{\sigma}^{-1}\left(j\right)},\label{eq: nodal surp level set}\\
\omega^{-1}\left(i\right) & =\set{n\in\G}{\left\{ k_{n}\lv\right\} \in\boldsymbol{\omega}^{-1}\left(i\right)},\,\,\,and\label{eq: Neumann surp level set}\\
\sigma^{-1}\left(j\right)\cap\omega^{-1}\left(i\right) & =\set{n\in\G}{\left\{ k_{n}\lv\right\} \in\boldsymbol{\sigma}^{-1}\left(j\right)\cap\boldsymbol{\omega}^{-1}\left(i\right)}.\label{eq: joint level set}
\end{align}
Moreover, using Lemma \ref{lem: n minus weil} and Lemma \ref{lem: counts minus weil}
we can deduce that $\boldsymbol{\sigma}\left(\kv\right)$ and $\boldsymbol{\omega}\left(\kv\right)$
are continuous (integer values) functions on $\Sigma_{\G}$ and so
constant on connected components. Therefore, their level sets $\boldsymbol{\sigma}^{-1}\left(j\right)$
and $\boldsymbol{\omega}^{-1}\left(i\right)$ are unions of connected
components, and by (\ref{eq: nodal surp level set}), (\ref{eq: Neumann surp level set})
and (\ref{eq: joint level set}), the level sets $\sigma^{-1}\left(j\right),\,\omega^{-1}\left(i\right)$
and $\sigma^{-1}\left(j\right)\cap\omega^{-1}\left(i\right)$ are
unions of atoms in $\mathcal{F_{\G}}$. This proves that $\sigma$
and $\omega$ are random variables on $\left(\G,\mathcal{F_{\G}},d_{\G}\right)$,
with probabilities given by 
\begin{align*}
P\left(\sigma=j\right):=d_{\G}\left(\sigma^{-1}\left(j\right)\right) & =\frac{\mu_{\lv}\left(\boldsymbol{\sigma}^{-1}\left(j\right)\right)}{\mu_{\lv}\left(\Sigma_{\G}\right)},\\
P\left(\omega=i\right):=d_{\G}\left(\omega^{-1}\left(i\right)\right) & =\frac{\mu_{\lv}\left(\boldsymbol{\omega}^{-1}\left(i\right)\right)}{\mu_{\lv}\left(\Sigma_{\G}\right)},\,\,\,\text{and}\\
P\left(\sigma=j\,\wedge\,\omega=i\right)=d_{\G}\left(\sigma^{-1}\left(j\right)\cap\omega^{-1}\left(i\right)\right) & =\frac{\mu_{\lv}\left(\boldsymbol{\sigma}^{-1}\left(j\right)\cap\boldsymbol{\omega}^{-1}\left(i\right)\right)}{\mu_{\lv}\left(\Sigma_{\G}\right)},
\end{align*}
which proves part (\ref{enu: sigma and omega random variables}) of
the theorem. As we showed that each joint level set of the form $\sigma^{-1}\left(j\right)\cap\omega^{-1}\left(i\right)$
is a union of atoms and each atom has positive density, then the level
set is either empty or with positive density, proving part (\ref{enu: levels sets are either ampty or of positive measure})
of the theorem. 

To prove the symmetry we use Lemma \ref{lem: n minus weil} and Lemma
\ref{lem: counts minus weil}: 
\begin{align*}
\boldsymbol{\sigma}\left(\I\left(\kv\right)\right) & =\phi\left(f_{\I\left(\kv\right)}\right)-\frac{r_{2\pi}\left(\I\left(\kv\right)\right)}{\pi}-\left(n\left(\I\left(\kv\right)\right)-\frac{r_{2\pi}\left(\I\left(\kv\right)\right)}{\pi}\right)\\
= & -\left(\phi\left(f_{\kv}\right)-\frac{r_{2\pi}\left(\kv\right)}{\pi}\right)+\beta+\left(n\left(\kv\right)-\frac{r_{2\pi}\left(\kv\right)}{\pi}\right)\\
= & \beta-\boldsymbol{\sigma}\left(\kv\right).
\end{align*}
And in the same way, 
\begin{align*}
\boldsymbol{\omega}\left(\I\left(\kv\right)\right) & =\mu\left(f_{\I\left(\kv\right)}\right)-\frac{r_{2\pi}\left(\I\left(\kv\right)\right)}{\pi}-\left(n\left(\I\left(\kv\right)\right)-\frac{r_{2\pi}\left(\I\left(\kv\right)\right)}{\pi}\right)\\
= & -\left(\mu\left(f_{\kv}\right)-\frac{r_{2\pi}\left(\kv\right)}{\pi}\right)-\left|\partial\Gamma\right|+\beta+\left(n\left(\kv\right)-\frac{r_{2\pi}\left(\kv\right)}{\pi}\right)\\
= & \beta-\left|\partial\Gamma\right|-\boldsymbol{\omega}\left(\kv\right).
\end{align*}
Using Theorem \ref{thm: inversion BG}, 
\[
d_{\G}\left(\sigma^{-1}\left(j\right)\cap\omega^{-1}\left(i\right)\right)=d_{\G}\left(\sigma^{-1}\left(\beta-j\right)\cap\omega^{-1}\left(\beta-\left|\partial\Gamma\right|-i\right)\right),
\]
which proves (\ref{enu: symmetric simultaniuosly}). 
\end{proof}

\newpage{}
\section{\label{sec: Properties-of-a Neumann domain} Properties of a Neumann
domain}

In this section we analyze the properties of a Neumann domain. We
refer to such properties as ``local'' versus the ``global'' properties
like the nodal and Neumann counts. The properties we investigate are
the spectral position of a Neumann domain, which was discussed in
the introduction, and a normalized total length of a Neumann domain,
which we call \emph{wavelength capacity}. Such a normalized total
length, but for nodal domains, was mentioned in \cite{GnuSmiWeb_wrm04}
where numerical examples of its statistics were provided to justify
the universality conjecture. The results of this section appear both
in \cite{AloBan19} and in \cite{AloBanBerEgg_Neumann} where they
are compared to their analogs on manifolds. 
\begin{defn}
\label{def: local observables} Let $\Gamma_{\lv}$ be a standard
graph, let $f$ be a generic eigenfunction of eigenvalue $k^{2}$
and let $\Omega$ be a Neumann domain of $f$. Let $L_{\Omega}$ denote
the total length of $\Omega$. We define the \emph{spectral position
}of $\Omega$ by $N\left(\Omega\right):=N\left(\Omega,k^{2}\right).$
We define the \emph{wavelength capacity} of $\Omega$ by
\begin{equation}
\rho\left(\Omega\right):=\frac{L_{\Omega}}{\pi}k.
\end{equation}
\end{defn}

\begin{rem}
The wavelength capacity is the Weyl term (at $k$) of the trace formula
of $N\left(\Omega,k^{2}\right)$ (see discussion in Subsection \ref{subsec: oscilatory part of the trace formula}). 
\end{rem}

Next we present some immediate properties of Neumann domains, some
of which are straightforward and some requires a short proof. 
\begin{lem}
\label{lem: local properties of Neumann domain}Let $\Gamma_{\lv}$
be a standard graph of total length $L$ and with minimal edge length
$L_{min}$. Let $f$ be a generic eigenfunction of $\Gamma_{\lv}$
with eigenvalue $k^{2}>0$ and let $\Omega$ be a Neumann domain of
$f$. Then,
\begin{enumerate}
\item \label{enu: local prop - restriction function}$k^{2}$ is an eigenvalue
of $\Omega$ with eigenfunction $f|_{\Omega}$. Moreover, $f|_{\Omega}$
satisfies properties I and II and is generic if and only if $k^{2}$
is a simple eigenvalue of $\Omega$.
\item \label{enu: local prop - segment}If $\Omega$ is a segment, then $\rho\left(\Omega\right)=N\left(\Omega\right)=1$. 
\item \label{enu: kl in () cup ()}If $\Omega$ is not a segment and $e$
is an edge of $\Omega$ of length $l_{e}$, then $kl_{e}<\pi$, and
if $e$ is a tail of $\Omega$, then $kl_{e}\in\left(0,\frac{\pi}{2}\right)\cap\left(\frac{\pi}{2},\pi\right)$.
In particular, $\rho\left(\Omega\right)<E_{\Omega}$ where $E_{\Omega}$
is the number of edges of $\Omega$. 
\item \label{enu: local prop -  bound on small eigenvalues}There are at
most $2\frac{L}{L_{min}}$ eigenvalues of $\Gamma_{\lv}$ for which
$k\le\frac{\pi}{L_{min}}$. 
\item \label{enu: local prop - star}If $k>\frac{\pi}{L_{min}}$, then $f$
has two kinds of Neumann domains, star graphs and segments. Every
interior vertex $v\in\V_{in}$ is the central vertex of a star Neumann
domain which we denote by $\Omega^{\left(v\right)}$. The rest of
the Neumann domains are segments. In both cases, $f|_{\Omega}$ is
generic with nodal count $\phi\left(f|_{\Omega}\right)=N\left(\Omega\right)$.
\end{enumerate}
\end{lem}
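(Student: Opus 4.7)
All five parts rest on the edge-wise representation $f|_e(x_e) = A_e\cos(kx_e - \varphi_e)$ from Definition~\ref{def: edge restriction notations}\,(\ref{enu: Acos(x+phi)}), in which the Neumann points of $f$ on $e$ are the zeros of $\sin(kx_e - \varphi_e)$ in $(0, l_e)$ and $A_e \neq 0$ by genericity. The strategy is to examine what $\varphi_e$ can be at each endpoint of each edge $e \in \E_\Omega$ of a Neumann domain $\Omega$, and to read off all metric and spectral information about $\Omega$ from that.

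For (\ref{enu: local prop - restriction function}), I would first verify that $f|_\Omega \in H^2(\Omega)$ satisfies Neumann vertex conditions: boundary vertices of $\Omega$ are either Neumann points of $f$ (where $f'=0$ on the incident edge by definition) or boundary vertices of $\Gamma$ (where $f'=0$ by the Neumann condition on $\Gamma$), while interior vertices of $\Omega$ are interior vertices of $\Gamma$ and inherit their Neumann condition. Property I on $\Omega$ follows from $f(v)\neq 0$ for $v\in\V$ together with the fact that $f(x)=f'(x)=0$ at an interior point of an edge would force $f|_e\equiv 0$, contradicting genericity; Property II is inherited at the interior vertices of $\Omega$, all of which are interior vertices of $\Gamma$. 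The ``generic $\iff$ simple'' dichotomy is then immediate from Definition~\ref{def:generic_eigenfunction}.

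For (\ref{enu: local prop - segment}) and (\ref{enu: kl in () cup ()}), I would fix an edge $e\in\E_\Omega$ and classify its two endpoints. An endpoint that is a Neumann point of $f$ or a boundary vertex of $\Gamma$ forces $\sin(kx_e - \varphi_e)=0$ there, while an endpoint at an interior vertex of $\Gamma$ forces both $\cos(kx_e - \varphi_e)$ and $\sin(kx_e - \varphi_e)$ to be nonzero there by genericity. Since $(0,l_e)$ contains no Neumann point of $f$, we have $kx_e - \varphi_e \notin \pi\Z$ on $(0,l_e)$; the resulting case analysis yields $kl_e = \pi$ when both endpoints are of the first type ($\Omega$ is then the segment and $f|_\Omega$ is the first nontrivial Neumann eigenfunction on an interval, so $\rho(\Omega)=1=N(\Omega)$), $kl_e \in (0,\pi)\setminus\{\pi/2\}$ when exactly one endpoint is of the first type (so $e$ is a tail of $\Omega$), and $kl_e \in (0,\pi)$ otherwise. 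Summing over $\E_\Omega$ gives $\rho(\Omega) < E_\Omega$.

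For (\ref{enu: local prop -  bound on small eigenvalues}), a Weyl-type bound on $N(\Gamma_\lv, k)$ extracted from the counting identity in the proof of Lemma~\ref{lem: n minus weil}, evaluated at $k=\pi/L_{min}$ and combined with $L \geq E\cdot L_{min}$ and $V \leq E+1$, gives the estimate. For (\ref{enu: local prop - star}), the hypothesis $k > \pi/L_{min}$ forces $kl_e > \pi$ on every edge of $\Gamma$, so every edge contains a Neumann point and is cut; by the classification in (\ref{enu: kl in () cup ()}) the resulting components are either a star $\Omega^{(v)}$ centred at some $v \in \V_{in}$ or a segment. Genericity of $f|_\Omega$ is immediate for segments, and for a star it follows from (\ref{enu: local prop - restriction function}) once simpleness of $k^2$ on $\Omega^{(v)}$ is verified; the equality $\phi(f|_\Omega) = N(\Omega)$ in both cases is then Sturm's theorem on trees, recorded in the Introduction via Al-Obeid, Pokornyi--Pryadiev and Schapotschnikow. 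The main obstacle I anticipate is exactly this last simpleness check on the star Neumann domains, which I expect to settle by a direct ODE argument on each arm: the restriction to each arm is determined by $f(v)$ and the outgoing derivative at $v$ along that arm, and the Neumann condition at the leaf together with the bound from (\ref{enu: kl in () cup ()}) pin that derivative down uniquely in terms of $f(v)$, leaving a one-dimensional solution space.
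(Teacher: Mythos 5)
Your proposal is correct. Parts (\ref{enu: local prop - restriction function})--(\ref{enu: kl in () cup ()}) follow essentially the same route as the paper: the edgewise representation $f|_e(x_e)=A_e\cos(kx_e-\varphi_e)$, the classification of endpoint phases, and the observation that each open arc of length $\frac{\pi}{k}$ contains exactly one Neumann point. Where you genuinely diverge is in parts (\ref{enu: local prop -  bound on small eigenvalues}) and (\ref{enu: local prop - star}). For (\ref{enu: local prop -  bound on small eigenvalues}), the paper invokes Friedlander's lower bound $k\ge\frac{\pi\left(N(\Gamma_{\lv},k)+1\right)}{2L}$ from \cite{Fri_aif05} to get $N(\Gamma_{\lv},\frac{\pi}{L_{\min}})\le 2\frac{L}{L_{\min}}-1$; you instead read off $N(\Gamma_{\lv},k)\le\frac{L}{\pi}k+\frac{E+V}{2}-1$ from the trace-formula counting identity inside the proof of Lemma~\ref{lem: n minus weil} by dropping the nonnegative $\frac{r_0(\tv|_k)}{2\pi}$ term, and close with $L\ge E\,L_{\min}$ and $V\le E+1$. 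This works (since $\frac{E+V}{2}-1\le E-\frac12<\frac{L}{L_{\min}}$), and makes the proof more self-contained, but you should note that the identity needs to be established for arbitrary $k>0$, not just simple eigenvalues as in the statement of Lemma~\ref{lem: n minus weil}; the formula $N(\Gamma_{\lv},k)=\sum_{j}\left|\set{0<t\le k}{e^{i\theta_j}|_t=1}\right|$ holds with multiplicity by Lemma~\ref{lem: secular bijection} and the rest of the derivation is unchanged, so this is a small but real gap to fill. For (\ref{enu: local prop - star}), the paper cites Corollary 3.1.9 of \cite{BerKuc_graphs} (an eigenfunction on a tree satisfies property I if and only if its eigenvalue is simple) to conclude $k^2$ is simple on each star $\Omega^{(v)}$; your direct ODE argument on each arm -- the Neumann condition at the leaf forces $f|_{e_j}\propto\cos(k(l_j-x))$, and since $kl_j\ne\frac{\pi}{2}$ by part (\ref{enu: kl in () cup ()}), continuity at the centre pins the amplitude by $f(v)$, giving a one-dimensional eigenspace -- is equally valid, a bit more elementary, and avoids the external reference, at the cost of being tailored to stars rather than to trees in general. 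Both versions then deduce $\phi(f|_\Omega)=N(\Omega)$ from the $\beta=0$ case of the nodal surplus bounds (\ref{eq: nodal surplus bound}).
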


\begin{proof}
We first prove (\ref{enu: local prop - restriction function}). It
is clear that $f|_{\Omega}$ satisfies $f|_{\Omega}''=-k^{2}f|_{\Omega}$
on every edge of $\Omega$ and satisfies Neumann vertex condition
on internal vertices of $\Omega$ as these are internal vertices of
$\Gamma$. By the same reason $f|_{\Omega}$ satisfies property II.
The boundary vertices of $\Omega$ are either boundary vertices of
$\Gamma$ or Neumann points of $f$. In both cases the derivative
of $f|_{\Omega}$ vanish and therefore $f|_{\Omega}$ satisfies Neumann
vertex conditions and so $f|_{\Omega}\in Eig\left(\Omega,k^{2}\right)$
as needed. It is left to prove that $f|_{\Omega}$ satisfies property
I, namely does not vanish on vertices. It does not vanish on interior
vertices of $\Omega$ as these are also vertices of $\Gamma$. As
seen from Definition \ref{def: edge restriction notations}, an eigenfunction
that vanish on a boundary vertex must vanish on the entire tail connected
to it. This cannot happen as $f$ is generic, so $f|_{\Omega}$ satisfies
property I.

If $\Omega$ is a segment of length $L_{\Omega}$, then it has only
one (Neumann) eigenfunction with no Neumann points. It is that of
the first positive eigenvalue $k=\frac{\pi}{L_{\Omega}}$, so $N\left(\Omega\right)=1$
and $\rho\left(\Omega\right)=\frac{kL_{\Omega}}{\pi}=1$, proving
(\ref{enu: local prop - segment}).

Assume that $\Omega$ is not a segment, and let $e$ be an edge of
$\Omega$ of length $l_{e}$. As discussed in the proof of Lemma \ref{lem: counts minus weil},
in any interval $\opcl{a,b}$ of length $\frac{\pi}{k}$ inside $\Gamma_{\lv}$
there is one Neumann point, so $l_{e}\le\frac{\pi}{k}$. If $l_{e}=\frac{\pi}{k}$
then either there is a Neumann point inside $e$ or the derivative
vanish on both vertices of $e$. Since $e$ has no Neumann points
and has at least one vertex on which the derivative does not vanish
($f|_{\Omega}$ satisfies property $II$), then $l_{e}<\frac{\pi}{k}$.
The total length of $\Omega$ is therefore bounded by $L_{\Omega}<E_{\Omega}\frac{\pi}{k}$
so $\rho\left(\Omega\right)<E_{\Omega}$. If $e$ is a tail, then (using
Definition \ref{def: edge restriction notations}) $f|_{e}\left(x_{e}\right)=A_{e}\cos\left(kx_{e}\right)$,
so by properties II, $kl_{e}\ne\frac{\pi}{2}$. We thus proved (\ref{enu: kl in () cup ()}).

The number of eigenvalues for which $k\le\frac{\pi}{L_{min}}$ is
exactly $N\left(\Gamma_{\lv},\frac{\pi}{L_{min}}\right)$. Friedlander
proved in \cite[Theorem 1]{Fri_aif05} that, 
\begin{align}
k & \ge\frac{\pi\left(N\left(\Gamma_{\lv},k\right)+1\right)}{2L},\,\,\text{and so}\label{eq: Friedlander result}\\
\frac{2kL}{\pi}-1 & \ge N\left(\Gamma_{\lv},k^{2}\right).
\end{align}
This proves (\ref{enu: local prop -  bound on small eigenvalues})
by substituting $k=\frac{\pi}{L_{min}}$:
\[
N\left(\Gamma_{\lv},\frac{\pi}{L_{min}}\right)\le\frac{2L}{L_{min}}-1<\frac{2L}{L_{min}}.
\]
If $k>\frac{\pi}{L_{min}}$, then each edge $e$ of $\Gamma_{\lv}$
has length $l_{e}\ge L_{min}>\frac{\pi}{k}$ and according to the
arguments above, there is at least one Neumann point in $e$. It is
not hard to deduce that in such case every interior vertex is contained
in a star Neumann domain and all others are segments. We may notice
that both cases, star and segment, are trees. According to Corollary
3.1.9 in \cite{BerKuc_graphs}, an eigenfunction on a tree satisfies
property I if and only if its eigenvalue is simple. Therefore $k$
is a simple eigenvalue of $\Omega$ and so $f|_{\Omega}$ is generic.
Since $\Omega$ is a tree, then the nodal surplus bounds in (\ref{eq: nodal surplus bound})
implies $\phi\left(f|_{\Omega}\right)=N\left(\Omega\right)$. We thus
proved (\ref{enu: local prop - star}).
\end{proof}
Our discussion is oriented to statistical behaviour and is thus insensitive
to properties of low eigenvalues. As seen in the above lemma, for
high enough eigenvalues, all Neumann domains are either trivial (segments)
or star graphs which we label as $\Omega^{\left(v\right)}$ according
to their internal vertex $v$. We will now relate the properties of
these star Neumann domains to functions on the secular manifold. 
\begin{defn}
\label{def: Secular N and rho}Let $\Gamma$ be a graph and let $\Sigma_{\G}$
be the generic part of its secular manifold. For any interior vertex
$v\in\V_{in}$ we define the following functions on $\Sigma_{\G}$:
\begin{align*}
\boldsymbol{N}_{v}\left(\kv\right) & :=\frac{\deg v}{2}-\frac{1}{2}\sum_{e\in\E_{v}}\mathrm{sign}\left(f_{\kv}\left(v\right)\partial_{e}f_{\kv}\left(v\right)\right)\\
\boldsymbol{\rho}_{v}\left(\kv\right) & :=\frac{1}{\pi}\sum_{e\in\E_{v}}\tan^{-1}\left(\frac{f_{\kv}\left(v\right)\partial_{e}f_{\kv}\left(v\right)}{\left(f_{\kv}\left(v\right)\right)^{2}}\right).
\end{align*}
Where we consider $\tan^{-1}:\R\setminus\left\{ 0\right\} \rightarrow\left(0,\frac{\pi}{2}\right)\cup\left(\frac{\pi}{2},\pi\right)$.
\end{defn}

\begin{lem}
\label{lem: Nv rhov-1}Let $\Gamma_{\lv}$ be a standard graph, let
$f$ be a generic eigenfunction of eigenvalue $k^{2}$ and denote
the point $\kv=\left\{ k\lv\right\} \in\Sigma_{\G}$. Then for any
interior vertex $v\in\V_{in}$,
\begin{enumerate}
\item \label{enu: N=00003DN and rho=00003Drho}If $\Omega^{\left(v\right)}$,
the Neumann domain of $f$ that contains $v$, is a star graph, then
\begin{align*}
N\left(\Omega^{\left(v\right)}\right) & =\boldsymbol{N}_{v}\left(\kv\right),\\
\rho\left(\Omega^{\left(v\right)}\right) & =\boldsymbol{\rho}_{v}\left(\kv\right).
\end{align*}
\item \label{enu: N is constant}The function $\boldsymbol{N}_{v}$ is constant
on connected components of $\Sigma^{\G}$, and satisfies 
\[
\boldsymbol{N}_{v}\left(\mathcal{I}\left(\kv\right)\right)=\deg v-\boldsymbol{N}_{v}\left(\kv\right).
\]
\item \label{enu: rho is real analytic}The function $\boldsymbol{\rho}_{v}$
is real analytic on $\Sigma^{\G}$, and satisfies 
\begin{align*}
\boldsymbol{\rho}_{v}\left(\mathcal{I}\left(\kv\right)\right) & =\deg v-\boldsymbol{\rho}_{v}\left(\kv\right)
\end{align*}
\end{enumerate}
\end{lem}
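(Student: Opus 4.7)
The plan for part (1) is to analyse the length and spectral position of the star $\Omega^{\left(v\right)}$ directly from the canonical amplitude-phase representation. On each edge $e\in\E_{v}$ parametrised from $v$, write $f|_{e}(x_{e})=A_{e}\cos(kx_{e}-\varphi_{e})$; genericity (properties I and II) forces $\varphi_{e}\in(0,\pi/2)\cup(\pi/2,\pi)$, so the first critical point of $f$ along $e$ past $v$ sits exactly at $x_{e}=\varphi_{e}/k$. Under the standing hypothesis that $\Omega^{\left(v\right)}$ is a star, these are precisely the boundary points of $\Omega^{\left(v\right)}$ along each tail, so $L_{\Omega^{\left(v\right)}}=k^{-1}\sum_{e}\varphi_{e}$ and $\rho(\Omega^{\left(v\right)})=\tfrac{1}{\pi}\sum_{e}\varphi_{e}$. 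Lemma \ref{lem: vertex values canonical ef and other ef } then lets us replace the trace of $f$ by that of $f_{\kv}$, giving $\tan\varphi_{e}=\partial_{e}f_{\kv}(v)/f_{\kv}(v)=f_{\kv}(v)\partial_{e}f_{\kv}(v)/f_{\kv}(v)^{2}$, and the chosen branch of $\tan^{-1}$ yields $\varphi_{e}=\tan^{-1}(f_{\kv}(v)\partial_{e}f_{\kv}(v)/f_{\kv}(v)^{2})$, matching $\boldsymbol{\rho}_{v}(\kv)$. For the spectral position I would apply Lemma \ref{lem: nodal minus Neumann count} to $\Omega^{\left(v\right)}$: its unique interior vertex is $v$, its boundary has size $\deg v$, and $\mu(f|_{\Omega^{\left(v\right)}})=0$ since, by construction, $\Omega^{\left(v\right)}$ has no interior critical points of $f$. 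The lemma then reads $\phi(f|_{\Omega^{\left(v\right)}})=\boldsymbol{N}_{v}(\kv)$, and since $\Omega^{\left(v\right)}$ is a tree the nodal bounds (\ref{eq: nodal bounds}) force $\phi(f|_{\Omega^{\left(v\right)}})=N(\Omega^{\left(v\right)})$.

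For part (2), note that $\boldsymbol{N}_{v}$ is a finite sum of quantities $\mathrm{sign}(f_{\kv}(v)\partial_{e}f_{\kv}(v))$, each being the sign of a real analytic function on $\Sigma^{reg}$ by Lemma \ref{lem: vertex values of canonical function}, which is nowhere zero on $\Sigma_{\G}$ by the definition of generic. Hence each sign is locally constant and $\boldsymbol{N}_{v}$ is constant on every connected component of $\Sigma_{\G}$. The $\I$-equivariance identity from the same lemma, $f_{\I(\kv)}(v)\partial_{e}f_{\I(\kv)}(v)=-f_{\kv}(v)\partial_{e}f_{\kv}(v)$, flips every sign, yielding $\boldsymbol{N}_{v}(\I(\kv))=\deg v-\boldsymbol{N}_{v}(\kv)$. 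For part (3), $f_{\kv}(v)^{2}$ is real analytic and strictly positive on $\Sigma_{\G}$ by property I, so each quotient $f_{\kv}(v)\partial_{e}f_{\kv}(v)/f_{\kv}(v)^{2}$ is real analytic and nowhere zero there. The branch $\tan^{-1}:\R\setminus\{0\}\to(0,\pi/2)\cup(\pi/2,\pi)$ is real analytic on its domain, so $\boldsymbol{\rho}_{v}$ is real analytic on $\Sigma_{\G}$. On this branch one has $\tan^{-1}(-x)=\pi-\tan^{-1}(x)$, and under $\I$ each argument of $\tan^{-1}$ negates (the numerator flips while the denominator is invariant by Lemma \ref{lem: vertex values of canonical function}), so $\boldsymbol{\rho}_{v}(\I(\kv))=\tfrac{1}{\pi}\sum_{e\in\E_{v}}(\pi-\tan^{-1}(\cdots))=\deg v-\boldsymbol{\rho}_{v}(\kv)$.

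The only step I expect to need genuine care is the geometric identification in part (1) of the length of each tail of $\Omega^{\left(v\right)}$ with $\varphi_{e}/k$; that is, under the standing assumption that $\Omega^{\left(v\right)}$ really is a star, one must check that the nearest post-$v$ critical point of $f$ along each $e\in\E_{v}$ is exactly a boundary point of $\Omega^{\left(v\right)}$ --- so that no interior critical points remain (giving $\mu(f|_{\Omega^{\left(v\right)}})=0$) and no further critical points are cut off. Once this is in place, the remainder of the argument is a direct assembly of Lemma \ref{lem: nodal minus Neumann count}, Lemma \ref{lem: vertex values canonical ef and other ef }, Lemma \ref{lem: vertex values of canonical function}, and the branch identity $\tan^{-1}(-x)=\pi-\tan^{-1}(x)$.
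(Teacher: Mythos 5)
Your proposal is correct and follows essentially the paper's route: the amplitude-phase parametrisation fixes the tail lengths of $\Omega^{(v)}$ as $\varphi_e/k$, the $\tan^{-1}$ computation gives $\boldsymbol{\rho}_v$, the sign identities from Lemma~\ref{lem: vertex values of canonical function} give parts~(2) and~(3), and the tree/nodal-bound argument identifies $N(\Omega^{(v)})$ with $\phi(f|_{\Omega^{(v)}})$. The only cosmetic deviation is that you obtain $\phi(f|_{\Omega^{(v)}})=\boldsymbol{N}_v(\kv)$ by invoking Lemma~\ref{lem: nodal minus Neumann count} for $\Omega^{(v)}$ with $\mu(f|_{\Omega^{(v)}})=0$, whereas the paper counts nodal points tail-by-tail directly; both reduce to the same sign sum (and both implicitly rely on $f|_{\Omega^{(v)}}$ being generic, which, since $\Omega^{(v)}$ is a tree, follows from property~I as in the proof of Lemma~\ref{lem: local properties of Neumann domain}~(\ref{enu: local prop - star})).
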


\begin{proof}
Both $f_{\kv}\left(v\right)\partial_{e}f_{\kv}\left(v\right)$ and
$\left(f_{\kv}\left(v\right)\right)^{2}$ does not vanish on $\Sigma^{\G}$
by Definition \ref{def: Secular subsets}, and according to Lemma
\ref{lem: vertex values of canonical function}, are real analytic
that satisfy: 
\begin{align*}
f_{\mathcal{I}\left(\kv\right)}\left(v\right)f_{\mathcal{I}\left(\kv\right)}\left(v\right) & =f_{\kv}\left(v\right)f_{\kv}\left(v\right)\\
f_{\mathcal{I}\left(\kv\right)}\left(v\right)\partial_{e}f_{\mathcal{I}\left(\kv\right)}\left(v\right) & =-f_{\kv}\left(v\right)\partial_{e}f_{\kv}\left(v\right).
\end{align*}
Each of the terms $\mathrm{sign}\left(f_{\kv}\left(v\right)\partial_{e}f_{\kv}\left(v\right)\right)$
is therefore constant on connected components of $\Sigma^{\G}$ and
satisfies $\mathrm{sign}\left(f_{\mathcal{I}\left(\kv\right)}\left(v\right)\partial_{e}f_{\mathcal{I}\left(\kv\right)}\left(v\right)\right)=-\mathrm{sign}\left(f_{\kv}\left(v\right)\partial_{e}f_{\kv}\left(v\right)\right)$.
It follows that $\boldsymbol{N}_{v}\left(\kv\right)$ is constant
on connected components of $\Sigma^{\G}$ and satisfies 
\begin{align*}
\boldsymbol{N}_{v}\left(\mathcal{I}\left(\kv\right)\right)= & \frac{\deg v}{2}-\frac{1}{2}\sum_{e\in\E_{v}}\mathrm{sign}\left(f_{\mathcal{I}\left(\kv\right)}\left(v\right)\partial_{e}f_{\mathcal{I}\left(\kv\right)}\left(v\right)\right)\\
= & \frac{\deg v}{2}+\frac{1}{2}\sum_{e\in\E_{v}}\mathrm{sign}\left(f_{\kv}\left(v\right)\partial_{e}f_{\kv}\left(v\right)\right)\\
= & \deg v-\boldsymbol{N}_{v}\left(\kv\right).
\end{align*}
Similarly, each of the terms $\frac{f_{\kv}\left(v\right)\partial_{e}f_{\kv}\left(v\right)}{\left(f_{\kv}\left(v\right)\right)^{2}}$
is well defined, real analytic and non vanishing on $\Sigma_{\G}$.
As $\tan^{-1}:\R\setminus\left\{ 0\right\} \rightarrow\left(0,\frac{\pi}{2}\right)\cup\left(\frac{\pi}{2},\pi\right)$
is real analytic, then $\tan^{-1}\left(\frac{f_{\kv}\left(v\right)\partial_{e}f_{\kv}\left(v\right)}{\left(f_{\kv}\left(v\right)\right)^{2}}\right)$
is real analytic on $\Sigma^{\G}$. Notice that this choice of $\tan^{-1}$
satisfies $\tan^{-1}\left(-x\right)=\pi-\tan^{-1}\left(x\right)$
for any real $x\ne0$, and therefore: 
\begin{align*}
\tan^{-1}\left(\frac{f_{\mathcal{I}\left(\kv\right)}\left(v\right)\partial_{e}f_{\mathcal{I}\left(\kv\right)}\left(v\right)}{\left(f_{\mathcal{I}\left(\kv\right)}\left(v\right)\right)^{2}}\right) & =\pi-\tan^{-1}\left(\frac{f_{\kv}\left(v\right)\partial_{e}f_{\kv}\left(v\right)}{\left(f_{\kv}\left(v\right)\right)^{2}}\right).
\end{align*}
Hence, $\boldsymbol{\rho}_{v}\left(\kv\right)$ is real analytic on
$\Sigma^{\G}$ and satisfies 
\begin{align*}
\boldsymbol{\rho}_{v}\left(\mathcal{I}\left(\kv\right)\right)= & \frac{1}{\pi}\sum_{e\in\E_{v}}\tan^{-1}\left(\frac{f_{\mathcal{I}\left(\kv\right)}\left(v\right)\partial_{e}f_{\mathcal{I}\left(\kv\right)}\left(v\right)}{\left(f_{\mathcal{I}\left(\kv\right)}\left(v\right)\right)^{2}}\right)\\
= & \deg v-\frac{1}{\pi}\sum_{e\in\E_{v}}\tan^{-1}\left(\frac{f_{\kv}\left(v\right)\overline{\partial_{e}f_{\kv}\left(v\right)}}{f_{\kv}\left(v\right)\overline{f_{\kv}\left(v\right)}}\right)\\
= & \deg v-\boldsymbol{\rho}_{v}\left(\kv\right).
\end{align*}
To prove (\ref{enu: N=00003DN and rho=00003Drho}) let $f,k$ and
$\Omega^{\left(v\right)}$ be as stated in (\ref{enu: N=00003DN and rho=00003Drho}).
Let $e\in\E_{v}$ be directed from $v$ outwards and let $f_{e}|\left(x_{e}\right)=A_{e}\cos\left(kx_{e}-\varphi_{e}\right)$
according to Definition \ref{def: edge restriction notations}, where
$A_{e}\ne0$ and we may assume that $\varphi_{e}\in\opcl{0,\pi}$.
Let $l_{\tilde{e}}$ be the length of $\tilde{e}$, the corresponding
edge of $\Omega^{v}$. It is given by: 
\begin{align*}
l_{\tilde{e}}= & \min\set{x_{e}\in\left(0,l_{e}\right)}{f'|_{e}\left(x_{e}\right)=0}\\
= & \set{x_{e}\in\left(0,\frac{\pi}{k}\right)}{-kA_{e}\sin\left(kx_{e}-\varphi_{e}\right)=0}\\
= & \frac{\varphi_{e}}{k}.
\end{align*}
Therefore, 
\[
\tan\left(kl_{\tilde{e}}\right)=\tan\left(\varphi_{e}\right)=\frac{1}{k}\frac{f_{e}|'\left(0\right)}{f_{e}|\left(0\right)}=\frac{1}{k}\frac{f\left(v\right)\partial_{e}f\left(v\right)}{\left(f\left(v\right)\right)^{2}}.
\]
If we denote $\kv=\left\{ k\lv\right\} $ and use Lemma (\ref{lem: vertex values canonical ef and other ef }),
the equation above yields 
\begin{align}
\tan\left(kl_{\tilde{e}}\right) & =\frac{f_{\kv}\left(v\right)\partial_{e}f_{\kv}\left(v\right)}{\left(f_{\kv}\left(v\right)\right)^{2}},\,\,and
\end{align}
\begin{equation}
kl_{\tilde{e}}=\tan^{-1}\left(\frac{f_{\kv}\left(v\right)\partial_{e}f_{\kv}\left(v\right)}{f_{\kv}\left(v\right)f_{\kv}\left(v\right)}\right).\label{eq: kle =00003D atan-1}
\end{equation}
Where, as stated already, we consider $\tan^{-1}:\R\setminus\left\{ 0\right\} \rightarrow\left(0,\frac{\pi}{2}\right)\cup\left(\frac{\pi}{2},\pi\right)$,
and use the fact that $kl_{\tilde{e}}\in\left(0,\frac{\pi}{2}\right)\cup\left(\frac{\pi}{2},\pi\right)$
(see Lemma \ref{lem: local properties of Neumann domain} (\ref{enu: kl in () cup ()})).
Summing (\ref{eq: kle =00003D atan-1}) over all edges of $\Omega^{\left(v\right)}$
proves that $\rho\left(\Omega^{\left(v\right)}\right)=\boldsymbol{\rho}_{v}\left(\kv\right)$. 

Since $kl_{\tilde{e}}\in\left(0,\frac{\pi}{2}\right)\cup\left(\frac{\pi}{2},\pi\right)$
then there is at most one nodal point in $\tilde{e}$ and a simple
check gives: 
\begin{equation}
\phi\left(f|_{\tilde{e}}\right)=\begin{cases}
0 & f_{\kv}\left(v\right)\partial_{e}f_{\kv}\left(v\right)>0\\
1 & f_{\kv}\left(v\right)\partial_{e}f_{\kv}\left(v\right)<0
\end{cases}=\frac{1-\mathrm{sign}\left(f_{\kv}\left(v\right)\partial_{e}f_{\kv}\left(v\right)\right)}{2}.
\end{equation}
 Summing over all $\phi\left(f|_{\tilde{e}}\right)$ gives,
\[
\phi\left(f|_{\Omega}\right)=\sum_{e\in\E_{v}}\frac{1-\mathrm{sign}\left(f_{\kv}\left(v\right)\partial_{e}f_{\kv}\left(v\right)\right)}{2}=\boldsymbol{N}_{v}\left(\kv\right).
\]
This proves $N\left(\Omega^{\left(v\right)}\right)=\phi\left(f|_{\Omega}\right)=\boldsymbol{N}_{v}\left(\kv\right)$,
as needed.
\end{proof}
\begin{prop}
\label{prop:local_observables_bounds} Let $\Gamma_{\lv}$ be a standard
graph, let $f$ be a generic eigenfunction with eigenvalue $k^{2}$
and let $\Omega$ be a Neumann domain which is a star graph with central
vertex $v\in\V_{in}$. Then,
\begin{align}
1\le & N\left(\Omega\right)\le\deg v-1\label{eq:Spectral_Position_bounds}\\
1\leq\frac{N\left(\Omega\right)+1}{2}\le & \rho\left(\Omega\right)\le\frac{N\left(\Omega\right)+\deg v-1}{2}\leq\deg v-1\label{eq:Rho_bounds}
\end{align}
\end{prop}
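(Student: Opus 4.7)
The plan is to reduce the statement to an analytic inequality on the trace of $f_\kv$ at $v$, by way of Lemma \ref{lem: Nv rhov-1} (\ref{enu: N=00003DN and rho=00003Drho}). Write $\kv = \{k\lv\}$, let $p$ (resp.\ $n$) denote the number of edges $e \in \E_v$ with $f_\kv(v)\partial_e f_\kv(v)$ positive (resp.\ negative). By genericity no term vanishes, and by the Neumann condition at $v$ the sum of all such terms is zero, so $p, n \ge 1$ and $p + n = \deg v$. Plugging into the defining formula for $\boldsymbol{N}_v$ immediately gives $N(\Omega) = n$, whence $1 \le N(\Omega) \le \deg v - 1$, settling (\ref{eq:Spectral_Position_bounds}).

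For (\ref{eq:Rho_bounds}), I will parametrize each Neumann-domain edge $\tilde{e}$ by the quantity $\alpha_e := k l_{\tilde e}$, which by Lemma \ref{lem: local properties of Neumann domain} (\ref{enu: kl in () cup ()}) lies in $(0,\tfrac{\pi}{2}) \cup (\tfrac{\pi}{2},\pi)$. Using the identity $\tan(\alpha_e) = \partial_e f_\kv(v) / f_\kv(v)$ established in the proof of Lemma \ref{lem: Nv rhov-1}, set $\beta_i := \alpha_e \in (0,\tfrac{\pi}{2})$ for the $p$ positive-sign edges, and $\gamma_j := \pi - \alpha_e \in (0,\tfrac{\pi}{2})$ for the $n$ negative-sign ones. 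The Neumann condition at $v$, divided by $f_\kv(v)$, becomes the single constraint
\[
\sum_{i=1}^{p} \tan \beta_i \;=\; \sum_{j=1}^{n} \tan \gamma_j \;=:\; T \;>\; 0,
\]
while the wavelength capacity reads
\[
\rho(\Omega) \;=\; \frac{1}{\pi} \sum_e \alpha_e \;=\; n \;+\; \frac{1}{\pi}\Big(\sum_{i=1}^{p} \beta_i \,-\, \sum_{j=1}^{n} \gamma_j\Big).
\]

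The analytic crux is then a two-sided bound on $\sum\beta - \sum\gamma$ that exploits $T$. Two elementary inputs suffice: (i) subadditivity of $\tan^{-1}$ on $[0,\infty)$, i.e.\ $\tan^{-1}(a+b) \le \tan^{-1}(a) + \tan^{-1}(b)$, which iterates to $\sum\beta_i \ge \tan^{-1}(T)$ and $\sum\gamma_j \ge \tan^{-1}(T)$; (ii) since each $\tan\gamma_j$ is positive and the $n$ of them sum to $T$, every $\tan\gamma_j \le T$, hence $\gamma_j \le \tan^{-1}(T)$ and $\sum\gamma_j \le n\tan^{-1}(T)$, with the symmetric statement $\sum\beta_i \le p\tan^{-1}(T)$. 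Combining,
\[
-(n-1)\tan^{-1}(T) \;\le\; \sum \beta_i - \sum \gamma_j \;\le\; (p-1)\tan^{-1}(T),
\]
and since $\tan^{-1}(T) < \tfrac{\pi}{2}$, dividing by $\pi$ and adding $n$ gives $\tfrac{n+1}{2} \le \rho(\Omega) \le \tfrac{n+\deg v-1}{2}$, as required, after substituting $n = N(\Omega)$ and $p = \deg v - N(\Omega)$.

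The main obstacle is the missing $\tfrac{1}{2}$ on each side: the naive edge-by-edge bound $\alpha_e/\pi \in (0,\tfrac{1}{2}) \cup (\tfrac{1}{2},1)$ only yields $\tfrac{N(\Omega)}{2} < \rho < \tfrac{N(\Omega)+\deg v}{2}$, which is too weak. Extracting the improvement requires coupling all $\deg v$ angles through the Neumann (sum-zero) constraint, and the subadditivity argument above is what converts that coupling into the sharp bound.
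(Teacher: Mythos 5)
Your argument is correct and takes a genuinely different route. The paper's proof deduces $N(\Omega)\le\deg v-1$ and the upper bound $\rho(\Omega)\le\frac{N(\Omega)+\deg v-1}{2}$ from the inversion symmetries $\boldsymbol{N}_v\circ\I=\deg v-\boldsymbol{N}_v$ and $\boldsymbol{\rho}_v\circ\I=\deg v-\boldsymbol{\rho}_v$ of Lemma \ref{lem: Nv rhov-1}, and obtains the lower bound $\frac{N(\Omega)+1}{2}\le\rho(\Omega)$ by applying Friedlander's eigenvalue-counting inequality $\frac{kL}{\pi}\ge\frac{N+1}{2}$ to the Neumann domain $\Omega$ viewed as a standard graph in its own right. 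You instead extract everything directly from the Neumann vertex condition at $v$: the identification $N(\Omega)=n$ together with $p,n\ge1$ and $p+n=\deg v$ gives (\ref{eq:Spectral_Position_bounds}) at once, and then subadditivity of $\tan^{-1}$ on $[0,\infty)$ (giving $\sum\beta_i,\sum\gamma_j\ge\tan^{-1}T$) combined with the constraint $\sum\tan\gamma_j=\sum\tan\beta_i=T$ (forcing each $\gamma_j,\beta_i\le\tan^{-1}T$) pins $\sum\beta_i-\sum\gamma_j$ into $\bigl[-(n-1)\tan^{-1}T,\ (p-1)\tan^{-1}T\bigr]$, which yields (\ref{eq:Rho_bounds}) since $\tan^{-1}T<\frac{\pi}{2}$. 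Your version is self-contained and elementary: it needs neither Friedlander's theorem nor the secular-manifold inversion, and it makes transparent that the extra $\frac{1}{2}$ on each side, beyond the naive per-edge estimate $\frac{N}{2}<\rho<\frac{N+\deg v}{2}$, is exactly the improvement forced by coupling the $\deg v$ angles through the Neumann sum-zero condition. The paper's route is shorter once the symmetry machinery is already in place; yours is more illuminating about the source of the sharp bound, and in fact gives strict inequalities whenever $p\ge2$ and $n\ge2$.
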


\begin{proof}
Using Lemma \ref{lem: Nv rhov-1} we may prove Proposition \ref{prop:local_observables_bounds}
by proving that the bounds (\ref{eq:Spectral_Position_bounds}) and
(\ref{eq:Rho_bounds}) hold for $\boldsymbol{N}_{v}\left(\kv\right)$
and $\boldsymbol{\rho}_{v}\left(\kv\right)$ for every $\kv\in\Sigma_{\G}$.
Of course, the bound $N\left(\Omega\right)\ge1$ is trivial by the
definition of the spectral position, and hence $\boldsymbol{N}_{v}\left(\kv\right)\ge1$
for all $\kv\in\Sigma_{\G}$. Given $\kv\in\Sigma_{\G}$, then $\I\left(\kv\right)\in\Sigma_{\G}$
and so $\boldsymbol{N}_{v}\left(\I\left(\kv\right)\right)\ge1$. Using
Lemma \ref{lem: Nv rhov-1} (\ref{enu: N is constant}), we get $\boldsymbol{N}_{v}\left(\kv\right)\le\deg v-1$.
This proves (\ref{eq:Spectral_Position_bounds}). \\
Assume that the lower bound on $\boldsymbol{\rho}_{v}$ holds. Namely,
\begin{align*}
\forall\kv\in\Sigma^{\G}\,\,\,\frac{\boldsymbol{N}_{v}\left(\kv\right)+1}{2}\le & \boldsymbol{\rho}_{v}\left(\kv\right).
\end{align*}
Then replacing $\kv\mapsto\mathcal{I}\left(\kv\right)$ and using
the inversion symmetry (Lemma \ref{lem: Nv rhov-1}) we get, 
\[
\frac{\deg v-\boldsymbol{N}_{v}\left(\kv\right)+1}{2}\le\deg v-\boldsymbol{\rho}_{v}\left(\kv\right),
\]
which can be rearrange as $\boldsymbol{\rho}_{v}\left(\kv\right)\le\frac{\deg v+\boldsymbol{N}_{v}\left(\kv\right)-1}{2}$.
Thus it is only left to prove that $\forall\kv\in\Sigma^{\G}\,\,\,\frac{\boldsymbol{N}_{v}\left(\kv\right)+1}{2}\le\boldsymbol{\rho}_{v}\left(\kv\right),$
or equivalently to prove that $\frac{N\left(\Omega\right)+1}{2}\le\rho\left(\Omega\right)$.
To do so we will use \cite[Theorem 1]{Fri_aif05} as seen in (\ref{eq: Friedlander result}),
which can be rearranged such that for every eigenvalue $k^{2}$ of
$\Gamma_{\lv}$ with total length $L$, 
\[
\frac{kL}{\pi}\ge\frac{N\left(\Gamma_{\lv},k\right)+1}{2}.
\]
Applying the above to $\Gamma_{\lv}=\Omega$ we get $\frac{N\left(\Omega\right)+1}{2}\le\frac{kL}{\pi}=\rho\left(\Omega\right)$
as needed.
\end{proof}
\begin{rem}
As can be seen in the proof, the lower bounds in (\ref{eq:Spectral_Position_bounds})
and (\ref{eq:Rho_bounds}) hold for any Neumann domain and not only
star graphs.
\end{rem}

Next, we discuss the statistical properties of the spectral position
and wavelength capacity of Neumann domains. As can be seen in Proposition
\ref{prop:local_observables_bounds} and Lemma \ref{lem: Nv rhov-1},
in order to compare Neumann domains of different eigenfunctions it
would be convenient to consider such that contain the same interior
vertex. 
\begin{defn}
Let $\Gamma_{\lv}$ be a standard graph and let $v\in\V_{in}$ an
interior vertex. For every generic eigenfunction $f_{n}$ we denote
the Neumann domain that contain $v$ by $\Omega_{n}^{(v)}$. We define
the associated sequence of spectral positions $N_{v}:\G\rightarrow\N$
given by $N_{v}(n):=N\left(\Omega_{n}^{(v)}\right)$ and the associate
sequence of wave capacities $\rho_{v}:\G\rightarrow\R$ given by $\rho_{v}(n):=\rho\left(\Omega_{n}^{(v)}\right)$. 
\end{defn}

Lemma \ref{lem: Nv rhov-1} ensures that $N_{v}(n)=\boldsymbol{N}_{v}\left(\left\{ k_{n}\lv\right\} \right)$
whenever $\Omega_{n}^{\left(v\right)}$ is a star, and in particular
for all $n>\frac{2L}{L_{min}}$ (according to Lemma \ref{lem: local properties of Neumann domain}).
Therefore, the two sequences, $N_{v}(n)$ and $\boldsymbol{N}_{v}\left(\left\{ k_{n}\lv\right\} \right)$
have the same statistical behavior. For the sake of simplicity, avoiding
the technicality of refining the $\sigma$-algebra and defining a
new probability space as is done in \cite{AloBan19}, we may redefine
$N_{v}(n)$ and assume that $N_{v}(n)=\boldsymbol{N}_{v}\left(\left\{ k_{n}\lv\right\} \right)$
for all $n\in\G$.
\begin{thm}
\label{thm: statistics_of_local_observables-1}Let $\Gamma_{\lv}$
be a standard graph with $\lv$ rationally independent and let $\left(\G,\mathcal{F}_{\G},d_{\G}\right)$
be the probability space defined in Theorem \ref{thm:First}. Let
$v\in\V_{in}$ be an interior vertex and consider the two sequences
$N_{v}$ and $\rho_{v}$. Then,
\begin{enumerate}
\item \label{enu: Nv is an RV}$N_{v}$ is a random variable on $\left(\G,\mathcal{F}_{\G},d_{\G}\right)$,
with probability given by:
\begin{align}
P\left(N_{v}=j\right)=d_{\G}\left(N_{v}^{-1}\left(j\right)\right)= & \lim_{N\rightarrow\infty}\frac{\left|\set{n\in\G\left(N\right)}{N_{v}\left(n\right)=j}\right|}{\left|\G\left(N\right)\right|}.\label{eq:Thm1-1}
\end{align}
\item \label{enu: Nv is symmetric}Every $N_{v}$ is symmetric around $\frac{\deg v}{2}$,
namely $P\left(N_{v}=j\right)=P\left(N_{v}=\deg v-j\right)$. Moreover,
all $N_{v}$ for every $v\in\V_{in}$ together with $\sigma$ and
$\omega$ are symmetric simultaneously (see Theorem \ref{thm:First}).
\item \label{enu: rhov is almost RV} There exists a probability measure
$\xi_{v}$ on $\R$ such that for any open interval $\left(a,b\right)$,
the level set $\rho_{v}^{-1}\left(a,b\right)$ has density given by
\begin{equation}
\xi_{v}\left(a,b\right)=d_{\G}\left(\rho_{v}^{-1}\left(a,b\right)\right)=\lim_{N\rightarrow\infty}\frac{\left|\set{n\in\G\left(N\right)}{\rho_{v}\left(n\right)\in\left(a,b\right)}\right|}{\left|\G\left(N\right)\right|}.\label{eq:denstiy-as-integral-of-distribution-1}
\end{equation}
\item \label{enu: rhov is symmetric} The probability measure $\xi_{v}$
is supported inside $\left[1,\thinspace\deg v-1\right]$ symmetrically.
That is, if $I\subset\R$ is measurable and $\deg v-I=\set{x\in\R}{\deg v-x\in I}$
then $\xi_{v}\left(I\right)=\xi_{v}\left(\deg v-I\right)$. 
\end{enumerate}
\end{thm}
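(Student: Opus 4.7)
The plan is to reduce each statement to a statement on $\Sigma_{\G}$ via the functions $\boldsymbol{N}_{v}$ and $\boldsymbol{\rho}_{v}$, and then apply the Barra--Gaspard equidistribution (Theorem \ref{thm: BG equidistribution}) together with the inversion-invariance of $\mu_{\lv}$ (Theorem \ref{thm: inversion BG}).

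For parts (\ref{enu: Nv is an RV}) and (\ref{enu: Nv is symmetric}), Lemma \ref{lem: Nv rhov-1}(\ref{enu: N is constant}) says that $\boldsymbol{N}_{v}$ is locally constant on $\Sigma_{\G}$. Hence each level set $\boldsymbol{N}_{v}^{-1}(j)$ is a union of connected components of $\Sigma_{\G}$, and under the identification $N_{v}(n)=\boldsymbol{N}_{v}(\{k_{n}\lv\})$ (valid for $n>2L/L_{min}$ by Lemma \ref{lem: local properties of Neumann domain}(\ref{enu: local prop - star}), and modified on the finite remainder without affecting statistics) the event $N_{v}^{-1}(j)$ is a union of atoms of $\mathcal{F}_{\G}$. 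The probability formula (\ref{eq:Thm1-1}) is then immediate from Theorem \ref{thm:First}(\ref{enu: triplet}). For symmetry, combine the inversion relation $\boldsymbol{N}_{v}\circ\mathcal{I}=\deg v-\boldsymbol{N}_{v}$ with the fact that $\mathcal{I}$ preserves $\mu_{\lv}$, so it sends $\boldsymbol{N}_{v}^{-1}(j)$ onto $\boldsymbol{N}_{v}^{-1}(\deg v-j)$ measure-preservingly. Using the analogous identities for $\boldsymbol{\sigma}$ and $\boldsymbol{\omega}$ established in the proof of Theorem \ref{thm:First}, the same inversion argument applied to joint level sets $\boldsymbol{N}_{v}^{-1}(j)\cap\boldsymbol{\sigma}^{-1}(i)\cap\boldsymbol{\omega}^{-1}(i')\cap\bigcap_{u}\boldsymbol{N}_{u}^{-1}(j_{u})$ yields the simultaneous symmetry.

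For part (\ref{enu: rhov is almost RV}), I define $\xi_{v}$ as the pushforward of $\mu_{\lv}|_{\Sigma_{\G}}/\mu_{\lv}(\Sigma_{\G})$ under $\boldsymbol{\rho}_{v}:\Sigma_{\G}\to\R$, which is automatically a Borel probability measure on $\R$. The substantive step is showing that for \emph{every} open interval $(a,b)$ the set $A:=\boldsymbol{\rho}_{v}^{-1}(a,b)$ is Jordan in $\Sigma$, since then Theorem \ref{thm: BG equidistribution} delivers
\[
\lim_{N\to\infty}\frac{|\{n\in\G(N):\rho_{v}(n)\in(a,b)\}|}{|\G(N)|}=\frac{\mu_{\lv}(A)}{\mu_{\lv}(\Sigma_{\G})}=\xi_{v}(a,b).
\]
Since $\Sigma_{\G}$ is open in $\Sigma$ (Corollary \ref{cor: key lemma in equivalence of genericity}), $A$ is open, so $\partial A\subset(\Sigma\setminus\Sigma_{\G})\cup\bigl(\bar A\cap\Sigma_{\G}\setminus A\bigr)$; the first piece has $\mu_{\lv}$-measure zero because $\mu_{\lv}(\Sigma_{\G}\sqcup\Sigma_{\L})=1$ and $\Sigma_{\L}$ is a union of components of $\Sigma^{reg}$ disjoint from $\bar A$. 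The second piece is controlled component by component: on a connected component $M$ of $\Sigma_{\G}$ the function $\boldsymbol{\rho}_{v}$ is real analytic by Lemma \ref{lem: Nv rhov-1}(\ref{enu: rho is real analytic}), so by Lemma \ref{lem: real analytic lemma} it is either constant on $M$ (in which case $M$ is entirely inside $A$ or entirely outside its closure, and contributes nothing to the boundary) or its level sets $\boldsymbol{\rho}_{v}^{-1}(\{a,b\})\cap M$ have positive co-dimension in $M$. Summing over the countably many components gives $\mu_{\lv}(\partial A)=0$, which is the Jordan property.

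For part (\ref{enu: rhov is symmetric}), Proposition \ref{prop:local_observables_bounds} gives $\boldsymbol{\rho}_{v}(\Sigma_{\G})\subset[1,\deg v-1]$, so $\xi_{v}$ is supported on this interval. For symmetry, the inversion identity $\boldsymbol{\rho}_{v}\circ\mathcal{I}=\deg v-\boldsymbol{\rho}_{v}$ from Lemma \ref{lem: Nv rhov-1}(\ref{enu: rho is real analytic}), combined with the measure-preservation of $\mathcal{I}$ under $\mu_{\lv}$ (Lemma \ref{lem: inversion on F} and Theorem \ref{thm: inversion BG}), gives $\mu_{\lv}(\boldsymbol{\rho}_{v}^{-1}(I))=\mu_{\lv}(\boldsymbol{\rho}_{v}^{-1}(\deg v-I))$ for any Borel $I\subset\R$, which is exactly $\xi_{v}(I)=\xi_{v}(\deg v-I)$. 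The main obstacle throughout is the Jordan-measurability check in part (\ref{enu: rhov is almost RV}): naively the preimage $\boldsymbol{\rho}_{v}^{-1}(\{a,b\})$ can have positive measure (coming from constant components), and the key trick is to notice that such constant components are entirely \emph{inside} or \emph{outside} $\bar A$ so they do not contribute to the topological boundary at all.
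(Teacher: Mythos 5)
Your proposal is correct and follows essentially the same route as the paper: reduce $N_v$ and $\rho_v$ to the locally-constant/real-analytic functions $\boldsymbol{N}_v$ and $\boldsymbol{\rho}_v$ on $\Sigma_{\G}$, use Barra--Gaspard equidistribution once Jordan measurability of level sets and interval preimages is established, and transfer the inversion symmetry of $\mu_{\lv}$. Your boundary decomposition in part (3) differs cosmetically from the paper's (the paper bounds $\partial\boldsymbol{\rho}_v^{-1}(a,b)$ by $\partial\boldsymbol{\rho}_v^{-1}(a)\sqcup\partial\boldsymbol{\rho}_v^{-1}(b)$ and separately absorbs $\partial\Sigma_{\G}$), but both hinge on exactly the same key point via Lemma~\ref{lem: real analytic lemma}, namely that a component on which $\boldsymbol{\rho}_v$ is constant contributes nothing to the boundary.
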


\begin{rem}
A further description of the measure $\xi_{v}$ is found in \cite{AloBan19},
where it is shown that $\xi_{v}$ has no singular continuous part.
\end{rem}

\begin{rem}
\label{rem: rho v is not actually an RV} Although $\rho_{v}^{-1}(a,b)$
has density for every $a<b$, $\rho_{v}$ may not be a random variable.
In fact, if $\boldsymbol{\rho}_{v}$ is not constant on connected
components (equivalently, $\xi_{v}$ has an absolutely continuous
part), then by Corollary \ref{cor: Not an RV} in Appendix \ref{sec: app Equidistribution},
there is no $\sigma$-algebra on $\G$ on which $d_{\G}$ is a measure
and $\rho_{v}$ is a random variable. 
\end{rem}

\begin{proof}
The proof of (\ref{enu: Nv is an RV}) and (\ref{enu: Nv is symmetric})
is similar to the proof of Theorem \ref{thm:First}. According to
our simplifying assumption, the level sets $N_{v}^{-1}\left(j\right)$
are given by $\set{n\in\G}{\left\{ k_{n}\lv\right\} =\boldsymbol{N}_{v}^{-1}\left(j\right)}$.
Since the level sets $\boldsymbol{N}_{v}^{-1}\left(j\right)$ are
unions of connected components of $\Sigma^{\G}$ (by Lemma \ref{lem: Nv rhov-1}),
then the level sets of $N_{v}$ are unions of atoms in $\mathcal{F}_{\G}$.
Therefore, $N_{v}$ is random variable on $\left(\G,\mathcal{F}_{\G},d_{\G}\right)$.
This proves (\ref{enu: Nv is an RV}). 

Lemma \ref{lem: Nv rhov-1} provides a symmetry of $\boldsymbol{N}_{v}$,
$\I\left(\boldsymbol{N}_{v}^{-1}\left(j\right)\right)=\boldsymbol{N}_{v}^{-1}\left(\deg v-j\right)$
for any $j$. Using Theorem \ref{thm: inversion BG} we get: 
\[
d_{\G}\left(\set{n\in\G}{\left\{ k_{n}\lv\right\} =\boldsymbol{N}_{v}^{-1}\left(j\right)}\right)=\set{n\in\G}{\left\{ k_{n}\lv\right\} =\I\left(\boldsymbol{N}_{v}^{-1}\left(j\right)\right)},
\]
which proves that $N_{v}$ is symmetric. In order to prove the simultaneous
symmetry, consider the set 
\[
\cap_{v\in\V_{in}}N_{v}^{-1}\left(i_{v}\right)\cap\sigma^{-1}\left(i_{\sigma}\right)\cap\omega^{-1}\left(i_{\omega}\right),
\]
for some choice of possible values $\left\{ i_{v}\right\} _{v\in\V_{in}},i_{\sigma},i_{\omega}$.
Then according to Lemma \ref{lem: Nv rhov-1} and the proof of Theorem
\ref{thm:First}, 
\[
\I\left(\cap_{v\in\V_{in}}\boldsymbol{N}_{v}^{-1}\left(i_{v}\right)\cap\boldsymbol{\sigma}^{-1}\left(i_{\sigma}\right)\cap\boldsymbol{\omega}^{-1}\left(i_{\omega}\right)\right)=...
\]
\[
...=\cap_{v\in\V_{in}}\boldsymbol{N}_{v}^{-1}\left(\deg v-i_{v}\right)\cap\boldsymbol{\sigma}^{-1}\left(\beta-i_{\sigma}\right)\cap\boldsymbol{\omega}^{-1}\left(\beta-\left|\partial\Gamma\right|-i_{\omega}\right),
\]
and by Theorem \ref{thm: inversion BG},
\[
d_{\G}\left(\cap_{v\in\V_{in}}N_{v}^{-1}\left(i_{v}\right)\cap\sigma^{-1}\left(i_{\sigma}\right)\cap\omega^{-1}\left(i_{\omega}\right)\right)=...
\]
\[
...=d_{\G}\left(\cap_{v\in\V_{in}}N_{v}^{-1}\left(\deg v-i_{v}\right)\cap\sigma^{-1}\left(\beta-i_{\sigma}\right)\cap\omega^{-1}\left(\beta-\left|\partial\Gamma\right|-i_{\omega}\right)\right).
\]
This proves (\ref{enu: Nv is symmetric}).

Let us now define $\xi_{v}$ as the push-forward of $\mu_{\lv}$ by
$\boldsymbol{\rho}_{v}$. That is, for any Borel set $A\subset\R$
we define 
\[
\xi_{v}\left(A\right):=\frac{\mu_{\lv}\left(\boldsymbol{\rho}_{v}^{-1}\left(A\right)\right)}{\mu_{\lv}\left(\Sigma_{\G}\right)}.
\]
The proof of (\ref{enu: rhov is symmetric}) follows from the definition
of $\xi_{v}$. By definition, $\xi_{v}$ is supported on the image
of $\boldsymbol{\rho}_{v}$ which is contained in $\left[1,\deg v-1\right]$
according to Proposition \ref{prop:local_observables_bounds}. It
is also symmetric by Lemma \ref{lem: Nv rhov-1} and the fact that
$\I$ is measure preserving. To see that, let $A\subset\R$ be a Borel
set, 
\begin{align*}
\xi_{v}\left(\deg v-A\right)= & \frac{\mu_{\lv}\left(\boldsymbol{\rho}_{v}^{-1}\left(\deg v-A\right)\right)}{\mu_{\lv}\left(\Sigma_{\G}\right)}\\
= & \frac{\mu_{\lv}\left(\I\left(\boldsymbol{\rho}_{v}^{-1}\left(A\right)\right)\right)}{\mu_{\lv}\left(\Sigma_{\G}\right)}\\
= & \frac{\mu_{\lv}\left(\boldsymbol{\rho}_{v}^{-1}\left(A\right)\right)}{\mu_{\lv}\left(\Sigma_{\G}\right)}=\xi_{v}\left(A\right).
\end{align*}
It is left to prove (\ref{enu: rhov is almost RV}). Let $a<b$ and
$I=\left(a,b\right)$. The fact that One can show that $\boldsymbol{\rho}_{v}$
is continuous implies that $\partial\boldsymbol{\rho}_{v}^{-1}\left(a,b\right)\subset\partial\boldsymbol{\rho}_{v}^{-1}\left(a\right)\sqcup\partial\boldsymbol{\rho}_{v}^{-1}\left(b\right)$.
To see that, notice that $\boldsymbol{\rho}_{v}^{-1}\left(a,b\right)$
is open and that $\boldsymbol{\rho}_{v}^{-1}\left[a,b\right],\boldsymbol{\rho}_{v}^{-1}\left(a\right)$
and $\boldsymbol{\rho}_{v}^{-1}\left(b\right)$ are closed. The closure
of $\boldsymbol{\rho}_{v}^{-1}\left(a,b\right)$ satisfies
\[
\overline{\boldsymbol{\rho}_{v}^{-1}\left(a,b\right)}\subset\boldsymbol{\rho}_{v}^{-1}\left[a,b\right]=\boldsymbol{\rho}_{v}^{-1}\left(a,b\right)\sqcup\boldsymbol{\rho}_{v}^{-1}\left(a\right)\sqcup\boldsymbol{\rho}_{v}^{-1}\left(b\right).
\]
It follows that 
\[
\partial\boldsymbol{\rho}_{v}^{-1}\left(a,b\right)=\overline{\boldsymbol{\rho}_{v}^{-1}\left(a,b\right)}\setminus\boldsymbol{\rho}_{v}^{-1}\left(a,b\right)\subset\boldsymbol{\rho}_{v}^{-1}\left(a\right)\sqcup\boldsymbol{\rho}_{v}^{-1}\left(b\right).
\]
But clearly if $\kv\in\mathrm{int}\boldsymbol{\rho}_{v}^{-1}\left(a\right)$
then $\kv\notin\partial\boldsymbol{\rho}_{v}^{-1}\left(a,b\right)$
and same for $\mathrm{int}\boldsymbol{\rho}_{v}^{-1}\left(b\right)$,
which means that 
\begin{align*}
\partial\boldsymbol{\rho}_{v}^{-1}\left(a,b\right) & \subset\boldsymbol{\rho}_{v}^{-1}\left(a\right)\sqcup\boldsymbol{\rho}_{v}^{-1}\left(b\right)\setminus\left(\mathrm{int}\boldsymbol{\rho}_{v}^{-1}\left(a\right)\sqcup\mathrm{int}\boldsymbol{\rho}_{v}^{-1}\left(b\right)\right)\\
= & \partial\boldsymbol{\rho}_{v}^{-1}\left(a\right)\sqcup\partial\boldsymbol{\rho}_{v}^{-1}\left(b\right).
\end{align*}
The set $\boldsymbol{\rho}_{v}^{-1}\left(a\right)$, is the zero set
of $h\left(\kv\right):=\boldsymbol{\rho}_{v}\left(\kv\right)-a$ which
is a real analytic function on $\Sigma_{\G}$, by Lemma \ref{lem: Nv rhov-1}.
Using Lemma \ref{lem: real analytic lemma}, and the fact that $\Sigma_{\G}$
is a real analytic manifold, then for any $M$ connected component
of $\Sigma_{\G}$, either $M\subset\boldsymbol{\rho}_{v}^{-1}\left(a\right)$
or $M\cap\boldsymbol{\rho}_{v}^{-1}\left(a\right)$ is a closed set
of positive co-dimension in $M$. It follows that $\partial\boldsymbol{\rho}_{v}^{-1}\left(a\right)\cap M$
is either empty or of positive co-dimension and in both cases of measure
zero. Summing over all connected components gives that $\mu_{\lv}\left(\partial\boldsymbol{\rho}_{v}^{-1}\left(a\right)\right)=0$,
and similarly $\mu_{\lv}\left(\partial\boldsymbol{\rho}_{v}^{-1}\left(b\right)\right)=0$.
Therefore, the set $\boldsymbol{\rho}_{v}^{-1}\left(I\right)\subset\Sigma_{\G}$
is Jordan in $\Sigma_{\G}$, and as the boundary of $\Sigma_{\G}$
has measure zero in $\Sigma^{reg}$ by Corollary  \ref{cor: key lemma in equivalence of genericity},
then $\boldsymbol{\rho}_{v}^{-1}\left(I\right)$ is Jordan in $\Sigma^{reg}$.
We may use Theorem \ref{thm: inversion BG} to conclude that, 
\begin{equation}
d_{\G}\left(\set{n\in\G}{\left\{ k_{n}\lv\right\} \in\boldsymbol{\rho}_{v}^{-1}\left(I\right)}\right)=\frac{\mu_{\lv}\left(\boldsymbol{\rho}_{v}^{-1}\left(I\right)\right)}{\mu_{\lv}\left(\Sigma_{\G}\right)}=\xi_{v}\left(I\right).\label{eq: dg of rhov}
\end{equation}
Since $\rho_{v}^{-1}\left(I\right)$ and $\set{n\in\G}{\left\{ k_{n}\lv\right\} \in\boldsymbol{\rho}_{v}^{-1}\left(I\right)}$
differ by a finite number of elements, then $d_{\G}\left(\rho_{v}^{-1}\left(I\right)\right)=\xi_{v}\left(I\right)$
as needed.
\end{proof}

\subsection{Local-global connections}

The following proposition connects between values of local properties,
the spectral position and wavelength capacity of a Neumann domain,
with global properties such as nodal count, Neumann count, and the
structure of the graph. 
\begin{prop}
\label{prop: local to global}Let $\Gamma_{\lv}$ be a standard graph
with minimal edge length $L_{min}$ and total length $L$. Let $f$
be a generic eigenfunction of eigenvalue $k>\frac{\pi}{L_{min}}$,
nodal count $\phi\left(f\right)$ and Neumann count $\mu\left(f\right)$.
For every $v\in\V_{in}$ let $\Omega^{\left(v\right)}$ be the Neumann
domain containing $v$, with spectral position $N\left(\Omega^{\left(v\right)}\right)$
and wavelength capacity $\rho\left(\Omega^{\left(v\right)}\right)$.
Then,
\begin{enumerate}
\item The sum of the spectral positions is given by,
\begin{equation}
\sum_{v\in\V_{in}}N\left(\Omega^{(v)}\right)=\phi(f)-\mu(f)+E-\left|\partial\Gamma\right|.\label{eq:sum_of_spec_pos}
\end{equation}
\item The sum of the wave capacities is given by,
\begin{equation}
\sum_{v\in\V_{in}}\rho\left(\Omega^{(v)}\right)=\frac{L}{\pi}k-\mu(f)+E-\left|\partial\Gamma\right|.\label{eq:sum_of_wavelength_capacities}
\end{equation}
\end{enumerate}
\end{prop}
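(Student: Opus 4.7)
The plan is to reduce both identities to a single counting lemma for the number of Neumann domains, and then to extract each sum from a global quantity by subtracting off the trivial segment contribution.

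First, I would invoke Lemma \ref{lem: local properties of Neumann domain}(\ref{enu: local prop - star}): since $k>\pi/L_{\min}$, every Neumann domain of $f$ is either one of the star domains $\Omega^{(v)}$ indexed by $v\in\V_{in}$, or a segment, and in both cases the restriction $f|_{\Omega}$ is generic on a tree. Let $\mathcal{N}$ denote the total number of Neumann domains, and note that the number of segment-domains is $\mathcal{N}-|\V_{in}|$.

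Next, I would compute $\mathcal{N}$ via an Euler-characteristic argument. Cutting $\Gamma_{\lv}$ at the $\mu(f)$ Neumann points replaces each cut-point by two degree-one vertices, so the cut graph has $V+2\mu(f)$ vertices and $E+\mu(f)$ edges; on the other hand, since every Neumann domain is a tree, summing $V_{\Omega}=E_{\Omega}+1$ over all $\Omega$ yields $V+2\mu(f)=E+\mu(f)+\mathcal{N}$, that is,
\begin{equation*}
\mathcal{N}=\mu(f)+V-E=\mu(f)-\beta+1.
\end{equation*}
Combined with $|\V_{in}|=V-|\partial\Gamma|$ and $\beta=E-V+1$, the number of segment-domains equals $\mu(f)-\beta+1-|\V_{in}|=\mu(f)-E+|\partial\Gamma|$.

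Now I would prove (\ref{eq:sum_of_spec_pos}). Since each Neumann point is interior to an edge of $\Gamma$ and, by genericity, is not a nodal point, the nodal points of $f$ are in bijection with the union of interior nodal points of the restrictions $f|_{\Omega}$; hence $\phi(f)=\sum_{\Omega}\phi(f|_{\Omega})$. By Lemma \ref{lem: local properties of Neumann domain}(\ref{enu: local prop - star}), $\phi(f|_{\Omega})=N(\Omega)$ for every Neumann domain (star or segment). Segments contribute $N=1$ each by Lemma \ref{lem: local properties of Neumann domain}(\ref{enu: local prop - segment}), so
\begin{equation*}
\phi(f)=\sum_{v\in\V_{in}}N(\Omega^{(v)})+\bigl(\mu(f)-E+|\partial\Gamma|\bigr),
\end{equation*}
which rearranges to (\ref{eq:sum_of_spec_pos}). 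For (\ref{eq:sum_of_wavelength_capacities}) I would use that wavelength capacity is additive under partition: $\sum_{\Omega}\rho(\Omega)=\tfrac{k}{\pi}\sum_{\Omega}L_{\Omega}=\tfrac{L}{\pi}k$. Splitting the sum into stars and segments (with $\rho=1$ on each segment by Lemma \ref{lem: local properties of Neumann domain}(\ref{enu: local prop - segment})) gives
\begin{equation*}
\tfrac{L}{\pi}k=\sum_{v\in\V_{in}}\rho(\Omega^{(v)})+\bigl(\mu(f)-E+|\partial\Gamma|\bigr),
\end{equation*}
which is (\ref{eq:sum_of_wavelength_capacities}). The main substantive step is the Euler count of $\mathcal{N}$; everything else is bookkeeping built on Lemma \ref{lem: local properties of Neumann domain}.
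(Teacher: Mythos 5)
Your proof is correct, and it takes a genuinely different route from the paper's. The paper proves (\ref{eq:sum_of_spec_pos}) directly from the sign formula (\ref{eq:Diff-nodal-Neumann_by_vertices}) of Lemma \ref{lem: nodal minus Neumann count}, rewriting $\phi(f)-\mu(f)$ as $\frac{|\partial\Gamma|}{2}+\sum_{v\in\V_{in}}\bigl(N(\Omega^{(v)})-\frac{\deg v}{2}\bigr)$ and then using $E=\sum_v\frac{\deg v}{2}$; and for (\ref{eq:sum_of_wavelength_capacities}) it counts the segment domains $\mathcal{W}$ by a boundary-incidence argument, summing a function $\delta(x_j)$ over Neumann points edge by edge. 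You instead compute the total number of Neumann domains $\mathcal{N}=\mu(f)+V-E$ once, via an Euler count on the cut graph (using that every Neumann domain is a tree when $k>\pi/L_{\min}$), and then derive both identities from the single segment count $\mathcal{N}-|\V_{in}|=\mu(f)-E+|\partial\Gamma|$: for (\ref{eq:sum_of_spec_pos}) you combine this with the additivity $\phi(f)=\sum_\Omega\phi(f|_\Omega)$ (valid since nodal points are never Neumann points nor vertices, by genericity) and the identity $\phi(f|_\Omega)=N(\Omega)$ from Lemma \ref{lem: local properties of Neumann domain}(\ref{enu: local prop - star}); for (\ref{eq:sum_of_wavelength_capacities}) you use the additivity of $\rho$. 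What your approach buys is a single unifying bookkeeping step — the Euler count — from which both formulas fall out symmetrically, plus the clean byproduct $\mathcal{N}=\mu(f)-\beta+1$ for the total number of Neumann domains. What the paper's approach buys is that its proof of (\ref{eq:sum_of_spec_pos}) reuses a lemma already proved for other purposes and does not require the additivity of $\phi$ over the Neumann partition; its segment count, while less elegant than the Euler argument, is more explicit about how Neumann points and boundary vertices are shared among domains.
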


\begin{rem}
Subtracting the two equation above, relates the oscillatory part of
the trace formula $N\left(\Gamma_{\lv},k\right)-\frac{L}{\pi}k$ (see
Subsection \ref{subsec: oscilatory part of the trace formula}) with those of each Neumann domain, which are $N\left(\Omega^{(v)}\right)-\rho\left(\Omega^{(v)}\right)$,
and the nodal surplus $\sigma\left(f\right)$: 
\begin{equation}
\sum_{v\in\V_{in}}\left(N\left(\Omega^{(v)}\right)-\rho\left(\Omega^{(v)}\right)\right)=\phi(f)-\frac{L}{\pi}k=N\left(\Gamma_{\lv},k\right)-\frac{L}{\pi}k+\sigma\left(f\right).
\end{equation}
And since every Neumann domain of $f$ which does not contain an interior
vertex is a segment with $N\left(\Omega\right)=\rho\left(\Omega\right)=1$
then we can sum over all Neumann domains of $f$. Using the notation
$N_{osc}\left(\Gamma_{\lv},k\right)=N\left(\Gamma_{\lv},k\right)-\frac{L}{\pi}k$,
we get
\begin{equation}
\sum N_{osc}\left(\Omega,k\right)=N_{osc}\left(\Gamma_{\lv},k\right)+\sigma\left(f\right).
\end{equation}
\end{rem}

\begin{proof}
It was shown in (\ref{eq:Diff-nodal-Neumann_by_vertices}) that $\phi(f)-\mu(f)=\frac{\left|\partial\Gamma\right|}{2}-\frac{1}{2}\sum_{v\in\V_{in}}\sum_{e\in\E_{v}}\mathrm{sign}\left(f\left(v\right)\partial_{e}f\left(v\right)\right)$.
It can be written, using Lemma \ref{lem: vertex values canonical ef and other ef },
as 
\[
\phi(f)-\mu(f)=\frac{\left|\partial\Gamma\right|}{2}+\sum_{v\in\V_{in}}\left(N\left(\Omega^{(v)}\right)-\frac{\deg v}{2}\right).
\]
A simple counting argument gives $E=\sum_{v\in\V}\frac{\deg v}{2}=\frac{\left|\partial\Gamma\right|}{2}+\sum_{v\in\V_{in}}\frac{\deg v}{2}$
so that 
\[
\phi(f)-\mu(f)=\sum_{v\in\V_{in}}N\left(\Omega^{(v)},k\right)-E+\left|\partial\Gamma\right|,
\]
proving (\ref{eq:sum_of_spec_pos}). To prove (\ref{eq:sum_of_wavelength_capacities}), let us denote by $\mathcal{W}$ the set of all Neumann domains of
$f$ that does not contain any interior vertex. By the definition
of $\rho\left(\Omega\right)=\frac{L_{\Omega}}{\pi}k$ and the fact
that the Neumann domains are a partition of the graph $\Gamma_{\lv}$
then summing $\rho\left(\Omega\right)$ over all Neumann domains gives,
\[
\sum_{\Omega\in\mathcal{W}}\rho\left(\Omega\right)+\sum_{v\in\V_{in}}\rho\left(\Omega^{\left(v\right)}\right)=\frac{L}{\pi}k.
\]
Since every $\Omega\in\mathcal{W}$ is a single segment (otherwise
it would have contained an interior vertex), then $\rho\left(\Omega\right)=1$
(Lemma \ref{lem: local properties of Neumann domain}) and therefore,
\[
\sum_{v\in\V_{in}}\rho\left(\Omega^{\left(v\right)}\right)=\frac{L}{\pi}k-\left|\mathcal{W}\right|.
\]
We are left to prove that $\left|\mathcal{W}\right|=\mu\left(f\right)+\left|\partial\Gamma\right|-E$
in order to prove (\ref{eq:sum_of_wavelength_capacities}). Each segment
$\Omega\in\mathcal{W}$ has $\left|\partial\Omega\right|=2$, one
point which is a Neumann point and one point which is either a Neumann
point or a boundary vertex. Let us now consider the set of Neumann
points $\left\{ x_{j}\right\} _{j=1}^{\mu\left(f\right)}$ and define
a counting function, 
\[
\delta\left(x_{j}\right)=\left|\set{\Omega\in\mathcal{W}}{x_{j}\in\partial\Omega}\right|.
\]
Then clearly 
\[
2\mathcal{W}=\sum_{\Omega\in\mathcal{\mathcal{W}}}\left|\partial\Omega\right|=\left|\partial\Gamma\right|+\sum_{j=1}^{\mu\left(f\right)}\delta\left(x_{j}\right).
\]
Consider an edge $e\in\E$ and let $J_{e}$ be such that $\left\{ x_{j}\right\} _{j\in J_{e}}$
are the Neumann points that lie in $e$. Notice that $J_{e}$ is not
empty by the assumption $k>\frac{\pi}{L_{min}}$. Let $v,u$ be the
vertices of $e$ (not necessarily distinct vertices), then it is a
simple observation that 
\[
\sum_{j\in J_{e}}\delta\left(x_{j}\right)=\begin{cases}
2\left|J_{e}\right|-1 & v\in\partial\Gamma\,\,or\,\,u\in\partial\Gamma\\
2\left|J_{e}\right|-2 & v,u\in\V_{in}
\end{cases},
\]
and therefore 
\begin{align*}
\sum_{j=1}^{\mu\left(f\right)}\delta\left(x_{j}\right) & =\sum_{e\in\E}\sum_{j\in J_{e}}\delta\left(x_{j}\right)=\sum_{e\in\E}\left(2\left|J_{e}\right|-2\right)+\left|\partial\Gamma\right|\\
= & 2\mu\left(f\right)-2E+\left|\partial\Gamma\right|.
\end{align*}
It follows that $2\mathcal{W}=2\mu\left(f\right)+2\left|\partial\Gamma\right|-2E$
as needed.
\end{proof}

\newpage{}
\section{\label{sec: Magnetic} The nodal magnetic relation and local magnetic
indices}

In this section we present the nodal magnetic theorem \cite{BerWey_ptrsa14},
using which the nodal surplus can be characterized in terms of magnetic
stability. The goal of this section is the decomposition of the nodal
surplus, a ``global'' quantity, into sum of ``local'' quantities.
As it is not in the scope of this manuscript, we will introduce the
magnetic potential and gauge invariance briefly, without proofs. An
elaborated explanations on magnetic potential and gauge invariance,
together with proofs and physical context can be found in \cite{GnuSmi_ap06,BerKuc_graphs}. 

\subsection{Magnetic potential and gauge invariance}

Given a metric graph $\Gamma_{\lv}$, a magnetic potential is a 1-form
on $\Gamma_{\lv}$. That is, a function $A:\Gamma_{\lv}\rightarrow\R$
whose sign depends on the orientation of an edge. By adding magnetic
potential $A$, the Laplace operator is changed from $-\left(\frac{d}{dx}\right)^{2}$
to $\left(i\frac{d}{dx}+A\right)^{2}$. The magnetic operator, $\left(i\frac{d}{dx}+A\right)^{2}$,
is a self-adjoint, non-negative operator on the domain of functions
satisfying Neumann vertex conditions. The \emph{magnetic flux} induced
by a magnetic potential $A$ along an oriented closed path $\gamma$
is given by $\varointclockwise_{\gamma}A\left(x\right)dx\,\,mod\,2\pi$.
Gauge invariance gives a characterization to unitary equivalence classes
of such magnetic operators in terms of the magnetic fluxes. That is,
two magnetic potentials $A$ and $\tilde{A}$ induce unitary equivalent
operators if and only if $e^{i\varointclockwise_{\gamma}A\left(x\right)dx}=e^{i\varointclockwise_{\gamma}\tilde{A}\left(x\right)dx}$
for any oriented closed path $\gamma$ (see Corollary 2.6.3 in \cite{BerKuc_graphs}).
Using these facts, the equivalence classes can be characterized by
the parameter space $\T^{\beta}$, where $\beta=E-V+1$ is the first
Betti number of the graph, as follows. Given some spanning tree $T$,
there are $\beta$ remaining edges in $\Gamma\setminus T$ which we
denote by $\left\{ e_{j}\right\} _{j=1}^{\beta}$. It can be showed
that for any magnetic potential $A$, there is a unique magnetic potential
$\tilde{A}$ which vanishes on $T$ and is constant on every edge $\left\{ e_{j}\right\} _{j=1}^{\beta}$.
We will therefore parameterize each unitary equivalence class of such
operators by the \emph{magnetic fluxes }(or magnetic parameters) $\av\in\T^{\beta}$,
defined as 
\[
\alpha_{j}:=\int_{e_{j}}\tilde{A}\left(x\right)dx\,\,mod\,2\pi\,\,\,\forall e_{j}\subset\Gamma\setminus T.
\]
It can be shown that a different choice of spanning tree corresponds
to a change of basis for $\T^{\beta}$. We may conclude that the eigenvalues
of the magnetic operators (which are invariant under unitary transformations)
can be considered as functions of $\av$ over $\T^{\beta}$. In fact,
it can be shown that if $k^{2}$ is a simple eigenvalue of a standard
graph $\Gamma_{\lv}$ (with no magnetic potential), then there is a
neighborhood in $\T^{\beta}$ around $\av=0$ on which $k\left(\av\right)$
is a smooth function such that $k\left(\av\right)$ is a simple eigenvalue
of $\left(i\frac{d}{dx}+A\right)^{2}$ for any $A$ corresponding
to $\av$.

\subsection{The nodal magnetic theorem}

As discussed in the introduction, the first classification of the
deviation of the nodal count from Courant's bound as a Morse index
of a certain functional appeared in \cite{BerKucSmi_gafa12} for planar
domains, in \cite{BanBerRazSmi_cmp12} for quantum graphs, and in
\cite{BerRazSmi_jpa12} for discrete graphs. This nodal count deviation,
in the quantum graphs setting, is the nodal surplus. The `nodal magnetic
theorem' is a similar relation between the nodal surplus (and its
discrete graphs' analog) to the Morse index of the eigenvalue with respect to changes in the magnetic field. That is, the `nodal magnetic theorem' characterizes
the nodal surplus as a stability index of the eigenvalue with respect
to magnetic perturbations. It was first proved by Berkolaiko for discrete
graphs \cite{Ber_apde13} after which Colin de Verdière provided a
different proof \cite{Col_apde13}. We will present the `nodal magnetic
theorem' for quantum graphs that was proved by Berkolaiko and Weyand
in \cite{BerWey_ptrsa14}.
\begin{defn}
Given an $N\times N$ self-adjoint matrix $B$, with eigenvalues $\left\{ \lambda_{j}\right\} _{j=1}^{N}$,
its \emph{Morse index} is defined by 
\begin{equation}
\M\left(B\right)=\left|\set{j\le N}{\lambda_{j}<0}\right|.
\end{equation}
Let $k\left(\av\right)$ be a smooth function with critical point
at $\av=0$ and denote its Hessian at $\av=0$ by $\hess_{\av}k$.
The \emph{Morse index }of $k\left(\av\right)$ at the critical point
$\av=0$ is $\M\left(\hess_{\av}k\right)$.
\end{defn}

The nodal magnetic theorem, for quantum graphs, can be stated as:
\begin{thm}
\cite{BerWey_ptrs13}\label{thm: nodal magnetic}Let $\Gamma_{\lv}$
be a standard graph and let $k^{2}$ be a simple eigenvalue with eigenfunction
$f$. Then the function $k\left(\av\right)$ has a critical point
at $\av=0$. If we further assume that $f$ satisfies property $I$,
then $\det\left(\hess_{\av}k\right)\ne0$ and the Morse index of $k\left(\av\right)$
at $\av=0$ is equal to the nodal surplus $\sigma\left(f\right)$.
Namely 
\begin{equation}
\sigma\left(f\right)=\M\left(\hess_{\av}k\right).
\end{equation}
\end{thm}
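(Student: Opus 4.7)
My plan is to parametrize the magnetic perturbation through the secular manifold framework developed in Section \ref{sec: The-secular-manifold}. A magnetic potential with fluxes $\av \in \T^\beta$ modifies the unitary evolution matrix to $U_\kv(\av) := e^{i\hat{\alpha}} U_\kv$, where $\hat{\alpha}$ is a diagonal matrix whose entries on the two directions of each non-tree edge $e_j$ differ by a sign and encode the flux $\alpha_j$. The modified secular equation $\det(1 - U_\kv(\av)) = 0$ then defines eigenvalues $k(\av)^2$. Since $k^2$ is simple, $\kv_0 := \{k\lv\} \in \Sigma^{reg}$ and Lemma \ref{lem: F p and mk} gives $\nabla_\kv F(\kv_0) \neq 0$; the implicit function theorem (applied jointly in $(\kv,\av)$) yields a smooth function $k(\av)$ near $\av = 0$ with $k(0) = k$.

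The critical point property $\nabla_\av k(0) = 0$ follows from time-reversal symmetry. Complex conjugation sends the magnetic Schrödinger operator at flux $\av$ to the one at flux $-\av$, which is unitarily equivalent to the same operator, and hence preserves the spectrum. Therefore $k(\av) = k(-\av)$, forcing the first derivatives to vanish. This can also be seen at the secular level: $\overline{U_\kv(\av)} = U_{\I(\kv)}(-\av)$, so combined with $F$ being real (Lemma \ref{lem:The-secular-functions properties}) the implicit equation is invariant under $(\kv,\av) \mapsto (\I(\kv), -\av)$.

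For the Hessian, I would differentiate the implicit relation twice, in the spirit of Lemma \ref{lem: adjugate}. Writing $\partial_{\alpha_j} U_\kv(\av)|_{\av=0} = i A_j U_{\kv_0}$ with $A_j$ the antisymmetric diagonal matrix for the $j$-th cycle edge, second-order perturbation theory expresses the entries of $\hess_\av k$ as bilinear forms in the amplitudes vector $\boldsymbol{a}$ of the canonical eigenfunction $f_{\kv_0}$, with contributions from the resolvent on the complement of $\ker(1 - U_{\kv_0})$. Using the explicit description (\ref{eq: a using v values}) of amplitudes in terms of the trace of $f$, these entries can be written as sums of products $f(v_j)f(u_j)/k$ over endpoints of cycle edges, weighted by matrix elements of $(1 - U_{\kv_0})^+$. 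Property $I$, guaranteeing $f(v) \neq 0$ at every vertex, is precisely what makes the resulting matrix non-degenerate, so $\det(\hess_\av k) \neq 0$.

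The main obstacle is the final step: identifying $\M(\hess_\av k)$ with $\sigma(f) = \phi(f) - n$. My plan is to use a nodal-point cutting argument. Cut $\Gamma$ at each of the $\phi(f)$ nodal points to obtain a disjoint union of nodal domains; on each nodal domain $f$ restricts to a Dirichlet first eigenfunction, and the first Betti number drops by exactly $\sigma(f)$ (by the bounds (\ref{eq: nodal bounds}) together with a count of independent cycles broken by the cuts). I would then interpolate between the original graph and the cut graph via a one-parameter family of vertex conditions at the nodal points (transitioning from Neumann continuity to Dirichlet), showing that eigenvalue crossings during the interpolation are governed by the signs of $f$ near the cuts, and that on the fully cut graph the Hessian is positive definite (magnetic fluxes through broken cycles no longer couple to the eigenvalue). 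A Schur complement / signature interlacing argument then shows that exactly $\sigma(f)$ negative eigenvalues of $\hess_\av k$ are produced in returning from the cut graph to $\Gamma$. The chief difficulty is tracking this interpolation rigorously, which requires a careful analysis of how the secular determinant and its Hessian depend on the cut parameter, and how the eigenvalue count in each nodal subdomain aligns with the spectral position $n$.
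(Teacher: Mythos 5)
Note first that the thesis does not prove this theorem; it is imported from Berkolaiko and Weyand \cite{BerWey_ptrsa14} and used as a black box (the thesis's own contribution nearby is Proposition \ref{prop: nodal magnetic secular}, which reformulates the theorem on the secular manifold and is proved \emph{from} it). Your sketch is therefore an attempt to reconstruct the Berkolaiko--Weyand argument. The two preliminary pieces are fine: the critical point $\nabla_{\av}k(0)=0$ does follow from time-reversal symmetry ($k(\av)=k(-\av)$, or at the secular level $\overline{U_{\kv;\av}}=U_{\I(\kv);-\av}$ with $F$ real), and second-order perturbation theory on $U_{\kv;\av}$ is indeed the route to the Hessian.

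The gap is in the two substantive claims, which your proposal names but does not establish. Your asserted intermediate formula --- Hessian entries given as ``sums of products $f(v_j)f(u_j)/k$ over endpoints of cycle edges, weighted by matrix elements of $(1-U_{\kv_0})^{+}$'' --- is neither derived nor obviously correct; the second derivative of an implicitly defined eigenvalue involves both $\partial^{2}_{\av}\det(1-U_{\kv;\av})$ and the reduced resolvent, and you do not show that this collapses to a quadratic form in vertex values alone, let alone one whose non-degeneracy visibly follows from $f(v)\neq 0$. More seriously, the interpolation from the graph cut at its $\phi(f)$ nodal points back to $\Gamma$ is where the entire theorem lives, and your plan neither specifies the interpolation precisely nor explains why exactly $\sigma(f)$ eigenvalue crossings occur, nor why each contributes a fixed sign to the index. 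The actual proof does not run a Neumann-to-Dirichlet homotopy at cut points: Berkolaiko and Weyand introduce a second family of parameters dual to $\av$ (gluing phases at the nodal cuts), realize the eigenvalue as a critical point of the enlarged parameter family, and relate the magnetic Hessian to the gluing Hessian by an explicit linear-algebraic identity whose signature is computable; the non-degeneracy $\det(\hess_{\av}k)\neq 0$ comes out of that same identity rather than being proved in advance. Without that identity --- or a rigorous spectral-flow substitute, which your sketch does not supply --- the final step remains an open plan, as you yourself acknowledge.
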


\begin{rem}
Although we stated the above with the Morse index of $k\left(\av\right)$,
it was stated in \cite{BerWey_ptrsa14} using the Morse index of $k^{2}\left(\av\right)$.
However, since $\nabla k\left(0\right)=0$, then \[\hess_{\av}(k^{2})=2k\hess_{\av}k,\ \text{with}\ k>0\] and so the two Morse indices are equal.
\end{rem}

A key ingredient in the proof of Theorem \ref{thm:First} was that
the nodal surplus is given as a function on the secular manifold.
This fact was first proved in \cite{Ban_ptrsa14}, using the nodal
magnetic relation and was further developed in \cite{AloBanBer_cmp18}.
This section will follow \cite{AloBanBer_cmp18}.
\begin{defn}
\label{def: U alpha and F alpha}Given a graph $\Gamma$, a spanning
tree $T$, and a choice of magnetic fluxes $\av\in\T^{\beta}$ such
that the $\Gamma\setminus T$ edge corresponding to $\alpha_{j}$
is denoted by $e_{j}$. Then the \emph{magnetic unitary evolution
matrix} is defined by 
\begin{equation}
U_{\kv;\av}:=e^{i\hat{\alpha}}U_{\kv},
\end{equation}
where $U_{\kv}=e^{i\hat{\kappa}}S$ is the unitary evolution matrix
defined in (\ref{eq: U=00003Dexp J S}) and $e^{i\hat{\alpha}}$ is
an $\av$ dependent unitary diagonal matrix, defined by 
\begin{equation}
\left(e^{i\hat{\alpha}}\right)_{e,e}=\overline{\left(e^{i\hat{\alpha}}\right)_{\hat{e},\hat{e}}}=\begin{cases}
e^{i\alpha_{j}} & e=e_{j}\\
1 & e\subset T
\end{cases}.
\end{equation}
We define the \emph{magnetic secular function, }similarly to Definition
\ref{def: secular function}, as 
\begin{align*}
\tilde{F}\left(\kv;\av\right) & :=\det\left(U_{\kv;\av}\right)^{\frac{1}{2}}\det\left(1-U_{\kv;\av}\right)=\det\left(U_{\kv}\right)^{\frac{1}{2}}\det\left(1-U_{\kv;\av}\right),\,\,\,\text{with}\\
\det\left(U_{\kv}\right)^{\frac{1}{2}} & =\left(i\right)^{\beta-1}e^{-i\sum_{e\in\E}\kappa_{e}}.
\end{align*}
Observe that $\det\left(e^{i\hat{\alpha}}\right)=1$, and that the
secular function $F$ is given by $F\left(\kv\right)=\tilde{F}\left(\kv;0\right)$.
\end{defn}

\begin{rem}
\label{rem: F alpha trig pol}Clearly $\tilde{F}$ is a trigonometric
polynomial in both $\kv$ and $\av$, and it is real using the same
argument as in Lemma \ref{lem:The-secular-functions properties} and
$\det\left(e^{i\hat{\alpha}}\right)=1$.
\end{rem}

The relation between the magnetic secular function $\tilde{F}$ and
the eigenvalues of the magnetic operator corresponding to $\av$ is
given in the following lemma, which can be found for example in both
\cite{GnuSmi_ap06,BerKuc_graphs}.
\begin{lem}
\label{lem: magnetic secular equation}Given a metric graph $\Gamma_{\lv}$
and a choice of magnetic fluxes $\av\in\T^{\beta}$, then $k^{2}>0$
is an eigenvalue of the corresponding magnetic operator if and only
if $\tilde{F}\left(k\lv;\av\right)=0$, and it is simple if the $k$
derivative $\frac{d\tilde{F}}{dk}\left(k\lv;\av\right)\ne0$. 
\end{lem}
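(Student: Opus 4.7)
\medskip

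\noindent\textbf{Proof proposal.} The plan is to mirror the scattering construction of Section~\ref{subsec:Wave-scattering} (which proved Lemmas~\ref{lem: secular bijection} and \ref{lem:The-secular-functions properties}), with the magnetic phases appearing naturally as a twist of the unitary evolution matrix.

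First, I would set up a convenient gauge. Given $\av\in\T^{\beta}$, fix a spanning tree $T$ and use gauge invariance (cited from \cite{BerKuc_graphs,GnuSmi_ap06}) to choose a representative magnetic potential $A$ that vanishes on every edge of $T$ and is a real constant on each $e_{j}\in\Gamma\setminus T$ with flux $\int_{e_{j}}A\,dx=\alpha_{j}\ \mathrm{mod}\ 2\pi$. On each edge the magnetic equation $(i\partial_{x}+A)^{2}f=k^{2}f$ then becomes $-f''=k^{2}f$ (on $T$) or an equivalent equation on $e_{j}$ whose solutions can still be written in the complex--amplitudes form of Definition~\ref{def: edge restriction notations} after absorbing the edgewise phase $e^{-i\int_{0}^{x}A}$. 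This reduces the problem on each edge to the non-magnetic ODE while shifting the match-up of amplitudes across each non-tree edge by a factor $e^{i\alpha_{j}}$.

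Second, I would translate the Neumann vertex conditions into a linear system on the amplitudes vector $\boldsymbol{a}\in\C^{\vec{\E}}$. Exactly as in Subsection~\ref{subsec:Wave-scattering} the continuity and zero-sum-of-derivatives conditions give the $2E$ equations $(1-e^{i\hat{\kappa}}S)\boldsymbol{a}=0$ for $\kv=\{k\lv\}$, except that on each non-tree edge $e_{j}$ the phase factor $e^{i\alpha_{j}}$ (respectively $e^{-i\alpha_{j}}$ in the opposite direction) enters, which is precisely absorbed by replacing the free-propagation matrix $e^{i\hat{\kappa}}$ by $e^{i\hat{\alpha}}e^{i\hat{\kappa}}$. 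Hence the magnetic eigenfunctions of eigenvalue $k^{2}$ are in linear bijection with $\ker(1-U_{k\lv;\av})$, so that
\begin{equation*}
k^{2}\ \text{is an eigenvalue}\ \iff\ \det(1-U_{k\lv;\av})=0\ \iff\ \tilde{F}(k\lv;\av)=0,
\end{equation*}
where the last equivalence uses that $\det(U_{k\lv})^{1/2}$ is unimodular (Definition~\ref{def: U alpha and F alpha}). Moreover, the multiplicity of $k^{2}$ equals $\dim\ker(1-U_{k\lv;\av})$.

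Third, for the simplicity criterion, I would argue by contrapositive: suppose $k^{2}$ has multiplicity at least $2$, i.e.\ $\dim\ker(1-U_{k\lv;\av})\ge 2$. By Lemma~\ref{lem: adjugate}(1) this forces $\mathrm{adj}(1-U_{k\lv;\av})=0$. By Jacobi's formula,
\begin{equation*}
\tfrac{d}{dk}\det(1-U_{k\lv;\av})=\mathrm{trace}\bigl(\mathrm{adj}(1-U_{k\lv;\av})\cdot\tfrac{d}{dk}(1-U_{k\lv;\av})\bigr)=0.
\end{equation*}
Since also $\det(1-U_{k\lv;\av})=0$ at such $k$, the product rule yields $\tfrac{d\tilde F}{dk}(k\lv;\av)=0$. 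Equivalently, $\tfrac{d\tilde F}{dk}(k\lv;\av)\ne 0$ implies simplicity.

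The expected main obstacle is the bookkeeping in the first two steps: one must verify that the magnetic twist enters the unitary evolution matrix as exactly $e^{i\hat{\alpha}}U_{\kv}$ with the sign/direction convention $(e^{i\hat{\alpha}})_{e,e}=\overline{(e^{i\hat{\alpha}})_{\hat e,\hat e}}$ stipulated in Definition~\ref{def: U alpha and F alpha}, and not some reversed convention. This is a careful but routine check, and everything else is a direct transcription of the non-magnetic arguments already developed in Lemmas~\ref{lem: secular bijection}, \ref{lem:The-secular-functions properties}, and \ref{lem: adjugate}.
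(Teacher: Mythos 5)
The paper itself does not prove this lemma; it is stated with a citation to~\cite{GnuSmi_ap06,BerKuc_graphs}, so there is no proof of record to compare against. Your reconstruction is correct and uses exactly the machinery the paper has already set up for the non-magnetic case: the gauge reduction to $U_{\kv;\av}=e^{i\hat\alpha}U_{\kv}$ is the standard route to Definition~\ref{def: U alpha and F alpha}, the equivalence $\tilde F=0\iff\det(1-U_{k\lv;\av})=0$ follows because the prefactor $\det(U_{\kv})^{\pm 1/2}$ is unimodular, and the simplicity criterion via Lemma~\ref{lem: adjugate}(1) plus Jacobi's formula is the right contrapositive (and is even slightly cleaner than invoking Lemma~\ref{lem: adjugate}(3) directly, since it avoids the self-adjoint-generator hypothesis). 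The one place where you rightly flag caution is the gauge bookkeeping: the magnetic Neumann conditions at a vertex involve the \emph{covariant} outgoing derivatives $\partial_{e}f-iA_e f$, not the plain derivatives, and it is the gauge transform $\tilde f=e^{i\int A}f$ on a spanning tree that converts them to ordinary Neumann conditions while producing the antisymmetric phases $(e^{i\hat\alpha})_{e,e}=\overline{(e^{i\hat\alpha})_{\hat e,\hat e}}$ on the non-tree edges; spelling this out is the only nontrivial content you have left implicit, and your identification of it as the main obstacle is accurate.
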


\begin{defn}
Denote the Hessian of $\tilde{F}$ with respect to $\av$ at the point
$\left(\kv;0\right)$ by $\hess_{\av}F\left(\kv\right)$. That is,
$\hess_{\av}F\left(\kv\right)$ is a $\beta\times\beta$ real symmetric
matrix whose entries are given by the real trigonometric polynomials
\[
\left(\hess_{\av}F\left(\kv\right)\right)_{i,j}=\frac{\partial^{2}\tilde{F}}{\partial\alpha_{j}\partial\alpha_{i}}\left(\kv;0\right).
\]
\end{defn}

\begin{rem}
We write $\hess_{\av}F$ instead of $\hess_{\av}\tilde{F}$ in order
to emphasize that it is evaluated at $\av=0$ and therefore  it is only a
function of $\kv$.

We may now rewrite the nodal magnetic theorem in terms of the secular
manifold:
\end{rem}

\begin{prop}
\label{prop: nodal magnetic secular}Let $\Gamma$ be a graph, let
$\kv\in\Sigma_{\G}$ with canonical eigenfunction $f_{\kv}$, and
let $\boldsymbol{\sigma}\left(\kv\right)$ be the nodal surplus of
$f_{\kv}$. Then the magnetic secular function $\tilde{F}$ satisfies
$\frac{\partial\tilde{F}}{\partial\alpha_{j}}\left(\kv;0\right)=0$
for every $\alpha_{j}$, and $\boldsymbol{\sigma}\left(\kv\right)$
is given explicitly by 
\begin{equation}
\boldsymbol{\sigma}\left(\kv\right)=\M\left(-\frac{\hess_{\av}F}{p}\left(\kv\right)\right),\label{eq: nodal magnetic secular}
\end{equation}
recall Definition \ref{def: p} of $p\left(\kv\right)$. Moreover, $\frac{\hess_{\av}F}{p}$
is continuous on $\Sigma_{\G}$, and satisfies: 
\begin{align*}
\forall\kv\in\Sigma_{\G}\,\,\,\det\left(\frac{\hess_{\av}F}{p}\left(\kv\right)\right) & \ne0,\,\,\text{and}\\
\frac{\hess_{\av}F}{p}\left(\I\left(\kv\right)\right)= & -\frac{\hess_{\av}F}{p}\left(\kv\right).
\end{align*}
\end{prop}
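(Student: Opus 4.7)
The plan is to combine the nodal magnetic theorem (Theorem~\ref{thm: nodal magnetic}) with an implicit function theorem argument applied to the defining relation $\tilde{F}(k(\av)\lv;\av)=0$ from Lemma~\ref{lem: magnetic secular equation}, and then transfer the conclusion from $\hess_{\av}k$ to $\hess_{\av}F/p$ by using the key formula $\nabla_{\kv}F=p\cdot m_{\kv}$ of Lemma~\ref{lem: F p and mk}. Concretely, since $\kv\in\Sigma_{\G}\subset\Sigma^{reg}$, the eigenvalue $k^{2}=k(0)^{2}$ is simple, so by Lemma~\ref{lem: magnetic secular equation} the branch $k(\av)$ is smooth near $\av=0$ and satisfies the identity $\tilde{F}(k(\av)\lv;\av)\equiv 0$.

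Differentiating this identity once in $\alpha_{j}$ gives
\[
(\nabla_{\kv}\tilde{F}\cdot\lv)\,\frac{\partial k}{\partial\alpha_{j}}+\frac{\partial\tilde{F}}{\partial\alpha_{j}}=0,
\]
evaluated along $(k(\av)\lv;\av)$. At $\av=0$, Theorem~\ref{thm: nodal magnetic} guarantees $\nabla_{\av}k(0)=0$, which yields $\frac{\partial\tilde{F}}{\partial\alpha_{j}}(\kv;0)=0$ for every $j$. Differentiating once more in $\alpha_{i}$ and evaluating at $\av=0$, all the cross chain-rule terms drop out (they carry a factor $\partial k/\partial\alpha_{i}$ or $\partial k/\partial\alpha_{j}$), leaving the clean identity
\[
(\nabla_{\kv}F(\kv)\cdot\lv)\,\hess_{\av}k+\hess_{\av}F(\kv)=0.
\]
Substituting $\nabla_{\kv}F(\kv)=p(\kv)\,m_{\kv}$ from Lemma~\ref{lem: F p and mk} gives $\hess_{\av}k=-\hess_{\av}F(\kv)/[p(\kv)\,(m_{\kv}\cdot\lv)]$.

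The crucial sign point is that $m_{\kv}\cdot\lv=\sum_{e}(m_{\kv})_{e}\,l_{e}>0$ (Remark~\ref{rem: BG measuer positivity}), so rescaling by this positive factor does not change the Morse index of a symmetric matrix. Thus
\[
\M(\hess_{\av}k)=\M\!\left(-\tfrac{\hess_{\av}F}{p(\kv)\,(m_{\kv}\cdot\lv)}\right)=\M\!\left(-\tfrac{\hess_{\av}F}{p}(\kv)\right),
\]
and invoking Theorem~\ref{thm: nodal magnetic} (which applies since $\kv\in\Sigma_{\G}\subset\Sigma_{I}$, i.e.\ $f_{\kv}$ satisfies property~I) gives both $\boldsymbol{\sigma}(\kv)=\M(-\hess_{\av}F/p(\kv))$ and $\det(\hess_{\av}F/p(\kv))\neq 0$. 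Continuity of $\hess_{\av}F/p$ on $\Sigma_{\G}$ is immediate: $\hess_{\av}F$ is a matrix of real trigonometric polynomials (Remark~\ref{rem: F alpha trig pol}), and by Lemma~\ref{lem: F p and mk} $p$ is a real analytic, nowhere vanishing function on $\Sigma^{reg}\supset\Sigma_{\G}$.

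Finally, the inversion symmetry will be obtained at the level of $\tilde{F}$ itself. Since the bond-scattering matrix $S$ and the reflection $J$ are real and $\det(e^{i\hat{\alpha}})=1$, a direct conjugation argument (mirroring the proof of~(\ref{eq: Inversion on F}) in Lemma~\ref{lem: inversion on F}) gives $\tilde{F}(\I(\kv);\I(\av))=(-1)^{\beta-1}\tilde{F}(\kv;\av)$. Writing $\I(\av)=-\av$ and taking two $\av$-derivatives at $\av=0$ preserves the sign, so $\hess_{\av}F(\I(\kv))=(-1)^{\beta-1}\hess_{\av}F(\kv)$. Combining with $p(\I(\kv))=(-1)^{\beta}p(\kv)$ from Lemma~\ref{lem: inversion on F} produces the factor $(-1)^{\beta-1}/(-1)^{\beta}=-1$, yielding $\hess_{\av}F/p\circ\I=-\hess_{\av}F/p$. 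The main obstacle here is bookkeeping: making sure that the positive factor $m_{\kv}\cdot\lv$ is correctly absorbed in the Morse index step, and that no chain-rule contribution survives in the second derivative at $\av=0$; both will follow from the vanishing of $\nabla_{\av}k(0)$, which is itself granted by the nodal magnetic theorem.
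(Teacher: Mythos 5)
Your proof is correct and follows essentially the same route as the paper: implicit-function-theorem differentiation of $\tilde F(k(\av)\lv;\av)\equiv 0$ at $\av=0$, using $\nabla_{\av}k(0)=0$ from Theorem~\ref{thm: nodal magnetic} to kill the first-order and cross terms, the identity $\nabla_{\kv}F=p\,m_{\kv}$ to express $\partial_k\tilde F(\kv;0)=p(\kv)(m_{\kv}\cdot\lv)$, the positivity of $m_{\kv}\cdot\lv$ to drop the scalar in the Morse index, and the conjugation identity $U_{\I\kv;\I\av}=\overline{U_{\kv;\av}}$ for the inversion symmetry. The only superficial difference is that you differentiate the full composite identity twice, whereas the paper first solves for $\partial k/\partial\alpha_j$ via the implicit function theorem and then differentiates the resulting quotient; these computations are equivalent.
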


\begin{proof}
Let $\kv\in\Sigma_{\G}$, and denote the edge lengths of $\Gamma_{\kv}$
by $\lv\in\opcl{0,2\pi}^{\E}$ such that $\left\{ \lv\right\} =\kv$.
Let $m_{\kv}$ be the weights vector at $\kv$, as in Definition \ref{def: mk weights},
namely if $\boldsymbol{a}$ is the amplitudes vector of $f_{\kv}$
then 
\[
\left(m_{\kv}\right)_{e}=\left|a_{e}\right|^{2}+\left|a_{\hat{e}}\right|^{2}.
\]
Since $\kv\in\Sigma_{\G}$, $k=1$ is a simple eigenvalue, and according
to Lemma \ref{lem: magnetic secular equation}, the function $k\left(\av\right)$
(with $k\left(0\right)=1$) is given by the implicit function $\tilde{F}\left(k\lv;\av\right)=0$
around the point $k=1$ and $\av=0$. According to Lemma \ref{lem:The-secular-functions properties},
the $k$ derivative of $\tilde{F}\left(k\lv;\av\right)$ at $k=1$
and $\av=0$ is given by:
\[
\frac{d\tilde{F}}{dk}\left(\lv;0\right)=\frac{d\tilde{F}}{dk}\left(\kv;0\right)=\sum_{e\in\E}l_{e}\frac{\partial F}{\partial\kappa_{e}}\left(\kv\right)=p\left(\kv\right)\cdot\left(m_{\kv}\cdot\lv\right).
\]
Lemma \ref{lem:The-secular-functions properties} also states that
$p$ is non vanishing on $\Sigma^{reg}$, and since $\kv\in\Sigma_{\G}$
then $m_{\kv}$ has positive entries. In particular 
\begin{align*}
\frac{d\tilde{F}}{dk}\left(\kv;0\right) & \ne0,\,\,\,\text{and}\\
\mathrm{sign}\frac{d\tilde{F}}{dk}\left(\kv;0\right) & =\mathrm{sign}\left(p\left(\kv\right)\right).
\end{align*}
By the implicit function theorem, locally around $k=1$ and $\av=0$
we get,
\begin{equation}
\frac{\partial k}{\partial\alpha_{j}}\left(\av\right)=-\frac{\frac{\partial\tilde{F}}{\partial\alpha_{j}}}{\frac{d\tilde{F}}{dk}}\left(k\lv;\av\right).\label{eq: implicit derivative}
\end{equation}
According to Theorem \ref{thm: nodal magnetic}, $\frac{\partial k}{\partial\alpha_{j}}\left(0\right)=0$
and therefore $\frac{\partial\tilde{F}}{\partial\alpha_{j}}\left(\lv;0\right)=0$
for all $j$. If we take an $\alpha_{i}$ derivative of (\ref{eq: implicit derivative})
and at $\av=0$, then there are two terms that vanish due to $\frac{\partial k}{\partial\alpha_{i}}\left(0\right)=\frac{\partial\tilde{F}}{\partial\alpha_{j}}\left(\lv;0\right)=0$
and we are left with
\begin{equation}
\frac{\partial^{2}k}{\partial\alpha_{j}\partial\alpha_{i}}\left(0\right)=-\frac{\frac{\partial^{2}\tilde{F}}{\partial\alpha_{j}\partial\alpha_{i}}}{\frac{\partial\tilde{F}}{\partial k}}\left(\lv;0\right).
\end{equation}
Notice that $\frac{\partial^{2}\tilde{F}}{\partial\alpha_{j}\partial\alpha_{i}}\left(\lv;0\right)=\left(\hess_{\av}F\left(\kv\right)\right)_{i,j}$
and that $\frac{\partial\tilde{F}}{\partial k}\left(\lv;0\right)=p\left(\kv\right)\cdot\left(\vec{m_{\kv}}\cdot\lv\right)$.
According to Theorem \ref{thm: nodal magnetic} we get that $\det\left(\frac{\partial^{2}k}{\partial\alpha_{j}\partial\alpha_{i}}\left(0\right)\right)\ne0$
so that $\det\left(\hess_{\av}F\left(\kv\right)\right)\ne0$ and the
nodal surplus of $f_{\kv}$ (the eigenfunction of $k\left(0\right)=1$),
is given by 
\[
\sigma\left(\kv\right)=\M\left(\frac{\partial^{2}k}{\partial\alpha_{j}\partial\alpha_{i}}\left(0\right)\right)=\M\left(-\frac{\hess_{\av}F\left(\kv\right)}{p\left(\kv\right)\cdot\left(m_{\kv}\cdot\lv\right)}\right).
\]
The factor $\frac{1}{m_{\kv}\cdot\lv}$ is strictly positive so $\M\left(-\frac{\hess_{\av}F\left(\kv\right)}{p\left(\kv\right)\cdot\left(m_{\kv}\cdot\lv\right)}\right)=\M\left(-\frac{\hess_{\av}F\left(\kv\right)}{p\left(\kv\right)}\right)$.
We conclude that, 
\[
\sigma\left(\kv\right)=\M\left(-\frac{\hess_{\av}F}{p}\left(\kv\right)\right).
\]
The entries of $\hess_{\av}F\left(\kv\right)$ and the function $p$
are real trigonometric polynomials. Using $p\left(\kv\right)\ne0$
on $\Sigma_{\G}$, we show that $\frac{\hess_{\av}F}{p}$ is continuous
on $\Sigma_{\G}$. 

As for the inversion symmetry, Lemma \ref{lem: inversion on F} gives
$p\left(\mathcal{I}\left(\kv\right)\right)=\left(-1\right)^{\beta}p\left(\kv\right)$.
By Definition \ref{def: U alpha and F alpha}, $U_{\I\kv;\I\av}=\overline{U_{\kv;\av}}$,
and so 
\[
\tilde{F}\left(\I\kv;\I\av\right)=\left(i\right)^{\beta-1}\overline{e^{-i\sum_{e\in\E}\kappa_{e}}\det\left(1-U_{\kv;\av}\right)}=\left(-1\right)^{\beta-1}\overline{\tilde{F}\left(\kv;\av\right)}.
\]
As $\tilde{F}$ is real, then $\tilde{F}\left(\I\kv;\I\av\right)=\left(-1\right)^{\beta-1}\tilde{F}\left(\kv;\av\right)$
and therefore, for any $\alpha_{i}$ and $\alpha_{j}$, 
\begin{align*}
\frac{\partial^{2}\tilde{F}}{\partial\alpha_{j}\partial\alpha_{i}}\left(\I\kv;\I\av\right) & =\left(-1\right)^{\beta+1}\frac{\partial^{2}\tilde{F}}{\partial\alpha_{j}\partial\alpha_{i}}\left(\kv;\av\right).
\end{align*}
It follows that, 
\[
\frac{\hess_{\av}F}{p}\left(\I\kv\right)=-\frac{\hess_{\av}F}{p}\left(\kv\right).
\]
\end{proof}

\subsection{Local magnetic index}
\begin{defn}
Let $\E_{bridges}$ be the set of bridges of a graph $\Gamma$, and
consider the decomposition of $\Gamma\setminus\E_{bridges}$ into
connected components\footnote{Using graph theoretic terminology, each component is a 2-edge-connected
sub-graph of $\Gamma$.}, as shown in Figure \ref{fig: bridge partition}. Such a connected
component may be a single vertex with no edges, in which case we call
it \emph{trivial.} We denote the non-trivial connected components
by $\left\{ \Gamma_{j}\right\} _{j=1}^{m}$. 

The \emph{edge separation decomposition }of $\Gamma$, denoted by
$\left[\Gamma_{1},\Gamma_{2},...\Gamma_{m}\right]$, is the set of
non-trivial connected components of $\Gamma\setminus\E_{bridges}$.
\end{defn}

\begin{defn}
Let $\Gamma$ be a graph with first Betti number $\beta$ and magnetic
fluxes $\av\in\T^{\beta}$. Let $\left[\Gamma_{1},\Gamma_{2},...\Gamma_{m}\right]$
be the edge-separation decomposition of $\Gamma$. Since $\E_{bridges}$
is contained in every spanning tree, then every magnetic flux is associated
to an edge in $\Gamma\setminus\E_{bridges}$, and therefore $\av$
is decomposed accordingly, $\av=\left(\av_{1},\av_{2}...\av_{m}\right)$. 

Given an eigenvalue $k>0$ of $\Gamma_{\lv}$, we consider a block
decomposition of its hessian, $\hess_{\av}k$, according to $\av=\left(\av_{1},\av_{2}...\av_{m}\right)$,
and we denote the diagonal blocks by $\hess_{\av_{j}}k$. We call
$\hess_{\av_{j}}k$ a \emph{local magnetic hessian} and define its
\emph{local magnetic index }by 
\[
\iota_{j}:=\M\left(\hess_{\av_{j}}k\right)
\]
.

\begin{figure}
\includegraphics[width=0.6\paperwidth]{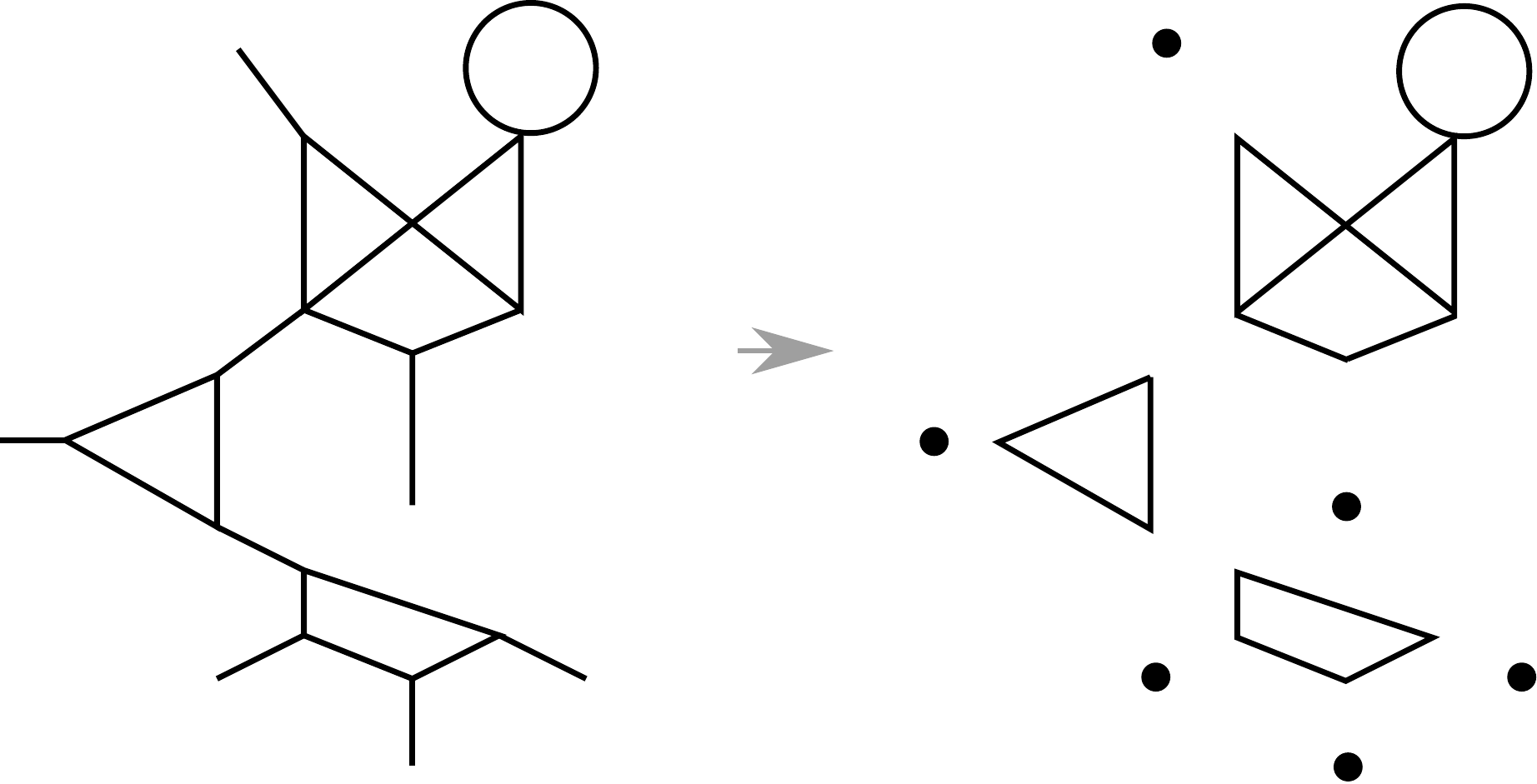}

\caption[Edge separation partition]{\label{fig: bridge partition} On the left, a graph $\Gamma$ with
$\left|\protect\E_{bridges}\right|=8$, $\left|\partial\Gamma\right|=6$
and edge separation decomposition $\left[\Gamma_{1},\Gamma_{2},\Gamma_{3}\right]$.
On the right, $\Gamma\setminus\protect\E_{bridges}$, with 6 trivial
connected components and 3 non trivial connected components $\Gamma_{1},\Gamma_{2}$
and $\Gamma_{3}$. }
\end{figure}
\end{defn}

\begin{defn}
\label{def: local magnetic indicies}Given a standard graph $\Gamma_{\lv}$
with edge separation decomposition\emph{ }$\left[\Gamma_{1},\Gamma_{2},...\Gamma_{m}\right]$.
For any $\Gamma_{j}$ (of first Betti number $\beta_{j}$) we define
its local (magnetic) index sequence $\iota_{j}:\G\rightarrow\left\{ 0,1,...\beta_{j}\right\} $
by 
\[
\iota_{j}\left(n\right):=\M\left(\hess_{\av_{j}}k_{n}\right).
\]
 
\end{defn}

\begin{thm}
\label{thm: local magnetic indices}Let $\Gamma_{\lv}$ be a standard
graph and let $f_{n}$ be a generic eigenfunction with nodal surplus
$\sigma\left(n\right)$. If $\Gamma$ has edge separation decomposition
$\left[\Gamma_{1},\Gamma_{2},...\Gamma_{m}\right]$, then $\hess_{\av}k_{n}$
is block diagonal, $\hess_{\av}k_{n}=\oplus\hess_{\av_{j}}k_{n}$,
according to the decomposition. In particular, 
\[
\sum_{j=1}^{m}\iota_{j}\left(n\right)=\sigma\left(n\right).
\]
\end{thm}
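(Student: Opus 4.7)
The plan is to reduce the theorem to verifying block-diagonality of $\hess_{\av}\tilde{F}(\kv;0)$ at points $\kv\in\Sigma_{\G}$, and then to invoke the nodal magnetic Theorem \ref{thm: nodal magnetic}. By the proof of Proposition \ref{prop: nodal magnetic secular}, at each $\kv\in\Sigma_{\G}$ corresponding to the eigenvalue $k_n$, one has
\[
\hess_{\av}k_{n}=-\frac{\hess_{\av}\tilde{F}(\kv;0)}{p(\kv)\cdot(m_{\kv}\cdot\lv)},
\]
which is $\hess_{\av}\tilde{F}$ rescaled by a nonzero scalar, so the block structure (and the sign-dependent Morse index) passes automatically from $\tilde{F}$ to $k_n$.

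The key step will be the following single-bridge lemma: for any bridge $e$ with bridge decomposition $\Gamma\setminus\{e\}=\Gamma_{1}\sqcup\Gamma_{2}$ and corresponding flux splitting $\av=(\av_{1},\av_{2})$, one has $\partial_{\alpha_{i}}\partial_{\alpha_{j}}\tilde{F}(\kv;0)=0$ for $\kv\in\Sigma_{\G}$ whenever $\alpha_{i}\in\av_{1}$ and $\alpha_{j}\in\av_{2}$. To prove it, I will first establish the magnetic analogue of Proposition \ref{prop: simple bridge det decomposition}: since $e$ lies in every spanning tree it carries no flux, so $e^{i\hat{\alpha}}$ is block-diagonal with respect to the bridge decomposition of $\vec{\E}$, and the computation in Appendix \ref{sec: appendix bridge} adapts verbatim to give
\[
\tilde{F}(\kv;\av)=g_{1}(\kv_{1};\av_{1})\,g_{2}(\kv_{2};\av_{2})\,e^{-i\kappa_{e}}\left(1-e^{i2\kappa_{e}}\,e^{i\Theta_{1}(\kv_{1};\av_{1})}\,e^{i\Theta_{2}(\kv_{2};\av_{2})}\right).
\]
For $\kv\in\Sigma_{\G}$ the canonical eigenfunction does not vanish on vertices, so $f_{\kv}|_{e}\not\equiv0$, placing $\Sigma_{\G}$ inside $\Sigma\setminus Z_{g}$; in particular $g_{1}g_{2}\neq0$ and $e^{i2\kappa_{e}}e^{i\Theta_{1}(\kv_{1};0)}e^{i\Theta_{2}(\kv_{2};0)}=1$. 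Substituting these identities into the vanishing first-derivative identity $\partial_{\alpha_{i}}\tilde{F}(\kv;0)=0$ from Proposition \ref{prop: nodal magnetic secular} forces $\partial_{\alpha_{i}}\Theta_{1}(\kv_{1};0)=0$ on the projection of $\Sigma_{\G}$ onto the $\kv_{1}$-coordinates; by real-analyticity of $\Theta_{1}$ on $\{g_{1}\neq0\}$ and the full-dimensionality of this projection, the identity $\partial_{\alpha_{i}}\Theta_{1}(\cdot;0)\equiv0$ extends to the entire open set $\{g_{1}\neq0\}$, and symmetrically $\partial_{\alpha_{j}}\Theta_{2}(\cdot;0)\equiv0$. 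A direct expansion of $\partial_{\alpha_{j}}\partial_{\alpha_{i}}\tilde{F}$ via the product formula then shows that every surviving summand at $\av=0$ contains at least one of the vanishing factors $(1-e^{i2\kappa_{e}}e^{i\Theta_{1}}e^{i\Theta_{2}})|_{0}$, $\partial_{\alpha_{i}}\Theta_{1}|_{0}$, or $\partial_{\alpha_{j}}\Theta_{2}|_{0}$, closing the lemma.

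Any two fluxes lying in distinct edge-separation components $\Gamma_{j_{1}},\Gamma_{j_{2}}$ are separated by at least one bridge, so applying the single-bridge lemma to that bridge yields the full block decomposition $\hess_{\av}\tilde{F}(\kv;0)=\oplus_{j=1}^{m}\hess_{\av_{j}}\tilde{F}(\kv;0)$, and hence $\hess_{\av}k_{n}=\oplus_{j=1}^{m}\hess_{\av_{j}}k_{n}$. Since the Morse index of a non-degenerate block-diagonal symmetric matrix equals the sum of the Morse indices of its blocks, combining this with Theorem \ref{thm: nodal magnetic} delivers $\sigma(n)=\M(\hess_{\av}k_{n})=\sum_{j}\M(\hess_{\av_{j}}k_{n})=\sum_{j}\iota_{j}(n)$. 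I expect the main obstacle to be the analyticity-propagation step forcing $\partial_{\alpha_{i}}\Theta_{1}(\cdot;0)\equiv0$ throughout $\{g_{1}\neq0\}$, which requires confirming that the projection of $\Sigma_{\G}$ onto $\T^{\E_{1}}$ covers an open subset of $\{g_{1}\neq0\}$; this should follow by applying the implicit function theorem to the bridge factorization, since for generic $\kv_{1}$ with $g_{1}(\kv_{1})\neq0$ the secular equation $e^{i2\kappa_{e}}=e^{-i\Theta_{1}(\kv_{1};0)}e^{-i\Theta_{2}(\kv_{2};0)}$ can be solved in $\kappa_{e}$ while varying $\kv_{2}$ over a full-measure subset of $\T^{\E_{2}}$.
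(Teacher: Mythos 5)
Your argument follows the paper's route: factorize the magnetic secular function across the bridge as
\[
\tilde{F}(\kv;\av)=g_{1}(\kv_{1};\av_{1})\,g_{2}(\kv_{2};\av_{2})\,e^{-i\kappa_{e}}\bigl(1-e^{i2\kappa_{e}}e^{i\Theta_{1}(\kv_{1};\av_{1})}e^{i\Theta_{2}(\kv_{2};\av_{2})}\bigr),
\]
use $\Sigma_{\G}\subset\Sigma\setminus Z_{g}$ to kill the $g$-factors, extract $\partial_{\alpha_{i}}\Theta_{1}(\kv_{1};0)=0$ from the vanishing first derivative of $\tilde{F}$, and observe that every term of the mixed second derivative at $(\kv;0)$ picks up one of the vanishing factors. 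That is exactly the mechanism in Proposition \ref{prop: block decomposition} of the paper, and your final reduction to Theorem \ref{thm: nodal magnetic} (Morse index additivity for non-degenerate block-diagonal matrices) is also the same.

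The one thing to flag: the analyticity-propagation step, where you try to upgrade the pointwise relation $\partial_{\alpha_{i}}\Theta_{1}(\kv_{1};0)=0$ to the identity $\partial_{\alpha_{i}}\Theta_{1}(\cdot;0)\equiv0$ on all of $\{g_{1}\neq0\}$, is not needed and is best removed. The conclusion you want is the vanishing of the off-diagonal block of $\hess_{\av}\tilde{F}$ at the specific point $(\kv;0)$, and for that your own ``direct expansion'' sentence already suffices: every surviving summand contains one of the pointwise-vanishing factors $(1-H)|_{(\kv;0)}$, $\partial_{\alpha_{i}}\Theta_{1}(\kv_{1};0)$, or $\partial_{\alpha_{j}}\Theta_{2}(\kv_{2};0)$, all of which vanish at that single point. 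So the ``main obstacle'' you anticipate — verifying that the projection of $\Sigma_{\G}$ onto $\T^{\E_{1}}$ covers an open subset of $\{g_{1}\neq0\}$ — is a detour: it is genuinely delicate (membership in $\Sigma_{\G}$ is a condition on the full eigenfunction, not on $\kv_{1}$ alone), but you can simply skip it, because the theorem is a pointwise statement about $\hess_{\av}k_{n}$ at $\kv=\{k_{n}\lv\}$ and needs nothing beyond the derivative values there.
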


\begin{thm}
\label{thm: local mag indx 2} If $\lv$ is rationally independent
then each local index $\iota_{j}$ is a random variable on $\left(\G,\mathcal{F}_{\G},d_{\G}\right)$
(the probability space defined in Theorem \ref{thm:First}) and it
is symmetric around $\frac{\beta_{j}}{2}$. That is, for any possible
value $i$, 
\begin{align*}
P\left(\iota_{j}=i\right) & =d_{\G}\left(\iota_{j}^{-1}\left(i\right)\right)=\lim_{N\rightarrow\infty}\frac{\left|\set{n\in\G\left(N\right)}{\iota_{j}\left(n\right)=i}\right|}{\left|\G\left(N\right)\right|},\,\,\,and\\
P\left(\iota_{j}=i\right) & =P\left(\iota_{j}=\beta_{j}-i\right).
\end{align*}
Where $\left(\G,\mathcal{F}_{\G},d_{\G}\right)$ is the probability
space defined in Theorem \ref{thm:First}.
\end{thm}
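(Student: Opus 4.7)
The plan is to transfer the local magnetic index to a function on the secular manifold, show that this function is constant on connected components of $\Sigma_{\G}$ with a clean inversion symmetry, and then quote the machinery developed for Theorem~\ref{thm:First}. More precisely, for each $\Gamma_{j}$ in the edge separation decomposition I would define
\begin{equation*}
\boldsymbol{\iota}_{j}(\kv):=\mathcal{M}\!\left(-\tfrac{\mathrm{Hess}_{\vec{\alpha}_{j}}F}{p}(\kv)\right),\qquad \kv\in\Sigma_{\G},
\end{equation*}
where $\mathrm{Hess}_{\vec{\alpha}_{j}}F(\kv)$ denotes the $j$-th diagonal block of $\mathrm{Hess}_{\vec{\alpha}}F(\kv)$ under the decomposition $\vec{\alpha}=(\vec{\alpha}_{1},\dots,\vec{\alpha}_{m})$. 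By Theorem~\ref{thm: local magnetic indices}, this block structure is exactly the one inherited by the magnetic Hessian of $k_{n}$, and after normalizing by the strictly positive factor $m_{\kv}\cdot\vec{l}$ (as in the proof of Proposition~\ref{prop: nodal magnetic secular}) we have $\iota_{j}(n)=\boldsymbol{\iota}_{j}(\{k_{n}\vec{l}\,\})$ for every $n\in\G$.

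Next I would verify that $\boldsymbol{\iota}_{j}$ is constant on connected components of $\Sigma_{\G}$. By Proposition~\ref{prop: nodal magnetic secular}, the full matrix $\mathrm{Hess}_{\vec{\alpha}}F/p$ is continuous on $\Sigma_{\G}$ with non-vanishing determinant; because it is block diagonal, the product of block determinants is non-zero, hence each diagonal block is itself continuous with non-vanishing determinant on $\Sigma_{\G}$. The Morse index of a continuous family of non-singular real symmetric matrices is locally constant, so $\boldsymbol{\iota}_{j}$ is constant on every connected component of $\Sigma_{\G}$. In particular, the level sets $\boldsymbol{\iota}_{j}^{-1}(i)$ are unions of connected components of $\Sigma_{\G}$, so by Definition~\ref{def: sigma-algebra} the sets $\iota_{j}^{-1}(i)=\{n\in\G\,:\,\{k_{n}\vec{l}\,\}\in\boldsymbol{\iota}_{j}^{-1}(i)\}$ are unions of atoms of $\mathcal{F}_{\G}$. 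Theorem~\ref{thm:First}(1)--(2) then yields that $\iota_{j}$ is a random variable on $(\G,\mathcal{F}_{\G},d_{\G})$ with
\begin{equation*}
P(\iota_{j}=i)=\frac{\mu_{\vec{l}}(\boldsymbol{\iota}_{j}^{-1}(i))}{\mu_{\vec{l}}(\Sigma_{\G})}=\lim_{N\to\infty}\frac{|\{n\in\G(N)\,:\,\iota_{j}(n)=i\}|}{|\G(N)|}.
\end{equation*}

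For the symmetry, I would use the inversion identity $\mathrm{Hess}_{\vec{\alpha}}F(\mathcal{I}(\kv))/p(\mathcal{I}(\kv))=-\mathrm{Hess}_{\vec{\alpha}}F(\kv)/p(\kv)$ from Proposition~\ref{prop: nodal magnetic secular}, which restricts blockwise. Writing $A_{j}(\kv):=-\mathrm{Hess}_{\vec{\alpha}_{j}}F(\kv)/p(\kv)$, we have $A_{j}(\mathcal{I}(\kv))=-A_{j}(\kv)$, and since $A_{j}$ is real symmetric, non-singular, and of size $\beta_{j}\times\beta_{j}$, the elementary identity $\mathcal{M}(A)+\mathcal{M}(-A)=\beta_{j}$ gives $\boldsymbol{\iota}_{j}(\mathcal{I}(\kv))=\beta_{j}-\boldsymbol{\iota}_{j}(\kv)$. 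Combined with the fact from Theorem~\ref{thm: inversion BG} that $\mathcal{I}$ preserves the Barra-Gaspard measure, this yields $P(\iota_{j}=i)=P(\iota_{j}=\beta_{j}-i)$.

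The potential obstacle I foresee is the verification that the block $\mathrm{Hess}_{\vec{\alpha}_{j}}F/p$ is itself non-singular at every point of $\Sigma_{\G}$; I would settle it by noting that block-diagonality makes the full determinant factor as $\prod_{j}\det(\mathrm{Hess}_{\vec{\alpha}_{j}}F/p)$, so non-vanishing of the full determinant on $\Sigma_{\G}$ (supplied by Proposition~\ref{prop: nodal magnetic secular}) forces non-vanishing of each factor. Once this is in hand, the rest is a direct transcription of the argument used for the symmetry of $\sigma$ in the proof of Theorem~\ref{thm:First}.
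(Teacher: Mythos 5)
Your proposal is correct and follows essentially the same route as the paper: define $\boldsymbol{\iota}_{j}$ on $\Sigma_{\G}$ via the block of $\hess_{\av}F/p$, observe that block-diagonality (Corollary~\ref{cor: block diagonal}) plus non-vanishing of the full determinant (Proposition~\ref{prop: nodal magnetic secular}) forces each block to be non-singular, conclude local constancy of the Morse index so level sets are unions of atoms of $\mathcal{F}_{\G}$, and deduce the symmetry from the blockwise inversion identity together with Theorem~\ref{thm: inversion BG}. The paper packages the constancy-and-inversion step as its Lemma~\ref{lem: local index constant on connected components}, but the content is the same.
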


To prove the above two theorems, as in the cases of Theorems \ref{thm:First}
and \ref{thm: statistics_of_local_observables-1}, we first define
the relevant functions on $\Sigma_{\G}$:
\begin{defn}
\label{def: secular local magnetic index}Let $\Gamma$ be a graph
with edge separation decomposition $\left[\Gamma_{1},\Gamma_{2},...\Gamma_{m}\right]$,
then for any $\Gamma_{j}$ we define $\boldsymbol{\iota}_{j}:\Sigma_{\G}\rightarrow\left\{ 0,1,...\beta_{j}\right\} $
by 
\begin{equation}
\boldsymbol{\iota}_{j}\left(\kv\right)=\M\left(-\frac{\hess_{\av_{j}}F}{p}\left(\kv\right)\right).\label{eq: secular local mangetic index}
\end{equation}
Where, $\hess_{\av_{j}}F$ is the block of $\hess_{\av}F$ corresponding
to $\av_{j}$. 
\end{defn}

\begin{rem}
According to the proof of Proposition \ref{prop: nodal magnetic secular},
if $\Gamma_{\lv}$ is a standard graph and $\iota_{j}$ is as defined
in Theorem \ref{thm: local magnetic indices}, then 
\begin{equation}
\forall n\in\G\,\,\iota_{j}\left(n\right)=\boldsymbol{\iota}_{j}\left(\left\{ k_{n}\lv\right\} \right).\label{eq: local magnetic index n to secular}
\end{equation}
\end{rem}

In order to prove that $\hess_{\av}k$ is block diagonal we will prove
that $\hess_{\av}F$ is block diagonal as they are proportional by
a scalar function. In fact it is enough to show that $\hess_{\av}F$
is block diagonal with respect to every single bridge decomposition.
Let $\Gamma$ be a graph with a bridge $e$ and bridge decomposition
$\Gamma\setminus\left\{ e\right\} =\Gamma_{1}\sqcup\Gamma_{2}$, where
$e$ is oriented from $\Gamma_{1}$ to $\Gamma_{2}$. Denote the corresponding
decomposition of the torus coordinates $\kv=\left(\kv_{1},\kappa_{e},\kv_{2}\right)$
and the magnetic parameters $\av=\left(\av_{1},\av_{2}\right)$.
\begin{prop}
\label{prop: block decomposition}If $e$ is a bridge of a graph $\Gamma$
with bridge decomposition $\Gamma\setminus\left\{ e\right\} =\Gamma_{1}\sqcup\Gamma_{2}$,
then $\hess_{\av}F\left(\kv\right)$ is block diagonal with respect
to $\av=\left(\av_{1},\av_{2}\right)$ for any $\kv\in\Sigma_{\G}$.
Moreover, if $\hess_{\av_{i}}F$ is the block corresponding to $\av_{i}$,
then $\frac{\hess_{\av_{i}}F}{\frac{\partial F}{\partial\kappa_{e}}}\left(\kv_{1},\kappa_{e},\kv_{2}\right)$
depends only on $\kv_{i}$ (under the restriction of $\left(\kv_{1},\kappa_{e},\kv_{2}\right)\in\Sigma_{\G}$).
\end{prop}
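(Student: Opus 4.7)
The proof will follow the template of Proposition \ref{prop: nodal magnetic secular}, but applied to the magnetic version of the bridge decomposition in Proposition \ref{prop: simple bridge det decomposition}. The first step is to extend that decomposition to the magnetic secular function. Since every magnetic flux $\alpha_j$ is associated with an edge of $\Gamma\setminus\E_{bridges}$, and in particular not with the bridge $e$, the diagonal matrix $e^{i\hat\alpha}$ splits into blocks compatible with the bridge structure of $U_{\kv}$, acting trivially on the two directed copies of $e$. Rerunning the Schur-complement computation of Appendix \ref{sec: appendix bridge} with $U_{\kv;\av}=e^{i\hat\alpha}U_{\kv}$ in place of $U_{\kv}$, I expect to obtain
\begin{equation*}
\tilde F\!\left(\kv_1,\kappa_e,\kv_2;\av_1,\av_2\right)=G_1(\kv_1;\av_1)\,G_2(\kv_2;\av_2)\,e^{-i\kappa_e}\!\left(1-e^{i2\kappa_e}H_1(\kv_1;\av_1)H_2(\kv_2;\av_2)\right),
\end{equation*}
with $H_i=e^{i\Theta_i(\kv_i;\av_i)}$ and $G_i,\Theta_i$ depending smoothly on $\av_i$, reducing at $\av=0$ to $g_i,\Theta_i$ of Proposition \ref{prop: simple bridge det decomposition}.

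The next step identifies the first $\av$-derivatives of $\Theta_i$ at $\av=0$. Proposition \ref{prop: nodal magnetic secular} gives $\partial_{\alpha_j}\tilde F(\kv;0)=0$ on $\Sigma_{\G}$, and in fact the same holds on $\Sigma^{reg}$, since the vanishing of $\nabla_{\av}k(0)$ in Theorem \ref{thm: nodal magnetic} uses only simplicity of the eigenvalue. Differentiating the factorization above with respect to $\alpha_{1,i}\in\av_1$ at $\av=0$ on $\Sigma^{reg}\setminus Z_g$, where $g_1g_2\neq0$ and $P_0:=e^{i2\kappa_e}e^{i\Theta_1}e^{i\Theta_2}=1$, the term proportional to $(1-P_0)$ disappears and we get
$0=-g_1(\kv_1)g_2(\kv_2)e^{-i\kappa_e}\cdot i\,\partial_{\alpha_{1,i}}\Theta_1(\kv_1;0)$,
forcing $\partial_{\alpha_{1,i}}\Theta_1(\kv_1;0)=0$. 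For any $\kv_1$ with $g_1(\kv_1)\neq0$ one can always find $(\kappa_e,\kv_2)$ with $(\kv_1,\kappa_e,\kv_2)\in\Sigma^{reg}\setminus Z_g$, so by real-analyticity the identity $\partial_{\alpha_{1,i}}\Theta_1(\cdot;0)\equiv0$ holds on $\T^{\E_1}$, and the analogous identity holds for $\Theta_2$.

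Now fix $\kv\in\Sigma_{\G}$ and compute the mixed second partial $\partial_{\alpha_{2,j}}\partial_{\alpha_{1,i}}\tilde F\big|_{\av=0}$. At such $\kv$ we have $1-P_0=0$, $g_1g_2\neq0$, and by the previous step $\partial_{\alpha_{1,i}}P|_{\av=0}=\partial_{\alpha_{2,j}}P|_{\av=0}=0$. Differentiating, all terms carrying a factor of $1-P_0$ or a first partial of $P$ drop out, leaving
$\partial_{\alpha_{2,j}}\partial_{\alpha_{1,i}}\tilde F|_{\av=0}=-g_1g_2 e^{-i\kappa_e}\,\partial_{\alpha_{2,j}}\partial_{\alpha_{1,i}}P|_{\av=0}$.
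Since $P$ factors as a product of a function of $(\kv_1,\av_1)$ and a function of $(\kv_2,\av_2)$, its mixed partial equals $e^{i2\kappa_e}(\partial_{\alpha_{1,i}}H_1)(\partial_{\alpha_{2,j}}H_2)$, and by the previous step both factors vanish at $\av=0$. Hence the cross block of $\hess_{\av}F(\kv)$ is zero, giving the desired block-diagonal structure.

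For the ratio, the same calculation applied to an $\av_1$-diagonal entry yields $\partial_{\alpha_{1,i'}}\partial_{\alpha_{1,i}}\tilde F|_{\av=0}=-g_1g_2 e^{-i\kappa_e}\cdot i\,\partial_{\alpha_{1,i'}}\partial_{\alpha_{1,i}}\Theta_1(\kv_1;0)$, the cross term $-\partial\Theta_1\cdot\partial\Theta_1$ having dropped out by paragraph two. Meanwhile, formula \eqref{eq: simple bridge dfdke} specialized to $\kv\in\Sigma_{\G}\subset\Sigma^{reg}\setminus Z_g$ gives $\frac{\partial F}{\partial\kappa_e}(\kv)=-2ig_1(\kv_1)g_2(\kv_2)e^{-i\kappa_e}$. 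Dividing produces
$\frac{(\hess_{\av_1}F)_{i,i'}(\kv)}{\partial_{\kappa_e}F(\kv)}=\tfrac{1}{2}\,\partial_{\alpha_{1,i'}}\partial_{\alpha_{1,i}}\Theta_1(\kv_1;0)$,
which depends only on $\kv_1$; the analogous statement for $\av_2$ follows by symmetry. The main obstacle will be paragraph one: carefully rerunning the appendix machinery with magnetic phases inserted to justify the factorization and to check that the $\av_i$-dependence lives entirely inside $G_i$ and $\Theta_i$; once that structural fact is in hand, the rest is the short computation above.
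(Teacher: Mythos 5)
Your proof is correct and follows essentially the same approach as the paper: factor the magnetic secular function $\tilde F$ via the bridge decomposition of Appendix \ref{sec: appendix bridge} (using that $e^{i\hat\alpha}$ acts trivially on the bridge since bridges lie in every spanning tree), invoke $\partial_{\alpha_j}\tilde F(\kv;0)=0$ from Proposition \ref{prop: nodal magnetic secular} to conclude $\nabla_{\av_i}\Theta_i(\kv_i;0)=0$, and then a direct second-derivative computation yields the block-diagonal structure and $\frac{\hess_{\av_i}F}{\partial F/\partial\kappa_e}=\tfrac12\hess_{\av_i}\Theta_i(\kv_i)$. The detour in your second paragraph extending $\nabla_{\av_1}\Theta_1(\cdot;0)\equiv0$ to all of $\T^{\E_1}$ is unnecessary (and slightly imprecise, since $\Theta_1$ is only defined and real-analytic off $\{g_1=0\}$): as $\Sigma_\G\subset\Sigma^{reg}\setminus Z_g$, the vanishing at the point under consideration follows directly, which is all the computation uses.
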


\begin{proof}
As in the proof of Proposition \ref{prop: simple bridge det decomposition},
we use Lemma \ref{lem: Technical  bridge decomp} to decompose $\tilde{F}\left(\kv,\av\right)$.
Denote $\left(\kv,\av\right)=\left(\kv_{1},\kappa_{e},\kv_{2};\av_{1},\av_{2}\right)$
according to the decomposition, and decompose $e^{i\hat{\alpha}}$
and $e^{i\hat{\kappa}}$ correspondingly. We substitute $\zi=e^{i\hat{\alpha_{i}}}e^{i\hat{\kappa}_{i}}$
for $i\in\left\{ 1,2\right\} $ and $z_{e}=e^{i\kappa_{e}}$ into
Proposition \ref{lem: Technical  bridge decomp} and Lemma \ref{lem: scattering phase},
and define $g_{i}\left(\kv_{i},\av_{i}\right):=\det D_{i}\left(\zi\right),\,e^{i\Theta_{i}\left(\kv_{i},\av_{i}\right)}=\mathcal{S}\left(\zi\right)$
for $i\in\left\{ 1,2\right\} $. As discussed in Proposition \ref{prop: simple bridge det decomposition},
the $g_{i}$'s are trigonometric polynomials and each $\Theta_{i}$
is smooth whenever $g_{i}\ne0$. Let $\kv'\in\Sigma_{\G}$, so the
amplitudes of $f_{\kv'}$ does not vanish, and according to Proposition
\ref{lem: Technical  bridge decomp}, $g_{1}\left(\kv'_{1},0\right)g_{2}\left(\kv'_{2},0\right)\ne0$.
By continuity, there is a neighborhood of $\left(\kv',0\right)$,
$O\subset\T^{\E}\times\T^{\beta}$, such that $g_{1}\left(\kv_{1},\av\right)g_{2}\left(\kv_{2},\av\right)\ne0$
for any $\left(\kv,\av\right)\in O$. According to Proposition \ref{lem: Technical  bridge decomp},
\[
\det\left(1-U_{\kv;\av}\right)=g_{1}\left(\kv_{1},\av_{1}\right)g_{2}\left(\kv_{2},\av_{2}\right)\left(1-e^{i2\kappa_{e}}e^{i\Theta_{1}\left(\kv_{1},\av_{1}\right)}e^{i\Theta_{2}\left(\kv_{2},\av_{2}\right)}\right),
\]
and denote $h\left(\kv,\av\right):=\det\left(U_{\kv}\right)^{-\frac{1}{2}}g_{1}\left(\kv_{1},\av_{1}\right)g_{2}\left(\kv_{2},\av_{2}\right)$
so that, 
\begin{equation}
\tilde{F}\left(\kv;\av\right)=h\left(\kv,\av\right)\left(1-e^{i2\kappa_{e}}e^{i\Theta_{1}\left(\kv_{1},\av_{1}\right)}e^{i\Theta_{2}\left(\kv_{2},\av_{2}\right)}\right).\label{eq: F tilde for bridge}
\end{equation}
Since $h$ is smooth and non vanishing on $O$, we can conclude that
for any $\left(\kv,\av\right)\in O$,
\begin{align}
\tilde{F} & =0\iff\frac{\tilde{F}}{h}=0\iff e^{i2\kappa_{e}}e^{i\Theta_{1}}e^{i\Theta_{2}}=1,\label{eq: secular equation for bridge}
\end{align}
\begin{align}
\frac{\partial\tilde{F}}{\partial\kappa_{e}} & =\frac{\tilde{F}}{h}\frac{\partial h}{\partial\kappa_{e}}+h\frac{\partial}{\partial\kappa_{e}}\left(\frac{\tilde{F}}{h}\right),\label{eq: dFdk magnetic with bridge}\\
\frac{\partial\tilde{F}}{\partial\alpha_{j}} & =\frac{\tilde{F}}{h}\frac{\partial h}{\partial\alpha_{j}}+h\frac{\partial}{\partial\alpha_{j}}\left(\frac{\tilde{F}}{h}\right),\,\,\,\text{and}\label{eq: dfda mag with bridg}\\
\frac{\partial^{2}\tilde{F}}{\partial\alpha_{j}\partial\alpha_{i}}=\frac{\tilde{F}}{h}\frac{\partial^{2}h}{\partial\alpha_{j}\partial\alpha_{i}}+ & \frac{\partial h}{\partial\alpha_{i}}\frac{\partial}{\partial\alpha_{j}}\left(\frac{\tilde{F}}{h}\right)+\frac{\partial h}{\partial\alpha_{j}}\frac{\partial}{\partial\alpha_{i}}\left(\frac{\tilde{F}}{h}\right)+h\frac{\partial^{2}}{\partial\alpha_{j}\partial\alpha_{i}}\left(\frac{\tilde{F}}{h}\right).\label{eq: Hessian mag with bridge}
\end{align}
If $\kv\in\Sigma_{\G}$ such that $\left(\kv;0\right)\in O$, then
$h\left(\kv;0\right)\ne0,\,\frac{\tilde{F}}{h}\left(\kv;0\right)=0$
and so
\begin{align}
\frac{\partial F}{\partial\kappa_{e}}\left(\kv\right)= & \frac{\partial\tilde{F}}{\partial\kappa_{e}}\left(\kv;0\right)=h\left(\kv;0\right)\frac{\partial}{\partial\kappa_{e}}\left(\frac{\tilde{F}}{h}\right)\left(\kv;0\right),\,\,\,\text{and}\label{eq: dfdk and dfda on Sigma}\\
\frac{\partial\tilde{F}}{\partial\alpha_{j}}\left(\kv;0\right)= & h\left(\kv;0\right)\frac{\partial}{\partial\alpha_{j}}\left(\frac{\tilde{F}}{h}\right)\left(\kv;0\right).
\end{align}
According to Proposition \ref{prop: nodal magnetic secular}, $\frac{\partial\tilde{F}}{\partial\alpha_{j}}\left(\kv;0\right)=0$
for any $\alpha_{j}$, so that the above equation gives $\frac{\partial}{\partial\alpha_{j}}\left(\frac{\tilde{F}}{h}\right)\left(\kv;0\right)=0$
and therefore (\ref{eq: Hessian mag with bridge}) is simplified:
\begin{equation}
\left(\hess_{\av}F\left(\kv\right)\right)_{j,i}=\frac{\partial^{2}\tilde{F}}{\partial\alpha_{j}\partial\alpha_{i}}\left(\kv;0\right)=h\left(\kv;0\right)\frac{\partial^{2}}{\partial\alpha_{j}\partial\alpha_{i}}\left(\frac{\tilde{F}}{h}\right)\left(\kv;0\right).
\end{equation}
We may now substitute $\frac{\tilde{F}}{h}\left(\kv;\av\right)=\left(1-e^{i2\kappa_{e}}e^{i\Theta_{1}\left(\kv_{1},\av_{1}\right)}e^{i\Theta_{2}\left(\kv_{2},\av_{2}\right)}\right)$
into the above expressions and get 
\begin{equation}
\frac{\left(\hess_{\av}F\left(\kv\right)\right)_{j,i}}{\frac{\partial F}{\partial\kappa_{e}}\left(\kv\right)}=\frac{\frac{\partial^{2}}{\partial\alpha_{j}\partial\alpha_{i}}\left(\frac{\tilde{F}}{h}\right)}{\frac{\partial}{\partial\kappa_{e}}\left(\frac{\tilde{F}}{h}\right)}\left(\kv;0\right)=\frac{\frac{\partial^{2}}{\partial\alpha_{j}\partial\alpha_{i}}\left(e^{i\Theta_{1}}e^{i\Theta_{2}}\right)}{2ie^{i\Theta_{1}}e^{i\Theta_{2}}}\left(\kv;0\right).
\end{equation}
Denote the entries of $\av_{1}$ and $\av_{2}$ by $\left\{ \alpha_{1,i}\right\} _{i=1}^{\beta_{1}}$
and $\left\{ \alpha_{2,j}\right\} _{j=1}^{\beta_{2}}$, and the cosponsoring
gradients by $\nabla_{\av_{1}}$ and $\nabla_{\av_{2}}$. Since both
$\nabla_{\av_{1}}\left(\frac{\tilde{F}}{h}\right)\left(\kv;0\right)$
and $\nabla_{\av_{2}}\left(\frac{\tilde{F}}{h}\right)\left(\kv;0\right)$
vanish, then so does $\nabla_{\av_{1}}e^{i\Theta_{1}}\left(\kv_{1};0\right)$
and $\nabla_{\av_{2}}e^{i\Theta_{2}}\left(\kv_{2};0\right)$. It follows
that the off diagonal blocks of $\hess_{\av}F$ vanish: 
\begin{equation}
\frac{\partial^{2}\tilde{F}}{\partial\alpha_{1,i}\partial\alpha_{2,j}}\left(\kv;0\right)\propto\frac{\partial^{2}}{\partial\alpha_{1,i}\partial\alpha_{2,j}}\left(e^{i\Theta_{1}}e^{i\Theta_{2}}\right)\left(\kv;0\right)=\frac{\partial}{\partial\alpha_{1,i}}e^{i\Theta_{1}}\left(\kv_{1};0\right)\frac{\partial}{\partial\alpha_{2,j}}e^{i\Theta_{2}}\left(\kv_{2};0\right)=0.
\end{equation}
On the first block, $\frac{\hess_{\av_{1}}F}{\frac{\partial F}{\partial\kappa_{e}}}\left(\kv\right)=\frac{1}{2}\hess_{\av_{1}}\Theta_{1}\left(\kv_{1}\right)$,
since 
\begin{align}
\frac{1}{\frac{\partial F}{\partial\kappa_{e}}}\frac{\partial^{2}\tilde{F}}{\partial\alpha_{1,i}\partial\alpha_{1,j}}\left(\kv;0\right)= & \frac{1}{2ie^{i\Theta_{1}}e^{i\Theta_{2}}}\frac{\partial^{2}\left(e^{i\Theta_{1}}e^{i\Theta_{2}}\right)}{\partial\alpha_{1,i}\partial\alpha_{1,j}}\left(\kv;0\right)\\
= & \frac{1}{2}\frac{\partial^{2}\Theta_{1}}{\partial\alpha_{1,i}\partial\alpha_{1,j}}\left(\kv;0\right).
\end{align}
And same for the second block, $\frac{\hess_{\av_{2}}F}{\frac{\partial F}{\partial\kappa_{e}}}\left(\kv\right)=\frac{1}{2}\hess_{\av_{2}}\Theta_{2}\left(\kv_{2}\right)$.
In particular, each block $\frac{\hess_{\av_{i}}F}{\frac{\partial F}{\partial\kappa_{e}}}$
is a function of its local coordinates $\kv_{i}$. 
\end{proof}
\begin{cor}
\label{cor: block diagonal}If $\Gamma$ has edge separation decomposition
$\left[\Gamma_{1},\Gamma_{2},...\Gamma_{m}\right]$, then for any
$\kv\in\Sigma_{\G}$, $\hess_{\av}F$ is block diagonal with respect
to $\av=\left(\av_{1},...\av_{m}\right)$. In particular, 
\[
\sum_{j=1}^{m}\boldsymbol{\iota}_{j}\left(\kv\right)=\boldsymbol{\sigma}\left(\kv\right).
\]
\end{cor}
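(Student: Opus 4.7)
The plan is to derive Corollary~\ref{cor: block diagonal} by iterating Proposition~\ref{prop: block decomposition} across all bridges of $\Gamma$, and then invoking additivity of the Morse index on block-diagonal matrices. The edge separation decomposition $[\Gamma_1,\ldots,\Gamma_m]$ is exactly what one obtains by successively removing bridges: each $\Gamma_j$ is a bridgeless $2$-edge-connected component, and the magnetic parameters $\av$ split accordingly as $\av=(\av_1,\ldots,\av_m)$ since every bridge lies in every spanning tree and therefore contributes no $\alpha_j$.

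I would first set up an induction on the number of bridges $b=|\mathcal{E}_{bridges}|$. The base case $b=0$ is trivial: $m=1$, $\av=\av_1$, and the statement about block-diagonality is vacuous. For the inductive step, fix a bridge $e$ with bridge decomposition $\Gamma\setminus\{e\}=\tilde{\Gamma}_1\sqcup\tilde{\Gamma}_2$ and corresponding splitting $\av=(\tilde{\av}_1,\tilde{\av}_2)$. Proposition~\ref{prop: block decomposition} gives for $\kv\in\Sigma_{\G}$
\[
\hess_{\av}F(\kv)=\hess_{\tilde{\av}_1}F(\kv)\oplus\hess_{\tilde{\av}_2}F(\kv),
\]
and moreover each diagonal block $\hess_{\tilde{\av}_i}F/\frac{\partial F}{\partial\kappa_e}$ depends only on the coordinates associated with $\tilde{\Gamma}_i$. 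Any bridge of $\tilde{\Gamma}_i$ is automatically a bridge of $\Gamma$ (its removal still disconnects $\Gamma$), so each $\tilde{\Gamma}_i$ has strictly fewer bridges than $\Gamma$. The inductive hypothesis, applied to the secular data intrinsic to $\tilde{\Gamma}_i$, further refines each block into sub-blocks indexed by the $2$-edge-connected components of $\tilde{\Gamma}_i$, which together are exactly $[\Gamma_1,\ldots,\Gamma_m]$. Here one small care is needed: the inductive hypothesis is about secular manifolds of the subgraphs, whereas we work with $\hess_{\av}F$ of the full graph. This is reconciled by the explicit formula from Proposition~\ref{prop: block decomposition} that expresses each block in terms of the local scattering phases $\Theta_i$, which themselves factor through the analogous bridge decomposition of $\tilde{\Gamma}_i$.

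Once block-diagonality is established, the Morse index additivity is immediate: for any real symmetric matrix written as a direct sum $B=B_1\oplus\cdots\oplus B_m$, the spectrum is the union of the spectra of the $B_j$, so $\mathcal{M}(B)=\sum_j\mathcal{M}(B_j)$. Applying this to $B=-\hess_{\av}F(\kv)/p(\kv)$, which is block-diagonal with blocks $-\hess_{\av_j}F(\kv)/p(\kv)$, and combining with Proposition~\ref{prop: nodal magnetic secular} which gives $\boldsymbol{\sigma}(\kv)=\mathcal{M}(-\hess_{\av}F(\kv)/p(\kv))$ and with Definition~\ref{def: secular local magnetic index} which gives $\boldsymbol{\iota}_j(\kv)=\mathcal{M}(-\hess_{\av_j}F(\kv)/p(\kv))$, we obtain
\[
\boldsymbol{\sigma}(\kv)=\sum_{j=1}^m\boldsymbol{\iota}_j(\kv).
\]

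The main obstacle is the bookkeeping in the inductive step — making sure that the bridge decomposition of $\Gamma$ along a single bridge $e$, combined with the decomposition of each resulting component $\tilde{\Gamma}_i$ along its own bridges, really reproduces the edge separation decomposition $[\Gamma_1,\ldots,\Gamma_m]$ of $\Gamma$. This is purely combinatorial and follows from the fact that the $2$-edge-connected components of $\Gamma$ are intrinsic (independent of the order in which bridges are removed), but it is worth spelling out so that the induction cleanly aggregates all the $\hess_{\av_j}F$ blocks. Everything else (Morse index additivity, vanishing of off-diagonal Hessian entries across distinct $\Gamma_j$) follows directly from Proposition~\ref{prop: block decomposition} and Proposition~\ref{prop: nodal magnetic secular}.
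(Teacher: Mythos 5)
Your core idea (iterate Proposition~\ref{prop: block decomposition} over bridges, then use Morse-index additivity for block-diagonal matrices together with Proposition~\ref{prop: nodal magnetic secular} and Definition~\ref{def: secular local magnetic index}) is exactly the paper's, and the Morse-index step is identical. Where you diverge is in organizing the first half as an induction on the number of bridges, and that choice creates the very gap you flag yourself. In your inductive step you want to apply the hypothesis to $\tilde{\Gamma}_i$, but the hypothesis is a statement about the Hessian of the secular function \emph{of $\tilde{\Gamma}_i$}, whereas what you actually hold in hand after one application of Proposition~\ref{prop: block decomposition} is the diagonal block $\hess_{\tilde{\av}_i}F$ of the \emph{parent} graph's secular function, which equals (up to the scalar $\tfrac{1}{2}\,\partial F/\partial\kappa_e$) the Hessian of the scattering phase $\Theta_i$, not of the secular function of $\tilde{\Gamma}_i$. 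Reconciling these would require proving a separate block-decomposition lemma for $\Theta_i$ under further bridge removal; that is plausible but is genuine extra work not supplied by Proposition~\ref{prop: block decomposition} as stated.

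The induction is in fact unnecessary, and dropping it closes the gap cleanly. The paper's (terse) proof is a direct argument: fix $\kv\in\Sigma_{\G}$ and take any two magnetic parameters $\alpha_a\in\av_j$ and $\alpha_b\in\av_{j'}$ with $j\ne j'$. Since $\Gamma_j$ and $\Gamma_{j'}$ are distinct $2$-edge-connected components, some bridge $e$ separates them; apply Proposition~\ref{prop: block decomposition} to \emph{that single} bridge decomposition $\Gamma\setminus\{e\}=\tilde{\Gamma}_1\sqcup\tilde{\Gamma}_2$, which places $\alpha_a$ and $\alpha_b$ in opposite blocks $\tilde{\av}_1,\tilde{\av}_2$ and hence forces $\frac{\partial^2\tilde{F}}{\partial\alpha_a\partial\alpha_b}(\kv;0)=0$. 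Running over all pairs $(j,j')$ kills all cross-block entries at once, establishing block-diagonality with respect to $\av=(\av_1,\ldots,\av_m)$ without ever descending into subgraphs or their scattering data. From there your Morse-index paragraph is correct as written.
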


\begin{proof}
Let $\kv\in\Sigma_{\G}$, by considering the block decomposition of
$\hess_{\av}F\left(\kv\right)$ for every bridge decomposition we
get that $\hess_{\av}F\left(\kv\right)$ is block diagonal with respect
to $\av=\left(\av_{1},...\av_{m}\right)$. It now follows from Proposition
\ref{prop: nodal magnetic secular} and Definition \ref{def: secular local magnetic index}
that $\sum_{j=1}^{m}\boldsymbol{\iota}_{j}\left(\kv\right)=\boldsymbol{\sigma}\left(\kv\right)$. 
\end{proof}
The proof of Theorem \ref{thm: local magnetic indices} follows:
\begin{proof}
If $\Gamma_{\lv}$ is a standard graph, $n\in\G$ and we denote $\kv=\left\{ k_{n}\lv\right\} \in\G$,
then we have showed in the proof of Proposition \ref{prop: nodal magnetic secular}
that $\hess_{\av}k_{n}\propto\hess_{\av}F\left(\kv\right)$. Therefore
$\hess_{\av}k_{n}$ is block diagonal with respect to $\av=\left(\av_{1},...\av_{m}\right)$,
and so $\M\left(\hess_{\av}k_{n}\right)=\sum_{j=1}^{m}\iota_{j}\left(n\right)$.
As $\M\left(\hess_{\av}k_{n}\right)=\sigma\left(n\right)$ by Theorem
\ref{thm: nodal magnetic}, we are done. 
\end{proof}
In order to prove Theorem \ref{thm: local mag indx 2}, we first need
to show the symmetry for $\boldsymbol{\iota}_{j}$. 
\begin{lem}
\label{lem: local index constant on connected components}If $\Gamma$
has edge separation decomposition $\left[\Gamma_{1},\Gamma_{2},...\Gamma_{m}\right]$,
then for any $\Gamma_{j}$, $\boldsymbol{\iota}_{j}$ is constant
on connected components of $\Sigma_{\G}$ and satisfies $\boldsymbol{\iota}_{j}\circ\I=\beta_{j}-\boldsymbol{\iota}_{j}$.
\end{lem}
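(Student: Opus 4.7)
The plan is to deduce everything from three ingredients already established: (i) Proposition~\ref{prop: nodal magnetic secular}, which gives continuity of $\frac{\hess_{\av}F}{p}$ on $\Sigma_{\G}$, its non-vanishing determinant, and the inversion identity $\frac{\hess_{\av}F}{p}\circ\I=-\frac{\hess_{\av}F}{p}$; (ii) Corollary~\ref{cor: block diagonal}, which says $\hess_{\av}F$ is block diagonal with respect to $\av=(\av_1,\dots,\av_m)$; and (iii) the elementary fact that the Morse index of a continuously varying symmetric matrix is locally constant as long as no eigenvalue crosses zero, i.e.\ as long as the determinant stays away from $0$.

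First I would verify that the block $\frac{\hess_{\av_j}F}{p}$ is a continuous, real symmetric $\beta_j\times\beta_j$ matrix-valued function on $\Sigma_{\G}$ with nowhere-vanishing determinant. Continuity is immediate: the entries of $\hess_{\av_j}F$ are (real) trigonometric polynomials, and $p$ is a continuous non-vanishing function on $\Sigma^{reg}\supset\Sigma_{\G}$ by Lemma~\ref{lem: F p and mk}. For the determinant, block-diagonality from Corollary~\ref{cor: block diagonal} gives
\begin{equation}
\det\!\left(\tfrac{\hess_{\av}F}{p}(\kv)\right)=\prod_{i=1}^{m}\det\!\left(\tfrac{\hess_{\av_i}F}{p}(\kv)\right),
\end{equation}
and the left-hand side is non-zero on $\Sigma_{\G}$ by Proposition~\ref{prop: nodal magnetic secular}, so each factor on the right, in particular the $j$-th one, is non-zero. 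Consequently, along any path in a connected component of $\Sigma_{\G}$, the eigenvalues of $-\frac{\hess_{\av_j}F}{p}$ vary continuously and none of them ever hits $0$, so the count of negative eigenvalues, $\boldsymbol{\iota}_j$, is locally constant; being integer-valued and continuous on a connected set, it is constant on each connected component of $\Sigma_{\G}$.

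For the inversion symmetry, restrict the identity $\frac{\hess_{\av}F}{p}\circ\I=-\frac{\hess_{\av}F}{p}$ from Proposition~\ref{prop: nodal magnetic secular} to the $j$-th diagonal block (inversion acts coordinate-wise and does not mix the $\av_i$'s), to get
\begin{equation}
\tfrac{\hess_{\av_j}F}{p}(\I(\kv))=-\tfrac{\hess_{\av_j}F}{p}(\kv).
\end{equation}
Since this symmetric matrix has only non-zero eigenvalues on $\Sigma_{\G}$, flipping its sign swaps the positive and negative eigenvalues, so $\M(B)+\M(-B)=\beta_j$ and
\begin{equation}
\boldsymbol{\iota}_j(\I(\kv))=\M\!\left(-\tfrac{\hess_{\av_j}F}{p}(\I(\kv))\right)=\M\!\left(\tfrac{\hess_{\av_j}F}{p}(\kv)\right)=\beta_j-\boldsymbol{\iota}_j(\kv).
\end{equation}

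There is no real obstacle here; the lemma is essentially a bookkeeping corollary of the structural results already in hand. The only point that warrants care is to make sure that the per-block determinant is non-zero (rather than only the global one), which is why I would invoke the block-diagonal decomposition before running the standard Morse-index continuity argument, and that $\I$ preserves the block structure of $\av$, which is immediate from the fact that the edge-separation decomposition, and hence the splitting $\av=(\av_1,\dots,\av_m)$, is intrinsic to $\Gamma$ and therefore unaffected by $\kv\mapsto\I(\kv)$.
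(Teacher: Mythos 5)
Your proof is correct and follows essentially the same route as the paper: invoke Proposition~\ref{prop: nodal magnetic secular} for continuity, nonvanishing determinant, and the inversion identity of $\frac{\hess_{\av}F}{p}$, then use the block-diagonality from Corollary~\ref{cor: block diagonal} to pass these properties to each block $\frac{\hess_{\av_j}F}{p}$, and conclude via Morse-index stability and the sign-flip symmetry. Your explicit use of the determinant product formula to see that each block determinant is non-zero is a small elaboration the paper leaves implicit, but the argument is the same.
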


\begin{proof}
According to Proposition \ref{prop: nodal magnetic secular}, $\frac{\hess_{\av}F}{p}$
is continuous on $\Sigma_{\G}$ with $\det\left(\frac{\hess_{\av}F}{p}\right)\ne0$
and $\frac{\hess_{\av}F}{p}\circ\I=-\frac{\hess_{\av}F}{p}$ on $\Sigma_{\G}$.
As $\frac{\hess_{\av}F}{p}$ is block diagonal , by Corollary \ref{cor: block diagonal},
then each of its blocks, $\frac{\hess_{\av_{j}}F}{p}$, is also continuous
with $\det\left(\frac{\hess_{\av_{j}}F}{p}\right)\ne0$ and\\ $\frac{\hess_{\av_{j}}F}{p}\circ\I=-\frac{\hess_{\av_{j}}F}{p}$
on $\Sigma_{\G}$. It follows that the eigenvalues of $\frac{\hess_{\av_{j}}F}{p}$
are real continuous and non-vanishing on $\Sigma_{\G}$ and therefore
$\boldsymbol{\iota}_{j}$, the number of positive eigenvalues, is
constant on connected components. As $\frac{\hess_{\av_{j}}F}{p}$
is of size $\beta_{j}$, then $\frac{\hess_{\av_{j}}F}{p}\circ\I=-\frac{\hess_{\av_{j}}F}{p}$
implies $\boldsymbol{\iota}_{j}\circ\I=\beta_{j}-\boldsymbol{\iota}_{j}$
on $\Sigma_{\G}$.
\end{proof}
We may now prove Theorem \ref{thm: local mag indx 2}:
\begin{proof}
Let $\Gamma_{\lv}$ be a standard graph with rationally independent
$\lv$ and edge separation decomposition $\left[\Gamma_{1},\Gamma_{2},...\Gamma_{m}\right]$.
According to (\ref{eq: local magnetic index n to secular}), for any
$\iota_{j}$ and any possible value $i$, the level set $\iota_{j}^{-1}\left(i\right)$
is given by 
\[
\iota_{j}^{-1}\left(i\right)=\set{n\in\G}{\left\{ k_{n}\lv\right\} \in\boldsymbol{\iota}_{j}^{-1}\left(i\right)}.
\]
Just as in the proof of Theorem \ref{thm:First}, we have showed in
Lemma \ref{lem: local index constant on connected components} that
$\boldsymbol{\iota}_{j}$ is constant on connected components of $\Sigma_{\G}$,
and so $\iota_{j}^{-1}\left(i\right)$ is a union of atoms of $\mathcal{F_{\G}}$,
which proves that $\iota_{j}$ is a random variable on $\left(\G,\mathcal{F}_{\G},d_{\G}\right)$.
Using Theorem \ref{thm: inversion BG} and Lemma \ref{lem: local index constant on connected components},
we get 
\begin{align*}
d_{\G}\left(\iota_{j}^{-1}\left(i\right)\right) & =\mu_{\lv}\left(\boldsymbol{\iota}_{j}^{-1}\left(i\right)\right)\\
= & \mu_{\lv}\left(\mathcal{I}\left(\boldsymbol{\iota}_{j}^{-1}\left(i\right)\right)\right)\\
= & \mu_{\lv}\left(\boldsymbol{\iota}_{j}^{-1}\left(\beta_{j}-i\right)\right)=d_{\G}\left(\iota_{j}^{-1}\left(\beta_{j}-i\right)\right).
\end{align*}
\end{proof}

\newpage{}
\section{\label{sec: Binomial} Binomial distributions and universality}

This section is the highlight of the thesis, in which we present a
result published in \cite{AloBanBer_cmp18} that proves the universality
conjecture of Gnutzmann Smilansky and Webber in \cite{GnuSmiWeb_wrm04}
for a certain family of graphs that we call \emph{trees of cycles}.
The method of our proof uses the probabilistic machinery developed
in Section \ref{sec: proof-of-existence} and the local indices of
Section \ref{sec: Magnetic} to conclude that the nodal surplus random
variable is a sum of local random variables. We then apply the symmetries
developed in Section \ref{sec: The-secular-manifold} to show that
the local random variables are independent, by which we prove that
the nodal surplus is binomial, and the central limit theorem ensures
that it will converge to a Gaussian in the limit of big graphs. Using
the same method, a similar result for Neumann count statistics was
achieved in \cite{AloBan19}, where the local indices are replaced
by different local random variables, the spectral position of Neumann
domains from Section \ref{sec: Properties-of-a Neumann domain}. This
motivated us to conjecture a universal behaviour for the Neumann statistics.
Let restate the (suitably modified) nodal statistics conjecture of
Gnutzmann Smilansky and Webber using our terminology (as appears in
the introduction): 
\begin{conjecture}
\label{conj: universality}Let $\left\{ \Gamma_{\lv}^{\left(\beta\right)}\right\} _{\beta\nearrow\infty}$
be any sequence of standard graph parameterized by their first Betti
numbers, and assume that each graph has rationally independent edge
lengths. Then the corresponding sequence of nodal surplus random variables
$\left\{ \sigma^{\left(\beta\right)}\right\} _{\beta\nearrow\infty}$
converge to a Gaussian distribution as follows:
\[
\frac{\sigma^{\left(\beta\right)}-\frac{\beta}{2}}{\sqrt{\mathrm{Var}\left(\sigma^{\left(\beta\right)}\right)}}\xrightarrow[\beta\rightarrow\infty]{\mathcal{D}}N\left(0,1\right).
\]
Where the convergence above is in distribution and the variances are
of order $\mathrm{Var}\left(\sigma^{\left(\beta\right)}\right)=\mathcal{O}\left(\beta\right)$.
\end{conjecture}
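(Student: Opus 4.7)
The plan is to leverage the decomposition of the nodal surplus into local magnetic indices (Theorem \ref{thm: local magnetic indices}) and combine it with the symmetries of the Barra-Gaspard measure (Lemma \ref{lem: Re and tau e are measure preseving}) to obtain a sum of independent random variables, after which a suitable Central Limit Theorem yields the Gaussian limit. Concretely, for each $\beta$, let $\left[\Gamma_1^{(\beta)},\dots,\Gamma_{m^{(\beta)}}^{(\beta)}\right]$ be the edge-separation decomposition of $\Gamma^{(\beta)}$ with Betti numbers $\beta_1,\dots,\beta_{m^{(\beta)}}$, and write
\[
\sigma^{(\beta)} \;=\; \sum_{j=1}^{m^{(\beta)}} \iota_j^{(\beta)},
\]
where each $\iota_j^{(\beta)}\in\{0,1,\dots,\beta_j\}$ is the local magnetic index random variable from Definition \ref{def: local magnetic indicies}, symmetric around $\beta_j/2$ by Theorem \ref{thm: local mag indx 2}.

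First I would establish independence of the $\iota_j^{(\beta)}$ in the special case where $\Gamma^{(\beta)}$ is a tree of cycles. In that situation every piece $\Gamma_j$ is a single cycle, so $\beta_j=1$ and each $\iota_j^{(\beta)}$ is Bernoulli with parameter $1/2$ by the symmetry $\boldsymbol{\iota}_j\circ\I=\beta_j-\boldsymbol{\iota}_j$ (Lemma \ref{lem: local index constant on connected components}). The key input for independence is Proposition \ref{prop: block decomposition}: the block $\hess_{\av_j}F/(\partial F/\partial\kappa_e)$ is a function only of the local coordinates $\kv_j$ on $\Sigma_{\G}$, so each $\boldsymbol{\iota}_j$ depends only on $\kv_j$. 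I would then use the measure-preserving bridge symmetries $\tau_e$ and $\Rev$ (Lemma \ref{lem: Re and tau e are measure preseving}), each acting as the inversion $\I$ on one side of a bridge while fixing the other, to show inductively that the joint pushforward of $\mu_{\lv}$ under $(\boldsymbol{\iota}_1,\dots,\boldsymbol{\iota}_m)$ is invariant under flipping any subset of the marginals $\iota_j\mapsto 1-\iota_j$. Combined with each marginal being Bernoulli$(1/2)$, this forces the joint law to be the product measure. Hence $\sigma^{(\beta)}\sim\mathrm{Bin}(\beta,1/2)$, and the de Moivre--Laplace theorem gives the Gaussian limit with $\mathrm{Var}(\sigma^{(\beta)})=\beta/4$.

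For the full conjecture with arbitrary sequences $\left\{\Gamma^{(\beta)}\right\}$, the same independence argument reduces the problem to understanding $\iota_j^{(\beta)}$ on each bridgeless piece $\Gamma_j$. If the pieces $\Gamma_j$ are drawn from a finite family (so their Betti numbers $\beta_j$ are uniformly bounded), the conclusion follows from Lindeberg's CLT applied to the independent triangular array $\left\{\iota_j^{(\beta)}\right\}_{j,\beta}$, using the uniform bound $0\le\iota_j^{(\beta)}\le\beta_j\le C$ and the fact (from Theorem \ref{thm:First} (\ref{enu: levels sets are either ampty or of positive measure})) that each local variance is strictly positive whenever attained, so $\mathrm{Var}(\sigma^{(\beta)})=\Theta(m^{(\beta)})=\Theta(\beta)$.

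The hard part will be handling sequences in which the edge-separation decomposition contains bridgeless pieces of unbounded Betti number. In that regime the reduction to a sum of bounded independent summands fails, and one must analyze $\boldsymbol{\iota}_j$ directly on a large bridgeless graph. I expect this to require either (a) uncovering further symmetries of $\Sigma^{reg}$ for bridgeless graphs beyond $\I$, $\tau_e$, $\Rev$ - for example, from $\kv$-independent automorphisms of $\Gamma$ or from cycle-swap involutions acting on $\mathrm{Hess}_{\av}F/p$ - that would split $\boldsymbol{\iota}_j$ into still smaller pieces; or (b) a CLT for weakly dependent sequences, where covariance estimates between different entries of the Hessian are obtained by oscillatory integral bounds against $\mu_{\lv}$, exploiting the high-dimensional equidistribution of $\{k_n\lv\}$. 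Controlling these cross-covariances - in particular showing they sum to $o(\beta)$ so that $\mathrm{Var}(\sigma^{(\beta)})=\mathcal{O}(\beta)$ survives and the Lindeberg condition holds - is the central technical obstacle, and is presumably where the heavier numerical evidence reported in \cite{AloBanBer_conj} is needed to identify the right invariants.
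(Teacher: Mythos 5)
First, note what is actually in the paper: the statement you are asked to prove is labeled a \emph{conjecture} in the text, and the paper itself proves only the tree-of-cycles special case (Theorem \ref{thm:Second} part \ref{enu: binomial nodal surplus}), with the remaining cases deferred to the work-in-progress \cite{AloBanBer_conj}. Your sketch for the tree-of-cycles case is essentially the paper's own argument: edge-separation decomposition, the additivity $\sigma = \sum_j \iota_j$ of Theorem \ref{thm: local magnetic indices} via Corollary \ref{cor: block diagonal}, the action of $\Rev$ as a cut-flip on the Bernoulli random variables $\iota_j$ via Lemma \ref{lem: R_e}, and the observation (Lemma \ref{lem: tree lemma}) that cut-flip invariance forces the joint law on $\{0,1\}^\beta$ to be uniform. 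So far you are on the same track.

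The gap is in the step where you pass to ``pieces drawn from a finite family'' and invoke Lindeberg's CLT on an alleged \emph{independent} triangular array. Cut-flip invariance does not imply independence once some $\beta_j\ge 2$. The argument in Lemma \ref{lem: tree lemma} rests on the fact that for Bernoulli coordinates, single flips generate a group acting transitively on $\{0,1\}^m$, so invariance forces uniformity and hence a product measure. Once $\iota_j$ takes values in $\{0,\dots,\beta_j\}$ with $\beta_j\ge 2$, the flip $s_j\mapsto\beta_j-s_j$ together with all cut-flips no longer acts transitively, and there exist flip-invariant joint distributions with symmetric marginals that are not product measures (consider, say, a symmetric joint law on $\{0,1,2\}^2$ of the form $P(i,j)=g(|i-1|,|j-1|)$ chosen so that $P(0,0)\ne P_1(0)P_2(0)$). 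Thus ``marginal symmetry plus cut-flip invariance'' does not establish independence, and some new ingredient would be required to control the joint law of the $\iota_j$ across bridgeless pieces with $\beta_j>1$. Your citation of Theorem \ref{thm:First} part \ref{enu: levels sets are either ampty or of positive measure} is also a slight misuse: that statement concerns joint level sets of $(\sigma,\omega)$ rather than positivity of individual local variances, and in any case positivity of $\mathrm{Var}(\iota_j)$ is not automatic from symmetry when $\beta_j$ is even. Your diagnosis of the genuine bottleneck---bridgeless pieces of unbounded first Betti number---is correct; I would just emphasize that the bottleneck is present already at $\beta_j=2$, not only in the unbounded regime.
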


In a work in progress \cite{AloBanBer_conj}, we provide a vast numerical
evidence affirming the conjecture. We present some of these results in Figure \ref{fig: universality evidense}. In \cite{AloBanBer_conj} we also prove the convergence to Gaussian for some
other families of graphs which are not trees of cycles, which is the proof we present here. In \cite{AloBan19} we discuss an analogous
conjecture for Neumann count statistics, only we require the limit
where $\left|\V_{in}\right|\rightarrow\infty$. 

\begin{figure}

\includegraphics[width=0.6\paperwidth]{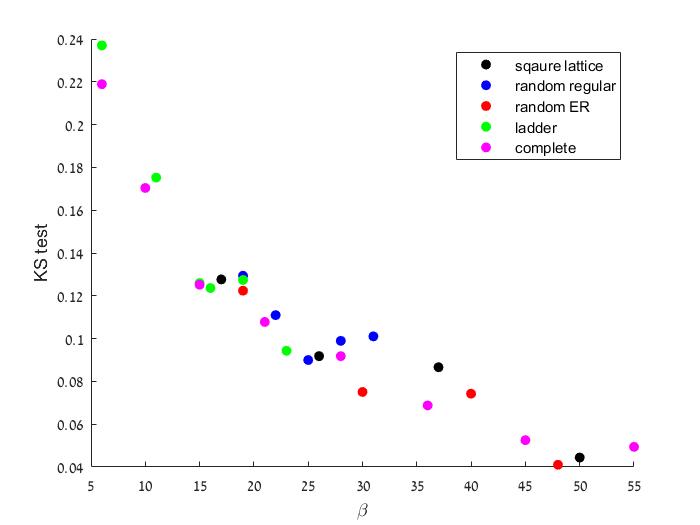}\\\includegraphics[width=0.6\paperwidth]{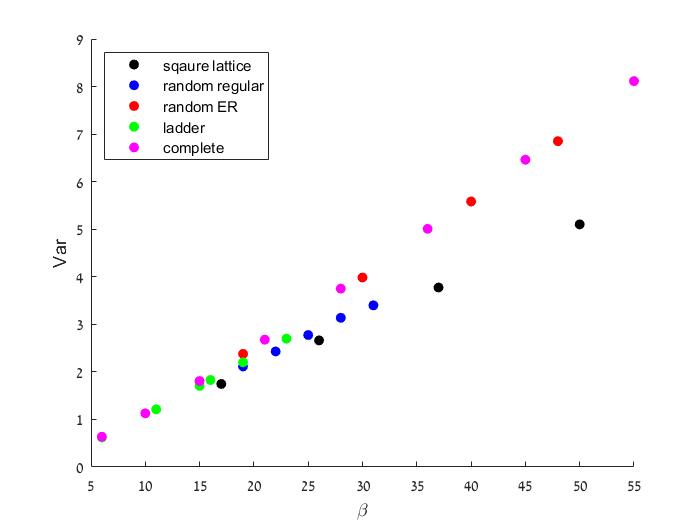}\caption[Universality evidence]{\label{fig: universality evidense} The results of several numerical experiments of growing families of graphs. For each graph we choose edge lengths at random and compute its nodal count statistics using $10^6$ eigenfunctions. The families of graphs we examine are complete graphs, ladder graphs, square lattices of the form $\Z^2/n\Z^2$, random Erdos-Reny graphs and random regular graphs. In the upper picture, the convergence to Gaussian is presented by means of the values of Kolmogorov–Smirnov tests versus $\beta$. In the lower picture, the variance growth in $\beta$ is presented.}

\end{figure}

\begin{defn}
We say that a graph $\Gamma$ is a \emph{finite (3,1)-regular tree
}if it is a (finite) tree with $\deg v=3$ for every $v\in\V_{in}$.
We say that a graph $\Gamma$ is a \emph{tree of cycles }if it has
an edge separation decomposition $\left[\Gamma_{1},\Gamma_{2}...\Gamma_{m}\right]$
where every $\Gamma_{j}$ has first Betti number $\beta_{j}=1$. See
Figure \ref{fig: tree of cycles} for example.
\end{defn}
\begin{figure}

\includegraphics[width=0.3\paperwidth]{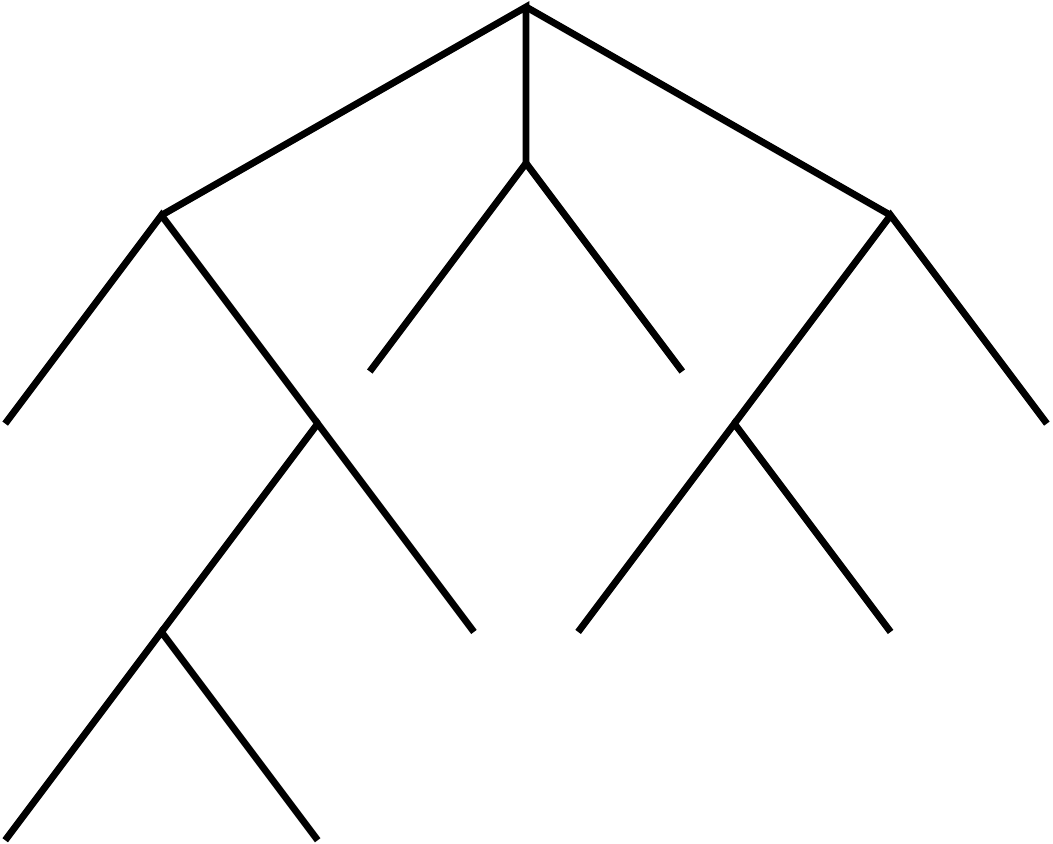}~~~~~~~~~~\includegraphics[width=0.3\paperwidth]{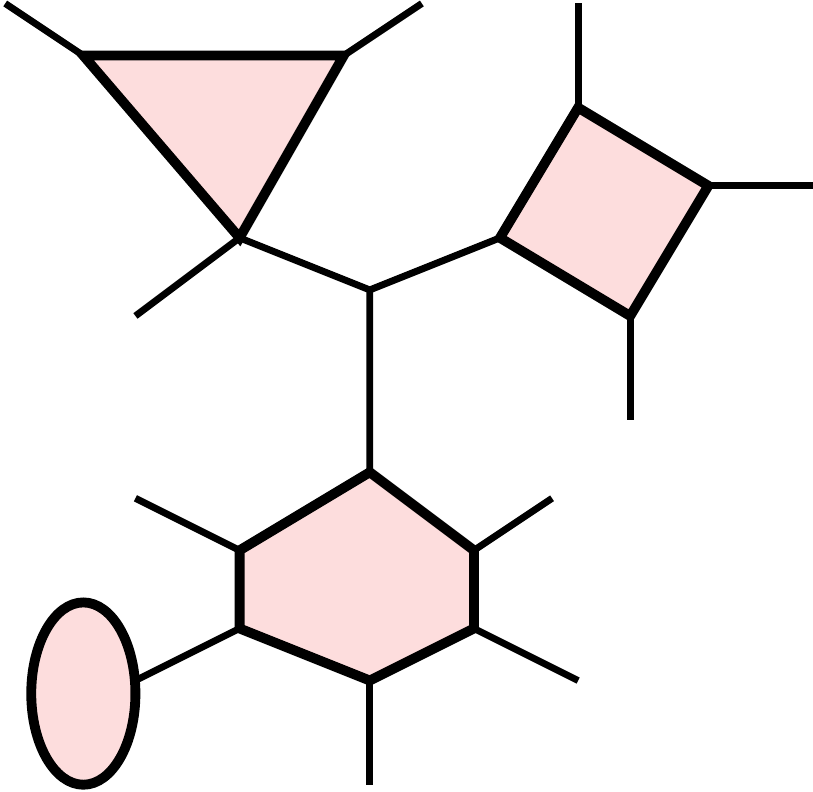}\caption[Tree of cycles]{\label{fig: tree of cycles} On the left, a (3,1)-regular~tree.~On
the right, a tree of cycles, with cycles filled for emphasis. ~}

\end{figure}

The following theorem combines Theorem 2.3 of \cite{AloBanBer_cmp18}
and Theorem 3.7 \cite{AloBan19}.
\begin{thm}
\label{thm:Second}Let $\Gamma_{\lv}$ be a standard graph and assume
that $\lv$ is rationally independent. Then,
\begin{enumerate}
\item \label{enu: binomial nodal surplus}If $\Gamma_{\lv}$ is a tree of
cycles, then the nodal surplus distribution is binomial, 
\[
\sigma\sim Bin\left(\beta,\frac{1}{2}\right).
\]
That is:
\begin{equation}
\forall j\in\left\{ 0,1,...,\beta\right\} \,\,P\left(\sigma-j\right)={\beta \choose j}2^{-\beta}.\label{eq:thm2}
\end{equation}
\item \label{enu:binomial Neumann surplus}If $\Gamma_{\lv}$ is a (3,1)-regular
finite tree, then the Neumann surplus distribution is given by $\omega+\left|\V_{in}\right|+1\sim Bin\left(\left|\V_{in}\right|,\frac{1}{2}\right)$.
That is:
\[
\forall j\in\left\{ -\left|\V_{in}\right|-1,...,-1\right\} \,\,\,\P(\omega=j)=\binom{\left|\V_{in}\right|}{j+\left|\V_{in}\right|+1}2^{-\left|\V_{in}\right|}.
\]
\end{enumerate}
\end{thm}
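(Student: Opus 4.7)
My plan is to express each surplus as a sum of $\{0,1\}$-valued local random variables that are symmetric Bernoulli$(1/2)$, and then to establish their independence using the measure-preserving symmetries from Section \ref{sec: The-secular-manifold}; the binomial distribution then follows from the fact that a transitive action on $(\Z/2)^m$ by measure-preserving maps forces uniformity of the joint distribution. The main obstacle in both parts is the local-to-global symmetry computation: verifying that a torus cut-flip on a bridge $e$ acts on the local observables $\boldsymbol{\iota}_j$ and $\boldsymbol{N}_v$ as a ``one-sided flip,'' flipping precisely those observables associated to one component of $\Gamma \setminus \{e\}$.

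For part (\ref{enu: binomial nodal surplus}), Theorem \ref{thm: local magnetic indices} gives $\sigma = \sum_{j=1}^{m} \iota_j$ along the edge-separation decomposition $[\Gamma_1, \ldots, \Gamma_m]$. Since $\Gamma$ is a tree of cycles, every $\beta_j = 1$, so $m = \beta$ and each $\iota_j \in \{0,1\}$. By Theorem \ref{thm: local mag indx 2}, every $\iota_j$ is symmetric about $1/2$ and hence Bernoulli$(1/2)$. For independence, I consider the group $G$ generated by the inversion $\I$ and the cut-flips at all bridges; all of these are measure-preserving on $\Sigma_{\G}$ by Lemma \ref{lem: Re and tau e are measure preseving} and Theorem \ref{thm: inversion BG}. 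Extending Proposition \ref{prop: block decomposition} to the magnetic setting, I would track how $\hess_{\av_j}F/p$ transforms under the cut-flip on $e$ (using Lemma \ref{lem: simple bridge inversion gi thetai} and the identity $\boldsymbol{\iota}_j \circ \I = 1 - \boldsymbol{\iota}_j$) to show that such a cut-flip flips $\iota_j$ for cycles on one side of $e$ and preserves $\iota_j$ for cycles on the other side. The induced action on $(\Z/2)^m$ is then generated by the bridge-bipartitions of $\{1,\ldots,m\}$ together with the all-ones vector; for a tree of cycles these span $(\Z/2)^m$, because any single cycle $\Gamma_{j_0}$ can be isolated by composing the cut-flips at every bridge incident to $\Gamma_{j_0}$ (each such bridge separates $\Gamma_{j_0}$ from one of the $k$ subtrees whose union is $\{1,\ldots,m\}\setminus\{j_0\}$) and then applying $\I$. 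Measure preservation then forces $(\iota_1,\ldots,\iota_m)$ to be uniform on $\{0,1\}^m$, giving $\sigma \sim \mathrm{Bin}(\beta, 1/2)$.

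For part (\ref{enu:binomial Neumann surplus}), the graph is a tree so $\beta = 0$ and $\sigma \equiv 0$ by (\ref{eq: nodal bounds}). Substituting $\sigma = 0$ into Proposition \ref{prop: local to global}, equation (\ref{eq:sum_of_spec_pos}), and using the tree identity $E - |\partial\Gamma| = |\V_{in}| - 1$, I obtain $\omega(n) = |\V_{in}| - 1 - \sum_{v \in \V_{in}} N_v(n)$ for all sufficiently large $n \in \G$ (which covers all but finitely many indices by Lemma \ref{lem: local properties of Neumann domain}(\ref{enu: local prop - star})). Every $\Omega^{(v)}$ is a $3$-star, so Proposition \ref{prop:local_observables_bounds} gives $N_v \in \{1,2\}$, and Theorem \ref{thm: statistics_of_local_observables-1} gives the symmetry $P(N_v = 1) = P(N_v = 2) = 1/2$. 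For independence I use the same group-theoretic argument: every edge of a tree is a bridge, and tracing the vertex-value transformations (\ref{eq: f for R_e}) through Definition \ref{def: Secular N and rho} shows that the cut-flip on $e$ preserves $N_u$ for interior $u$ on one side of $e$ and flips $N_u$ on the other side. For any $v \in \V_{in}$, composing the three cut-flips at the edges of $\E_v$ flips $N_u$ for every $u \neq v$, and a further composition with $\I$ isolates the flip to $v$. Hence the flip group acts transitively on $\{1,2\}^{|\V_{in}|}$ by measure-preserving involutions, so the $N_v$'s are independent, and $\omega + |\V_{in}| + 1 = \sum_v (2-N_v) \sim \mathrm{Bin}(|\V_{in}|, 1/2)$.
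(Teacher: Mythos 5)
Your proof is correct and follows essentially the same route as the paper: decompose the surplus into local $\{0,1\}$-valued observables (local magnetic indices for $\sigma$, spectral positions $\boldsymbol{N}_v - 1$ for $\omega$), use the torus cut-flips $\Rev$ and the inversion $\I$ as measure-preserving symmetries, and leverage transitivity of the induced flip action on $\{0,1\}^m$ to force uniformity of the joint law. The paper packages the transitivity step as a standalone combinatorial Lemma~\ref{lem: tree lemma} (proved via total flips $=$ paired cut-flips, then single flips) and applies it through an auxiliary tree $\tilde\Gamma$ for part (1), whereas you argue transitivity inline by isolating each local observable via the incident cut-flips composed with $\I$; these are the same argument in different clothing. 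The one spot where you sketch rather than cite is the local-to-global symmetry computation (how $\hess_{\av_j}F/p$ transforms under $\Rev$), which the paper has already proved as Lemma~\ref{lem: R_e}, so no new work is actually needed there.
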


It is now straightforward, using central limit theorem, that the
nodal surplus distributions of trees of cycles converge to a Gaussian
limit. We also know that their variance in such case is $\frac{\beta}{4}$. 
\begin{cor}
Conjecture \ref{conj: universality} holds in the case where the graphs
are trees of cycles.
\end{cor}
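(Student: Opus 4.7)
The plan is, for both parts, to write the relevant surplus as a sum of indicator random variables that are marginally Bernoulli$(1/2)$ on $(\G, \mathcal{F}_\G, d_\G)$, and then to promote this marginal symmetry to full joint independence by exploiting the measure-preserving cut-flip symmetries of the secular manifold.

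For part (1), I would apply Theorem \ref{thm: local magnetic indices} to the edge separation decomposition $[\Gamma_1,\ldots,\Gamma_m]$ of a tree of cycles, writing $\sigma(n) = \sum_{j=1}^{m} \iota_j(n)$. The tree-of-cycles hypothesis forces $\beta_j = 1$ for every $j$, so $m = \beta$ and $\iota_j \in \{0,1\}$; Theorem \ref{thm: local mag indx 2} then makes each $\iota_j$ a Bernoulli$(1/2)$ random variable, using the symmetry around $\beta_j/2 = 1/2$. For part (2), since $\Gamma$ is a tree one has $\sigma \equiv 0$ and $E - |\partial\Gamma| = |\V_{in}| - 1$; substituting into Proposition \ref{prop: local to global} collapses it into
\begin{equation*}
\omega(n) + |\V_{in}| + 1 \;=\; \sum_{v \in \V_{in}} X_v(n), \qquad X_v := 2 - N_v.
\end{equation*}
The $(3,1)$-regular hypothesis together with Proposition \ref{prop:local_observables_bounds} forces $N_v \in \{1,2\}$, so $X_v \in \{0,1\}$, and Theorem \ref{thm: statistics_of_local_observables-1} shows $X_v$ is Bernoulli$(1/2)$ by the symmetry of $N_v$ around $\deg v/2 = 3/2$.

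The main obstacle in both parts is upgrading the marginals to joint independence. The tool is the torus cut-flip $\Rev$ of Definition \ref{def: Tau_=00007Be=00007D}, which by Lemma \ref{lem: Re and tau e are measure preseving} is a $\mu_{\lv}$-preserving involution of $\Sigma_\G$ acting as the inversion $\I$ only on the $\Gamma_2$-side of the bridge $e$. Tracing this action through the block structure of the magnetic Hessian (Proposition \ref{prop: block decomposition}) for part (1), and through the explicit expressions for $\boldsymbol{N}_v$ obtained from Lemma \ref{lem: vertex values of canonical function} for part (2), one verifies that $\Rev$ flips precisely those $\iota_j$'s (respectively $X_v$'s) whose index sits in $\Gamma_2$, while leaving the others unchanged. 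Every cut-flip therefore realizes a $d_\G$-preserving subset-flip operation on the joint value space.

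To close the argument, I would organize the $m$ cycles (respectively the $|\V_{in}|$ interior vertices) as the nodes of a dual tree $T$ whose edges correspond to bridges of $\Gamma$; a cut of an edge of $T$ corresponds to one of our flippable subsets. Singleton flips for leaves of $T$ are immediate. For an internal node $v$ of $T$ with incident subtrees $T_1,\ldots,T_d$ of $T \setminus \{v\}$, the identity
\begin{equation*}
\{v\} \;=\; \{1,\ldots,m\} \,\oplus\, T_1 \,\oplus\,\cdots\,\oplus\, T_d
\end{equation*}
(where the full set $\{1,\ldots,m\}$ is itself flippable, as the $\oplus$ of the two sides of any single cut) exhibits $\{v\}$ as a composition of cut-flips. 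Hence the cut-flip group acts transitively on $\{0,1\}^m$ (respectively on $\{0,1\}^{|\V_{in}|}$) by $d_\G$-preserving bijections, which forces the joint distribution to be uniform. Summing the independent Bernoulli$(1/2)$'s then yields $\sigma \sim \mathrm{Bin}(\beta, 1/2)$ and $\omega + |\V_{in}| + 1 \sim \mathrm{Bin}(|\V_{in}|, 1/2)$, as claimed.
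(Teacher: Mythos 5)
Your proposal in substance re-derives Theorem \ref{thm:Second} rather than proving the Corollary: you correctly write $\sigma = \sum_{j} \iota_j$ via the edge-separation decomposition (Theorem \ref{thm: local magnetic indices}), note $\beta_j = 1$ forces $\iota_j \in \{0,1\}$, identify the torus cut-flip $\Rev$ as a $\mu_{\lv}$-preserving involution that flips exactly the indices on one side of a bridge (Lemma \ref{lem: Re and tau e are measure preseving} together with Proposition \ref{prop: block decomposition}), and then use a transitivity argument on $\{0,1\}^{\beta}$ to conclude uniformity of the joint distribution. Your transitivity step via the identity $\{v\} = \{1,\ldots,m\} \oplus T_1 \oplus \cdots \oplus T_d$ is just a repackaging of the paper's Lemma \ref{lem: tree lemma}, where the full set is obtained by composing both cut-flips of any single bridge; the two arguments are logically identical.

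The gap is that you stop at $\sigma \sim \mathrm{Bin}(\beta, \tfrac{1}{2})$, which is the statement of Theorem \ref{thm:Second}(1), not the Corollary. The Corollary is Conjecture \ref{conj: universality}: the normalized sequence $(\sigma^{(\beta)} - \beta/2)/\sqrt{\mathrm{Var}(\sigma^{(\beta)})}$ converges in distribution to $N(0,1)$ and $\mathrm{Var}(\sigma^{(\beta)}) = \mathcal{O}(\beta)$. You need one more step: since $\sigma^{(\beta)} \sim \mathrm{Bin}(\beta,\tfrac{1}{2})$ for each graph in the family, $\mathrm{Var}(\sigma^{(\beta)}) = \beta/4$, and by the de Moivre--Laplace central limit theorem $(\sigma^{(\beta)} - \beta/2)/\sqrt{\beta/4} \to N(0,1)$. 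This is exactly what the paper observes immediately after Theorem \ref{thm:Second}. A secondary remark: Conjecture \ref{conj: universality} concerns only the nodal surplus, so your entire discussion of the Neumann surplus (Theorem \ref{thm:Second}(2), the $(3,1)$-regular tree case, the $\boldsymbol{N}_v$ random variables) is not needed for this Corollary.
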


As we said, the proof of Theorem \ref{thm:Second} relies on breaking
of $\sigma$ into sum of local indices and the breaking of $\omega$
into a sum of spectral positions of Neumann domains. The common feature
of both is that in the cases above each of these local random variables
gets only two values. A random variable $X$ that takes only two values,
0 and 1, is called a \emph{Bernoulli random variable. }We will now
define \emph{cut-flips} on tree graphs and provide a lemma regarding
the action of cut-flips on Bernoulli random variables.
\begin{defn}
\label{def: cut-flip} Let $\Gamma$ be a tree graph, consider a subset
of vertices $\V_{0}\subseteq\V$, and the binary set $\left\{ 0,1\right\} ^{\V_{0}}$.
Let $u\in\V$ (may or may not be in $\V_{0}$) with edge $e\in\E_{v}$
and decomposition $\Gamma\setminus e=\Gamma_{1}\sqcup\Gamma_{2}$
such that $u\in\Gamma_{1}$. \\We define the \emph{cut-flip }$g_{u,e}:\left\{ 0,1\right\} ^{\V_{0}}\rightarrow\left\{ 0,1\right\} ^{\V_{0}}$,
as seen in Figure \ref{fig: cutflips}, such that 
\[
\forall s\in\left\{ 0,1\right\} ^{\V_{0}},\,\,\forall v\in\V_{0}\,\,\,\,\,\,\left(g_{u,e}.s\right)_{v}=\begin{cases}
s_{v} & v\in\Gamma_{1}\\
1-s_{v} & v\in\Gamma_{2}
\end{cases}.
\]
Namely, $g_{u,e}$ fixes the values of $s$ on $\V_{0}\cap\Gamma_{1}$
and ``flips'' the values on $\V_{0}\cap\Gamma_{2}$.
\end{defn}

\begin{lem}
\label{lem: tree lemma}Let $\Gamma$ be a tree graph and consider
a subset of vertices $\V_{0}\subseteq\V$ with Bernoulli random variables
$\left\{ X_{v}\right\} _{v\in\V_{0}}$ assigned to $\V_{0}$. That
is, each $X_{v}$ takes the values 0 and 1, and their joint random
vector $\vec{X}$ takes its values in $\left\{ 0,1\right\} ^{\V_{0}}$.
We do not assume that the $X_{v}$'s are identical nor that they are
independent. 

If the joint probability distribution is invariant under all cut-flips,
namely
\begin{align}
\forall s\in\left\{ 0,1\right\} ^{\V_{0}},\,\forall u\in\V,\,\forall e\in\E_{u}\,\,\,\,\,\,P\left(\vec{X}=s\right) & =P\left(\vec{X}=g_{u,e}.s\right),\label{eq: Tree lemma assumption-1}
\end{align}
then $\left|\vec{X}\right|:=\sum_{v\in\V_{0}}X_{v}$, has binomial
distribution $X\sim Bin\left(\left|\V_{0}\right|,\frac{1}{2}\right)$.
\end{lem}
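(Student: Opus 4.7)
The plan is to deduce the lemma from the stronger statement that the joint law of $\vec X$ is uniform on $\{0,1\}^{\V_0}$. Once this is established, a direct count gives $P(|\vec X|=k)=\binom{|\V_0|}{k}2^{-|\V_0|}$ for every $k$, which is exactly $\mathrm{Bin}(|\V_0|,\tfrac12)$.

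First I would reformulate the cut-flips additively. Identify $\{0,1\}^{\V_0}$ with $\mathbb{F}_2^{\V_0}$; then each cut-flip $g_{u,e}$ acts as translation by $\boldsymbol{1}_{\V_0\cap\Gamma_2}$, where $\Gamma_2$ is one of the two connected components of $\Gamma\setminus\{e\}$. Consequently, the subgroup of $\mathrm{Sym}(\{0,1\}^{\V_0})$ generated by all cut-flips is the group of translations by elements of the $\mathbb{F}_2$-linear subspace
\[
W \;:=\; \mathrm{span}_{\mathbb{F}_2}\Bigl\{\boldsymbol{1}_{\V_0\cap T} \;:\; T \text{ is a connected component of } \Gamma\setminus\{e\},\; e\in\E\Bigr\} \;\subseteq\;\mathbb{F}_2^{\V_0}.
\]
Its orbits on $\{0,1\}^{\V_0}$ are the cosets of $W$, and the invariance hypothesis (\ref{eq: Tree lemma assumption-1}) forces $P(\vec X=\cdot)$ to be constant on each coset. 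Therefore uniformity reduces to the purely combinatorial claim that $W=\mathbb{F}_2^{\V_0}$, equivalently that $\boldsymbol{1}_{\{v\}}\in W$ for every $v\in\V_0$.

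The main step is to establish this spanning property using the tree structure. I would root $\Gamma$ at an arbitrary vertex $r$, and for each $v\in\V$ let $T_v$ be the rooted subtree of descendants of $v$ (including $v$ itself, with $T_r=\V$). For $v\ne r$ the set $T_v$ is exactly one side of the cut along the edge from $v$ to its parent, so $\boldsymbol{1}_{\V_0\cap T_v}\in W$. For the root itself, picking any child $c$ of $r$, both $\boldsymbol{1}_{\V_0\cap T_c}$ and $\boldsymbol{1}_{\V_0\cap(\V\setminus T_c)}$ are generators of $W$, so $\boldsymbol{1}_{\V_0\cap T_r}=\boldsymbol{1}_{\V_0\cap T_c}+\boldsymbol{1}_{\V_0\cap(\V\setminus T_c)}\in W$ as well. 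The disjoint partition $T_v=\{v\}\sqcup\bigsqcup_{c\text{ child of }v}T_c$ then yields, in $\mathbb{F}_2^{\V_0}$, the telescoping identity
\[
\boldsymbol{1}_{\{v\}\cap\V_0} \;=\; \boldsymbol{1}_{\V_0\cap T_v} \;+\; \sum_{c\text{ child of }v}\boldsymbol{1}_{\V_0\cap T_c},
\]
valid for \emph{every} $v\in\V$. Specialising to $v\in\V_0$ exhibits $\boldsymbol{1}_{\{v\}}$ as an element of $W$, which gives $W=\mathbb{F}_2^{\V_0}$ and hence the lemma.

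The only step that looks delicate is handling the root $r$, but as noted the fact that both sides of a single cut belong to the generating family immediately places $\boldsymbol{1}_{\V}$ in $W$, so the telescoping identity applies uniformly to every vertex. I note also that nothing in the argument imposes conditions on $\V_0$ beyond $\V_0\subseteq\V$, matching the statement of the lemma; once the cut-flip-as-translation picture is set up, the entire proof is elementary linear algebra over $\mathbb{F}_2$.
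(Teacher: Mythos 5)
Your proof is correct and follows essentially the same strategy as the paper's: show that the group generated by cut-flips acts transitively on $\{0,1\}^{\V_0}$ (by exhibiting every single-vertex flip as a product of cut-flips), conclude that invariance forces the joint law to be uniform, and read off the binomial distribution of $|\vec X|$. The paper constructs a single flip at $u$ directly as the composition of a total flip (both cut-flips of one edge) with the cut-flips $g_{u,e}$ for $e\in\E_u$, using the partition $\V=\{u\}\sqcup\bigsqcup_{e\in\E_u}\Gamma_2^{(e)}$, whereas you reformulate everything over $\mathbb{F}_2$ and telescope along a rooted tree; this is a cosmetic rather than a substantive difference.
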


\begin{proof}
Let $e$ be an edge connecting the vertices $u_{1}$ and $u_{2}$.
Observe that applying both cut flips of that edge, $TF:=g_{u_{1},e}\circ g_{u_{1},e}$
gives a \emph{total flip:} 
\[
\forall s\in\left\{ 0,1\right\} ^{\V_{0}},\,\,\forall v\in\V_{0}\,\,\,\left(TF.s\right)_{v}=\left(g_{u_{1},e}.\left(g_{u_{2},e}.s\right)\right)=1-s_{v}.
\]
As demonstrated in Figure \ref{fig: cutflips}, given some $u\in\V_{0}$,
applying a total flip and then each $g_{u,e}$ for all $e\in\E_{v}$,
will result in a \emph{single flip}, $F_{u}$, flipping only $u$:
\[
\forall s\in\left\{ 0,1\right\} ^{\V_{0}},\,\,\forall v\in\V_{0}\,\,\,\left(F_{u}.s\right)_{v}=\begin{cases}
1-s_{v} & v=u\\
s_{v} & v\ne u
\end{cases}.
\]
Therefore, the group generated by all cut-flips $\left\langle \left\{ g_{u,e}\right\} _{u\in\V,\,e\in\E_{u}}\right\rangle $
contains every single flip and therefore acts transitively on $\left\{ 0,1\right\} ^{\V_{0}}$.
To see that, consider any two distinct elements $s,s'\in\left\{ 0,1\right\} ^{\V_{0}}$
and let $\left\{ v_{j}\right\} _{j=1}^{n}$ be the vertices in $\V_{0}$
on which $s_{v_{j}}\ne s'_{v_{j}}$. Let $g$ be the decomposition
of all single flips $F_{v_{j}}$, then clearly $g\in\left\langle \left\{ g_{u,e}\right\} _{u\in\V,\,e\in\E_{u}}\right\rangle $
and $g.s=s'$. Therefore, $\left\langle \left\{ g_{u,e}\right\} _{u\in\V,\,e\in\E_{u}}\right\rangle $
acts transitively on $\left\{ 0,1\right\} ^{\V_{0}}$, and since $\vec{X}$
takes values in $\left\{ 0,1\right\} ^{\V_{0}}$ with probability
which is invariant under the action of $\left\langle \left\{ g_{u,e}\right\} _{u\in\V,\,e\in\E_{u}}\right\rangle $
then $\vec{X}$ must be uniform. To see that, consider $s,s'\in\left\{ 0,1\right\} ^{\V_{0}}$
and $g\in\left\langle \left\{ g_{u,e}\right\} _{u\in\V,\,e\in\E_{u}}\right\rangle $
as before, such that $g.s=s'$. Since the probability is invariant
under $g$, then 
\[
P\left(\vec{X}=s\right)=P\left(\vec{X}=g.s\right)=P\left(\vec{X}=s'\right),
\]
and since this is true for any $s$ and $s'$, then 
\[
\forall s\in\left\{ 0,1\right\} ^{\V_{0}}\,\,\,P\left(\vec{X}=s\right)=\frac{1}{\left|\left\{ 0,1\right\} ^{\V_{0}}\right|}=2^{-\left|\V_{0}\right|}.
\]
As the probability of the sum, $P\left(\left|\vec{X}\right|=j\right)$,
is given by summing the probabilities of all elements $s\in\left\{ 0,1\right\} ^{\V_{0}}$
for which $\sum_{v\in\V_{0}}s_{v}=j$, then clearly, 
\[
P\left(\left|\vec{X}\right|=j\right)={\left|\V_{0}\right| \choose j}2^{-\left|\V_{0}\right|}.
\]
\end{proof}
\begin{figure}

\includegraphics[width=0.6\paperwidth]{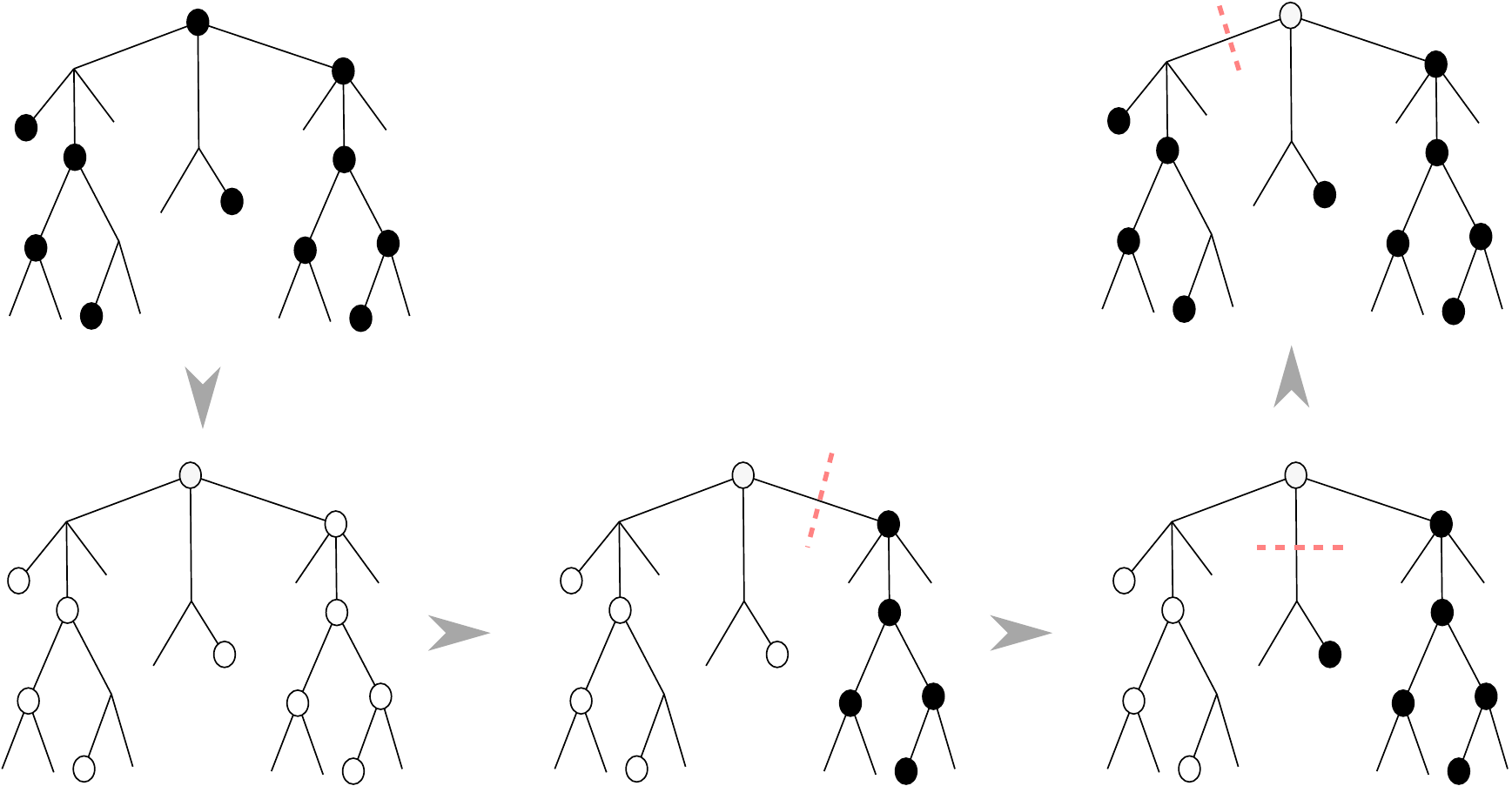}\caption[Cut-flips]{\label{fig: cutflips} A tree with a subset of vertices of two colors,
white or black. Starting with a configuration of all blacks, the first
step is a total flip. The second, third and forth steps are cut flips,
according to the dashed lines, that keep the root vertex unchanged.
The result is a flip only of the root vertex. }

\end{figure}

We will now show that $\Rev:\Sigma_{\G}\rightarrow\Sigma_{\G}$, that
was defined in Definition \ref{def: Tau_=00007Be=00007D} by 
\[
\Rev\left(\kv_{1},\kappa_{e},\kv_{2}\right):=\left\{ \left(\kv_{1},\kappa_{e}+\Theta_{2}\left(\kv_{2}\right),\I\left(\kv_{2}\right)\right)\right\} ,
\]
is acting similarly to a cut-flip on the level sets of the local random
variables $\boldsymbol{N}_{u}$ and $\boldsymbol{\iota}_{j}$ (see
Definitions \ref{def: Secular N and rho} and \ref{def: secular local magnetic index}). 
\begin{lem}
\label{lem: R_e}Let $\Gamma$ be a graph, let $e$ be a bridge with
a bridge decomposition\\ $\Gamma\setminus e=\Gamma_{1}\sqcup\Gamma_{2}$
and let $v$ be the vertex connecting $e$ to $\Gamma_{1}$.Then,
\begin{enumerate}
\item \label{enu: Re on Nv} For any interior vertex $u\in\V_{in}$, 
\[
\boldsymbol{N}_{u}\circ\Rev=\begin{cases}
\boldsymbol{N}_{u} & u\in\V_{1}\\
\deg u-\boldsymbol{N}_{u} & u\notin\V_{1}.
\end{cases}
\]
\item \label{enu: Re on magnetic index} If $\Gamma$ has further edge separation
$\left[\Gamma_{1,1},\Gamma_{1,2},...\Gamma_{1,m_{1}},\Gamma_{2,1},\Gamma_{2,2}...\Gamma_{2,m_{2}}\right]$,
such that $\Gamma_{1,j}\subset\Gamma_{1}$ and $\Gamma_{2,i}\subset\Gamma_{2}$.
Then the corresponding local magnetic indices $\boldsymbol{\iota}_{1,j}$
and $\boldsymbol{\iota}_{2,i}$ satisfy 
\begin{align*}
\boldsymbol{\iota}_{1,j}\circ\Rev & =\boldsymbol{\iota}_{1,j}\\
\boldsymbol{\iota}_{2,i}\circ\Rev & =\beta_{2,i}-\boldsymbol{\iota}_{2,i}.
\end{align*}
Where $\beta_{2,i}$ is the first Betti number of $\Gamma_{2,i}$.
\end{enumerate}
\end{lem}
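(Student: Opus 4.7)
The plan is to derive both statements directly from the description of $\Rev$'s action on canonical eigenfunctions given in Lemma \ref{lem: Re and tau e are measure preseving}, combined, for Part \ref{enu: Re on magnetic index}, with the magnetic bridge decomposition of the Hessian from Proposition \ref{prop: block decomposition}.

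For Part \ref{enu: Re on Nv}, I would simply substitute (\ref{eq: f for R_e}) into the definition
\[
\boldsymbol{N}_u(\kv) = \frac{\deg u}{2} - \frac{1}{2}\sum_{e'\in\E_u}\mathrm{sign}\bigl(f_\kv(u)\,\partial_{e'}f_\kv(u)\bigr).
\]
For $u\in\V_1$ both $f_\kv(u)$ and every $\partial_{e'}f_\kv(u)$ are preserved by $\Rev$, so each sign summand is unchanged and $\boldsymbol{N}_u\circ\Rev = \boldsymbol{N}_u$. For $u\in\V_2$, the value $f_\kv(u)$ is preserved while each outgoing derivative flips sign, so every sign in the sum flips and an elementary computation yields $\boldsymbol{N}_u\circ\Rev = \deg u - \boldsymbol{N}_u$. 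This part is essentially bookkeeping.

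Part \ref{enu: Re on magnetic index} requires more care. By Proposition \ref{prop: block decomposition} applied to the bridge $e$, the diagonal blocks of the Hessian satisfy $\frac{\hess_{\av_i} F}{\partial_{\kappa_e} F}(\kv) = \tfrac{1}{2}\hess_{\av_i}\Theta_i(\kv_i)$ for $i=1,2$, and the same identity holds for each sub-block $\av_{i,j}$ by restriction. Writing $\partial_{\kappa_e}F = p\cdot m_e$ via Lemma \ref{lem: F p and mk} and using $m_e>0$ on $\Sigma_\G$, the positive-scalar multiplier does not affect the Morse index, so
\[
\boldsymbol{\iota}_{1,j}(\kv) = \M\bigl(-\hess_{\av_{1,j}}\Theta_1(\kv_1)\bigr),\qquad \boldsymbol{\iota}_{2,i}(\kv) = \M\bigl(-\hess_{\av_{2,i}}\Theta_2(\kv_2)\bigr).
\]
Because $\Rev$ fixes $\kv_1$, the identity $\boldsymbol{\iota}_{1,j}\circ\Rev = \boldsymbol{\iota}_{1,j}$ is immediate. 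For the $\Gamma_2$ blocks, $\Rev$ sends $\kv_2$ to $-\kv_2$, and I would extend Lemma \ref{lem: simple bridge inversion gi thetai} to the magnetic setting by rerunning its proof with $\zi = e^{i\hat\alpha_i}e^{i\hat\kappa_i}$ in place of $e^{i\hat\kappa_i}$, obtaining the identity $\Theta_2(-\kv_2,-\av_2) = -\Theta_2(\kv_2,\av_2)$. Differentiating twice in $\av_2$ at $\av_2=0$ (with a chain-rule step absorbing the sign on $\av_2$) yields $\hess_{\av_{2,i}}\Theta_2(-\kv_2) = -\hess_{\av_{2,i}}\Theta_2(\kv_2)$; combined with the non-degeneracy $\det(\hess_{\av_{2,i}}\Theta_2)\ne 0$ inherited from Proposition \ref{prop: nodal magnetic secular}, the elementary Morse-index relation $\M(H) + \M(-H) = \beta_{2,i}$ then gives $\boldsymbol{\iota}_{2,i}\circ\Rev = \beta_{2,i} - \boldsymbol{\iota}_{2,i}$.

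The main obstacle I anticipate is the magnetic extension of Lemma \ref{lem: simple bridge inversion gi thetai} together with the associated Hessian-at-zero chain-rule computation, since all the symmetries must be read off at $\av=0$ while taking two derivatives in $\av$. Everything else reduces to combining Proposition \ref{prop: block decomposition}, Lemma \ref{lem: Re and tau e are measure preseving}, and basic Morse-index arithmetic. A quick sanity check will also verify that $\Rev$ preserves $\Sigma_\G$, so that $m_e>0$ and the ratio by $p$ is well defined at both $\kv$ and $\Rev(\kv)$; this is immediate from (\ref{eq: f for R_e}), since properties I and II are unaffected by the sign flips of the derivatives.
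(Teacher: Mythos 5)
Your proof is correct, and Part (\ref{enu: Re on Nv}) together with the $\Gamma_1$-block of Part (\ref{enu: Re on magnetic index}) are handled exactly as the paper does. The difference is in how the $\Gamma_2$-block sign flip is obtained. The paper observes that $\Rev(\kv)$ and $\I(\kv)$ share the same $\kv_2$ coordinate, invokes Proposition~\ref{prop: block decomposition} (the ratio $\hess_{\av_2}F/\frac{\partial F}{\partial\kappa_e}$ is a function of $\kv_2$ alone) to equate the values at those two points, and then applies the already-proved global inversion symmetry $\frac{\hess_{\av}F}{p}\circ\I = -\frac{\hess_{\av}F}{p}$ from Proposition~\ref{prop: nodal magnetic secular}, restricted to the $\Gamma_2$ block and scaled by the positive factor $c=(m_{\Rev(\kv)})_e/(m_{\kv})_e$. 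You instead unroll the identity $\frac{\hess_{\av_i}F}{\frac{\partial F}{\partial\kappa_e}}=\tfrac12\hess_{\av_i}\Theta_i(\kv_i)$, re-run Lemma~\ref{lem: simple bridge inversion gi thetai} with $\zi = e^{i\hat\alpha_i}e^{i\hat\kappa_i}$ to get $\Theta_2(\I\kv_2,\I\av_2)=-\Theta_2(\kv_2,\av_2)\pmod{2\pi}$, and take the Hessian at $\av_2=0$ (the sign on $\av_2$ enters twice by the chain rule, so it cancels, and the $2\pi$ ambiguity drops under differentiation). Both routes are correct, and your sanity checks — $\Rev$ preserves $\Sigma_\G$, $m_e>0$ so the positive scalar does not change the Morse index, and $\det(\hess_{\av_{2,i}}\Theta_2)\ne 0$ so $\M(H)+\M(-H)=\beta_{2,i}$ — are exactly the ones the paper needs as well. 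The paper's route is slightly more economical because it never has to re-establish an inversion symmetry for the magnetic $\Theta_i$; it simply restricts the global symmetry of $\hess_{\av}F/p$ to a block. Your route is more self-contained at the block level and makes the local source of the sign flip (the scattering phase $\Theta_2$) transparent, at the cost of extending Lemma~\ref{lem: simple bridge inversion gi thetai} and carrying out the chain-rule computation, a step the paper does not need.
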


\begin{proof}
In order to prove (\ref{enu: Re on Nv}), let $\kv\in\Sigma_{\G}$
and notice that according to Lemma \ref{lem: Re and tau e are measure preseving},
for any $e'\in\E_{u}$, 
\[
f_{\Rev\left(\kv\right)}\left(u\right)\partial_{e'}f_{\Rev\left(\kv\right)}\left(u\right)=\begin{cases}
f_{\kv}\left(u\right)\partial_{e'}f_{\kv}\left(u\right) & u\in\V_{1}\\
-f_{\kv}\left(u\right)\partial_{e'}f_{\kv}\left(u\right) & u\notin\V_{1}.
\end{cases}
\]
Therefore, 
\begin{align*}
\boldsymbol{N}_{u}\left(\Rev\left(\kv\right)\right) & =\frac{\deg u}{2}-\frac{1}{2}\sum_{e'\in\E_{u}}\mathrm{sign}\left(f_{\Rev\left(\kv\right)}\left(u\right)\partial_{e'}f_{\Rev\left(\kv\right)}\left(u\right)\right)\\
= & \begin{cases}
\frac{\deg u}{2}-\frac{1}{2}\sum_{e'\in\E_{u}}\mathrm{sign}\left(f_{\kv}\left(u\right)\partial_{e'}f_{\kv}\left(u\right)\right) & u\in\V_{1}\\
\frac{\deg u}{2}+\frac{1}{2}\sum_{e'\in\E_{u}}\mathrm{sign}\left(f_{\kv}\left(u\right)\partial_{e'}f_{\kv}\left(u\right)\right) & u\notin\V_{1}.
\end{cases}\\
= & \begin{cases}
\boldsymbol{N}_{u}\left(\kv\right) & u\in\V_{1}\\
\deg u-\boldsymbol{N}_{u}\left(\kv\right) & u\notin\V_{1}.
\end{cases}.
\end{align*}
In order to prove (\ref{enu: Re on magnetic index}), let $\kv=\left(\kv_{1},\kappa_{e},\kv_{2}\right)\in\Sigma_{\G}$
and consider the block decomposition of $\frac{\hess_{\av}F}{p}$
into $\frac{\hess_{\av_{1}}F}{p}$ and $\frac{\hess_{\av_{2}}F}{p}$.
In Lemma \ref{lem:The-secular-functions properties} we showed that
\begin{align*}
\frac{\partial F}{\partial\kappa_{e}}\left(\kv\right)= & p\left(\kv\right)\left(m_{\kv}\right)_{e},\,\,and\,so\\
\frac{\partial F}{\partial\kappa_{e}}\left(\Rev\left(\kv\right)\right)= & p\left(\Rev\left(\kv\right)\right)\left(m_{\Rev\left(\kv\right)}\right)_{e},
\end{align*}
where both $\left(m_{\kv}\right)_{e}$ and $\left(m_{\Rev\left(\kv\right)}\right)_{e}$
are strictly positive (since $\kv\in\Sigma_{\G}$). Denote the positive
scalar $c:=\frac{\left(m_{\Rev\left(\kv\right)}\right)_{e}}{\left(m_{\kv}\right)_{e}}$.
As $\Rev\left(\kv\right)=\left(\kv_{1},\tilde{\kappa_{e}},\I\left(\kv_{2}\right)\right)$
for some $\tilde{\kappa_{e}}$, then $\kv$ and $\Rev\left(\kv\right)$
agree on their $\kv_{1}$ coordinates. It then follows from Proposition
\ref{prop: block decomposition} that, 
\[
\frac{\hess_{\av_{1}}F}{\frac{\partial F}{\partial\kappa_{e}}}\left(\Rev\left(\kv\right)\right)=\frac{\hess_{\av_{1}}F}{\frac{\partial F}{\partial\kappa_{e}}}\left(\kv\right),
\]
which means that 
\[
\frac{\hess_{\av_{1}}F}{p}\left(\Rev\left(\kv\right)\right)=c\frac{\hess_{\av_{1}}F}{p}\left(\kv\right).
\]
Therefore, the Morse indices of their sub-block agree: 
\[
\boldsymbol{\iota}_{1,j}\left(\Rev\left(\kv\right)\right)=\M\left(-\frac{\hess_{\av_{1,j}}F}{p}\left(\Rev\left(\kv\right)\right)\right)=\M\left(-c\frac{\hess_{\av_{1,j}}F}{p}\left(\kv\right)\right)=\boldsymbol{\iota}_{1,j}\left(\kv\right),
\]
In the same way, $\I\left(\kv\right)$ and $\Rev\left(\kv\right)$
agree on their $\kv_{2}$ coordinates, so 
\[
\frac{\hess_{\av_{2}}F}{\frac{\partial F}{\partial\kappa_{e}}}\left(\Rev\left(\kv\right)\right)=\frac{\hess_{\av_{2}}F}{\frac{\partial F}{\partial\kappa_{e}}}\left(\I\kv\right)=-\frac{\hess_{\av_{2}}F}{\frac{\partial F}{\partial\kappa_{e}}}\left(\kv\right),
\]
using Proposition \ref{prop: block decomposition} once more, for
the inversion. Therefore, 
\[
\boldsymbol{\iota}_{2,i}\left(\Rev\left(\kv\right)\right)=\M\left(-\frac{\hess_{\av_{2,i}}F}{p}\left(\Rev\left(\kv\right)\right)\right)=\M\left(c\frac{\hess_{\av_{2,i}}F}{p}\left(\kv\right)\right)=\beta_{2,i}-\boldsymbol{\iota}_{2,i}\left(\kv\right).
\]
\end{proof}
Using both Lemmas \ref{lem: tree lemma} and \ref{lem: R_e}, we can
prove Theorem \ref{thm:Second}. Let us first consider the Neumann
case.

\subsection{Proof of Theorem \ref{thm:Second} (\ref{enu:binomial Neumann surplus}).}

Let $\Gamma_{\lv}$ be a (3,1)-regular finite tree with rationally
independent $\lv$. Consider the subset of interior vertices $\V_{0}=\V_{in}$
and define $X_{v}:=\boldsymbol{N}_{v}-1$ for any $v\in\V_{in}$ such
that $\left\{ X_{v}\right\} _{v\in\V_{in}}$ are random variables
on $\Sigma_{\G}$ with Borel $\sigma$-algebra and BG measure $\frac{1}{\mu_{\lv}\left(\Sigma_{\G}\right)}\mu_{\lv}$.
According to Proposition \ref{prop:local_observables_bounds} and
since every $v\in\V_{in}$ is of $\deg v=3$, then $\boldsymbol{N}_{v}$
takes the values $1$ and $2$ and therefore $X_{v}$ takes the values
0 and 1, and is therefore Bernoulli random variable. Consider $v\in\V,\,e\in\E_{v}$
and the partition $\Gamma\setminus\left\{ e\right\} =\Gamma_{1}\sqcup\Gamma_{2}$
such that $v\in\Gamma_{1}$, with corresponding cut-flip $g_{v,e}$
(as defined in Lemma \ref{lem: tree lemma}). Consider the level sets
\[
\forall s\in\left\{ 0,1\right\} ^{\V_{0}}\,\,\,\vec{X}^{-1}\left(s\right):=\set{\kv\in\Sigma_{\G}}{\forall v\in\V_{0}\,\,\,X_{v}\left(\kv\right)=s_{v}}.
\]
According to Lemma \ref{lem: R_e}, 
\begin{align*}
X_{u}\left(\Rev\left(\kv\right)\right) & =\begin{cases}
X_{u}\left(\kv\right) & u\in\Gamma_{1}\\
1-X_{u}\left(\kv\right) & u\in\Gamma_{2}
\end{cases},
\end{align*}
and a simple observation leads to
\[
\forall s\in\left\{ 0,1\right\} ^{\V_{0}}\,\,\,\,\Rev\left(\vec{X}^{-1}\left(s\right)\right)=\vec{X}^{-1}\left(g_{v,e}.s\right).
\]
By Lemma \ref{lem: Re and tau e are measure preseving}, $\Rev$ is
$\mu_{\lv}$ preserving and therefore 
\begin{align*}
\forall s\in\left\{ 0,1\right\} ^{\V_{0}}\,\,\,\,P\left(\vec{X}=s\right)= & \frac{\mu_{\lv}\left(\vec{X}^{-1}\left(s\right)\right)}{\mu_{\lv}\left(\Sigma_{\G}\right)}\\
= & \frac{\mu_{\lv}\left(\Rev\left(\vec{X}^{-1}\left(s\right)\right)\right)}{\mu_{\lv}\left(\Sigma_{\G}\right)}\\
= & \frac{\mu_{\lv}\left(\vec{X}^{-1}\left(g_{v,e}.s\right)\right)}{\mu_{\lv}\left(\Sigma_{\G}\right)}\\
= & P\left(\vec{X}=g_{v,e}.s\right).
\end{align*}
Therefore, by Lemma \ref{lem: tree lemma}, $\left|\vec{X}\right|:=\sum_{v\in\V_{in}}X_{u}$
is binomial $\left|\vec{X}\right|\sim Bin\left(\left|\V_{in}\right|,\frac{1}{2}\right)$.
According to Proposition \ref{prop: local to global} and Lemma \ref{lem: Nv rhov-1},
for every $n\in\G$ such that $n>2\frac{L}{L_{min}}$, 
\begin{equation}
\sum_{u\in\V_{in}}\boldsymbol{N}_{u}\left(\left\{ k_{n}\lv\right\} \right)=\phi\left(f_{n}\right)-\mu\left(f_{n}\right)+E-\left|\partial\Gamma\right|.
\end{equation}
Since $\Gamma$ is a tree, then $\phi\left(f_{n}\right)=n$, and therefore
$\phi\left(f_{n}\right)-\mu\left(f_{n}\right)=-\omega\left(f_{n}\right)$
which is equal to $\boldsymbol{\omega}\left(\left\{ k_{n}\lv\right\} \right)$
(by Lemma \ref{lem: counts minus weil}). As all $\boldsymbol{N}_{u}$'s
and $\boldsymbol{\omega}$ are constant on connected components, and
$\left\{ k_{n}\lv\right\} $ is dense in $\Sigma$ (Theorem \ref{thm: BG equidistribution}),
then the following relation holds on $\Sigma_{\G}$: 
\[
\boldsymbol{\omega}=-\sum_{u\in\V_{in}}\boldsymbol{N}_{u}+E-\left|\partial\Gamma\right|.
\]
In particular, since $\Gamma$ is a tree so $E-V=-1$, we get 
\[
\boldsymbol{\omega}=-\sum_{u\in\V_{in}}X_{u}+E-V=-\left|\vec{X}\right|-1.
\]
We may now deduce that $-\boldsymbol{\omega}-1\sim Bin\left(\left|\V_{in}\right|,\frac{1}{2}\right)$
and therefore so does $\left|\V_{in}\right|-\left(-\boldsymbol{\omega}-1\right)=\boldsymbol{\omega}+\left|\V_{in}\right|+1$.
This proves that 
\[
\forall j\in\left\{ -\left|\V_{in}\right|-1,...,-1\right\}\ \ \  \,\,\,\frac{\mu_{\lv}\left(\boldsymbol{\omega}^{-1}\left(j\right)\right)}{\mu_{\lv}\left(\Sigma_{\G}\right)}=\binom{\left|\V_{in}\right|}{j+\left|\V_{in}\right|+1}2^{-\left|\V_{in}\right|}.
\]
This is the needed result as the Neumann surplus probability is given
by \[P\left(\omega=j\right)=\frac{\mu_{\lv}\left(\boldsymbol{\omega}^{-1}\left(j\right)\right)}{\mu_{\lv}\left(\Sigma_{\G}\right)}.\]

\subsection{Proof of Theorem \ref{thm:Second} (\ref{enu: binomial nodal surplus}).}

Let $\Gamma_{\lv}$ be a standard graph with rationally independent
$\lv$, and assume it is a tree of cycles. Denote the set of bridges
by $\E_{bridges}$, and consider the edge separation $\left[\Gamma_{1},\Gamma_{2}...\Gamma_{m}\right]$.
As $\Gamma_{\lv}$ is a tree of cycles, then $m=\beta$ and each $\Gamma_{j}$
has first Betti number $\beta_{j}=1$. It now follows from Theorem
\ref{thm: local magnetic indices} that every $\boldsymbol{\iota}_{j}$
satisfies 
\[
\frac{1}{\mu_{\lv}\left(\Sigma_{\G}\right)}\mu_{\lv}\left(\boldsymbol{\iota}_{j}^{-1}\left(0\right)\right)=\frac{1}{\mu_{\lv}\left(\Sigma_{\G}\right)}\mu_{\lv}\left(\boldsymbol{\iota}_{j}^{-1}\left(1\right)\right)=\frac{1}{2}.
\]
Consider $\left\{ \boldsymbol{\iota}_{j}\right\} _{j=1}^{\beta}$
as random variables over $\Sigma_{\G}$ with Borel $\sigma$-algebra
and probability measure $\frac{1}{\mu_{\lv}\left(\Sigma_{\G}\right)}\mu_{\lv}$.
Let $\tilde{\Gamma}$ be an auxiliary graph whose edges are $\E_{edges}$
and whose vertices are the connected components of $\Gamma\setminus\E_{edges}$
and let $\V_{0}=\left\{ v_{j}\right\} _{j=1}^{\beta}$ be the vertices
corresponding to the non-trivial connected components $\left\{ \Gamma_{j}\right\} _{j=1}^{\beta}$,
as demonstrated in Figure \ref{fig: TOC proof}. Clearly by the construction,
$\tilde{\Gamma}$ is a tree. Let $\left\{ \boldsymbol{\iota}_{j}\right\} _{v_{j}\in\V_{0}}$
be the random variables associated to the vertices in $\V_{0}$ as
in Lemma \ref{lem: tree lemma}, with probability $P\left(\boldsymbol{\iota}_{j}=i\right)=\frac{1}{\mu_{\lv}\left(\Sigma_{\G}\right)}\mu_{\lv}\left(\boldsymbol{\iota}_{j}^{-1}\left(i\right)\right)$.
Let $\vec{\boldsymbol{\iota}}$ be their joint vector, taking values
in $\left\{ 0,1\right\} ^{\beta}$, with probability given by 
\[
\forall s\in\left\{ 0,1\right\} ^{\beta}\,\,P\left(\vec{\boldsymbol{\iota}}=s\right)=\frac{1}{\mu_{\lv}\left(\Sigma_{\G}\right)}\mu_{\lv}\left(\cap_{v_{j}\in\V_{0}}\boldsymbol{\iota}_{j}^{-1}\left(s_{v_{j}}\right)\right).
\]
Let $e\in\E_{bridges}$ and let $v\in\V$ be a vertex of $\Gamma$
connected to $e$ with $\tilde{v}\in\tilde{\V}$ being the corresponding
vertex of $\tilde{\Gamma}$ (the connected component that contain
$v$). Consider the bridge decomposition $\tilde{\Gamma}\setminus\left\{ e\right\} =\tilde{\Gamma}_{1}\sqcup\tilde{\Gamma}_{2}$
such that $\tilde{v}\in\tilde{\Gamma}_{1}$, and let $g_{\tilde{v},e}$
be the corresponding cut-flip (on $\tilde{\Gamma}$). It follows from
Lemma \ref{lem: R_e}, that every $\boldsymbol{\iota}_{j}$ satisfies:
\begin{align*}
\boldsymbol{\iota}_{j}\left(\Rev\left(\kv\right)\right)= & \begin{cases}
\boldsymbol{\iota}_{j}\left(\kv\right) & v_{j}\in\tilde{\Gamma}_{1}\\
1-\boldsymbol{\iota}_{j}\left(\kv\right) & v_{j}\in\tilde{\Gamma}_{2}
\end{cases},
\end{align*}
and exactly as in the proof of Theorem \ref{thm:Second} (\ref{enu:binomial Neumann surplus}),
it follows that
\[
\forall s\in\left\{ 0,1\right\} ^{\beta}\,\,\,\,\Rev\left(\vec{\boldsymbol{\iota}}^{-1}\left(s\right)\right)=\vec{\boldsymbol{\iota}}^{-1}\left(g_{\tilde{v},e}.s\right),
\]
which leads to 
\[
\forall s\in\left\{ 0,1\right\} ^{\beta}\,\,\,\,P\left(\vec{\boldsymbol{\iota}}=s\right)=P\left(\vec{\boldsymbol{\iota}}=g_{\tilde{v},e}.s\right).
\]
And then Lemma \ref{lem: tree lemma} can be applied to show that
the sum $\left|\vec{\boldsymbol{\iota}}\right|=\sum_{j=1}^{\beta}\boldsymbol{\iota}_{j}$
has binomial distribution $\left|\vec{\boldsymbol{\iota}}\right|\sim Bin\left(\beta,\frac{1}{2}\right)$.
By Corollary \ref{cor: block diagonal}, $\boldsymbol{\sigma}=\sum_{j=1}^{\beta}\boldsymbol{\iota}_{j}=\left|\vec{\boldsymbol{\iota}}\right|$
which means that $\boldsymbol{\sigma}\sim Bin\left(\beta,\frac{1}{2}\right)$.
So, 
\[
\forall j\in\left\{ 0,1,...\beta\right\} \,\,\frac{1}{\mu_{\lv}\left(\Sigma_{\G}\right)}\mu_{\lv}\left(\boldsymbol{\sigma}^{-1}\left(j\right)\right)={\beta \choose j}2^{-\beta},
\]
as needed since $P\left(\sigma=j\right)=\frac{1}{\mu_{\lv}\left(\Sigma_{\G}\right)}\mu_{\lv}\left(\boldsymbol{\sigma}^{-1}\left(j\right)\right)$.

\begin{figure}
\includegraphics[width=0.6\paperwidth]{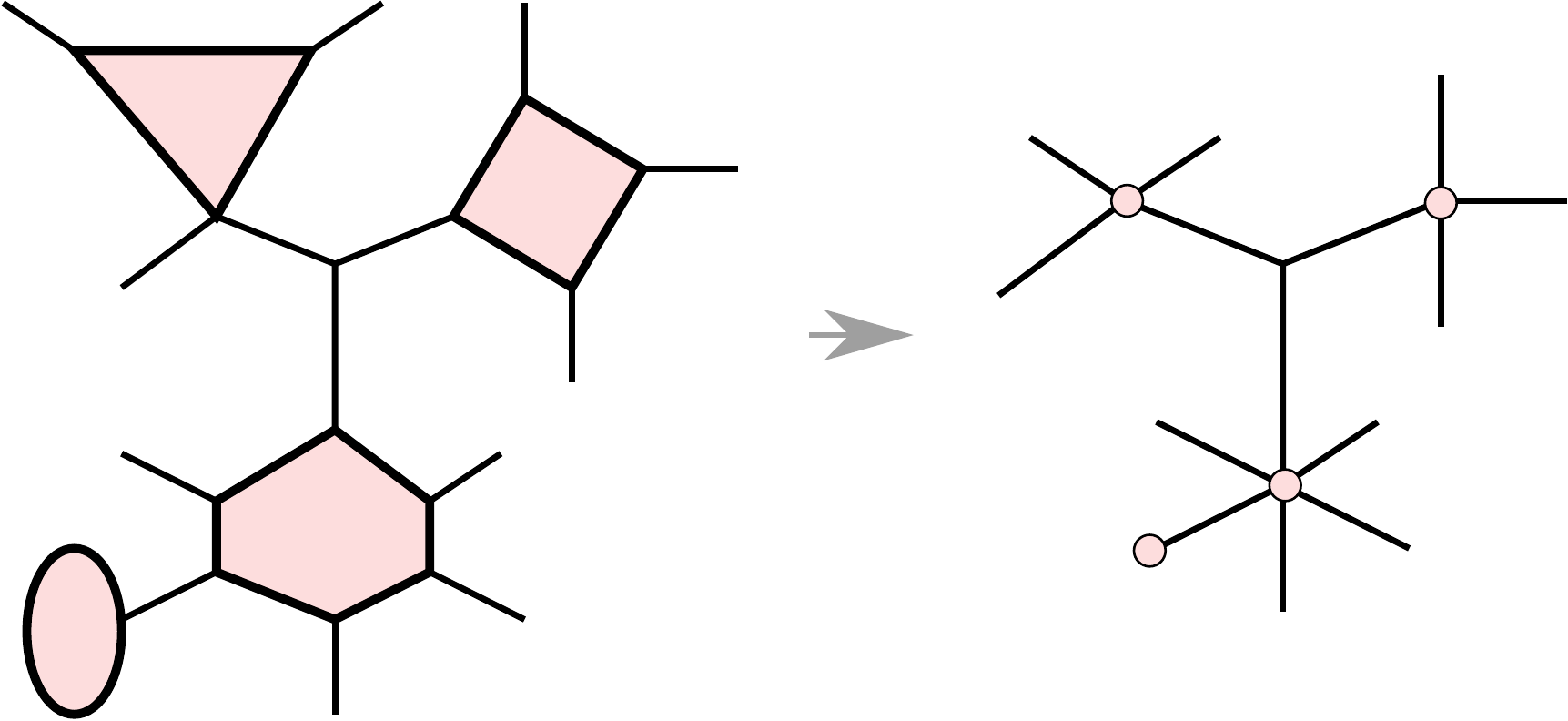}

\caption[The auxiliary tree of a tree of cycles ]{\label{fig: TOC proof}On the left, $\Gamma$ a tree of cycles, with
cycles filled for emphasis. On the right, $\tilde{\Gamma}$ the constructed
tree, with the corresponding $\protect\V_{0}$ chosen vertices marked
by circles. }
\end{figure}
\newpage{}
\section{Summary}{}
This thesis deals with generic eigenfunctions of standard graphs and the statistics of their nodal and Neumann counts. The nodal count $\phi(n)$ is the number of points where the nth eigenfunction vanishes. The Neumann count $\mu(n)$ is the number of local extrema of the nth eigenfunction. Both nodal and Neumann counts cannot be defined on every eigenfunction and we restrict the discussion to generic eigenfunctions on which the two counts are well defined. We call an eigenfunction generic if it corresponds to a simple eigenvalue, does not vanish on vertices, and its outgoing derivatives at every interior vertex do not vanish as well.\\ 
\par{}The first main result of this thesis, Theorem \ref{thm: density-of-generic-and-loop-eigenfunctions}, is a generalization of the genericity results in \cite{BerLiu_jmaa17,Fri_ijm05}, in which we justify the generality of our discussion. We prove, using two different notions of genericity, that generically every eigenfunction which is not supported on a loop (if such exist) is generic and therefore has well defined nodal and Neumann counts. Given a standard quantum graph $\Gamma_{\lv{}}$, we denote the index set of $n$'s for which the nth eigenfunction is generic by $\G$, and we denote its intersection with  $\{1,2,...N\}$ by $\G(N)$. As was shown in \cite{Ber_cmp08,GnuSmiWeb_wrm04}, the nodal surplus $\sigma(n):=\phi(n)-n$ is uniformly bounded between $0$ and $\beta$, the first Betti number of the graph. Therefore, the statistics of the nodal count is encapsulated in the nodal surplus distribution. The nodal surplus distribution is well defined if the following limits exist for any $j\in\{0,1,...\beta\}$:
\begin{align}
P\left(\sigma=j\right) & :=\lim_{N\rightarrow\infty}\frac{\left|\set{n\in\G\left(N\right)}{\sigma(n)=j}\right|}{\left|\G\left(N\right)\right|}
\end{align}
Similarly, we define the Neumann surplus $\omega(n):=\mu(n)-n$ and in Theorem \ref{thm:Neumann_surplus_main-1} we provide a uniform bound:
\begin{equation}
\forall n\in\G\,\,\,\,\,\,\,\,1-\beta-\left|\d\Gamma\right|\leq\omega(n)\leq2\beta-1.
\end{equation}
Hence, the Neumann count statistics is described by the Neumann surplus distribution, which is well defined if the following limits exist for any $j$ in the allowed range of $\omega$:
\begin{align}
P\left(\omega=j\right) & :=\lim_{N\rightarrow\infty}\frac{\left|\set{n\in\G\left(N\right)}{\omega(n)=j}\right|}{\left|\G\left(N\right)\right|}
\end{align}
\par{}The second main result of this thesis is Theorem \ref{thm:First}, in which we prove that if the edge lengths of the graph are rationally independent then the above limits exist and so the nodal and Neumann surplus distributions are well defined. We do so by providing a probabilistic setting for the statistical discussion. We define a $\sigma$-algebra $\mathcal{F}_{\G}$ on $\G$ such that $d_{\G}$ the (restricted) natural density is a measure on $(\G,\mathcal{F}_{\G})$. We then prove that the functions $\sigma, \omega:\G\rightarrow\Z$ are (finite) random variables on $(\G,\mathcal{F}_{\G},d_{\G})$. This proves that the distributions are well defined, and moreover, it allows to consider their joint distribution. In Theorem \ref{thm:First} we also show that both $\sigma$
and $\omega$ are symmetric around $\frac{\beta}{2}$ and $\frac{\beta-\left|\partial\Gamma\right|}{2}$
simultaneously. That is, 
\begin{align}
P\left((\sigma,\omega)=(j,i)\right) & =P\left((\sigma,\omega)=(\beta-j,\beta-\left|\partial\Gamma\right|-i)\right).
\end{align}
As a corollary, if the edge lengths of a graph are rationally independent then both $\beta$ and $\left|\partial\Gamma\right|$ can be obtained by the averages of the nodal and Neumann surplus distributions:
\begin{align*}
\beta= 2 \mathbb{E}\left(\sigma\right) & =\lim_{N\rightarrow\infty}\frac{2}{\left|\G(N)\right|}\sum_{n\in\G(N)}\sigma(n) ,\,\,\text{and}\\
\left|\partial\Gamma\right|=2 \mathbb{E}\left(\sigma-\omega\right) & =\lim_{N\rightarrow\infty}\frac{2}{\left|\G(N)\right|}\sum_{n\in\G(N)}(\sigma(n)-\omega(n)).
\end{align*}\\
There are two immediate consequences of the above, regarding the "geometric information" of the graph that is stored in the nodal and Neumann counts. The first, is that although counting zero points and counting extrema seems equivalent, it is not. For example consider tree graphs, so that $\sigma\equiv0$ and cannot distinguish between different trees, but the average Neumann surplus distinguishes between trees of different $\left|\partial\Gamma\right|$. The second consequence is that given both the average nodal surplus and the average Neumann surplus, we obtain  $\beta$ and $\left|\partial\Gamma\right|$. It can be shown that there are finitely many discrete graphs with a given $\beta$ and $\left|\partial\Gamma\right|$, and so the inverse question of retrieving the underlying discrete is narrowed down to a finite set of graphs, given only the averages of the nodal and Neumann surpluses.\\
\par{} The third main result is Theorem \ref{thm:Second} in which we explicitly calculate the nodal and Neumann surplus distributions for two specific families of graphs. If a graph is a tree of cycles (see Definition \ref{def: tree of cycles}) with rationally independent edge lengths, then its nodal surplus distribution is binomial, 
\[
\sigma\sim Bin\left(\beta,\frac{1}{2}\right).
\]
If a graph is a (3,1)-regular
finite tree (see Definition \ref{def: tree of cycles}) with rationally independent edge lengths then its Neumann surplus distribution is a shifted binomial distribution given by \[\omega+\left|\V_{in}\right|+1\sim Bin\left(\left|\V_{in}\right|,\frac{1}{2}\right),\]
where $\left|\V_{in}\right|=V-\left|\partial\Gamma\right|$ is the number of interior vertices.\\ 
The binomial distribution converge to Gaussian by the Central Limit Theorem. Therefore, the conjectured universal behavior of the nodal surplus statistics, as reformulated in Conjecture \ref{conj: universality}, holds for trees of cycles.

\newpage{}

\appendix

\section{\label{sec: appendix bridge}Decomposition for a bridge}

In this subsection we provide the technical details of the bridge
decomposition. As already discussed, more general results can be found
in Section 4 of \cite{AloBanBer_cmp18}. We will follow the method
of Section 4 of \cite{AloBanBer_cmp18}, which relies on the scattering
results obtained in \cite{BanBerSmi_ahp12}.

Let $e$ be a bridge of $\Gamma$ and consider the bridge decomposition
$\Gamma\setminus\left\{ e\right\} =\Gamma_{1}\sqcup\Gamma_{2}$. Denote
the edges of $\Gamma_{1},\Gamma_{2}$ by $\E_{1},\E_{2}$ correspondingly.
Consider the direction of $e$ from $\Gamma_{1}$ to $\Gamma_{2}$
and the basis of directed edges in the following order: directed edges
of $\Gamma_{1}$, $e$, $\hat{e}$, directed edges of $\Gamma_{2}$.
With this order and edge grouping the real orthogonal scattering matrix
$S$ has the following block structure:
\[
S=\begin{pmatrix}S_{1} & 0 & t_{1} & 0\\
t'_{1} & 0 & r_{1} & 0\\
0 & r_{2} & 0 & t'_{2}\\
0 & t_{2} & 0 & S_{2}
\end{pmatrix}.
\]
Where for $j\in\left\{ 1,2\right\} $, $S_{j}$ is square matrix of
dimension $2\left|\E_{j}\right|$, $t_{j}$ and $t_{j}'$ are column
and row vectors of dimension $2\left|\E_{j}\right|$ and $r_{j}$
is a scalar. Let $\zone,\ztwo$ be unitary diagonal matrices of dimensions
$2\left|\E_{1}\right|$ and $2\left|\E_{2}\right|$and let $\ze$
be uni-modular. Consider the unitary matrix 
\begin{equation}
U_{\ztrip}:=\left(\zone\oplus\ze\oplus\ze\oplus\ztwo\right)S=\begin{pmatrix}\zone S_{1} & 0 & \zone t_{1} & 0\\
\ze t'_{1} & 0 & \ze r_{1} & 0\\
0 & \ze r_{2} & 0 & \ze t'_{2}\\
0 & \ztwo t_{2} & 0 & \ztwo S_{2}
\end{pmatrix}.
\end{equation}
It is apparent that $U_{\ztrip}$ has a special block structure, but
it is not block diagonal. The following definitions are motivated
by the infinite leads scattering matrix which is used in \cite{BanBerSmi_ahp12}.
\begin{defn}
\label{def: D,S} Define $D_{i}\left(\zi\right):=\zi S_{i}-1$ and
$\mathcal{S}_{i}\left(\zi\right):=r_{i}-t'_{i}\left(D_{i}\left(\zi\right)\right)^{+}\zi t_{i}$
for $i\in\left\{ 1,2\right\} $. Where $\left(D_{i}\left(\zi\right)\right)^{+}$
is the Moore-Penrose inverse and is equal to $D_{i}\left(\zi\right)^{-1}$
whenever $D_{i}\left(\zi\right)$ is invertible.
\end{defn}

The definition of the Moore-Penrose inverse can be found in \cite{genSchurComp},
and we will only use its following property:
\begin{lem}
\cite{genSchurComp}\label{lem: Schur decom}~Given a matrix $A$
we denote its Moore-Penrose inverse by $A^{+}$. It satisfies the
property that $1-A^{+}A$ is the orthogonal projection on $\ker\left(A\right)$
and $A^{+}A$ is the orthogonal projection on $\left(\ker\left(A\right)\right)^{\perp}$. 
\end{lem}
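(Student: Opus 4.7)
The plan is to deduce the statement directly from the four Penrose conditions that define $A^{+}$, namely
\begin{equation}
AA^{+}A = A, \quad A^{+}AA^{+} = A^{+}, \quad (AA^{+})^{*} = AA^{+}, \quad (A^{+}A)^{*} = A^{+}A.
\end{equation}
The key observation is that the fourth identity asserts $A^{+}A$ is Hermitian, and the first one gives $(A^{+}A)(A^{+}A) = A^{+}(AA^{+}A) = A^{+}A$, so $A^{+}A$ is idempotent. A Hermitian idempotent is by definition an orthogonal projection, so the only thing left to pin down is its range (or, equivalently, its kernel).

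First I would show $\ker(A^{+}A) = \ker(A)$. One inclusion is immediate: if $Ax = 0$ then $A^{+}Ax = 0$. For the reverse inclusion, if $A^{+}Ax = 0$ then, multiplying by $A$ on the left and using $AA^{+}A = A$, we get $Ax = AA^{+}Ax = 0$. Since $A^{+}A$ is the orthogonal projection onto its range, and the range of an orthogonal projection is the orthogonal complement of its kernel, this immediately yields that $A^{+}A$ is the orthogonal projection onto $(\ker A)^{\perp}$. The complementary projection $1 - A^{+}A$ is then the orthogonal projection onto $\ker(A)$, which is the second half of the statement.

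There is no real obstacle here; the argument is purely algebraic manipulation of the four defining identities, and everything follows in a handful of lines once one agrees on the characterization of orthogonal projections as Hermitian idempotents. The proof can essentially be stated as a reference to \cite{genSchurComp} together with the short verification sketched above.
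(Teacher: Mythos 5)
Your proof is correct. The paper itself does not prove this lemma---it cites it directly from \cite{genSchurComp}---so there is no in-paper argument to compare against. Your derivation from the four Penrose axioms (Hermitian plus idempotent implies orthogonal projection, then the two-line kernel identification $\ker(A^{+}A)=\ker(A)$) is the standard and complete justification, and filling in this citation with a self-contained argument is perfectly reasonable.
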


\begin{rem}
Although it was not mentioned by name, the Moore-Penrose inverse is
being used in \cite{BanBerSmi_ahp12}. 
\end{rem}

\begin{lem}
\label{lem: D-1}\cite{AloBanBer_cmp18,BanBerSmi_ahp12} For both
$i\in\left\{ 1,2\right\} $, $t_{i}'\in\left(\ker\left(D_{i}\left(\zi\right)\right)\right)^{\perp}$
and if $a_{1}\in\ker\left(D_{i}\left(\zi\right)\right)$, then (in
block structure) $\boldsymbol{a}=\begin{pmatrix}a_{1}\\
0\\
0\\
0
\end{pmatrix}\in\ker\left(1-U_{\ztrip}\right)$.
\end{lem}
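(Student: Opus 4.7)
My plan is to exploit the real orthogonality of the bond scattering matrix $S$ to derive a relation between $S_i$ and $t_i'$, and then use that relation together with the fact that $\boldsymbol{z}_i$ is unitary to force $t_i'$ to vanish on $\ker(D_i(\boldsymbol{z}_i))$. Once claim 1 is in hand, claim 2 will fall out of a direct block-matrix computation. I will only write the argument for $i=1$; the $i=2$ case is identical after relabeling.

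The first step is to read off from $S^{T}S=I$ the identity corresponding to the first block column. Since that block column is $(S_{1},\,t_{1}',\,0,\,0)^{T}$ (with $t_{1}'$ viewed as a $1\times 2|\E_{1}|$ row), the orthonormality of its columns gives
\begin{equation}
S_{1}^{T}S_{1}+(t_{1}')^{T}t_{1}'=I_{2|\E_{1}|}.\label{eq:plan-ortho}
\end{equation}
This is the only input from $S$ that I need.

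Next I will prove claim 1. Let $a_{1}\in\ker(D_{1}(\boldsymbol{z}_{1}))$, so $\boldsymbol{z}_{1}S_{1}a_{1}=a_{1}$. Since $\boldsymbol{z}_{1}$ is unitary, taking Hermitian norms gives $\|S_{1}a_{1}\|=\|a_{1}\|$. On the other hand, using \eqref{eq:plan-ortho} and that $S_{1}$ is real,
\begin{equation}
\|S_{1}a_{1}\|^{2}=a_{1}^{*}S_{1}^{T}S_{1}a_{1}=a_{1}^{*}\bigl(I-(t_{1}')^{T}t_{1}'\bigr)a_{1}=\|a_{1}\|^{2}-|t_{1}'a_{1}|^{2}.
\end{equation}
Comparing the two expressions forces $t_{1}'a_{1}=0$. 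Since this holds for every $a_{1}\in\ker(D_{1}(\boldsymbol{z}_{1}))$, the row $t_{1}'$ (equivalently, the column $(t_{1}')^{T}$, which is real) is orthogonal to $\ker(D_{1}(\boldsymbol{z}_{1}))$, establishing claim 1.

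For claim 2, I will just compute $U_{\ztrip}\boldsymbol{a}$ for $\boldsymbol{a}=(a_{1},0,0,0)^{T}$ using the block form of $U_{\ztrip}$. Only the first block column contributes, yielding $U_{\ztrip}\boldsymbol{a}=(\boldsymbol{z}_{1}S_{1}a_{1},\;z_{e}t_{1}'a_{1},\;0,\;0)^{T}$. The first entry equals $a_{1}$ by the hypothesis $a_{1}\in\ker(D_{1}(\boldsymbol{z}_{1}))$, and the second entry vanishes by claim 1. Hence $U_{\ztrip}\boldsymbol{a}=\boldsymbol{a}$, which is claim 2.

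There is no real obstacle here: the only subtle point is to remember that although $S_{1}$ is real, $a_{1}$ is complex, so one must use the Hermitian inner product when computing $\|S_{1}a_{1}\|^{2}$; apart from this, everything is bookkeeping with the block decomposition of $S$. Note that this proof does not need any information about the Moore--Penrose inverse or about $\mathcal{S}_{i}(\boldsymbol{z}_{i})$, which are introduced in Definition~\ref{def: D,S} for use in subsequent lemmas of Appendix~\ref{sec: appendix bridge}.
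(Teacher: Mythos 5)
Your proof is correct and is essentially the same argument as the paper's. Both proofs hinge on a norm-preservation identity: you extract $S_1^{T}S_1 + (t_1')^{T}t_1' = I$ directly from $S^{T}S=I$ and combine it with unitarity of $\boldsymbol{z}_1$, while the paper computes $U_{\ztrip}\boldsymbol{a}$ first and invokes unitarity of the full matrix $U_{\ztrip}$ to conclude $\|a_1\|^2 + |z_e t_1' a_1|^2 = \|a_1\|^2$; these amount to the same bookkeeping, and the verification of the kernel membership is the same block computation in both cases.
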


\begin{proof}
Let $a_{1}\in\ker D_{1}\left(\zone\right)$ and let $\boldsymbol{a}=\begin{pmatrix}a_{1}\\
0\\
0\\
0
\end{pmatrix}$. Since $D_{1}\left(\zone\right)a_{1}=0$, then $\zone S_{1}a_{1}=a_{1}$
and therefore, using the block structure of $U_{\ztrip}$, 
\[
U_{\ztrip}\boldsymbol{a}=\begin{pmatrix}\zone S_{1}a_{1}\\
\ze t'_{1}a_{1}\\
0\\
0
\end{pmatrix}=\begin{pmatrix}a_{1}\\
\ze t'_{1}a_{1}\\
0\\
0
\end{pmatrix}.
\]
which means that $\norm{U_{\ztrip}\boldsymbol{a}}^{2}=\norm{a_{1}}^{2}+\norm{\ze t'_{1}a_{1}}^{2}$.
Since $U_{\ztrip}$ is unitary, then $\norm{U_{\ztrip}\boldsymbol{a}}^{2}=\norm{\boldsymbol{a}}^{2}=\norm{a_{1}}^{2}$
and therefore $\ze t'_{1}a_{1}=0$. Since $z_{e}\ne0$ (uni-modular)
then it follows that $a_{1}$, and hence $\ker D_{1}\left(\zone\right)$,
is orthogonal to $t_{1}'$. In particular it follows that $U_{\ztrip}\boldsymbol{a}=\boldsymbol{a}$
and therefore $\boldsymbol{a}\in\ker\left(1-U_{\ztrip}\right)$.
\end{proof}
\begin{lem}
\label{lem: Technical  bridge decomp} \cite{AloBanBer_cmp18}The
determinant of $1-U_{\ztrip}$ can be decomposed as follows:
\begin{equation}
\det\left(1-U_{\ztrip}\right)=\det D_{1}\left(\zone\right)\cdot\det D_{2}\left(\ztwo\right)\cdot\left(1-z_{e}^{2}\mathcal{S}\left(\zone\right)\mathcal{S}\left(\ztwo\right)\right).\label{eq: general decompoisition}
\end{equation}
Moreover, let \[M:=\begin{pmatrix}D_{1}\left(\zone\right) & 0 & \zone t_{1} & 0\\
0 & -1 & \ze\mathcal{S}_{1}\left(\zone\right) & 0\\
0 & \ze\mathcal{S}_{2}\left(\ztwo\right) & -1 & 0\\
0 & \ztwo t_{2} & 0 & D_{2}\left(\ztwo\right)
\end{pmatrix},\] then 
\begin{equation}
\ker\left(1-U_{\ztrip}\right)=\ker M.\label{eq: ker identity schur}
\end{equation}
\end{lem}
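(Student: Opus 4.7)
The plan is to transform $1-U_{\ztrip}$ into $M$ by invertible block row operations that preserve both kernel and determinant, after which both claims reduce to computing $\det M$ and $\ker M$ directly from the explicit block structure of $M$. Writing $U_{\ztrip}-1$ in block form, with rows and columns ordered as $\vec{\E}_1, e, \hat{e}, \vec{\E}_2$, gives
\[
U_{\ztrip}-1 = \begin{pmatrix} D_1(\zone) & 0 & \zone t_1 & 0 \\ \ze t'_1 & -1 & \ze r_1 & 0 \\ 0 & \ze r_2 & -1 & \ze t'_2 \\ 0 & \ztwo t_2 & 0 & D_2(\ztwo) \end{pmatrix},
\]
which differs from $M$ only in its second and third row-blocks. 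Since the ambient dimension $2E$ is even, $\det(U_{\ztrip}-1) = \det(1-U_{\ztrip})$, and the two matrices share the same kernel, so it suffices to reduce $U_{\ztrip}-1$ to $M$.

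Next, I would apply two block row operations: replace row-block $2$ by itself minus $\ze t'_1\, D_1(\zone)^{+}$ times row-block $1$, and replace row-block $3$ by itself minus $\ze t'_2\, D_2(\ztwo)^{+}$ times row-block $4$. Each corresponds to left multiplication by an invertible block matrix of the form $I+N$ with $N$ strictly block triangular, so both kernel and determinant are preserved. A direct calculation in row-block $2$ shows the $(2,1)$ entry becomes $\ze t'_1 - \ze t'_1\, D_1^{+} D_1 = \ze t'_1(I - D_1^{+} D_1)$ and the $(2,3)$ entry becomes $\ze r_1 - \ze t'_1\, D_1^{+} \zone t_1 = \ze \mathcal{S}_1(\zone)$, with an analogous calculation for row-block $3$. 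Once the $(I-D_i^+ D_i)$ terms are shown to vanish, the matrix is exactly $M$, yielding $\det(1-U_{\ztrip}) = \det M$ and $\ker(1-U_{\ztrip}) = \ker M$.

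The main technical point, and the only place where the Moore-Penrose pseudoinverse rather than a literal inverse matters, is the vanishing of the new $(2,1)$ block, which requires $t'_1(I - D_1^{+} D_1) = 0$. Here $I - D_1^{+} D_1$ is the orthogonal projection onto $\ker D_1(\zone)$, and Lemma \ref{lem: D-1} states precisely that $t'_1$ is orthogonal to $\ker D_1(\zone)$; the same argument handles row-block $3$ via $t'_2 \perp \ker D_2(\ztwo)$. This is the subtle step, since any attempt to do the reduction with a literal inverse of $D_i(\zi)$ would break down exactly on the strata where $D_i(\zi)$ is singular, which is precisely where the $\ker D_i$ contribution to $\ker(1-U_{\ztrip})$ identified in Lemma \ref{lem: D-1} lives.

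Finally, to compute $\det M$ I would perform two block-triangular expansions. The first block-column of $M$ has $D_1(\zone)$ on top and zeros below, so $\det M = \det D_1(\zone)\cdot \det M'$, where $M'$ is $M$ with its first row- and column-block removed. The last block-column of $M'$ contains $D_2(\ztwo)$ at the bottom and zeros above, so
\[
\det M' = \det D_2(\ztwo)\cdot \det\begin{pmatrix} -1 & \ze \mathcal{S}_1(\zone) \\ \ze \mathcal{S}_2(\ztwo) & -1 \end{pmatrix} = \det D_2(\ztwo)\cdot\bigl(1 - z_e^2\, \mathcal{S}_1(\zone)\mathcal{S}_2(\ztwo)\bigr).
\]
Combining the two expansions yields the claimed factorization of $\det(1-U_{\ztrip})$, completing the argument.
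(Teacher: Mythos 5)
Your proof is correct and takes essentially the same route as the paper: the paper realizes your two block row operations as left multiplication by explicit unimodular matrices $L_1$ and $L_2$, invokes Lemma~\ref{lem: D-1} together with the Moore--Penrose projection identity to kill the $\ze t'_i(1-D_i^{+}D_i)$ blocks, and then performs the same pair of block-triangular determinant expansions. The only difference is cosmetic presentation.
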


\begin{proof}
Let $L_{1}=\begin{pmatrix}1 & 0 & 0 & 0\\
-\ze t'_{1}D_{1}\left(\zone\right)^{+} & 1 & 0 & 0\\
0 & 0 & 1 & 0\\
0 & 0 & 0 & 1
\end{pmatrix}$ and let $L_{2}=\begin{pmatrix}1 & 0 & 0 & 0\\
0 & 1 & 0 & 0\\
0 & 0 & 1 & -\ze t'_{2}D_{2}\left(\ztwo\right)^{+}\\
0 & 0 & 0 & 1
\end{pmatrix}$ so that $L_{1}L_{2}\left(U_{\ztrip}-1\right)$ is equal to 
\[
\begin{pmatrix}D_{1}\left(\zone\right) & 0 & \zone t_{1} & 0\\
\ze t'_{1}\left(1-D_{1}\left(\zone\right)^{+}D_{1}\left(\zone\right)\right) & -1 & \ze\mathcal{S}_{1}\left(\zone\right) & 0\\
0 & \ze\mathcal{S}_{2}\left(\ztwo\right) & -1 & \ze t'_{2}\left(1-D_{2}\left(\ztwo\right)^{+}D_{2}\left(\ztwo\right)\right)\\
0 & \ztwo t_{2} & 0 & D_{2}\left(\ztwo\right)
\end{pmatrix}.
\]
According to Lemma \ref{lem: D-1}, $\ze t'_{i}\in\left(\ker D_{i}\left(\zi\right)\right)^{\perp}$
and by Lemma \ref{lem: Schur decom}, $\left(1-D_{i}\left(\zi\right)^{+}D_{i}\left(\zi\right)\right)$
is the orthogonal projection on $\ker D_{i}\left(\zi\right)$. Therefore
$\ze t'_{i}\left(1-D_{i}\left(\zi\right)^{+}D_{i}\left(\zi\right)\right)=0$
for both $i\in\left\{ 1,2\right\} $, and so $L_{1}L_{2}\left(U_{\ztrip}-1\right)=M$.
Notice that both $L_{1}$ and $L_{2}$ are invertible with $\det\left(L_{1}\right)=\det\left(L_{2}\right)=1$
so left multiplication by $L_{1}L_{2}$ does not change the (right)
kernel: 
\[
\ker\left(1-U_{\ztrip}\right)=\ker\left(U_{\ztrip}-1\right)=\ker L_{1}L_{2}\left(U_{\ztrip}-1\right)=\ker M,
\]
and $\det M=\det\left(U_{\ztrip}-1\right)$. The matrix $M=L_{1}L_{2}\left(U_{\ztrip}-1\right)$
has triangular block structure $M=\begin{pmatrix}A & 0\\
C & D
\end{pmatrix}$ with blocks: $A=\begin{pmatrix}D_{1}\left(\zone\right) & 0 & \zone t_{1}\\
0 & -1 & \ze\mathcal{S}_{1}\left(\zone\right)\\
0 & \ze\mathcal{S}_{2}\left(\ztwo\right) & -1
\end{pmatrix}$, $B=\begin{pmatrix}0\\
0\\
0
\end{pmatrix},\,C=\begin{pmatrix}0 & \ztwo t_{2} & 0\end{pmatrix}$ and $D=D_{2}\left(\ztwo\right)$, so $\det\left(M\right)=\det\left(A\right)\det\left(D\right).$
Namely,
\[
det\left(U_{\ztrip}-1\right)=\det\left(D_{2}\left(\ztwo\right)\right)\det\begin{pmatrix}D_{1}\left(\zone\right) & 0 & \zone t_{1}\\
0 & -1 & \ze\mathcal{S}_{1}\left(\zone\right)\\
0 & \ze\mathcal{S}_{2}\left(\ztwo\right) & -1
\end{pmatrix}.
\]
Using the same argument, $A=\begin{pmatrix}D_{1}\left(\zone\right) & 0 & \zone t_{1}\\
0 & -1 & \ze\mathcal{S}_{1}\left(\zone\right)\\
0 & \ze\mathcal{S}_{2}\left(\ztwo\right) & -1
\end{pmatrix}$ has upper triangular blocks structure, so 
\[
\det\left(A\right)=\det\left(D_{1}\left(\zone\right)\right)\det\begin{pmatrix}-1 & \ze\mathcal{S}_{1}\left(\zone\right)\\
\ze\mathcal{S}_{2}\left(\ztwo\right) & -1
\end{pmatrix}.
\]
We may conclude that, 
\[
\det\left(U_{\ztrip}-1\right)=\det\left(D_{2}\left(\ztwo\right)\right)\det\left(D_{1}\left(\zone\right)\right)\left(1-z_{e}^{2}\mathcal{S}\left(\zone\right)\mathcal{S}\left(\ztwo\right)\right),
\]
and since $U_{\ztrip}-1$ is of even dimension, then $\det\left(U_{\ztrip}-1\right)=\det\left(1-U_{\ztrip}\right)$
and we are done.
\end{proof}
\begin{lem}
\label{lem: scattering phase}\cite{KosSch_jpa99,BanBerSmi_ahp12}
Both $\mathcal{S}\left(\zone\right)$ and $\mathcal{S}\left(\ztwo\right)$
are uni-modular and satisfy $\mathcal{S}\left(\overline{\zi}\right)=\overline{\mathcal{S}\left(\zi\right)}$
and each $\mathcal{S}\left(\zi\right)$ is analytic in the entries
of $\zi$ in the region where $D_{i}\left(\zi\right)\ne0$.
\end{lem}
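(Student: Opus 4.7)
I plan to prove the three assertions—analyticity, the conjugation identity, and uni-modularity—by reducing them to a classical Schur-complement identity for unitary block matrices. The cornerstone is the structural observation that, because the bridge $e$ is the only edge connecting $\Gamma_1$ to $\Gamma_2$, the scattering matrix $S$ itself is block-diagonal after an appropriate permutation of rows and columns: the row block indexed by $\{\Gamma_1\text{-out}, e\}$ couples only to the column block indexed by $\{\Gamma_1\text{-in}, \hat e\}$, and symmetrically for $\Gamma_2$. Hence
\[
U^{(1)} := \begin{pmatrix} S_1 & t_1 \\ t'_1 & r_1 \end{pmatrix}
\]
is a diagonal block of an orthogonal matrix, so is itself orthogonal, and therefore
\[
\tilde U_1 := \begin{pmatrix} \zone & 0 \\ 0 & \ze \end{pmatrix} U^{(1)} = \begin{pmatrix} \zone S_1 & \zone t_1 \\ \ze t'_1 & \ze r_1 \end{pmatrix}
\]
is unitary; the analogous matrix $\tilde U_2$ built from the other block is unitary too.

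Analyticity and the conjugation identity are then immediate on the open set where $\det D_i(\zi) \ne 0$. Cramer's rule presents $D_i(\zi)^{-1}$ as a matrix of rational functions of the entries of $\zi$ with nonvanishing denominator $\det D_i(\zi)$, so $\mathcal{S}_i(\zi) = r_i - t'_i D_i(\zi)^{-1} \zi t_i$ is analytic there. Reality of $S_i, r_i, t_i, t'_i$ gives $D_i(\overline{\zi}) = \overline{D_i(\zi)}$ and therefore $\mathcal{S}_i(\overline{\zi}) = \overline{\mathcal{S}_i(\zi)}$ on the invertible region; beyond it one extends by continuity, using Lemma \ref{lem: D-1} to ensure the Moore-Penrose expression is well-posed.

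For uni-modularity I will invoke the following standard fact: if $U = \begin{pmatrix} A & B \\ C & D \end{pmatrix}$ is unitary and $I - A$ is invertible, then the Schur complement $E := D + C(I - A)^{-1} B$ is unitary. A short proof closes the feedback loop $y_{\text{top}} = x_{\text{top}}$ in the linear map $y = Ux$, yielding $y_{\text{bot}} = E x_{\text{bot}}$; then $\|x\|^2 = \|y\|^2$ forces $\|E x_{\text{bot}}\| = \|x_{\text{bot}}\|$, and the same argument on $U^*$ shows $E^*$ is also an isometry, so $E$ is unitary. Applying this to $\tilde U_1$ with $A = \zone S_1$, $B = \zone t_1$, $C = \ze t'_1$, $D = \ze r_1$, and using $(I - A)^{-1} = -D_1(\zone)^{-1}$, the Schur complement evaluates to $\ze r_1 - \ze t'_1 D_1(\zone)^{-1} \zone t_1 = \ze \mathcal{S}_1(\zone)$. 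Being a scalar, its unitarity reads $|\ze \mathcal{S}_1(\zone)| = 1$, and $|\ze| = 1$ then forces $|\mathcal{S}_1(\zone)| = 1$. The identical argument applied to $\tilde U_2$ yields uni-modularity of $\mathcal{S}_2(\ztwo)$.

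The main obstacle will be the degenerate case $\det D_i(\zi) = 0$, where the Moore-Penrose inverse genuinely replaces $D_i^{-1}$ and the Schur-complement identity does not apply verbatim. The escape route is Lemma \ref{lem: D-1}: since $t'_i \perp \ker D_i(\zi)$, the Moore-Penrose value $t'_i D_i(\zi)^+ \zi t_i$ equals the limit of the invertible-case values along any path of perturbations along which the rank of $D_i$ is locally constant. Uni-modularity at singular points then passes from the dense open set by continuity, once one verifies (using the orthogonality of $t'_i$ to $\ker D_i$) that this Moore-Penrose extension does not jump at the singular locus; this continuity check is the only nontrivial ingredient beyond the Schur-complement identity.
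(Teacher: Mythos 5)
Your proof is correct, and for the uni-modularity claim you take a genuinely different route from the paper. The paper simply cites \cite{BanBerSmi_ahp12} (Theorem~2.1(2)) and \cite{KosSch_jpa99} (Theorem~3.3) for the unitarity of the lead-scattering matrix $\mathcal{S}_i$, whereas you derive it from scratch: you observe that the block structure of $S$ makes $U^{(1)}=\left(\begin{smallmatrix} S_1 & t_1 \\ t'_1 & r_1 \end{smallmatrix}\right)$ a diagonal sub-block of an orthogonal matrix that is itself orthogonal, multiply by the unitary diagonal $\zone\oplus\ze$, and then apply the feedback-loop Schur-complement identity to extract $\ze\mathcal{S}_1(\zone)$ as a unitary scalar. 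Unwinding $(1-\zone S_1)^{-1}=-D_1(\zone)^{-1}$ does give exactly $\ze\mathcal{S}_1(\zone)$, and since this is a scalar, unitarity is just $|\mathcal{S}_1(\zone)|=1$, so your closure-of-the-loop argument (isometry on the bottom channel from $\|x\|=\|Ux\|$) already suffices without the dual step on $U^*$. Your analyticity and conjugation arguments (Cramer's rule for the rational dependence on $\zi$, reality of $S_i, r_i, t_i, t'_i$ plus commutation of Moore--Penrose inverse with conjugation) are essentially the paper's.

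What each approach buys: the paper's citation-based route handles the degenerate locus $\det D_i(\zi)=0$ for free, since the cited theorems are stated with the Moore--Penrose inverse built in. Your Schur-complement argument needs $1-\zone S_1$ invertible, so uni-modularity is first established only on the dense open set $\det D_1(\zone)\ne 0$ and must then be extended by a continuity argument at the singular locus. You correctly flag this as the remaining nontrivial step and correctly identify Lemma~\ref{lem: D-1} (the orthogonality $t'_i\perp\ker D_i(\zi)$) as the tool that makes the Moore--Penrose value the genuine limit; that closes the gap, and the overall benefit is a self-contained proof that does not lean on scattering-theory results from other papers.
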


\begin{proof}
If we consider the scattering system of $\Gamma_{1}$ with an infinite
lead attached instead of $e$, then both in \cite{BanBerSmi_ahp12}
Theorem 2.1(2) and in \cite{KosSch_jpa99} Theorem 3.3 it was shown
that $\mathcal{S}_{1}\left(\zone\right)=r_{1}-t'_{1}D_{1}\left(\zone\right)^{+}\zone t_{1}$\footnote{In \cite{BanBerSmi_ahp12} Theorem 2.1(2), they allow more freedom
in the choice of pseudo inverse to $D_{1}\left(\zone\right)$ but
it can be shown, using Lemma \ref{lem: D-1} that $D_{1}\left(\zone\right)^{+}$
is included in their possible choices.} is unitary (and one dimensional in our case) so it has magnitude
one. Same for $\mathcal{S}_{2}\left(\ztwo\right)$. Since $D_{i}\left(\zi\right)$
is linear in $\zi$ with real coefficients, then $D_{i}\left(\overline{\zi}\right)=\overline{D_{i}\left(\zi\right)}$
and the Moore-Penrose inverse commute with conjugation therefore $D_{i}\left(\overline{\zi}\right)^{+}=\overline{D_{i}\left(\zi\right)^{+}}$.
Therefore, as $r_{1},t'_{1}$ and $t_{1}$ are real, then $\mathcal{S}_{i}\left(\overline{\zi}\right)=r_{i}-t'_{i}D_{i}\left(\overline{\zi}\right)^{-1}\overline{\zi}t_{i}=\overline{\mathcal{S}_{i}\left(\zi\right)}$.
As for analyticity, clearly $r_{i}-t'_{i}D_{i}\left(\zi\right)^{-1}\zi t_{i}$
is a rational function in the entries of $\zi$ and its poles are
exactly when $\det D_{i}\left(\zi\right)=0$, so it is analytic in
the region of $\det D_{i}\left(\zi\right)\ne0$. 
\end{proof}

\newpage{}
\section{\label{sec: Equidistribution} \label{sec: app Equidistribution}Equidistribution
and the natural density}

In Lemma \ref{lem: equidistribution Jordan}, we state that if $\left\{ x_{n}\right\} _{n\in N}$
is equidistributed on $X$ (a compact metric space) with respect to
$m$ (a Borel regular measure) and $A\subset X$ is Jordan with respect
to $m$, then
\[
d\left(\set{n\in\N}{x_{n}\in A}\right)=m\left(A\right).
\]
The proof is the following standard approximation: 
\begin{proof}
This is a standard approximation argument. For every $\epsilon>0$
we can define an open set $O_{\epsilon}$ that contains the closure
$\overline{A}$ and a compact set $K_{\epsilon}$ contained in in
the interior $\mathrm{int}\left(A\right)$ such that both $m\left(O_{\epsilon}\setminus\overline{A}\right),m\left(\mathrm{int}\left(A\right)\setminus K_{\epsilon}\right)<\epsilon$.
We can define (using Urysohn's lemma) two continuous functions, $h_{\epsilon,+}$
and $h_{\epsilon,-}$ from $X$ to $\left[0,1\right]$ that bound
the indicator function of $A$
\[
h_{\epsilon,-}\le\chi_{A}\le h_{\epsilon,+},
\]
such that $h_{\epsilon,-}$ is supported inside $int\left(A\right)$
and $h_{\epsilon,-}|_{K_{\epsilon}}\equiv1$ and $h_{\epsilon,+}$
is supported inside $O_{\epsilon}$ and $h_{\epsilon,-}|_{\overline{A}}\equiv1$.
Since $A$ is Jordan, then \[m(\overline{A})=m(A)=m(\mathrm{int}\left(A\right)),\]
and so \[\int_{X}\left(h_{\epsilon,+}-\chi_{A}\right)dm\le m\left(O_{\epsilon}\setminus\overline{A}\right)<\epsilon, \ \text{and}\] \[\int_{X}\left(\chi_{A}-h_{\epsilon,-}\right)dm\le m\left(\mathrm{int}\left(A\right)\setminus K_{\epsilon}\right)<\epsilon.\]
Therefore, 
\begin{equation}
\int_{X}h_{\epsilon,+}dm-\epsilon\le m\left(A\right)\le\int_{X}h_{\epsilon,+}dm+\epsilon.\label{eq: measure difference}
\end{equation}
For a given $N$, the bound $h_{\epsilon,-}\le\chi_{A}\le h_{\epsilon,+}$
gives 
\[
\frac{\sum_{n=1}^{N}h_{\epsilon,-}\left(x_{n}\right)}{N}\le\frac{\left|\set{n\le N}{x_{n}\in A}\right|}{N}\le\frac{\sum_{n=1}^{N}h_{\epsilon,+}\left(x_{n}\right)}{N}
\]
As both $h_{\epsilon,+}$ and $h_{\epsilon,-}$ are continuous, then
by the equidistribution $\lim_{N\rightarrow\infty}\frac{\sum_{n=1}^{N}h_{\epsilon,\pm}\left(x_{n}\right)}{N}=\int_{X}h_{\epsilon,\pm}dm$
and so taking $N\rightarrow\infty$, the above gives
\begin{equation}
\int_{X}h_{\epsilon,-}dm\le\liminf_{N\rightarrow\infty}\frac{\left|\set{n\le N}{x_{n}\in A}\right|}{N}\le\limsup_{N\rightarrow\infty}\frac{\left|\set{n\le N}{x_{n}\in A}\right|}{N}\le\int_{X}h_{\epsilon,+}dm.\label{eq: density bounds}
\end{equation}
As $\epsilon\rightarrow0$, (\ref{eq: measure difference}) together
with (\ref{eq: density bounds}) gives 
\[
\liminf_{N\rightarrow\infty}\frac{\left|\set{n\le N}{x_{n}\in A}\right|}{N}=\limsup_{N\rightarrow\infty}\frac{\left|\set{n\le N}{x_{n}\in A}\right|}{N}=m\left(A\right).
\]
\end{proof}
A more general statement is that under the above conditions, the following
equality, 
\begin{equation}
\lim_{N\rightarrow\infty}\frac{\sum_{n=1}^{N}f\left(x_{n}\right)}{N}=\int_{X}fdm,
\end{equation}
that holds for continuous functions can be generalized to \emph{Riemann
integrable} functions. Where by\emph{ Riemann integrable, }we mean
functions whose set of discontinuity points is of measure zero (with
respect to $m$).

The method of proof in Theorems \ref{thm:First} and \ref{thm: statistics_of_local_observables-1},
in terms of equidistribution, was as follows. Given a finite Riemann
integrable step function on $X$, $f=\sum_{j=1}^{n}a_{j}\chi_{A_{j}}$
we push it forward to a function $\boldsymbol{f}:\N\rightarrow\mathrm{Image}\left(f\right)$
defined by $\boldsymbol{f}\left(n\right):=f\left(x_{n}\right)$, using
the equidistributed sequence. In such case, as we showed, the level
sets of $\boldsymbol{f}$ have density according to the measures of
the level sets of $f$. It follows that \textbf{$\boldsymbol{f}$}
is a random variable on $\N$ with the $\sigma$-algebra generated
by $\boldsymbol{f}$, say $\mathcal{F}_{\boldsymbol{f}}$, and the
natural density $d$.

It is only natural to ask whether this procedure can be generalized
to continuous functions, or even to any Riemann integrable function,
and the answer appears to be negative:
\begin{prop}
\label{prop: random variables no continuous}Let $X$ be a compact
metric space with Borel regular probability measure $m$ and let $\left\{ x_{n}\right\} _{n\in N}$
be a sequence equidistributed with respect to $m$.\\ Let $f:X\rightarrow\R$
be Riemann integrable function, and let $\boldsymbol{f}:\N\rightarrow\mathrm{Image}\left(f\right)$
defined by $\boldsymbol{f}\left(n\right):=f\left(x_{n}\right)$. Let
$\mathcal{F}_{\boldsymbol{f}}$ denote the $\sigma$-algebra generated
by $\boldsymbol{f}$. If the natural density, $d$, is a probability
measure on $\N$ with $\mathcal{F}_{\boldsymbol{f}}$, then $f$ is
a countable step function up to measure zero. Namely, there is countable
disjoint collection of Borel sets $\left\{ A_{n}\right\} _{n\in\N}$
with $\tilde{X}=\sqcup_{n=1}^{\infty}A_{n}$ of measure $m\left(\tilde{X}\right)=1$,
such that the restriction of $f$ to $\tilde{X}$ is 
\[
f|_{\tilde{X}}=\sum_{n=1}^{\infty}a_{n}\chi_{A_{n}},
\]
 for some real $a_{n}$'s.
\end{prop}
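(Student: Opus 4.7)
The plan is to argue by contrapositive: assuming that $f$ is not a countable step function up to $m$-measure zero, I will exhibit a set $A\in\mathcal{F}_{\boldsymbol{f}}$ on which the natural density $d$ fails to exist, contradicting the standing hypothesis.

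First, push $m$ forward by $f$ to obtain the Borel probability measure $\nu:=f_{*}m$ on $\R$, and decompose it as $\nu=\nu_{a}+\nu_{c}$ into its atomic and continuous parts. The set of atoms $S=\set{a\in\R}{\nu(\{a\})>0}$ is at most countable; enumerate $S=\{a_{n}\}$, set $A_{n}:=f^{-1}(a_{n})$ and $\tilde{X}:=f^{-1}(S)=\bigsqcup_{n}A_{n}$. The failure hypothesis is equivalent to $\nu_{c}(\R)=1-m(\tilde{X})>0$, which is what I aim to contradict.

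Next, I identify $d$ on $\mathcal{F}_{\boldsymbol{f}}$ with the pushforward of $\nu$ on $\nu$-Jordan sets. For every Borel $B\subset\R$ with $\nu(\partial B)=0$, the preimage $f^{-1}(B)$ is Jordan in $X$ with respect to $m$: since $f$ is Riemann integrable, its discontinuity set $D$ is $m$-null, and at continuity points of $f$ one has $\partial f^{-1}(B)\subset f^{-1}(\partial B)\cup D$, giving $m(\partial f^{-1}(B))\le\nu(\partial B)+m(D)=0$. Lemma \ref{lem: equidistribution Jordan} therefore yields
\begin{equation}
d(\boldsymbol{f}^{-1}(B))=\nu(B)\quad\text{for every $\nu$-Jordan $B$.}\label{eq: plan identity}
\end{equation}
Continuity of $d$ from above --- using that each singleton $\{a_{n}\}$ is an intersection of a decreasing family of $\nu$-Jordan intervals and invoking the hypothesized $\sigma$-additivity of $d$ --- extends \eqref{eq: plan identity} to the atoms: $d(\boldsymbol{f}^{-1}(\{a_{n}\}))=\nu(\{a_{n}\})$. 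Applying countable additivity to the partition $\N=\bigsqcup_{n}\boldsymbol{f}^{-1}(\{a_{n}\})\sqcup\boldsymbol{f}^{-1}(\R\setminus S)$ then forces $d(\boldsymbol{f}^{-1}(\R\setminus S))=\nu_{c}(\R)>0$.

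The main obstacle, and the heart of the argument, is to exhibit a Borel set $B\subset\R\setminus S$ for which $d(\boldsymbol{f}^{-1}(B))$ does not exist, which will contradict the assumption that $d$ is defined on all of $\mathcal{F}_{\boldsymbol{f}}$. Exploiting that $\nu_{c}$ is non-atomic, I perform a standard diagonal construction. Fix $\epsilon<\nu_{c}(\R)/4$, choose a super-exponentially growing sequence $M_{1}<M_{2}<\cdots$, and take $B:=\bigsqcup_{k}B_{2k}$, where each $B_{2k}$ is a finite union of tiny open intervals centered at those values $y_{n}:=f(x_{n})$, $n\in(M_{2k-1},M_{2k}]$, whose distance from $S$ exceeds $1/k$, with radii small enough that $B_{2k}$ is $\nu$-Jordan and $\nu(B_{2k})\le\epsilon\,2^{-k}$. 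By construction, for each $k$ a positive fraction (at least $\nu_{c}(\R)/2$ for large $k$) of the indices $n\in(M_{2k-1},M_{2k}]$ satisfy $y_{n}\in B_{2k}\subset B$, so the upper density of $\boldsymbol{f}^{-1}(B)$ is bounded below by $\nu_{c}(\R)/2$. On the other hand, the finite $\nu$-Jordan union $\bigsqcup_{j\le k}B_{2j}$ has $\nu$-mass at most $\epsilon$, and \eqref{eq: plan identity} lets me choose $M_{2k+1}$ recursively so large that $|\boldsymbol{f}^{-1}(\bigsqcup_{j\le k}B_{2j})\cap[1,M_{2k+1}]|/M_{2k+1}\le 2\epsilon$; the contribution from later stages $B_{2j}$, $j>k$, is negligible since their defining centers $\{y_{n}\}_{n>M_{2k+1}}$ lie beyond the truncation. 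This forces the lower density of $\boldsymbol{f}^{-1}(B)$ to be at most $2\epsilon<\nu_{c}(\R)/2$. Upper and lower densities of $\boldsymbol{f}^{-1}(B)$ therefore differ, yielding the required contradiction; hence $\nu_{c}\equiv 0$, $m(\tilde{X})=1$, and $f|_{\tilde{X}}=\sum_{n}a_{n}\chi_{A_{n}}$, as claimed.
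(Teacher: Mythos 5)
Your setup up through the identification $d(\boldsymbol{f}^{-1}(B))=\nu(B)$ for $\nu$-Jordan sets, the extension to atoms via continuity from above, and the deduction $d(\boldsymbol{f}^{-1}(\R\setminus S))=\nu_c(\R)$ are all correct and well-argued. The fatal gap lies in the diagonal construction. You select the indices $n\in(M_{2k-1},M_{2k}]$ whose values $y_n=f(x_n)$ satisfy $\mathrm{dist}(y_n,S)>1/k$ and claim that for large $k$ these form a fraction at least $\nu_c(\R)/2$ of the block. By the Jordan-set identity you have just established, the asymptotic density of $\{n:\mathrm{dist}(y_n,S)>1/k\}$ is $\nu\bigl(\{y:\mathrm{dist}(y,S)>1/k\}\bigr)$, and as $k\to\infty$ this tends to $\nu(\R\setminus\overline{S})$, not to $\nu_c(\R)$. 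These two quantities can differ: if the atom set $S$ is dense in the support of $\nu_c$ (for instance $\nu_c$ uniform on $[0,1]$ and $S=\Q\cap[0,1]$ carrying atoms of total mass $1/2$), then $\nu(\R\setminus\overline{S})=0$, no index is ever selected, $B$ is essentially empty, and no contradiction materializes. Replacing the condition by the weaker ``$y_n\notin S$'' fixes the fraction count but then requires you to establish (i) that the tiny intervals around these $y_n$ can have total $\nu$-mass $\le\epsilon 2^{-k}$ even though $y_n$ may be a limit of atoms, and (ii) that later blocks $B_{2j}$, $j>k$, contribute negligibly to the truncated count $|\boldsymbol{f}^{-1}(B)\cap[1,M_{2k+1}]|$ despite possible collisions $y_m=y_n$ with $m\le M_{2k+1}<n$. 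The second point in particular is asserted (``their defining centers lie beyond the truncation'') but not proved, and it needs an argument showing that repeated values in the non-atomic part occur with density zero.

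The paper's proof is markedly more direct and avoids this machinery entirely. It observes that $\mathcal{F}_{\boldsymbol{f}}$ has disjoint atoms $\boldsymbol{A}_j=\boldsymbol{f}^{-1}(t_j)$ indexed by the distinct attained values $t_j$, so $\sigma$-additivity of $d$ gives $\sum_j d(\boldsymbol{A}_j)=1$ directly, with no contrapositive and no failure set. It then bounds $d(\boldsymbol{A}_j)\le m(A_j)$ where $A_j=f^{-1}(t_j)$, using Riemann integrability to conclude $m(\overline{A_j}\setminus A_j)=0$ and a Urysohn majorant of $\chi_{\overline{A_j}}$ together with equidistribution. Summing gives $m(\tilde{X})=\sum_j m(A_j)\ge 1$, which is the conclusion. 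You should either close the gaps in your diagonal argument as indicated above, or follow the much shorter direct route: bound each $d(\boldsymbol{A}_j)$ from above by $m(A_j)$ and use that the densities of the atoms of $\mathcal{F}_{\boldsymbol{f}}$ sum to one.
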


As an immediate corollary:
\begin{cor}
\label{cor: Not an RV}If $f$ is continuous and non-constant on some
open set, then $d$ is not a measure on $\N$ with $\mathcal{F}_{\boldsymbol{f}}$.
\end{cor}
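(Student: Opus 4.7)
The plan is to prove this corollary by contraposition using Proposition~\ref{prop: random variables no continuous} together with a Baire category argument. Assume $d$ \emph{is} a probability measure on $(\N,\mathcal{F}_{\boldsymbol{f}})$. By Proposition~\ref{prop: random variables no continuous}, there is a disjoint collection of Borel sets $\{A_n\}_{n\in\N}$ with $\tilde{X}=\sqcup_n A_n$ of full measure $m(\tilde{X})=1$, and real numbers $a_n$, such that $f|_{\tilde{X}}=\sum_n a_n\chi_{A_n}$. I will derive a contradiction with $f$ being continuous and non-constant on some open set $U\subset X$ (working, as in the applications, in the regime where $m$ is strictly positive on nonempty open sets; if $m(U)=0$ the statement is vacuous).

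First I would observe that $m(U\cap\tilde{X})=m(U)>0$ since $m(X\setminus\tilde{X})=0$, and so $U\cap\tilde{X}$ is dense in $U$ (any nonempty open subset of $U$ has positive $m$-measure, hence meets $\tilde{X}$). For each $n$ set $F_n:=f^{-1}(\{a_n\})\cap U$. By continuity of $f|_U$, each $F_n$ is relatively closed in $U$, and the inclusion $A_n\cap U\subset F_n$ yields
\begin{equation*}
\bigcup_{n\in\N}F_n\ \supset\ \tilde{X}\cap U\ \text{(dense in $U$)}.
\end{equation*}
Thus $\bigcup_n F_n$ is a countable union of closed-in-$U$ sets whose union is dense in $U$.

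Next I would invoke the Baire category theorem: the open set $U$ in the compact metric space $X$ is itself a Baire space, so at least one of the closed sets $F_{n_0}$ fails to be nowhere dense in $U$. Since $F_{n_0}$ is closed in $U$, this forces $F_{n_0}$ to have nonempty interior, i.e.\ there is a nonempty open $W\subset U$ on which $f\equiv a_{n_0}$. This directly contradicts the hypothesis that $f$ is non-constant on $U$, read (as in the intended application to the real-analytic function $\boldsymbol{\rho}_v$ in Remark~\ref{rem: rho v is not actually an RV}) in the strong sense that $f$ is not constant on any nonempty open subset of $U$. If one prefers to work only with the weaker literal hypothesis, iterating the Baire argument inside every nonempty open $V\subset U$ produces a dense open subset $G\subset U$ on which $f$ is locally constant with values in the countable set $\{a_n\}$; combined with continuity of $f$ on $U$ (and path-connectedness of a suitably chosen component) and the intermediate value theorem, this still forces $f(U)$ into $\overline{\{a_n\}}$ in a way that rules out $f$ being genuinely varying on $U$ in the analytic/rigid setting where the corollary is actually applied.

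The main obstacle is really this interpretive one: in total generality, a continuous function that is non-constant on an open set can still satisfy $f=\sum a_n\chi_{A_n}$ almost everywhere (e.g.\ the Cantor staircase on $[0,1]$ with Lebesgue measure), so some strengthening of ``non-constant'' is needed for the corollary to bite. In the application in Remark~\ref{rem: rho v is not actually an RV}, $\boldsymbol{\rho}_v$ is real analytic on the connected components of $\Sigma_{\mathcal{G}}$, so being non-constant on a connected component automatically precludes constancy on any nonempty open subset (Lemma~\ref{lem: real analytic lemma}), and the Baire step above immediately yields the contradiction. Beyond this clarification, the argument is routine, so I expect no further technical difficulty.
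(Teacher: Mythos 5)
Your diagnosis of the hypothesis is the most valuable part of the proposal, and it is correct: the paper offers nothing beyond the word ``immediate'', and the implication it would need --- continuous and non-constant on an open set implies not a countable step function up to measure zero --- is false in general. The Cantor staircase on $[0,1]$ with Lebesgue measure is continuous, non-constant on $(0,1)$, yet coincides with a countable step function off a null set, so the contrapositive of Proposition~\ref{prop: random variables no continuous} gives nothing; the corollary only ``bites'' under the strengthened reading you identify (not constant on any nonempty open subset of a positive-measure open set, with $m$ strictly positive on open sets), which is what holds in the intended application to $\boldsymbol{\rho}_v$.

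However, the mechanism you propose for the strengthened statement fails at the Baire step. From $\bigcup_n F_n\supseteq\tilde{X}\cap U$ you know only that this countable union of relatively closed sets is dense in $U$ and has full $m$-measure in $U$; neither property implies that some $F_{n_0}$ is somewhere dense. A countable union of closed nowhere dense sets can be dense ($\Q$ in $\R$) and can even have full measure (nested fat Cantor sets), and a null set can be comeager, so Baire's theorem produces no contradiction unless you actually have $\bigcup_n F_n=U$, which you do not. The same defect propagates to the ``iterate over every open $V\subset U$'' variant. The argument that does close --- and the one the paper implicitly relies on in Remark~\ref{rem: rho v is not actually an RV} --- is measure-theoretic rather than categorical: for $f=\boldsymbol{\rho}_v$ real analytic and non-constant on a connected component $M$ of $\Sigma_{\G}$ with $\mu_{\lv}\left(M\right)>0$, Lemma~\ref{lem: real analytic lemma} gives $m\left(f^{-1}\left(a\right)\cap M\right)=0$ for every value $a$, hence $m\left(\tilde{X}\cap M\right)\le\sum_{n}m\left(f^{-1}\left(a_{n}\right)\cap M\right)=0<m\left(M\right)$, contradicting $m\left(\tilde{X}\right)=1$. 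I recommend replacing the Baire step by this level-set estimate, or, in the general topological setting, adding to the corollary the hypothesis that every level set of $f$ in $U$ is $m$-null.
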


Let us now prove Proposition \ref{prop: random variables no continuous}:
\begin{proof}
Denote the values of $\boldsymbol{f}$ by $t_{n}:=\boldsymbol{f}\left(n\right)$.
As there might be repetitions, let $J\subset\N$ be the set without
repetitions: 
\[
J:=\set{n\in\N}{\forall j<n\,\,t_{n}\ne t_{j}}.
\]
Define the index sets 
\[
\forall j\in J\,\,\,\boldsymbol{A}_{j}:=\set{n\in\N}{t_{n}=t_{j}}.
\]
Notice that $\mathcal{F}_{\boldsymbol{f}}$ is generated by these
$\boldsymbol{A}_{j}'s$ and as they are disjoint by definition, then
every set in $\mathcal{F}_{\boldsymbol{f}}$ is a countable union
of such $\boldsymbol{A}_{j}'s$. As $d$ was assumed to be a probability
measure on $\mathcal{F}_{\boldsymbol{f}}$, then every $\boldsymbol{A}_{j}$
has density and $\sum_{j\in J}d\left(\boldsymbol{A}_{j}\right)=1.$
We might restrict this sum to a smaller set $\tilde{J}\subset J$
defined by 
\[
\tilde{J}:=\set{j\in J}{d\left(\boldsymbol{A}_{j}\right)>0},
\]
such that 
\begin{equation}
\sum_{j\in\tilde{J}}d\left(\boldsymbol{A}_{j}\right)=1.
\end{equation}
In particular $\tilde{J}\ne\emptyset$. Let us define the corresponding
level sets of $f$ by 
\[
A_{j}:=f^{-1}\left(t_{j}\right)=\set{x\in X}{f\left(x\right)=t_{j}},
\]
and let 
\[
\tilde{X}:=\cup_{j\in\tilde{J}}A_{j}.
\]
Therefore, $A_{j}$ and $\boldsymbol{A}_{j}$ are related through
$\left\{ x_{n}\right\} _{n=1}^{\infty}$ as 
\begin{align*}
\boldsymbol{A}_{j} & =\set{n\in\N}{x_{n}\in A_{j}},\,\,\,\mathrm{and}\\
\left\{ x_{n}\right\} _{n=1}^{\infty} & \subset\tilde{X}.
\end{align*}
As the $A_{j}$'s are disjoint by definition, so $m\left(\tilde{X}\right)=\sum_{j\in\tilde{J}}m\left(A_{j}\right)$,
it is left to show that $m\left(\tilde{X}\right)=1$ to conclude the
proof. Since $m$ is a probability measure, it will be enough to prove
that $m\left(\tilde{X}\right)\ge1$. 

Let $j_{0}\in\tilde{J}$. Since $A_{j_{0}}$ is a level set, then the
points in $\overline{A_{j_{0}}}\setminus A_{j_{0}}$ are discontinuity
points of $f$. As $f$ is Riemann integrable, then $m\left(\overline{A_{j_{0}}}\setminus A_{j_{0}}\right)=0$,
namely $m\left(A_{j_{0}}\right)=m\left(\overline{A_{j_{0}}}\right).$
As $m$ is regular, then for any $\epsilon>0$ there exists an open
set $O_{\epsilon}$ such that $\overline{A_{j_{0}}}\subset O_{\epsilon}$
and $m\left(O_{\epsilon}\setminus\overline{A_{j_{0}}}\right)<\epsilon$.
Let $h_{\epsilon}:X\rightarrow\left[0,1\right]$ be a continuous function
supported inside $O_{\epsilon}$ and such that the restriction of
$h_{\epsilon}$ to $\overline{A_{j_{0}}}$ is constant $h_{\epsilon}|_{\overline{A_{j_{0}}}}\equiv1$.
Such a function exists by Uryson's Lemma. Then 
\[
0\le\chi_{\boldsymbol{A}_{j_{0}}}\le h_{\epsilon},
\]
and so 
\[
d\left(\boldsymbol{A}_{j_{0}}\right)=\lim_{N\rightarrow\infty}\frac{1}{N}\sum_{n\le N}\chi_{\boldsymbol{A}_{j_{0}}}\left(x_{n}\right)\le\lim_{N\rightarrow\infty}\frac{1}{N}\sum_{n\le N}h_{\epsilon}\left(x_{n}\right).
\]
As $h_{\epsilon}$ is continuous and $\left\{ x_{n}\right\} _{n\in\N}$
is equidistributed, then the RHS is equal to $\int_{X}h_{\epsilon}dm$.
Since $0\le h_{\epsilon}\le1$, is supported on $O_{\epsilon}$ and
$\int_{\overline{A_{j_{0}}}}h_{\epsilon}dm=m\left(\overline{A_{j_{0}}}\right)=m\left(A_{j_{0}}\right)$
then, 
\[
d\left(\boldsymbol{A}_{j_{0}}\right)\le\int_{X}h_{\epsilon}dm=\int_{\overline{A_{j_{0}}}}h_{\epsilon}dm+\int_{O_{\epsilon}\setminus\overline{A_{j_{0}}}}h_{\epsilon}dm<m\left(A_{j_{0}}\right)+\epsilon.
\]
Taking $\epsilon\rightarrow0$ we get that 
\[
d\left(\boldsymbol{A}_{j_{0}}\right)\le m\left(A_{j_{0}}\right).
\]
As this is true for any $j\in\tilde{J}$, and $\sum_{j\in\tilde{J}}d\left(\boldsymbol{A}_{j}\right)=1$,
then 
\[
m\left(\tilde{X}\right)=\sum_{j\in\tilde{J}}m\left(A_{j}\right)\ge\sum_{j\in\tilde{J}}d\left(\boldsymbol{A}_{j}\right)=1.
\]
Therefore, $m\left(\tilde{X}\right)=1$ and we are done, as $\tilde{X}=\sqcup_{j\in\tilde{J}}A_{j}$
and $f|_{\tilde{X}}=\sum_{j\in\tilde{J}}t_{j}\chi_{A_{j}}$.
\end{proof}

\newpage{}
\section{\label{sec: App Gluing and contracting}Gluing vertices and contracting
edges}

There are many results on operations on quantum graphs such as gluing
vertices and contracting edges, but we will not present them here.
For example, gluing vertices among other operations called ``surgery''
operations, can be found in \cite{berkolaiko2019surgery}, a recent
work which combines all of these results into a ``surgeons toolkit''
as the authors call it. Another important work is \cite{BerLatSuk19}
where the limit of operators on quantum graphs with shrinking edges
(the continuous contraction of an edge) is analyzed using tools from
symplectic geometry.

We will only provide the very specific results needed for this section
and for completeness we will prove them.

We describe the gluing process as follows. Let $\Gamma$ be a graph
with boundary, and consider a boundary vertex $v\in\partial\Gamma$
and an interior vertex $u\in\V_{in}$. We define $\tilde{\Gamma}$,
the graph obtained by gluing $v$ and $u$, using a new vertex $\tilde{v}$
such that $\E_{\tilde{v}}=\E_{u}\cup\E_{v}$. Notice that $\deg{\tilde{v}}=\deg u+1>3$
since $u\in\V_{in}$. If $\Gamma_{\lv}$ is a standard graph, then
$\tilde{\Gamma}_{\lv}$ has the same edge lengths, and we consider
the two as having the same function spaces, but with different vertex
conditions. Let $e$ be the edge connected to $v$. The vertex conditions
on $v$ and $u$ are 
\begin{align}
\partial_{e}f\left(v\right) & =0,\\
\forall e',e''\in\E_{u}\,\,\,f|_{e'}\left(u\right) & =f|_{e''}\left(u\right),\,\,\mathrm{and}\\
\sum_{e'\in\E_{u}}\partial_{e'}f\left(u\right) & =0.
\end{align}
While the vertex conditions on $\tilde{v}$ can be written as 
\begin{align}
\forall e'\in\E_{u}\,\,\,f|_{e'}\left(u\right)= & f|_{e}\left(v\right),\,\,\mathrm{and}\\
\sum_{e'\in\E_{u}}\partial_{e'}f\left(u\right)= & -\partial_{e}f\left(v\right).
\end{align}
So it is clear that 
\begin{align}
Eig\left(\Gamma_{\lv},k^{2}\right)\cap Eig\left(\tilde{\Gamma}_{\lv},k^{2}\right)= & \set{f\in Eig\left(\Gamma_{\lv},k^{2}\right)}{f\left(u\right)=f\left(v\right)}\\
= & \set{f\in Eig\left(\tilde{\Gamma}_{\lv},k^{2}\right)}{\partial_{e}f\left(\tilde{v}\right)=0}.
\end{align}
The following is immediate:
\begin{lem}
\label{lem: gluing vertices}Let $\Gamma$ be a graph with $v\in\partial\Gamma$,
$e$ connected to $v$ and $u\in\V_{in}$. Let $\tilde{\Gamma}$ be
the graph obtained by gluing $v$ and $u$, denoting the new vertex
$\tilde{v}$. Then, 
\begin{align*}
\Sigma^{reg}\cap\tilde{\Sigma} & =\set{\kv\in\Sigma^{reg}}{f_{\kv}\left(v\right)=f_{\kv}\left(u\right)},\,\,a\mathrm{nd}\\
\Sigma\cap\tilde{\Sigma}^{reg} & =\set{\kv\in\tilde{\Sigma}^{reg}}{\partial_{e}f_{\kv}\left(\tilde{v}\right)=0}.
\end{align*}
Where the decorated sets relate to $\tilde{\Gamma}$ and the non decorated
to $\Gamma$. 
\end{lem}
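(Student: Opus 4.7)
The plan is to rewrite each containment via a Dirichlet-to-Neumann analysis at the pair $\{v,u\}$, combined with the vertex-condition identities
\[
Eig(\Gamma_\kv,1)\cap Eig(\tilde\Gamma_\kv,1)=\set{f\in Eig(\Gamma_\kv,1)}{f(v)=f(u)}=\set{g\in Eig(\tilde\Gamma_\kv,1)}{\partial_e g(\tilde v)=0}
\]
stated just before the lemma. The $\supseteq$ direction of each equation is a direct reading. For the first, if $\kv\in\Sigma^{reg}$ and $f_\kv(v)=f_\kv(u)$, the first equality above places $f_\kv$ in the intersection, hence in $Eig(\tilde\Gamma_\kv,1)$, so $\kv\in\tilde\Sigma$. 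For the second, if $\kv\in\tilde\Sigma^{reg}$ and $\partial_e f_\kv(\tilde v)=0$, the second equality places $f_\kv$ in the intersection, hence in $Eig(\Gamma_\kv,1)$, so $\kv\in\Sigma$.

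For the two $\subseteq$ directions I would introduce the auxiliary space $\mathcal{E}_0$ of functions on $\Gamma_\kv$ that satisfy the edgewise ODE at $k^2=1$, Neumann at every vertex of $\V\setminus\{v,u\}$, and continuity at $u$. Under the generic assumption that the trace map $\mathcal{E}_0\ni f\mapsto(f(v),f(u))$ is injective, one obtains a $2\times 2$ Dirichlet-to-Neumann matrix $\Lambda$ by
\[
\Lambda\begin{pmatrix}f(v)\\ f(u)\end{pmatrix}=\begin{pmatrix}\partial_e f(v)\\ \sum_{e'\in\E_u}\partial_{e'}f(u)\end{pmatrix}.
\]
Applying Green's identity to a pair of functions in $\mathcal{E}_0$ and using that the Neumann vertices outside $\{v,u\}$ contribute zero to the boundary term, one finds that $\Lambda$ is symmetric.

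The symmetry of $\Lambda$ is the crux. Translating the conditions of the lemma into the trace picture gives $\kv\in\Sigma$ iff $\det\Lambda=0$, $\kv\in\Sigma^{reg}$ iff $\mathrm{rank}\,\Lambda=1$, $\kv\in\tilde\Sigma$ iff $(1,1)\,\Lambda\,(1,1)^{T}=0$, and $f_\kv(v)=f_\kv(u)$ iff $(1,1)^{T}\in\ker\Lambda$. Writing a symmetric rank-one $\Lambda$ as $\mu\,\hat w\hat w^{T}$ with $\mu\ne0$ gives $(1,1)\,\Lambda\,(1,1)^{T}=\mu(w_{1}+w_{2})^{2}$, so the vanishing of $(1,1)\,\Lambda\,(1,1)^T$ is equivalent to $(1,1)\perp\hat w$, i.e.\ to $(1,1)^T\in\ker\Lambda$. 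Both $\subseteq$ inclusions follow immediately, the second by simply interchanging the roles of $Eig(\Gamma_\kv,1)$ and $Eig(\tilde\Gamma_\kv,1)$ in the algebraic identity. The main obstacle will be the non-generic locus where the trace map $\mathcal{E}_0\to\R^{2}$ fails to be injective, i.e.\ where $\mathcal{E}_0$ contains a nonzero function vanishing at both $v$ and $u$; this lies in a real-analytic subvariety of positive codimension in $\Sigma$, and I would handle it either by continuity from nearby generic points (the two equations being closed set-equalities) or by a direct case analysis using the explicit structure of such trace-vanishing eigenfunctions.
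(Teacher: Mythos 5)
Your identification of the symmetry of a Dirichlet-to-Neumann-type object as the crux is correct, and it is exactly the content that the paper leaves implicit: the paper calls the lemma ``immediate'' after displaying the eigenspace identity, but the two $\subseteq$ inclusions are not formal consequences of that identity alone — they require precisely the self-adjointness input you supply via Green's identity. The $\supseteq$ directions and the symmetric rank-one algebra in your second paragraph are all sound.

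The genuine gap is in your last paragraph. Requiring the trace map $\mathcal{E}_0\ni f\mapsto(f(v),f(u))$ to be a bijection onto $\R^2$ creates an exceptional locus on which $\Lambda$ is not defined, and neither proposed fix is actually carried out. The continuity fix in particular is shaky: $\Sigma^{reg}\cap\tilde\Sigma$ is itself a positive-codimension subvariety of $\Sigma^{reg}$, so the generic locus $G$ need not meet it densely, and there is no a priori reason a non-generic point of $\Sigma^{reg}\cap\tilde\Sigma$ is a limit of points of $(\Sigma^{reg}\cap\tilde\Sigma)\cap G$. Both sides of the asserted equalities are closed in $\Sigma^{reg}$ (resp.\ $\tilde\Sigma^{reg}$), so closedness alone does not rescue the argument; indeed $S_2\subseteq S_1$ with $S_1\cap G=S_2\cap G$ does not force $S_1=S_2$ in general (think $S_1=\{0\}$, $S_2=\emptyset$, $G=\R\setminus\{0\}$). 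Also, a small caveat on ``interchanging roles'': $\Gamma$ enters via $\ker\Lambda$ and $\tilde\Gamma$ via the value of the quadratic form at $(1,1)$, so the two conditions are not literally symmetric in $\Lambda$, though the computation does go through.

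The cleanest repair drops the $2\times2$ matrix entirely and applies Green's identity directly to the two eigenfunctions in play, which needs no non-degeneracy hypothesis. Set $A(f):=\partial_e f(v)$ and $B(f):=\sum_{e'\in\E_u}\partial_{e'}f(u)$. For any two real edgewise solutions $f,g$ of the ODE that satisfy Neumann at every vertex in $\V\setminus\{v,u\}$ and continuity at $u$, integration by parts gives
\begin{equation*}
g(v)A(f)+g(u)B(f)=f(v)A(g)+f(u)B(g).
\end{equation*}
For the first $\subseteq$: take $f=f_\kv$ (so $A(f)=B(f)=0$) and any nonzero $g\in Eig(\tilde{\Gamma}_{\kv},1)$ (so $g(v)=g(u)$ and $A(g)+B(g)=0$); the identity collapses to $\bigl(f_\kv(v)-f_\kv(u)\bigr)A(g)=0$. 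If $A(g)\neq 0$ you are done; if $A(g)=0$ then also $B(g)=0$, so $g\in Eig(\Gamma_{\kv},1)=\mathrm{span}(f_\kv)$, and $g(v)=g(u)$ again forces $f_\kv(v)=f_\kv(u)$. The second $\subseteq$ is the same computation with the roles of $f$ and $g$ exchanged, concluding $A(g)=0$ for the $\tilde{\Gamma}$ canonical eigenfunction $g$.
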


\begin{rem}
Although the canonical eigenfunctions should be decorated to resolve
ambiguity, we will not do that unless it is not clear from the context
to which regular part they belong. 
\end{rem}

Another situation we need to consider is the gluing of a boundary
vertex to an interior point. This can only be done on a metric graph
(unless we enforce a degree two vertex). Let $\Gamma_{\lv}$ be a
standard graph with $v\in\partial\Gamma$, $e$ connected to $v$
and let $e'$ be an edge of length $l_{e'}$, with arc-length parameterization
$x_{e'}\in\left[0,l_{e'}\right]$. Let $u$ be an interior point located
at $x_{e'}=l_{1}$ so it partition $e'$ into two edge $e_{1}$ and
$e_{2}$ of lengths $l_{1}$ and $l_{2}=l_{e'}-l_{1}$. We define
$\tilde{\Gamma}_{\tilde{\lv}}$ as the graph obtained by gluing $v$
to $u$, denoting the new vertex by $\tilde{v}$ so that $\deg{\tilde{v}}=3$
with $e,e_{1},e_{2}$ attached to it. In this case, $\tilde{\Gamma}$
has $E+1$ edges, and therefore different edge lengths. We denote
$\tilde{\lv}=\left(l_{e},l_{1},l_{2},...\right)$ and $\lv=\left(l_{e},l_{1}+l_{2},...\right)$.
In such case, a similar analysis of vertex conditions gives:
\begin{lem}
\label{lem: deg3 construction}Let $\Gamma$ and $\tilde{\Gamma}$
as above, and define $T\left(\kappa_{e},\kappa_{1},\kappa_{2},...\right)=\left(\kappa_{e},\kappa_{1}+\kappa_{2},...\right)$.
Then,
\end{lem}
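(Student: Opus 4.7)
The plan is to establish a natural bijection between the eigenspaces of $\Gamma_{T(\kv)}$ and $\tilde\Gamma_{\kv}$ under the ``splitting/concatenation'' operation at the interior point $u$, and to show that the vertex conditions on the two graphs are compatible precisely when $\partial_e \tilde f_{\kv}(\tilde v) = 0$ (equivalently, $f_{T(\kv)}(v) = f_{T(\kv)}(u)$). The statement should assert that $T(\kv) \in \Sigma$ iff there exists a nonzero $\tilde f \in Eig(\tilde\Gamma_{\kv},1)$ with $\partial_e \tilde f(\tilde v) = 0$, and that in the regular regime the canonical eigenfunctions $\tilde f_{\kv}$ and $f_{T(\kv)}$ agree on the shared edges and satisfy $\tilde f_{\kv}(\tilde v) = f_{T(\kv)}(v) = f_{T(\kv)}(u)$.

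First, I would recast the vertex conditions side by side. At $\tilde v \in \tilde\Gamma$, the Neumann condition requires $\tilde f|_e(\tilde v) = \tilde f|_{e_1}(\tilde v) = \tilde f|_{e_2}(\tilde v)$ and $\partial_e \tilde f(\tilde v) + \partial_{e_1}\tilde f(\tilde v) + \partial_{e_2}\tilde f(\tilde v) = 0$. On $\Gamma$, the boundary vertex $v$ imposes only $\partial_e f(v) = 0$, while $u$ is an interior point of $e'$ so $f|_{e'}$ must simply be $C^1$ there. The key observation is that imposing $\partial_e \tilde f(\tilde v) = 0$ collapses the Neumann condition at $\tilde v$ to $\partial_{e_1}\tilde f(\tilde v) + \partial_{e_2}\tilde f(\tilde v) = 0$, which is exactly the smoothness condition at $u$ in the concatenated edge $e'$, while $\partial_e \tilde f(\tilde v) = 0$ is exactly the boundary condition at $v$.

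Second, I would carry out the explicit construction. Given $\tilde f \in Eig(\tilde\Gamma_{\kv},1)$ with $\partial_e \tilde f(\tilde v)=0$, define $f$ on $\Gamma_{T(\kv)}$ by setting $f|_e := \tilde f|_e$, and on $e'$ (of length $\kappa_1+\kappa_2$, parametrized so that $x_{e'}=\kappa_1$ corresponds to $u$) set $f|_{e'}(x_{e'}) := \tilde f|_{e_1}(x_{e'})$ for $x_{e'}\in[0,\kappa_1]$ and $f|_{e'}(x_{e'}) := \tilde f|_{e_2}(x_{e'}-\kappa_1)$ for $x_{e'}\in[\kappa_1,\kappa_1+\kappa_2]$. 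Continuity at $u$ follows from continuity at $\tilde v$; $C^1$-smoothness at $u$ follows from the residual Neumann relation; Neumann at $v$ is inherited from the hypothesis; on every other vertex the graphs agree verbatim. So $f\in Eig(\Gamma_{T(\kv)},1)$ with $f(v)=f(u)=\tilde f(\tilde v)$. Reversing the same construction takes an $f\in Eig(\Gamma_{T(\kv)},1)$ with $f(v)=f(u)$ and produces the desired $\tilde f$. This gives the bijection
\[
\bigl\{f\in Eig(\Gamma_{T(\kv)},1) : f(v)=f(u)\bigr\} \;\cong\; \bigl\{\tilde f\in Eig(\tilde\Gamma_{\kv},1) : \partial_e \tilde f(\tilde v)=0\bigr\},
\]
from which the claim that $T(\kv)\in\Sigma$ iff the right-hand side is nontrivial is immediate.

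Third, I would transfer the regularity and canonicity information. The map above is a linear bijection that preserves the amplitudes pairs $(a_{e'},a_{\hat{e'}})$ on the concatenated edge (since the sine/cosine amplitudes on $e'$ starting at $v$ are literally those of $\tilde f|_{e_1}$ near $\tilde v$), and it preserves amplitudes on every other edge verbatim. Hence norms, realness, and the normalization of Definition~\ref{def: normalization} are preserved, so generic eigenfunctions map to generic ones. Assuming $\kv \in \tilde\Sigma^{reg}$ and $\partial_e \tilde f_{\kv}(\tilde v)=0$, the resulting $f$ is a real, normalized element of $Eig(\Gamma_{T(\kv)},1)$, and by uniqueness (up to sign) of the canonical eigenfunction, $f=\pm f_{T(\kv)}$; the same for the converse direction.

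The main obstacle, which is modest, will be bookkeeping the parametrizations — ensuring the arc-length coordinate on $e'$ is consistent with the orientations chosen on $e_1$ and $e_2$ so that the concatenation produces a genuine $H^2$ function and so that the amplitudes vector in Definition~\ref{def: edge restriction notations} actually matches under the splitting. Once those orientation conventions are fixed, everything else is a direct verification of Neumann conditions plus a dimension count.
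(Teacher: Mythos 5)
Your construction of the bijection between $\{f\in Eig(\Gamma_{T(\tilde{\kv})},1) : f(v)=f(u)\}$ and $\{\tilde f\in Eig(\tilde{\Gamma}_{\tilde{\kv}},1) : \partial_e \tilde f(\tilde v)=0\}$ is correct, and it is the same vertex-condition bookkeeping the paper carries out for the companion gluing Lemma~\ref{lem: gluing vertices}; from it the ``if'' half of part~2 and the identification of canonical eigenfunctions in part~1 follow once $\partial_e f_{\tilde{\kv}}(\tilde v)=0$ is known. But the ``only if'' half --- that $T(\tilde{\kv})\in\Sigma$ forces $\partial_e f_{\tilde{\kv}}(\tilde v)=0$ --- is not ``immediate from the bijection,'' which is the phrase you use. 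The bijection controls only the subspace cut out by the compatibility condition $f(v)=f(u)$; it says nothing about whether $Eig(\Gamma_{T(\tilde{\kv})},1)$ can be nonzero while that subspace is trivial. A priori there could be a nonzero $g\in Eig(\Gamma_{T(\tilde{\kv})},1)$ with $g(v)\neq g(u)$ and nothing yet forcing $\partial_e f_{\tilde{\kv}}(\tilde v)=0$. Part~1 inherits the same issue, since concluding $f_{T(\tilde{\kv})}=\pm f_{\tilde{\kv}}$ under the hypotheses $\tilde{\kv}\in\tilde{\Sigma}^{reg}$, $T(\tilde{\kv})\in\Sigma^{reg}$ first requires knowing $\partial_e f_{\tilde{\kv}}(\tilde v)=0$.

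The missing step is a Green's-identity cross-pairing. Take any nonzero real $g\in Eig(\Gamma_{T(\tilde{\kv})},1)$ and the real canonical $f_{\tilde{\kv}}$; both solve $f''=-f$ on the common underlying metric space, so $\langle\Delta g,f_{\tilde{\kv}}\rangle-\langle g,\Delta f_{\tilde{\kv}}\rangle=0$, and the integration-by-parts boundary terms vanish at every vertex other than the glued point, where the Neumann conditions of $g$ at $v,u$ and of $f_{\tilde{\kv}}$ at $\tilde v$ reduce the boundary contribution to
\[
\bigl(g(u)-g(v)\bigr)\,\partial_e f_{\tilde{\kv}}(\tilde v)=0.
\]
If $\partial_e f_{\tilde{\kv}}(\tilde v)\neq 0$, then every such $g$ has $g(v)=g(u)$, hence corresponds under your bijection to a nonzero element of the one-dimensional $Eig(\tilde{\Gamma}_{\tilde{\kv}},1)$ with vanishing outgoing $e$-derivative at $\tilde v$, i.e.\ to a nonzero multiple of $f_{\tilde{\kv}}$ satisfying $\partial_e f_{\tilde{\kv}}(\tilde v)=0$ --- a contradiction. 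Adding some such argument where you currently write ``immediate'' would close the gap.
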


\begin{enumerate}
\item If $\tilde{\kv}=\left(\kappa_{e},\kappa_{1},\kappa_{2},...\right)\in\tilde{\Sigma}^{reg}$
and $T\left(\tilde{\kv}\right)\in\Sigma^{reg}$, then $f_{\kv}$ and
$f_{\tilde{\kv}}$ agree on their joint vertices, and if $t\in\left[0,2\pi\right]$
is such that $\left\{ t\right\} =\kappa_{1}$ (namely, $x_{e'}=t$
is the gluing point), then 
\[
f_{\kv}|_{e'}\left(t\right)=f_{\kv}\left(v\right)=f_{\tilde{\kv}}\left(\tilde{v}\right).
\]
\item If $\tilde{\kv}=\left(\kappa_{e},\kappa_{1},\kappa_{2},...\right)\in\tilde{\Sigma}^{reg}$
then $T\left(\tilde{\kv}\right)\in\Sigma$ if and only if $\partial_{e}f_{\tilde{\kv}}\left(\tilde{v}\right)=0$. 
\end{enumerate}
We may now discuss edge contraction. Let $\Gamma$ be a graph with
an edge $e$, which is not a loop, connecting two distinct vertices
$v_{1}$ and $v_{2}$. We define $\Gamma'$, the graph obtained by
contracting the edge $e$, as follows. We remove $e$ from $\Gamma$
and identify $v_{1}$ and $v_{2}$, denoting the new vertex by $v$.
Thus $\Gamma'$ has $V'=V-1$ vertices and $E'=E-1$ edges and the
same first Betti number. If $\Gamma_{\lv}$ is a standard graph, we
denote the new edge lengths (removing $e$) by $\lv_{\overline{e}}$
such that $\Gamma'_{\lv_{\overline{e}}}$ is a standard graph, and
we consider the restriction of functions $f\mapsto f|_{\Gamma'}$
as a linear map from $L^{2}\left(\Gamma_{\lv}\right)$ to $L^{2}\left(\Gamma'_{\lv_{\overline{e}}}\right)$.
\begin{lem}
\label{lem: contracting lemma-1}Let $\Gamma,e$ and $\Gamma'$ as
the above such that $e$ is not a loop. Consider the decomposition
$\kv=\left(\kappa_{e},\kv_{\overline{e}}\right)\in\T^{\E}=\T\times\T^{\E'}$
such that $\T^{\E'}$ is the characteristic torus of $\Gamma'$. Let
$\kv_{\overline{e}}\in\T^{\E'}$ and let $\kv=\left(2\pi,\kv_{\overline{e}}\right)\in\T^{\E}$,
then the restriction $f\mapsto f|_{\Gamma'}$ is a linear bijection
between $Eig\left(\Gamma_{\kv},1\right)$ and $Eig\left(\Gamma'_{\kv_{\overline{e}}},1\right)$
that preserve the values of functions on vertices. 
\end{lem}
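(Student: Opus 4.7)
The plan is to exhibit the restriction $\Phi\colon f\mapsto f|_{\Gamma'}$ as a linear bijection by constructing its inverse explicitly. The key observation is that $\kappa_e = 2\pi$ forces $l_e = 2\pi$, which is exactly one period of every solution of $f|_e'' = -f|_e$, so on $e$ any such function automatically takes the same value at both endpoints and the two outgoing derivatives at $v_1$ and $v_2$ are opposite.

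First I would verify that $\Phi$ is well-defined. Using the arc-length coordinate $x_e \in [0,2\pi]$ with $x_e = 0$ at $v_1$ and $x_e = 2\pi$ at $v_2$, write $f|_e(x_e) = A\cos(x_e) + B\sin(x_e)$. Then $f|_e(v_1) = f|_e(v_2) = A$, $\partial_e f(v_1) = B$ and $\partial_e f(v_2) = -f|_e'(2\pi) = -B$. Continuity of $f$ at $v_1$ and $v_2$ separately forces $f|_{e'}(v_1) = A = f|_{e''}(v_2)$ for all $e'\in\E_{v_1}\setminus\{e\}$ and $e''\in\E_{v_2}\setminus\{e\}$, which is the continuity of $f|_{\Gamma'}$ at the merged vertex $v$. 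Summing the two Neumann conditions of $f$ at $v_1$ and $v_2$ annihilates the $\pm B$ contributions and yields
\[
\sum_{e'\in\E_{v_1}\setminus\{e\}}\partial_{e'}f(v_1) + \sum_{e''\in\E_{v_2}\setminus\{e\}}\partial_{e''}f(v_2) = 0,
\]
which is precisely the Neumann condition at $v$ in $\Gamma'$. The eigenvalue equation and the Neumann conditions at all remaining vertices of $\Gamma'$ are trivially preserved under restriction, so $f|_{\Gamma'} \in Eig(\Gamma'_{\kv_{\overline{e}}},1)$.

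Next I would construct the inverse. Given $g \in Eig(\Gamma'_{\kv_{\overline{e}}},1)$, set $f|_{e'} := g|_{e'}$ for $e' \ne e$ and $f|_e(x_e) := g(v)\cos(x_e) + B\sin(x_e)$, choosing $B$ so that the two Neumann conditions at $v_1$ and $v_2$ both hold. The $v_1$-condition demands $B = -\sum_{e'\in\E_{v_1}\setminus\{e\}}\partial_{e'}g(v)$, the $v_2$-condition demands $B = \sum_{e''\in\E_{v_2}\setminus\{e\}}\partial_{e''}g(v)$, and the two prescriptions agree because $g$ satisfies the Neumann condition at $v$; continuity of $f$ at $v_1, v_2$ is automatic since $f|_e(v_1) = g(v) = f|_e(v_2)$ matches each $g|_{e'}(v)$. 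This produces $f \in Eig(\Gamma_\kv,1)$ with $\Phi(f) = g$, and conversely the recipe $A = f(v_1)$, $B = \partial_e f(v_1)$ recovers $f|_e$ uniquely from $\Phi(f)$, so the two maps are mutual inverses. Linearity is clear from the construction, and $f(v_1) = f(v_2) = g(v)$ establishes the preservation of vertex values.

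The only balance point requiring care, and the conceptual heart of the argument, is the matching step: the two Neumann conditions at $v_1$ and $v_2$ in $\Gamma$ carry exactly enough information to encode both (i) the single Neumann condition at $v$ in $\Gamma'$ and (ii) the free coefficient $B$ that determines $f|_e$. The full-period length $l_e = 2\pi$ is precisely what symmetrizes the $v_1 \leftrightarrow v_2$ data on $e$ so that this split is consistent; without it, the compatibility of the two Neumann conditions would fail.
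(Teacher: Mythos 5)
Your proof is correct and follows essentially the same route as the paper's: verify that restriction sends $Eig(\Gamma_{\kv},1)$ into $Eig(\Gamma'_{\kv_{\overline{e}}},1)$ by summing the two Neumann conditions at $v_1,v_2$, then build the inverse by extending $g$ across $e$ with $A = g(v)$ and the uniquely determined $B$. The only cosmetic difference is that you establish bijectivity by showing the two maps are mutual inverses, whereas the paper proves injectivity separately via the observation that an eigenfunction supported on a non-loop edge must vanish; both are equivalent and your version is slightly more streamlined.
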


\begin{rem}
In particular, if $\kv=\left(2\pi,\kv_{\overline{e}}\right)\in\Sigma^{reg}\iff\kv_{\overline{e}}\in\Sigma'^{reg}$,
in which case $\tr\left(f_{\kv}\right)$ restricted to $\Gamma'$
is equal to $\mathrm{\tr}\left(f_{\kv_{\overline{e}}}\right)$.
\end{rem}

\begin{proof}
Let $\Gamma,e$ and $\Gamma'$ be as above and denote the vertices
of $e$ by $v_{1},v_{2}$ and the identified vertex in $\Gamma'$
by $v$. Let $\kv_{\overline{e}}\in\T^{\E\setminus e}$ and let $\kv=\left(2\pi,\kv_{\overline{e}}\right)\in\T^{\E}$.
Let $f\in Eig\left(\Gamma_{\kv},1\right)$. Since $f|_{e}$ is $2\pi$
periodic, and $e$ has length $2\pi$, then 
\begin{align}
f\left(v_{1}\right) & =f\left(v_{2}\right)\label{eq: equal on vertices-1}\\
\partial_{e}f\left(v_{1}\right) & =-\partial_{e}f\left(v_{2}\right).\label{eq: derivatives on vertices-1}
\end{align}
To show that $f|_{\Gamma'}\in Eig\left(\Gamma'_{\kv_{\overline{e}}},1\right)$,
it is enough to show that if it satisfies Neumann condition on $v$.
The continuity at $v$ follows from the continuity of $f$ at $v_{1},v_{2}$
and (\ref{eq: equal on vertices-1}). Using the notation of $\E'_{v}$
as the edges in $\Gamma'$ connected to $v$ and since the construction
gives that $\E'_{v}=\E_{v_{1}}\cup\E_{v_{2}}\setminus\left\{ e\right\} $
it follows that 
\[
\sum_{e'\in\E_{v}}\partial_{e'}f|_{\Gamma'}\left(v\right)=\sum_{e'\in\E_{v_{1}}\setminus\left\{ e\right\} }\partial_{e'}f\left(v_{1}\right)+\sum_{e'\in\E_{v_{2}}\setminus\left\{ e\right\} }\partial_{e'}f\left(v_{2}\right).
\]
The Neumann conditions on $v_{1}$ and $v_{2}$ implies that $\sum_{e'\in\E_{v_{1}}\setminus\left\{ e\right\} }\partial_{e'}f\left(v_{1}\right)=-\partial_{e}f\left(v_{1}\right)$
and $\sum_{e'\in\E_{v_{2}}\setminus\left\{ e\right\} }\partial_{e'}f\left(v_{2}\right)=-\partial_{e}f\left(v_{2}\right)$.
Together with (\ref{eq: derivatives on vertices-1}), it follows that
\[
\sum_{e'\in\E_{v}}\partial_{e'}f|_{\Gamma'}\left(v\right)=-\partial_{e}f\left(v_{1}\right)-\partial_{e}f\left(v_{2}\right)=0.
\]
This proves that $f|_{\Gamma'}\in Eig\left(\Gamma'_{\kv_{\overline{e}}},1\right)$.
As the restriction map is linear, the map is injective if its kernel
is trivial. Let $f\in Eig\left(\Gamma_{\kv},1\right)$ such that $f|_{\Gamma'}\equiv0$.
Therefore $f$ is supported on $e$, but since $e$ is not a loop
then and eigenfunction cannot be supported on $e$ and therefore $f\equiv0$.
Hence the restriction map is injective. To show that it is onto let
$g\in Eig\left(\Gamma'_{\kv_{\overline{e}}},1\right)$ and denote
the two constants $A:=g\left(v\right)$ and $B=-\sum_{e'\in\E_{v_{1}}\setminus\left\{ e\right\} }\partial_{e'}g\left(v\right)$.
The Neumann condition of $g$ at $v$ implies that $B=\sum_{e'\in\E_{v_{2}}\setminus\left\{ e\right\} }\partial_{e'}g\left(v\right)$.
We may now define $f$ on $\Gamma$ by its restrictions 
\[
f|_{e'}\left(x_{e'}\right):=\begin{cases}
g|_{e'}\left(x_{e'}\right) & e'\ne e\\
A\cos\left(x_{e'}\right)+B\sin\left(x_{e'}\right) & e'=e
\end{cases},
\]
where for $e'=e$ we choose the coordinate such that $x_{e'}=0$ at
$v_{1}$ and $x_{e'}=2\pi$ at $v_{2}$. It follows that $f|_{\Gamma'}=g$
and it is left to prove that $f$ satisfies Neumann condition on $v_{1}$
and $v_{2}$. The continuity at $v_{1},v_{2}$ follows from $f|_{e}\left(0\right)=f|_{e}\left(2\pi\right)=A=g\left(v\right)$
and the continuity of $g$ at $v$. By the definition of $B$, the
derivatives of $f$ at $v_{1}$ satisfy 
\[
\sum_{e'\in\E_{v_{1}}\setminus\left\{ e\right\} }\partial_{e'}f\left(v_{1}\right)+\partial_{e}f\left(v_{1}\right)=\sum_{e'\in\E_{v_{1}}\setminus\left\{ e\right\} }\partial_{e'}g\left(v_{1}\right)+B=0,
\]
 and the derivatives at $v_{2}$ satisfy 
\[
\sum_{e'\in\E_{v_{2}}\setminus\left\{ e\right\} }\partial_{e'}f\left(v_{2}\right)+\partial_{e}f\left(v_{2}\right)=\sum_{e'\in\E_{v_{2}}\setminus\left\{ e\right\} }\partial_{e'}g\left(v_{2}\right)-B=0.
\]
This proves that $f\in Eig\left(\Gamma_{\kv},1\right)$ and therefore
the restriction map is onto. Clearly, restriction preserve vertex
values and derivatives, and we are done.
\end{proof}

\newpage{}
\section{\label{sec: Appendix examples}Secular manifolds for 3-edges graphs}

In this appendix we provide all examples of (allowed) graphs with
3 edges and their secular manifolds. The purpose of this appendix
is to provide some visual motivation for the definitions of different
parts of the secular manifold. For example, if a graph has loops, then
we emphasize $\Sigma_{\L}$ by a different color (blue). We also present
$Z_{0}$ alone for each graph with loops, so that the geometric meaning
of excluding loop eigenfunctions becomes apparent. 

There are 6 (allowed) graphs of 3 edges, when vertices of degree two
are prohibited:

\begin{figure}[H]
\includegraphics[width=0.35\paperwidth]{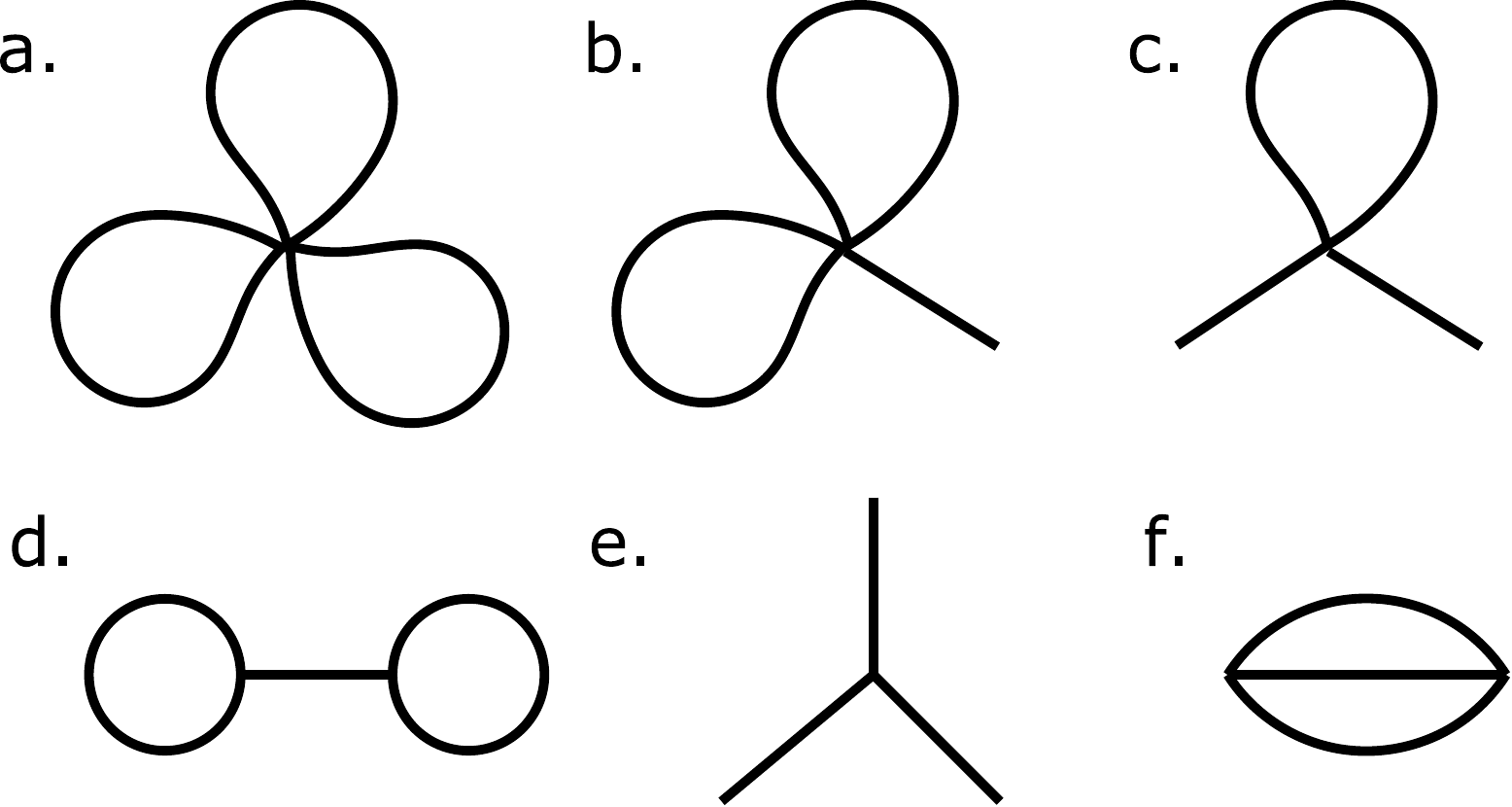}

\caption[All 3 edges graphs]{\label{fig: 3edges}Graphs of 3 edges. Their common names are:\protect \\
 a. 3-flower. b. $\left(2,1\right)$-stower. c. $\left(1,2\right)$-stower.
d. Dumbbell graph. e. 3-star. f. 3-mandarin. }
\end{figure}

For graphs a-d, which have loops, we will present both the secular
manifold $\Sigma$ and the main factor $Z_{0}$, with $\kv\in\left(-\pi,\pi\right)^{3}$
in order for the planes of $\Sigma_{\L}$ to be visible.

\begin{figure}[H]
\includegraphics[width=0.4\paperwidth]{sec30F}~~\includegraphics[width=0.4\paperwidth]{sec30Q}

\caption[ Secular manifold of 3-flower]{\label{fig: secman with loops 3flower} The secular manifold of the
3-flower. On the left, the secular manifold where $\Sigma_{\protect\L}$
is in blue and $Z_{0}$ is in orange. On the right, only $Z_{0}$
in orange.}
\end{figure}

\begin{figure}[H]
\includegraphics[width=0.4\paperwidth]{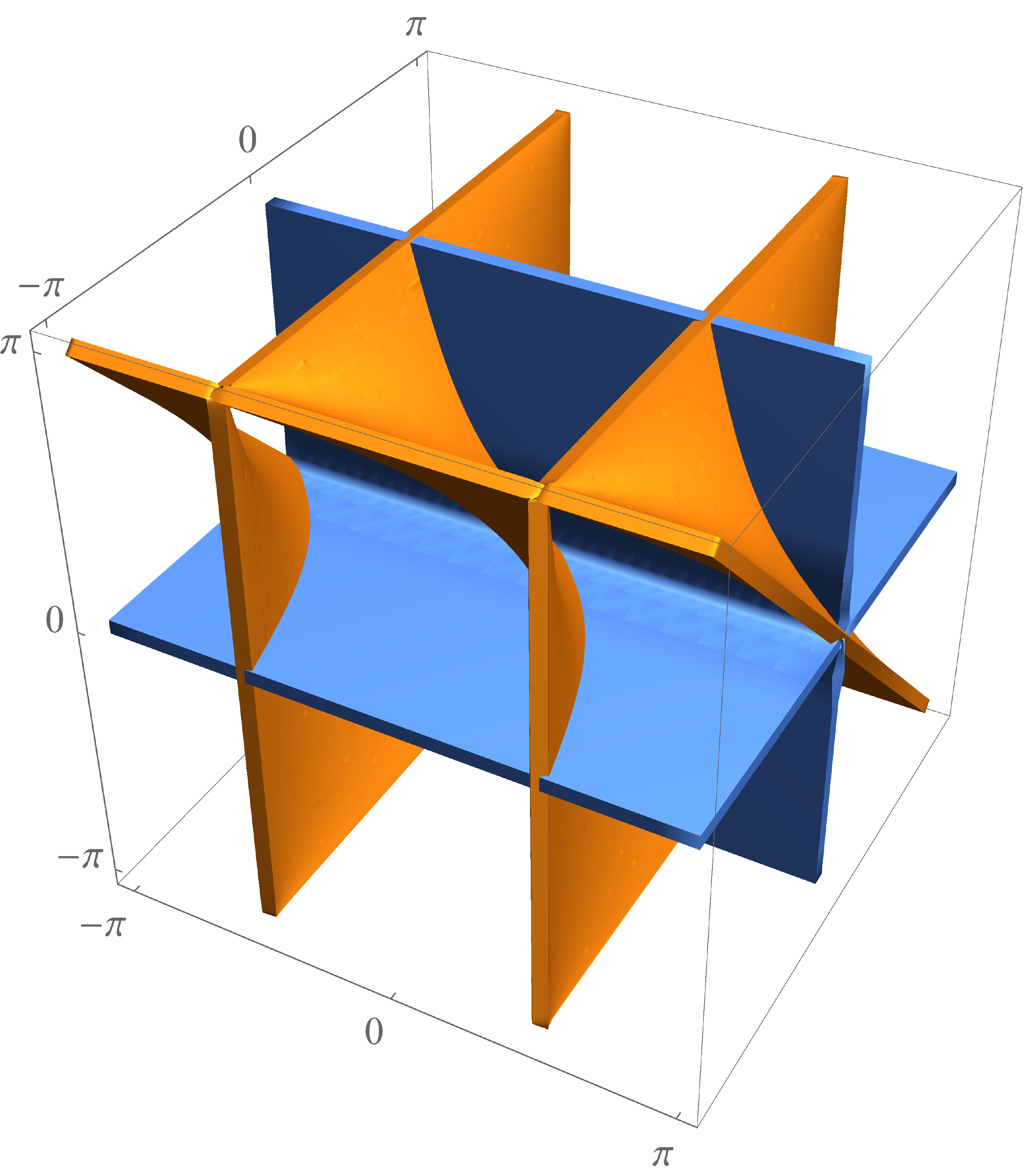}~~\includegraphics[width=0.4\paperwidth]{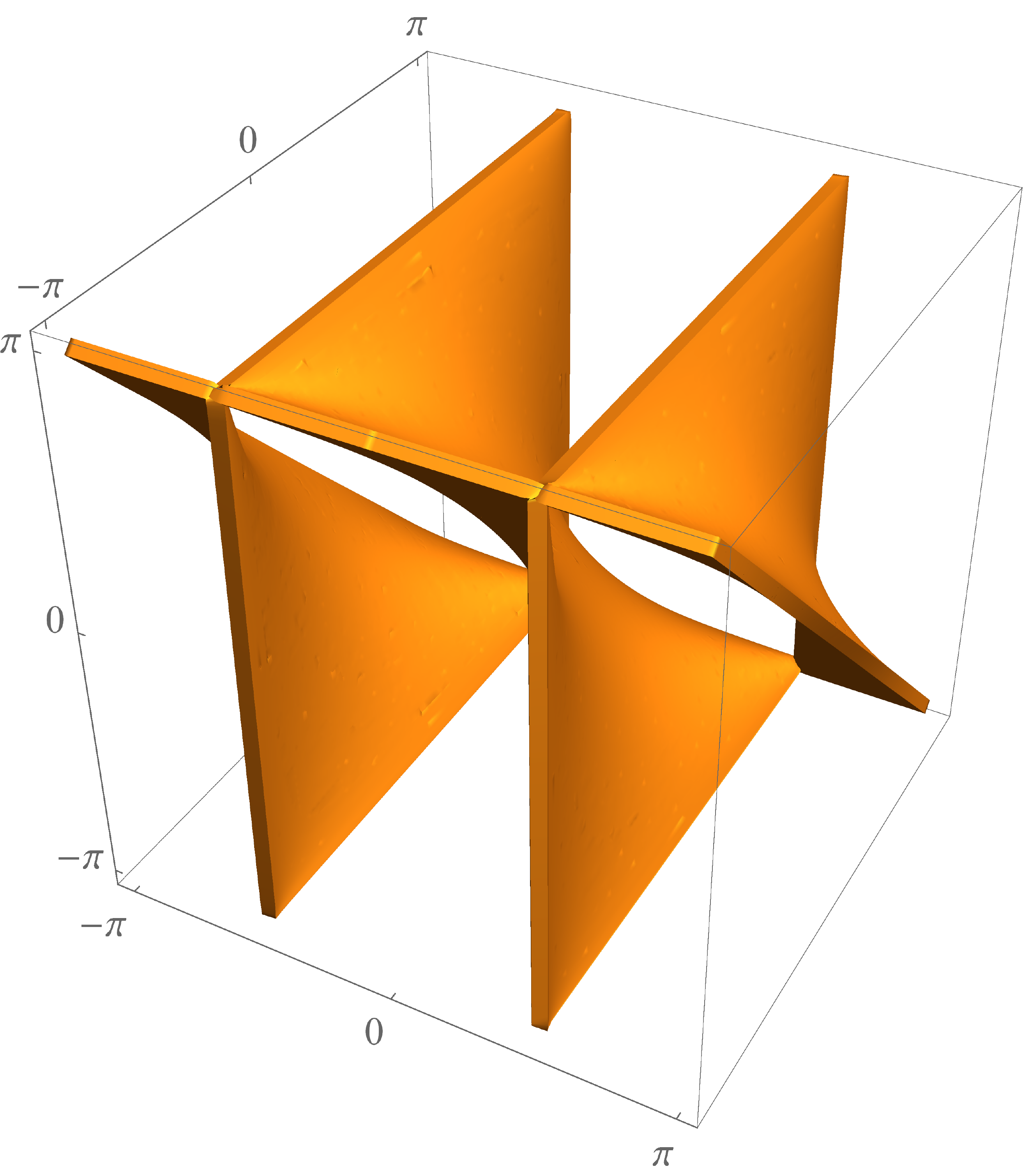}

\caption[ Secular manifold of (2,1)-stower]{\label{fig: secman with loops 2-1 stower} The secular manifold of
the $\left(2,1\right)$-stower. On the left, the secular manifold
where $\Sigma_{\protect\L}$ is in blue and $Z_{0}$ is in orange.
On the right, only $Z_{0}$ in orange.}
\end{figure}
\begin{figure}[H]
\includegraphics[width=0.4\paperwidth]{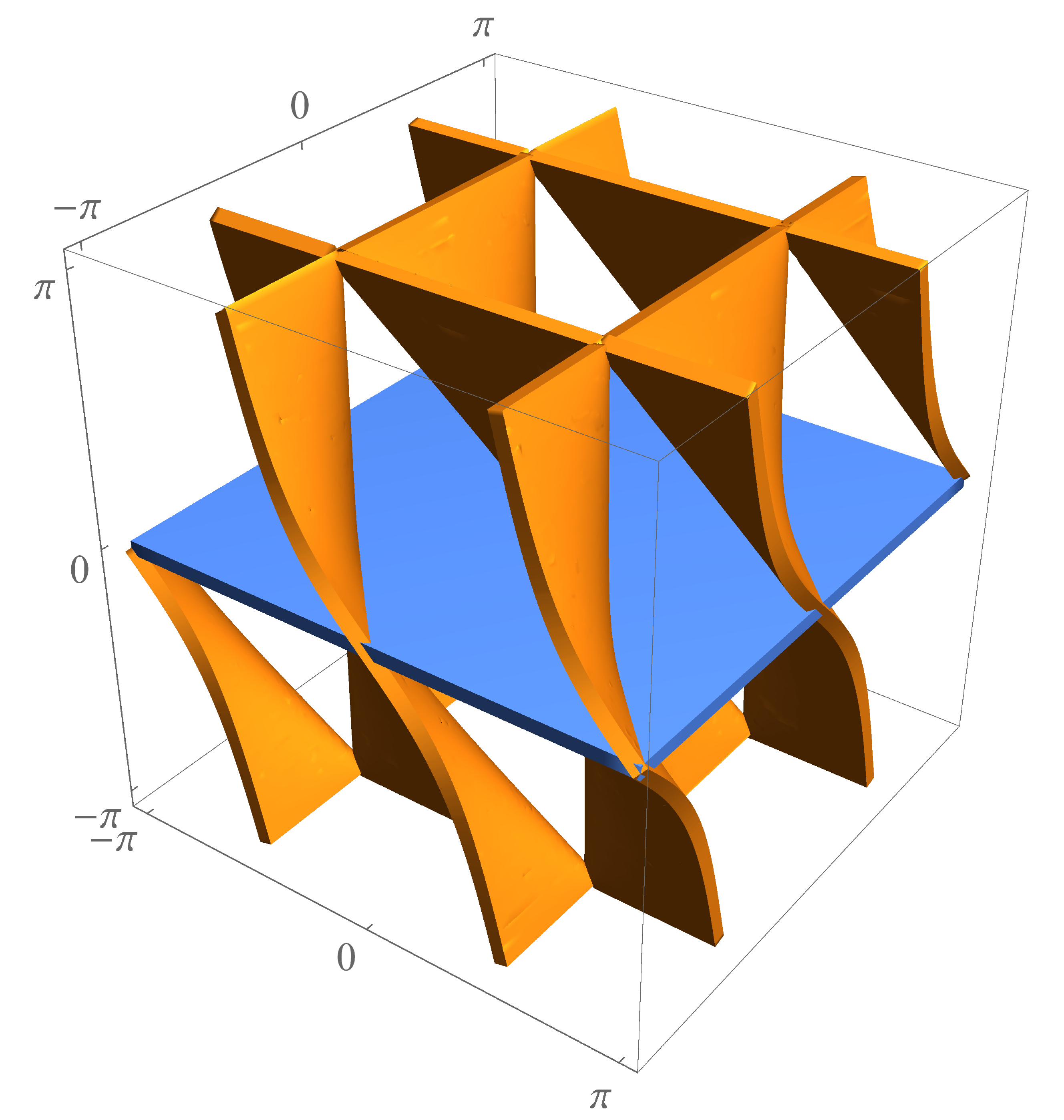}~~\includegraphics[width=0.4\paperwidth]{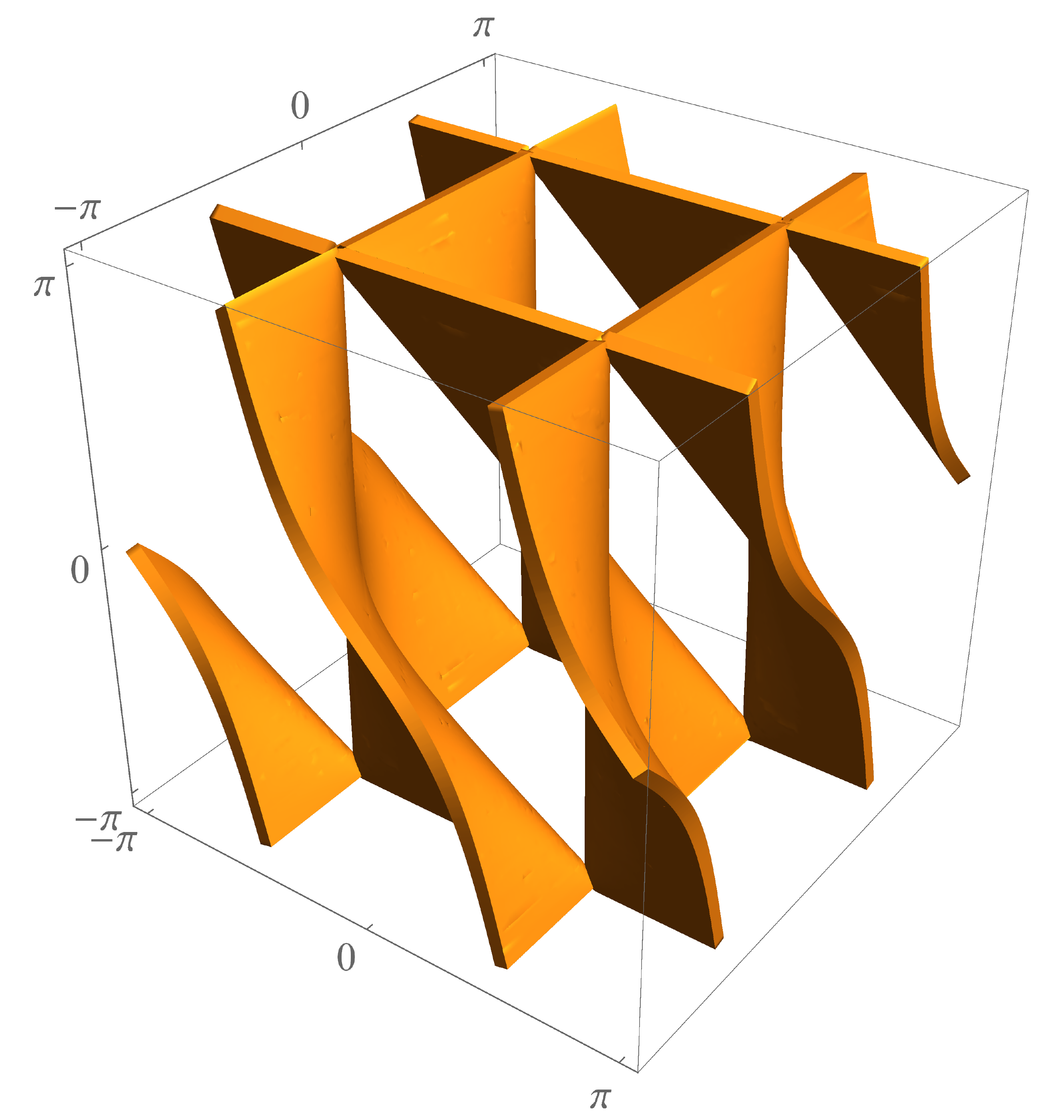}

\caption[ Secular manifold of (1,2)-stower]{\label{fig: secman with loops 12 stower} The secular manifold of
the $\left(1,2\right)$-stower. On the left, the secular manifold
where $\Sigma_{\protect\L}$ is in blue and $Z_{0}$ is in orange.
On the right, only $Z_{0}$ in orange.}
\end{figure}
\begin{figure}[H]
\includegraphics[width=0.4\paperwidth]{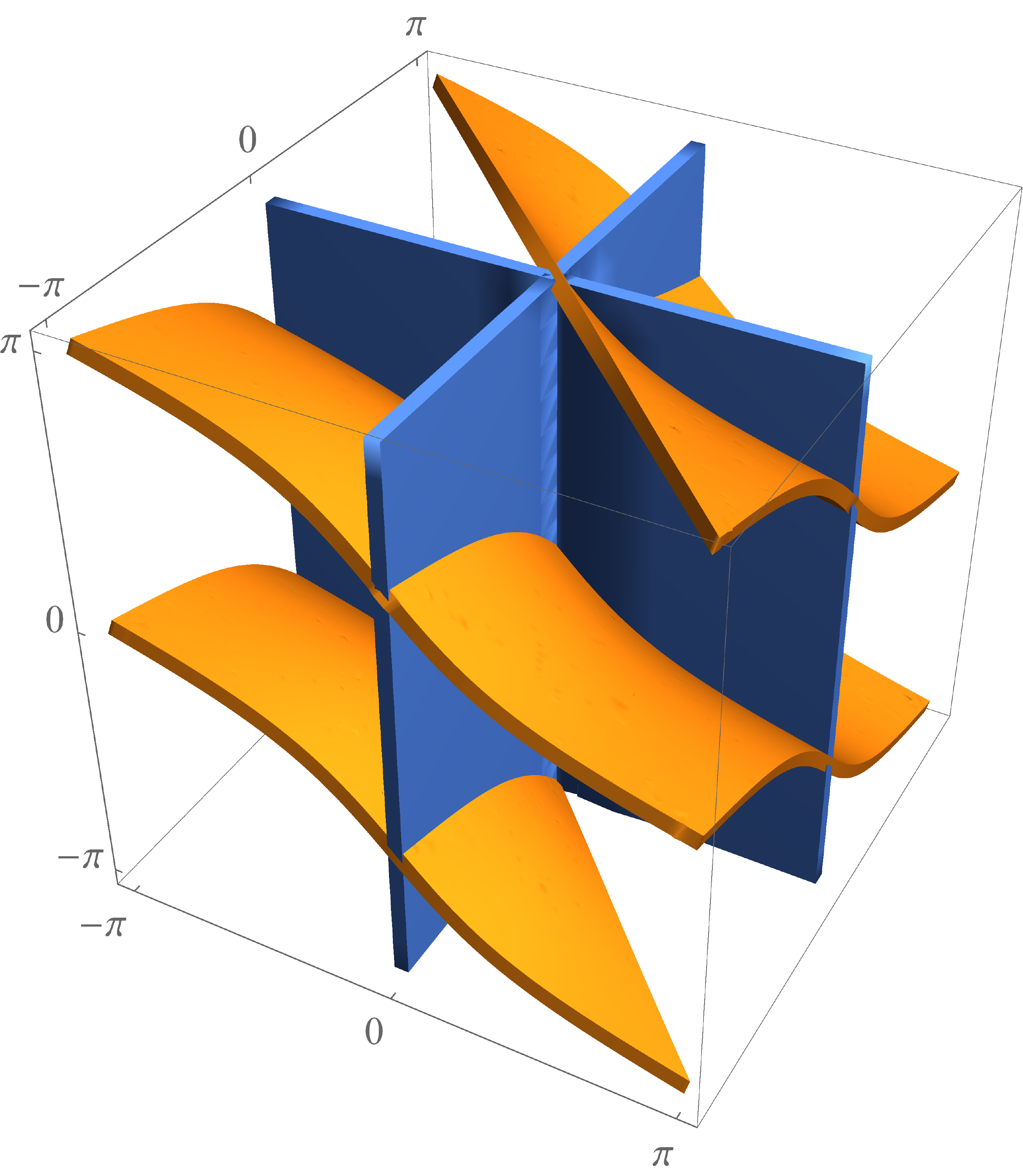}~~\includegraphics[width=0.4\paperwidth]{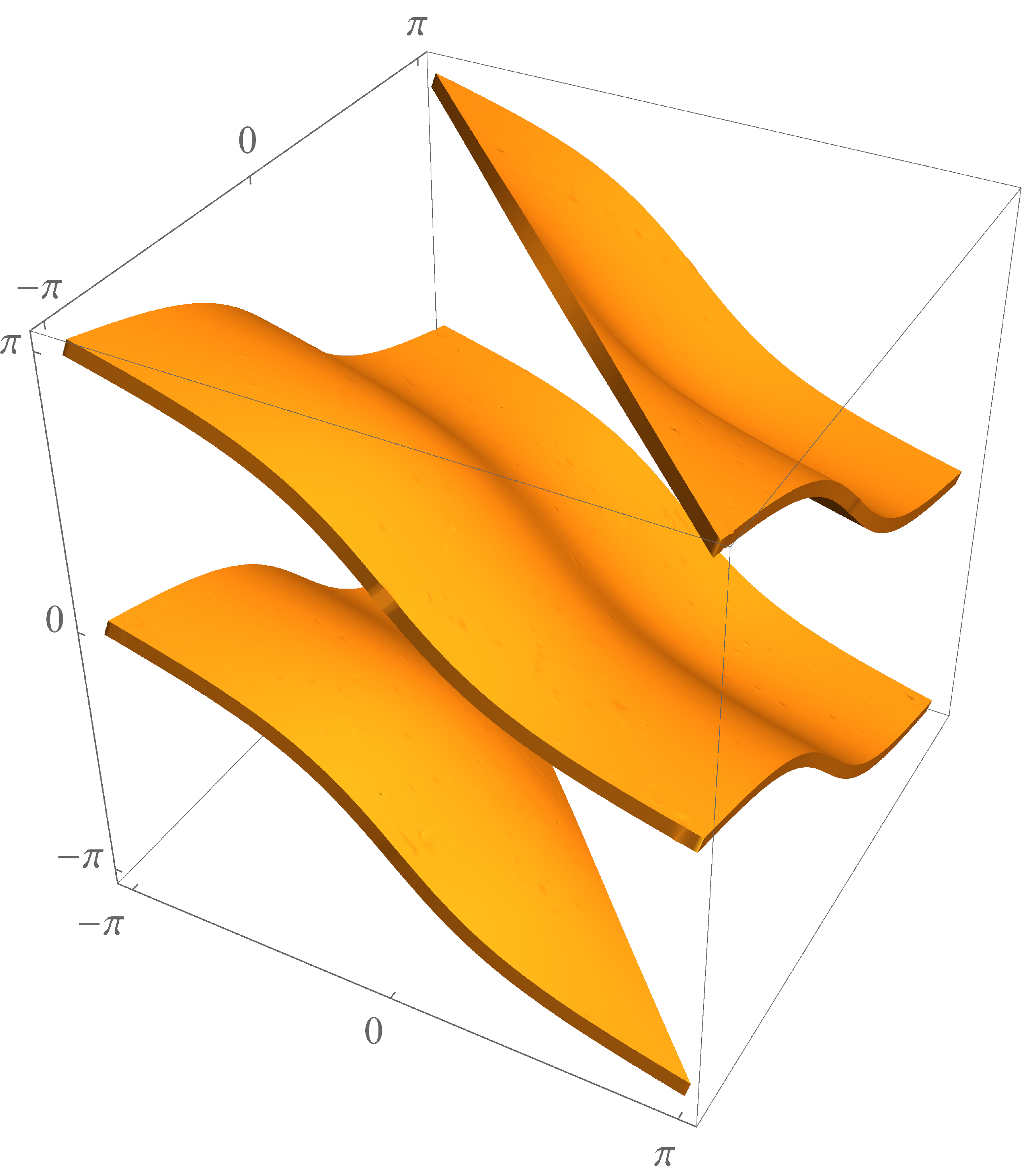}

\caption[ Secular manifold of a dumbbell graph]{\label{fig: secman with loops dumbbell} The secular manifold of the
dumbbell graph. On the left, the secular manifold where $\Sigma_{\protect\L}$
is in blue and $Z_{0}$ is in orange. On the right, only $Z_{0}$
in orange.}
\end{figure}

For graphs e and f which have no loops, we will present the secular
manifold $\Sigma$ with $\kv\in\left(0,2\pi\right)^{3}$:

\begin{figure}[H]
\includegraphics[width=0.4\paperwidth]{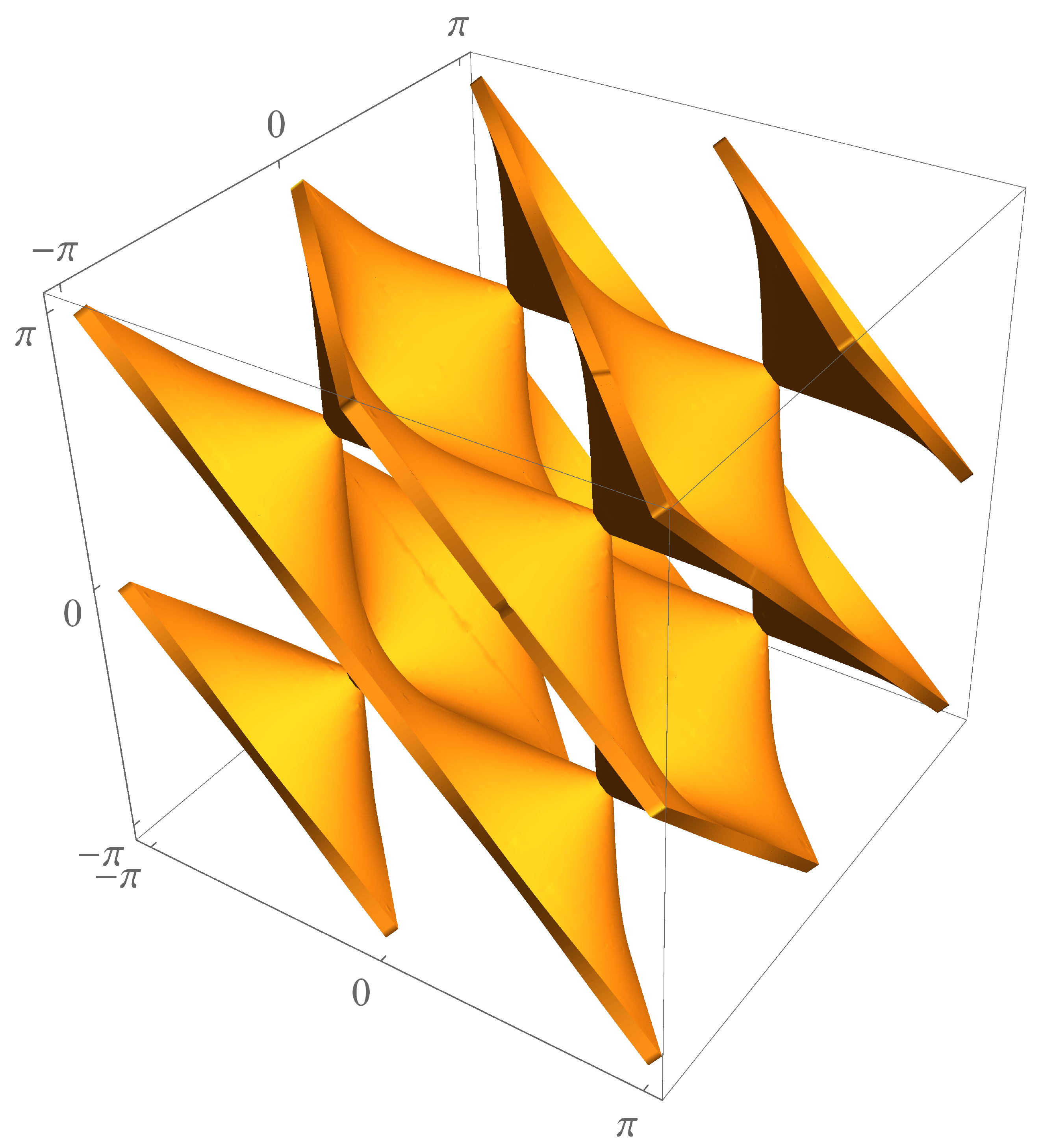}~~\includegraphics[width=0.4\paperwidth]{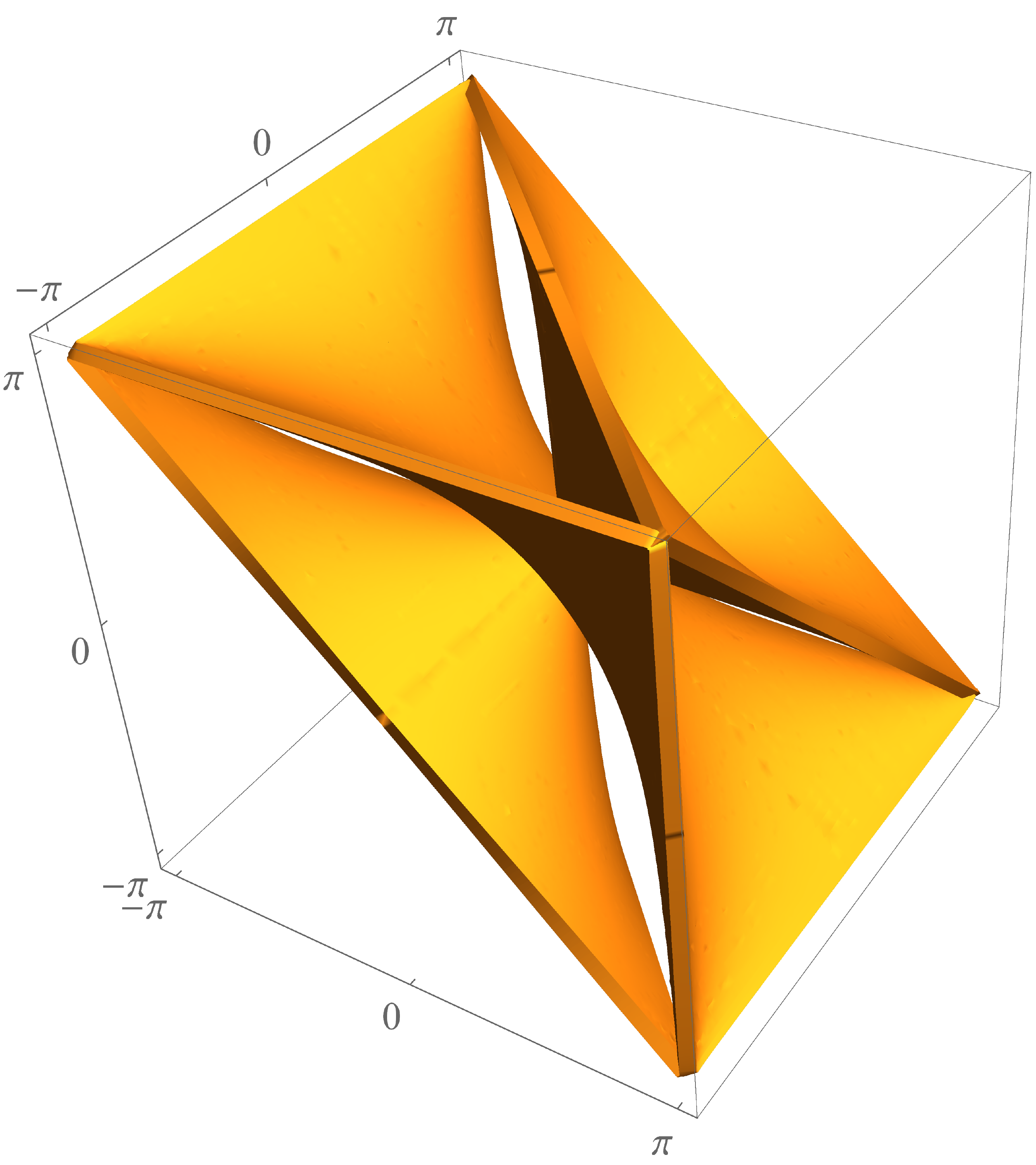}\caption[ Secular manifolds of 3-star and 3-mandarin]{\label{fig: secman without loops} On the left, the secular manifold
of the 3-star graph. On the right, the secular manifold of the 3-mandarin
graph.}
\end{figure}

One may notice that $\Sigma$ of graph f. contains $Z_{0}$ of graph
a. In fact, its second sheet is a translation by $\pi$ in each coordinate
of $Z_{0}$ of a. It can be explained in terms of the symmetric and
antisymmetric eigenfunctions on graph f. \cite{Alon}

\bibliographystyle{siam}
\bibliography{phdthesis}

\pagestyle{empty}
\end{document}